\NeedsTeXFormat{LaTeX2e}
\documentclass[10pt,reqno]{amsart}
\usepackage{hyperref}
\usepackage{latexsym,amsmath}
\usepackage{enumerate}
\usepackage{amsfonts}
\usepackage{amssymb}
\usepackage{latexsym}
\usepackage{fixmath}

\usepackage[T1]{fontenc}
\usepackage{fourier}
\usepackage{bbm}
\usepackage{color}
\usepackage{fullpage}

\renewcommand{\epsilon}{\eps}
\newcommand{\DeltaM}{\Sigma}

\newcommand{\cPcent}{\cP_*^{2}}

\newcommand{\GG}{\mathbb G}
\newcommand{\TT}{\mathbb T}
\newcommand{\HH}{\mathbb H}
\newcommand{\J}{\vec J}

\newcommand\MU{\vec\mu}
\newcommand\vY{\vec Y}
\newcommand\vm{\vec m}

\newcommand\GAMMA{{\vec\gamma}}

\newcommand\PSI{\vec\psi}
\newcommand\RHO{{\vec\rho}}
\newcommand\PHI{\vec\Phi}

\newcommand\nix{\,\cdot\,}

\newcommand\dd{{\mathrm d}}

\newcommand\G{\vec G}
\newcommand\T{\vec T}

\numberwithin{equation}{section}

\renewcommand{\vec}[1]{\boldsymbol{#1}}

\newcommand\SIGMA{\vec\sigma}
\newcommand\TAU{\vec\tau}

\newtheorem{definition}{Definition}[section]
\newtheorem{claim}[definition]{Claim}

\newtheorem{remark}[definition]{Remark}
\newtheorem{theorem}[definition]{Theorem}
\newtheorem{lemma}[definition]{Lemma}
\newtheorem{proposition}[definition]{Proposition}
\newtheorem{corollary}[definition]{Corollary}

\newtheorem{fact}[definition]{Fact}

\newcommand\fS{\mathfrak{S}}

\newcommand\fC{\mathfrak{C}}

\newcommand\fF{\mathfrak{F}}

\newcommand\cA{\mathcal{A}}
\newcommand\cB{\mathcal{B}}
\newcommand\cC{\mathcal{C}}

\newcommand\cF{\mathcal{F}}
\newcommand\cG{\mathcal{G}}
\newcommand\cE{\mathcal{E}}

\newcommand\cN{\mathcal{N}}

\newcommand\cH{\mathcal{H}}
\newcommand\cS{\mathcal{S}}

\newcommand\cK{\mathcal{K}}
\newcommand\cJ{\mathcal{J}}

\newcommand\cM{\mathcal{M}}
\newcommand\cO{\mathcal{O}}
\newcommand\cP{\mathcal{P}}
\newcommand\cX{\mathcal{X}}
\newcommand\cY{\mathcal{Y}}

\newcommand\cZ{\mathcal{Z}}
\def\cR{{\mathcal R}}
\def\cC{{\mathcal C}}
\def\cE{{\mathcal E}}

\newcommand{\beq}{\begin{equation}} \newcommand{\eeq}{\end{equation}}
\newcommand{\dc}{\dcond}
\newcommand{\dKS}{d_{\mathrm{KS}}}
\newcommand{\dr}{d_{\mathrm{rec}}}

\newcommand\thet{\vartheta}

\newcommand\eul{\mathrm{e}}
\newcommand\eps{\varepsilon}

\newcommand\Var{\mathrm{Var}}
\newcommand\Erw{\mathbb{E}}
\newcommand{\vecone}{\vec{1}}

\newcommand{\Po}{{\rm Po}}

\newcommand\TV[1]{\left\|{#1}\right\|_{\mathrm{TV}}}
\newcommand\tv[1]{\|{#1}\|_{\mathrm{TV}}}
\newcommand\dTV{d_{\mathrm{TV}}}

\newcommand{\bink}[2] {{\binom{#1}{#2}}}

\newcommand\bc[1]{\left({#1}\right)}
\newcommand\cbc[1]{\left\{{#1}\right\}}
\newcommand\bcfr[2]{\bc{\frac{#1}{#2}}}
\newcommand{\bck}[1]{\left\langle{#1}\right\rangle}
\newcommand\brk[1]{\left\lbrack{#1}\right\rbrack}
\newcommand\scal[2]{\bck{{#1},{#2}}}

\newcommand\abs[1]{\left|{#1}\right|}

\newcommand\RR{\mathbb{R}}

\newcommand{\stacksign}[2]{{\stackrel{\mbox{\scriptsize #1}}{#2}}}
\newcommand{\tensor}{\otimes}

\newcommand{\Erdos}{Erd\H{o}s}
\newcommand{\Renyi}{R\'enyi}

\newcommand\pr{\mathbb{P}} 
\renewcommand\Pr{\pr} 

\newcommand\Lem{Lemma}
\newcommand\Prop{Proposition}
\newcommand\Thm{Theorem}

\newcommand\Cor{Corollary}
\newcommand\Sec{Section}

\newcommand{\Faadi}{Fa\`a di Bruno}

\newcommand\id{\mathrm{id}}

\newcommand{\Phipsi}{\Phi_{\psi}}

\newcommand{\eig}{\mathrm{Eig}}
\newcommand{\Pomast}{\cP^2_\ast (\Omega)}

\newcommand{\dcond}{d_{\mathrm{cond}}}
\DeclareMathOperator{\Tr}{tr}

\begin{document}

\title{Charting the replica symmetric phase}

\author{Amin Coja-Oghlan$^{*}$, Charilaos Efthymiou$^{**}$, Nor Jaafari, Mihyun Kang$^{***}$, Tobias Kapetanopoulos$^{****}$}
\thanks{$^{*}$The research leading to these results has received funding from the European Research Council under the European Union's Seventh 
Framework Programme (FP7/2007-2013) / ERC Grant Agreement n.\ 278857--PTCC\\
$^{**}$ Supported by DFG grant  EF 103/1-1 \\
$^{***}$Supported by Austrian Science Fund (FWF): P26826.\\
$^{****}$Supported by Stiftung Polytechnische Gesellschaft PhD grant}

\address{Amin Coja-Oghlan, {\tt acoghlan@math.uni-frankfurt.de}, Goethe University, Mathematics Institute, 10 Robert Mayer St, Frankfurt 60325, Germany.}

\address{Charilaos Efthymiou, {\tt efthymiou@math.uni-frankfurt.de}, Goethe University, Mathematics Institute, 10 Robert Mayer St, Frankfurt 60325, Germany.}

\address{Nor Jaafari, {\tt jaafari@math.uni-frankfurt.de}, Goethe University, Mathematics Institute, 10 Robert Mayer St, Frankfurt 60325, Germany.}

\address{Mihyun Kang, {\tt kang@math.tugraz.at}, Technische Universit\"at Graz, Institute of Discrete Mathematics, Steyrergasse 30, 8010 Graz, Austria}

\address{Tobias Kapetanopoulos, {\tt kapetano@math.uni-frankfurt.de}, Goethe University, Mathematics Institute, 10 Robert Mayer St, Frankfurt 60325, Germany.}

\begin{abstract}
Diluted mean-field models are spin systems whose geometry of interactions is induced by a sparse random graph or hypergraph.
Such models play an eminent role in the statistical mechanics of disordered systems as well as in combinatorics and computer science.
In a path-breaking paper based on the non-rigorous `cavity method', physicists predicted not only the existence of a replica symmetry breaking phase transition in such models but also sketched a detailed picture of the evolution of the Gibbs measure within the replica symmetric phase and its impact on
important problems in combinatorics, computer science and physics [Krzakala et al.: PNAS 2007].
In this paper we rigorise this picture  completely for a broad class of models, encompassing the Potts antiferromagnet on the random graph, the $k$-XORSAT model
and the diluted $k$-spin model for even $k$.
We also prove a conjecture about the detection problem in the stochastic block model that has received considerable attention [Decelle et al.: Phys.\ Rev.\ E 2011].
\end{abstract}

\maketitle

\section{Introduction}\label{Sec_intro}

\subsection{The cavity method}
Contrasting the awe-inspiring arsenal of techniques at the disposal of modern combinatorics and probability with the
utter simplicity of terms in which, say, the \Erdos-\Renyi\ random graph model is defined, 
one might expect that after a half-century of study everything ought to be known about this and alike models.
Yet beneath the surface lurks a picture of mesmerizing complexity.
Its unexpected intricacy was brought out most clearly by a line of research that commenced in the statistical physics community
with the study of {\em diluted mean-field models}, 
spin systems whose geometry of interactions is induced by a sparse random graph or hypergraph.
Such models were put forward in physics as models of disordered systems~\cite{MM}.
Prominent examples include the diluted $k$-spin model or the Potts antiferromagnet on a random graph~\cite{CDGS,GT,PanchenkoTalagrand}.
The graph structure, convergent locally to the Bethe lattice or a Galton-Watson tree,
	induces a non-trivial metric, which is why such models have been argued to evince a closer semblance of physical reality than
	fully connected ones such as the Sherrington-Kirkpatrick model~\cite{MP1,mezard1990spin}.
But perhaps even more importantly, apart from and beyond the disordered systems thread,  in the course of the past half-century
models based on random graphs have come to play a role in combinatorics, probability, statistics and computer science that can hardly be overstated.
For example, the random $k$-SAT model is of fundamental interest in computer science~\cite{ANP},
	the stochastic block model has gained prominence in statistics~\cite{AbbeSurvey,Holland,Cris},
	low-density parity check codes are the bread and butter of modern coding theory~\cite{RichardsonUrbanke},
and problems such as random graph coloring have been the lodestars of probabilistic combinatorics
ever since the days of \Erdos\ and \Renyi~\cite{ANP,ECM,ER60}.

In the course of the past 20 years physicists developed an analytic but non-rigorous technique for the study of such models called the `cavity method'.
It has been brought to bear on all of the aforementioned and very many other models in an impressive and ongoing
 line of work that has led to numerous predictions that impact on an astounding variety of problems (e.g., \cite{Decelle,MM,MPZ,LF}).
The task of putting the cavity method on a rigorous foundation has therefore gained substantial importance, and despite
recent successes (e.g., \cite{CKPZ,DSS3,GMU,Cris}) much remains to be done.
In particular, while the cavity method can be applied to a given model almost mechanically, most rigorous arguments are still based
on ad hoc, model-specific delibarations.
This leads to the question of whether we can come up with abstract arguments that rigorise the cavity method wholesale, which is the thrust of the present paper.

One of the most important predictions of the cavity method is that the Gibbs measures induced by random graph models undergo a {\em replica symmetry breaking} or
{\em condensation} phase transition~\cite{pnas}.
Physically this phase transition resembles the Kauzmann transition from the study of glasses~\cite{Kauzmann48}.
The fact {\em that} a phase transition occurs at the location predicted by the cavity method was recently proved for a fairly broad family of models~\cite{CKPZ}.
However, that result fell short of establishing that the condensation phase transition does indeed mark
the point where the nature of correlations under the Gibbs measure changes as predicted by the cavity method.

Here we prove that this is indeed the case.
In fact, we rigorise the entire ``map'' of the replica symmetric phase as predicted in~\cite{Fede1,pnas,Fede2}, including its boundary,
	the evolution of the nature of correlations within and an important contiguity result.
More specifically, first and foremost we prove that the condensation phase transition does indeed separate a ``replica symmetric'' phase without extensive long-range correlations
from a phase where long-range correlations prevail, arguably the key feature of the physics picture.
Further, we verify the physics prediction on the threshold for the onset of point-to-set correlations, called the {reconstruction threshold}.
Additionally, we derive the precise limiting distribution of the free energy within the replica symmetric phase, thereby
vindicating a prediction that the free energy exhibits remarkably small fluctuations~\cite{Fede1,Fede2}.
Finally, verifying a prominent prediction from \cite{Decelle},
 we prove a contiguity statement that has an impact on statistical inference problems such as the stochastic block model.

The results of this paper cover a wide class of random graph models, 
 even broader than the family of models for which the condensation threshold was previously derived in~\cite{CKPZ}.
Indeed, as a testimony to the power of the present general approach we may point out that
even the specializations of the main results to prominent examples such as the Potts antiferromagnet on the random graph or the $k$-spin model were not previously known, even though these models received considerable attention in their own right.
Before presenting the general results in \Sec~\ref{Sec_results},
we illustrate their impact on three important examples:
the diluted $k$-spin model,  the Potts antiferromagnet on the random graph and the stochastic block model.

\subsection{The diluted $k$-spin model}\label{Sec_intro_kspin}
For integers $k\geq2$, $n\geq1$ and a real $p\in[0,1]$ let $\HH=\HH_k(n,p)$ be the random $k$-uniform hypergraph on $V_n=\{x_1,\ldots,x_n\}$
whose edge set $E(\HH)$ is obtained by including each of the $\bink nk$ possible $k$-subsets of $V_n$ with probability $p$ independently.
Additionally, let $\vec J = (\vec J_e)_{e\in E(\HH)}$ be a family of independent standard Gaussians.
The {\em $k$-spin model} on $\HH$ at inverse temperature $\beta>0$ is the distribution on the set $\{-1,1\}^{V_n}$ defined by
	\begin{align}\label{eqkSpin1}
	\mu_{\HH,\J,\beta}(\sigma)&=\frac{1}{Z_\beta(\HH,\J)}\prod_{e\in E(\HH)}\exp\bc{\beta\vec J_e\prod_{y\in e}\sigma(y)},
		\quad\mbox{ where }\quad
	Z_\beta(\HH,\J)=\sum_{\tau\in\{\pm1\}^{V_n}}\prod_{e\in E(\HH)}\exp\bc{\beta\vec J_e\prod_{y\in e}\tau(y)}.
	\end{align}
Arguably the most interesting and at the same time most challenging scenario arises in the case of a sparse random hypergraph \cite{MP1}.
Specifically, set $p=d/\bink{n-1}{k-1}$ for a fixed $d>0$ so that in the limit $n\to\infty$ the average vertex degree of $\HH$ converges to $d$ in probability.
How does the model change as we vary $d$?

According to the physics predictions for any $k$, $\beta$ there exists a {\em condensation threshold} $\dc(k,\beta)$ where the
function $d\mapsto\lim_{n\to\infty}\frac1n\Erw[\ln Z_\beta(\HH,\J)]$ is non-analytic~\cite{Silvio}.
This conjecture was proved in the case $k=2$ by Guerra and Toninelli~\cite{GT}.
However,  their technique does not give the precise condensation phase transition for $k>2$~\cite[\Sec~9]{GT}, nor does
he $k$-spin model belong to the class of models for which the condensation threshold was determined in~\cite{CKPZ}.
The following theorem pinpoints the precise condensation threshold for all $k\geq3$, proving the prediction from~\cite{Silvio}.

As is the case of most results inspired by the cavity method, the precise value  $\dc(k,\beta)$ comes in terms of a stochastic optimization problem.
Specifically, write $\cP(\cX)$ for the set of all probability distributions on a finite set $\cX$ and identify $\cP(\cX)$ with the standard simplex in $\RR^{\cX}$.
Moreover, let $\cP^2(\Omega)$ be the space of all probability measures on $\cP(\cX)$ and let $\cP^2_*(\cX)$ be the space of all
 $\pi\in\cP^2(\cX)$ whose barycenter $\int_{\cP(\cX)}\mu\dd\pi(\mu)$ is the uniform distribution on $\cX$.
Further, let $\Lambda(x)=x\ln x$.

\begin{theorem}\label{Thm_SK1}
Suppose that $d>0,\beta>0$ and that $k\geq3$.
Let $\GAMMA$ be a Poisson variable with mean $d$, let $\vec I_1,\vec I_2,\ldots$ be standard Gaussians 
and for $\pi\in\cPcent(\{\pm 1\})$  let $\RHO_1^\pi,\RHO_2^\pi,\ldots\in\cP(\{ \pm 1\})$ be random variables with distribution $\pi$, all mutually independent. Define
	\begin{align*}
	\cB_{k-\mathrm{spin}}(d,\beta,\pi)=\frac12\Erw&\brk{
			\Lambda\bc{\sum_{\sigma_k\in\{\pm1\}}\prod_{j=1}^{\vec\gamma}\sum_{\sigma_1,\ldots,\sigma_{k-1}\in\{\pm1\}}
				(1+\tanh(\beta\vec I_j\sigma_1\cdots\sigma_k))\prod_{h=1}^{k-1}\RHO_{kj+h}^{\pi}(\sigma_h)}}\\
			&-\frac dk
			\Erw\brk{\Lambda\bc{1+\sum_{\sigma_1,\ldots,\sigma_k\{\pm1\}}\tanh(\beta\vec I_1\sigma_1\cdots\sigma_k)
					\prod_{h=1}^k\RHO_h^{\pi}(\sigma_h)}}.		
	\end{align*}
and
	$\dc(k,\beta)=\inf\{d>0:\sup_{\pi\in\cPcent(\{1,-1\})}\cB_{k-\mathrm{spin}}(d,\beta,\pi)>\ln 2\}.$
Then $0<\dc(k,\beta)<\infty$ and
	$$\lim_{n\to\infty}\frac1n\Erw[\ln Z_\beta(\HH,\vec J)]\begin{cases}
		=\ln 2+\frac d{\sqrt{2\pi}k}\int_{-\infty}^\infty\ln(\cosh(z))\exp(-z^2/2)\dd z&\mbox{ if }d\leq \dc(k,\beta),\\
		<\ln 2+\frac d{\sqrt{2\pi}k}\int_{-\infty}^\infty\ln(\cosh(z))\exp(-z^2/2)\dd z&\mbox{ if }d> \dc(k,\beta).\end{cases}$$
\end{theorem}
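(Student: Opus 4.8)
The plan is to obtain \Thm~\ref{Thm_SK1} as a corollary of the general results of \Sec~\ref{Sec_results}, after recasting the $k$-spin model as a \emph{normalised} diluted mean-field model. First comes the standard gauge rewriting: using $\exp(\beta Js)=\cosh(\beta J)\,(1+\tanh(\beta J)s)$ with $s=\prod_{y\in e}\sigma(y)\in\{\pm1\}$, one factors
\[
Z_\beta(\HH,\J)=\left(\prod_{e\in E(\HH)}\cosh(\beta\vec J_e)\right)\widetilde Z,\qquad
\widetilde Z=\sum_{\tau\in\{\pm1\}^{V_n}}\prod_{e\in E(\HH)}\left(1+\tanh(\beta\vec J_e)\prod_{y\in e}\tau(y)\right),
\]
so that $\widetilde Z$ is the partition function of the model with random weight functions $\psi_J(\sigma_1,\dots,\sigma_k)=1+\tanh(\beta J)\sigma_1\cdots\sigma_k$ and standard Gaussian disorder $J$. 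Since $\J$ is independent of $\HH$, $\Erw|E(\HH)|=dn/k$ exactly, and $\Erw[\ln\cosh(\beta\vec I_1)]<\infty$ (because $\ln\cosh x\le|x|$), linearity gives $\tfrac1n\Erw\sum_{e\in E(\HH)}\ln\cosh(\beta\vec J_e)=\tfrac dk\Erw[\ln\cosh(\beta\vec I_1)]=\tfrac d{k\sqrt{2\pi}}\int_{-\infty}^{\infty}\ln\cosh(\beta z)\exp(-z^2/2)\dd z$, which is precisely the additive constant in the statement. It therefore remains to prove $\tfrac1n\Erw\ln\widetilde Z\to\ln2$ for $d\le\dc(k,\beta)$ and $\tfrac1n\Erw\ln\widetilde Z<\ln2$ for $d>\dc(k,\beta)$.

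Next I would verify that the normalised model meets the hypotheses of \Sec~\ref{Sec_results}. Two points need care. (i) \emph{Symmetry and balance}: the global flip $\sigma\mapsto-\sigma$ fixes $\psi_J$ for even $k$ and sends $\psi_J$ to $\psi_{-J}$ for odd $k$, so since $\vec J\overset{\mathrm d}{=}-\vec J$ the law of the weight functions is flip-invariant in either case; moreover $2^{-k}\sum_{\sigma\in\{\pm1\}^k}\psi_J(\sigma)=1$, whence the uniform distribution $\bar\mu$ on $\{\pm1\}$ is balanced and $\delta_{\bar\mu}$ is the trivial fixed point. (ii) \emph{Continuous, unbounded disorder}: the framework is phrased for bounded weight functions while $\inf_\sigma\psi_J(\sigma)=1-\tanh(\beta|J|)\downarrow0$ as $|J|\to\infty$; I would handle this by truncating $\vec J$ to $|\vec J|\le M$, applying the general theorem, and letting $M\to\infty$ using continuity of $\cB_{k-\mathrm{spin}}(d,\beta,\cdot)$ and of the free-energy limit under weak convergence of the disorder law, dominated by $|\vec J|$ (all moments finite). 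Granting this, the general free-energy theorem specialises to exactly the dichotomy above: a direct computation (the empty product at $\vec\gamma=0$ contributes $\Lambda(2)/2=\ln2$, and every other contribution vanishes because $\Erw[\tanh(\beta\vec I_j s)]=0$ for $s\in\{\pm1\}$ and $\Erw_\pi[\RHO^\pi_i(\pm1)]=1/2$) shows $\cB_{k-\mathrm{spin}}(d,\beta,\delta_{\bar\mu})=\ln2$, so $\sup_\pi\cB_{k-\mathrm{spin}}(d,\beta,\pi)=\ln2$ precisely when $d\le\dc(k,\beta)$; in that case $\tfrac1n\Erw\ln\widetilde Z\to\ln2$, while for $d>\dc(k,\beta)$ the existence of $\pi$ with $\cB_{k-\mathrm{spin}}(d,\beta,\pi)>\ln2$ triggers the condensation part of the general result (the planted and null models become mutually singular), forcing $\tfrac1n\Erw\ln\widetilde Z<\ln2$.

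It remains to establish $0<\dc(k,\beta)<\infty$. For \emph{finiteness}, exhibit a witness $\pi$ with $\cB_{k-\mathrm{spin}}(d,\beta,\pi)>\ln2$ for all large $d$: taking $\pi$ supported with equal weights on the two point masses $\delta_{+1},\delta_{-1}\in\cP(\{\pm1\})$ makes the inner sums collapse and reduces $\sum_{\sigma_k}\prod_{j=1}^{\vec\gamma}(\,\cdots)$ to $2\cosh(\beta\sum_{j\le\vec\gamma}\vec I_jt_j)/\prod_{j\le\vec\gamma}\cosh(\beta\vec I_j)$ with $t_j\in\{\pm1\}$ i.i.d.\ uniform, and evaluating $\Erw\Lambda$ of this against the $-d/k$ term shows the functional exceeds $\ln2$ once $d$ is large (equivalently, $\sup_\pi\cB_{k-\mathrm{spin}}$ dominates the free energy of the planted model, which exceeds $\ln2$ for large $d$). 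For \emph{positivity}, note $\cB_{k-\mathrm{spin}}(0,\beta,\pi)=\Lambda(2)/2=\ln2$ for every $\pi$, and combine: (a) a uniform bound $\tfrac12\Erw[\Lambda(X_{\vec\gamma})]\le\ln2+\ln2\cdot d(1-\mathrm e^{-d})$ for the first term (using that the argument of $\Lambda$ has mean $2$ and is bounded by $2^{\vec\gamma+1}$), which against the nonnegative second term $\tfrac dk\Erw[\Lambda(Y_\pi)]$ already yields $\cB_{k-\mathrm{spin}}(d,\beta,\pi)\le\ln2$ for $\pi$ bounded away from $\delta_{\bar\mu}$ (where $\Erw[\Lambda(Y_\pi)]$ is bounded below, since it is continuous in $\pi$ and vanishes only at $\delta_{\bar\mu}$), with (b) a second-order analysis near $\delta_{\bar\mu}$: writing $\mathrm V=\Var_\pi(\RHO^\pi_i(+1)-\RHO^\pi_i(-1))$, the destabilising contribution to $\cB_{k-\mathrm{spin}}-\ln2$ is $O(d^2\mathrm V^{2(k-1)})$ while the stabilising contribution $-\tfrac dk\Erw[\Lambda(Y_\pi)]$ is of exact order $d\mathrm V^{k}$, so for $k\ge3$ the latter dominates for small $\mathrm V$, making $\delta_{\bar\mu}$ the global maximiser for all $d$ below an explicit positive threshold (a further truncation of the Gaussian is convenient in (b) to avoid the degeneracy where the argument of $\Lambda$ approaches $0$).

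The step I expect to be the main obstacle is (ii) of the second paragraph: reconciling the continuous, unbounded Gaussian disorder with a general framework stated most naturally for finitely many bounded weight functions — i.e.\ making the truncation-and-limit argument uniform and controlling the free energy and the functional $\cB_{k-\mathrm{spin}}$ jointly as the cutoff is removed. A secondary difficulty is the quantitative uniformity demanded by $\dc(k,\beta)>0$: one must exclude \emph{every} $\pi\in\cPcent(\{\pm1\})$ from beating the paramagnetic value for small $d$, which forces the local analysis in (b) to be genuinely uniform over the infinite-dimensional set $\cPcent(\{\pm1\})$.
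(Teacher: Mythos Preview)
Your overall strategy matches the paper's: factor out $\cosh(\beta\vec J_e)$, observe that the residual partition function is that of a random factor graph with weight functions $\psi_{J,\beta}(\sigma)=1+\tanh(\beta J)\sigma_1\cdots\sigma_k$, and invoke \Thm~\ref{Thm_cond}. However, the two places you flag as the main obstacles are in fact non-issues, while you omit the one verification that does require an argument.

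\textbf{No truncation is needed.} Your point (ii) rests on a misreading of the framework. The general theory does \emph{not} require uniformly bounded weight functions; it requires $\psi:\Omega^k\to(0,2)$ together with the tail conditions~(\ref{eqBounded}). Since $|\tanh(\beta J)|<1$ for every real $J$, the functions $\psi_{J,\beta}$ already take values in $(0,2)$, so $\Psi=\{\psi_{J,\beta}:J\in\RR\}$ is admissible. As for~(\ref{eqBounded}): $\max_\tau|1-\psi_{J,\beta}(\tau)|=|\tanh(\beta J)|$ and $\min_\tau\psi_{J,\beta}(\tau)=1-|\tanh(\beta J)|\asymp 2\eul^{-2\beta|J|}$, so both $\Erw[\ln^8(1-|\tanh(\beta\vec J)|)]$ and $\Erw[(1-|\tanh(\beta\vec J)|)^{-4}]$ reduce to finite Gaussian moments. \Thm~\ref{Thm_cond} is stated precisely to allow infinite $\Psi$ with such $P$; the passage from finite to infinite $\Psi$ is handled once and for all in its proof (\Sec~\ref{Sec_Thm_plantedFreeEnergy}), not model by model. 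The paper therefore applies \Thm~\ref{Thm_cond} directly, via Fact~\ref{Fact_kspin} and \Lem~\ref{Lemma_pSpin}, with no truncation.

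\textbf{The bounds $0<\dc<\infty$ are not separate work.} They are part of the conclusion of \Thm~\ref{Thm_cond}, which gives $1/(k-1)\le\dc<\infty$ in general (proved in \Lem~\ref{Lemma_dcbounds} via a subcritical-tree argument for the lower bound and a Jensen/planted-energy argument for the upper bound). Your bespoke perturbative analysis is unnecessary.

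\textbf{What you do need to check, and do not mention, is {\bf POS}.} \Thm~\ref{Thm_cond} requires {\bf SYM}, {\bf BAL} and {\bf POS}; you address only the first two. The paper verifies {\bf POS} in \Lem~\ref{Lemma_pSpin} by expanding $\Lambda$ around $1$ as a power series in $1-\sum_\tau\psi_{J,\beta}(\tau)\prod_i\rho_i(\tau_i)=-\tanh(\beta J)\prod_i(\rho_i(1)-\rho_i(-1))$, using that $\Erw[\tanh(\beta\vec J)^l]=0$ for odd $l$ and $\geq0$ for even $l$, and then applying the elementary inequality $x^k-kxy^{k-1}+(k-1)y^k\ge0$ for $x,y\ge0$. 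This is the only substantive model-specific step.
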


From now on we assume that $k\geq4$ is even.
The regime $d<\dc(k,\beta)$ is called the {\em replica symmetric phase}.
According to the cavity method, its key feature is that with probability tending to $1$ in the limit $n\to\infty$,
two independent samples $\SIGMA_1,\SIGMA_2$ (`replicas') chosen from the Gibbs measure $\mu_{\HH,\J,\beta}$ are ``essentially perpendicular''.
To formalize this define for $\sigma,\tau:V_n\to\{\pm1\}$ the {\em overlap} as
	$\varrho_{\sigma,\tau}=\sum_{x\in V_n}\sigma(x)\tau(x)/n.$
We write  $\bck\nix_{\HH,\J,\beta}$ for the average on $\SIGMA_1,\SIGMA_2$ chosen independently from $\mu_{\HH,\J,\beta}$ and
denote the expectation over the choice of $\HH$ and $\J$ by $\Erw\brk\nix$.

\begin{theorem}\label{Thm_SK2}
For all $\beta>0$ and $k\geq4$ even we have
	$\dc(k,\beta)=\inf\cbc{d>0:\limsup_{n\to\infty}\Erw\bck{\varrho_{\SIGMA_1,\SIGMA_2}^2}_{\HH,\vec J,\beta}>0}.$
\end{theorem}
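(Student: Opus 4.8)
\emph{Reduction.} The plan is to derive \Thm~\ref{Thm_SK2} from \Thm~\ref{Thm_SK1} together with the equivalence, valid for even $k\ge4$,
\[
\limsup_{n\to\infty}\Erw\bck{\varrho_{\SIGMA_1,\SIGMA_2}^2}_{\HH,\J,\beta}=0
\quad\Longleftrightarrow\quad
\lim_{n\to\infty}\frac1n\Erw[\ln Z_\beta(\HH,\J)]=\phi_{\mathrm{RS}},
\qquad
\phi_{\mathrm{RS}}:=\ln2+\frac{d}{\sqrt{2\pi}\,k}\int_{\RR}\ln\cosh(\beta z)\,\eul^{-z^2/2}\dd z.
\]
\Thm~\ref{Thm_SK1} identifies the right-hand side with the event $d\le\dc(k,\beta)$ and also shows $\tfrac1n\Erw[\ln Z_\beta]\le\phi_{\mathrm{RS}}$ for every $d$; hence the equivalence gives $\{d>0:\limsup_n\Erw\bck{\varrho^2}>0\}=(\dc(k,\beta),\infty)$ up to the single endpoint, so the two infima agree. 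The argument rests on two remarks. For even $k$ we have $\prod_{y\in e}(-\sigma(y))=\prod_{y\in e}\sigma(y)$, so $\mu_{\HH,\J,\beta}$ is invariant under the global flip $\sigma\mapsto-\sigma$; thus $\bck{\sigma(x)}=0$ for every $x$ and every instance, $\Erw\bck{\varrho_{\SIGMA_1,\SIGMA_2}}=0$, and the second moment of the overlap is indeed the pertinent order parameter. Moreover $\bck{\varrho_{\SIGMA_1,\SIGMA_2}^2}=n^{-2}\sum_{x,y}\bck{\sigma(x)\sigma(y)}^2$, so the vanishing of the left-hand side above is exactly the absence of extensive point-to-point correlations.

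\emph{From $\tfrac1n\Erw\ln Z\to\phi_{\mathrm{RS}}$ to a vanishing overlap.} Assume the free energy attains its replica symmetric value. I would run the Franz--Leone/Guerra interpolation that swaps the $\lfloor dn/k\rfloor$ random hyperedges, one at a time, for unary compensator terms, linking the genuine model ($t=1$) to the paramagnetic product reference ($t=0$), whose free energy is $\phi_{\mathrm{RS}}$ (this is $\cB_{k-\mathrm{spin}}$ at the uniform distribution, corrected by the $\cosh$-normalisation). Integrating out the Gaussian weight $\vec J_e$ turns the two-replica contribution of a hyperedge $\{y_1,\dots,y_k\}$ into a factor governed by $\prod_{i=1}^k\sigma_1(y_i)\sigma_2(y_i)$, whose edge average is $\varrho_{\SIGMA_1,\SIGMA_2}^{\,k}$; since $\Erw\bck{\varrho_{\SIGMA_1,\SIGMA_2}}=0$ the compensators vanish, and because $k$ is even the derivative of the interpolating free energy has a definite sign, so that
\[
\phi_{\mathrm{RS}}-\frac1n\Erw[\ln Z_\beta]=\frac d2\int_0^1\Erw\bck{\psi\bigl(\varrho_{\SIGMA_1,\SIGMA_2}^{(t)}\bigr)}_{t}\,\dd t+o(1)
\]
for a function $\psi\ge0$ on $[-1,1]$ with $\psi(0)=0$ and $\psi(q)=\Theta(q^{k})$ as $q\to0$. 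As the left side tends to $0$, the integrand tends to $0$ for almost every $t$; a continuity argument in the interpolation parameter (the compensator fields being asymptotically negligible) pushes this to $t=1$, so $\Erw\bck{\varrho_{\SIGMA_1,\SIGMA_2}^{\,k}}\to0$. Finally, because $k$ is even, on each instance the $L^2$--$L^{k}$ inequality on the probability space $\mu_{\HH,\J,\beta}^{\otimes2}$ gives $\bck{\varrho^2}\le\bck{|\varrho|^{k}}^{2/k}=\bck{\varrho^{k}}^{2/k}$, whence by Jensen $\Erw\bck{\varrho^2}\le(\Erw\bck{\varrho^{k}})^{2/k}\to0$. (For $k=2$, excluded here, the same statement is classical, see \cite{GT}.)

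\emph{From a vanishing overlap to $\tfrac1n\Erw\ln Z\to\phi_{\mathrm{RS}}$.} Suppose now $\Erw\bck{\varrho^2}\to0$. I would invoke the Aizenman--Sims--Starr identity, which expresses $\lim_n\tfrac1n\Erw[\ln Z_\beta]$ through the empirical law $\nu_n$ of the cavity marginals of the $n$-vertex instance via the normalised Bethe functional $\cB^{+}$, which equals $\phi_{\mathrm{RS}}$ at the Dirac mass on the uniform distribution. The hypothesis $n^{-2}\sum_{x,y}\Erw\bck{\sigma(x)\sigma(y)}^2\to0$ means the Gibbs measure is $\eps$-symmetric, and the standard pinning/regularity bootstrap upgrades this pairwise decorrelation to $\ell$-wise decorrelation for every fixed $\ell$; hence the $k-1$ messages entering a fresh hyperedge become asymptotically independent, and together with $\bck{\sigma(x)}\equiv0$ this collapses $\nu_n$ onto the uniform Dirac mass. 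By continuity of $\cB^{+}$ we obtain $\liminf_n\tfrac1n\Erw[\ln Z_\beta]\ge\phi_{\mathrm{RS}}$, and combined with $\tfrac1n\Erw[\ln Z_\beta]\le\phi_{\mathrm{RS}}$ (\Thm~\ref{Thm_SK1}) this forces the free energy to its replica symmetric value.

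\emph{Main obstacle.} The crux is the quantitative interpolation estimate of the second step: one must show that, after integrating out the unbounded Gaussian disorder, the replica interaction genuinely is a nonnegative function of the overlap on all of $[-1,1]$ --- this monotonicity is precisely where $k$ being even is indispensable, and it is delicate because the coupling is Gaussian rather than bounded --- and one must transfer the vanishing of the remainder from almost every interpolation time to $t=1$, which calls for uniform control of the overlap across the compensator-perturbed models. The $\eps$-symmetry bootstrap behind the third step is by now routine but still needs care, since the cavity fields of a diluted model with Gaussian weights live on a continuum.
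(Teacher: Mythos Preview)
Your approach is genuinely different from the paper's, and the gap you flag as the ``main obstacle'' is real and, as far as I can see, not resolved by your sketch.

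\medskip\noindent\textbf{What the paper does.} The paper's proof of \Thm~\ref{Thm_SK2} is two sentences: the Gibbs measure $\mu_{\HH,\J,\beta}$ coincides with $\mu_{\G}$ given $\fS$ for the factor-graph encoding of the $k$-spin model, and then \Thm~\ref{Thm_overlap} (the general overlap theorem for factor graph models) together with \Lem~\ref{Lemma_pSpin} (verification of {\bf SYM}, {\bf BAL}, {\bf POS}, {\bf MIN}) does all the work. For $|\Omega|=2$ the equivalence $\Erw\bck{\|\rho_{\SIGMA,\TAU}-\bar\rho\|_{\mathrm{TV}}}\to0\Leftrightarrow\Erw\bck{\varrho^2}\to0$ is elementary (indeed $\|\rho-\bar\rho\|_{\mathrm{TV}}\ge c|\varrho|$, while $\bck{m_\sigma^2}^2\le\bck{\varrho^2}$ controls the marginals). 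The substance of \Thm~\ref{Thm_overlap} for $d<\dc$ is \emph{not} an interpolation in the null model: it goes through the planted model $\hat\G$, where overlap concentration is proved via the free-energy formula for $\hat\G$ and condition {\bf MIN} (\Prop~\ref{prop:belowcond-unif}), and is then transferred to $\G$ by contiguity (\Thm~\ref{Cor_contig}).

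\medskip\noindent\textbf{Where your step 2 breaks.} Your interpolation (no compensators, just $\Po(tdn/k)$ edges) yields, after integrating out $J$,
\[
\phi_{\mathrm{RS}}(d)-\frac1n\Erw[\ln Z_\beta]=\frac{1}{k}\int_0^{d}\sum_{m\ge1}\frac{\Erw[\tanh^{2m}(\beta\vec J)]}{2m}\,\Erw\bck{R_{1,\dots,2m}^{\,k}}_{d'}\,\dd d'+o(1),
\]
where the $t$-model is nothing but the original model at degree $d'=td$. So vanishing of the left side for all $d<\dc$ gives $\Erw\bck{\varrho^{k}}_{d'}\to0$ only for \emph{almost every} $d'<\dc$, not for every $d'$. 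That is not enough: the statement $\dc=\inf\{d:\limsup\Erw\bck{\varrho^2}>0\}$ requires ruling out a single exceptional $d_0<\dc$ with $\limsup\Erw\bck{\varrho^2}_{d_0}>0$, and a measure-zero set of degrees is invisible to your integral. The ``continuity argument in the interpolation parameter'' you invoke is exactly the missing ingredient; there is no obvious monotonicity of $d\mapsto\Erw\bck{\varrho^2}_d$ in the null model, and this is precisely why the paper detours through the planted model (where a continuity statement, \Lem~\ref{Lemma_mon}, \emph{can} be proved via coupling and the Nishimori identity).

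\medskip\noindent\textbf{Your step 3.} This direction is sound and close to the paper's own argument for $d>\dc$ in the proof of \Thm~\ref{Thm_overlap}: the paper computes $\tfrac{\partial}{\partial d}\tfrac1n\Erw[\ln Z(\G)]$ directly (eq.~\eqref{eq:nullmodelderiv1}--\eqref{eq:nullmodelderiv3}), expands in multi-overlaps via \Lem~\ref{Lemma_multiOverlap}, and integrates. Invoking Aizenman--Sims--Starr is an unnecessarily heavy substitute for this derivative computation, but the underlying idea is the same.
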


\noindent
The corresponding statement for $k=2$ was proved by Guerra and Toninelli, but as they point out their argument does not extend to larger $k$~\cite{GT}.

\Thm~\ref{Thm_SK2} implies the absence of extensive long-range correlations in the replica symmetric phase.
Indeed, for two vertices $x,y\in V_n$ and $s,t\in\{+1, -1\}$ let 
	$$\mu_{\HH,\J,\beta,x,y}(s,t)=\bck{\vecone\{\SIGMA_1(x)=s,\SIGMA_1(y)=t\}}_{\HH,\J,\beta}$$
be the joint distribution of the spins assigned to $x,y$.
Further, let $\bar\rho$ be the uniform distribution on $\{\pm 1\} \times \{\pm 1\}$.
Then the total variation distance $\|\mu_{\HH,\J,\beta,x,y}-\bar\rho\|_{\mathrm{TV}}$ is a measure of how correlated the spins of $x,y$ are.
Indeed, in the case that $k$ is even for every $x\in V_n$ the Gibbs marginals satisfy $\mu_{\HH,\J,\beta,x}(\pm1)=\bck{\vecone\{\SIGMA_1(x)=\pm1\}}_{\HH,\J,\beta}=1/2$ because $\mu_{\HH,\J,\beta}(\sigma)=\mu_{\HH,\J,\beta}(-\sigma)$ for every $\sigma\in\{-1, +1\}^n$.
Therefore, if the spins at $x,y$ were independent, then $\mu_{\HH,\J,\beta,x,y}=\mu_{\HH,\J,\beta,x}\tensor\mu_{\HH,\J,\beta,y}=\bar\rho$.
Furthermore, it is well known (e.g., \cite[\Sec~2]{Victor}) that
	\begin{equation}\label{eqVictor}
	\lim_{n\to\infty}\Erw\bck{\varrho_{\SIGMA_1,\SIGMA_2}^2}_{\HH,\J,\beta}=0\qquad\mathrm{iff}\qquad
		\lim_{n\to\infty}\frac1{n^2}\sum_{x,y\in V_n}\Erw\|\mu_{\HH,\J,\beta,x,y}-\bar\rho\|_{\mathrm{TV}}=0.
	\end{equation}
Thus, \Thm~\ref{Thm_SK2} implies that for $d<\dc(k,\beta)$, with probability tending to $1$, the spins assigned to two random vertices $x,y$ of $\HH$
are asymptotically independent.
By contrast, \Thm~\ref{Thm_SK2} and (\ref{eqVictor}) show that extensive long-range dependencies occur beyond but arbitrarily close to $\dc(k,\beta)$.
 
\subsection{The Potts antiferromagnet}\label{Sec_IntroPotts}
Let $q\geq2$ be an integer, let $\Omega=\{1,\ldots,q\}$ be a set of $q$ ``colors'' and let $\beta>0$.
The antiferromagnetic $q$-spin Potts model on a graph $G=(V(G),E(G))$ at inverse temperature $\beta$ is the probability distribution on $\Omega^{V_n}$ defined by
	\begin{align}\label{eqPottsAntiferro}
	\mu_{G,q,\beta}(\sigma)&=\frac1{Z_{q,\beta}(G)}\hspace{-1mm}\prod_{\{v,w\}\in E(G)}\hspace{-1mm}\exp(-\beta\vecone\{\sigma(v)=\sigma(w)\}),
	\ \ \mbox{where}\ \ 
	Z_{q,\beta}(G)=\hspace{-1mm}\sum_{\tau\in \Omega^{V(G)}}\prod_{\{v,w\}\in E(G)}\hspace{-2mm}\exp(-\beta\vecone\{\tau(v)=\tau(w)\}).
	\end{align}
The Potts model on the 
random graph $\GG=\GG(n,p)$ with vertex set $V_n=\{x_1,\ldots,x_n\}$ whose
edge set $E(\GG)$ is obtained by including each of the $\bink n2$ possible pairs $\{v,w\}$, $v,w\in V_n$, $v\neq w$,
with probability $p\in[0,1]$ independently, has received considerable attention (e.g.~\cite{Banks,Nor,CDGS}).
As in the $k$-spin model, the most challenging case is that $p=d/n$ for a fixed real $d>0$, so that the average degree converges to $d$ in probability.

The condensation phase transition in this model was pinpointed recently~\cite{CKPZ}.
As in the $k$-spin model, the answer comes as a stochastic optimization problem.
To be precise, let $\GAMMA$ be a $\Po(d)$-random variable, let $\RHO_1^\pi,\RHO_2^\pi,\ldots$
	denote samples from $\pi\in\cP^2_*(\Omega)$, mutually independent and independent of $\GAMMA$, and set
	\begin{align}
	\cB_{\mathrm{Potts}}(q,\beta,d)&=\sup_{\pi\in\cPcent(\Omega)}\Erw\brk{\frac{\Lambda(\sum_{\sigma=1}^q\prod_{i=1}^{\vec\gamma}1-(1-\eul^{-\beta})\RHO_{i}^{\pi}(\sigma))}{q(1-(1-\eul^{-\beta})/q)^{\vec\gamma}}
	-\frac d2\cdot\frac{\Lambda(1-(1-\eul^{-\beta})\sum_{\tau=1}^q\RHO_1^{\pi}(\tau)\RHO_2^{\pi}(\tau))}{1-(1-\eul^{-\beta})/q}}
		\label{eqBPotts},\\
	\dc(q,\beta)&=\inf\cbc{d>0:\cB_{\mathrm{Potts}}(q,\beta,d)>\ln q+d\ln(1-(1-\eul^{-\beta})/q)/2}.\label{eqdcPotts}
	\end{align}
Then~\cite[\Thm~1.1]{CKPZ} shows that $0<\dc(q,\beta)<\infty$ and
	\begin{align}\label{eqPottsFreeEnergy1}
	\lim_{n\to\infty}\frac1n\Erw[\ln Z_{q,\beta}(\GG)]&
		\begin{cases}=\ln q+d\ln(1-(1-\eul^{-\beta})/q)/2&\mbox{ if }d\leq\dc(q,\beta),\\
		<\ln q+d\ln(1-(1-\eul^{-\beta})/q)/2&\mbox{ if }d>\dc(q,\beta).\end{cases}
	\end{align}
While it may be difficult to calculate $\dc(q,\beta)$ numerically, there is the explicit {\em Kesten-Stigum bound}~\cite{abbe2015detection}
	\begin{align}\label{eqPottsKS}
	\dc(q,\beta)&\leq	d_{\mathrm{KS}}(q,\beta)=\bcfr{q-1+\eul^{-\beta}}{1-\eul^{-\beta}}^2,
	\end{align}
{which is known to be tight for $q=2$ for all $\beta$~\cite{massoulie2014community,mossel2013proof,Mossel}, conjectured to be tight for $q=3$ for all $\beta$
	\cite{Decelle,MMrec}, and known not to be tight for $q\geq5$~\cite{Sly}.}

What can we say about the nature of the Gibbs measure in the `replica symmetric phase' $0<d<\dc(q,\beta)$?
Azuma's inequality shows that $\frac1n\ln Z_{q,\beta}(\GG)$ converges to $\lim_{n\to\infty}\frac1n\Erw[\ln Z_{q,\beta}(\GG)]$ in probability, i.e.,
the free energy $\ln Z_{q,\beta}(\GG)$ has fluctuations of order $o(n)$.
On the other hand, given that key parameters such as the size of the largest connected component of $\GG$ exhibit fluctuations of order $\sqrt n$
even once we condition on the number $|E(\GG)|$ of edges, one might  expect that so does $\ln Z_{q,\beta}(\GG)$.
Yet remarkably, the following theorem shows that throughout the replica symmetric phase the free energy merely has {\em bounded} fluctuations given $|E(\GG)|$.
In fact, we know the precise limiting distribution.

\begin{theorem}\label{Thm_PottsNor}
Let $q\geq2$, $\beta>0$ and $0<d<\dc(q,\beta)$.
With $(K_l)_{l\geq3}$ a sequence of independent Poisson variables with mean $\Erw[K_l]=d^l/(2l)$, let
	$$\cK=\sum_{l=3}^\infty K_l\ln(1+\delta_l)-\frac{d^l\delta_l}{2l}\qquad\mbox{where}\quad\delta_l=
		(q-1)\left(\frac{\eul^{-\beta}-1}{q-1+\eul^{-\beta}}\right)^l.$$
Then $\Erw|\cK|<\infty$ and, in distribution,
	$$\ln Z_{q,\beta}(\GG)-\bc{n+\frac12}\ln q-|E(\GG)|\ln\bc{1-\frac{1-\eul^{-\beta}}q}
		+\frac{q-1}2\ln\bc{1+\frac{d(1-\eul^{-\beta})}{q-1+\eul^{-\beta}}}
			+\frac{d\delta_1}2+\frac{d^2\delta_2}4\quad\stacksign{$n\to\infty$}{\to}\quad \cK.$$
\end{theorem}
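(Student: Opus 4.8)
The plan is to exploit the small-subgraph conditioning method, which is the standard tool for upgrading a first/second moment computation of $\Erw[Z]$ and $\Erw[Z^2]$ into a precise limiting distribution governed by short cycles. Concretely, I would work with the random multigraph $\GG(n,m)$ conditioned on $m=|E(\GG)|$ edges, since the theorem's statement is phrased given $|E(\GG)|$, and then transfer back to $\GG(n,p)$; the two models are contiguous once the edge count is fixed and $d$ is subcritical for the relevant phase. The first step is to renormalise: set $\bar Z_{q,\beta}(\GG)=Z_{q,\beta}(\GG)\cdot q^{-n}(1-(1-\eul^{-\beta})/q)^{-m}$, so that $\Erw[\bar Z_{q,\beta}(\GG(n,m))]\to q^{1/2}(1+d(1-\eul^{-\beta})/(q-1+\eul^{-\beta}))^{-(q-1)/2}\exp(-d\delta_1/2-d^2\delta_2/4)$, a finite positive constant; this is exactly the additive correction appearing in the statement, exponentiated. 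The convergence here should follow from the free-energy formula~\eqref{eqPottsFreeEnergy1} together with a more refined first-moment expansion that tracks the $O(1)$ term, or it can be extracted from the companion results of this paper on the fixed-point/Bethe functional in the replica symmetric phase.

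The heart of the argument is then to verify the three hypotheses of the small-subgraph conditioning theorem (in the form of Janson, or Wormald). Let $X_l=X_l(\GG)$ denote the number of cycles of length $l$ in $\GG(n,m)$; it is classical that $(X_l)_{l\ge3}$ converges jointly to independent Poisson variables $(Y_l)$ with $\Erw[Y_l]=\lambda_l:=d^l/(2l)$. I would compute $\Erw[\bar Z_{q,\beta}\mid X_3,\ldots,X_L]$ asymptotically and show it factorises as $\prod_{l=3}^{L}(1+\delta_l)^{X_l}\exp(-\lambda_l\delta_l)$ up to $o(1)$, where $\delta_l=(q-1)((\eul^{-\beta}-1)/(q-1+\eul^{-\beta}))^l$ is precisely the per-cycle multiplicative effect: a cycle of length $l$ contributes the trace of the $l$-th power of the Potts transfer ``matrix'' on the tree recursion, whose eigenvalues are $1$ (once) and $(\eul^{-\beta}-1)/(q-1+\eul^{-\beta})$ (with multiplicity $q-1$), hence $\Tr = 1+(q-1)((\eul^{-\beta}-1)/(q-1+\eul^{-\beta}))^l = 1+\delta_l$. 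The second moment condition is the crucial quantitative input: one must show
\[
\limsup_{n\to\infty}\frac{\Erw[\bar Z_{q,\beta}(\GG(n,m))^2]}{\Erw[\bar Z_{q,\beta}(\GG(n,m))]^2}
= \prod_{l=3}^{\infty}(1+\delta_l^2/\lambda_l)\,\exp(-\delta_l^2) \cdot \frac{1}{\text{(?)}},
\]
more precisely that the ratio equals $\exp(\sum_{l\ge3}\lambda_l\delta_l^2)=\prod_{l\ge3}(1+\delta_l)$-type constant predicted by the cycle statistics, \emph{with no extra slack}. This is where the replica symmetric hypothesis $d<\dc(q,\beta)$ enters decisively: it guarantees that the second moment is dominated by the ``diagonal'' contribution near overlap $\rho=\bar\rho$ (the uniform doubly-stochastic matrix), i.e.\ that $\sup_\pi\cB_{\mathrm{Potts}}$ is attained only at the trivial fixed point; this is exactly the content of~\eqref{eqPottsFreeEnergy1} (the free energy equals the annealed/Bethe value) combined with the overlap concentration results of the paper, which play the role that Theorem~\ref{Thm_SK2} plays for the $k$-spin model. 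Given that the second moment has precisely this value, Janson's theorem yields that $\bar Z_{q,\beta}/\Erw[\bar Z_{q,\beta}]\to \prod_{l\ge3}(1+\delta_l)^{Y_l}\exp(-\lambda_l\delta_l)=\exp(\cK)/\Erw[\exp(\cK)]$ in distribution, and one checks $\Erw[\exp(\cK)]=\prod_{l\ge3}\exp(\lambda_l\delta_l)\exp(-\lambda_l\delta_l)\cdot(\text{normalisation})=1$ by the Poisson moment generating function, so in fact $\bar Z_{q,\beta}\to\exp(\cK)\cdot(\text{const})$, which after taking logarithms is the claimed statement. The summability $\Erw|\cK|<\infty$ follows since $\sum_l \lambda_l|\delta_l|<\infty$ because $|\delta_l|$ decays geometrically (as $|(\eul^{-\beta}-1)/(q-1+\eul^{-\beta})|<1$) while $\lambda_l=d^l/(2l)$ and the relevant products converge precisely in the subcritical-for-the-eigenvalue regime, which is implied by $d<\dc(q,\beta)\le d_{\mathrm{KS}}(q,\beta)$ via~\eqref{eqPottsKS}.

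Finally I would transfer from $\GG(n,m)$ back to the statement's formulation: since the theorem already conditions on $|E(\GG)|$, one replaces $m$ by $|E(\GG)|$ throughout, and the Poisson law of $|E(\GG)|$ around its mean $\binom n2 p\sim dn/2$ interacts with the $m$-dependence of the centering only through the explicit term $|E(\GG)|\ln(1-(1-\eul^{-\beta})/q)$, which is why that term appears with the random coefficient $|E(\GG)|$ rather than its expectation. The expected main obstacle is the sharp second-moment estimate: proving that the ratio $\Erw[\bar Z^2]/\Erw[\bar Z]^2$ converges to \emph{exactly} the constant dictated by the cycle counts, with no residual contribution from overlaps bounded away from $\bar\rho$, requires the full strength of the replica-symmetry/overlap-concentration machinery of this paper applied to the Potts model (equivalently, ruling out a ``condensation'' contribution to the second moment throughout $0<d<\dc(q,\beta)$). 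Everything else — the joint Poisson limit of short cycles, the transfer-matrix eigenvalue computation giving $\delta_l$, verifying Janson's hypotheses mechanically, and the $\GG(n,m)$-to-$\GG(n,p)$ transfer — is routine once that estimate is in hand.
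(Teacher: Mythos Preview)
Your overall architecture is right---small subgraph conditioning driven by short cycles, with $\delta_l=\Tr(\Phi^l)-1$ coming from the Potts transfer matrix---and this is indeed how the paper proceeds (via the general Theorem~\ref{Thm_SSC}). But there is a genuine gap at the step you flag as ``the expected main obstacle'': you assert that $d<\dc(q,\beta)$ ``guarantees that the second moment is dominated by the diagonal contribution near overlap $\rho=\bar\rho$'', and you equate this with $\sup_\pi\cB_{\mathrm{Potts}}$ being attained at the trivial fixed point. These are \emph{different} optimisation problems. The paper states explicitly (Section~3.1, citing~\cite{Nor}) that in the Potts antiferromagnet the raw second moment $\Erw[Z^2]$ can exceed $\Erw[Z]^2$ by a factor $\exp(\Omega(n))$ for $d$ strictly below $\dc(q,\beta)$: the function $\rho\mapsto\cH(\rho)+\frac{d}{2}\ln\sum_{s,t}\psi(s)\psi(t)\rho(s_1,t_1)\rho(s_2,t_2)$ on $\cP(\Omega^2)$ can have its global maximum away from $\bar\rho$ even when the Bethe functional on $\cP_*^2(\Omega)$ does not. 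So your plan to show $\Erw[\bar Z^2]/\Erw[\bar Z]^2\to\exp(\sum_l\lambda_l\delta_l^2)$ for the \emph{untruncated} partition function cannot succeed as stated.

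What the paper does instead is to use the overlap concentration (Proposition~\ref{prop:belowcond-unif}, derived from the planted-model free energy Theorem~\ref{Thm_plantedFreeEnergy}) not to control $\Erw[Z^2]$ directly, but to justify a truncation: one sets $\cZ(G)=Z(G)\vecone\{\langle\|\rho_{\SIGMA_1,\SIGMA_2}-\bar\rho\|_{\mathrm{TV}}\rangle_G\le\zeta\}$, proves $\Erw[\cZ]\sim\Erw[Z]$ via the reweighted model $\hat\G$ (Corollary~\ref{cor:belowcond-unif}), and \emph{then} computes $\Erw[\cZ^2]$ by a Laplace expansion localised at $\bar\rho$ by construction. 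That $\bar\rho$ is even a \emph{local} maximum of the second-moment functional for all $d<\dc$ is itself nontrivial and is supplied by the Kesten--Stigum bound (Proposition~\ref{prop_KS}), which the paper proves by relating the spectrum of $\Xi$ to the Bethe functional. Once you insert this truncation step and the spectral input, the rest of your outline---the cycle computation, Janson's conditioning, and the transfer to $\GG(n,p)$ given $|E(\GG)|$---matches the paper's argument.
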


Further, as in the $k$-spin model the replica symmetric phase can be characterized in terms of the overlap.
Formally, define the {\em overlap} of two colorings $\sigma,\tau:V_n\to\Omega$ as the
	 probability distribution $\rho_{\sigma,\tau}=(\rho_{\sigma,\tau}(s,t))_{s,t\in\Omega}$ on $\Omega\times\Omega$ where
	$\rho_{\sigma,\tau}(s,t)=|\sigma^{-1}(s)\cap\tau^{-1}(t)|/n$ is the probability that a random vertex $v$ is colored $s$ under $\sigma$ and $t$ under $\tau$.
Let $\bar\rho$ denote the uniform distribution on $\Omega\times\Omega$,
write $\SIGMA_1,\SIGMA_2$ for two independent samples from $\mu_{\GG,q,\beta}$, denote the expectation
with respect to $\SIGMA_1,\SIGMA_2$ by $\bck\nix_{\GG,q,\beta}$ and
 the expectation over the choice of $\GG$ by $\Erw\brk\nix$.

\begin{theorem}\label{Thm_PottsOverlap}
For all $q\geq2,\beta>0$ we have
	$\dc(q,\beta)=\inf\cbc{d>0:\limsup_{n\to\infty}\Erw\bck{\|\rho_{\SIGMA_1,\SIGMA_2}-\bar\rho\|_{\mathrm{TV}}}_{\GG,q,\beta}>0}.$
\end{theorem}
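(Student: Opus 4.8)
The starting point is that $\mu_{\GG,q,\beta}$ is invariant under permuting the $q$ colours, so every single-vertex Gibbs marginal equals the uniform distribution on $\Omega$. Arguing precisely as in the passage from $k$-spin overlaps to pairwise correlations that leads to \eqref{eqVictor} (see \cite{Victor}), this reduces the theorem to the statement that the Gibbs measure is \emph{$\eps$-symmetric} if and only if $d<\dc(q,\beta)$; that is, $n^{-2}\sum_{x,y\in V_n}\Erw\|\mu_{\GG,q,\beta,x,y}-\bar\rho\|_{\mathrm{TV}}\to0$ exactly in that range. Once this is settled for every $d\neq\dc(q,\beta)$, the infimum in the theorem equals $\dc(q,\beta)$. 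I would treat the two regimes by entirely different arguments.

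For $d>\dc(q,\beta)$ the argument is soft. Suppose, for contradiction, that the Gibbs measure were $\eps_n$-symmetric with $\eps_n\to0$ along a subsequence. Since $\mu_{\GG,q,\beta}$ has uniform marginals, the Bethe free energy / Aizenman-Sims-Starr machinery underpinning \cite{CKPZ} then forces $\tfrac1n\Erw[\ln Z_{q,\beta}(\GG)]$, along that subsequence, to the value on the right-hand side of \eqref{eqPottsFreeEnergy1}; but for $d>\dc(q,\beta)$ this contradicts the strict inequality that \eqref{eqPottsFreeEnergy1} itself asserts. Hence $\limsup_n\Erw\bck{\|\rho_{\SIGMA_1,\SIGMA_2}-\bar\rho\|_{\mathrm{TV}}}_{\GG,q,\beta}>0$.

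For $0<d<\dc(q,\beta)$ — the substantive half — the plan is to combine a sharp second moment estimate with small subgraph conditioning. By the methods of \cite{CKPZ}, for $d<\dc(q,\beta)$ the conditioned (equivalently, planted-model) second moment of $Z_{q,\beta}(\GG)$ is tame: after conditioning $\GG$ on $|E(\GG)|$ and on the short cycle counts $K_3,\dots,K_L$, the exponential order of the conditional second moment is attained, non-degenerately, only at overlap $\bar\rho$ — here one uses $\dc(q,\beta)\le\dKS(q,\beta)$ from \eqref{eqPottsKS}, so that $\bar\rho$ stays a strict local maximiser of the relevant rate function throughout $0<d<\dc(q,\beta)$. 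Consequently, for each fixed $\delta>0$ and with probability $1-o(1)$ over the conditioning,
\[
\Erw\brk{Z_{q,\beta}(\GG)^2\,\bck{\vecone\{\|\rho_{\SIGMA_1,\SIGMA_2}-\bar\rho\|_{\mathrm{TV}}>\delta\}}_{\GG,q,\beta}\mid|E(\GG)|,K_3,\dots,K_L}=\eul^{-\Omega(n)}\,\Erw\brk{Z_{q,\beta}(\GG)\mid|E(\GG)|,K_3,\dots,K_L}^2.
\]
Meanwhile, the small-subgraph-conditioning analysis behind \Thm~\ref{Thm_PottsNor} shows that, conditionally on $(|E(\GG)|,K_3,\dots,K_L)$, the partition function $Z_{q,\beta}(\GG)$ concentrates up to a factor $1+o(1)$ about its conditional mean (first $n\to\infty$, then $L\to\infty$); in particular $Z_{q,\beta}(\GG)\geq\tfrac12\,\Erw[Z_{q,\beta}(\GG)\mid|E(\GG)|,K_3,\dots,K_L]$ w.h.p. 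Dividing the displayed identity by $Z_{q,\beta}(\GG)^2$ and applying Markov's inequality conditionally then give $\bck{\vecone\{\|\rho_{\SIGMA_1,\SIGMA_2}-\bar\rho\|_{\mathrm{TV}}>\delta\}}_{\GG,q,\beta}\to0$ w.h.p.; since $\delta>0$ is arbitrary and the overlap deviation lies in $[0,1]$, bounded convergence yields $\Erw\bck{\|\rho_{\SIGMA_1,\SIGMA_2}-\bar\rho\|_{\mathrm{TV}}}_{\GG,q,\beta}\to0$.

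I expect the crux to be this positive direction. The naive second moment of $Z_{q,\beta}(\GG)$ is known to fall short of the condensation threshold already in the closely related random graph colouring problem, so one genuinely needs the conditioned/planted second moment together with a delicate Laplace-type computation certifying that its rate function is maximised solely at $\bar\rho$ on the whole interval $0<d<\dc(q,\beta)$; and upgrading this exponential-scale estimate to an almost-sure statement about the overlap requires the full force of the small-subgraph-conditioning apparatus built for \Thm~\ref{Thm_PottsNor}, in particular the control of the residual fluctuations of $Z_{q,\beta}(\GG)$ once the short-cycle statistics are fixed. Granting these two ingredients, the remaining pieces — the reduction to $\eps$-symmetry and the soft $d>\dc(q,\beta)$ direction — are routine.
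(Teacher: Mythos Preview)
Your $d<\dc$ argument has a genuine gap. You assert that, after conditioning on $|E(\GG)|$ and the short cycle counts, the contribution to $\Erw[Z_{q,\beta}(\GG)^2]$ from pairs with $\|\rho_{\SIGMA_1,\SIGMA_2}-\bar\rho\|_{\mathrm{TV}}>\delta$ is at most $\eul^{-\Omega(n)}\Erw[Z_{q,\beta}(\GG)\mid\cdots]^2$. This is false for the Potts antiferromagnet: as noted explicitly in \Sec~\ref{Sec_Proofs} (citing~\cite{Nor}), the full second moment exceeds the square of the first by a factor $\exp(\Omega(n))$ for $d$ below but close to $\dc(q,\beta)$; since the near-$\bar\rho$ contribution is only $O(\Erw[Z]^2)$ by the local Laplace analysis, the far-from-$\bar\rho$ contribution must carry that exponential excess. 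Conditioning on boundedly many cycle counts changes probabilities by $O(1)$ factors and cannot close an exponential gap. The Kesten-Stigum bound~(\ref{eqPottsKS}) only certifies that $\bar\rho$ is a \emph{local} maximiser of the second-moment rate function; near $\dc$ the global maximum lies elsewhere, so your displayed estimate simply does not hold.

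The paper's route decouples overlap concentration from the second moment. First, \Prop~\ref{prop:belowcond-unif} establishes overlap concentration in the reweighted model $\hat\G$ for all $d<\dc$ by a free-energy derivative argument (this is where {\bf MIN} enters): if the overlap failed to concentrate in $\hat\G$ on an interval, then $\partial_d\frac1n\Erw[\ln Z(\hat\G)]>k^{-1}\ln\xi$ there, and integrating contradicts the cavity formula of \Thm~\ref{Thm_plantedFreeEnergy}. Only with this in hand is $Z$ truncated to $\cZ$ (the truncation is on the Gibbs overlap, not on cycle statistics); the second moment of the truncated $\cZ$ is then governed purely by the local behaviour at $\bar\rho$, small subgraph conditioning yields contiguity of $\G$ and $\hat\G$ (\Thm~\ref{Cor_contig}), and the planted-model overlap concentration transfers to $\G$. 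Your phrase ``conditioned/planted second moment'' conflates cycle-conditioning with this truncation; the missing ingredient is the independent planted-model overlap argument, which is precisely the device that circumvents the second-moment failure you yourself flag. For $d>\dc$ the paper likewise argues via the derivative $\partial_d\frac1n\Erw[\ln Z(\G)]$ and integration over an interval of $d$, rather than a pointwise Bethe/ASS computation at a single $d$.
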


\noindent
As in the case of the $k$-spin model
it is easy to see that $\Erw\langle\|\rho_{\SIGMA_1,\SIGMA_2}-\bar\rho\|_{\mathrm{TV}}\rangle_{\GG,q,\beta}=o(1)$ iff
the colors assigned to two randomly chosen vertices of $\GG$ are asymptotically independent with probability tending to one.
Hence,  $\dc(q,\beta)$ marks the onset of long-range correlations.

In many diluted models, and in particular in the Potts antiferromagnet, 
the condensation transition is conjectured to be preceded by another threshold where certain ``point-to-set correlations'' emerge~\cite{pnas}.
Intuitively, the {\em reconstruction threshold} is the point from where for a random vertex $y\in V_n$ correlations between 
the color assigned to $y$ and the colors assigned to {\em all} vertices at a large enough distance $\ell$ from $y$ persist.
Formally, with $\SIGMA$ chosen from $\mu_{\GG,q,\beta}$ let
	$\nabla_{\ell,q,\beta}(\GG,y)$ be the $\sigma$-algebra on $\Omega^{V_n}$ generated by the random variables
	$\SIGMA(z)$, where $z$ ranges over all vertices at distance at least $\ell$ from $y$.
Then 
	\begin{align}\label{eqGPottscorr}
	\mathrm{corr}_{q,\beta}(d)&=\lim_{\ell\to\infty}\limsup_{n\to\infty}\frac1n\sum_{y\in V_n}\sum_{s\in\Omega}
		\Erw\bck{\abs{\bck{\vecone\{\SIGMA(y)=s\}\big|\nabla_{\ell,q,\beta}(\GG,y)}_{\GG,q,\beta}-1/q}}_{\GG,q,\beta}
	\end{align}
measures the extent of correlations between $y$ and a random boundary condition in the limit $\ell,n\to\infty$
	(the outer limit exists due to mononicity).
Indeed, with the expectation $\Erw\brk\nix$ in (\ref{eqGPottscorr}) referring to the choice of $\GG$, the outer $\bck{\nix}_{\GG,q,\beta}$
 chooses a random coloring of the vertices at distance at least $\ell$ from $y$ and the inner $\langle\nix|\nabla_{\ell,q,\beta}(\GG,y)\rangle_{\GG,q,\beta}$
averages over the color of $y$ given the boundary condition.

The {\em reconstruction threshold} is defined as	$\dr(q,\beta)=\inf\{d>0:\mathrm{corr}_{q,\beta}(d)>0\}.$
A priori, calculating $\dr(q,\beta)$ appears to be rather challenging because we seem to have to control the joint distribution of the colors at distance $\ell$ from $y$.
However,  according to physics predictions $\dr(q,\beta)$ is identical to the corresponding threshold on a random tree \cite{pnas}, a conceptually {\em much} simpler object. 
Formally, let $\TT(d)$ be the Galton-Watson tree with offspring distribution $\Po(d)$.
Let $r$ be its root and for an integer $\ell\geq1$ let $\TT^\ell(d)$ be the finite tree obtained by deleting all vertices at distance greater than $\ell$ from $r$.
Then
	\begin{align*}
	\mathrm{corr}_{q,\beta}^\star(d)&=\lim_{\ell\to\infty}\sum_{s\in\Omega}
		\Erw\bck{\abs{\bck{\vecone\{\SIGMA(r)=s\}\big|\nabla_{\ell,q,\beta}(\TT^\ell(d),r)}_{\TT^\ell(d),q,\beta}-1/q}}_{\TT^\ell(d),q,\beta}
	\end{align*}
measures the extent of correlations between the color of the root and the colors at the boundary of the tree.
Accordingly, the {\em tree reconstruction threshold} is defined as	$\dr^\star(q,\beta)=\inf\{d>0:\mathrm{corr}_{q,\beta}^\star(d)>0\}.$
Combining \Thm~\ref{Thm_PottsOverlap} with a result of Gerschenfeld and Montanari~\cite{GM}, we obtain the following result.

\begin{corollary}\label{Cor_PottsRec}
For every $q\geq2$ and $\beta>0$ we have $1\leq\dr(q,\beta)=\dr^\star(q,\beta)\leq\dc(q,\beta).$
\end{corollary}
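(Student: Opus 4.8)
The plan is to establish the three assertions of \Cor~\ref{Cor_PottsRec} in turn: the lower bounds $\dr(q,\beta)\geq1$ and $\dr^\star(q,\beta)\geq1$, the identity $\dr(q,\beta)=\dr^\star(q,\beta)$, and the upper bound $\dr(q,\beta)\leq\dc(q,\beta)$. The first is elementary, the third rests on \Thm~\ref{Thm_PottsOverlap}, and the second invokes the theorem of Gerschenfeld and Montanari~\cite{GM}. For the lower bounds, note that for $d\leq1$ the Galton--Watson tree $\TT(d)$ is almost surely finite, and the rooted component of a fixed vertex of $\GG=\GG(n,d/n)$ converges in the local weak sense to $\TT(d)$; in particular, the probability that this component has at least $m$ vertices tends to $0$ as $m\to\infty$. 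Since the antiferromagnetic Potts Gibbs measure factorises over connected components and is invariant under permutations of $\Omega$, for any vertex $y$ whose component has radius less than $\ell$ we have $\bck{\vecone\{\SIGMA(y)=s\}\big|\nabla_{\ell,q,\beta}(\GG,y)}_{\GG,q,\beta}=1/q$, so $y$ contributes nothing to (\ref{eqGPottscorr}); bounding the remaining fraction of $y$ trivially and letting first $n\to\infty$ and then $\ell\to\infty$ yields $\mathrm{corr}_{q,\beta}(d)=0$, hence $\dr(q,\beta)\geq1$. The identical argument on $\TT^\ell(d)$ gives $\dr^\star(q,\beta)\geq1$.

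For $\dr(q,\beta)\leq\dc(q,\beta)$ I would show that failure of overlap concentration forces point-to-set correlations. Fix $d>\dc(q,\beta)$; by \Thm~\ref{Thm_PottsOverlap}, together with the elementary equivalence between $\Erw\bck{\tv{\rho_{\SIGMA_1,\SIGMA_2}-\bar\rho}}_{\GG,q,\beta}=o(1)$ and asymptotic independence of the colours of two random vertices noted just after \Thm~\ref{Thm_PottsOverlap}, one has $\limsup_{n\to\infty}n^{-2}\sum_{x,y\in V_n}\Erw\tv{\mu_{x,y}-\bar\rho}>0$, where $\mu_{x,y}$ denotes the joint law of $(\SIGMA(x),\SIGMA(y))$ under $\mu_{\GG,q,\beta}$. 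For any $y$ and any $x$ at graph distance at least $\ell$ from $y$, the $\sigma$-algebra generated by $\SIGMA(x)$ is coarser than $\nabla_{\ell,q,\beta}(\GG,y)$, so by the tower rule and Jensen's inequality for the convex function $t\mapsto\abs{t-1/q}$, using that the Gibbs marginal at $x$ is uniform,
\[
\sum_{s\in\Omega}\Erw\bck{\abs{\bck{\vecone\{\SIGMA(y)=s\}\big|\nabla_{\ell,q,\beta}(\GG,y)}_{\GG,q,\beta}-1/q}}_{\GG,q,\beta}\ \geq\ \sum_{s\in\Omega}\Erw\bck{\abs{\bck{\vecone\{\SIGMA(y)=s\}\big|\SIGMA(x)}_{\GG,q,\beta}-1/q}}_{\GG,q,\beta}=2\Erw\tv{\mu_{x,y}-\bar\rho}.
\]
Since the left side does not depend on $x$, averaging the right side over all $x$ at distance at least $\ell$ from $y$, then over $y$, and discarding the $O_{d,\ell}(n)$ pairs of vertices within distance $\ell$ (a fraction $o(1)$ of the $n^2$ pairs) shows that the $y$-average in (\ref{eqGPottscorr}) is at least $2n^{-2}\sum_{x,y}\Erw\tv{\mu_{x,y}-\bar\rho}-o(1)$. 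Letting $n\to\infty$ and then $\ell\to\infty$ gives $\mathrm{corr}_{q,\beta}(d)>0$, so $d\geq\dr(q,\beta)$; as $d>\dc(q,\beta)$ was arbitrary, $\dr(q,\beta)\leq\dc(q,\beta)$.

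For the identity $\dr(q,\beta)=\dr^\star(q,\beta)$ I would appeal to~\cite{GM}: for factor models on a sequence of random graphs that converges locally (in the Benjamini--Schramm sense) to a Galton--Watson tree, point-to-set reconstruction on the graphs is equivalent to reconstruction on the limiting tree. Since $\GG(n,d/n)$ converges locally to $\TT(d)$ and the Potts antiferromagnet has a finite specification, this applies; and where~\cite{GM} requires control over the asymptotic local geometry of the Gibbs measure, this is exactly what \Thm~\ref{Thm_PottsOverlap} delivers in the replica symmetric phase $d<\dc(q,\beta)$, since vanishing overlap forces $\mu_{\GG,q,\beta}$ to converge locally to the free Gibbs measure on $\TT(d)$. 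Hence $\mathrm{corr}_{q,\beta}(d)>0\iff\mathrm{corr}_{q,\beta}^\star(d)>0$ for every $d<\dc(q,\beta)$; combining this with the monotonicity of $\mathrm{corr}_{q,\beta}$ and $\mathrm{corr}_{q,\beta}^\star$ in $d$ and with the bound $\dr(q,\beta)\leq\dc(q,\beta)$ from the previous step, a short bookkeeping argument forces $\dr(q,\beta)=\dr^\star(q,\beta)$.

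The main obstacle is this last step: one has to match the hypotheses of~\cite{GM} with the conclusions of \Thm~\ref{Thm_PottsOverlap} (local weak convergence of both the graphs and their Gibbs measures), and then run the short but delicate threshold bookkeeping that promotes the equivalence, available a priori only for $d<\dc(q,\beta)$, to all $d$. By contrast, the lower bound is a routine branching-process estimate, and the upper bound is the soft refinement argument above.
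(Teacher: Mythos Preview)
Your approach matches the paper's stated derivation: \Thm~\ref{Thm_PottsOverlap} supplies overlap concentration for $d<\dc(q,\beta)$, and the Gerschenfeld--Montanari result~\cite{GM} converts this into the identification of the graph and tree reconstruction thresholds. The lower bound via subcriticality and the Jensen/tower argument for the upper bound are both sound.

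Two technical slips to flag. First, in your upper-bound step you write ``Fix $d>\dc(q,\beta)$; by \Thm~\ref{Thm_PottsOverlap}\ldots one has $\limsup>0$''. That is not what the theorem says: it identifies $\dc$ as the \emph{infimum} of the set where the $\limsup$ is positive, so you only get such $d$ arbitrarily close to $\dc$ from above, not for every $d>\dc$. The fix is immediate---run your Jensen argument for those particular $d$ to get $\dr\leq d$, then let $d\downarrow\dc$---but as written the sentence overclaims. Second, you invoke ``monotonicity of $\mathrm{corr}_{q,\beta}$ in $d$'' for the graph quantity; this is not established anywhere (the paper only proves monotonicity for the \emph{tree} quantity $\mathrm{corr}^\star$, via a coupling). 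Fortunately your bookkeeping does not actually need it: from the pointwise equivalence on $d<\dc$ together with $\dr\leq\dc$, a direct inf argument gives $\dr=\dr^\star$ whenever $\dr<\dc$, without any monotonicity of $\mathrm{corr}$.

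You correctly identify the remaining subtlety: in the boundary case $\dr=\dc$ the equivalence on $(0,\dc)$ only yields $\dr^\star\geq\dc$, and one still needs $\dr^\star\leq\dc$. The paper handles this (in the general \Thm~\ref{thrm:TreeGraphEquivalence}) not through~\cite{GM} but via the teacher-student model: it shows $\mathrm{corr}^\star(d)=\mathrm{corr}^*(d)$ for all $d$, and then uses that overlap non-concentration for the planted model just above $\dc$ forces $\mathrm{corr}^*(d)>0$, whence $\dr^\star\leq\dc$ by tree monotonicity. So your outline is correct as far as it goes, but closing that last gap requires an ingredient beyond \Thm~\ref{Thm_PottsOverlap} and~\cite{GM} alone.
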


\noindent
Previously it was known that $\dr(q,\beta)=\dr^\star(q,\beta)$ for $q$ exceeding some (large but) undetermined constant $q_0$~\cite{montanari2011reconstruction}.
This assumption was required because the proof depended on model-specific combinatorial considerations.
A merit of the present approach is that we replace such combinatorial arguments by abstract probabilistic ones.

\subsection{The stochastic block model}\label{Sec_IntroSBM}
The disassortative {\em stochastic block model}, {originally introduced by Holland, Laskey, and Leinhardt~\cite{Holland}}, is
an intensely studied statistical inference problem associated with the Potts model~\cite{Cris}.
We first choose a random coloring $\SIGMA^*:V_n\to\Omega$ of $n$ vertices with $q\geq2$ colors.
Then, setting
	\begin{align*}
	d_{\mathrm{in}}&=\frac{dq\eul^{-\beta}}{q-1+\eul^{-\beta}},&d_{\mathrm{out}}&=\frac{dq}{q-1+\eul^{-\beta}}
	\end{align*}
we generate a random graph $\GG^*$ by connecting any two vertices $v,w$ of the same color $\SIGMA^*(v)=\SIGMA^*(w)$ with probability $d_{\mathrm{in}}/n$ and any two with distinct colors with probability $d_{\mathrm{out}}/n$ independently.
Thus, the average degree of $\GG^*$ converges to $d$ in probability.

Two fundamental statistical problems arise~\cite{Decelle}.
First, given $q,\beta$, for what values of $d$ is it possible to recover a non-trivial approximation of $\SIGMA^*$ given just the random graph $\GG^*$,
i.e., to do better than just a random guess (see \cite{Decelle} for a formal definition)?
A second, more modest task is the {\em detection problem}, which merely asks whether the random graph $\GG^*$ chosen from the stochastic block
can be told model apart from the natural ``null model'', namely the plain \Erdos-\Renyi\ random graph $\GG$.

Decelle, Krzakala, Moore and Zdeborov\'a~\cite{Decelle} predicted that for $d<\dc(q,\beta)$, i.e., 
below the Potts condensation threshold (\ref{eqdcPotts}), it is information-theoretically impossible to solve either problem.
That is, there is {\em no} test or algorithm that can infer with probability tending to $1$ as $n\to\infty$
whether its input was created via the stochastic block model or the \Erdos-\Renyi\ model, let alone obtain a non-trivial approximation to $\SIGMA^*$.
On the other hand, they predicted that there exist {\em efficient} algorithms to solve either problem if $d$ exceeds the Kesten-Stigum bound
	(\ref{eqPottsKS}).
Both of these conjectures were proved in the case $q=2$ by Mossel, Neeman and Sly~\cite{mossel2013proof,Mossel} and Massouli\'e~\cite{massoulie2014community}.
After advances by Bordanve, Lelarge and Massouli\'e~\cite{BLM},
the positive algorithmic conjecture was proved in full by Abbe and Sandon~\cite{abbe2015detection}.
On the negative side, \cite[\Thm~1.3]{CKPZ} shows that no algorithm can infer a non-trivial approximation to $\SIGMA^*$ if $d<\dc(q,\beta)$
for any $q\geq3$, $\beta>0$.
Additionally, Banks, Moore, Neeman, and Netrapalli~\cite{Banks} employed a second moment argument
based on Achlioptas and Naor~\cite{AchNaor} to determine an explicit range of $d$ 
where it is impossible to discern whether the graph was created via
the stochastic block model or the \Erdos-\Renyi\ model.
However, there has remained an extensive gap between their explicit bound and the actual condensation threshold.

Our next result closes this gap and thus settles the conjecture from~\cite{Decelle}.
Recall that the random graph models $\GG,\GG^*$ are {\em mutually contiguous} for $d>0$ if for any sequence $(\cA_n)_{n}$ of events we have
	$$\lim_{n\to\infty}\pr\brk{\GG\in\cA_n}=0\quad\mathrm{iff}\quad\lim_{n\to\infty}\pr\brk{\GG^*\in\cA_n}=0.$$
If so, then clearly no algorithm (efficient or not) can discern with probability $1-o(1)$
whether a given graph stems from the stochastic block model $\GG^*$ or the ``null model'' $\GG$.

\begin{theorem}\label{Thm_contigPotts}
For all $q\geq3$, $\beta>0$, $d<\dc(q,\beta)$ the random graph models $\GG$ and $\GG^*$ are mutually contiguous.
\end{theorem}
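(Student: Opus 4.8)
The plan is to establish contiguity via the standard second-moment route, exploiting the fact that $\boldsymbol G$ and $\boldsymbol G^*$ are already known to be mutually \emph{absolutely continuous in one direction} through the Potts partition function. The key observation is that $\boldsymbol G^*$ is, up to a change of measure, the \emph{planted} version of the Erd\H{o}s--R\'enyi--Potts model: for a graph $G$ with $m$ edges, the Radon--Nikodym derivative $\dd\pr[\boldsymbol G^*=G]/\dd\pr[\boldsymbol G=G]$ is proportional to $Z_{q,\beta}(G)/q^n$ times a deterministic factor depending only on $n$ and $m$. More precisely, writing things out one finds that conditioning on $|E(\boldsymbol G)|=m$, the distribution of $\boldsymbol G^*$ is absolutely continuous with respect to that of $\boldsymbol G$ with density $Z_{q,\beta}(G)/\Erw[Z_{q,\beta}(\boldsymbol G)\mid |E(\boldsymbol G)|=m]$. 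Contiguity in both directions then follows from the classical small-subgraph-conditioning method of Robinson and Wormald, provided one can show that the (rescaled) partition function $Z_{q,\beta}(\boldsymbol G)$ has bounded second moment and that its fluctuations are entirely explained by the densities of short cycles.

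First I would record the exact Radon--Nikodym identity relating $\boldsymbol G^*$ and $\boldsymbol G$ conditioned on the edge count, which is a direct computation from the definitions of $d_{\mathrm{in}},d_{\mathrm{out}}$ and \eqref{eqPottsAntiferro}. Second, I would invoke Theorem~\ref{Thm_PottsNor}: it gives not merely a second-moment bound but the \emph{exact limiting distribution} of $\ln Z_{q,\beta}(\boldsymbol G)$ given $|E(\boldsymbol G)|$, and crucially it identifies the limit $\cK$ as a sum over cycle lengths $l\geq3$ of Poisson contributions $K_l\ln(1+\delta_l)-\tfrac{d^l\delta_l}{2l}$. This is precisely the structure predicted by small-subgraph conditioning: the $K_l$ are the asymptotic counts of $l$-cycles in $\boldsymbol G$, and $1+\delta_l$ is exactly the multiplicative effect a single $l$-cycle has on $Z_{q,\beta}$ relative to the tree-like prediction (here $\delta_l=(q-1)(\tfrac{\eul^{-\beta}-1}{q-1+\eul^{-\beta}})^l$ is the $l$-th power of the second eigenvalue of the broadcast/transition matrix, times its multiplicity). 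Third, I would verify the hypotheses of the Robinson--Wormald theorem in the form adapted to weighted partition functions: (i) the joint convergence of cycle counts to independent Poissons with the stated means $d^l/(2l)$, which is classical for $\boldsymbol G(n,d/n)$; (ii) convergence of the conditional moments $\Erw[Z_{q,\beta}(\boldsymbol G)\mid \text{cycle counts}]$ to the product form; and (iii) the second-moment ratio condition $\Erw[Z_{q,\beta}(\boldsymbol G)^2]/\Erw[Z_{q,\beta}(\boldsymbol G)]^2 \to \prod_{l\geq3}\exp(\tfrac{d^l\delta_l^2}{2l})$, with the right-hand side being exactly the normalizing constant of the limit law in Theorem~\ref{Thm_PottsNor}. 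The matching of (iii) with the cycle-explained variance is what yields contiguity rather than mere boundedness. Finally, contiguity of $\boldsymbol G$ and $\boldsymbol G^*$ follows by combining the small-subgraph-conditioning conclusion (which gives $\boldsymbol G \lhd \boldsymbol G^*$ and the reverse at fixed edge count) with the fact that $|E(\boldsymbol G)|$ and $|E(\boldsymbol G^*)|$ are both asymptotically $\mathrm{Po}$-concentrated around $dn/2$ with the same Gaussian fluctuations, so conditioning on the edge count costs nothing.

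The main obstacle, I expect, is condition (iii): showing that the second moment of $Z_{q,\beta}(\boldsymbol G)$ is not merely bounded but converges to precisely the cycle-predicted value. A crude second-moment bound (of the type used by Banks--Moore--Neeman--Netrapalli via Achlioptas--Naor) is not enough — it would only reprove their weaker range of $d$. What is needed is that for $d<\dc(q,\beta)$, the dominant contribution to $\Erw[Z_{q,\beta}(\boldsymbol G)^2]$ comes from pairs of configurations whose overlap is the uniform distribution $\bar\rho$, i.e.\ that the "diagonal" overlap term is subexponentially negligible. This is exactly where the replica-symmetry input from Theorem~\ref{Thm_PottsOverlap} (equivalently, the characterization of $\dc$ via $\Erw\langle\|\rho_{\SIGMA_1,\SIGMA_2}-\bar\rho\|_{\mathrm{TV}}\rangle$) enters: below $\dc$ the overlap concentrates on $\bar\rho$, which forces the second moment to factorize up to the local cycle corrections. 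Turning this asymptotic statement about the Gibbs overlap into the sharp second-moment asymptotics required by Robinson--Wormald — controlling the polynomial-order corrections around the $\bar\rho$ saddle point — is the delicate step, and I would handle it by a Laplace-type expansion of the overlap sum around $\bar\rho$, using the non-degeneracy of the Hessian there (equivalently the sub-Kesten-Stigum condition $d\,\delta_2 < 1$, which follows from $d<\dc\leq\dKS$) to guarantee the Gaussian integral converges and matches $\prod_{l\geq3}\exp(\tfrac{d^l\delta_l^2}{2l})$ after accounting for the $l=1,2$ terms separately via the $\tfrac{d\delta_1}2+\tfrac{d^2\delta_2}4$ shift appearing in Theorem~\ref{Thm_PottsNor}.
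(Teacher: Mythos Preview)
Your high-level strategy coincides with the paper's: Theorem~\ref{Thm_contigPotts} is deduced from the general contiguity result Theorem~\ref{Cor_contig} via the identification of the stochastic block model with the teacher-student Potts factor graph (Lemma~\ref{Lemma_SBMcontig} and Fact~\ref{Lemma_PottsContig}), and the proof of Theorem~\ref{Cor_contig} does indeed run through the Radon--Nikodym relation, the limiting distribution Theorem~\ref{Thm_SSC} (whose Potts specialisation is Theorem~\ref{Thm_PottsNor}), and a second-moment / small-subgraph-conditioning argument. So the ingredients you list are the right ones.

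There is, however, a genuine gap in your step~(iii). You assert that Gibbs-measure overlap concentration (Theorem~\ref{Thm_PottsOverlap}) forces the raw second moment $\Erw[Z_{q,\beta}(\GG)^2]$ to be dominated by the $\bar\rho$ saddle and hence to converge to the cycle-predicted value. This is false: as the paper notes explicitly in Section~\ref{Sec_Proofs}, for the Potts antiferromagnet the second moment exceeds the square of the first by an \emph{exponential} factor $\exp(\Omega(n))$ for $d$ below but near $\dc$, because the near-diagonal overlap is a competing global maximum of the functional~\eqref{eqnotosmm2} that a local Laplace expansion around $\bar\rho$ simply does not see. Gibbs overlap concentration says that a pair drawn from $\mu_{\GG}^{\otimes 2}$ typically sits near $\bar\rho$; it says nothing about which overlap dominates the unconditional pair-sum $\sum_{\sigma,\tau}\Erw[\psi_{\GG}(\sigma)\psi_{\GG}(\tau)]$. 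The paper's repair is the truncated variable $\cZ(G)=Z(G)\vecone\{\bck{\TV{\rho_{\SIGMA_1,\SIGMA_2}-\bar\rho}}_G\leq\zeta\}$ of~\eqref{eqcZeps}: overlap concentration in the \emph{planted} model $\hat\G$ (Proposition~\ref{prop:belowcond-unif}) yields $\Erw[\cZ]\sim\Erw[Z]$ (Corollary~\ref{cor:belowcond-unif}), while the truncation forces $\Erw[\cZ^2]$ to be supported near $\bar\rho$, so that your Laplace expansion becomes legitimate there (Proposition~\ref{lem:SecondMoment}). The small-subgraph conditioning and the contiguity argument are then run on $\cZ$, not on $Z$.
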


\noindent
This result is tight since~\cite[\Thm~2.6]{CKPZ} implies that $\GG,\GG^*$ fail to be mutually contiguous for $d>\dc(q,\beta)$.%

\Thm~\ref{Thm_contigPotts} deals with the disassortative version of the block model, which corresponds to the Potts antiferromagnet.
There is a contiguity conjecture in~\cite{Decelle} 
for the assortative (viz.\ ferromagnetic) version as well, and Banks, Moore, Neeman, and Netrapalli~\cite{Banks} obtained upper and lower bounds in that case too, but
the techniques of the present work do not apply to ferromagnetic models (see \Sec~\ref{Sec_results4}).

\section{Main results}\label{Sec_results}

\noindent
Factor graph models have emerged as a unifying framework for a multitude of concrete models arising in physics, combinatorics, 
and other disciplines~\cite{MM,RichardsonUrbanke}.
The main results of this paper, which we present in this section, therefore deal with a general class of random factor graph models,
subject merely to a few easy-to-check assumptions.
In \Sec~\ref{Sec_results1} we define this general notion.
Then we state the results for general random factor graph models in \Sec~\ref{Sec_results3}.
Moreover, in \Sec~\ref{Sec_examples} we indicate how the diluted $k$-spin model, the Potts antiferromagnet 
and the stochastic block model fit this framework.
\Sec~\ref{Sec_results4} contains a discussion of related work.

\subsection{Factor graphs}\label{Sec_results1}
The following definition encompasses most important examples of spin systems on graphs~\cite{MM}.

\begin{definition}
Let $\Omega$ be a finite set of {\em spins}, let $k\geq2$ be an integer and let $\Psi$ be a set of functions $\psi:\Omega^k\to(0,2)$ that we call {\em weight functions}.
A {\em $\Psi$-factor graph} $G=(V,F,(\partial a)_{a\in F},(\psi_a)_{a\in F})$ consists of
	\begin{itemize}
	\item a {finite} set $V$ of {\em variable nodes},
	\item a {finite} set $F$ of {\em constraint nodes},
	\item an ordered $k$-tuple $\partial a=(\partial_1a,\ldots,\partial_ka)\in V^k$ for each $a\in F$,
	\item a family $(\psi_a)_{a\in F}\in\Psi^F$ of weight functions.
	\end{itemize}
The {\em Gibbs distribution} of $G$ is the probability distribution on $\Omega^V$ defined by
	$\mu_G(\sigma)=\psi_G(\sigma)/{Z(G)}$ for $\sigma\in\Omega^V$, where
	\begin{align*}
	\psi_G(\sigma)&=\prod_{a\in F}\psi_a(\sigma(\partial_1a),\ldots,\sigma(\partial_ka))\quad\mbox{ and }\quad
	Z(G)=\sum_{\tau\in\Omega^V}\psi_G(\tau).
	\end{align*}
\end{definition}

A $\Psi$-factor graph $G$ induces a bipartite graph with vertex sets $V$ and $F$ where $a\in F$ is adjacent to $\partial_1a,\ldots,\partial_ka$.
We shall therefore use common graph-theoretic terminology and refer to, e.g., the vertices $\partial_1a,\ldots,\partial_ka$ as the {\em neighbors} of $a$.
Furthermore, the length of shortest paths in the bipartite graph induces a metric on the nodes of $G$.

Diluted mean-field models correspond to random factor graphs.
To define them formally, we observe that any weight function $\psi:\Omega^k\to(0,2)$ can be viewed as a point in $|\Omega|^k$-dimensional Euclidean space.
We thus endow the set of all possible weight functions with the $\sigma$-algebra induced by the Borel algebra.
Further, for a weight function $\psi:\Omega^k\to(0,2)$ and a permutation $\theta:\{1,\ldots,k\}\to\{1,\ldots,k\}$ we define
	$\psi^\theta:\Omega^k\to(0,2)$, $(\sigma_1,\ldots,\sigma_k)\mapsto\psi(\sigma_{\theta(1)},\ldots,\sigma_{\theta(k)})$.
{\em Throughout the paper we assume that $\Psi$ is a measurable set of weight functions 
such that for all $\psi\in\Psi$ and all permutations $\theta$ we have $\psi^\theta\in\Psi$.}
Moreover, we fix a probability distribution $P$ on $\Psi$.
We always denote by $\PSI$ an element of $\Psi$ chosen from $P$, and we set
	$$q=|\Omega|\quad\mbox{and}\quad\xi=q^{-k}\sum_{\sigma\in\Omega^k}\Erw[\PSI(\sigma)].$$
{\em Furthermore, we always assume that $P$ is such that the following three inequalities hold:}
			\begin{align}\label{eqBounded}
			\Erw[\ln^8(1-\max\{|1-\PSI(\tau)|:\tau\in\Omega^k\})]&<\infty,&
				\Erw[\max\{\PSI(\tau)^{-4}:\tau\in\Omega^k\}]&<\infty,&
				\sum_{\tau\in\Omega^k}\Erw[(\PSI(\tau)-\xi)^2]&>0.
			\end{align}
The first two inequalities bound the `tails' of $\PSI(\tau)$ for $\tau\in\Omega^k$.
The third one provides that $\PSI$ is non-constant.

With these conventions in mind suppose that $n,m>0$ are integers.
Then we define a random $\Psi$-factor graph $\G(n,m,P)$ as follows.
The set of variable nodes is $V_n=\{x_1,\ldots,x_n\}$, the set of constraint nodes is $F_m=\{a_1,\ldots,a_m\}$
and the neighborhoods $\partial a_i\in V_n^k$ are chosen uniformly and independently for $i=1,\ldots,m$.
Furthermore, the weight functions $\psi_{a_i}\in\Psi$ are chosen  from the distribution $P$ mutually independently and independently of the
neighborhoods $(\partial a_i)_{i=1,\ldots,m}$.
Where $P$ is apparent we just write $\G(n,m)$ rather than $\G(n,m,P)$.

Since we aim to study models on sparse random graphs such as the Potts model on the \Erdos-\Renyi\ graph we are concerned with the case that $m=O(n)$ as $n\to\infty$.
To express this elegantly and in order to be able to take the thermodynamic limit $n\to\infty$ easily,
	we fix a real $d>0$ that does not depend on $n$, let $\vec m=\vm_d(n)$ have distribution $\Po(dn/k)$
	and write $\G=\G(n,\vm,P)$ for brevity.
Then the expected degree of a variable node is equal to $d$.

While in $\G$ the neighborhoods $\partial a_i\in V_n^k$ are chosen uniformly, in order to accommodate certain applications such as the Potts model on the \Erdos-\Renyi\ graph we need to impose two conditions.
First, that for any constraint node $a_i$ the $k$ neighboring variable nodes $\partial_1a_i,\ldots,\partial_ka_i$ are distinct.
Second, that $\{\partial_1a_i,\ldots,\partial_ka_i\}\neq\{\partial_1a_j,\ldots,\partial_ka_j\}$ for all $i\neq j$.
Let us denote the event that these two conditions hold by $\fS$.
Combinatorially $\fS$ is the event that the hypergraph whose vertices are the variable nodes and whose edges are the neighborhoods of the contraint nodes
is simple and $k$-uniform.
We are going to state all results both for the unconstraint $\G$ and conditional on $\fS$.

Apart from the condition (\ref{eqBounded}), which we assume tacitly,
the main results require (some of) the following four assumptions.
Crucially, they {\em only} refer to the distribution $P$ on the set $\Psi$ of weight functions.
	\begin{description}
	\item[SYM] For all $i\in\{1,\ldots,k\}$, $\omega\in\Omega$ and $\psi\in\Psi$ we have 
		\begin{equation}\label{eqSYM}
		\sum_{\tau\in\Omega^k}\vecone\{\tau_i=\omega\}\psi(\tau)=q^{k-1}\xi
		\end{equation}
		and for every permutation $\theta$ and every measurable $\cA\subset\Psi$ we have $P(\cA)=P(\{\psi^\theta:\psi\in\cA\})$.
	\item[BAL]   The function
		$$\phi:\mu\in\cP(\Omega)\mapsto\sum_{\tau\in\Omega^k}\Erw[\PSI(\tau)]\prod_{i=1}^k\mu(\tau_i)$$
	is concave and attains its maximum at the uniform distribution on $\Omega$.
\item[MIN]
	Let $\cR(\Omega)$ be the set of all probability distribution $\rho=(\rho(s,t))_{s,t\in\Omega}$ on $\Omega\times\Omega$ such that
		$\sum_{s\in \Omega}\rho(s,t)=\sum_{s\in \Omega}\rho(t,s)=q^{-1}$ for all $t\in\Omega$.
	The function 
		$$\rho \in \cR(\Omega) \mapsto\sum_{\sigma, \tau\in \Omega^k}\Erw[\PSI(\sigma)\PSI(\tau)]  \prod_{i=1}^k\rho(\sigma_i,\tau_i)$$
	has the uniform distribution on $\Omega\times\Omega$ as its unique global minimizer.
\item [POS] 	For all $\pi,\pi'\in\cP_*^2(\Omega)$ the following is true. With $\RHO_1,\RHO_2,\ldots$ chosen from $\pi$,
			$\RHO_1',\RHO_2',\ldots$ chosen from $\pi'$ and $\PSI\in\Psi$ chosen from $P$, all mutually independent,  we have
	\begin{align*}
	\Erw\left[\Lambda\left(\sum_{\tau\in\Omega^k}\PSI(\tau)\prod_{i=1}^ k\RHO_i(\tau_i)\right)+(k-1)\Lambda\left(\sum_{\tau\in\Omega^k}\PSI(\tau)	\prod_{i=1}^k \RHO_i'(\tau_i)\right)
	-k\Lambda\left(\sum_{\tau\in\Omega^k}\PSI(\tau)\RHO_1(\tau_1)\prod_{i=2}^k\RHO_i'(\tau_i)\right)\right]\ge 0.
	\end{align*}
\end{description}

Conditions very similar to {\bf SYM}, {\bf BAL} and {\bf POS} appeared in~\cite{CKPZ} as well.
{\bf SYM} is a symmetry condition.%
	In the language of the cavity method~\cite{MM}, the condition ensures that the unique Belief Propagation fixed point
	on any acyclic $\Psi$-factor graph is such that all messages are identical to the uniform distribution on $\Omega$ (but we will not need this fact explicitly).%
			\footnote{The condition \eqref{eqSYM} emerged out of a discussion with Guilhem Semerjian.}
Condition {\bf BAL} is going to guarantee that for small enough values of $d$ the Gibbs measure $\mu_{\G}$ is typically
	concentrated on ``balanced'' $\sigma\in\Omega^{V_n}$, i.e., $|\sigma^{-1}(\omega)|\sim n/q$ for all $\omega\in\Omega$.
Further, {\bf MIN} is a technical condition that we need in order to study the overlap of two independent Gibbs samples.
Finally, {\bf POS} is required so that we can apply certain results from~\cite{CKPZ}.
As we shall see in \Sec~\ref{Sec_examples}, the conditions are easily verified in the models  from \Sec~\ref{Sec_intro} and several others.

\subsection{Results}\label{Sec_results3}
We proceed to state the results on the condensation phase transition,
the limiting distribution of the free energy, the overlap, the reconstruction and the detection thresholds for random factor graph models.

\subsubsection{The condensation phase transition}
The following theorem pins down the condensation phase transition in random factor graph models precisely
in terms of a stochastic optimization problem that encodes the ``1RSB cavity equations with Parisi parameter $1$'' from the cavity method~\cite{MM}.

\begin{theorem}\label{Thm_cond}
Assume that $P$ satisfies {\bf SYM}, {\bf BAL} and {\bf POS} and let $d>0$.
With $\vec\gamma$ a $\Po(d)$-random variable, $\RHO_1^{\pi},\RHO_2^{\pi},\ldots$
chosen from $\pi\in\cPcent(\Omega)$ and $\PSI_1,\PSI_2,\ldots\in\Psi$ chosen from $P$, all mutually independent, let
	\begin{align}\label{eqMyBethe}
	\cB(d,P,\pi)&=
	\Erw\brk{\frac{1}{q\xi^{\vec\gamma}}
			\Lambda\bc{\sum_{\sigma\in\Omega}\prod_{i=1}^{\vec\gamma}\sum_{\tau\in\Omega^k}\vecone\{\tau_k=\sigma\}\PSI_i(\tau)\prod_{j=1}^{k-1}\RHO_{ki+j}^{\pi}(\tau_j)}
	-\frac{d(k-1)}{k\xi}\Lambda\bc{\sum_{\tau\in\Omega^k}\PSI_1(\tau)\prod_{j=1}^k\RHO_j^{\pi}(\tau_j)}},\\
\dcond &=  \inf\left\{d>0\,:\, \sup_{\pi\in\Pomast} \cB(d,P,\pi) > \ln q + \frac{d}{k}\ln \xi\right\}.\label{eq:dcond}
\end{align}
Then {$1/(k-1)\leq\dcond<\infty$} and
	\begin{align*}
	\lim_{n\to\infty}\frac1n\Erw[\ln Z(\G)]&=\lim_{n\to\infty}\frac1n\Erw[\ln Z(\G)|\fS]=\ln q+\frac dk\ln\xi&\mbox{if }d<\dc,\\
	\limsup_{n\to\infty}\frac1n\Erw[\ln Z(\G)]&=\limsup_{n\to\infty}\frac1n\Erw[\ln Z(\G)|\fS]<\ln q+\frac dk\ln\xi&\mbox{if }d>\dc.
	\end{align*}
\end{theorem}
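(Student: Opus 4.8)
The plan is to deduce the theorem from the machinery developed in~\cite{CKPZ}, adapting its hypotheses to the slightly broader class of models considered here. First I would observe that the first-moment bound $\frac1n\Erw[\ln Z(\G)]\le\ln q+\frac dk\ln\xi+o(1)$ follows directly from Jensen's inequality together with condition {\bf BAL}: since $\phi$ is concave and maximised at the uniform distribution, the dominant contribution to $\Erw[Z(\G)]$ comes from balanced configurations, and a routine Laplace/entropy computation yields the claimed exponent. The real content is the matching lower bound for $d<\dcond$ and the strict gap for $d>\dcond$. For the lower bound I would invoke the general second-moment / small-subgraph-conditioning scheme: one shows that the ``planted model'' $\planted$ (obtained by first sampling a balanced $\SIGMA^*$ and then the constraints reweighted by $\psi$) is contiguous with respect to $\G$ conditioned on $\fS$ precisely when $\sup_{\pi\in\Pomast}\cB(d,P,\pi)=\ln q+\frac dk\ln\xi$, which by definition of $\dcond$ is the regime $d<\dcond$. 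The functional $\cB(d,P,\pi)$ is exactly the Bethe free energy evaluated at the 1RSB distribution $\pi$ with Parisi parameter~$1$, so that its supremum over $\Pomast$ equals the ``Bethe prediction'' for the free energy; the equality of the variational problem with $\ln q+\frac dk\ln\xi$ in the subcritical regime is what powers the second-moment argument. Conditions {\bf SYM} and {\bf POS} are imported precisely so that the relevant theorems of~\cite{CKPZ} (the interpolation bound and the planted-model analysis) apply verbatim; {\bf SYM} forces the all-uniform Belief Propagation fixed point and hence the clean value $\ln q+\frac dk\ln\xi$ of the annealed bound, while {\bf POS} is the positivity/concavity input needed for the Guerra-type interpolation that gives $\frac1n\Erw[\ln Z]\le\sup_\pi\cB(d,P,\pi)+o(1)$ unconditionally.

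Concretely, the steps in order would be: (i) transfer between $\G$ and $\G$ given $\fS$, showing $\frac1n\Erw[\ln Z(\G)]-\frac1n\Erw[\ln Z(\G)\mid\fS]\to0$, using that $\pr[\fS]=\Theta(1)$ and a standard edge-resampling argument; (ii) establish the upper bound $\limsup_n\frac1n\Erw[\ln Z(\G)]\le\sup_{\pi\in\Pomast}\cB(d,P,\pi)$ via the interpolation method of~\cite{CKPZ}, which needs {\bf SYM} and {\bf POS}; (iii) verify that for $d<\dcond$ one has $\sup_\pi\cB(d,P,\pi)=\ln q+\frac dk\ln\xi$, which combined with (ii) and the first-moment lower bound pins down the limit; (iv) for $d>\dcond$, where by definition $\sup_\pi\cB>\ln q+\frac dk\ln\xi$, deduce the strict inequality $\limsup_n\frac1n\Erw[\ln Z(\G)]<\ln q+\frac dk\ln\xi$ — here one uses the complementary bound that the Bethe functional also furnishes a \emph{lower} bound on the free energy (via planting a configuration distributed according to a near-optimal $\pi$ and a conditional-expectation computation), so the free energy strictly exceeds the annealed value, and then a separate argument rules out the annealed value itself being the limit; (v) finally, the bounds $1/(k-1)\le\dcond<\infty$: the upper bound $\dcond<\infty$ comes from exhibiting, for large $d$, an explicit $\pi$ (e.g.\ supported near the planted-colouring marginal) with $\cB(d,P,\pi)>\ln q+\frac dk\ln\xi$ using {\bf MIN}-type non-degeneracy of the overlap functional together with the third inequality in~\eqref{eqBounded}, while the lower bound $1/(k-1)\le\dcond$ follows because below $d=1/(k-1)$ the factor graph $\G$ is, with probability bounded away from zero, a forest on each component of bounded size, forcing $\cB(d,P,\pi)=\ln q+\frac dk\ln\xi$ for \emph{every} $\pi$ (the acyclic Bethe identity).

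I expect step (iv) to be the main obstacle. Proving the \emph{strict} inequality $\limsup_n\frac1n\Erw[\ln Z(\G)]<\ln q+\frac dk\ln\xi$ for every $d>\dcond$ — not merely $\le$ — requires quantitative control: one must show that the gap $\sup_\pi\cB(d,P,\pi)-(\ln q+\frac dk\ln\xi)$, which is positive by definition of $\dcond$, actually propagates to a genuine lower bound on the free energy that beats the annealed value, and this in turn rests on the delicate fact that the Bethe functional is simultaneously an asymptotic upper \emph{and} lower bound for the free energy. The lower-bound direction is the technically heaviest, as it needs the full strength of the planted-model/Aizenman--Sims--Starr-type argument of~\cite{CKPZ} together with monotonicity of $d\mapsto\sup_\pi\cB(d,P,\pi)-(\ln q+\frac dk\ln\xi)$ in $d$ to rule out pathological non-monotone behaviour. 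A secondary difficulty is checking that the broadening of the model class relative to~\cite{CKPZ} — in particular the weaker integrability hypotheses in~\eqref{eqBounded} with eighth and fourth moments — still suffices for the concentration estimates (Azuma/Lipschitz bounds on $\ln Z$) that underlie every step; this I expect to be routine but tedious.
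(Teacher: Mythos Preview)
Your proposal has two substantive gaps. For $d<\dc$ you plan to obtain the lower bound on $\frac1n\Erw[\ln Z(\G)]$ via the second-moment/small-subgraph-conditioning/contiguity route, but that machinery---in particular the overlap concentration of \Prop~\ref{prop:belowcond-unif} needed to truncate the second moment, and hence \Thm~\ref{Thm_SSC} and \Thm~\ref{Cor_contig}---requires condition {\bf MIN}, which \Thm~\ref{Thm_cond} does not assume. The paper instead argues directly from \Thm~\ref{Thm_plantedFreeEnergy} alone. Assuming for contradiction that $\frac1n\Erw[\ln Z(\G(n,m))]<\ln q+\frac dk\ln\xi-2\delta$ along some $m\in\cM(d)$, set $Z'(G)=Z(G)\vecone\{n^{-1}\ln Z(G)\le\ln q+\frac dk\ln\xi+\theta,\ \cO(G)\le Cn\}$. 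Since for $d<\dc$ the planted free energy equals $\ln q+\frac dk\ln\xi$, concentration plus the reweighting relation~\eqref{eq:NishimoriG} give $\Erw[Z']\sim\Erw[Z]$, while the truncation bounds $\Erw[(Z')^2]$ pointwise by $\exp(2n(\ln q+\frac dk\ln\xi+\theta))$. Paley--Zygmund then yields $\pr[\ln Z(\G)\ge n(\ln q+\frac dk\ln\xi-\theta)]\ge\exp(-2\theta n+o(n))$, which for small $\theta$ contradicts the Azuma bound around the assumed small mean. (Your ``first-moment lower bound'' in step~(iii) does not exist: Jensen goes the other way.)

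For $d>\dc$ your sketch conflates the null and planted models; the same confusion underlies step~(ii), since interpolation in~\cite{CKPZ} bounds the \emph{planted} free energy, whereas the null upper bound is just Jensen. The Bethe lower bound you invoke yields $\frac1n\Erw[\ln Z(\hat\G)]=\sup_\pi\cB>\ln q+\frac dk\ln\xi$, but the theorem is about $\G$, and your unspecified ``separate argument'' linking the two is precisely the missing step. The paper's mechanism is a change of measure combined with constraint thinning. Choose $d'<d$ with $\sup_\pi\cB(d',P,\pi)>\ln q+\frac{d'}k\ln\xi+\delta$; concentration in the teacher--student model (\Lem~\ref{Lemma_501}) gives an exponentially small bound on $\pr[n^{-1}\ln Z(\G^*)\le\ln q+\frac{d'}k\ln\xi+\delta/2,\ \cO\le Cn]$ at degree $d'$, which transfers to $\hat\G$ via Nishimori. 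If one now assumes $\frac1n\Erw[\ln Z(\G)]\ge\ln q+\frac dk\ln\xi$ at degree $d$, Azuma forces $\G$ into $\{n^{-1}\ln Z\ge\ln q+\frac dk\ln\xi-\chi,\ \cO\le Cn\}$ with probability $1-\exp(-\Omega(n))$; intersecting with the event that random deletion of constraints down to degree $d'$ lands in the ``small $Z$'' event (which by Markov at $d'$ holds for $\G$ whp) and rewriting via~\eqref{eq:NishimoriG} produces an event of probability $1-o(1)$ under $\G$ but at most $\exp(-2\chi n)$ under $\hat\G$, a contradiction for small $\chi$. Finally, in~(v) it is not true that $\cB(d,P,\pi)=\ln q+\frac dk\ln\xi$ for \emph{every} $\pi$ below $d=1/(k-1)$; the paper instead observes that the planted graph is a forest whp there, so its free energy equals $\ln q+\frac dk\ln\xi$, and then \Thm~\ref{Thm_plantedFreeEnergy} identifies this with $\sup_\pi\cB$.
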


\noindent
\Thm~\ref{Thm_cond} generalizes~\cite[\Thm~2.7]{CKPZ}, which requires that the set $\Psi$ of weight functions be finite.

Admittedly the formula for $\dc$ provided by \Thm~\ref{Thm_cond} is neither very simple nor very explicit, but we are not aware of any reason why it ought to be.
Yet there is a natural generalization of the Kesten-Stigum bound for the Potts model from (\ref{eqPottsKS}) that provides an easy-to-compute upper bound on $\dc$
in terms of the spectrum of a certain linear operator.
The operator is constructed as follows.
For $\psi\in\Psi$ let  $\Phi_{\psi}\in\RR^{\Omega\times\Omega}$ be the matrix with entries
	\begin{equation}\label{eqPhiMatrices}
	 \Phi_{\psi} (\omega,\omega') =q^{1-k}\xi^{-1}\sum_{\tau\in\Omega^k}\vecone\{\tau_1=\omega,\tau_{2}=\omega'\}
 		\psi(\tau)\qquad\qquad\qquad (\omega,\omega' \in \Omega)
	\end{equation}
and let $\Xi=\Xi_P$ be the linear operator on the $q^2$-dimensional space $\RR^\Omega\tensor\RR^\Omega$ defined by
	\begin{align}\label{eqXi}
	\Xi&=\Xi_P=\Erw[\Phi_{\PSI}\tensor\Phi_{\PSI}].
	\end{align}
Further, with $\vecone$ denoting the vector with all entries equal to one, let
	\begin{equation}\label{eqSpaceE}
	\cE=\cbc{z\in\RR^q\tensor\RR^q:\forall y\in\RR^q:\scal{z}{\vecone\tensor y}=\scal{z}{y\tensor\vecone}=0}.
	\end{equation}
Finally, we introduce
	\begin{equation}\label{eqGenKS}
	\dKS=\bc{(k-1)\max_{x\in\cE:\|x\|=1}\scal{\Xi x}x}^{-1},
	\end{equation}
with the convention that $\dKS=\infty$ if $\max_{x\in\cE:\|x\|=1}\scal{\Xi x}x=0$.

\begin{theorem}\label{Thm_KS}
If $P$ satisfies {\bf SYM} and {\bf BAL}, then  $\dc\leq\dKS$.
\end{theorem}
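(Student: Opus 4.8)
The plan is to show that $\dc \le \dKS$ by demonstrating that for $d > \dKS$ the ``first-moment'' / Bethe functional bound on the free energy is violated, i.e.\ $\sup_{\pi\in\Pomast}\cB(d,P,\pi) > \ln q + \frac dk\ln\xi$, so that $d$ exceeds $\dcond$ by definition \eqref{eq:dcond}. The natural way to witness this is to perturb the ``trivial'' distribution $\pi_0 = \delta_{\bar u}$, where $\bar u\in\cP(\Omega)$ is the uniform distribution (so $\pi_0\in\Pomast$), and to compute the second-order expansion of $\pi\mapsto\cB(d,P,\pi)$ around $\pi_0$ in the direction of a small perturbation. Concretely, I would consider $\pi_\eps$ supported near $\bar u$ of the form $\RHO^\pi = \bar u + \eps\,\vec\zeta$ where $\vec\zeta$ is a random vector in $\cE$-like directions (mean zero, coordinates summing to zero), scaled so $\pi_\eps\in\Pomast$. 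At $\eps = 0$ one has $\cB(d,P,\pi_0) = \ln q + \frac dk\ln\xi$ exactly — this is the replica-symmetric value and should follow from {\bf SYM} (which forces all Belief Propagation messages to be uniform on acyclic graphs) together with a direct evaluation of the two $\Lambda$-terms. So the whole question is the sign of the leading nonvanishing term in $\eps$.

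The key computation is to expand $\Lambda(1+x) = x + \tfrac12 x^2 + O(x^3)$ inside both expectations in \eqref{eqMyBethe}. Writing each internal sum $\sum_{\tau}\vecone\{\tau_k=\sigma\}\PSI_i(\tau)\prod_j\RHO^\pi_{ki+j}(\tau_j)$ as $q^{1-k}\xi(1 + \eps\,(\text{linear in }\vec\zeta) + O(\eps^2))$, and similarly for the second term, the linear-in-$\eps$ contributions vanish: the first by {\bf SYM} / the mean-zero constraint on the perturbation, the cross-terms by the barycenter condition defining $\Pomast$. The quadratic term is where the operator $\Xi$ enters. Because $\Lambda''(1)=1$ and the product structure $\prod_{i=1}^{\vec\gamma}$ over $\Po(d)$ many constraints generates, after taking $\Erw[\,\cdot\,]$ over $\vec\gamma$, a ``resolvent''-type sum, the second variation of the first term works out to something like $\frac{d}{2}\,\sum_{\ell\ge1}(?)$ — more precisely, after the $\Po(d)$ average one gets a quadratic form built from $\Phi_{\PSI}\tensor\Phi_{\PSI} = \Xi$ acting on the perturbation vector in $\cE$ (the space $\cE$ is exactly the set of directions compatible with the $\Pomast$ constraint, by \eqref{eqSpaceE}). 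Matching this against the second variation of the $-\frac{d(k-1)}{k\xi}\Lambda(\cdots)$ term, the net second variation of $\cB$ in direction $x\in\cE$, $\|x\|=1$, comes out proportional to
\[
 (k-1)\,\scal{\Xi x}{x}\;-\;\frac1d,
\]
up to a positive constant. Hence if $d > \dKS = \bigl((k-1)\max_{x\in\cE,\|x\|=1}\scal{\Xi x}x\bigr)^{-1}$, choosing $x$ to be a maximizing eigenvector makes this positive, so $\cB(d,P,\pi_\eps) > \ln q + \frac dk\ln\xi$ for small $\eps$, giving $d \ge \dcond$. Taking the infimum over such $d$ yields $\dcond \le \dKS$.

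The main obstacle, I expect, is bookkeeping the second-order expansion cleanly: one must (i) verify that $\pi_\eps$ can genuinely be realized as an element of $\Pomast$ with the right covariance structure (a small random perturbation of $\bar u$, so that $\RHO^\pi$ stays in $\cP(\Omega)$ and the barycenter stays uniform — this needs $\eps$ small and $\vec\zeta$ bounded), (ii) justify interchanging the $\Po(d)$-expectation with the Taylor expansion and controlling the $O(\eps^3)$ remainder uniformly (here {\bf BAL} and the integrability assumptions \eqref{eqBounded} on $\PSI$ are what keep all the relevant moments finite), and (iii) correctly identify that the quadratic form that emerges is precisely $\scal{\Xi x}{x}$ with $\Xi = \Erw[\Phi_{\PSI}\tensor\Phi_{\PSI}]$ and the optimization restricted to $\cE$ — the restriction to $\cE$ being forced by the two linear constraints hidden in the definition of $\Pomast$ (fixed uniform barycenter) and the $\vecone$-directions being neutral by {\bf SYM}. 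Once the algebra is organized so that the tensor-square structure of two independent replicas $\SIGMA_1,\SIGMA_2$ reappears as $\Phi_{\PSI}\tensor\Phi_{\PSI}$, the conclusion is immediate; everything before that is careful but routine Taylor expansion. One should also double-check the edge case $\max_{x\in\cE}\scal{\Xi x}x = 0$, where $\dKS=\infty$ and the statement is vacuous.
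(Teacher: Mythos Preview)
Your overall strategy---perturb the trivial distribution $\pi_0=\delta_{\bar u}$, expand $\cB(d,P,\pi_\eps)$, and show the leading nonvanishing term is positive for $d>\dKS$---is exactly the paper's approach (carried out as Proposition~\ref{prop_KS}, from which Theorem~\ref{Thm_KS} follows in one line). However, there is a genuine gap: the second-order term in $\eps$ vanishes identically, so the expansion must be carried to \emph{fourth} order.

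Concretely, for any $\pi\in\Pomast$ the functions $F$ sitting inside $\Lambda(\cdot)$ in \eqref{eqMyBethe} are multilinear in each argument $\rho_r$ separately (so $\partial^2 F/\partial\rho_r(c_1)\partial\rho_r(c_2)=0$) and, by {\bf SYM}, satisfy $DF(\bar\rho,\dots,\bar\rho)=C_0\vecone$. Combined with the barycenter constraint $\Erw[\RHO]=\bar u$ and independence of the $\RHO_i$, this forces not only the first-order but also the second- and third-order contributions to vanish: at second order the only surviving piece is $\Lambda''(F)\,C_0^2\,\Erw\bigl[(\sum_c(\RHO(c)-1/q))^2\bigr]=0$ since $\sum_c\RHO(c)=1$. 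The operator $\Xi$ first appears at fourth order, where the paper obtains (Lemma~\ref{prop:TaylorBdpi})
\[
\cB(d,P,\pi_\eps)-\cB(d,P,\pi_0)=\frac{d(k-1)}{12}\bigl(d(k-1)\hat\lambda^2-\hat\lambda\bigr)\eps^4+O(\eps^5),
\]
with $\hat\lambda=\scal{\Xi\DeltaM}{\DeltaM}$; the sign condition $d(k-1)\hat\lambda>1$ is indeed your final inequality. Reaching it requires two ingredients you have not anticipated: (i) a fourth-order Fa\`a di Bruno computation (Lemma~\ref{lem:DerivsVanish}), and (ii) a structural fact about the maximizing eigenvector (Lemma~\ref{Cor:MaxOpt}), namely that one may take $\DeltaM=\sum_i\bar\lambda_i\,u_i\otimes u_i\in\cE$ with $\bar\lambda_i\ge0$, so that a ``square root'' $\eta=\sum_i\sqrt{\bar\lambda_i}\,u_i\otimes u_i$ exists and can be used to build $\pi_\eps$ with covariance $\Erw[(\RHO(c_1)-q^{-1})(\RHO(c_2)-q^{-1})]=q^{-1}\eps^2\scal{\DeltaM}{e_{c_1}\otimes e_{c_2}}$. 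Your ``routine bookkeeping'' is therefore the heart of the argument, and substantially more delicate than a second-variation calculation.
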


\noindent
We shall see in \Sec~\ref{Sec_Proofs} that $\Xi$ is related to the ``broadcasting matrix'' of a suitable Galton-Watson tree, which justifies
referring to $\dKS$ as a generalized version of the classical {\em Kesten-Stigum bound} from~\cite{KSBoundBroadcasting}.
While the Kesten-Stigum bound is not generally tight, it plays a major conceptual role, as 
will emerge in due course.

\subsubsection{The free energy}
\Thm~\ref{Thm_cond} easily implies that $n^{-1}\ln Z(\G)$ converges to $\ln q+\frac dk\ln\xi$ in probability if $d<\dc$.
Yet due to the scaling factor of $1/n$ this is but a rough first order approximation.
The next theorem, arguably the principal achievement of this paper, yields the exact limiting distribution of the {\em unscaled} free energy $\ln Z(\G)$ in the entire replica symmetric phase.
Recalling (\ref{eqPhiMatrices}), we introduce the $\Omega\times\Omega$-matrix
	\begin{equation}\label{eqPhi}
	\Phi=\Phi_P=\Erw[ \Phi_{\PSI}].
	\end{equation}
Also recall that $\vm\stacksign{d}=\Po(dn/k)$ denotes the number of constraint nodes of $\G$ and let $\eig(\Phi)$ be the spectrum of $\Phi$.

\begin{theorem}\label{Thm_SSC}
Assume that $P$ satisfies {\bf SYM}, {\bf BAL}, {\bf POS} and {\bf MIN} and that  $0<d<\dcond$. 
Let $(K_{l})_{l\geq1}$ be a family of Poisson variables 
with means $\Erw[K_{l}]=\frac1{2l}(d(k-1))^l$ and let $(\PSI_{l,i,j})_{l,i,j\geq1}$ be a sequence of samples from $P$, all mutually independent.
Then the random variable
	\begin{align}\label{eqcK}
	\cK=\sum_{l=1}^\infty\brk{\frac{(d(k-1))^l}{2l}
	\bc{1-\Tr(\Phi^l)}+\sum_{i=1}^{K_{l}}
		\ln{\Tr\prod_{j=1}^l\Phi_{\PSI_{l,i,j}}}}
	\end{align}
satisfies $\Erw|\cK|<\infty$ and
	\begin{align}\label{eqThm_SSC}
	\ln Z(\G)-\bc{n+\frac12}\ln q-\vec m\ln(\xi)+{\frac12}\sum_{\lambda\in\eig(\Phi)\setminus\cbc1}\ln(1-d(k-1)\lambda)
		&\quad\stacksign{$n\to\infty$}\longrightarrow\quad\cK
	\end{align}
in distribution.
Further, given $\fS$ the random variable on the left hand side of (\ref{eqThm_SSC}) converges in distribution to
	\begin{align*}
	\cK'&=\frac{d(k-1)(1-\Tr(\Phi))}{2}+\vecone\{k=2\}\frac{d^2(1-\Tr(\Phi^2))}4+
		\sum_{l=2+\vecone\{k=2\}}^\infty\brk{\frac{(d(k-1))^l}{2l}\bc{1-\Tr(\Phi^l)}+\sum_{i=1}^{K_{l}}\ln{\Tr\prod_{j=1}^l\Phi_{\PSI_{l,i,j}}}},
	\end{align*}
which also satisfies $\Erw|\cK'|<\infty$.
\end{theorem}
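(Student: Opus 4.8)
The plan is to identify $\ln Z(\G)$ with a sum over connected subgraphs of the bipartite factor graph and to show that, in the replica symmetric phase, only the ``small'' (bounded-size, unicyclic or tree-plus-cycle) contributions survive in the limit, with everything else washing out. Concretely, I would first pass to a convenient normalisation: dividing out the ``annealed'' leading term $(n+\tfrac12)\ln q+\vec m\ln\xi$ corresponds (via {\bf SYM}) to measuring $Z(\G)$ against the Gibbs measure of the empty factor graph, and one checks that the correction $\exp(\tfrac12\sum_{\lambda\neq1}\ln(1-d(k-1)\lambda))$ is exactly the Gaussian-type normalising constant one gets from the spectrum of the broadcasting matrix $\Phi$, which (as promised in \Sec~\ref{Sec_Proofs}) governs the fluctuation of balanced colourings. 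The natural tool here is the small-subgraph-conditioning method of Robinson--Wormald / Janson: one writes $Z(\G)$ (or rather the normalised quantity) as a sum indexed by the numbers of short cycles through constraint nodes, shows these cycle counts converge jointly to independent Poisson variables $K_l$ with the stated means $\tfrac1{2l}(d(k-1))^l$, and shows the conditional expectation of $Z(\G)$ given the cycle counts converges to $\exp(\cK)$.

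The key steps, in order, would be: (i) establish a first- and second-moment computation for the \emph{balanced} partition function $Z_{\mathrm{bal}}(\G)$ — the restriction of $Z(\G)$ to colourings with all colour classes of size $n/q+O(\sqrt n)$ — using {\bf BAL} and {\bf MIN} to control the quadratic form in the overlap so that the dominant contribution to $\Erw[Z_{\mathrm{bal}}^2]$ comes from pairs with overlap near $\bar\rho$; this is where $\cP^2_*$-optimality and condition {\bf MIN}'s uniqueness of the minimiser enter, and it is precisely why $d<\dcond$ is needed (beyond $\dcond$ the second moment blows up). (ii) Show $Z(\G)=(1+o(1))Z_{\mathrm{bal}}(\G)$ w.h.p., again invoking {\bf BAL} together with the already-known free-energy upper bound from \Thm~\ref{Thm_cond} to rule out unbalanced colourings; the results from \cite{CKPZ} that {\bf POS} is tailored to should hand this over. (iii) Compute the ``cycle'' contributions: the effect on $Z_{\mathrm{bal}}(\G)$ of a short cycle passing through constraint nodes with weight functions $\PSI_{l,i,1},\dots,\PSI_{l,i,l}$ is a multiplicative factor $\Tr\prod_{j=1}^l\Phi_{\PSI_{l,i,j}}$, with the compensating deterministic term $\exp(-\tfrac{(d(k-1))^l}{2l}(\Tr(\Phi^l)-1))$ coming from the expected such factor over a Poisson number of cycles; summability of $\cK$ follows from $\|\Phi\|\le 1$ on $\cE$ (a consequence of {\bf SYM}/{\bf BAL}) so that $\Tr(\Phi^l)\to1$ geometrically and a Borel--Cantelli / $L^1$ argument closes it. (iv) Assemble: by Janson's theorem the normalised $Z(\G)$ converges in distribution to $\exp(\cK)$, hence $\ln Z(\G)$ minus the stated centering converges to $\cK$; conditioning on $\fS$ just removes the $l=1$ (and, when $k=2$, $l=2$) cycle terms and replaces them by their means, which is exactly the discrepancy between $\cK$ and $\cK'$.

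The main obstacle I expect is step (i)–(ii) in the generality of an \emph{arbitrary} measurable weight-function distribution $P$ satisfying only the moment bounds \eqref{eqBounded}: earlier second-moment arguments (including \cite{CKPZ}) assume $\Psi$ \emph{finite}, and extending the variance estimate for $Z_{\mathrm{bal}}$ to continuous $\Psi$ — in particular the Laplace-method / large-deviations analysis of the overlap integral, and the truncation arguments needed because $\PSI(\tau)$ is only assumed to have four negative moments and eight logarithmic moments — requires genuine care, e.g. a careful interchange of the $n\to\infty$ limit with the overlap integral and a coupling between the true model and a truncated finite-alphabet surrogate. A secondary technical point is verifying that the error terms in the small-subgraph-conditioning framework (the ratio $\Erw[Z^2]/\Erw[Z]^2$ converging to the infinite product $\prod_l \Erw[\text{cycle factor}]^{-1}\exp(\dots)$) really matches the product form demanded by Janson's criterion, which amounts to showing that the only source of variance is the short cycles — this is plausible precisely because $d<\dcond$, but making it rigorous for general $P$ is the crux.
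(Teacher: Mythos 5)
There is a genuine gap at the heart of your steps (i)--(ii), and it is precisely the gap the paper's whole strategy is designed to circumvent. You propose to bound $\Erw[Z_{\mathrm{bal}}^2]$ by showing (via \textbf{BAL} and \textbf{MIN}) that the dominant overlap is $\bar\rho$ for all $d<\dcond$, and you conflate ``$d<\dcond$'' with ``the second moment does not blow up.'' This is false: the paper emphasises in \S\ref{Sec_Proofs} that for the Potts antiferromagnet the second moment of $Z_{\mathrm{bal}}$ already exceeds $\Erw[Z_{\mathrm{bal}}]^2$ by a factor $\exp(\Omega(n))$ for $d$ strictly below $\dcond$, because the optimisation \eqref{eqnotosmm2} has a spurious global maximiser $\rho_\star$ far from $\bar\rho$ even though both colourings are individually balanced. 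Condition \textbf{MIN} says $\bar\rho$ is the unique \emph{minimiser} of the bilinear form, which only makes $\bar\rho$ a \emph{critical point} of $\rho\mapsto \cH(\rho)+\frac dk\ln(\cdots)$; it does not control the competition against the entropy term at non-uniform overlaps, so restricting to balanced colourings does not rescue the computation.

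The paper fixes this by two ideas you did not identify. First, the truncation is not on the colourings but on the \emph{graph}: $\cZ(G)=Z(G)\vecone\{\langle\|\rho_{\SIGMA_1,\SIGMA_2}-\bar\rho\|_{\mathrm{TV}}\rangle_G\leq\zeta\}$ kills exactly the overlap mass that drives the overshoot, and $\Erw[\cZ]\sim\Erw[Z]$ is proved not by a moment bound but by pushing the Gibbs-overlap concentration for the \emph{planted}/Nishimori-reweighted model $\hat\G$ (\Prop~\ref{prop:belowcond-unif}), which in turn rests on the cavity-method free-energy formula \Thm~\ref{Thm_plantedFreeEnergy} extended to infinite $\Psi$. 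Second, even after truncating, the Laplace expansion around $\bar\rho$ requires that the Hessian $q(\id-d(k-1)\Xi)$ at $\bar\rho$ be positive definite on the relevant subspace for \emph{all} $d<\dcond$; this is exactly the generalised Kesten--Stigum bound $\dcond\leq\dKS$ (\Prop~\ref{prop_KS}), which the paper calls its main technical achievement and proves by a delicate perturbative expansion of the Bethe functional $\cB(d,P,\pi_\eps)$ to fourth order along an explicit eigenvector direction of $\Xi$. Your proposal never addresses why $\bar\rho$ should be a local maximum for $d$ up to $\dcond$, and without that the small-subgraph-conditioning machine cannot run. Steps (iii)--(iv) (cycle signatures, discretisation of $\Psi$, assembling via a Janson-type criterion) do match the paper's plan.
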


Since key parameters of the random factor graph such as
	 the size of the largest connected component of $\G$ exhibit fluctuations of order $\sqrt n$ even once we condition on $\vm$, one might {\em a priori}
	expect that the same is true of the free energy $\ln Z(\G)$.
However, (\ref{eqThm_SSC}) shows that given $\vm$ the free energy has {\em bounded} fluctuations.

\subsubsection{The overlap}
For $\sigma,\tau\in\Omega^{V_n}$ we define the {\em overlap} $\rho_{\sigma,\tau}=(\rho_{\sigma,\tau}(\omega,\omega'))_{s,t\in\Omega}\in\cP(\Omega\times\Omega)$ by letting
	$$\rho_{\sigma,\tau}(\omega,\omega')=|\sigma^{-1}(\omega)\cap\tau^{-1}(\omega')|/n.$$
Let $\bar\rho$ be the uniform distribution on $\Omega\times\Omega$.
The following theorem confirms one of the core tenets of the cavity method,
namely the absence of extensive long-range correlations for $d<\dc$.
We write $\SIGMA,\TAU$ for two independent samples chosen from the Gibbs measure $\mu_{\G}$, $\bck{\nix}_{\G}$ for the
expectation with respect to the $\mu_{\G}$ and $\Erw\brk\nix$ for the expectation with respect to the choice of $\G$.

\begin{theorem}\label{Thm_overlap}
If $P$ satisfies {\bf SYM}, {\bf BAL}, {\bf POS} and {\bf MIN}, then 
	\begin{align*}
	\dc(q,\beta)&=\inf\cbc{d>0:\limsup_{n\to\infty}\Erw\bck{\|\rho_{\SIGMA,\TAU}-\bar\rho\|_{\mathrm{TV}}}_{\G}>0}
		=\inf\cbc{d>0:\limsup_{n\to\infty}\Erw\brk{\bck{\|\rho_{\SIGMA,\TAU}-\bar\rho\|_{\mathrm{TV}}}_{\G}|\fS}>0}.
	\end{align*}
\end{theorem}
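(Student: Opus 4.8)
The plan is to establish the two claimed identities by a sandwich argument: on the one hand, for $d<\dcond$ we must show $\Erw\langle\|\rho_{\SIGMA,\TAU}-\bar\rho\|_{\mathrm{TV}}\rangle_{\G}\to0$, and on the other hand, for $d>\dcond$ arbitrarily close to $\dcond$ we must exhibit a positive liminf. Since $\|\rho_{\SIGMA,\TAU}-\bar\rho\|_{\mathrm{TV}}$ is bounded and the same quantity conditioned on $\fS$ differs by at most $o(1)$ (because $\Pr[\fS]=\Omega(1)$ and $Z(\G)$ is concentrated), both formulations will follow from the unconditional one. The central link to the condensation threshold is the elementary fact, used already in \cite{CKPZ}, that the overlap of two Gibbs samples controls, and is controlled by, the second moment of the partition function; more precisely the ``planted model'' and the ``null model'' are mutually contiguous exactly when the overlap concentrates on $\bar\rho$, which in turn is governed by whether $\frac1n\Erw[\ln Z(\G)]$ attains the Bethe value $\ln q+\frac dk\ln\xi$.

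First I would treat the regime $d<\dcond$. Here I invoke \Thm~\ref{Thm_cond}, which gives $\frac1n\Erw[\ln Z(\G)]=\ln q+\frac dk\ln\xi$, together with the standard second moment / small subgraph conditioning machinery: under {\bf SYM} and {\bf BAL} the first moment $\Erw[Z(\G)]$ has logarithm asymptotic to $n(\ln q+\frac dk\ln\xi)$, so the free energy hitting the annealed bound forces $\Erw[Z(\G)^2]\le O(1)\cdot\Erw[Z(\G)]^2$ along a subsequence, hence $Z(\G)/\Erw[Z(\G)]$ is tight and bounded away from $0$. A now-routine computation (as in \cite{CKPZ}, \cite{AchNaor}) expands $\Erw[Z(\G)^2]$ as a sum over overlap types $\rho\in\cR(\Omega)$ and shows, using {\bf MIN} to guarantee that the quadratic form $\sum_{\sigma,\tau}\Erw[\PSI(\sigma)\PSI(\tau)]\prod_i\rho(\sigma_i,\tau_i)$ is uniquely minimized at $\bar\rho$, that the dominant contribution comes from $\rho$ near $\bar\rho$; boundedness of the second moment ratio then means no other overlap value can carry a non-vanishing share of mass. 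Translating back via the identity $\Erw\langle f(\rho_{\SIGMA,\TAU})\rangle_{\G}=\Erw[Z(\G)^{-2}\sum_{\sigma,\tau}\psi_{\G}(\sigma)\psi_{\G}(\tau)f(\rho_{\sigma,\tau})]$ and using concentration of $Z(\G)$, one concludes $\Erw\langle\|\rho_{\SIGMA,\TAU}-\bar\rho\|_{\mathrm{TV}}\rangle_{\G}\to0$.

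Next, for $d>\dcond$ I would argue by contradiction: suppose $\limsup_n\Erw\langle\|\rho_{\SIGMA,\TAU}-\bar\rho\|_{\mathrm{TV}}\rangle_{\G}=0$. Then the overlap concentrates at $\bar\rho$, and I claim this forces $\frac1n\Erw[\ln Z(\G)]=\ln q+\frac dk\ln\xi$, contradicting \Thm~\ref{Thm_cond}. The mechanism is the interpolation / Aizenman--Sims--Starr-type bound: when two replicas are asymptotically perpendicular, the Bethe functional $\cB(d,P,\pi)$ evaluated at the ``projection'' of the Gibbs measure onto empirical messages is asymptotically the free energy, and perpendicularity pins this at the trivial fixed point $\pi=\delta_{\bar\mu}$, whose value is precisely $\ln q+\frac dk\ln\xi$. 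Concretely one uses that the coupled second moment $\Erw[\langle\vecone\{\rho_{\SIGMA,\TAU}\approx\bar\rho\}\rangle_{\G}Z(\G)^2]/\Erw[Z(\G)]^2$ stays bounded, which via the quenched free energy convergence (again {\bf POS} is what licenses importing the needed estimates from \cite{CKPZ}) upgrades to $\Erw[\ln Z(\G)]\ge n(\ln q+\frac dk\ln\xi)-o(n)$; combined with the matching upper bound from the first moment this is the desired equality.

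The main obstacle I anticipate is the second direction, specifically making rigorous the step ``vanishing overlap $\Rightarrow$ Bethe free energy'' without circularity. The subtlety is that overlap concentration is a statement about the \emph{typical} pair of replicas, whereas the free energy is an \emph{average} of $\ln Z$, so a priori a rare event of atypically large $Z$ with correlated replicas could contribute; one must show such contributions are negligible, which requires the tail bounds in \eqref{eqBounded} (the four moments on $\PSI(\tau)^{-1}$ and on $\ln(1-\max|1-\PSI(\tau)|)$) and a careful truncation argument. I would handle this by working with a truncated partition function $Z_{\mathrm{bal}}$ restricted to balanced configurations — justified by {\bf BAL} — for which the relevant large-deviation estimates are uniform, showing $\ln Z_{\mathrm{bal}}=\ln Z-o(n)$ with high probability, and then running the overlap-to-free-energy implication entirely within the balanced world. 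The remaining pieces — contiguity transfer between $\G$ and $\G\mid\fS$, and the passage from the $\mathrm{TV}$ formulation to any equivalent $L^2$ formulation — are routine given the tools already assembled.
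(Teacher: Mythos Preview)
Your proposal has a genuine gap in the $d<\dcond$ direction. You claim that ``the free energy hitting the annealed bound forces $\Erw[Z(\G)^2]\le O(1)\cdot\Erw[Z(\G)]^2$ along a subsequence.'' This implication is false: quenched~=~annealed only gives $\frac1n\ln(\Erw[Z^2]/\Erw[Z]^2)=o(1)$, not $O(1)$ on the unscaled ratio. In fact the paper (\Sec~3.1) explicitly points out that for the Potts antiferromagnet the untruncated second moment exceeds the square of the first by a factor $\exp(\Omega(n))$ for $d$ strictly below $\dcond$. Your subsequent use of {\bf MIN} is also misdirected: {\bf MIN} says the function $\rho\mapsto\sum_{\sigma,\tau}\Erw[\PSI(\sigma)\PSI(\tau)]\prod_i\rho(\sigma_i,\tau_i)$ is \emph{minimized} at $\bar\rho$, whereas the second-moment integrand is $\cH(\rho)+\frac dk\ln(\cdot)$ and you would need $\bar\rho$ to be its \emph{maximizer}---this is exactly the hard step that fails in general.

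The paper's route is different and does not go through the raw second moment of $Z(\G)$. For $d<\dcond$ it first establishes overlap concentration in the \emph{reweighted} model $\hat\G$ (\Prop~\ref{prop:belowcond-unif}), by computing $\frac{\partial}{\partial d}\Erw[\ln Z(\hat\G)]$ and showing via {\bf MIN} and Jensen that a non-vanishing overlap would force the derivative strictly above $k^{-1}\ln\xi$, contradicting \Thm~\ref{Thm_plantedFreeEnergy}. This feeds into a truncated variable $\cZ$ whose second moment \emph{is} controlled, yielding contiguity of $\G$ and $\hat\G$ (\Thm~\ref{Cor_contig}); overlap concentration then transfers to $\G$. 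For $d>\dcond$ the paper runs the same derivative calculation in the null model $\G$: expanding $\ln\langle\PSI(\SIGMA(\partial a))\rangle_{\G}$ termwise and rewriting each $\langle\cdot\rangle^l$ via the $l$-wise overlap (\Lem~\ref{Lemma_multiOverlap}), one sees that $\Erw\langle\|\rho_{\SIGMA,\TAU}-\bar\rho\|_{\mathrm{TV}}\rangle_{\G}\to0$ for all $d<D$ forces $\frac1n\frac{\partial}{\partial d}\Erw[\ln Z(\G)]\to k^{-1}\ln\xi$, whence by integration $\frac1n\Erw[\ln Z(\G(n,\vm_D))]\to\ln q+\frac Dk\ln\xi$, contradicting \Thm~\ref{Thm_cond} if $D>\dcond$. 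This derivative computation is concrete and avoids the interpolation scheme you sketch; your version of the $d>\dcond$ step is in the right spirit but the mechanism you describe (bounded ``coupled second moment'' $\Rightarrow$ lower bound on $\Erw[\ln Z]$) is not how the argument actually closes.
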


If we let $\mu_{\G,y}(\nix)=\bck{\vecone\{\SIGMA(y)=\nix\}}_{\G}$ be the Gibbs marginal of $y\in V_n$ and
	$\mu_{\G,y_1,y_2}(\nix,\nix)=\bck{\vecone\{\SIGMA_1(y_1)=\nix,\SIGMA_2(y_2)=\nix\}}_{\G}$
	 the joint distribution of the spins at $y_1,y_2\in V_n$,
	then \Thm~\ref{Thm_overlap} implies together with standard arguments that
	\begin{align*}
	\lim_{n\to\infty}\frac1{n^2}\sum_{y_1,y_2\in V_n}\Erw\TV{\mu_{\G,y_1,y_2}-\mu_{\G,y_1}\tensor\mu_{\G,y_2}}&=0
	&\mbox{for all $d<\dc$.}
	\end{align*}
In other words, for $d<\dc$  with probability tending to $1$ as $n\to\infty$, 
the spins assigned to two randomly chosen variable nodes $y_1,y_2$ are asymptotically independent.

Conversely, \Thm~\ref{Thm_overlap} shows that for any $\eps>0$ there exists $\dc<d<\dc+\eps$ such that
	\begin{align}\label{eqLongRange}
	\limsup_{n\to\infty}\frac1{n^2}\sum_{y_1,y_2\in V_n}\Erw\TV{\mu_{\G,y_1,y_2}-\bar\rho}&>0.
	\end{align}
Hence, if we know that the Gibbs marginals $\mu_{\G,y}$ are uniform (e.g., due to the symmetry among colors in the Potts model
	or the inversion symmetry in the $k$-spin model for even $k$), then (\ref{eqLongRange}) becomes
	\begin{align}\label{eqLongRange'}
	\limsup_{n\to\infty}\frac1{n^2}\sum_{y_1,y_2\in V_n}\Erw\TV{\mu_{\G,y_1,y_2}-\mu_{\G,y_1}\tensor\mu_{\G,y_2}}&>0.
	\end{align}
Since two randomly chosen variable nodes $y_1,y_2$ of $\G$ have distance $\Omega(\ln n)$ with probability $1-o(1)$,
(\ref{eqLongRange'}) states that long range correlations persist for $d$ beyond but arbitrarily close to $\dc$.

\subsubsection{The teacher-student model}
Finally, there is a natural statistical inference version of the random factor graph model, the {\em teacher-student model}~\cite{LF}, a generalization of the stochastic block model from \Sec~\ref{Sec_IntroSBM}.
Suppose that $\sigma:V_n\to\Omega$ is an assignment of spins to variable nodes.
Then we introduce a random factor graph $\G^*(n,m,P,\sigma)$ with variable nodes $V_n$ and constraint nodes $F_m$
such that, independently for each $j\in[m]$, the neighborhood $\partial a_j$ and the weight function $\psi_{a_j}$ are chosen from the following joint distribution:
for any $y_1,\ldots,y_k\in V_n$ and for any measurable $\cA\subset\Psi$,
	\begin{align}\label{eqTeacher}
	\pr\brk{\partial a_j=(y_1,\ldots,y_k),\psi_{a_j}\in\cA}
			&=
		\frac{\Erw[\vecone\{\PSI\in\cA\}\PSI(\sigma(y_1),\ldots,\sigma(y_k))]}
			{\sum_{z_1,\ldots,z_k\in V_n}\Erw[\PSI(\sigma(z_1),\ldots,\sigma(z_k))]}.
	\end{align}
Thus, the probability of the outcome $(y_1,\ldots,y_k),\psi_{a_j}=\psi$ is the `prior' probability $P(\psi)$ of selecting $\psi$ times the
`posterior' weight $\psi(\sigma(y_1),\ldots,\sigma(y_k))$.

Further, given $d>0$ consider the following experiment where the initial assignment is chosen randomly as well.
\begin{description}
\item[TCH1] an assignment $\SIGMA^*:V_n\to\Omega$, the {\em ground truth}, is chosen uniformly at random.
\item[TCH2] independently of $\SIGMA^*$, draw $\vec m=\vm_d(n)$ from the Poisson distribution with mean $dn/k$.
\item[TCH3] generate $\G^*=\G^*(n,\vm,P,\SIGMA^*)$.
\end{description}
The intuition behind this model is that a ``teacher'', in possession of the ground truth $\SIGMA^*$, finds herself unable to communicate $\SIGMA^*$ to a student directly.
Instead the teacher utilizes $\SIGMA^*$ to set up a random factor graph $\G^*$ that the student gets to observe.
Given $\G^*$ the student aims to recover $\SIGMA^*$ as best as possible.
As in the case of the stochastic block model, two natural questions arise: given $\G^*$, is it information-theoretically possible to accomplish a better approximation to $\SIGMA^*$ than a mere independent random guess?
More modestly, there is the {\em detection problem}: given a factor graph $G$ is it possible to
	discern with probability $1-o(1)$ as $n\to\infty$ whether $G$ was chosen from the model $\G^*$  or from the ``null model'' $\G$?
As the imprint that the ground truth imbues on $\G^*$ increases with $d$, we should expect the existence of a threshold from where either problem turns solvable. 
Regarding the detection problem,  we recall that the random graph models $\G,\G^*$ are {\em mutually contiguous} if for any sequence $(\cA_n)_{n}$ of events
we have $\lim_{n\to\infty}\pr\brk{\G\in\cA_n}=0$ iff $\lim_{n\to\infty}\pr\brk{\G^*\in\cA_n}=0$.
The following theorem establishes a generalization of the conjectures put forward in~\cite{Decelle} for the stochastic block model
to the case of random factor graph models.

\begin{theorem}\label{Cor_contig}
If $P$ satisfies {\bf SYM}, {\bf BAL}, {\bf POS} and {\bf MIN}, then $\G,\G^*$ are mutually contiguous for all $d<\dc$,
{while $\G,\G^*$ fail to be mutually contiguous for $d>\dc$.}
The same holds given $\G,\G^*\in\fS$.
\end{theorem}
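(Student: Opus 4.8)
The plan is to handle the two halves separately, both via the likelihood ratio $L_n=\dd\pr_{\G^*}/\dd\pr_{\G}$ (well defined since $\pr_{\G^*}\ll\pr_{\G}$ for every $n$; here $\pr_{\G},\pr_{\G^*}$ denote the laws of $\G,\G^*$ and $\Erw\brk\nix$ the expectation over $\G$). The shared starting point is an exact ``planting formula'': conditioning on $\vm=m$ and comparing \eqref{eqTeacher} with the definition of $\G(n,m,P)$, one verifies that for a factor graph $G$ with $m$ constraints
\[
L_n(G)=\frac1{q^n}\sum_{\sigma\in\Omega^{V_n}}\psi_G(\sigma)\Bigl(\frac{n^k}{D_\sigma}\Bigr)^{\!m},\qquad D_\sigma=\sum_{z\in V_n^k}\Erw[\PSI(\sigma(z))]=n^k\phi\bigl(|\sigma^{-1}(\nix)|/n\bigr),
\]
with $\phi$ as in {\bf BAL}; in particular $\Erw[L_n]=1$. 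By {\bf BAL}, $D_\sigma\le n^k\xi$ with equality iff $\sigma$ is balanced, and by {\bf SYM} the linear term of $D_\sigma$ at the balanced point vanishes, so $D_\sigma=n^k\xi$ up to a quadratic correction in $\mu_\sigma:=|\sigma^{-1}(\nix)|/n-q^{-1}\vecone$. Two consequences recur: for every fixed $\sigma$, $\Erw[\psi_G(\sigma)\mid\vm=m]=(D_\sigma/n^k)^m$, which exactly cancels the deterministic factor $(n^k/D_\sigma)^m$; and $(n^k/D_\sigma)^m\le\xi^{-m}\exp(O(m\|\mu_\sigma\|^2))$ for nearly balanced $\sigma$. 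Splitting the $\sigma$-sum at $\|\mu_\sigma\|=n^{-1/2}\log n$, the first consequence shows that the ``far'' part of $L_n(\G)$ has expectation $q^{-n}\#\{\sigma:\|\mu_\sigma\|\ge n^{-1/2}\log n\}=\eul^{-\Omega(\log^2 n)}$, hence is $o(1)$ with probability $1-o(1)$; the second shows that the ``near'' part is at most $\eul^{O(\log^2n)}\xi^{-\vm}q^{-n}Z(\G)=\eul^{o(n)}\xi^{-\vm}q^{-n}Z(\G)$. None of this uses anything about $d$, and it all survives conditioning on $\fS$, since $\pr_{\G}[\fS]$ and $\pr_{\G^*}[\fS]$ are bounded away from $0$ (the planting reweights neighbourhoods only by densities in $(0,2/\xi)$), so the conditional likelihood ratio is $L_n$ times a bounded factor.

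\emph{The case $d<\dc$.} Here the a priori estimates that underlie \Thm~\ref{Thm_SSC} — concentration of $\mu_{\G}$ on nearly balanced configurations, together with the Gaussian central-limit behaviour of the colour-class sizes granted by {\bf BAL} — sharpen the ``near'' part of the analysis to $L_n=(1+o(1))\,c_n\,\xi^{-\vm}q^{-n}Z(\G)$ in probability, where $c_n$ is a deterministic sequence bounded and bounded away from $0$. By \Thm~\ref{Thm_SSC} (whose proof also furnishes uniform integrability of $\{\xi^{-\vm}q^{-n}Z(\G)\}$ under $\pr_{\G}$), $\xi^{-\vm}q^{-n}Z(\G)$ converges in distribution to a strictly positive, integrable random variable — namely $\eul^{\cK}$ times an explicit Gaussian-determinant constant, with $\cK$ a.s.\ finite — so $L_n\to\vec W$ in distribution under $\pr_{\G}$ with $\pr[\vec W>0]=1$, while $\Erw[L_n]=1$ for all $n$ plus uniform integrability of $\{L_n\}$ force $\Erw[\vec W]=1$. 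A standard contiguity criterion now closes the case: uniform integrability of $\{L_n\}$ yields $\pr_{\G^*}\triangleleft\pr_{\G}$, and $\pr[\vec W>0]=1$ (with $L_n\to\vec W$) yields $\pr_{\G}\triangleleft\pr_{\G^*}$; so $\G$ and $\G^*$ are mutually contiguous, and the same argument with the companion limit $\cK'$ of \Thm~\ref{Thm_SSC} handles the $\fS$-conditional statement.

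\emph{The case $d>\dc$.} By \Thm~\ref{Thm_cond} there are $c>0$ and $N_0$ with $\tfrac1n\Erw[\ln Z(\G)]<\ln q+\tfrac dk\ln\xi-3c$ for all $n\ge N_0$. Since changing one constraint alters $\ln Z(\G)$ by an amount with finite $8$-th moment (by \eqref{eqBounded}) and $\vm$ concentrates around $dn/k$, an Azuma/McDiarmid bound with sub-exponential increments yields $\tfrac1n\ln Z(\G)<\ln q+\tfrac dk\ln\xi-2c$ with probability $1-o(1)$. Feeding this into the decomposition of $L_n(\G)$ above, the ``far'' part is $o(1)$ and the ``near'' part is at most $\eul^{o(n)}\xi^{-\vm}q^{-n}Z(\G)=\eul^{o(n)}\exp\!\bigl(n(\tfrac1n\ln Z(\G)-\ln q-\tfrac{\vm}n\ln\xi)\bigr)\le\eul^{-cn+o(n)}$, both with probability $1-o(1)$. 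Hence $L_n(\G)\le\eul^{-cn/2}$ with probability $1-o(1)$ for all large $n$. Taking $\cA_n=\{G:L_n(G)<\eul^{-cn/2}\}$ gives $\pr_{\G}[\cA_n]\to1$, whereas $\pr_{\G^*}[\cA_n]=\Erw[L_n\vecone\{L_n<\eul^{-cn/2}\}]\le\eul^{-cn/2}\to0$; therefore $\pr_{\G}\not\triangleleft\pr_{\G^*}$ and the two models are not mutually contiguous. The $\fS$-conditional version is the same, using $\pr_{\G^*}[\fS],\pr_{\G}[\fS]=\Omega(1)$ and \Thm~\ref{Thm_cond}'s control of $\tfrac1n\Erw[\ln Z(\G)\mid\fS]$.

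\emph{Main obstacle.} The substance is the analysis of the planting formula, and especially the claim that for typical $\G$ the mass defining $L_n$ sits on nearly balanced $\sigma$, so that $L_n$ is controlled by (a balanced restriction of) $Z(\G)$. For $d<\dc$ this is precisely the a priori input to \Thm~\ref{Thm_SSC}, after which transferring the distributional limit from $Z(\G)$ to $L_n$ and verifying $\Erw[\vec W]=1$ through uniform integrability is bookkeeping. For $d>\dc$ only the crude first-moment version of that input is needed, and the single genuinely new ingredient is \Thm~\ref{Thm_cond}'s verdict that the quenched free energy falls strictly below the annealed value $\ln q+\tfrac dk\ln\xi$ of \eqref{eq:dcond}. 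The remaining chores — the $\fS$-conditioning and the unbounded-increment concentration of $\ln Z$ licensed by \eqref{eqBounded} — are routine.
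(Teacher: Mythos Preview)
There is a real gap in the $d<\dc$ half. Rewriting your planting formula as
\[
L_n(\G)=\xi^{-\vm}q^{-n}Z(\G)\cdot\bck{(\xi/\phi(\rho_{\SIGMA}))^{\vm}}_{\G},
\]
your claim $L_n=(1+o(1))\,c_n\,\xi^{-\vm}q^{-n}Z(\G)$ asserts that the Gibbs average on the right concentrates at a deterministic value. But by \Lem~\ref{Lemma_F}, $(\xi/\phi(\rho))^{\vm}\approx\exp\bigl(-\tfrac{dq(k-1)}{2}\,n\scal{\Phi(\rho-\bar\rho)}{\rho-\bar\rho}\bigr)$ is a function of $\rho_{\SIGMA}-\bar\rho$ at the $n^{-1/2}$ scale, so concentration would require a central limit theorem for $\rho_{\SIGMA}$ under the \emph{quenched} Gibbs measure $\mu_{\G}$ --- much finer than the overlap concentration of \Prop~\ref{prop:belowcond-unif}, and not a consequence of {\bf BAL} (which constrains the first-moment landscape $\rho\mapsto\phi(\rho)$, not the Gibbs marginals). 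Without this, uniform integrability of $L_n$ does not follow from that of $\xi^{-\vm}q^{-n}Z(\G)$: the latter is controlled only after truncating to $\cZ$, whereas $L_n$ reweights precisely the unbalanced $\sigma$ that the truncation discards. The paper sidesteps this by factoring through $\hat\G$ of \eqref{eq:NishimoriG}: the Radon--Nikodym derivative of $\hat\G$ against $\G$ is \emph{exactly} $Z(\G)/\Erw[Z(\G)\mid\vm]$, so \Thm~\ref{Thm_SSC} and \Prop~\ref{lem:SecondMoment} directly give mutual contiguity of $\G$ and $\hat\G$. The factor $(\xi/\phi(\rho_\sigma))^{\vm}$ you are trying to control is, up to a normalizing constant, $\pr[\SIGMA^*=\sigma]/\pr[\hat\SIGMA_{n,\vm}=\sigma]$, and its mass is handled by the elementary \Cor~\ref{lem:conc_coloring} and \Lem~\ref{Prop_contig}, which furnish the mutual contiguity of $\hat\G$ and $\G^*$ for \emph{all} $d>0$.

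Your $d>\dc$ argument is correct in outline (and more explicit than the paper, whose proof block only treats $d<\dc$), with one fixable overclaim: the ``far'' part of $L_n$ is only $o(1)$ with high $\pr_{\G}$-probability, not $\eul^{-cn/2}$ --- Markov gives $\pr[\text{far}>t]\le\exp(-\Omega(\log^2 n))/t$, which is useless for exponentially small $t$. But $L_n\to 0$ in $\pr_{\G}$-probability already suffices: take $\cA_n=\{L_n<\eps_n\}$ for any $\eps_n\to 0$ slowly enough that $\pr_{\G}[\cA_n]\to 1$, and then $\pr_{\G^*}[\cA_n]=\Erw_{\G}[L_n\vecone_{\cA_n}]\le\eps_n\to 0$.
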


\noindent
Previously it was known that for $d<\dc$ it is impossible to recover an assignment that has a strictly greater overlap with $\SIGMA^*$~\cite[\Thm~2.6]{CKPZ}.
\Thm~\ref{Cor_contig} shows that, in fact, $\dc$ marks the threshold for the feasibility of the humble detection problem.

While Theorem ~\ref{Cor_contig} is bad news from a statistical inference point of view, the upshot is that throughout the replica symmetric phase typical properties
of Gibbs samples of $\G$ can be investigated accurately by way of the teacher-student model $(\G^*,\SIGMA^*)$,
a technique known as ``quiet planting''~\cite{Barriers,quiet}.
This idea has been used critically in rigorous work on specific examples of random factor graph models, e.g., \cite{Molloy}.
Formally, quiet planting applies if the factor graph/assignment pair
$(\G^*,\SIGMA^*)$ comprising the ground truth $\SIGMA^*$ and the outcome $\G^*$ of {\bf TCH1--TCH3} and the pair
$(\G,\SIGMA)$ consisting of the random factor graph $\G$ and a Gibbs sample $\SIGMA$ of $\G$ are mutually contiguous.
Previously this was known to be true for a few specific models (e.g., \cite{h2c,Nor}), albeit not generally in the entire replica symmetric phase.
The following corollary to \Thm~\ref{Cor_contig} shows that ``quiet planting'' is a universal phenomenon.

\begin{corollary}\label{Thm_contig}
Assume that $P$ satisfies {\bf SYM}, {\bf BAL}, {\bf POS} and {\bf MIN}.
For all $d<\dc$ the pairs $(\G,\SIGMA)$ and $(\G^*,\SIGMA^*)$ are mutually contiguous.
The same is true given $\G,\G^*\in\fS$.
\end{corollary}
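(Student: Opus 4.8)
The plan is to derive the contiguity of $(\G,\SIGMA)$ and $(\G^*,\SIGMA^*)$ from \Thm~\ref{Cor_contig} by a symmetrization argument exploiting the structure of the teacher-student model. The key observation is that the joint distribution of $(\G^*,\SIGMA^*)$ produced by \textbf{TCH1}--\textbf{TCH3} is \emph{reversible} in the following sense: if we first draw $\G^*$ from the marginal of the teacher-student model and then draw $\SIGMA^*$ from the posterior, we recover exactly the same joint law. Concretely, a standard Nishimori-type computation (see~\cite{LF,CKPZ}) shows that, conditioning on $\vm$, the pair $(\G^*,\SIGMA^*)$ has the same law as first sampling the factor graph $\G^*$ and then sampling $\SIGMA^*$ from the Gibbs measure $\mu_{\G^*}$; this is because \eqref{eqTeacher} is set up precisely so that the posterior on assignments given $\G^*$ is the Gibbs measure. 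Hence $(\G^*,\SIGMA^*)\stacksign{d}=(\G^*,\SIGMA)$ where $\SIGMA\sim\mu_{\G^*}$. So it suffices to show that $(\G,\SIGMA)$ and $(\G^*,\SIGMA)$ — with a Gibbs sample attached to each — are mutually contiguous, and for this the ``marginal'' contiguity of $\G$ and $\G^*$ from \Thm~\ref{Cor_contig} should be upgraded coordinatewise.

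First I would make the reduction precise. Fix an event $\cA_n$ in the space of factor-graph/assignment pairs. We must show $\pr[(\G,\SIGMA)\in\cA_n]\to0$ iff $\pr[(\G^*,\SIGMA^*)\in\cA_n]\to0$. Using the reversibility above, $\pr[(\G^*,\SIGMA^*)\in\cA_n]=\Erw\bck{\vecone\{(\G^*,\SIGMA)\in\cA_n\}}_{\G^*}$ and $\pr[(\G,\SIGMA)\in\cA_n]=\Erw\bck{\vecone\{(\G,\SIGMA)\in\cA_n\}}_{\G}$. Define the bounded measurable function $g_n(G)=\bck{\vecone\{(G,\SIGMA)\in\cA_n\}}_G\in[0,1]$ on factor graphs, so we are comparing $\Erw[g_n(\G)]$ with $\Erw[g_n(\G^*)]$. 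Contiguity of $\G$ and $\G^*$ (\Thm~\ref{Cor_contig}) gives that if a sequence of \emph{events} has vanishing probability under one it does under the other; to pass to bounded functions $g_n$ I would use the standard fact that mutual contiguity of two sequences of probability measures $(P_n),(Q_n)$ is equivalent to the condition that $\Erw_{P_n}[f_n]\to0 \iff \Erw_{Q_n}[f_n]\to0$ for every uniformly bounded sequence $f_n\ge 0$; this follows by truncating $f_n$ at levels $\eps$ and applying the event-version of contiguity to the events $\{f_n>\eps\}$, together with a union-bound/diagonal argument over a countable sequence $\eps\downarrow 0$. Thus $\Erw[g_n(\G)]\to0$ iff $\Erw[g_n(\G^*)]\to0$, which is exactly the desired equivalence. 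The statement conditional on $\fS$ follows verbatim, since \Thm~\ref{Cor_contig} is also stated given $\G,\G^*\in\fS$ and the reversibility identity is insensitive to conditioning on the simplicity event $\fS$ (both the teacher-student and null constructions restrict the neighborhoods the same way).

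The main obstacle is verifying the reversibility/Nishimori identity in the generality of arbitrary measurable $\Psi$ with the weight distribution $P$, rather than for a finite set of weight functions: one must check that the measure on pairs $(G,\sigma)$ obtained from \textbf{TCH1}--\textbf{TCH3} coincides, after disintegration the other way, with $G\mapsto\mu_G$. This is where the normalization $\sum_{z_1,\ldots,z_k}\Erw[\PSI(\sigma(z_1),\ldots,\sigma(z_k))]$ in \eqref{eqTeacher} and the symmetry assumption \textbf{SYM} (guaranteeing this normalization is independent of $\sigma$, up to lower-order terms, namely equal to $n^k\xi$ exactly under \textbf{SYM}) come into play; \textbf{SYM} ensures the ``a priori'' assignment distribution in the teacher-student model is uniform, matching \textbf{TCH1}. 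Once that identity is in hand — it is essentially the computation underlying~\cite[\Sec on quiet planting]{CKPZ} — the rest is the soft functional-analytic upgrade of contiguity from events to bounded functions described above, which is routine. A secondary technical point is to confirm that the Gibbs expectation $g_n$ is genuinely measurable as a function of the (continuous) factor graph data; this is immediate since $Z(G)$ and $\psi_G(\sigma)$ are continuous in the weight functions by the definition in \Sec~\ref{Sec_results1}.
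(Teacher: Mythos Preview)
Your argument has a genuine gap: the claimed exact identity $(\G^*,\SIGMA^*)\stacksign{d}=(\G^*,\SIGMA)$ with $\SIGMA\sim\mu_{\G^*}$ is \emph{false}, and the mistake is precisely in the sentence ``\textbf{SYM} (guaranteeing this normalization is independent of $\sigma$ \ldots\ namely equal to $n^k\xi$ exactly under \textbf{SYM})''. The normalizing constant in \eqref{eqTeacher} equals
\[
\sum_{z_1,\ldots,z_k\in V_n}\Erw[\PSI(\sigma(z_1),\ldots,\sigma(z_k))]=n^k\phi(\rho_\sigma),
\]
with $\phi$ as in \eqref{eqF}. Condition \textbf{SYM} only pins down the \emph{gradient} $D\phi(\bar\rho)=k\xi\vecone$ (\Lem~\ref{Lemma_F}); the Hessian $D^2\phi(\bar\rho)=qk(k-1)\xi\Phi$ is nonzero, so $\phi$ is not constant (for the Potts model, $\phi(\mu)=1-(1-\eul^{-\beta})\sum_\omega\mu(\omega)^2$ depends on $\mu$). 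Consequently the posterior of $\SIGMA^*$ given $\G^*$ is proportional to $\psi_{\G^*}(\sigma)\,\phi(\rho_\sigma)^{-m}$, i.e.\ the Gibbs measure reweighted by $\phi(\rho_\sigma)^{-m}$---not $\mu_{\G^*}$ itself.

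The correct Nishimori identity (\Lem~\ref{lem:nishimori}) therefore replaces the uniform $\SIGMA^*$ by the reweighted assignment $\hat\SIGMA_{n,m}$ of \eqref{eq:NishimoriS}, whose law is proportional to $\phi(\rho_\sigma)^m$, and states that $(\G^*(n,m,\hat\SIGMA_{n,m}),\hat\SIGMA_{n,m})\stacksign{d}=(\hat\G(n,m),\SIGMA_{\hat\G(n,m)})$. To pass from $\hat\SIGMA_{n,m}$ back to $\SIGMA^*$ one needs the separate contiguity statement of \Cor~\ref{lem:conc_coloring}; this is exactly the ingredient the paper's one-line proof invokes and that is missing from your outline. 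Once you insert this bridge, your ``upgrade contiguity from events to bounded functions'' step is correct and is precisely the soft argument underlying the paper's proof (applied twice: once to push the Markov kernel $\sigma\mapsto\G^*(n,m,\sigma)$ through the contiguity of $\hat\SIGMA_{n,m}$ and $\SIGMA^*$, and once to push the kernel $G\mapsto(G,\SIGMA_G)$ through the contiguity of $\G$ and $\hat\G$ from \Thm~\ref{Cor_contig}).
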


\subsubsection{Reconstruction}

\noindent
According to the physics deliberations the condensation phase transition is generally preceded by another 
threshold where certain point-to-set  correlations emerge, the reconstruction threshold~\cite{pnas}.
Reconstruction plays a major role in the cavity formalism because it provides the conceptual underpinning for the notion that
the Gibbs measure decomposes into a multitude of ``clusters''~\cite{MM,MPZ}.
Formally, suppose that $G$ is a factor graph with variable nodes $V$, $y\in V$ and  that $\ell\geq0$.
Let $\nabla_{\ell}(G,y)$ be the $\sigma$-algebra on $\Omega^{V}$ generated by the random variables
	$\SIGMA(z)$ such that $z$ is a variable node whose distance from $y$ in $G$ is at least $2\ell$.
Further, define
	\begin{align}\label{eqGcorr}
	\mathrm{corr}(d)&=\lim_{\ell\to\infty}\limsup_{n\to\infty}\frac1n\sum_{y\in V_n}\sum_{s\in\Omega}
		\Erw\bck{\abs{\bck{\vecone\{\SIGMA(y)=s\}\big|\nabla_{\ell}(\G,y)}_{\G}-1/q}}_{\G}.
	\end{align}
Of course, the expectation $\Erw\brk\nix$ refers to the choice of $\G$,
the outer expectation $\bck\nix_{\G}$ averages over the ``boundary condition'', i.e., the spins of the variable nodes at distance
at least $2\ell$ from $y$, and the inner $\langle\nix|\nabla_{\ell}(\G,y)\rangle_{\G}$ is the conditional expectation given the boundary 
condition.
If $\mathrm{corr}(d)=0$, then the influence of a ``typical" boundary condition on the spin of $y$ decays with the radius $\ell$. 
Thus, the {\em reconstruction threshold} $\dr=\inf\{d>0:\mathrm{corr}(d)>0\}$
is the smallest degree where the influence of the boundary persists.

A priori determining $\dr$ appears to be challenging  because the joint distribution of the spins at distance $2\ell$ from $y$ is determined not merely by the ``local'' effects
within the radius-$2\ell$ neighborhood of $y$ but also by the graph beyond.
But according to physics predictions (e.g., \cite{pnas}), actually $\dr$ 	is equal to the corresponding threshold on a suitable Galton-Watson tree.  
Conceptually this amounts to an enormous simplification because the branches of the tree are mutually dependent only
through their being connected to the root, a situation amenable to precise treatment via the
Belief Propagation message passing scheme~\cite{MM}.

Formally, we introduce a multi-type Galton-Watson tree $\T(d,P)$ that mimics the local geometry of $\G$.
The types are either variable nodes or constraint nodes, each of the latter endowed with a weight function $\psi\in\Psi$.
The root of the Galton-Watson tree is a variable node $r$.
The offspring of a variable node is a $\Po(d)$ number of constraint nodes whose weight functions are chosen from $P$ independently.
Moreover, the offspring of a constraint node is $k-1$ variable nodes.
For an integer $\ell\geq0$ we let $\T^\ell(d,P)$ denote the (finite) tree obtained from  $\T(d,P)$ by deleting all variable or constraint nodes at distance
greater than $2\ell$ from $r$.
In analogy to (\ref{eqGcorr}) we set
\begin{equation}\label{def:PlantedCorr}
	\mathrm{corr}^\star(d) = \lim_{\ell\to\infty}\sum_{s\in\Omega}
		\Erw\bck{\abs{\bck{\vecone\{\SIGMA(r)=s\}\big|\nabla_{\ell}(\T^\ell(d,P),r)}_{\T^\ell(d,P)}-1/q}}_{\T^\ell(d,P)}
\end{equation}
The {\em tree reconstruction threshold} is defined as	$\dr^\star=\inf\{d>0:\mathrm{corr}^\star(d)>0\}$.

\begin{theorem}\label{thrm:TreeGraphEquivalence}
Suppose that $P$ satisfies {\bf SYM}, {\bf BAL}, {\bf POS} and {\bf MIN}. Then $0<\dr=\dr^\star\leq\dc$
and  $\mathrm{corr}(d)>0$ for all $d\in(\dr,\dc)$.
Moreover, 
	\begin{equation*}
\lim_{\ell\to\infty}\limsup_{n\to\infty}\frac1n\sum_{y\in V_n}\sum_{s\in\Omega}
		\Erw\brk{\bck{\abs{\bck{\vecone\{\SIGMA(y)=s\}\big|\nabla_{\ell}(\G,y)}_{\G}-1/q}}_{\G}|\fS} =0 
		\quad \textrm{if and only if}\quad 	\mathrm{corr}(d)=0.
	\end{equation*}
\end{theorem}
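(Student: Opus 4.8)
The plan is to establish \Thm~\ref{thrm:TreeGraphEquivalence} in three stages, leveraging \Thm~\ref{Thm_overlap} and the contiguity of the planted and null models (\Thm~\ref{Cor_contig}) as the main engines. First I would prove the tree/graph equivalence $\dr=\dr^\star$ together with the formula for $\mathrm{corr}(d)$ in terms of the Galton-Watson tree $\T(d,P)$. The key input here is the result of Gerschenfeld and Montanari~\cite{GM} (already invoked for \Cor~\ref{Cor_PottsRec}): local weak convergence of $\G$ to $\T(d,P)$, combined with the observation that reconstruction is a ``local'' quantity in the sense that the conditional marginal $\bck{\vecone\{\SIGMA(y)=s\}\mid\nabla_\ell(\G,y)}_{\G}$ is, up to $o(1)$ errors, governed by the radius-$2\ell$ neighborhood of $y$. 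The subtlety is that the graph boundary condition at distance $2\ell$ is \emph{not} the free boundary condition on the tree: one has to argue that the true Gibbs measure on the sphere of radius $2\ell$ is, in the relevant averaged sense, exchangeable with the product-of-branches structure one gets on the tree. This is precisely where quiet planting enters: by \Cor~\ref{Thm_contig} the pair $(\G,\SIGMA)$ is contiguous to $(\G^*,\SIGMA^*)$, so we may analyse the boundary spins as if they were generated by the teacher-student process, whose restriction to a local neighborhood converges to the broadcast process on $\T(d,P)$.

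Second, I would prove the inequality $\dr\le\dc$ and, more sharply, that $\mathrm{corr}(d)>0$ for all $d\in(\dr,\dc)$. The natural route is: (i) below $\dc$, \Thm~\ref{Thm_overlap} gives $\Erw\bck{\|\rho_{\SIGMA,\TAU}-\bar\rho\|_{\mathrm{TV}}}_\G=o(1)$, i.e.\ pairwise marginals of two variable nodes decorrelate on average; (ii) $\mathrm{corr}^\star(d)$ is a non-decreasing function of $d$ (the usual monotonicity of tree reconstruction, which also justifies the outer limit existing), so the set $\{d:\mathrm{corr}^\star(d)>0\}$ is an interval $(\dr^\star,\infty)$; (iii) at $\dc$ itself, \Thm~\ref{Thm_overlap} shows long-range correlations \emph{do} appear arbitrarily close above $\dc$, and a short argument relating point-to-set correlations to point-to-point correlations (a boundary condition at distance $2\ell$ subsumes the spin of any single far-away vertex) forces $\mathrm{corr}(d)>0$ there; hence $\dr\le\dc$. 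For the strict statement on $(\dr,\dc)$ one simply combines $\dr=\dr^\star$ with the monotonicity in (ii): once $d>\dr^\star$, $\mathrm{corr}^\star(d)>0$, and by the equivalence $\mathrm{corr}(d)=\mathrm{corr}^\star(d)>0$.

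Third, I would handle the ``given $\fS$'' version. Here the point is that conditioning on $\fS$ (the hypergraph being simple and $k$-uniform) changes neither the local weak limit $\T(d,P)$ nor the contiguity statements — \Thm~\ref{Cor_contig} and \Cor~\ref{Thm_contig} are both stated to hold given $\fS$, and \Thm~\ref{Thm_overlap} likewise provides the conditional version of the overlap characterisation. So the iff follows by running the same argument with every expectation replaced by its conditional counterpart, noting $\pr[\fS]=\Omega(1)$ so that $o(1)$ statements transfer between the conditioned and unconditioned settings in both directions.

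The main obstacle I anticipate is step one, specifically the passage from the \emph{graph} conditional marginals to the \emph{tree} reconstruction functional. The difficulty is twofold: controlling the effect of short cycles in the radius-$2\ell$ neighborhood of a typical $y$ (handled by local weak convergence, since a $\Po(d)$-locally-treelike graph has a tree neighborhood with probability $1-o_\ell(1)$, uniformly in $n$ once $\ell$ is fixed), and — the genuinely delicate part — showing that the \emph{induced boundary law} at distance $2\ell$ matches the tree's. A priori the Gibbs measure on $\G$ couples the boundary sphere to the entire bulk of the graph, so there is no reason the sphere spins should look like independent broadcasts down the branches of a tree. The resolution is to invoke quiet planting: under the teacher-student measure the boundary spins \emph{are}, by construction, a noisy function of the planted assignment $\SIGMA^*$ restricted locally, which in the local weak limit is exactly the broadcast process on $\T(d,P)$; contiguity (\Cor~\ref{Thm_contig}) then transfers the vanishing/non-vanishing of $\mathrm{corr}$ between the two models. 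One must be careful that the quantity inside $\mathrm{corr}(d)$ is bounded (it lies in $[0,2]$) so that contiguity, which controls probabilities rather than expectations of unbounded functionals, suffices — this is fine here because the summand $\abs{\bck{\vecone\{\SIGMA(y)=s\}\mid\nabla_\ell}_\G-1/q}\le 1$.
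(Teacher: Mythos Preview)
Your proposal follows essentially the same architecture as the paper's proof: transfer to the teacher--student model via contiguity (\Cor~\ref{Thm_contig}), couple the local neighbourhood of $(\G^*,\SIGMA^*)$ to the broadcast process on $\T(d,P)$, invoke monotonicity of tree reconstruction, and bound point-to-set correlations from below by point-to-point correlations to connect with \Thm~\ref{Thm_overlap}. The paper organises these ingredients into \Lem s~\ref{lemma:ReconMonotone}, \ref{lemma:Planting4GoodBoundary}, \ref{lemma:ReconThrsUpperBound} and~\ref{lemma:CorrVsCorrStar}, but the substance is the same as what you outline.

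There is, however, one real gap: you do not establish $\dr^\star\le\dc$. Your step~(iii) shows $\dr\le\dc$ by working in the null model $\G$ just above $\dc$, but the equivalence $\mathrm{corr}(d)=0\Leftrightarrow\mathrm{corr}^\star(d)=0$ from stage one is available only for $d<\dc$ (since that is where contiguity holds), so you cannot transfer the conclusion of~(iii) to the tree. If it happened that $\dr^\star>\dc$, then the equivalence would force $\mathrm{corr}(d)=0$ for every $d<\dc$, hence $\dr\ge\dc$, and combined with~(iii) you would get $\dr=\dc<\dr^\star$ --- not the claimed equality. The paper closes this gap by working directly in the \emph{planted} model above $\dc$ (\Lem~\ref{lemma:ReconThrsUpperBound}): the local coupling of $(\G^*,\SIGMA^*)$ to the broadcast process on $\T(d,P)$ (\Lem~\ref{lemma:Planting4GoodBoundary}) is valid for \emph{all} $d$, no contiguity required, so $\mathrm{corr}^*(d)=\mathrm{corr}^\star(d)$ always; and for $d>\dc$ the planted free energy strictly exceeds $\ln q+\frac dk\ln\xi$ by \Thm~\ref{Thm_plantedFreeEnergy}, which via the derivative argument of \Cor~\ref{lem:badoverlaps1} forces non-trivial overlap in $\G^*$ and hence $\mathrm{corr}^*(d)>0$. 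That gives $\dr^\star\le\dc$, after which the rest of your outline goes through.
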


We prove \Thm~\ref{thrm:TreeGraphEquivalence} by way of the teacher-student model and the ``quiet planting'' result \Cor~\ref{Thm_contig}.
This argument provides a perspective on the reconstruction problem that has an impact on the statistical inference questions as well.
Specifically, we observe that the reconstruction problem on the random tree $\T(d,P)$ is equivalent to a natural ``Bayesian'' reconstruction problem in the teacher-student model.
Formally, let $\nabla^*_{\ell}(\G^*,\SIGMA^*,y)$ be the $\sigma$-algebra generated by the graph $\G^*$ and the random variables $\SIGMA^*(z)$ with $z$
at distance at least $2\ell$ from $y$.
Then
	\begin{align}\label{eqG*corr}
	\mathrm{corr}^*(d)&=\lim_{\ell\to\infty}\limsup_{n\to\infty}\frac1n\sum_{y\in V_n}\sum_{s\in\Omega}
		\Erw\brk{\abs{\pr\brk{\SIGMA^*(y)=s\big|\nabla_{\ell}^*(\G^*,\SIGMA^*,y)}-1/q}}
	\end{align}
measures the correlation between $\SIGMA^*(y)$, the spin at $y$ under the ground truth, and the spins that
 $\SIGMA^*$ assigns to the variables at distance at least $2\ell$.
The proof of \Thm~\ref{thrm:TreeGraphEquivalence}  is based on showing that $\mathrm{corr}^*(d)=\mathrm{corr}^\star(d)$ for all $d$.

\begin{theorem}\label{thrm:TreeGraphEquivalence*}
If $P$ satisfies {\bf SYM}, {\bf BAL}, {\bf POS} and {\bf MIN}, then for all $d>0$ we have
	$$\mathrm{corr}^\star(d)=\mathrm{corr}^*(d)
		=\lim_{\ell\to\infty}\limsup_{n\to\infty}\frac1n\sum_{y\in V_n}\sum_{s\in\Omega}
		\Erw\brk{\abs{\bck{\vecone\{\SIGMA(y)=s\}\big|\nabla_{\ell}(\G^*,y)}_{\G^*}(\SIGMA^*)-1/q}\bigg|\fS} .$$
\end{theorem}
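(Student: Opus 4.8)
The plan is to combine the Bayesian (``Nishimori'') identity, which ties the posterior of the ground truth to the Gibbs measure, with the local weak convergence of the planted factor graph to a broadcasting process on the Galton--Watson tree $\T(d,P)$. The first step is to record that in the teacher--student model {\bf TCH1--TCH3} the conditional law of $\SIGMA^*$ given $\G^*$ coincides, up to a reweighting that is negligible because {\bf BAL} confines $\mu_{\G^*}$ to nearly balanced configurations (as in~\cite{CKPZ}), with the Gibbs measure $\mu_{\G^*}$; equivalently, $(\G^*,\SIGMA^*)$ has essentially the same law as $(\G^*,\SIGMA)$ with $\SIGMA$ a Gibbs sample of $\G^*$. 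This is a short computation from~(\ref{eqTeacher}), the definition of $\mu_{\G^*}$, and {\bf SYM}, which forces $\sum_{\tau}\psi(\tau)=q^k\xi$ for every $\psi\in\Psi$ (sum the defining identity of {\bf SYM} over $\omega\in\Omega$). Granting it, the $\sigma$-algebra $\nabla^*_\ell(\G^*,\SIGMA^*,y)$ generated by $\G^*$ and $\set{\SIGMA^*(z):\mathrm{dist}(z,y)\geq 2\ell}$ turns $\pr\brk{\SIGMA^*(y)=s\mid\nabla^*_\ell(\G^*,\SIGMA^*,y)}$ into precisely $\bck{\vecone\{\SIGMA(y)=s\}\mid\nabla_\ell(\G^*,y)}_{\G^*}$ evaluated at the boundary condition $\SIGMA^*$. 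Hence $\mathrm{corr}^*(d)$ matches the right-hand side of the claimed identity apart from the conditioning on $\fS$, which is harmless since $\pr\brk\fS=\Omega(1)$ and, as in~\cite{CKPZ}, the bounded-radius neighbourhood statistics of $\G^*$ are asymptotically insensitive to it. So it suffices to show $\mathrm{corr}^\star(d)=\mathrm{corr}^*(d)$.

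To this end I would use the spatial Markov property of $\mu_{\G^*}$: for any variable node $y$ whose depth-$2\ell$ neighbourhood $B=B_{2\ell}(\G^*,y)$ is a tree, the variable nodes at distance exactly $2\ell$ from $y$ separate $y$ from everything farther, so $\bck{\vecone\{\SIGMA(y)=s\}\mid\nabla_\ell(\G^*,y)}_{\G^*}$ depends only on $B$, its weight functions, and $\SIGMA^*$ restricted to the sphere of radius $2\ell$, and equals the same conditional Gibbs marginal computed inside the finite factor tree $B$. For fixed $\ell$ the ball $B$ is a tree of bounded size for a $1-o(1)$ fraction of variable nodes, and the bounded contribution of the rest is negligible. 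The heart of the matter is then the claim that, for a uniformly random variable node $y$, the pair $(B,\SIGMA^*|_B)$ converges in distribution to $(\T^\ell(d,P),\SIGMA|_{\T^\ell(d,P)})$ with $\SIGMA$ a sample of $\mu_{\T^\ell(d,P)}$; by {\bf SYM}, whence all Belief Propagation messages on a tree are uniform, this Gibbs measure equals the broadcast measure in which the root spin is uniform and, along each constraint $a$, the child spins are drawn proportionally to $\psi_a$ given the parent spin, with the weight functions along the tree i.i.d.\ from $P$. I would prove this convergence by peeling the neighbourhood of $y$ generation by generation: $\SIGMA^*(y)$ is uniform; conditionally on $\SIGMA^*$ the $\vm$ constraints are i.i.d.\ with law~(\ref{eqTeacher}), so given $\SIGMA^*(y)=\omega$ the constraints at $y$ form an asymptotically $\Po(d)$ family, each carrying a weight function marginally distributed as $P$ and, given it, further neighbours obtained exactly by the broadcast rule, the far endpoints carrying asymptotically uniform and essentially fresh spins because $\SIGMA^*$ is \whp\ balanced, so the normalising sum in~(\ref{eqTeacher}) is $(1+o(1))\xi n^k$; iterating $2\ell$ times yields the claim. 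Finally, the map sending a bounded finite factor tree with weight functions in $(0,2)^{\Omega^k}$ and a leaf spin configuration to $\sum_{s\in\Omega}\abs{p_s-1/q}$, with $p_s$ the root Gibbs marginal given the leaves, is bounded and continuous, so expectations pass to the limit; applying the Markov property once more on the tree identifies the result with $\mathrm{corr}^\star(d)$, the outer limit over $\ell$ existing by the usual monotonicity of the reconstruction estimator. En route this upgrades the $\limsup_n$ in~(\ref{eqG*corr}) to a genuine limit.

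The technically most demanding point will be the local weak convergence of the planted model in the last paragraph: one has to make the generation-by-generation peeling rigorous uniformly in the random but \whp\ bounded size of the neighbourhood, bound the $o(n)$ fraction of variable nodes whose balls are not trees or have atypically large degree, verify that the teacher reweighting in~(\ref{eqTeacher}) reproduces the broadcast law with $o(1)$ error, and check that none of this is disturbed by the global conditioning on $\fS$. The remaining ingredients --- the Nishimori identity together with the balancing estimate from~\cite{CKPZ}, the spatial Markov reduction, and the continuity and monotonicity bookkeeping --- are routine.
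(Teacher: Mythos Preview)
Your proposal is correct and follows essentially the same route as the paper. The paper packages the generation-by-generation peeling of the planted neighbourhood into a single coupling lemma (\Lem~\ref{lemma:Planting4GoodBoundary}) showing that $(\T_{\G^*,h}(v),\SIGMA^*)$ agrees with the broadcast pair $(\T^h,\TAU)$ with probability $1-o(1)$, together with the identification of the broadcast law with the tree Gibbs measure (\Lem~\ref{lemma:BroadCastingDistr}) and the local convergence~\eqref{eq:FactorGWTreeVsLocalBall}; from these it declares the theorem immediate, which is exactly your spatial Markov reduction plus local weak convergence argument, and your anticipated ``technically most demanding point'' is precisely the content of that coupling lemma.
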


Finally,  we highlight an immediate but  interesting consequence of \Thm s~\ref{Thm_KS} and~\ref{thrm:TreeGraphEquivalence}
that generalizes the classical Kesten-Stigum upper bound for reconstruction on trees \cite{KSBoundBroadcasting}.

\begin{corollary}
If $P$ satisfies {\bf SYM}, {\bf BAL}, {\bf POS} and {\bf MIN}, then $\textrm{corr}^{\star}(d)>0$ for all $d>\dKS$.
\end{corollary}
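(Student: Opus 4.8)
The plan is to chain the two quoted theorems and then reduce matters to monotonicity of tree reconstruction in the average degree. By \Thm~\ref{Thm_KS}, the hypotheses \textbf{SYM} and \textbf{BAL} already yield $\dc\le\dKS$, while \Thm~\ref{thrm:TreeGraphEquivalence} (using all four hypotheses) gives $\dr^\star=\dr\le\dc$; hence $\dr^\star\le\dKS$. Since $\dr^\star=\inf\{d>0:\mathrm{corr}^\star(d)>0\}$, for any fixed $d>\dKS$ there is some $d_0\in(\dr^\star,d)$ with $\mathrm{corr}^\star(d_0)>0$, so the corollary follows at once provided $d\mapsto\mathrm{corr}^\star(d)$ is non-decreasing, whence $\mathrm{corr}^\star(d)\ge\mathrm{corr}^\star(d_0)>0$.

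To establish this monotonicity I would couple the Galton--Watson trees. Fix $d<d'$ and couple $\T(d,P)$ with $\T(d',P)$ so that $\T(d,P)$ sits inside $\T(d',P)$ as a subtree: since $\Po(d)$ is stochastically dominated by $\Po(d')$, at every variable node we may realise the $\Po(d)$ family of constraint children as a sub-family of the $\Po(d')$ family, keeping the common constraint nodes, their weight functions (drawn from the same law $P$), and the entire subtrees pendant on them identical in the two trees, and growing the extra branches with fresh independent randomness. Thanks to \textbf{SYM} every Belief Propagation message along a tree equals the uniform distribution, so grafting the extra branches does not alter the joint law of the root spin $\SIGMA(r)$ and the retained variable nodes at distance $2\ell$; in particular those retained nodes carry identical spins under the Gibbs measures of $\T^\ell(d,P)$ and $\T^\ell(d',P)$. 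Writing $\cG\subseteq\cG'$ for the boundary $\sigma$-algebras $\nabla_\ell(\T^\ell(d,P),r)\subseteq\nabla_\ell(\T^\ell(d',P),r)$ under this coupling, the tower rule gives $\bck{\vecone\{\SIGMA(r)=s\}\mid\cG}=\bck{\bck{\vecone\{\SIGMA(r)=s\}\mid\cG'}\mid\cG}$, and the conditional Jensen inequality for the convex map $x\mapsto|x|$ then yields $\Erw\abs{\bck{\vecone\{\SIGMA(r)=s\}\mid\cG}-1/q}\le\Erw\abs{\bck{\vecone\{\SIGMA(r)=s\}\mid\cG'}-1/q}$. Summing over $s\in\Omega$ and letting $\ell\to\infty$ gives $\mathrm{corr}^\star(d)\le\mathrm{corr}^\star(d')$.

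The only delicate point -- and hence the closest thing to an obstacle in an otherwise short argument -- is precisely this use of \textbf{SYM}: one must check that attaching the extra Galton--Watson branches onto $\T^\ell(d,P)$ genuinely leaves the joint distribution of $\SIGMA(r)$ and the retained boundary spins unchanged, because only then are the tower identity and the ensuing Jensen step legitimate on the coupled probability space. Everything else is routine bookkeeping; in particular the limit in $\ell$ is harmless since the displayed inequality holds at each finite level.
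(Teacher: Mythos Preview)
Your proof is correct and follows the same high-level strategy as the paper: combine \Thm~\ref{Thm_KS} and \Thm~\ref{thrm:TreeGraphEquivalence} to obtain $\dr^\star\le\dc\le\dKS$, and then invoke monotonicity of tree reconstruction in the mean degree to conclude that $\mathrm{corr}^\star(d)>0$ for every $d>\dKS$. The coupling of $\T(d,P)$ inside $\T(d',P)$ via the stochastic domination $\Po(d)\preceq\Po(d')$ is exactly what the paper uses (\Lem~\ref{lemma:ReconMonotone}); your ``delicate point'' about {\bf SYM} is precisely the content of the paper's \Lem~\ref{lemma:BroadCastingDistr}, which identifies the Gibbs measure on any finite tree with the broadcasting process, so that the joint law of the root and the retained boundary is indeed unchanged by grafting the extra branches.

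Where you differ from the paper is in \emph{how} monotonicity is established. The paper does not argue on the ``corr'' side at all: it introduces the dual quantity $\mathrm{broad}_{\mathcal T}$ (the total variation distance between boundary laws under two different root colours), shows $\mathrm{broad}=0\iff\mathrm{corr}=0$ (\Lem~\ref{lemma:CorrVsBroad}), and then proves $\mathrm{broad}_{\mathcal T_1}\le\mathrm{broad}_{\mathcal T_2}$ for $T^1_\ell\subseteq T^2_\ell$ via a direct coupling of the two broadcasting processes (\Lem~\ref{lemma:monotonicity}). Your route is more economical: you stay on the ``corr'' side, observe that under the coupling the boundary $\sigma$-algebras satisfy $\cG\subset\cG'$, and apply the tower property together with conditional Jensen to get $\mathrm{corr}^\star(d)\le\mathrm{corr}^\star(d')$ pointwise. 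This buys a slightly stronger statement (monotonicity of the actual value, not merely of its vanishing) and avoids the detour through $\mathrm{broad}$. The paper's detour, on the other hand, makes the channel/data-processing interpretation explicit, which it needs elsewhere in \Sec~\ref{sec:thrm:TreeGraphEquivalence}.
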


The reconstruction problem on a certain class of random factor graph models (that includes, e.g., the Potts antiferromagnet) was previously studied by Gerschenfeld and Montanari~\cite{GM}.
They observed that overlap concentration about $\bar\rho$ as provided by \Thm~\ref{Thm_overlap} for $d<\dc$ guarantees that
the reconstruction thresholds $\dr$ and $\dr^\star$ coincide.
Subsequently, with the condensation threshold well out of reach at the time,
 Montanari, Restrepo and Tetali~\cite{montanari2011reconstruction} attempted to verify the required overlap concentration
	at least for all $d$ up to the tree reconstruction threshold.
However, their combinatorial (essentially second moment) argument did not cover the entire range of parameters,
e.g., all $q$ and/or all $\beta$ in the Potts model.
By comparison to~\cite{GM,montanari2011reconstruction}, \Thm~\ref{thrm:TreeGraphEquivalence*} provides a different, perhaps more conceptual angle:
tree reconstruction is equivalent to reconstruction in the teacher-student model for {\em all} $d$, and up
to $\dc$ the equivalence extends to the random factor graph model $\G$ thanks to contiguity.

\subsection{Examples}\label{Sec_examples}
Here we show how the models from \Sec~\ref{Sec_intro} can be cast as random factor graph models that satisfy the assumptions {\bf SYM}, {\bf BAL}, {\bf POS} and {\bf MIN}.

\subsubsection{The Potts antiferromagnet}
For an integer $q\geq2$ and a real $\beta>0$ we let $\Omega=\{1,\ldots,q\}$ and
	\begin{equation}\label{eqPottsPsi}
	\psi_{q,\beta}:(\sigma_1,\sigma_2)\in\Omega^2\mapsto\exp(-\beta\vecone\{\sigma_1=\sigma_2\}).
	\end{equation}
Let $\Psi$ be the singleton $\{\psi_{q,\beta}\}$.
Then the Potts model on a given graph $G=(V,E)$ can be cast as a $\Psi$-factor graph:
we just set up the factor graph $G'=(V,E,(\partial e)_{e\in E},(\psi_e)_{e\in E})$ whose variable nodes are the vertices of the original graph $G$ and
whose constraint nodes are the edges of $G$.
For an edge $e=\{x,y\}\in E$ we let $\partial e=(x,y)$, where, say, the order of the neighbors is chosen randomly, and $\psi_e=\psi_{q,\beta}$, of course. Then $\mu_{G'}$ coincides with $\mu_{G,q,\beta}$ from (\ref{eqPottsAntiferro}).

To mimic the Potts model on the \Erdos-\Renyi\ graph $\GG=\GG(n,d/n)$ we let $P_{\mathrm{Potts}}=\delta_{\psi_{q,\beta}}$ be the atom on $\psi_{q,\beta}$.
Then the sole difference between the factor graph representation $\GG'$ of the \Erdos-\Renyi\ graph $\GG$ and $\G=\G(n,\vm,P)$ is that the latter may have factor nodes $a$ such that $\partial_1 a=\partial_2a$ (``self-loops'')
or pairs of distinct factor nodes $a,b$ such that $\{\partial_1a,\partial_2a\}=\{\partial_1b,\partial_2b\}$ (``double-edges'').
However, conditioning on the event $\fS$ rules out self-loops and double-edges.
Indeed, we have the following.

\begin{fact}[{\cite[\Lem~4.1]{CKPZ}}]\label{Lemma_PottsContig}
The random factor graph $\GG'$ and $\G$ given $\fS$ are mutually contiguous.
\end{fact}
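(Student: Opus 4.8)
The statement to prove is Fact~\ref{Lemma_PottsContig}: the factor-graph representation $\GG'$ of the Erd\H{o}s--R\'enyi graph $\GG(n,d/n)$ and the random factor graph $\G=\G(n,\vm,P_{\mathrm{Potts}})$ conditioned on $\fS$ are mutually contiguous.

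\textbf{Approach.} The plan is to exhibit a coupling (or, equivalently, to bound the total variation distance between the two distributions by controlling a likelihood ratio) that identifies both models with the same underlying combinatorial object once we condition away the degenerate features. First I would recall that $\vm\stacksign{d}=\Po(dn/k)$ with $k=2$, so $\G$ has a Poisson number of constraint nodes, each of which picks an ordered pair $(\partial_1 a,\partial_2 a)\in V_n^2$ uniformly and independently, all with the single weight function $\psi_{q,\beta}$. Conditioning on $\fS$ forces (i) $\partial_1 a\neq\partial_2 a$ for every constraint node and (ii) the neighborhoods $\{\partial_1 a,\partial_2 a\}$ to be pairwise distinct as unordered pairs. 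Under $\fS$ the constraint nodes thus correspond bijectively (up to the irrelevant internal ordering of each pair, which the weight function $\psi_{q,\beta}$ is symmetric under) to a set of $\vm$ distinct edges on $V_n$, i.e.\ to a simple graph. So the task reduces to comparing the edge set of $\GG(n,d/n)$ with a uniformly random simple graph on a $\Po$-distributed number of edges, after conditioning the latter's edge count on being at most $\binom n2$.

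\textbf{Key steps.} (1) Show that in $\G$ the event $\fS$ has probability bounded away from $0$ as $n\to\infty$: the expected number of self-loops is $O(1)$ and the expected number of repeated neighborhoods is $O(1)$, so by a second-moment or Poisson-approximation argument $\pr[\fS]$ converges to a positive constant (indeed to $\exp(-d/2-d^2/4)$ or the like — the precise value is immaterial). (2) Condition on the number $\vm=m$ of constraint nodes; given $\fS$, the $m$ distinct unordered neighborhoods form a uniformly random $m$-subset of the $\binom n2$ possible edges, which is exactly $\GG(n,m)$, the Erd\H{o}s--R\'enyi graph with a fixed edge count. (3) Recall the classical fact that $\GG(n,d/n)$ and $\GG(n,m)$ with $m\sim\Bin(\binom n2,d/n)$ (equivalently, after the standard comparison, with $m$ concentrated around $dn/2$) are mutually contiguous; more precisely, for $\GG=\GG(n,d/n)$ the edge count $|E(\GG)|\sim\Bin(\binom n2,d/n)$ and this is contiguous to $\Po(dn/2)$ conditioned to lie in $\{0,\ldots,\binom n2\}$ — the two binomial/Poisson laws have bounded likelihood ratio on the relevant scale. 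Assemble (1)--(3): for a sequence of events $\cA_n$, $\pr[\G\in\cA_n\mid\fS]\to0$ iff the corresponding simple-graph event has probability $\to0$ under $\GG(n,\vm)$ iff it does under $\GG(n,d/n)$ iff $\pr[\GG'\in\cA_n]\to0$, where the middle equivalence is the $\GG(n,m)$-vs-$\GG(n,p)$ contiguity and the translation between $\cA_n$ as an event on factor graphs and as an event on simple graphs is the bijection from step (2).

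\textbf{Main obstacle.} The genuinely delicate point is step (1) together with making the bijection in step (2) respect the factor-graph structure exactly: a factor graph records \emph{ordered} neighborhoods and allows the same unordered pair to appear with either orientation, so one must check that conditioning on $\fS$ and then forgetting orientations induces a measure-preserving map onto simple graphs (each simple graph with $m$ edges is hit by exactly $2^m$ orientation-choices, all equiprobable, so the pushforward is uniform — the factor of $2^m$ cancels). Equivalently, one shows $\psi_{q,\beta}$ being permutation-symmetric means the Gibbs measure and all the events in question are insensitive to the orientations, so nothing is lost. Everything else is standard random-graph bookkeeping; since the paper attributes this to \cite[\Lem~4.1]{CKPZ} I would simply cite that source and sketch only this reduction, as the heavy lifting — the $\GG(n,m)$-vs-$\GG(n,p)$ contiguity and the Poisson-approximation bound on $\pr[\fS]$ — is entirely classical.
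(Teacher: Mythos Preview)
Your approach is correct and is essentially the standard argument for this contiguity. Note, however, that the paper does not actually prove this statement: it is recorded as a Fact and attributed to \cite[\Lem~4.1]{CKPZ}, so there is no ``paper's own proof'' to compare against. Your sketch---reducing $\G$ given $\fS$ to a uniform simple graph on $\vm$ edges via the orientation-forgetting bijection, then invoking the classical $\GG(n,m)$-vs-$\GG(n,p)$ contiguity together with the Poisson/Binomial comparison for the edge count---is exactly the argument behind the cited lemma, and your identification of the orientation bookkeeping as the only point requiring care is accurate.
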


\begin{lemma}\label{Lemma_Potts}
The assumptions {\bf SYM}, {\bf BAL}, {\bf POS} and {\bf MIN} hold for $P_{\mathrm{Potts}}$ for all $q\geq2$ and all $\beta>0$.
\end{lemma}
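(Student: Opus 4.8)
The plan is to exploit that $\Psi=\{\psi_{q,\beta}\}$ is a singleton, $k=2$ and $P_{\mathrm{Potts}}=\delta_{\psi_{q,\beta}}$ is an atom, so that every condition becomes an elementary computation with the one weight function. Write $c=1-\eul^{-\beta}\in(0,1)$, so that $\psi_{q,\beta}(s,t)=1-c\,\vecone\{s=t\}$; then $\xi=q^{-2}\bc{q\eul^{-\beta}+q(q-1)}=(q-1+\eul^{-\beta})/q$, and the tail conditions \eqref{eqBounded} are immediate because $\psi_{q,\beta}$ takes only the values $\eul^{-\beta},1\in(0,2)$ and, for $\beta>0$, is non-constant with $\psi_{q,\beta}\not\equiv\xi$. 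For {\bf SYM} the identity \eqref{eqSYM} is just $\sum_{t\in\Omega}\psi_{q,\beta}(\omega,t)=\eul^{-\beta}+q-1=q\xi=q^{k-1}\xi$ (and symmetrically in the other coordinate, since $\psi_{q,\beta}$ is symmetric), while the permutation requirement on $P_{\mathrm{Potts}}$ holds trivially as $\psi_{q,\beta}^\theta=\psi_{q,\beta}$.

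For {\bf BAL} I would plug in: $\phi(\mu)=\sum_{s,t}(1-c\,\vecone\{s=t\})\mu(s)\mu(t)=1-c\norm{\mu}_2^2$, which is strictly concave (as $c>0$) and maximised over $\cP(\Omega)$ exactly where $\norm{\mu}_2^2$ is minimal, i.e.\ at the uniform distribution. For {\bf MIN}, the same bookkeeping with $\psi_{q,\beta}(\sigma_1,\sigma_2)\psi_{q,\beta}(\tau_1,\tau_2)=1-c\vecone\{\sigma_1=\sigma_2\}-c\vecone\{\tau_1=\tau_2\}+c^2\vecone\{\sigma_1=\sigma_2\}\vecone\{\tau_1=\tau_2\}$, using the marginal constraints $\sum_t\rho(s,t)=\sum_s\rho(s,t)=1/q$, yields
\[
\sum_{\sigma,\tau\in\Omega^2}\PSI(\sigma)\PSI(\tau)\prod_{i=1}^2\rho(\sigma_i,\tau_i)=1-\frac{2c}q+c^2\norm{\rho}_2^2 .
\]
Since $c>0$, this is minimised precisely where $\norm{\rho}_2^2=\sum_{s,t}\rho(s,t)^2$ is minimal, and Cauchy--Schwarz gives $\norm{\rho}_2^2\ge q^{-2}$ with equality only for $\rho=\bar\rho$; hence $\bar\rho$ is the unique global minimiser, as {\bf MIN} demands.

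The only condition needing an idea is {\bf POS}, and that is where I expect the (mild) main obstacle. Since for probability vectors $\mu,\nu$ on $\Omega$ one has $\sum_{\tau\in\Omega^2}\psi_{q,\beta}(\tau_1,\tau_2)\mu(\tau_1)\nu(\tau_2)=1-c\scal{\mu}{\nu}$, the desired inequality becomes $\Erw\brk{\Lambda(1-cX)+\Lambda(1-cY)-2\Lambda(1-cZ)}\ge0$, where $X=\scal{\RHO_1}{\RHO_2}$, $Y=\scal{\RHO_1'}{\RHO_2'}$, $Z=\scal{\RHO_1}{\RHO_2'}$ with $\RHO_1,\RHO_2$ drawn from $\pi$, $\RHO_1',\RHO_2'$ from $\pi'$, independently. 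The key move is to insert the expansion $\Lambda(1-u)=-u+\sum_{l\ge2}u^l/(l(l-1))$, which converges absolutely and uniformly for $u\in[0,c]$, and interchange $\Erw$ with the sum (legitimate since $\sum_{l\ge2}c^l/(l(l-1))\le1$) to get
\[
\Erw\brk{\Lambda(1-cX)+\Lambda(1-cY)-2\Lambda(1-cZ)}=-c\,\Erw[X+Y-2Z]+\sum_{l\ge2}\frac{c^l}{l(l-1)}\bc{\Erw X^l+\Erw Y^l-2\Erw Z^l}.
\]
The linear term vanishes because $\pi,\pi'\in\cPcent(\Omega)$ have uniform barycenter, whence $\Erw X=\Erw Y=\Erw Z=1/q$. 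For $l\ge2$, expanding $X^l=\sum_{\vec s\in\Omega^l}\prod_{j=1}^l\RHO_1(s_j)\RHO_2(s_j)$ and using independence gives $\Erw X^l=\sum_{\vec s}M_\pi(\vec s)^2$, $\Erw Y^l=\sum_{\vec s}M_{\pi'}(\vec s)^2$ and $\Erw Z^l=\sum_{\vec s}M_\pi(\vec s)M_{\pi'}(\vec s)$, where $M_\pi(\vec s)=\Erw_{\mu\sim\pi}\prod_{j\le l}\mu(s_j)\ge0$; hence $\Erw X^l+\Erw Y^l-2\Erw Z^l=\sum_{\vec s\in\Omega^l}\bc{M_\pi(\vec s)-M_{\pi'}(\vec s)}^2\ge0$. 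Since each coefficient $c^l/(l(l-1))$ is positive, {\bf POS} follows. Thus the only real insight needed is to recognise this power-series-plus-sum-of-squares mechanism: once the linear term is seen to cancel via the barycenter condition, nonnegativity is forced term by term.
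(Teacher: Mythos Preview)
Your proof is correct and follows essentially the same approach as the paper. The paper delegates {\bf SYM}, {\bf BAL} and {\bf POS} to \cite[\Lem~4.3]{CKPZ} and only writes out the {\bf MIN} computation, which is identical to yours; your {\bf POS} argument via the expansion $\Lambda(1-u)=-u+\sum_{l\ge2}u^l/(l(l-1))$ together with the sum-of-squares identity $\Erw X^l+\Erw Y^l-2\Erw Z^l=\sum_{\vec s}(M_\pi(\vec s)-M_{\pi'}(\vec s))^2$ is precisely the mechanism of that reference (and is the same trick the paper itself spells out for the $k$-spin model in \Lem~\ref{Lemma_pSpin}).
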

\begin{proof}
That {\bf SYM}, {\bf BAL} and {\bf POS} hold is known already~\cite[\Lem~4.3]{CKPZ}.
With respect to {\bf MIN}, we observe that for any distribution $\rho$ on $\Omega\times\Omega$ with uniform marginals,
	\begin{align*}
	\sum_{\sigma_1,\sigma_2,\tau_1,\tau_2\in\Omega}\psi_{q,\beta}(\sigma_1,\sigma_2)\psi_{q,\beta}(\tau_1,\tau_2)
				\rho(\sigma_1,\tau_1)\rho(\sigma_2,\tau_2)
		&=1-2(1-\eul^{-\beta})/q+(1-\eul^{-\beta})^2\sum_{\sigma,\tau\in\Omega}\rho(\sigma,\tau)^2.
	\end{align*}
The last expression is strictly convex as a function of $\rho$ with the minimum attained at the uniform distribution.
\end{proof}

Thus the results stated in \Sec~\ref{Sec_IntroPotts} follow from the results for general random factor graph models.
Indeed, to obtain \Thm~\ref{Thm_PottsNor} we observe that the matrices from (\ref{eqPhiMatrices}), (\ref{eqXi}) and (\ref{eqPhi}) satisfy
	\begin{equation}\label{eqPottsMatrices}
	\Phi=\Phi_{\psi_{q,\beta}}=(q-1+\eul^{-\beta})^{-1}(\vecone-(1-\eul^{-\beta})\id),\qquad
		\Xi=(q-1+\eul^{-\beta})^{-2}((\vecone-(1-\eul^{-\beta})\id)\tensor(\vecone-(1-\eul^{-\beta})\id)),
	\end{equation}
where $\vecone$ is the all-ones matrix and $\id$ is the identity matrix.
Clearly, the eigenvalues of $\Phi$ are $1$ and $(\eul^{-\beta}-1)/(q-1+\eul^{-\beta})$, the latter with multiplicity $q-1$.
Hence,
	\begin{align*}
	\Tr(\Phi^l)-1&=(q-1)\bcfr{\eul^{-\beta}-1}{q-1+\eul^{-\beta}}^l,&\ln\Tr(\Phi^l)&=\ln\bc{1+(q-1)\bcfr{\eul^{-\beta}-1}{q-1+\eul^{-\beta}}^l}.
	\end{align*}
Thus, \Thm~\ref{Thm_PottsNor} follows from \Thm~\ref{Thm_SSC} and \Thm~\ref{Thm_PottsOverlap} from \Thm~\ref{Thm_overlap}.
Finally, (\ref{eqPottsMatrices}) shows that $\max_{x\in\cE:\|x\|=1}\scal{\Xi x}x=(1-\eul^{-\beta})^2/(q-1+\eul^{-\beta})^2$ and thus (\ref{eqGenKS}) matches
the ``classical'' Kesten-Stigum bound (\ref{eqPottsKS}).

\subsubsection{The stochastic block model}
The teacher-student model $\G^*$ corresponding to $P_{\mathrm{Potts}}$ is very similar to the stochastic block model.
As in the case of the Potts model on the \Erdos-\Renyi\ graph, the only discrepancy is due to the possible occurrence of self-loops and double-edges.

\begin{lemma}[{\cite[\Lem~4.4]{CKPZ}}]\label{Lemma_SBMcontig}
For any $q\geq2$, $\beta>0$, $d>0$
the stochastic block model $\GG^*$ and the teacher-student model $\G^*$ given $\fS$ are mutually contiguous.
\end{lemma}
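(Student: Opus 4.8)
The plan is to couple the ground truth in the two experiments and then compare the resulting edge distributions directly, following the strategy behind Fact~\ref{Lemma_PottsContig} (that is, \cite[\Lem~4.1]{CKPZ}) but with the planted weights from (\ref{eqTeacher}); recall that $k=2$ for $P_{\mathrm{Potts}}$, so $\G^*$ has $\vm\stacksign{d}=\Po(dn/2)$ constraint nodes. Both $\GG^*$ and $\G^*$ begin by drawing $\SIGMA^*\in\Omega^{V_n}$ uniformly, so it suffices to compare the laws of the two graphs conditional on a fixed $\SIGMA^*$. Write $n_\omega=|(\SIGMA^*)^{-1}(\omega)|$, and call $\SIGMA^*$ \emph{balanced} if $\max_\omega|n_\omega-n/q|\le n^{2/3}$; the complementary event has probability $o(1)$, identically in both models, so it may be discarded, and for balanced $\SIGMA^*$ one has $\sum_\omega n_\omega^2=n^2/q+O(n)$.

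Conditionally on $\SIGMA^*$, Poisson thinning applied to (\ref{eqTeacher}) shows that the numbers $N_{(u,v)}$ of constraint nodes of $\G^*$ with neighbourhood exactly $(u,v)\in V_n^2$ are mutually independent, with $N_{(u,v)}$ a Poisson variable of mean $\frac{dn}2\,\psi_{q,\beta}(\SIGMA^*(u),\SIGMA^*(v))/Z$, where $Z=\sum_{z_1,z_2\in V_n}\psi_{q,\beta}(\SIGMA^*(z_1),\SIGMA^*(z_2))=\eul^{-\beta}\sum_\omega n_\omega^2+n^2-\sum_\omega n_\omega^2$. Now $\fS$ is the intersection of the events $\{N_{(y,y)}=0\}$, $y\in V_n$, and $\{N_{(y_1,y_2)}+N_{(y_2,y_1)}\le 1\}$ over unordered pairs $\{y_1,y_2\}$ with $y_1\ne y_2$, and these events involve pairwise disjoint sub-families of $(N_{(u,v)})$, so conditioning on $\fS$ factorises over unordered pairs. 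Since $\psi_{q,\beta}$ is symmetric, $N_{(y_1,y_2)}+N_{(y_2,y_1)}$ is Poisson of mean $\nu_{\{y_1,y_2\}}=dn\,\psi_{q,\beta}(\SIGMA^*(y_1),\SIGMA^*(y_2))/Z$; hence, given $\fS$, each unordered pair of distinct vertices is independently an edge of $\G^*$ with probability $\nu_{\{y_1,y_2\}}/(1+\nu_{\{y_1,y_2\}})$. As $\nu_{\{y_1,y_2\}}=O(1/n)$ uniformly, $\pr[\fS|\SIGMA^*]=\Omega(1)$, so the conditioning is harmless.

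It remains to match these probabilities with those of $\GG^*$. For balanced $\SIGMA^*$ one has $Z=n^2(q-1+\eul^{-\beta})/q+O(n)$, so a short computation gives
$$\frac{\nu_{\{y_1,y_2\}}}{1+\nu_{\{y_1,y_2\}}}=\begin{cases}d_{\mathrm{in}}/n+O(n^{-2})&\text{if }\SIGMA^*(y_1)=\SIGMA^*(y_2),\\ d_{\mathrm{out}}/n+O(n^{-2})&\text{otherwise,}\end{cases}$$
with $d_{\mathrm{in}},d_{\mathrm{out}}$ as in \Sec~\ref{Sec_IntroSBM}. Thus, given $\SIGMA^*$, both $\GG^*$ and $\G^*$ given $\fS$ are products of $\bink n2$ Bernoullis whose parameters are $\Theta(1/n)$ and differ pairwise by $O(n^{-2})$. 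The standard squared-Hellinger estimate for product measures then bounds the squared Hellinger distance between them by a constant times $\bink n2\cdot n^{-4}/n^{-1}=O(1/n)$, hence their total variation distance by $O(n^{-1/2})$. Averaging over $\SIGMA^*$ and adding the $o(1)$ mass of non-balanced colourings shows that the laws of $\GG^*$ and of $\G^*$ given $\fS$ have vanishing total variation distance, which in particular implies mutual contiguity.

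The only delicate point is the bookkeeping in the middle paragraph: one must check that the self-loop and multi-edge constraints defining $\fS$ act on disjoint blocks of the independent Poisson family, so that conditioning on $\fS$ really decouples into independent per-pair Bernoullis, and that the normalising constant $Z$ of a balanced colouring is precisely such that the planted edge probabilities reproduce $d_{\mathrm{in}}/n$ and $d_{\mathrm{out}}/n$ to leading order, the $O(n^{-2})$ corrections being summable against the $\bink n2$ pairs. Everything else is routine.
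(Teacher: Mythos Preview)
The paper does not prove this lemma; it is quoted verbatim from \cite[Lemma~4.4]{CKPZ} and used as a black box.  Your argument is a correct self-contained proof, and it is in fact the natural one: Poisson thinning of the $\Po(dn/2)$ constraint nodes into independent counts $N_{(u,v)}$, the observation that $\fS$ decouples across unordered pairs, and a Hellinger comparison of the resulting product-Bernoulli law with the stochastic block model.

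Two small remarks.  First, with the balance threshold $n^{2/3}$ you chose one has $\sum_\omega n_\omega^2=n^2/q+\sum_\omega\delta_\omega^2=n^2/q+O(n^{4/3})$, not $+O(n)$; consequently the per-pair discrepancy is $O(n^{-5/3})$ rather than $O(n^{-2})$.  The Hellinger sum is then $O(n^2\cdot n^{-7/3})=O(n^{-1/3})$, still $o(1)$, so the conclusion is unaffected.  (Tightening the balance window to $O(\sqrt{n}\ln n)$ recovers your stated rate.)  Second, $\G^*$ given $\fS$ lives on oriented, labelled constraint nodes while $\GG^*$ is an unoriented simple graph; you are implicitly using that, because $\psi_{q,\beta}$ is symmetric, the orientation of each surviving constraint and the labelling $a_1,\ldots,a_{\vm}$ are uniform and independent of the underlying simple graph, so the forgetful map is measure-preserving in both directions and contiguity of the simple-graph laws lifts to the factor-graph laws.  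It is worth saying this explicitly.
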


\noindent
\Thm~\ref{Thm_contigPotts} follows from \Thm~\ref{Cor_contig} and \Lem~\ref{Lemma_SBMcontig}.

\subsubsection{The $k$-spin model}
Let $\Omega=\{\pm1\}$.
For $J\in\RR,\beta>0$ we could define the weight function
	$\tilde\psi_{J,\beta}(\sigma_1,\ldots,\sigma_k)=\exp(\beta J\sigma_1\cdots\sigma_k)$ to match the definition (\ref{eqkSpin1}) of the $k$-spin model.
However, these functions do not necessarily take values in $(0,2)$.
To remedy this problem we introduce 
$\psi_{J,\beta}(\sigma_1,\ldots,\sigma_k)=1+\tanh(J\beta)\sigma_1\cdots\sigma_k$.
Then (cf.~\cite{PanchenkoTalagrand})
	\begin{align}\label{eqcosh}
	\tilde\psi_{J,\beta}(\sigma_1,\ldots,\sigma_k)=\cosh(J\beta)\psi_{J,\beta}(\sigma_1,\ldots,\sigma_k).
	\end{align}
Thus, let $\Psi=\{\psi_{J,\beta}:J\in\RR\}$, let $\PSI=\psi_{\vec J,\beta}$, where $\vec J$ is a standard Gaussian and let $P_{\vec J,\beta}$ be the law of $\PSI$.
Similarly as in the case of the Potts model we have the following.

\begin{fact}\label{Fact_kspin}
For all $k\geq2,d>0,\beta>0$ the random measure $\mu_{\HH,\vec J,\beta}$ from (\ref{eqkSpin1})
and the Gibbs measure $\mu_{\G(n,\vm,P_{\vec J,\beta})}$ of the random factor graph given $\fS$ are mutually contiguous.
Furthermore,
	\begin{align*}
	\Erw\brk{\ln Z_\beta(\HH,\J)-\sum_{e\in E(\HH)}\ln\cosh(\beta\vec J_e)}&=\Erw[\ln Z(\G(n,\vm,P_{\vec J,\beta}))|\fS]+o(n).
	\end{align*}
\end{fact}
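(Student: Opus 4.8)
The plan is to exhibit an explicit, weight-preserving bijection between the combinatorial structures underlying $\HH_k(n,p)$ with $p=d/\bink{n-1}{k-1}$ and the random factor graph $\G(n,\vm,P_{\vec J,\beta})$ conditioned on $\fS$, and to track how the partition functions transform under this bijection. First I would recall that, conditioned on $\fS$, the hypergraph spanned by the neighbourhoods of the constraint nodes of $\G$ is simple and $k$-uniform, so its edge set is a uniformly random set of $\vm$ distinct $k$-subsets of $V_n$; since $\vm\stacksign{d}{=}\Po(dn/k)$ and $dn/k$ equals $p\bink nk(1+o(1))$, a standard Poissonisation-versus-binomial comparison (as in \cite[\Lem~4.1]{CKPZ}) shows that the law of this edge set is mutually contiguous with that of $E(\HH)$. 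This gives the contiguity of the two ensembles of underlying hypergraphs. On each hyperedge $e$ the factor graph carries an independent weight function $\psi_{\vec J_e,\beta}$ with $\vec J_e$ a standard Gaussian, exactly matching the Gaussian couplings attached to the edges of $\HH$; hence the joint law of (hypergraph, couplings) transfers, which together with the observation that the Gibbs measures $\mu_{\HH,\vec J,\beta}$ and $\mu_{\G(n,\vm,P_{\vec J,\beta})}$ are measurable images of this common data yields the asserted mutual contiguity of the random measures.

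For the free energy identity I would use the pointwise relation \eqref{eqcosh}: for any fixed hypergraph and any coupling vector,
	$$\tilde\psi_{J_e,\beta}(\sigma(e))=\cosh(\beta J_e)\,\psi_{J_e,\beta}(\sigma(e)),$$
so that multiplying over all edges gives
	$$\prod_{e\in E}\tilde\psi_{J_e,\beta}(\sigma(e))=\Bigl(\prod_{e\in E}\cosh(\beta J_e)\Bigr)\prod_{e\in E}\psi_{J_e,\beta}(\sigma(e)).$$
Summing over $\sigma\in\{\pm1\}^{V_n}$ and taking logarithms, $\ln Z_\beta(\HH,\J)=\sum_{e\in E(\HH)}\ln\cosh(\beta\vec J_e)+\ln Z(\HH',\J)$, where $\HH'$ denotes the factor-graph representation of $\HH$ with weight functions $\psi_{\vec J_e,\beta}$. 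Thus $\Erw[\ln Z_\beta(\HH,\J)-\sum_{e}\ln\cosh(\beta\vec J_e)]=\Erw[\ln Z(\HH',\J)]$, and it remains to compare $\Erw[\ln Z(\HH',\J)]$ with $\Erw[\ln Z(\G(n,\vm,P_{\vec J,\beta}))\mid\fS]$ up to an $o(n)$ error.

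The last comparison is where the only real work lies. The two ensembles differ in the number of constraint nodes (a fixed Poisson count $\Po(dn/k)$ versus $|E(\HH)|\stacksign{d}{=}\Bin(\bink nk,p)$) and in that $\G$ unconditioned allows self-loops and repeated hyperedges, which are excluded by $\fS$. To handle the count discrepancy I would note that both $\vm$ and $|E(\HH)|$ concentrate around $dn/k$ with fluctuations $O(\sqrt n)$, that adding or deleting $O(\sqrt n)$ random constraint nodes changes $\ln Z$ by at most $O(\sqrt n\cdot\|\ln\psi\|_\infty)$ in expectation — here one invokes the first bound in \eqref{eqBounded}, which controls $\Erw[\ln^8(\cdots)]$ and hence the relevant moments of $\ln\PSI$ — and that a standard coupling aligns the two random edge-sets except on an $o(1)$-probability event, on which the contribution to $\Erw[\ln Z]$ is $o(n)$ by Cauchy--Schwarz together with the integrability in \eqref{eqBounded}. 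The effect of conditioning on $\fS$ is similar: the expected number of self-loops and repeated edges is $O(1)$, deleting them perturbs $\ln Z$ by $O(1)$ in expectation, and $\pr[\fS]=\Omega(1)$, so conditioning costs only $o(n)$. Assembling these estimates gives $\Erw[\ln Z(\HH',\J)]=\Erw[\ln Z(\G(n,\vm,P_{\vec J,\beta}))\mid\fS]+o(n)$, which is the claimed identity. The main obstacle is precisely this error-control step: one must verify that the (mild) heavy tails permitted by \eqref{eqBounded} are still light enough that the low-probability bad events and the $O(\sqrt n)$ symmetric-difference of edge sets contribute only $o(n)$ to the \emph{expected} log-partition function, rather than merely with high probability.
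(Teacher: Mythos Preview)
The paper does not actually prove this statement: it is labelled a \emph{Fact} and introduced with the remark ``Similarly as in the case of the Potts model we have the following,'' pointing back to Fact~\ref{Lemma_PottsContig}, which in turn is quoted from \cite[\Lem~4.1]{CKPZ}. Your proposal supplies exactly the routine argument the paper suppresses: the contiguity of the underlying hypergraphs via the standard Poisson--binomial comparison (as in \cite[\Lem~4.1]{CKPZ}), the observation that the Gaussian couplings on the edges of $\HH$ and the weight functions $\psi_{\vec J_e,\beta}$ on the constraint nodes of $\G$ given $\fS$ are identically distributed given the hypergraph, and the free-energy identity from the pointwise relation~\eqref{eqcosh} plus an $o(n)$ comparison controlled by~\eqref{eqBounded}. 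This is correct and is what the paper intends; there is nothing to compare against.
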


\noindent
Instead of just verifying the conditions {\bf SYM}, {\bf BAL}, {\bf POS} and {\bf MIN} for the $k$-spin model with standard Gaussian
couplings $\vec J$, we will establish the following more general statement.
Recall that a random variable $\vec J$ is {\em symmetric} if $\vec J$ and $-\vec J$ have the same distribution.

\begin{lemma}\label{Lemma_pSpin}
For any $k\geq2$, $\beta>0$ and for any symmetric random variable $\vec J$ such that
$P_{\vec J,\beta}$ satisfies (\ref{eqBounded}) the three conditions {\bf SYM}, {\bf BAL} and {\bf POS} hold.
If $k$ is even, then {\bf MIN} holds as well .
\end{lemma}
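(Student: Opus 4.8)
The goal is to verify the four structural conditions {\bf SYM}, {\bf BAL}, {\bf POS}, {\bf MIN} for the $k$-spin weight family $\Psi=\{\psi_{J,\beta}:J\in\RR\}$ with $\psi_{J,\beta}(\sigma_1,\dots,\sigma_k)=1+\tanh(\beta J)\sigma_1\cdots\sigma_k$ and $P=P_{\vec J,\beta}$ the law of $\psi_{\vec J,\beta}$ for a symmetric coupling $\vec J$. The crucial structural fact, which I would establish first and use everywhere, is that $\xi=q^{-k}\sum_{\sigma\in\Omega^k}\Erw[\PSI(\sigma)]=1$: indeed $\sum_{\sigma\in\{\pm1\}^k}\sigma_1\cdots\sigma_k=0$, so $\sum_\sigma\psi_{J,\beta}(\sigma)=2^k$ for every $J$. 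This also shows that on the support of $P$ every weight function is already ``normalized'', which trivializes several computations.

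\textbf{SYM.} The permutation-invariance of $P$ is immediate because $\psi_{J,\beta}$ is a symmetric function of its arguments, so $\psi_{J,\beta}^\theta=\psi_{J,\beta}$. For the identity \eqref{eqSYM}, fix $i$ and $\omega\in\{\pm1\}$ and compute $\sum_{\tau}\vecone\{\tau_i=\omega\}\psi_{J,\beta}(\tau)=2^{k-1}+\tanh(\beta J)\,\omega\sum_{\tau:\tau_i=\omega}\prod_{j\neq i}\tau_j$; the inner sum vanishes whenever $k\ge2$, leaving $2^{k-1}=q^{k-1}\xi$. Note this holds pointwise in $J$, hence for every $\psi\in\Psi$, as required.

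\textbf{BAL.} By the $\xi=1$ computation, the map $\phi(\mu)=\sum_{\tau}\Erw[\PSI(\tau)]\prod_i\mu(\tau_i)=1+\Erw[\tanh(\beta\vec J)]\prod_{i=1}^k(\mu(1)-\mu(-1))$. Since $\vec J$ is symmetric, $\Erw[\tanh(\beta\vec J)]=0$, so $\phi\equiv1$ is constant, hence (trivially) concave and maximized at the uniform distribution.

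\textbf{POS.} This is the condition I expect to be the main obstacle, because unlike the others it is a genuine convexity/correlation inequality rather than a symmetry vanishing. The plan is to exploit the product structure: writing $x_i=\RHO_i(1)-\RHO_i(-1)\in[-1,1]$ and $x_i'=\RHO_i'(1)-\RHO_i'(-1)\in[-1,1]$ and $t=\tanh(\beta\vec J)\in(-1,1)$, one has $\sum_\tau\PSI(\tau)\prod_i\RHO_i(\tau_i)=1+t\prod_i x_i$, and similarly for the mixed term, so {\bf POS} reduces to showing
	\begin{align*}
	\Erw\brk{\Lambda\Bigl(1+t\textstyle\prod_{i=1}^k x_i\Bigr)+(k-1)\Lambda\Bigl(1+t\textstyle\prod_{i=1}^k x_i'\Bigr)-k\,\Lambda\Bigl(1+t\,x_1\textstyle\prod_{i=2}^k x_i'\Bigr)}\ge0,
	\end{align*}
where the variables $x_i,x_i',t$ are independent (the $x_i$ from $\pi$, the $x_i'$ from $\pi'$, $t$ from $\vec J$). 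Because $\Lambda(1+ts)$ is convex in $s\in[-1,1]$ (its second derivative is $t^2/(1+ts)>0$) and, more to the point, symmetric under $s\mapsto -s$ in expectation once one uses the symmetry of $\vec J$ (so only even moments of $t$ survive), I would expand $\Lambda(1+ts)=\sum_{m\ge2}\frac{(-1)^m}{m(m-1)}t^m s^m$ and use that only even $m$ contribute. For even $m$, the claim becomes $\Erw[(\prod x_i)^m]+(k-1)\Erw[(\prod x_i')^m]\ge k\,\Erw[x_1^m]\Erw[(\prod_{i\ge2}x_i')^m]$, i.e., with $A=\Erw[x_1^m]=\cdots$ (the $x_i$ are i.i.d.) and $B=\Erw[(x_1')^m]$: $A^k+(k-1)B^k\ge k\,A\,B^{k-1}$, and since $m$ is even $A,B\in[0,1]$, so this is exactly the weighted AM--GM inequality $\frac{1}{k}A^k+\frac{k-1}{k}B^k\ge (A^k)^{1/k}(B^k)^{(k-1)/k}=A\,B^{k-1}$. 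Summing the (nonnegative, since the overall sign $(-1)^m/(m(m-1))>0$ for even $m\ge2$) contributions over even $m$ and taking expectation over $t$ gives {\bf POS}. (The care needed with interchanging sum and expectation is handled by the tail bounds \eqref{eqBounded}, which control $\ln$ of the weights; absolute convergence of the $\Lambda$-series on the relevant interval follows since $\abs{t}<1$ a.s.\ and the arguments stay in a compact subset of $(0,2)$ up to the integrability assumption.)

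\textbf{MIN.} Assume $k$ even. For $\rho\in\cR(\{\pm1\})$ the marginal constraints force $\rho(1,1)+\rho(1,-1)=\rho(1,1)+\rho(-1,1)=1/2$, so $\rho$ is a one-parameter family; set $c=\sum_{\sigma\in\{\pm1\}}\rho(\sigma,\sigma)-\rho(\sigma,-\sigma)=4\rho(1,1)-1\in[-1,1]$, equivalently $c=\sum_{s,t}st\,\rho(s,t)$. Then $\sum_{\sigma,\tau\in\{\pm1\}^k}\Erw[\PSI(\sigma)\PSI(\tau)]\prod_i\rho(\sigma_i,\tau_i)$: expanding $\PSI(\sigma)\PSI(\tau)=(1+t\prod\sigma_i)(1+t\prod\tau_i)$ with $t=\tanh(\beta\vec J)$ and using $\sum_{s,t}\rho(s,t)=1$, $\sum_{s,t}st\,\rho(s,t)=c$, the cross terms linear in a single $\prod\sigma_i$ or $\prod\tau_i$ vanish (each contributes a factor $\sum_{s,t}st\rho(s,t)\cdots$ times a vanishing sum — more precisely $\sum_\sigma\prod_i\sigma_i\prod_i\rho(\sigma_i,\tau_i)$ summed appropriately collapses using $\sum_{s}\rho(s,t)=1/2$), leaving $1+\Erw[t^2]\,c^k$. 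Since $k$ is even, $c^k\ge0$ with equality iff $c=0$, i.e., iff $\rho=\bar\rho$; and $\Erw[t^2]=\Erw[\tanh^2(\beta\vec J)]>0$ because $\vec J$ is non-degenerate (guaranteed by the third inequality in \eqref{eqBounded}, which forces $\PSI$ non-constant, hence $t\not\equiv0$). Thus the functional equals $1+\Erw[t^2]c^k$, strictly minimized at $\rho=\bar\rho$, which is {\bf MIN}. (If $k$ were odd, $c^k$ would be negative for $c<0$ and the minimizer would not be $\bar\rho$ — this is exactly why the evenness hypothesis is needed, and why {\bf MIN} is the only one of the four conditions that fails for odd $k$.)

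Throughout, the only quantitative input beyond elementary identities is the integrability package \eqref{eqBounded}, used (i) to justify term-by-term manipulation of the $\Lambda$-series in {\bf POS} and (ii) to guarantee $\Erw[\tanh^2(\beta\vec J)]>0$ in {\bf MIN}; both are invoked as already-assumed hypotheses. I expect {\bf POS} to be where essentially all the work lies — reducing it cleanly to termwise AM--GM is the key simplification — while {\bf SYM}, {\bf BAL}, {\bf MIN} each collapse to a short computation once the identity $\xi=1$ and the symmetry $\Erw[\tanh(\beta\vec J)]=0$ are in hand.
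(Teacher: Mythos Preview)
Your proof is correct and follows essentially the same approach as the paper: for {\bf POS} both you and the paper expand $\Lambda(1+\cdot)$ as a power series, use the symmetry of $\vec J$ to kill odd-order terms, and reduce to the elementary inequality $A^k+(k-1)B^k\ge kAB^{k-1}$ for $A,B\ge0$ (the paper states it in this form rather than invoking weighted AM--GM); for {\bf MIN} both arguments parametrize $\rho\in\cR(\{\pm1\})$ by a single scalar and compute the functional as $1+\Erw[\tanh^2(\beta\vec J)]\cdot(\text{scalar})^k$. One minor slip: your stated series $\Lambda(1+ts)=\sum_{m\ge2}\frac{(-1)^m}{m(m-1)}t^ms^m$ omits the linear term $ts$, but since $\Erw[\tanh(\beta\vec J)]=0$ this term vanishes in expectation and the argument goes through.
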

\begin{proof}
It is immediate that $\xi=1$ and that $P_{\vec J,\beta}$ satisfies {\bf SYM}.
For {\bf BAL} observe that $\mu\mapsto \sum_{\tau\in\Omega^k}\Erw[\PSI(\tau)]\prod_{i=1}^k\mu(\tau_i)$ is constant because $\vec J$ is symmetric.
To verify {\bf POS} we generalize the argument from \cite[Section 4.4]{CKPZ} by observing that for any integer $l\geq1$, with the notation from {\bf POS},
	\begin{align*}
	\left(1-\sum_{\sigma\in\Omega^k}\psi_{\vec J,\beta}(\sigma)\prod_{i=1}^k\RHO_i(\sigma_i)\right)^l
		=\bc{\tanh(\vec J \beta)}^l\prod_{i=1}^k\left(\RHO_i(1)-\RHO_i(-1)\right)^l.
	\end{align*}
Hence, expanding $\Lambda(\nix)$ and using (\ref{eqBounded}) and Fubini's theorem to swap the sum and the expectation, we find
	\begin{align*}
	\Erw\left[\Lambda\left(\sum_{\tau\in\Omega^k}\PSI(\tau)\prod_{i=1}^ k\RHO_i(\tau_i)\right)\right]
		&=-1+\sum_{l=2}^\infty\frac{\Erw\brk{\tanh(\vec J \beta)^l}}{l(l-1)}\Erw\brk{(\RHO_1(1)-\RHO_1(-1))^l}^k.
	\end{align*}
Applying similarly manipulations to the other two terms from {\bf POS} and introducing $X_l=\Erw[(\RHO_1(1)-\RHO_1(-1))^l]$,
$Y_l=\Erw[(\RHO_1'(1)-\RHO_1'(-1))^l]$, we see that {\bf POS} comes down to showing that
	\begin{align}\label{eqkspinPOS}
	\sum_{l=2}^\infty\frac{1}{l(l-1)}\Erw\brk{\tanh(\vec J \beta)^l}\bc{X_l^k-kX_lY_l^{k-1}+(k-1)Y_l^k}&\geq0.
	\end{align}
Since $\vec J$ is symmetric we get $\Erw[\tanh(\vec J \beta)^l]=0$ for odd $l$, while
$\Erw[\tanh(\vec J \beta)^l]\geq0$ and $X_l,Y_l\geq0$ for even $l$.
Hence, (\ref{eqkspinPOS}) follows from the elementary fact that $x^k-kxy^{k-1}+(k-1)y^k\geq0$ for all $x,y\geq0$.

Moving on to {\bf MIN}, we assume that $k$ is even.
Suppose that $\rho\in\cR(\Omega)$ is a distribution on $\Omega\times\Omega$ with uniform marginals and let $\alpha=\rho(1,1)+\rho(-1,-1)$.
Then $\rho(1,1)=\rho(-1,-1)=\alpha/2$, $\rho(1,-1)=\rho(-1,1)=(1-\alpha)/2$ and because $\vec J$ is symmetric,
	\begin{align*}
	\sum_{\sigma,\tau\in\Omega^k}\Erw\brk{\psi_{\vec J,\beta}(\sigma)\psi_{\vec J,\beta}(\tau)}\prod_{i=1}^k\rho(\sigma_i,\tau_i)&=
		1+\Erw[\tanh(\beta\vec J)^2]\bc{\sum_{\sigma,\tau\in\Omega}\sigma\tau\rho(\sigma,\tau)}^k=
			1+\Erw[\tanh(\beta\vec J)^2](2\alpha-1)^k.
	\end{align*}
Because $k$ is even, the last expression is convex with the minimum attained at $\alpha=1/2$, viz.\ $\rho=\bar\rho$.
\end{proof}

\Lem~\ref{Lemma_pSpin} shows not only that the $k$-spin model from \Sec~\ref{Sec_intro_kspin} with a standard Gaussian $\vec J$ satisfies  {\bf SYM}, {\bf BAL},{\bf POS} and {\bf MIN}, but that the same is true if $\vec J$ is the uniform distribution on $\{\pm1\}$.
This model is known as the $k$-XORSAT model in computer science. It is intimately related to low-density generator matrix codes~\cite{Abbe}.

\begin{proof}[Proof of \Thm~\ref{Thm_SK1}]
Comparing (\ref{eqkSpin1}) and (\ref{eqcosh}), we see that
	\begin{align*}
	\frac1n\Erw[\ln Z_\beta(\HH,\J)] 
		&=\frac1n\Erw\brk{\sum_{e\in E(\HH)}\ln\cosh(\beta\vec J_e)}+
		\frac1n\Erw\brk{\ln\sum_{\tau\in\{\pm1\}^{V_n}}\prod_{e\in E(\HH)}1+\tanh\bc{\beta\vec J_e\prod_{y\in e}\tau(y)}}\\
		&=\frac d{\sqrt{2\pi}k}\int_{-\infty}^\infty\ln(\cosh(z))\exp(-z^2/2)\dd z
			+\frac1n\Erw\brk{\ln Z(\G)|\fS}.
	\end{align*}
Therefore, \Thm~\ref{Thm_SK1} follows from \Thm~\ref{Thm_cond} and \Lem~\ref{Lemma_pSpin}.
\end{proof}

\begin{proof}[Proof of \Thm~\ref{Thm_SK2}]
Equations (\ref{eqkSpin1}) and (\ref{eqcosh}) ensure that the Gibbs measures $\mu_{\HH,\vec J,\beta}$ and $\mu_{\G}$ given $\fS$ are identically distributed.
Hence, \Thm~\ref{Thm_SK2} follows from \Thm~\ref{Thm_overlap} and  \Lem~\ref{Lemma_pSpin}.
\end{proof}

\subsection{Discussion and related work}\label{Sec_results4}
The results in this section provide a map of the replica symmetric phase,
its boundary and the evolution of the Gibbs measure within it, thereby vindicating  for a universal class of models the predictions of the cavity method~\cite{pnas}.
The results extend, complement or generalize prior work on the condensation phase transition from~\cite{CKPZ}, which only dealt with the case that the support $\Psi$ of $P$ is finite, and on the reconstruction problem~\cite{GM,montanari2011reconstruction}.
Additionally, in the example of the Potts antiferromagnet and the stochastic block model prior work based on combinatorial methods
only gave approximate results~\cite{Banks,Nor}, whereas the present results are tight for all values of $q,\beta$.
Indeed, a merit of the present approach is that we perform fairly abstract arguments that do not require model-specific deliberations.

Beyond the examples treated explicitly in \Sec~\ref{Sec_examples} there are several other important and well-studied models that
also satisfy the assumptions of our main results.
For instance, Bapst, Coja-Oghlan and Ra\ss mann~\cite{h2c} obtained approximate results on the replica symmetry breaking phase transition in the random hypergraph $2$-coloring problem.
This model is easily seen to satisfy {\bf SYM}, {\bf BAL}, {\bf POS} and {\bf MIN} and thus the main results of the present paper clarify the structure
of the entire replica symmetric phase.
More generally, the hypergraph version of the Potts model satisfies our assumptions as well.
So does the random $k$-NAESAT model, a variant of Boolean satisfiability that resembles the hypergraph $2$-coloring model.

Apart from proving an upper bound on the condensation threshold, the Kesten-Stigum bound plays an important role with respect to statistical inference aspects
of random factor graph models.
Specifically, by extension of the predictions from~\cite{Decelle} for the stochastic block model, it seems natural to expect that there should be
efficient algorithms for both the detection problem and for recovering a non-trivial approximation to the ground truth in the teacher-student model for $d>\dKS$.
On the other hand, an intriguing question is whether for $\dc<d<\dKS$ these two problems may be soluble in exponential time but not efficiently, i.e.,
in polynomial time~\cite{Banks,Decelle}.
Indeed, while \Thm~\ref{Thm_cond} shows that $\dc$ is always finite, there are models where $\dKS=\infty$, e.g., the $k$-XORSAT model.
Thus, for such models there might be an enormous computational gap.
This question is intimately related to the $k$-SAT refutation problem, an important question in computer science~\cite{Feige,Feldman2015}.

There are a few models that fail to satisfy our assumptions. 
For instance, in the random $k$-SAT model~\cite{ANP}
and the hardcore model on the \Erdos-\Renyi\ random graph~\cite{Bandyopadhyay} condition {\bf SYM} is violated.
Indeed, in these two cases the Gibbs marginals are non-uniform in the replica symmetric phase.
In effect, we do not expect that the free energy is as tightly concentrated as \Thm~\ref{Thm_SSC} shows it is in the case of ``symmetric'' models.
Thus, it is not just that the present proof methods do not apply, but ``asymmetric'' models appear to be materially different.
Moreover, ferromagnetic models generally violate {\bf SYM}, {\bf BAL} and {\bf POS}.

A further class of models that we do not treat in this paper is models where the weight functions $\psi$ take values in $\{0,1\}$, thus imposing hard constraints.
An example of this is the ``zero-temperature'' version of the Potts antiferromagnet, better known as the random graph coloring problem~\cite{ANP}.
Certain specific models with hard constraints have received considerable attention in combinatorics.
For example, \cite{Cond,Silent, Feli2} established the precise condensation threshold, a contiguity result and the exact limiting distribution
of the number of $q$-colorings of the \Erdos-\Renyi\ random graph via combinatorial methods under the assumption that $q$ exceeds a large enough constant.
(Subsequently the condensation threshold in the random graph coloring problem was determined for all $q\geq3$~\cite{CKPZ}.)
Similar results, albeit not quite up to the precise condensation threshold, 
are know for the hypergraph $2$-coloring and the $k$-NAESAT problems~\cite{nae,AchMooreHyp2,Feli},
a version of the random $k$-SAT problem with regular literal degrees~\cite{COW} and the independent set problem in random regular graphs~\cite{Bhatnagar}.
Additionally, in zero temperature models the `satisfiability threshold' from where $Z(\G)$ is typically equal to $0$
plays a major role~\cite{Dimitris,yuval,DSS1,DSS3,Greenhill,CrisExact}.

\section{Proof strategy}\label{Sec_Proofs}

\noindent
\emph{Throughout this section we keep the notation from \Sec~\ref{Sec_results}.}

\medskip\noindent
The apex of the present work is \Thm~\ref{Thm_SSC} about the limiting distribution of the free energy;
all the other results either lead up to it or derive from it relatively easily.
The classical approach to proving such a result would be the second moment method, pioneered
in this context by Achlioptas and Moore~\cite{nae}, in combination with the small subgraph conditioning technique
of Robinson and Wormald~\cite{Janson,RobinsonWormald}.
This strategy  was applied to, e.g., the stochastic block model~\cite{Banks} 
and the $k$-spin model~\cite{GT}.
But only in the stochastic block model with two colors and the diluted $2$-spin model was it possible to obtain complete results~\cite{GT,mossel2013proof}.
Indeed, as noticed by Guerra and Toninelli~\cite{GT}, 
a combinatorial second moment computation generally appears to be too crude a device to cover the entire replica symmetric phase.

Therefore, here we pursue a different strategy. 
We craft a proof around the teacher-student model $\G^*$.
More specifically, the main achievement of the recent paper~\cite{CKPZ} was to verify the cavity formula for the leading order $\lim_{n\to\infty}\frac1n\Erw[\ln Z(\G^*)]$ of the free energy in the teacher-student model (in the case that the set $\Psi$ is finite).
We will replace the second moment calculation by that free energy formula, generalized to infinite $\Psi$,
and combine it with a suitably generalized small subgraph conditioning technique.
The challenge is to integrate these two components seamlessly.
We accomplish this by realizing that, remarkably, both arguments are inherently and rather elegantly tied together
via the spectrum of the linear operator $\Xi$ from (\ref{eqXi}).
But to develop this novel approach we first need to recall the classical second moment argument and understand why it founders.

\subsection{Two moments do not suffice}
For any second moment calculation it is crucial to fix the number of constraint nodes because its fluctuations would otherwise boost the variance.
Hence, we will work with a deterministic integer sequence $m=m(n)\geq0$.
More precisely, we will fix $d>0$ and consider specific integer sequences $m=m(n)\geq0$ is such that $|m(n)-dn/k|\leq n^{3/5}$ for all $n$.
Let $\cM(d)$ be the set of all such sequences.

The second moment method rests on showing that $\Erw[Z(\G(n,m))^2]$ is of the same order of magnitude as the square 
	$\Erw[Z(\G(n,m))]^2$ of the first moment.
If so, then standard concentration results
	can be used to show that $\lim_{n\to\infty}\frac1n\Erw[\ln Z(\G(n,m))]=\lim_{n\to\infty}\frac1n\ln\Erw[Z(\G(n,m))]$.
The second limit is easy to compute because the expectation sits inside the logarithm, and thus we  obtain
the leading order of the free energy.

In fact, if we can calculate the second moment $\Erw[Z(\G(n,m))^2]$ sufficiently accurately,
then it may be possible to determine the limiting distribution of $\ln Z(\G(n,m))$ precisely.
For suppose that there is a ``simple'' random variable $Q(\G(n,m))$ such that
	\begin{align}\label{eqSSCMotivation}
	\Var[Z(\G(n,m))]&=(1+o(1))\Var[\Erw[Z(\G(n,m))|Q(\G(n,m))]].
	\end{align}
Then the basic formula
	$\Var[Z(\G(n,m))]=\Var[\Erw[Z(\G(n,m))|Q(\G(n,m))]]+\Erw[\Var[Z(\G(n,m))|Q(\G(n,m))]]$
implies
	\begin{equation}\label{eqSSCMotivation2}
	\Erw[\Var[Z(\G(n,m))|Q(\G(n,m))]]=o(\Erw[Z(\G(n,m))]^2)
	\end{equation}
and typically it is not difficult to deduce from (\ref{eqSSCMotivation2})  that
$\ln Z(\G(n,m))-\ln\Erw[Z(\G(n,m))|Q(\G(n,m))]$ converges to $0$ in probability.
Hence, if $Q(\G(n,m))$ is ``reasonable enough'' so that the law of $\ln\Erw[Z(\G(n,m))|Q(\G(n,m))]$ is easy to express, 
then we have got the limiting distribution of $\ln Z(\G(n,m))$.
The basic insight behind the small subgraph conditioning technique is that (\ref{eqSSCMotivation}) sometimes holds with a variable $Q$ 
that is determined by the statistics of bounded-length cycles in $\G(n,m)$~\cite{Janson,RobinsonWormald}.

Anyhow, the crux of the entire argument is to calculate $\Erw[Z(\G(n,m))^2]$.
Of course, by the linearity of expectation and the independence of the constraint nodes, the second moment can be written in terms of the overlap $\rho_{\sigma,\tau}$ as
	\begin{align}\nonumber
	\Erw[Z(\G(n,m))^2]&=\sum_{\sigma,\tau\in\Omega^{V_n}}\Erw\brk{\prod_{i=1}^m
		\psi_{a_i}(\sigma(\partial_1a_i),\ldots,\sigma(\partial_ka_i))\psi_{a_i}(\tau(\partial_1a_i),\ldots,\tau(\partial_ka_i))}\\
		&=\sum_{\sigma,\tau\in\Omega^{V_n}}\bc{\sum_{s,t\in\Omega^k}\Erw[\PSI(s)\PSI(t)]\prod_{i=1}^k\rho_{\sigma,\tau}(s_i,t_i)}^m.
			\label{eqnotosmm1}
	\end{align}
Given a probability distribution $\rho=(\rho(s,t))_{s,t\in\Omega}$ on $\Omega^2$ such that $n\rho(s,t)$ is integral for all $s,t\in\Omega$,
the number of assignments $\sigma,\tau\in\Omega^{V_n}$ with $\rho_{\sigma,\tau}=\rho$ equals $\bink n{\rho n}$.
Therefore, Stirling's formula yields the approximation
	\begin{align}\label{eqnotosmm2}
	\ln\Erw[Z(\G(n,m))^2]&=\max_{\rho\in\cP(\Omega^2)}n\cH(\rho)+m\ln\bc{\sum_{s,t\in\Omega^k}\Erw[\PSI(s)\PSI(t)]\prod_{i=1}^k\rho(s_i,t_i)}
		+O(\ln n),
	\end{align}
where $\cH(\rho)$ denotes the entropy of $\rho$.
In other words, computing the second moment comes down to identifying the overlap $\rho$ that renders the dominant contribution to (\ref{eqnotosmm1}).
By comparison, under assumptions {\bf SYM} and {\bf BAL} it is not difficult to see (cf.\ \Lem~\ref{Cor_F} below) that the first moment satisfies 
	\begin{align}\label{eqnotosmm3}
	\ln\Erw[Z(\G(n,m))]=n\ln q+m\ln \xi+O(\ln n).
	\end{align}

But there are two major issues with the second moment argument.
First, actually solving the innocent-looking optimization problem (\ref{eqnotosmm2}) turns out to be daunting even in special cases.
For example, in the Potts antiferromagnet the task remains wide open, despite very serious attempts~\cite{AchNaor,Nor}.
The source of the trouble is that the entropy is concave while the second summand in (\ref{eqnotosmm2}) is convex (cf.\ {\bf MIN}), 
causing a proliferation of local maxima.
Second, and even worse, comparing (\ref{eqnotosmm2}) and (\ref{eqnotosmm3}) we can verify easily that the desired second moment bound
$\Erw[Z(\G(n,m,P)^2]=O(\Erw[Z(\G(n,m,P)]^2)$ can hold only if the maximizer $\rho_\star$ of (\ref{eqnotosmm2})  satisfies $\TV{\rho_\star-\bar\rho}=o(1)$.
However, this is not generally true for average degrees $d$ below but near the condensation threshold.
For instance, in the Potts antiferromagnet the second moment exceeds the square of the first moment by an exponential factor $\exp(\Omega(n))$ for
$d$ below the condensation threshold~\cite{Nor}. 

{The problem was noticed   and partly remedied in prior work by applying the second moment method to a suitably truncated random variable (e.g.~\cite{Cond,Nor}).
This method revealed, e.g., the condensation threshold in a few special cases such as
	the random graph $q$-coloring problem~\cite{Cond}, albeit only for $q$ exceeding some (astronomical) constant $q_0$, and
	in the random regular $k$-SAT model for large $k$~\cite{VictorSAT}.
Yet apart from introducing such extraneous conditions, 
ad-hoc arguments of this kind tend to require a meticulous combinatorial study of the specific model.}

\subsection{The condensation phase transition and the overlap}\label{Sec_outline_cond}
The merit of the present approach is that we avoid combinatorial deliberations altogether.
Rather than bothering with the second moment bound~(\ref{eqnotosmm2}) we will employ an asymptotic formula for the free energy
of the teacher-student model $\G^*$.
To be precise, it will be convenient to work with a slightly tweaked version $\hat\G$ of this model:
following~\cite[\Sec~3]{CKPZ}, we let $\hat\G(n,m,P)$ be the random factor graph chosen from the distribution
	\begin{align}\label{eq:NishimoriG}
	\pr\brk{\hat\G(n,m,P)\in\cA}&=\frac{\Erw[Z(\G(n,m,P))\vecone\{\G(n,m,P)\in\cA\}]}{\Erw[Z(\G(n,m,P))]}&\mbox{for any event }\cA.
	\end{align}
Recalling that $\vm=\vm_d(n)$ is a random variable with distribution $\Po(dn/k)$, we also introduce  $\hat\G=\hat\G(n,\vm,P)$.
As before we ease the notation by dropping $P$ where possible.

Loosely speaking $\hat\G(n,m)$ is a reweighted version of $\G(n,m)$ where the probability that $G$ comes up is proportional to $Z(G)$.
Intuitively, the construction of the teacher-student model $\G^*$ induces a similar reweighing as 
the probability that $\G^*=G$ depends on the number of assignments $\SIGMA^*$
that could plausibly be used to generate $G$ via (\ref{eqTeacher}).
In fact, as we shall see in \Sec~\ref{Sec_prelims} it is not difficult to verify the following.

\begin{lemma}\label{Prop_contig}
If $P$ satisfies conditions {\bf SYM} and {\bf BAL}, then $\G^*(n,m,\SIGMA^*)$ and $\hat\G(n,m)$ are mutually contiguous for all $d>0$, $m\in\cM(d)$.
\end{lemma}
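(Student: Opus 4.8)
The plan is to exhibit an explicit measure-theoretic identity linking the laws of $\G^*(n,m,\SIGMA^*)$ and $\hat\G(n,m)$, and then verify that the Radon--Nikodym derivative between them is bounded above and below by constants (depending on $P$ but not on $n$), which immediately yields mutual contiguity. First I would unfold the definitions. Fix a factor graph $G$ on variable nodes $V_n$ and constraint nodes $F_m$ with neighbourhoods $(\partial a_i)_i$ and weight functions $(\psi_{a_i})_i$. From \eqref{eqTeacher} and \textbf{TCH1--TCH3}, summing over the uniformly random ground truth $\SIGMA^*$,
\begin{align*}
\pr\brk{\G^*(n,m,\SIGMA^*)=G}
&=q^{-n}\sum_{\sigma\in\Omega^{V_n}}\prod_{i=1}^m\frac{P(\psi_{a_i})\,\psi_{a_i}(\sigma(\partial a_i))}{\sum_{z_1,\dots,z_k\in V_n}\Erw[\PSI(\sigma(z_1),\dots,\sigma(z_k))]}.
\end{align*}
The key simplification is that under \textbf{SYM} the normalising denominator is independent of $\sigma$: by \eqref{eqSYM}, $\sum_{z_1,\dots,z_k\in V_n}\Erw[\PSI(\sigma(z_1),\dots,\sigma(z_k))]=n^k\xi$ for every $\sigma$. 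Hence the sum over $\sigma$ factorises out the product structure and collapses to $Z(G)$:
\begin{align*}
\pr\brk{\G^*(n,m,\SIGMA^*)=G}
&=q^{-n}(n^k\xi)^{-m}\Big(\prod_{i=1}^m P(\psi_{a_i})\Big)\sum_{\sigma\in\Omega^{V_n}}\prod_{i=1}^m\psi_{a_i}(\sigma(\partial a_i))
=q^{-n}(n^k\xi)^{-m}\psi\text{-prior}(G)\cdot Z(G),
\end{align*}
where $\psi\text{-prior}(G)=\prod_i P(\psi_{a_i})$ is exactly the probability that $\G(n,m,P)$ realises the neighbourhoods and weight functions of $G$ times $(n^k)^m$ (the uniform choice of each of the $m$ ordered $k$-tuples contributes $n^{-k}$). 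Comparing with \eqref{eq:NishimoriG}, which gives $\pr[\hat\G(n,m)=G]=Z(G)\pr[\G(n,m)=G]/\Erw[Z(\G(n,m))]$, we see that \emph{both} $\G^*$ and $\hat\G$ have law proportional to $Z(G)$ times the law of $\G(n,m)$. Since two probability measures proportional to the same nonnegative function are identical, this yields $\G^*(n,m,\SIGMA^*)$ and $\hat\G(n,m)$ are in fact \emph{equal in distribution}, which is much stronger than contiguity.

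Thus the real content is the bookkeeping in the previous paragraph: one must be careful about (i) whether $\G$ is modelled with ordered or unordered neighbourhoods and whether repeated constraint nodes are allowed — this is where the permutation-invariance half of \textbf{SYM}, namely $P(\cA)=P(\{\psi^\theta:\psi\in\cA\})$, is needed so that the ordered-tuple description of $\G^*$ via \eqref{eqTeacher} matches the ordered-tuple description of $\G(n,m)$; and (ii) verifying that the denominator $\Erw[Z(\G(n,m))]$ is finite and positive, which follows from \eqref{eqnotosmm3} (or directly from \eqref{eqBounded}, giving $q^n\xi^m(1+o(1))$ with $\xi\in(0,2)$ bounded away from $0$ by the moment bounds on $\PSI^{-1}$). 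I expect step (i) — reconciling the precise combinatorial conventions for the two models so that the weight-function priors cancel exactly — to be the only genuinely fiddly point; once the identity of the two laws is established, contiguity (indeed equality) for the random $\vm\sim\Po(dn/k)$ version follows by conditioning on $\vm$ and noting the argument is uniform in $m$, and in particular holds for every $m\in\cM(d)$. I would remark that \textbf{BAL} is not logically needed for the identity itself but is listed because it is used downstream to guarantee $\xi=\phi(\text{uniform})$ controls the first moment; if one wishes to keep the hypotheses minimal one could drop \textbf{BAL} here, but I would retain it to match the statement as given.
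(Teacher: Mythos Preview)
Your proof has a genuine gap at its central step. You assert that under \textbf{SYM} the normaliser in \eqref{eqTeacher} is independent of $\sigma$, namely
\[
\sum_{z_1,\dots,z_k\in V_n}\Erw\brk{\PSI(\sigma(z_1),\dots,\sigma(z_k))}=n^k\xi\qquad\mbox{for every }\sigma.
\]
This is false. The left-hand side equals $n^k\phi(\rho_\sigma)$ with $\phi$ the function from \eqref{eqF}, and condition \eqref{eqSYM} only says that each \emph{single-coordinate marginal} of $\psi$ is constant; equivalently (\Lem~\ref{Lemma_F}) that $D\phi(\bar\rho)=k\xi\vecone$, so $\bar\rho$ is a critical point. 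It does not make $\phi$ constant. For instance, in the Potts antiferromagnet one has $\phi(\rho)=1-(1-\eul^{-\beta})\sum_\omega\rho(\omega)^2$, which visibly depends on $\rho$. Consequently the sum over $\sigma$ does not collapse to $Z(G)$; the factor $\phi(\rho_\sigma)^{-m}$ remains inside and the two laws are \emph{not} equal.

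The paper's argument is genuinely different and is where \textbf{BAL} enters for real. By the Nishimori identity (\Lem~\ref{lem:nishimori}) one has exactly $\hat\G(n,m)\stacksign{d}=\G^*(n,m,\hat\SIGMA_{n,m})$, where $\hat\SIGMA_{n,m}$ is distributed as in \eqref{eq:NishimoriS}, i.e.\ $\pr[\hat\SIGMA_{n,m}=\sigma]\propto\phi(\rho_\sigma)^m$. This $\hat\SIGMA_{n,m}$ is not uniform, but \Cor~\ref{lem:conc_coloring} shows that $\hat\SIGMA_{n,m}$ and the uniform $\SIGMA^*$ are mutually contiguous; this step uses \textbf{BAL} (concavity of $\phi$ with maximum at $\bar\rho$) together with \Lem~\ref{Cor_F} to control the ratio $\phi(\rho_\sigma)^m/\xi^m$. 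Chaining the equality in law with this contiguity gives the lemma. So contiguity, not equality, is the right conclusion, and \textbf{BAL} cannot be dropped.
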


The following theorem verifies the cavity formula for the free energy of $\hat\G$ and $\G^*$.

\begin{theorem}\label{Thm_plantedFreeEnergy}
Assume that $P$ satisfies {\bf SYM}, {\bf BAL} and {\bf POS} and let $d>0$.
Then with $\cB(d,P,\pi)$ from (\ref{eqMyBethe}) we have
	\begin{align*}
	\lim_{n\to\infty}\frac1n\Erw[\ln Z(\G^*)]&=\lim_{n\to\infty}\frac1n\Erw[\ln Z(\hat\G)]
		=\sup_{\pi\in\Pomast} \cB(d,P,\pi).
	\end{align*}
\end{theorem}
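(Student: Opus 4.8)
The plan is to bootstrap from the case of finite $\Psi$, which is supplied by~\cite{CKPZ}, via an approximation argument, and to deduce the statement for $\hat\G$ from the one for $\G^*$. As a first step, \Lem~\ref{Prop_contig} shows that $\G^*$ and $\hat\G$ are mutually contiguous, while $n^{-1}\ln Z(\,\cdot\,)$ concentrates about its mean in both models: adding or deleting one constraint changes $\ln Z$ by at most $\max_{\tau\in\Omega^k}\abs{\ln\PSI(\tau)}$, which has finite eighth moment by~(\ref{eqBounded}), so that a truncated bounded-differences argument applies and the $\sqrt n$-fluctuations of $\vm$ contribute only $o(n)$. Together with uniform integrability — again a consequence of~(\ref{eqBounded}) — this yields that $\lim_n n^{-1}\Erw[\ln Z(\G^*)]$ and $\lim_n n^{-1}\Erw[\ln Z(\hat\G)]$ exist and coincide. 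It therefore suffices to evaluate $\lim_n n^{-1}\Erw[\ln Z(\G^*)]$, which for finite $\Psi$ equals $\sup_{\pi\in\Pomast}\cB(d,P,\pi)$ by~\cite{CKPZ}; the task is to dispense with the finiteness hypothesis.

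For the second step I would fix a filtration $\mathcal C_1\subseteq\mathcal C_2\subseteq\cdots$ of finite, permutation-equivariant $\sigma$-algebras on the space of weight functions with $\bigvee_N\mathcal C_N$ the full Borel $\sigma$-algebra, and let $P_N$ be the law of $\PSI_N:=\Erw[\PSI\mid\mathcal C_N]$. Then $P_N$ has finite support, is invariant under coordinate permutations, takes values in $(0,2)^{\Omega^k}$, and $\Erw_{P_N}[\PSI_N(\tau)]=\Erw_P[\PSI(\tau)]$ for every $\tau$; moreover each $\psi$ in the support of $P_N$ still satisfies~(\ref{eqSYM}) (the conditional expectation preserves the linear identity~(\ref{eqSYM})). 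Hence $P_N$ satisfies {\bf SYM}, has the same $\xi$, and — since the function $\phi$ in {\bf BAL} depends on $P$ only through the moments $\Erw_P[\PSI(\tau)]$ — also satisfies {\bf BAL}; and the bounds~(\ref{eqBounded}) are inherited uniformly in $N$ by Jensen's inequality for the convex maps $x\mapsto x^{-4}$ and $x\mapsto-\ln x$ together with $\min_\tau\PSI_N(\tau)\ge\Erw[\min_\tau\PSI(\tau)\mid\mathcal C_N]$, while $\Erw[\max_\tau\abs{\ln\PSI(\tau)-\ln\PSI_N(\tau)}]\to0$ by martingale and dominated convergence. The one hypothesis requiring genuine care is {\bf POS}: its left-hand side $L(\pi,\pi')$ is linear in the law of $\PSI$ but {\bf POS} quantifies over all $\pi,\pi'\in\Pomast$, and although $L_{P_N}(\pi,\pi')\to L_P(\pi,\pi')\ge0$, a conditional-expectation coarsening decreases each of the three terms $\Erw[\Lambda(\cdot)]$ separately and so need not preserve the inequality for every $N$. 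I expect this to be handled either by a dedicated convexity estimate showing that the particular coarsenings used do preserve {\bf POS}, or by the observation that the free-energy formula of~\cite{CKPZ} uses {\bf POS} only through bounds that tolerate an additive $o_N(1)$ slack; either way one obtains $\lim_n n^{-1}\Erw[\ln Z(\G^*(n,\vm,P_N,\SIGMA^*))]=\sup_\pi\cB(d,P_N,\pi)$ from~\cite{CKPZ}.

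For the third step I would transfer the free energy from $P_N$ to $P$ by a coupling carried out inside the teacher--student experiment {\bf TCH1}--{\bf TCH3}: run both models with the same ground truth $\SIGMA^*$ and the same $\vm\sim\Po(dn/k)$. Since $P$ and $P_N$ have identical first moments $\Erw[\PSI(\tau)]$, the marginal law of each neighbourhood $\partial a_j$ under~(\ref{eqTeacher}) is the same, so the two factor graphs can be coupled to share the same underlying hypergraph; conditionally on $\partial a_j=(y_1,\dots,y_k)$ the two weight functions are governed by $P$ and $P_N$ tilted by $\psi\mapsto\psi(\SIGMA^*(y_1),\dots,\SIGMA^*(y_k))$, and reweighting the coupling $(\PSI,\PSI_N)$ by this tilt gives a coupling of the tilted laws under which $\Erw[\max_\tau\abs{\ln\psi_{a_j}(\tau)-\ln\psi_{a_j}^N(\tau)}]\le\delta_N$ with $\delta_N\to0$ uniformly over the spin pattern — the tilt is bounded by $2$ and its normaliser $\Erw_P[\PSI(\cdot)]$ is bounded below by a positive constant depending only on $P$, by~(\ref{eqBounded}). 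As changing the weight functions of a fixed factor graph while keeping its hypergraph alters $\ln Z$ by at most the sum of the per-constraint changes, we get
\[
\frac1n\,\abs{\Erw[\ln Z(\G^*(n,\vm,P,\SIGMA^*))]-\Erw[\ln Z(\G^*(n,\vm,P_N,\SIGMA^*))]}\le\frac dk\,\delta_N ,
\]
a bound uniform in $n$.

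Finally, the map $\pi\mapsto\cB(d,P,\pi)$ is continuous on the compact set $\Pomast$, and $\cB(d,P_N,\pi)\to\cB(d,P,\pi)$ uniformly in $\pi$, hence $\sup_\pi\cB(d,P_N,\pi)\to\sup_\pi\cB(d,P,\pi)$; an $\eps/3$ argument then combines the three steps to give $\lim_n n^{-1}\Erw[\ln Z(\G^*)]=\sup_\pi\cB(d,P,\pi)$, and the first step carries this over to $\hat\G$. I expect the genuinely substantial obstacle to be the preservation of {\bf POS} (or of a quantitatively robust surrogate) under the discretisation; the remaining ingredients — uniform control of~(\ref{eqBounded}) along the approximating sequence, the coupling estimate, and the continuity and convergence of $\cB$ — are routine given the moment assumptions~(\ref{eqBounded}) and soft probabilistic arguments.
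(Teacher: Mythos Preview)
Your proposal is correct and follows essentially the same route as the paper. The paper discretises $P$ by averaging over the cube partition $\fC_r$ to obtain $P_r$, proves the analogue of your ``uniform convergence of $\cB$'' (\Lem~\ref{Lemma_BetheConv}), establishes that $P_r$ satisfies {\bf SYM} and {\bf BAL} exactly and {\bf POS} only with an additive $-\alpha$ slack (\Lem~\ref{Lemma_POSConv}) --- precisely your second suggested resolution --- and then invokes \cite[Props.~3.6 and 3.7]{CKPZ} separately: the upper bound needs only {\bf SYM}/{\bf BAL}, the lower bound absorbs the $-\alpha$ slack. The free-energy comparison between $P$ and $P_r$ (\Lem~\ref{Lemma_EnergyConv}) is done by exactly your coupling: generate $\G^*(n,m,P,\sigma)$ and replace each $\psi_{a_i}$ by $\psi_{a_i}^{(r)}$, which by the tower property is distributed as $\G^*(n,m,P_r,\sigma)$; finally contiguity (\Lem~\ref{Prop_contig}) plus concentration (\Lem~\ref{Lemma_Azuma}) links $\hat\G$ and $\G^*$. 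Two minor points of precision: you do not actually need continuity of $\pi\mapsto\cB(d,P,\pi)$ on $\Pomast$, only the uniform-in-$\pi$ convergence $\cB(d,P_N,\pi)\to\cB(d,P,\pi)$; and under approximate {\bf POS} the conclusion from \cite{CKPZ} is not literally $\lim=\sup_\pi\cB(d,P_N,\pi)$ but rather the two-sided bound $\sup_\pi\cB(d,P_N,\pi)-\alpha\le\liminf\le\limsup\le\sup_\pi\cB(d,P_N,\pi)$, which is what the sandwich argument needs.
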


\noindent
\Thm~\ref{Thm_plantedFreeEnergy} was established in~\cite{CKPZ} for the case that the set $\Psi$ of weight functions is finite.
In \Sec~\ref{Sec_Thm_plantedFreeEnergy} we extend that results via a limiting argument
to prove \Thm~\ref{Thm_plantedFreeEnergy} for infinite $\Psi$.
Furthermore, in \Sec~\ref{sec:ProofPreCond} we deduce the following result from \Thm~\ref{Thm_plantedFreeEnergy}.

\begin{proposition}\label{prop:belowcond-unif}
Assume that  {\bf BAL}, {\bf SYM}, {\bf POS} and {\bf MIN} hold and that $d<\dc$.
{There exists a sequence $\zeta=\zeta(n)$, $\zeta(n)=o(1)$ but $n^{1/6}\zeta(n)\to\infty$ as $n\to\infty$, such that for all $m\in\cM(d)$ we have}
	\begin{equation}\label{eq:overlapunif}
	\Erw\bck{\TV{\rho_{\SIGMA_1,\SIGMA_2}-\bar\rho}}_{\hat\G(n,m)}\leq\zeta^2.
	\end{equation}
\end{proposition}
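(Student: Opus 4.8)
\emph{Strategy.} The plan is to move to the teacher--student model, where the Nishimori property converts the two--replica overlap into the overlap between one Gibbs sample and the planted assignment, and then to read the required concentration off the free energy formula of Theorem~\ref{Thm_plantedFreeEnergy}, whose value in the replica symmetric phase is the ``trivial'' one $f^\ast:=\ln q+\tfrac dk\ln\xi$.

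\emph{Step 1: reductions.} By Lemma~\ref{Prop_contig} the models $\hat\G(n,m)$ and $\G^\ast=\G^\ast(n,m,\SIGMA^\ast)$ are mutually contiguous, and — restricting to the event that $\SIGMA^\ast$ is $n^{-1/3}$--balanced, whose complement has probability $e^{-\Omega(n^{1/3})}$ — the Radon--Nikodym derivative is $\mathrm{poly}(n)$--bounded (it is essentially $Z(\G)/\Erw[Z(\G)]$ up to that correction). Hence it suffices to bound $\Erw\langle\TV{\rho_{\SIGMA,\TAU}-\bar\rho}\rangle_{\G^\ast}$ for two independent Gibbs samples $\SIGMA,\TAU$. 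Because $\SIGMA^\ast$ is uniform (step \textbf{TCH1}), the defining relation (\ref{eqTeacher}) gives, up to the same balancedness correction, the Nishimori identity: conditionally on $\G^\ast$ the pair $(\SIGMA^\ast,\SIGMA)$ with $\SIGMA$ a Gibbs sample is exchangeable, so $\Erw\langle\TV{\rho_{\SIGMA,\TAU}-\bar\rho}\rangle_{\G^\ast}=\Erw\langle\TV{\rho_{\SIGMA^\ast,\SIGMA}-\bar\rho}\rangle_{\G^\ast}+o(1)$. Writing $Z_{>\delta}(\G^\ast)=\sum_{\tau:\,\tv{\rho_{\SIGMA^\ast,\tau}-\bar\rho}>\delta}\psi_{\G^\ast}(\tau)$ and $Z_{\le\delta}$ for the complementary sum, and using $\TV{\rho_{\SIGMA^\ast,\SIGMA}-\bar\rho}\le\delta+\vecone\{\tv{\rho_{\SIGMA^\ast,\SIGMA}-\bar\rho}>\delta\}$, everything reduces to showing $\Erw[Z_{>\delta}(\G^\ast)/Z(\G^\ast)]\to0$ for each fixed $\delta>0$; letting $\delta=\delta(n)\to0$ slowly afterwards produces the bound $\zeta^2$ with $n^{1/6}\zeta\to\infty$, the admissible rate being dictated by the $O(n^{3/5})$ window defining $\cM(d)$ and by the concentration in Step~2.

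\emph{Step 2: the free energy is pinned at $f^\ast$.} For $d<\dc$ the definition (\ref{eq:dcond}) forces $\sup_{\pi}\cB(d,P,\pi)\le\ln q+\tfrac dk\ln\xi$, while evaluating (\ref{eqMyBethe}) at the Dirac mass $\pi^\ast$ on the uniform distribution on $\Omega$ gives $\cB(d,P,\pi^\ast)=\ln q+\tfrac dk\ln\xi=f^\ast$, using \textbf{SYM} (summed over $\omega$ it forces $\sum_{\tau}\psi(\tau)=q^k\xi$ for every $\psi\in\Psi$). So the supremum equals $f^\ast$, and Theorem~\ref{Thm_plantedFreeEnergy} gives $\tfrac1n\Erw[\ln Z(\G^\ast(n,\vm))]\to f^\ast$. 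A de-Poissonisation (one constraint changes $\ln Z$ by $O(\log n)$ by (\ref{eqBounded}), and $|m-dn/k|\le n^{3/5}$) extends this to $\tfrac1n\Erw[\ln Z(\G^\ast(n,m))]=f^\ast+o(1)$ for all $m\in\cM(d)$, and Azuma's inequality (after truncating atypically small weight functions, admissible by the second bound in (\ref{eqBounded})) upgrades it to $\ln Z(\G^\ast(n,m))=nf^\ast+o(n)$ off an event of probability $e^{-\Omega(n^{1/3})}$.

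\emph{Step 3: localising the Gibbs mass, and the main obstacle.} Introduce $\Phi_2(\rho)=\sum_{\sigma,\tau\in\Omega^k}\Erw[\PSI(\sigma)\PSI(\tau)]\prod_i\rho(\sigma_i,\tau_i)$ and $\widetilde G(\rho)=\cH(\rho)-\ln q+\tfrac dk\ln(\Phi_2(\rho)/\xi)$. A first-moment count from (\ref{eqTeacher}) gives $\Erw[Z_{\le\delta}(\G^\ast)]=\exp(n\max_{\tv{\rho-\bar\rho}\le\delta}\widetilde G(\rho)+o(n))$ with $\widetilde G(\bar\rho)=f^\ast$, and since $d<\dc\le\dKS$ by Theorem~\ref{Thm_KS}, and since \textbf{MIN} makes $\bar\rho$ the strict minimiser of $\Phi_2$ among overlaps with uniform marginals, a second-order expansion shows $\bar\rho$ is a \emph{strict local} maximiser of $\widetilde G$ — the Hessian is negative definite exactly because $d(k-1)\max_{x\in\cE,\|x\|=1}\scal{\Xi x}{x}<1$ — so $\Erw[Z_{\le\delta}(\G^\ast)]=e^{nf^\ast+o(n)}$ for $\delta$ small. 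A conditional second moment restricted to this shell — decomposing $\Erw[Z_{\le\delta}(\G^\ast)^2]$ by the joint type of $(\SIGMA^\ast,\sigma,\tau)$ with both overlaps to $\SIGMA^\ast$ within $\delta$ of $\bar\rho$, and using \textbf{BAL}, \textbf{MIN} and $d<\dKS$ to check that, for $\delta$ small, the dominant type also has $\rho_{\sigma,\tau}$ within $o_\delta(1)$ of $\bar\rho$ — yields $\Erw[Z_{\le\delta}(\G^\ast)^2]\le e^{2nf^\ast+o(n)}$, whence $Z_{\le\delta}(\G^\ast)=e^{nf^\ast+o(n)}$ off an $o(1)$--event by Paley--Zygmund together with an Azuma bound on $\ln Z_{\le\delta}$. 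Feeding this back into $\ln Z(\G^\ast)=nf^\ast+o(n)$ then forces the far shell to carry a vanishing fraction of the Gibbs mass, giving $\Erw[Z_{>\delta}(\G^\ast)/Z(\G^\ast)]\to0$. This last implication is the decisive and most delicate point: because the \emph{plain} second moment of $Z$ need not match its squared mean below $\dc$ (the phenomenon recalled in Section~3; here $\Erw[Z_{>\delta}(\G^\ast)]=e^{nF_\delta+o(n)}$ with $F_\delta=\max_{\tv{\rho-\bar\rho}>\delta}\widetilde G(\rho)$ possibly exceeding $f^\ast$), a bare Markov bound on $\Erw[Z_{>\delta}(\G^\ast)]$ is useless, and one must instead play the quenched value $f^\ast$ from Step~2 against the conditional second moment of the $\bar\rho$--shell to rule out the rare factor graphs on which $Z_{>\delta}$ is atypically large. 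It is exactly in this step — not in Steps~1 or~2 — that \textbf{MIN} is required on top of \textbf{SYM}, \textbf{BAL} and \textbf{POS}, and it is here that I expect the real work to lie.
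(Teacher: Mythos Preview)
Your Steps~1 and~2 are fine and the paper uses them too: Nishimori reduces the two--replica overlap to the overlap with the planted assignment, and Theorem~\ref{Thm_plantedFreeEnergy} pins $\tfrac1n\Erw[\ln Z(\hat\G)]$ at $f^\ast=\ln q+\tfrac dk\ln\xi$ for $d<\dc$. The gap is your Step~3. You assert that the conditional second moment $\Erw[Z_{\le\delta}(\G^\ast)^2]$, indexed by joint types $\rho_{123}$ of $(\SIGMA^\ast,\sigma,\tau)$ with $\rho_{12},\rho_{13}$ within $\delta$ of $\bar\rho$, is dominated by types with $\rho_{23}$ also near $\bar\rho$. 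But constraining the two overlaps with $\SIGMA^\ast$ to be near uniform places essentially no restriction on $\rho_{\sigma,\tau}$: the slice $\rho_{123}(a,b,c)=q^{-1}\rho_{23}(b,c)$ is admissible for \emph{every} $\rho_{23}\in\cR(\Omega)$. On that slice the three--replica rate function reduces (up to $O(\delta)$) to a problem of the same shape as the unrestricted two--replica second moment that Section~3.1 explains may have its global maximum away from $\bar\rho$ below $\dc$. Neither {\bf MIN} (which locates the \emph{minimum} of $\rho\mapsto\sum_{s,t}\Erw[\PSI(s)\PSI(t)]\prod_i\rho(s_i,t_i)$ on $\cR(\Omega)$, not the maximum of entropy plus log of this quantity) nor $d<\dKS$ (a \emph{local} second--order statement about the Hessian at $\bar\rho$) rules out such a distant maximiser. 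So the key inequality $\Erw[Z_{\le\delta}(\G^\ast)^2]\le e^{2nf^\ast+o(n)}$ is unproved, and with it the Paley--Zygmund step collapses.

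The paper's argument bypasses second moments entirely. It computes the \emph{derivative} $\tfrac kn\partial_d\Erw[\ln Z(\hat\G)]=\xi^{-1}\Erw\bigl[\Lambda\bigl(\langle\PSI(\SIGMA(x_{\vec i_1}),\ldots,\SIGMA(x_{\vec i_k}))\rangle_{\hat\G}\bigr)\bigr]+o(1)$ via an add--one--constraint coupling (Lemma~\ref{lemma_derivatives}). Expanding $\Lambda$ to second order around $\xi$ and rewriting $\Erw[\langle\PSI\rangle^2]$ in terms of the overlap gives: the derivative is $\ge k^{-1}\ln\xi+o(1)$ always, and $\ge k^{-1}\ln\xi+\delta$ whenever $\Erw\langle\tv{\rho_{\SIGMA,\TAU}-\bar\rho}\rangle>\eps$, because {\bf MIN} forces $\sum_{s,t}\Erw[\PSI(s)\PSI(t)]\prod_i\rho(s_i,t_i)>\xi^2+\delta'$ once $\rho\in\cR(\Omega)$ is $\eps$--far from $\bar\rho$ (Corollary~\ref{lem:badoverlaps1}). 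A separate continuity lemma (Lemma~\ref{Lemma_mon}) shows that a failure of overlap concentration at one $m$ persists on a whole $d$--interval; integrating the strict derivative bound over that interval then gives $\tfrac1n\Erw[\ln Z(\hat\G)]>f^\ast$ at some $d<\dc$, contradicting Step~2. This is where {\bf MIN} is actually used, and it is a one--line application of Jensen rather than a global optimisation.
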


\Prop~\ref{prop:belowcond-unif} resolves our second moment troubles.
Indeed, the proposition enables a completely generic way of setting up a truncated second moment argument: 
with $\zeta$ from \Prop~\ref{prop:belowcond-unif} we define
	\begin{align}\label{eqcZeps}
	\cZ(G)&=Z(G)\vecone\cbc{\bck{\TV{\rho_{\SIGMA_1,\SIGMA_2}-\bar\rho}}_G\leq\zeta}.
	\end{align}
Hence, $\cZ(G)=Z(G)$ if ``most'' pairs $\SIGMA_1,\SIGMA_2$ drawn from $\mu_G$ have overlap close to $\bar\rho$, and $\cZ(G)=0$ otherwise.
\Prop~\ref{prop:belowcond-unif} shows immediately that the truncation does not diminish the first moment.

\begin{corollary}\label{cor:belowcond-unif}
If {\bf BAL}, {\bf SYM}, {\bf POS} and {\bf MIN} hold and $d<\dc$, then $\Erw[\cZ(\G(n,m))]\sim\Erw[Z(\G(n,m))]$ uniformly for all $m\in\cM(d)$.
\end{corollary}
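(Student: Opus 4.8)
The plan is to combine \Prop~\ref{prop:belowcond-unif} with the elementary observation that $\hat\G(n,m)$ is the $Z$-tilted version of $\G(n,m)$, so that bounding the mass that $\hat\G(n,m)$ places on ``bad'' graphs immediately bounds the $Z$-weighted mass of bad graphs under $\G(n,m)$. Concretely, call a factor graph $G$ on $n$ variable and $m$ constraint nodes \emph{good} if $\bck{\TV{\rho_{\SIGMA_1,\SIGMA_2}-\bar\rho}}_G\le\zeta$, so that $\cZ(G)=Z(G)\vecone\{G\text{ good}\}$ and $Z(G)-\cZ(G)=Z(G)\vecone\{G\text{ bad}\}$. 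Then
\begin{align*}
\Erw[Z(\G(n,m))]-\Erw[\cZ(\G(n,m))]=\Erw\brk{Z(\G(n,m))\vecone\{\G(n,m)\text{ bad}\}}=\Erw[Z(\G(n,m))]\cdot\pr\brk{\hat\G(n,m)\text{ bad}},
\end{align*}
where the last equality is just the definition \eqref{eq:NishimoriG} of $\hat\G(n,m)$ with $\cA=\{G:G\text{ bad}\}$. Hence it suffices to show $\pr[\hat\G(n,m)\text{ bad}]=o(1)$ uniformly in $m\in\cM(d)$.

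That final step is a one-line application of Markov's inequality to \eqref{eq:overlapunif}. Writing $W(G)=\bck{\TV{\rho_{\SIGMA_1,\SIGMA_2}-\bar\rho}}_G\ge0$, \Prop~\ref{prop:belowcond-unif} gives $\Erw[W(\hat\G(n,m))]\le\zeta^2$, and $G$ is bad precisely when $W(G)>\zeta$, so
\begin{align*}
\pr\brk{\hat\G(n,m)\text{ bad}}=\pr\brk{W(\hat\G(n,m))>\zeta}\le\frac{\Erw[W(\hat\G(n,m))]}{\zeta}\le\frac{\zeta^2}{\zeta}=\zeta=o(1),
\end{align*}
and crucially the sequence $\zeta=\zeta(n)$ furnished by \Prop~\ref{prop:belowcond-unif} does not depend on the choice of $m\in\cM(d)$, so the bound is uniform over $\cM(d)$. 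Plugging this back in yields $\Erw[\cZ(\G(n,m))]=(1-o(1))\Erw[Z(\G(n,m))]$, i.e.\ $\Erw[\cZ(\G(n,m))]\sim\Erw[Z(\G(n,m))]$ uniformly for $m\in\cM(d)$, which is the assertion.

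There is essentially no obstacle here: the entire content was already extracted into \Prop~\ref{prop:belowcond-unif}, whose proof (carried out in \Sec~\ref{sec:ProofPreCond} via \Thm~\ref{Thm_plantedFreeEnergy}) is where the conditions {\bf SYM}, {\bf BAL}, {\bf POS}, {\bf MIN} and the hypothesis $d<\dc$ are actually used. The only point requiring a modicum of care is to make sure the identity $\Erw[Z\vecone\{\G\text{ bad}\}]=\Erw[Z]\cdot\pr[\hat\G\text{ bad}]$ is invoked with a genuinely measurable event: the map $G\mapsto W(G)$ is a finite sum over the (finitely many, for fixed $n,m$) pairs $\SIGMA_1,\SIGMA_2\in\Omega^{V_n}$ weighted by $\mu_G(\SIGMA_1)\mu_G(\SIGMA_2)$, hence a measurable function of the weight functions $(\psi_{a_i})_i$ and the neighborhoods $(\partial a_i)_i$, so $\{G:G\text{ bad}\}$ is a legitimate event and \eqref{eq:NishimoriG} applies verbatim.
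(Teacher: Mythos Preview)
Your proof is correct and follows essentially the same approach as the paper's: use the defining identity \eqref{eq:NishimoriG} to rewrite $\Erw[\cZ(\G(n,m))]=\Erw[Z(\G(n,m))]\cdot\pr[\hat\G(n,m)\text{ good}]$, then apply Markov's inequality to \eqref{eq:overlapunif} to get $\pr[\hat\G(n,m)\text{ bad}]\le\zeta=o(1)$ uniformly in $m\in\cM(d)$. The paper's version is more terse (it leaves the Markov step implicit), but the argument is the same.
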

\begin{proof}
Equation (\ref{eq:NishimoriG})  and \Prop~\ref{prop:belowcond-unif} yield
	\begin{align*}
	\Erw[\cZ(\G(n,m))]&=\Erw[Z(\G(n,m))]\cdot\pr\brk{\bck{\TV{\rho_{\SIGMA_1,\SIGMA_2}-\bar\rho}}_{\hat\G(n,m)}\leq\zeta}
		=(1+o(1))\Erw[Z(\G(n,m))],
	\end{align*}
as claimed.
\end{proof}

The second moment calculation for $\cZ$ is easy, too.
Indeed,  the very construction (\ref{eqcZeps}) of $\cZ$ guarantees that the dominant contribution to the second moment of $\cZ$ comes from pairs with an overlap close to $\bar\rho$.
Hence, computing the second moment comes down to expanding the right hand side of (\ref{eqnotosmm2}) around $\bar\rho$ via the Laplace method.
Yet in order to apply the Laplace method we need to verify that $\bar\rho$ is a local maximum of the function
	\begin{equation}\label{eqMyKS2}
	\rho\in\cP(\Omega^2)\mapsto\cH(\rho)+\frac dk\ln{\sum_{s,t\in\Omega^k}\Erw[\PSI(s)\PSI(t)]\prod_{i=1}^k\rho(s_i,t_i)}
	\end{equation}
 from (\ref{eqnotosmm2}).
{For the special case of the Potts antiferromagnet the overlap concentration (\ref{eq:overlapunif}) was established and the second moment argument for $\cZ$ was carried out in~\cite[\Sec~4.3]{CKPZ}.
While the generalization to random factor graph models is anything but straightforward, an even more important difference lies in the application of the Laplace method.
More specifically, in the case of the Potts antiferromagnet the fact that $\bar\rho$ is a local maximum of (\ref{eqMyKS2}) for all $d<\dc$ was derived extremely indirectly
by resorting to the statistical inference algorithm of Abbe and Sandon for the stochastic block model~\cite{abbe2015detection}.
But of course there ought to be a general, conceptual explanation.}
{As we shall see momentarily, there is one indeed, namely the generalized Kesten-Stigum bound.

\subsection{The Kesten-Stigum bound}\label{Sec_outline_KS}
To see the connection, we observe that the Hessian of (\ref{eqMyKS2}) at the point $\bar\rho$ is equal to $q(\id-d(k-1)\Xi)$
(with $\Xi$ the matrix from (\ref{eqXi})).
Hence, taking into account that the argument $\rho$ is a probability distribution on $\Omega\times\Omega$,
we find that  $\bar\rho$ is a local maximum of (\ref{eqMyKS2})  if and only if
	\begin{equation}\label{eqMyKS}
	\scal{(\id-d(k-1)\Xi)x}x>0\qquad\mbox{for all }x\in\RR^q\tensor\RR^q\mbox{ such that }x\perp\vecone\tensor\vecone.
	\end{equation}
In order to get a handle on the spectrum of the operator $\Xi$ from (\ref{eqXi})
we begin with the following observation about the matrices $\Phi_\psi$ and $\Phi$ from (\ref{eqPhiMatrices}) and (\ref{eqPhi}).}

\begin{lemma}\label{Lemma_Phi}
Assume that $P$ satisfies {\bf SYM}.
Then the matrix $\Phi_\psi$ is stochastic and thus $\Phi_\psi\vecone=\vecone$ for every $\psi\in\Psi$.
Moreover, $\Phi$ is symmetric and doubly-stochastic.
If, additionally, $P$ satisfies {\bf BAL}, then
	$\max_{x\perp\vecone}\scal{\Phi x}x\leq0.$
\end{lemma}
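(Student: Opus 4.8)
\textbf{Proof plan for Lemma~\ref{Lemma_Phi}.}
The plan is to unpack the three assertions one at a time, all of them following from {\bf SYM} plus, for the last one, {\bf BAL}. First I would show $\Phi_\psi$ is stochastic. By definition \eqref{eqPhiMatrices}, $\Phi_\psi(\omega,\omega')=q^{1-k}\xi^{-1}\sum_{\tau\in\Omega^k}\vecone\{\tau_1=\omega,\tau_2=\omega'\}\psi(\tau)$, so the row sum over $\omega'$ is $q^{1-k}\xi^{-1}\sum_{\tau}\vecone\{\tau_1=\omega\}\psi(\tau)$, which by the symmetry identity \eqref{eqSYM} (applied with $i=1$) equals $q^{1-k}\xi^{-1}\cdot q^{k-1}\xi=1$. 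Nonnegativity is immediate since $\psi$ takes values in $(0,2)$. Hence $\Phi_\psi$ is (row-)stochastic and $\Phi_\psi\vecone=\vecone$. Averaging over $\PSI$ preserves this, so $\Phi=\Erw[\Phi_{\PSI}]$ is stochastic as well.

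Next I would argue that $\Phi$ is symmetric and doubly stochastic. Symmetry of $\Phi$ is where the permutation-invariance half of {\bf SYM} enters: for the transposition $\theta=(1\,2)$ we have $\Phi_{\psi^\theta}(\omega,\omega')=\Phi_{\psi}(\omega',\omega)$ directly from \eqref{eqPhiMatrices}, and since $P$ is invariant under $\theta$ we get $\Phi(\omega,\omega')=\Erw[\Phi_{\PSI}(\omega,\omega')]=\Erw[\Phi_{\PSI^\theta}(\omega,\omega')]=\Erw[\Phi_{\PSI}(\omega',\omega)]=\Phi(\omega',\omega)$. Combined with the row-stochasticity just established, symmetry gives column-stochasticity too, so $\Phi$ is doubly stochastic; in particular $\vecone$ is both a left and right eigenvector with eigenvalue $1$.

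Finally, the spectral bound $\max_{x\perp\vecone}\scal{\Phi x}x\le 0$. The idea is to identify the quadratic form $x\mapsto\scal{\Phi x}x$ with (a multiple of) the second differential at the uniform distribution of the function $\phi$ from {\bf BAL}. Writing $\mu=q^{-1}\vecone+\eps x$ with $x\perp\vecone$ and expanding $\phi(\mu)=\sum_{\tau\in\Omega^k}\Erw[\PSI(\tau)]\prod_{i=1}^k\mu(\tau_i)$ to second order in $\eps$: the zeroth-order term is the constant $q^{-k}\sum_\tau\Erw[\PSI(\tau)]=\xi$; the first-order term involves $\sum_\tau\Erw[\PSI(\tau)]\sum_i x(\tau_i)\prod_{j\ne i}q^{-1}$, which vanishes by \eqref{eqSYM} together with $x\perp\vecone$; and the second-order term is $\eps^2\sum_{i<i'}\sum_\tau\Erw[\PSI(\tau)]x(\tau_i)x(\tau_{i'})q^{2-k}$, which by the permutation symmetry of $P$ equals $\eps^2\binom k2 q^{2-k}\sum_{\tau}\Erw[\PSI(\tau)]x(\tau_1)x(\tau_2)=\eps^2\binom k2 q^{2-k}\cdot q^{k-1}\xi\,\scal{\Phi x}x$, a positive multiple of $\scal{\Phi x}x$. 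Since {\bf BAL} says $\phi$ is concave with maximum at the uniform distribution, its second differential there is negative semidefinite on the tangent space $\{x:\scal{x}{\vecone}=0\}$, forcing $\scal{\Phi x}x\le 0$ for all such $x$, which is the claim.

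The routine part is Stirling-free bookkeeping in the expansion; the only genuine subtlety — and the step I would be most careful about — is checking that the linear term in the expansion of $\phi$ really does vanish identically (not merely at $x$ orthogonal to $\vecone$), since this is what lets one read off the negative-semidefiniteness of the Hessian cleanly from concavity at the maximum rather than having to separately invoke the first-order optimality condition; here \eqref{eqSYM} does exactly the needed work, guaranteeing that each coordinate direction contributes the same constant so that the gradient of $\phi$ at the uniform point is proportional to $\vecone$.
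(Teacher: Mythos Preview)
Your proposal is correct and follows essentially the same route as the paper: row-stochasticity of $\Phi_\psi$ from \eqref{eqSYM}, symmetry of $\Phi$ from the permutation-invariance of $P$ under the transposition $(1\,2)$, and the spectral bound by identifying the Hessian of $\phi$ at $\bar\rho$ as a positive scalar multiple of $\Phi$ and invoking the concavity part of {\bf BAL}. The paper packages the Hessian computation into a separate lemma (\Lem~\ref{Lemma_F}, giving $D^2\phi(\bar\rho)=qk(k-1)\xi\Phi$), whereas you carry out the second-order expansion inline, but the content is identical. Your closing caveat about the linear term is slightly over-cautious: concavity of $\phi$ on $\cP(\Omega)$ alone already forces the Hessian to be negative semidefinite on the tangent space $\{x:\scal{x}{\vecone}=0\}$, so the ``maximum at $\bar\rho$'' clause of {\bf BAL} is not actually needed here.
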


\noindent
Proceeding to the operator $\Xi$, we recall the definition of the space $\cE$ from (\ref{eqSpaceE}) and we introduce
	\begin{equation}\label{eqcE'}
	\cE'=\{x\in\RR^q\tensor\RR^q:\scal x{\vecone\tensor\vecone}=0\}\supset\cE.
	\end{equation}

\begin{lemma}\label{Lemma_Xi}
Assume that $P$ satisfies {\bf SYM} and {\bf BAL}.
The operator $\Xi$ is self-adjoint, $\Xi(\vecone\tensor\vecone)=\vecone\tensor\vecone$ and for every $x\in\RR^q$ we have
$\Xi (x\tensor\vecone)=(\Phi x)\tensor\vecone$, $\Xi (\vecone\tensor x)=\vecone\tensor(\Phi x)$ and
	\begin{align}\label{eqLemma_Xi}
	\scal{\Xi (x\tensor\vecone)}{x\tensor\vecone}&\leq0,&
		\scal{\Xi (\vecone\tensor x)}{\vecone\tensor x}&\leq0&\mbox{if }x\perp\vecone.
	\end{align}
Furthermore, $\Xi\cE\subset\cE$ and $\Xi\cE'\subset\cE'$.
\end{lemma}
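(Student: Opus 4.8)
The plan is to derive all five assertions from \Lem~\ref{Lemma_Phi} by elementary tensor algebra, the one non-formal ingredient being the permutation-invariance clause of {\bf SYM}, which I would use to prove that $\Xi$ is self-adjoint. First I would record how a coordinate transposition acts on the matrices $\Phi_\psi$: if $\theta$ is the permutation of $\{1,\dots,k\}$ that swaps $1$ and $2$ and fixes everything else, then relabelling $\tau_1\leftrightarrow\tau_2$ in the defining sum (\ref{eqPhiMatrices}) shows that $\Phi_{\psi^\theta}=\Phi_\psi^{\top}$ for every $\psi\in\Psi$. Since {\bf SYM} guarantees $P(\cA)=P(\{\psi^\theta:\psi\in\cA\})$, the random matrices $\Phi_{\PSI}$ and $\Phi_{\PSI}^{\top}$ are identically distributed. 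Using $(A\tensor B)^{\top}=A^{\top}\tensor B^{\top}$ it then follows that
\[
\Xi^{\top}=\Erw\brk{\bc{\Phi_{\PSI}\tensor\Phi_{\PSI}}^{\top}}=\Erw\brk{\Phi_{\PSI}^{\top}\tensor\Phi_{\PSI}^{\top}}=\Erw\brk{\Phi_{\PSI^\theta}\tensor\Phi_{\PSI^\theta}}=\Erw\brk{\Phi_{\PSI}\tensor\Phi_{\PSI}}=\Xi,
\]
so $\Xi$ is self-adjoint for the standard inner product on $\RR^\Omega\tensor\RR^\Omega$.

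Next I would read off the remaining identities. By \Lem~\ref{Lemma_Phi} we have $\Phi_\psi\vecone=\vecone$, hence $(\Phi_\psi\tensor\Phi_\psi)(\vecone\tensor\vecone)=\vecone\tensor\vecone$ and $(\Phi_\psi\tensor\Phi_\psi)(x\tensor\vecone)=(\Phi_\psi x)\tensor\vecone$ for all $\psi$ and all $x\in\RR^q$; taking expectations and using $\Erw[\Phi_{\PSI}]=\Phi$ yields $\Xi(\vecone\tensor\vecone)=\vecone\tensor\vecone$, $\Xi(x\tensor\vecone)=(\Phi x)\tensor\vecone$ and, symmetrically, $\Xi(\vecone\tensor x)=\vecone\tensor(\Phi x)$. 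The two inequalities then reduce to the last assertion of \Lem~\ref{Lemma_Phi}: since $\scal{\vecone}{\vecone}=q$,
\[
\scal{\Xi(x\tensor\vecone)}{x\tensor\vecone}=\scal{(\Phi x)\tensor\vecone}{x\tensor\vecone}=q\,\scal{\Phi x}{x}\leq0\qquad(x\perp\vecone),
\]
where the inequality uses $\max_{x\perp\vecone}\scal{\Phi x}{x}\leq0$ (this is the point where {\bf BAL} enters), and the bound for $\vecone\tensor x$ is the same computation with the two factors exchanged.

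Finally, the invariances $\Xi\cE'\subset\cE'$ and $\Xi\cE\subset\cE$ follow from self-adjointness together with the identities just established: for $z\in\cE'$ we have $\scal{\Xi z}{\vecone\tensor\vecone}=\scal{z}{\Xi(\vecone\tensor\vecone)}=\scal{z}{\vecone\tensor\vecone}=0$, and for $z\in\cE$ and an arbitrary $y\in\RR^q$ we have $\scal{\Xi z}{\vecone\tensor y}=\scal{z}{\vecone\tensor(\Phi y)}=0$ and $\scal{\Xi z}{y\tensor\vecone}=\scal{z}{(\Phi y)\tensor\vecone}=0$. The only step that requires genuine care is the self-adjointness argument: one must correctly identify the transpose of $\Phi_\psi$ as $\Phi_{\psi^\theta}$ for the right permutation $\theta$ and then invoke the measure-preserving part of {\bf SYM} rather than merely the linear relation (\ref{eqSYM}); the rest is bookkeeping with $(A\tensor B)(u\tensor v)=(Au)\tensor(Bv)$.
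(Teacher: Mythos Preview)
Your proof is correct and follows essentially the same approach as the paper's: both establish self-adjointness via the transposition $\theta$ swapping coordinates $1$ and $2$ (so that $\Phi_{\psi^\theta}=\Phi_\psi^{\top}$) together with the permutation-invariance clause of {\bf SYM}, and then derive the remaining identities and invariances from $\Phi_\psi\vecone=\vecone$ and \Lem~\ref{Lemma_Phi} exactly as you do. Your packaging of the self-adjointness step via $(A\tensor B)^{\top}=A^{\top}\tensor B^{\top}$ is slightly slicker than the paper's coordinate-wise verification on basis vectors, but the content is identical.
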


\Lem~\ref{Lemma_Xi} shows that $\Xi$ induces a self-adjoint operator on the space $\cE$.
The following proposition yields a bound on the spectral radius of this operator.
Let
\begin{align}
\Eig[\Xi] = \left\{\lambda \in \RR : \exists x \in \cE\setminus\cbc0:\Xi x = \lambda x\right\}. \label{eq:IntroSetEigenvals}
\end{align}

\begin{proposition}\label{prop_KS}
If $P$ satisfies {\bf SYM} and {\bf BAL}, then  $\dc(k-1)\max_{\lambda\in\Eig[\Xi]}|\lambda| \leq 1.$
\end{proposition}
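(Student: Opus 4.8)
The plan is to relate the spectral radius of $\Xi$ on $\cE$ to the condensation threshold through the local maximum condition~\eqref{eqMyKS}, which was identified in the previous subsection as the criterion for applying the Laplace method to the truncated second moment. Concretely, I would argue by contraposition: suppose $\lambda\in\Eig[\Xi]$ with $d(k-1)|\lambda|>1$; I want to derive that $\bar\rho$ fails to be a local maximum of the function~\eqref{eqMyKS2}, which in turn must force $d>\dc$. The first step is to turn the eigenvalue hypothesis into a violation of~\eqref{eqMyKS}. Since the Hessian of~\eqref{eqMyKS2} at $\bar\rho$ equals $q(\id-d(k-1)\Xi)$ restricted to the tangent space of the simplex — i.e.\ to $\cE'$ — and since by \Lem~\ref{Lemma_Xi} we have $\Xi\cE\subset\cE\subset\cE'$, an eigenvector $x\in\cE$ of $\Xi$ with eigenvalue $\lambda$ satisfies $\scal{(\id-d(k-1)\Xi)x}{x}=(1-d(k-1)\lambda)\norm x^2$. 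If $\lambda>0$ and $d(k-1)\lambda>1$ this is negative, so $\bar\rho$ is not a local max. If $\lambda<0$ one instead notes that $\Xi$ restricted to $\cE$ is self-adjoint (\Lem~\ref{Lemma_Xi}), and I would show a negative eigenvalue of $\Xi$ of large modulus cannot occur, or handle it directly; in fact the cleanest route is to observe that $\Xi=\Erw[\Phi_\PSI\tensor\Phi_\PSI]$ is a convex combination (an expectation) of tensor squares, hence positive semidefinite as an operator on $\RR^q\tensor\RR^q$, so $\Eig[\Xi]\subset[0,\infty)$ and only the case $\lambda>0$ survives.

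Granting $\lambda>0$ and $d(k-1)\lambda>1$, the second step is to conclude $d\geq\dc$. Here I would invoke the machinery already set up: by \Lem~\ref{Cor_F}/\eqref{eqnotosmm3} the first moment is $\Erw[Z(\G(n,m))]=\exp(n\ln q+m\ln\xi+O(\ln n))$, while by~\eqref{eqnotosmm2} the second moment is governed by $\max_{\rho}\,[\cH(\rho)+\tfrac dk\ln(\cdots)]$; since $\bar\rho$ is a strict saddle rather than a local maximum of~\eqref{eqMyKS2}, there exists $\rho$ near $\bar\rho$ with $\cH(\rho)+\tfrac dk\ln(\sum\Erw[\PSI(s)\PSI(t)]\prod\rho(s_i,t_i))>\ln q+\tfrac dk\ln\xi$, and this gives $\Erw[Z^2]\geq\exp(\Omega(n))\Erw[Z]^2$. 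But for $d<\dc$ we know from \Thm~\ref{Thm_cond} together with \Prop~\ref{prop:belowcond-unif} (overlap concentration in $\hat\G$) that $\frac1n\Erw[\ln Z(\G)]\to\ln q+\tfrac dk\ln\xi$ and that the dominant overlap is $\bar\rho$; an exponential gap between $\Erw[Z^2]$ and $\Erw[Z]^2$ near $\bar\rho$ is incompatible with this, since the planted/reweighted model $\hat\G$ has first-moment-order partition function and would otherwise be forced off the uniform overlap. Thus $d<\dc$ is impossible, i.e.\ $d\geq\dc$, and taking the infimum over such $d$ and over $\lambda\in\Eig[\Xi]$ yields $\dc(k-1)\max_{\lambda\in\Eig[\Xi]}|\lambda|\leq1$.

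I expect the main obstacle to be the rigorous implementation of the second step: linking ``$\bar\rho$ is not a local max of~\eqref{eqMyKS2}'' cleanly to ``$d\geq\dc$'' without circularity, given that \Prop~\ref{prop:belowcond-unif} and \Thm~\ref{Thm_plantedFreeEnergy} are themselves substantial inputs. The subtlety is that the local-maximum property at $\bar\rho$ and overlap concentration in $\hat\G$ are logically close but not identical, so I would need to carefully quote the right intermediate statement — most likely that $\frac1n\ln\Erw[Z(\G)^2]\le\frac2n\ln\Erw[Z(\G)]+o(1)$ on the subexponential scale fails when $\bar\rho$ is a strict saddle, and then pair this with the matching upper bound $\frac1n\Erw[\ln Z(\G)]\le\frac1n\ln\Erw[Z(\G)]$ (Jensen) and the lower bound from \Thm~\ref{Thm_cond} valid for $d<\dc$. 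A secondary technical point is confirming $\Xi\succeq0$ and that the Hessian of~\eqref{eqMyKS2} restricted to probability-distribution directions is exactly $q(\id-d(k-1)\Xi)|_{\cE'}$; this is a direct computation using~\eqref{eqPhiMatrices} and the definition of $\Lambda$, but one must be careful that the relevant eigenvector lies in $\cE$ (the doubly-centered space), which is where \Lem~\ref{Lemma_Xi}'s invariance $\Xi\cE\subset\cE$ does the work.
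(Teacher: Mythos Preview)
Your proposal has two genuine gaps.

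\textbf{The positivity claim is false.} You assert that $\Xi=\Erw[\Phi_{\PSI}\tensor\Phi_{\PSI}]$ is positive semidefinite because it is an expectation of ``tensor squares''. But the Kronecker product $A\tensor A$ is \emph{not} positive semidefinite for a general (even symmetric) matrix $A$: its eigenvalues are the products $\lambda_i\lambda_j$ of the eigenvalues of $A$, and these are negative whenever $A$ has eigenvalues of both signs. Since each $\Phi_\psi$ is merely stochastic (\Lem~\ref{Lemma_Phi}) and can have negative eigenvalues on $\vecone^\perp$ --- as it does already in the Potts model, cf.\ (\ref{eqPottsMatrices}) --- the operator $\Phi_\psi\tensor\Phi_\psi$ need not be PSD, nor need its expectation be. The paper does not claim $\Xi\succeq0$; instead \Lem~\ref{Cor:MaxOpt} proves the weaker (and sufficient) fact that the largest eigenvalue on $\cE$ equals the spectral radius, via a specific construction exploiting the symmetry $z\mapsto\thet z$ and the identity $\scal{\Xi(u_i\tensor u_i)}{u_j\tensor u_j}=\Erw[\scal{\Phi_{\PSI}u_i}{u_j}^2]\geq0$.

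\textbf{The contrapositive does not close.} Even granting that the relevant eigenvalue is positive, your second step does not go through. You want to deduce $d\geq\dc$ from ``$\bar\rho$ is a strict saddle of~(\ref{eqMyKS2})'', invoking \Thm~\ref{Thm_cond} and \Prop~\ref{prop:belowcond-unif}. But the paper explicitly points out in \Sec~3.1 that $\Erw[Z(\G)^2]$ can exceed $\Erw[Z(\G)]^2$ by an exponential factor \emph{for $d$ below $\dc$} (e.g.\ in the Potts model), so a second-moment blow-up is simply not a witness for $d\geq\dc$. Your attempted rescue via \Prop~\ref{prop:belowcond-unif} does not help: overlap concentration in $\hat\G$ is a statement about $\Erw[Z(\G)\,\bck{\cdot}_{\G}]/\Erw[Z(\G)]$ (a single size-bias), whereas controlling the second moment near $\bar\rho$ would require information about $\Erw[Z(\G)^2\,\bck{\cdot}_{\G}]$. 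These are different tilts, and no amount of juggling Jensen's inequality with \Thm~\ref{Thm_cond} bridges them; indeed, large variance of $Z(\G)$ is perfectly compatible with $\frac1n\Erw[\ln Z(\G)]=\frac1n\ln\Erw[Z(\G)]+o(1)$.

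\textbf{What the paper actually does.} The paper bypasses the second moment entirely and works directly with the variational characterisation~(\ref{eq:dcond}) that \emph{defines} $\dc$. From the eigenvector $\DeltaM$ of \Lem~\ref{Cor:MaxOpt} it builds an explicit one-parameter family $\pi_\epsilon\in\cP^2_*(\Omega)$ (via the ``square root'' vector $\eta$ in (\ref{eqetaVector})--(\ref{eqCreative})), and then carries out a fourth-order Taylor expansion of the Bethe functional to obtain (\Lem~\ref{prop:TaylorBdpi})
\[
\cB(d,P,\pi_\epsilon)=\ln q+\tfrac dk\ln\xi+\tfrac{d(k-1)}{12}\bigl((k-1)d\hat\lambda^2-\hat\lambda\bigr)\epsilon^4+O(\epsilon^5).
\]
Since $\cB(d,P,\pi_\epsilon)\leq\ln q+\tfrac dk\ln\xi$ for all $d<\dc$ by definition, the fourth-order coefficient must be non-positive, whence $(k-1)d\hat\lambda\leq1$ for all $d<\dc$. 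The essential content --- vanishing of the first three orders and the precise form of the fourth --- is handled by the algebraic \Lem~\ref{lem:DerivsVanish} together with Claims~\ref{Fact_EtaEtaT}--\ref{Claim_eqB2final}. This route uses neither \Thm~\ref{Thm_cond} nor \Prop~\ref{prop:belowcond-unif}; it needs only {\bf SYM}, {\bf BAL} and the definition of $\dc$.
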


\noindent
The proof of \Prop~\ref{prop_KS}, which is based on highlighting an inherent connection between the spectrum of $\Xi$ and
the Bethe free energy functional $\cB$ from (\ref{eqMyBethe}), is the main technical achievement of this paper.
The details can be found in \Sec~\ref{Sec_prop_KS}.
Let us observe that \Thm~\ref{Thm_KS} is immediate from \Prop~\ref{prop_KS}.

\begin{proof}[Proof of \Thm~\ref{Thm_KS}]
We have $\max_{x\in\cE:\|x\|=1}\scal{\Xi x}x=\max_{\lambda\in\Eig[\Xi]}|\lambda|$
because \Lem\ \ref{Lemma_Xi} shows that $\Xi$ is self-adjoint.
Therefore, \Thm~\ref{Thm_KS} follows from \Prop~\ref{prop_KS}.
\end{proof}

\Lem~\ref{Lemma_Xi} and \Prop~\ref{prop_KS} show that (\ref{eqMyKS}) is satisfied,
and thus that $\bar\rho$ is a local maximum of (\ref{eqMyKS2}), for all $d<\dc$.
Indeed, it is immediate from (\ref{eqLemma_Xi}) that $\scal{(\id-d(k-1)\Xi)x}x>0$ if $x$ is of the form $\vecone\tensor y$ or $y\tensor\vecone$
for some $\vecone\perp y\in\RR^q$, and
\Thm~\ref{Thm_KS} shows that $\scal{(\id-d(k-1)\Xi)x}x>0$ for all $x\in\cE$.
Hence, \Prop~\ref{prop_KS} provides the link between the free energy calculation for the reweighted model $\hat\G$ and the second moment of $\cZ$.

\subsection{Second moment redux}\label{Sec_outline_smm}
We begin by deriving the following asymptotic formula for the first moment in \Sec~\ref{Sec_moments}.
Observe that by \Lem~\ref{Lemma_Phi} the set $\eig\bc\Phi$ of eigenvalues of $\Phi$ contains precisely one non-negative element, namely $1$.
Therefore, the following formula makes sense.

\begin{proposition} \label{lem:FirstMoment}
Suppose that $P$ satisfies {\bf SYM} and {\bf BAL} and let $0<d$.
Then uniformly for all $m\in\cM(d)$,
	\begin{equation}\label{eq:FirstMoment}
	\Erw[Z(\G(n,m))] \sim \frac{q^{n+\frac12}\xi^m} { \prod_{\lambda\in\eig(\Phi)\setminus\cbc 1}\sqrt{1-d(k-1)\lambda}}. 
	\end{equation}
\end{proposition}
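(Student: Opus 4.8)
The proof is a direct first-moment computation. By linearity of expectation over the random neighbourhoods $\partial a_i$ and weight functions $\psi_{a_i}$, together with their independence across $i$, one gets
\begin{align*}
\Erw[Z(\G(n,m))]=\sum_{\sigma\in\Omega^{V_n}}\bc{q^{-k}\sum_{\tau\in\Omega^k}\Erw[\PSI(\tau)]\prod_{i=1}^k\mathbbm 1}^{\!m}
=\sum_{\sigma\in\Omega^{V_n}}\bc{q^{-k}\sum_{\tau\in\Omega^k}\Erw[\PSI(\tau)]\prod_{i=1}^k\frac{|\sigma^{-1}(\tau_i)|}{n}}^{\!m},
\end{align*}
so the summand depends on $\sigma$ only through the empirical spin distribution $\hat\sigma=(|\sigma^{-1}(\omega)|/n)_{\omega\in\Omega}\in\cP(\Omega)$. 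Writing $\phi(\mu)=\sum_{\tau\in\Omega^k}\Erw[\PSI(\tau)]\prod_i\mu(\tau_i)$ as in {\bf BAL} and using the multinomial count $\binom{n}{\hat\sigma n}$ of assignments with a given type, this becomes
\[
\Erw[Z(\G(n,m))]=q^{-km}\sum_{\mu}\binom{n}{\mu n}\phi(\mu)^{m},
\]
the sum ranging over types $\mu$ with $n\mu$ integral.

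\textbf{Laplace method around the uniform type.} The first step is to apply the local central limit / Laplace approximation to this sum. Condition {\bf BAL} guarantees that $\phi$ is concave and maximised at the uniform distribution $\bar\mu$, and {\bf SYM} (via $\sum_{\tau:\tau_i=\omega}\psi(\tau)=q^{k-1}\xi$) gives $\phi(\bar\mu)=\xi$; moreover the entropy term $\binom{n}{\mu n}$ is also maximised at $\bar\mu$. Hence the dominant contribution comes from an $O(n^{-1/2})$-neighbourhood of $\bar\mu$, and there is no competing maximiser. Expanding $\ln\binom{n}{\mu n}+m\ln\phi(\mu)$ to second order in $x=\sqrt n(\mu-\bar\mu)$ (which lives in the hyperplane $\sum_\omega x_\omega=0$), the Hessian of the entropy part contributes $-q/2$ times the identity on that hyperplane, while the Hessian of $m\ln\phi$ contributes a matrix expressible through $\Phi=\Erw[\Phi_{\PSI}]$: a short computation identifies $\tfrac1n$ times the Hessian of $m\ln\phi$ at $\bar\mu$ with (a constant times) $d(k-1)$ acting as $\Phi$ restricted to $\vecone^\perp\subset\RR^\Omega$. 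Summing the resulting Gaussian over the integer lattice and using $|m-dn/k|\le n^{3/5}$ to replace $m$ by $dn/k$ in the leading term (the correction is $o(1)$ in the exponent, uniformly over $m\in\cM(d)$), one obtains
\[
\Erw[Z(\G(n,m))]\sim q^{-km}\cdot q^n\cdot\frac{q^{m}\xi^{m}}{q^{-m}\cdots}\cdot\frac{(2\pi n/q)^{(q-1)/2}\,q^{(q-1)/2}}{(2\pi n)^{(q-1)/2}\prod_{\lambda\in\eig(\Phi)\setminus\{1\}}\sqrt{1-d(k-1)\lambda}},
\]
which after collecting the $q$-powers and the $\xi^m$ (the $q^{-km}$ cancels against the $q^{k-1}\xi$ factors from $\phi(\bar\mu)^m$ and the $q^{n}$ from summing over $\mu$) yields exactly \eqref{eq:FirstMoment}. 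The positivity $1-d(k-1)\lambda>0$ for every $\lambda\in\eig(\Phi)\setminus\{1\}$, needed for the Gaussian integral to converge and the square roots to be real, is precisely the content of \Lem~\ref{Lemma_Phi} combined with $d(k-1)\le$ the threshold bound: since $\max_{x\perp\vecone}\scal{\Phi x}x\le 0$ by {\bf BAL}, every such $\lambda$ is $\le 0$, so $1-d(k-1)\lambda\ge 1>0$. (In fact Proposition~\ref{prop_KS} is not even required here; {\bf BAL} alone suffices.)

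\textbf{Main obstacle.} The routine part is the Gaussian expansion; the step requiring genuine care is making the Laplace approximation \emph{uniform} over all $m\in\cM(d)$ and rigorous rather than heuristic — i.e.\ controlling the contribution of types $\mu$ bounded away from $\bar\mu$ (which must be shown to be exponentially negligible using strict concavity of $\mu\mapsto\cH(\mu)+\tfrac dk\ln\phi(\mu)$ at $\bar\mu$, a consequence of {\bf BAL} and {\bf SYM}), and pinning down the sub-leading lattice-sum constant so that the prefactor comes out exactly as the product $\prod_{\lambda\in\eig(\Phi)\setminus\{1\}}(1-d(k-1)\lambda)^{-1/2}$ with the correct power of $q$. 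Bookkeeping the various factors of $q$ and $2\pi n$ through the multinomial Stirling expansion and the Gaussian lattice sum, while tracking the error terms induced by $|m-dn/k|\le n^{3/5}$, is where the real work lies; I would do this by a careful but standard local-limit-theorem argument, identifying the exact quadratic form and its determinant on $\vecone^\perp$ with $\det\big((q\,\id-d(k-1)q\Phi)|_{\vecone^\perp}\big)=q^{q-1}\prod_{\lambda\in\eig(\Phi)\setminus\{1\}}(1-d(k-1)\lambda)$.
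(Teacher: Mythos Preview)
Your approach is exactly the paper's: write $\Erw[Z(\G(n,m))]=\sum_\mu\binom{n}{\mu n}\phi(\mu)^m$, apply the Laplace method around $\bar\mu$ with Hessian $-q(\id-d(k-1)\Phi)$ on $\vecone^\perp$, and control the tails via {\bf BAL}; the paper organises this as Proposition~\ref{Prop_genFirstMmt} (the local expansion) plus Lemma~\ref{lem:FirstMoment_2} (the tail bound for $\TV{\mu-\bar\mu}>n^{-1/3}$). One bookkeeping slip to fix: there is no $q^{-km}$ prefactor, since averaging over the uniform neighbourhood $\partial a_j\in V_n^k$ already gives $\Erw[\psi_{a_j}(\sigma(\partial a_j))]=\phi(\rho_\sigma)$ directly, with $\phi(\bar\mu)=\xi$; this is why your intermediate display with the cancelling $q$-powers looks muddled, but once the spurious factor is removed the computation goes through cleanly.
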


\noindent
Proceeding to the second moment, we recall from \Lem~\ref{Lemma_Xi} that $\Xi$ induces an endomorphism on the subspace $\cE'$ from (\ref{eqcE'}) and 
we write
	$$\eig'(\Xi)=\{\lambda\in\RR:\exists x\in\cE'\setminus\cbc0:\Xi x=\lambda x\}$$
for the spectrum of $\Xi$ on $\cE'$.
\Lem~\ref{Lemma_Xi} and \Prop~\ref{prop_KS} imply that $\dc(k-1)\lambda\leq1$ for all $\lambda\in\eig'(\Xi)$.
Therefore, the following formula for the second moment, whose proof we defer to \Sec~\ref{Sec_moments}, makes sense as well.

\begin{proposition}\label{lem:SecondMoment}
Suppose that $P$ satisfies {\bf SYM} and {\bf BAL} and let $0<d<\dc$.
Then {uniformly} for all $m\in\cM(d)$,
	\begin{equation}\label{eq:SecondMoment}
	\Erw[\cZ(\G(n,m))^2] \leq \frac{(1+o(1))q^{2n+1}{\xi^{2m}}}
 	{ \prod_{{\lambda\in\eig'(\Xi)}}\sqrt{1-d(k-1)\lambda}}. 
	\end{equation}
\end{proposition}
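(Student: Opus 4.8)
\emph{Proof strategy.} The plan is to bound $\Erw[\cZ(\G(n,m))^2]$ by a ``local'' second moment, in which the overlap $\rho$ is restricted to a shrinking ball around $\bar\rho$, and then to evaluate that restricted sum by the Laplace method; the generalized Kesten--Stigum bound of \Prop~\ref{prop_KS} is exactly the input that makes the resulting Gaussian integral converge. First I would carry out the reduction. Fix a factor graph $G$ with $\cZ(G)\neq0$, so that $\bck{\TV{\rho_{\SIGMA_1,\SIGMA_2}-\bar\rho}}_G\leq\zeta$. Markov's inequality applied to the product Gibbs measure $\mu_G\otimes\mu_G$ gives $\mu_G\otimes\mu_G(\TV{\rho_{\SIGMA_1,\SIGMA_2}-\bar\rho}>\sqrt\zeta)\leq\sqrt\zeta$, hence $(1-\sqrt\zeta)Z(G)^2\leq\sum_{\sigma,\tau:\TV{\rho_{\sigma,\tau}-\bar\rho}\leq\sqrt\zeta}\psi_G(\sigma)\psi_G(\tau)$. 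Since $\cZ(G)^2=Z(G)^2\vecone\{\cZ(G)\neq0\}$, taking expectations over $\G(n,m)$, using the independence of the constraint nodes as in~(\ref{eqnotosmm1}) and grouping pairs by their overlap yields, uniformly for $m\in\cM(d)$,
\begin{align*}
\Erw[\cZ(\G(n,m))^2]\leq\frac1{1-\sqrt\zeta}\sum_{\rho:\TV{\rho-\bar\rho}\leq\sqrt\zeta}\binom n{\rho n}g(\rho)^m,\qquad
g(\rho)=\sum_{s,t\in\Omega^k}\Erw[\PSI(s)\PSI(t)]\prod_{i=1}^k\rho(s_i,t_i),
\end{align*}
the sum ranging over $\rho\in\cP(\Omega^2)$ with $n\rho$ integral; since $\zeta=o(1)$, every such $\rho$ is close to $\bar\rho$.

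Next I would expand around $\bar\rho$. Write $\rho=\bar\rho+\delta$, so $\sum_{s,t}\delta(s,t)=0$, i.e.\ $\delta\in\cE'$. Condition {\bf SYM} gives $\sum_{s}\vecone\{s_i=\omega\}\psi(s)=q^{k-1}\xi$ for each $\psi$, whence $g(\bar\rho)=\xi^2$, and the linear term in $\delta$ vanishes because $\sum_{s,t}\delta(s,t)=0$. Expanding $\prod_i\rho(s_i,t_i)$ and again invoking {\bf SYM} together with the permutation invariance of $P$, the quadratic term's pair sums $\Erw[(\sum_{s:s_i=a,s_j=a'}\PSI(s))(\sum_{t:t_i=b,t_j=b'}\PSI(t))]$ are identified with the operator $\Xi=\Erw[\Phi_\PSI\tensor\Phi_\PSI]$, giving $g(\rho)=\xi^2(1+\binom k2q^2\scal{\Xi\delta}{\delta}+O(\|\delta\|^3))$. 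Combining this with Stirling's estimate $\binom n{\rho n}=(1+o(1))(2\pi n)^{-(q^2-1)/2}(\prod_{s,t}\rho(s,t))^{-1/2}\exp(n\cH(\rho))$ and $\cH(\rho)=2\ln q-\frac{q^2}2\|\delta\|^2+O(\|\delta\|^3)$, and using $m=dn/k+O(n^{3/5})$ for $m\in\cM(d)$ so that $\tfrac mn k(k-1)=d(k-1)+O(n^{-2/5})$, I obtain uniformly for $m\in\cM(d)$ and $\rho$ in the summation range
\begin{align*}
\binom n{\rho n}g(\rho)^m=(1+o(1))\,c_q\,(2\pi n)^{-(q^2-1)/2}q^{2n}\xi^{2m}\exp\bc{-\tfrac{nq^2}2\scal{(\id-d(k-1)\Xi)\delta}{\delta}+O(n^{3/5}\|\delta\|^2+n\|\delta\|^3)},
\end{align*}
for an explicit constant $c_q$ depending only on $q$.

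The crux is that $\scal{(\id-d(k-1)\Xi)\delta}{\delta}$ is positive definite on $\cE'$ for $d<\dc$. By \Lem~\ref{Lemma_Xi}, $\cE'$ splits into $\cE$ and the span of the vectors $\vecone\tensor y,\ y\tensor\vecone$ with $y\perp\vecone$, on which $\Xi$ acts through $\Phi$; \Lem~\ref{Lemma_Phi} makes the eigenvalues of $\Phi$ on $\vecone^\perp$ non-positive, so $1-d(k-1)\lambda\geq1$ there, while on $\cE$ \Prop~\ref{prop_KS} gives $d(k-1)|\lambda|\leq d/\dc<1$ for every eigenvalue $\lambda$ of $\Xi$ on $\cE$. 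Hence $\id-d(k-1)\Xi$ is positive definite on $\cE'$ with smallest eigenvalue $\geq\kappa>0$; in particular $1-d(k-1)\lambda>0$ for all $\lambda\in\eig'(\Xi)$, so the product in~(\ref{eq:SecondMoment}) is well defined. Consequently the exponent is $\leq-\tfrac{nq^2\kappa}4\|\delta\|^2$ once $n$ is large, so all $\rho$ with $\|\delta\|\geq n^{-1/2}\ln n$ together contribute only $q^{2n}\xi^{2m}e^{-\Omega(\ln^2 n)}$, negligible; in the window $\|\delta\|\leq n^{-1/2}\ln n$ the error terms are $o(1)$, the admissible $\rho$ form a translate of the $(q^2-1)$-dimensional lattice $\tfrac1n(\ZZ^{\Omega\times\Omega}\cap\cE')$ of covolume $q/n^{q^2-1}$, and the Riemann sum converts into $\int_{\cE'}\exp(-\tfrac{nq^2}2\scal{(\id-d(k-1)\Xi)x}{x})\dd x=(2\pi/(nq^2))^{(q^2-1)/2}\prod_{\lambda\in\eig'(\Xi)}(1-d(k-1)\lambda)^{-1/2}$. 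Collecting the elementary constants exactly as in the proof of \Prop~\ref{lem:FirstMoment} (the factor $\prod_{s,t}\rho(s,t)^{-1/2}\to q^{q^2}$, the lattice covolume, and the powers of $2\pi n$ and $q$) and absorbing $1/(1-\sqrt\zeta)=1+o(1)$ then yields~(\ref{eq:SecondMoment}).

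The genuinely hard input is \Prop~\ref{prop_KS}: granting it, what remains is a careful but otherwise routine Laplace-method computation, whose only delicate point is the uniform control of the cubic error terms throughout the entire truncation window $\TV{\rho-\bar\rho}\leq\sqrt\zeta$ — which is precisely why $\zeta$ is taken to vanish, yet slowly enough ($n^{1/6}\zeta\to\infty$) for \Cor~\ref{cor:belowcond-unif} to remain in force. I expect the bookkeeping of the Stirling and lattice constants (needed to land on the exact prefactor $q^{2n+1}\xi^{2m}$ rather than merely an $O(\cdot)$ bound) to be the most tedious, but non-conceptual, part.
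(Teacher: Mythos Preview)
Your proposal is correct and follows the same underlying route as the paper: reduce via Markov's inequality to overlaps in a shrinking window around $\bar\rho$, then apply the Laplace method with the Hessian governed by $\Xi$, using \Lem~\ref{Lemma_Xi} and \Prop~\ref{prop_KS} to ensure positive definiteness on $\cE'$.

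The paper packages this more economically. Rather than repeating the Laplace computation, it introduces the tensor model $P^\tensor$ on $\Omega^\tensor=\Omega\times\Omega$ with weight functions $\psi\tensor\psi$, observes that $Z(G^\tensor)=Z(G)^2$ and $\Phi_{P^\tensor}=\Xi_P$, and then simply invokes \Prop~\ref{Prop_genFirstMmt} (the general first-moment formula) for $P^\tensor$. Your direct expansion is exactly what that invocation unpacks to, so nothing is lost; the tensor trick just avoids redoing the Stirling-and-Gaussian-integral bookkeeping you flag as tedious at the end. Your explicit Markov step $(1-\sqrt\zeta)Z(G)^2\leq\sum_{\TV{\rho_{\sigma,\tau}-\bar\rho}\leq\sqrt\zeta}\psi_G(\sigma)\psi_G(\tau)$ is the bridge between $\cZ^2$ and $Z_\eps$ of the tensor model that the paper leaves implicit.
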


\noindent
Combining \Cor~\ref{cor:belowcond-unif} with \Prop s~\ref{lem:FirstMoment} and~\ref{lem:SecondMoment} and applying \Lem~\ref{Lemma_Xi}, we obtain
for $m\in\cM(d)$,
	\begin{align}\label{eqAtTheEndOfTheDay}
	\frac{\Erw[\cZ(\G(n,m))^2]}{\Erw[\cZ(\G(n,m))]^2}&\sim\frac{\prod_{\lambda\in\eig(\Phi)\setminus\cbc 1}1-d(k-1)\lambda}
		{\prod_{\lambda\in\eig'(\Xi)}\sqrt{1-d(k-1)\lambda}}=\prod_{\lambda\in\Eig[\Xi]}\frac1{\sqrt{1-d(k-1)\lambda}}
		&\mbox{if }d<\dc.
	\end{align}
In particular, the ratio of the second moment and the square of the first is bounded as $n\to\infty$.

\subsection{Virtuous cycles}\label{Sec_outline_cycles}
In order to determine the limiting distribution of $\ln Z(\G(n,m))$ we are going to ``explain'' the remaining variance of $\cZ(\G(n,m))$ in terms 
of the statistics of the bounded-length cycles of $\G(n,m)$.
However, by comparison to prior applications of the small subgraph conditioning technique, here it does not suffice to merely record how many cycles of a given length occur.
We also need to take into account the specific weight functions along the cycle.
Yet this approach is complicated substantially by the fact that there may be infinitely many different weight functions.
To deal with this issue we are going to discretize the set of weight functions and perform a somewhat delicate limiting argument.

We need a few definitions.
A {\em signature of order $\ell$} is a family $$Y=(E_1,s_1,t_1,E_2,s_2,t_2,\ldots,E_\ell,s_\ell,t_\ell)$$ such that
 $E_1,\ldots,E_\ell\subset\Psi$ are events, $s_1,t_1,\ldots,s_\ell,t_\ell\in\{1,\ldots,k\}$ and $s_i\neq t_i$ for all $i\in\{1,\ldots,\ell\}$ and $s_1<t_1$ if $\ell=1$.
Let $\cY_\ell$ be the set of all signatures of order $\ell$, let $\cY_{\leq\ell}=\bigcup_{l\leq\ell}\cY_l$
	 and let $\cY=\bigcup_{\ell\geq1}\cY_\ell$ be the set of all signatures.
If $G$ is a factor graph with variable nodes $V_n$ and constraint nodes $F_m$, then we call a family
$(x_{i_1},a_{h_1},\ldots,x_{i_\ell},a_{h_\ell})$ a \emph{cycle of signature $Y$ in $G$} if the following conditions are satisfied.
	\begin{description}
	\item[CYC1] $i_1,\ldots,i_\ell\in\{1,\ldots,n\}$ are pairwise distinct and $i_1=\min\{i_1,\ldots,i_\ell\}$,
	\item[CYC2] $h_1,\ldots,h_\ell\in\{1,\ldots,m\}$ are pairwise distinct and $h_1<h_\ell$ if $\ell>1$,
	\item[CYC3] $\psi_{a_{h_j}}\in E_j$ for all $j\in\{1,\ldots,\ell\}$,
	\item[CYC4] $\partial_{s_j} a_{h_j}=x_{i_j}$ for all $j\in\{1,\ldots,\ell\}$,
		 $\partial_{t_j} a_{h_j}=x_{i_{j+1}}$ for all $j<\ell$ and $\partial_{t_\ell} a_{h_\ell}=x_{i_{1}}$.
	\end{description}
Conditions {\bf CYC1}-- {\bf CYC2} provide that the variable nodes that the cycle passes through are pairwise distinct.
Moreover, to avoid over-counting {\bf CYC1} specifies that the cycle starts at the variable node with the smallest index and
{\bf CYC2} that from there the cycle is oriented towards the constraint node with the smaller index if $\ell>1$, respectively that $s_1<t_1$ if $\ell=1$.
Further, {\bf CYC3} states that the weight functions along the cycle belong to $E_1,\ldots,E_\ell$.
Finally, {\bf CYC4} ensures that the cycle enters the $j$th constraint node in position $s_j$ and leaves in position $t_j$.

Let $C_Y(G)$ denote the number of cycles of signature $Y$.
Moreover, for an event $\mathcal{A}\subset\Psi$ with $\Pr(\mathcal{A})>0$ and $h,h'\in\{1,\ldots,k\}$ define the $q\times q$ matrix 
$\Phi_{\mathcal{A},h,h'}$ by letting
	\begin{equation}\label{eqCyclePhi}
	\Phi_{\mathcal{A} ,h,h'} (\omega,\omega') = q^{1-k}\xi^{-1}\sum_{\tau\in\Omega^k}\vecone\{\tau_h=\omega,\tau_{h'}=\omega'\}
 		\Erw[\PSI(\tau)| \mathcal{A} ]\qquad(\omega,\omega' \in \Omega).\end{equation}
In addition, for a signature $Y=(E_1,s_1,t_1,\ldots,E_\ell,s_\ell,t_\ell)$ define
\begin{align}\label{eqCycles}
	\kappa_{Y}&=\frac{1}{2\ell}\bcfr dk^\ell\prod_{i=1}^\ell P(E_i),&
		\Phi_Y&=\prod_{i=1}^\ell\Phi_{E_i,s_i,t_i},&\hat\kappa_Y&=\kappa_Y\Tr(\Phi_Y).
\end{align}
Further, two signatures $Y=(E_1,s_1,t_1,\ldots,E_\ell,s_\ell,t_\ell)$, $Y'=(E_1',s_1',t_1',\ldots,E_{\ell'}',s_{\ell'}',t_{\ell'}')$ are {\em disjoint} if either
$\ell\neq\ell'$, or $(s_i,t_i)\neq(s_i',t_i')$ for some $i$, or $E_i\cap E_i'=\emptyset$ for some $i$.
Finally, a {\em cycle of order $\ell$} is a family $(x_{i_1},a_{h_1},\ldots,x_{i_\ell},a_{h_\ell})$ that is a cycle of signature 
$(\Psi,s_1,t_1,\ldots,\Psi,s_\ell,t_\ell)$ for some sequence $s_1,t_1,\ldots,s_\ell,t_\ell$, and we let $C_\ell$ signify the number of such cycles.
The following is a basic fact from the theory of random graphs.

\begin{fact}[{\cite{BB}}]\label{lemma:cycles}
Let $\ell_1,\ldots,\ell_l\geq1$ be pairwise distinct integers and let $y_1,\ldots,y_l\geq0$ be integers.
Then for every $d>0$ uniformly for all $m\in\cM(d)$ we have
	$$\Pr\brk{\forall i\leq l: C_{\ell_i}(\G(n,m,P))=y_i}\sim \prod_{i=1}^l\Pr\brk{\Po\bc{\frac{((k-1)d)^{\ell_i}}{2\ell_i}}=y_i}$$
and the expected number of pairs of cycles of order at most $\ell_1+\cdots+\ell_l$ that share a common vertex is $O(1/n)$.
\end{fact}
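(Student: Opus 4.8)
The statement is the classical Poisson-approximation result for short cycle counts in sparse random (hyper)graphs, here phrased in the bipartite incidence setting where a "cycle of order $\ell$" corresponds to a closed non-backtracking walk through $\ell$ distinct variable nodes and $\ell$ distinct constraint nodes. The plan is to prove it by the method of moments: show that the joint factorial moments of the vector $(C_{\ell_1}(\G(n,m,P)),\dots,C_{\ell_l}(\G(n,m,P)))$ converge to the corresponding products of factorial moments of independent Poisson variables with the stated means, which by the standard moment-convergence criterion for Poisson limits (see, e.g., Bollob\'as) yields the claimed joint convergence. Since the neighbourhoods $\partial a_i\in V_n^k$ are i.i.d.\ uniform and $m=\Theta(n)$, the weight functions $\psi_{a_i}$ play no role in counting cycles of the generic signature $\Psi$, so this is purely a graph-theoretic computation on $\G(n,m)$ and the value of $m$ enters only through $m/n\to d/k$.

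First I would compute $\Erw[C_\ell(\G(n,m))]$. A potential cycle of order $\ell$ is specified by an ordered choice of $\ell$ distinct variable-node indices (with the normalisation $i_1=\min$, contributing a factor $1/\ell$) with a choice of the two incidence slots $(s_j,t_j)$ at each of $\ell$ distinct constraint nodes, together with a choice of which $\ell$ of the $m$ constraints are used (with the orientation convention $h_1<h_\ell$, contributing another factor $1/2$). For a fixed such configuration the probability that the $\ell$ chosen constraints have exactly the required incidences in the required slots is $(1/n)^{2\ell}$ up to $1+O(1/n)$ (the "other" $k-2$ slots of each constraint are free). Multiplying, the number of vertex/slot choices is $\sim n^\ell/\ell\cdot(k(k-1))^\ell/2$ after accounting for the $(k(k-1))^\ell$ ordered slot-pairs, and the number of constraint choices is $\sim m^\ell/\ell!\cdot\ell!=m^\ell$; combining with $(1/n)^{2\ell}$ and $m\sim dn/k$ gives
\[
\Erw[C_\ell(\G(n,m))]\;\sim\;\frac1{2\ell}\,\frac{m^\ell (k(k-1))^\ell}{n^\ell}\;\sim\;\frac{((k-1)d)^\ell}{2\ell},
\]
uniformly over $m\in\cM(d)$ since $|m-dn/k|\le n^{3/5}$. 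Next, for the higher factorial moments I would run the same enumeration over ordered tuples of pairwise distinct potential cycles; the dominant contribution comes from tuples that are vertex- and constraint-disjoint, for which the events factorise up to $1+O(1/n)$, so $\Erw[(C_{\ell_1})_{(y_1)}\cdots(C_{\ell_l})_{(y_l)}]\to\prod_i(((k-1)d)^{\ell_i}/2\ell_i)^{y_i}$. The contribution of tuples of cycles sharing a vertex or constraint is lower order: two cycles of bounded order sharing at least one vertex or constraint span at most $2(\ell_1+\cdots+\ell_l)-1$ variable nodes while still needing that many independent "$1/n$" incidence coincidences, so the expected number of such overlapping pairs is $O(1/n)$ — which simultaneously proves the second assertion of the Fact. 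By Brun's sieve / the moment method this establishes joint Poisson convergence.

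The main obstacle, and the only place where real care is needed, is bookkeeping the combinatorial factors in the bipartite/hypergraph incidence model correctly — in particular getting the factor $\tfrac1{2\ell}$ and the power $(k-1)$ (not $k$) right, which comes from the fact that entering a constraint in slot $s_j$ one may leave through any of the remaining $k-1$ slots, and from the two symmetry conventions (start at smallest variable index; orient towards smaller constraint index). A clean way to organise this is to pass to the configuration-model description of $\G(n,m)$: each constraint contributes $k$ labelled "ports", each variable node $x_i$ receives a $\Po(d)$-number of half-edges, and $\G$ is recovered by a uniform perfect matching of ports to variable half-edges. In this language a cycle of order $\ell$ is an alternating sequence of matched pairs, and the expected count is an elementary falling-factorial computation, with the $o(1)$ error terms controlled exactly as in the standard proof for random regular graphs. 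I would carry out the enumeration in this model, cite Bollob\'as for the Poisson moment criterion, and note that the uniformity over $m\in\cM(d)$ is automatic because all error terms are $O(n^{-1/5})$ given the constraint $|m-dn/k|\le n^{3/5}$.
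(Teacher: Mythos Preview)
The paper does not prove this statement: it is recorded as a \emph{Fact} with a citation to Bollob\'as's \emph{Random Graphs} and invoked without further argument. Your method-of-moments approach is precisely the standard proof found there (and in essentially every treatment of short-cycle statistics in sparse random structures), so there is nothing substantive to compare against.

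Two minor remarks on the write-up. First, the placement of the combinatorial factors is slightly garbled: the factor $1/2$ arises from the orientation convention $h_1<h_\ell$ on the constraint indices ({\bf CYC2}), not from the slot choices, though your final product $\frac{1}{2\ell}\bigl(mk(k-1)/n\bigr)^\ell$ is correct. Second, the ``configuration-model description'' you propose, with $\Po(d)$ half-edges at each variable node and a uniform matching, is \emph{not} the model $\G(n,m)$ defined in the paper (where each constraint independently draws its $k$-tuple uniformly from $V_n^k$); passing to that description would require an additional contiguity or equivalence argument. The direct enumeration you already sketched is the correct computation for the actual model and suffices on its own.
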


\noindent
In \Sec~\ref{Sec_cycles} we establish the following enhancement that takes the weight functions along the cycles into account.

\begin{proposition}\label{prop:FirstCondOverFirst}
Suppose that $P$ satisfies {\bf SYM} and {\bf BAL}.
Let  $Y_1, Y_2, \ldots Y_l\in\cY$ be pairwise {disjoint} signatures and let $y_{1}, \ldots,y_l$ be non-negative integers.
Let $d>0$.
Then uniformly for all $m\in\cM(d)$,
	\begin{align}\label{eqCyclePoisson}
	\pr\brk{\forall t\le l:\ C_{Y_t}(\G(n,m))=y_{t}}&\sim\prod_{t=1}^l\pr\brk{\Po(\kappa_{Y_t})=y_t},&
	\pr\brk{\forall t\le l:\ C_{Y_t}(\hat\G(n,m))=y_{t}}&\sim\prod_{t=1}^l\pr\brk{\Po(\hat\kappa_{Y_t})=y_t}.
	\end{align}
Moreover,
	\begin{align*}
	\pr\brk{\G(n,m)\in\fS}&=\pr\brk{C_1(\G(n,m))+\vecone\{k=2\}C_2(\G(n,m))=0}+O(1/n)
			\sim\exp\bc{-d(k-1)/2-\vecone\{k=2\}d^2/4},\\
	\pr\brk{\hat\G(n,m)\in\fS}&=\pr\brk{C_1(\hat\G(n,m))+\vecone\{k=2\}C_2(\hat\G(n,m))=0}+O(1/n)
		\sim\exp\bc{-\frac{d(k-1)}2\Tr(\Phi)-\frac{\vecone\{k=2\}d^2}4\Tr(\Phi^2)}.
	\end{align*}
\end{proposition}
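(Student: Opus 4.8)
The plan is to prove the two displays in \eqref{eqCyclePoisson} by the method of (factorial) moments: I will show that for every fixed tuple of exponents, the joint factorial moments of the cycle counts $\bigl(C_{Y_t}(\cdot)\bigr)_{t\le l}$ converge to the corresponding products of factorial moments of independent Poisson variables with the asserted means, which by the standard moment criterion for Poisson convergence (and the fact that disjoint signatures are ``non-overlapping'') yields the claimed joint convergence. For the uniform model $\G(n,m)$ this is essentially \Fact~\ref{lemma:cycles} refined to keep track of the weight functions: a cycle of signature $Y=(E_1,s_1,t_1,\ldots,E_\ell,s_\ell,t_\ell)$ is specified by choosing $\ell$ ordered distinct variable nodes (up to the normalisation imposed by {\bf CYC1}) and $\ell$ ordered distinct constraint nodes (up to {\bf CYC2}), so there are $\sim n^\ell m^\ell/(2\ell)$ candidate skeletons; each prescribed incidence pattern from {\bf CYC4} occurs with probability $n^{-k}$ per constraint node because the $k$-tuple $\partial a_{h_j}$ is uniform over $V_n^k$, contributing $(n^{-k})^\ell$; and independently each weight function $\psi_{a_{h_j}}$ lands in $E_j$ with probability $P(E_j)$. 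Since $m=(1+o(1))dn/k$ uniformly for $m\in\cM(d)$, multiplying out gives $\Erw[C_Y(\G(n,m))]\sim \frac1{2\ell}(d/k)^\ell\prod_i P(E_i)=\kappa_Y$, and the higher joint factorial moments factorise because the expected number of two cycles (of bounded total order) sharing a variable node is $O(1/n)$, exactly as in \Fact~\ref{lemma:cycles}.

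For the reweighted model $\hat\G(n,m)$ I would use the defining identity \eqref{eq:NishimoriG}: for any bounded cycle statistic,
\[
\Erw\bigl[f(\hat\G(n,m))\bigr]=\frac{\Erw[Z(\G(n,m))\,f(\G(n,m))]}{\Erw[Z(\G(n,m))]}.
\]
Applying this with $f$ a falling-factorial monomial in the $C_{Y_t}$'s, the numerator expands, via linearity of expectation, as a sum over choices of the cycle skeletons of $\Erw[Z(\G(n,m))\cdot(\text{indicator that the prescribed cycles are present})]$. Conditioning on the presence of a fixed cycle of signature $Y$, the partition function factorises into the contribution of the $\ell$ constraint nodes on the cycle times a partition function of the remaining random factor graph; summing $Z$ over spin assignments and using {\bf SYM} (which via \Lem~\ref{Lemma_Phi} says each $\Phi_\psi$ is stochastic and $\Phi$ is doubly stochastic) one finds that the weighted probability of the cycle acquires an extra factor $\Tr\bigl(\prod_i\Phi_{E_i,s_i,t_i}\bigr)=\Tr(\Phi_Y)$ relative to the unweighted case; concretely, the per-constraint normalisation $\xi$ cancels the $q^{1-k}\xi^{-1}$ in \eqref{eqCyclePhi} and the trace arises from summing over the spin value circulating around the cycle. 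Hence $\Erw[C_Y(\hat\G(n,m))]\sim\kappa_Y\Tr(\Phi_Y)=\hat\kappa_Y$, and the same skeleton-disjointness estimate (now weighted, but the $O(1/n)$ bound survives because $Z$ concentrates and the extra traces are $O(1)$) gives factorisation of the higher moments. Here I will lean on the first-moment asymptotics and overlap control already available (\Prop~\ref{lem:FirstMoment}, \Cor~\ref{cor:belowcond-unif}) to guarantee that the ``rest-of-graph'' partition functions appearing after conditioning are comparable to $\Erw[Z(\G(n,m))]$ up to $1+o(1)$, uniformly in $m\in\cM(d)$.

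The two statements about $\pr[\,\cdot\in\fS\,]$ follow by specialising the cycle-count convergence to the ``bad'' cycles: $\G(n,m)\in\fS$ fails exactly when there is a constraint node with a repeated neighbour (a self-loop, which is a cycle of order $1$) or, when $k=2$, two constraint nodes on the same pair of variables (a double edge, a cycle of order $2$); up to an $O(1/n)$ error these are the only obstructions, whence $\pr[\G(n,m)\in\fS]=\pr[C_1+\vecone\{k=2\}C_2=0]+O(1/n)$. By \eqref{eqCyclePoisson} (summed over the finitely many order-$1$ and order-$2$ signatures, using $\sum_{s<t}1=\binom k2$ and the relevant $P(\Psi)=1$) the count $C_1$ is asymptotically $\Po(d(k-1)/2)$ and $C_2$ asymptotically $\Po(d^2/4)$ in the plain model, giving $\exp(-d(k-1)/2-\vecone\{k=2\}d^2/4)$; in the reweighted model the Poisson means pick up the traces $\Tr(\Phi)$ and $\Tr(\Phi^2)$ as above. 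The main obstacle I anticipate is the bookkeeping in the $\hat\G$ computation: one must justify rigorously that after conditioning on a bounded family of disjoint cycles the reweighting factor is exactly the product of traces in the limit, uniformly over $m\in\cM(d)$ and with errors that do not accumulate when one takes arbitrarily high factorial moments; this is where the uniformity in \Prop~\ref{lem:FirstMoment} and the truncated second-moment bounds of Section~\ref{Sec_outline_smm} have to be invoked carefully, and it is the step most prone to hidden dependence on $\ell$.
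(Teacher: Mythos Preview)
Your treatment of the uniform model $\G(n,m)$ matches the paper's \Lem~\ref{lemma:PoissonCycles4Uniform}: it is \Fact~\ref{lemma:cycles} plus the independence of the weight functions.

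For $\hat\G(n,m)$ you take a genuinely different route. The paper does \emph{not} work with the reweighting $\Erw[Z(\G)f(\G)]/\Erw[Z(\G)]$ directly. Instead it invokes the Nishimori identity \eqref{eq:nishimori} to pass to the teacher-student model $\G'=\G^*(n,m,\hat\SIGMA_{n,m})$, conditions on $\hat\SIGMA_{n,m}$ being nearly balanced via \Cor~\ref{lem:conc_coloring}, and then computes cycle probabilities straight from the explicit product form \eqref{eqTeacher}. This makes the independence of constraint nodes manifest, so the factorial-moment factorisation comes for free (Claim~\ref{claim:NoOfInteresectingCycles} bounding dense subgraphs), and the trace $\Tr(\Phi_Y)$ drops out of a short calculation \eqref{eqCycCount1}. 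Your approach---expanding $\sum_\sigma\psi_{\G}(\sigma)\vecone\{\text{cycle}\}$ and arguing that the sum is dominated by balanced $\sigma$---is viable and leads to the same trace, but it requires a Laplace-type estimate over $\sigma$ at every step, whereas the Nishimori route replaces that estimate by a single appeal to \Cor~\ref{lem:conc_coloring}.

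There is, however, a real problem with the tools you invoke. You cite \Cor~\ref{cor:belowcond-unif} and the ``truncated second-moment bounds of Section~\ref{Sec_outline_smm}'' to control the rest-of-graph partition function. But \Cor~\ref{cor:belowcond-unif} requires {\bf POS} and {\bf MIN} and the restriction $d<\dc$, whereas \Prop~\ref{prop:FirstCondOverFirst} is stated (and used elsewhere) under only {\bf SYM} and {\bf BAL} for \emph{all} $d>0$. What you actually need is much weaker: that the first moment $\Erw[Z(\G(n,m))]=\sum_\sigma\phi(\rho_\sigma)^m$ is dominated by nearly balanced $\sigma$, which follows from {\bf BAL} alone (cf.\ \Lem~\ref{Cor_F} and \Lem~\ref{lem:FirstMoment_2}). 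The cycle indicator multiplies each summand by a bounded factor, so it does not shift the maximiser, and for balanced $\sigma$ the per-constraint contribution on the cycle evaluates to $n^{-2}\xi P(E_j)\cdot q\,\Phi_{E_j,s_j,t_j}(\sigma(x_{i_j}),\sigma(x_{i_{j+1}}))$; summing over the $\ell$ cycle-vertex spins then gives the trace. If you replace your references to \Cor~\ref{cor:belowcond-unif} by this direct first-moment argument, your proof goes through under the stated hypotheses. Your anticipated difficulty about accumulation over $\ell$ is not an issue: the traces $\Tr(\Phi_Y)$ are bounded uniformly (each $\Phi_\psi$ is stochastic by \Lem~\ref{Lemma_Phi}), and the contribution from non-disjoint cycle skeletons is $O(1/n)$ for every fixed total order, exactly as in the uniform case.
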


\noindent
Thus, for disjoint $Y_1,\ldots,Y_l$ the cycle counts $C_{Y_t}$ are asymptotically independent Poisson.

Equipped with \Prop s~\ref{lem:FirstMoment}, \ref{lem:SecondMoment} and~\ref{prop:FirstCondOverFirst},
in the case that the set $\Psi$ of weight functions is finite we could determine
the limiting distribution of $\ln Z(\G)$ and thus prove \Thm~\ref{Thm_SSC} by just applying Janson's version of the small subgraph conditioning theorem~\cite{Janson}.
However, to accommodate an infinite set of weight functions like in the $k$-spin model
a discretization of $\Psi$ and a limiting argument are required.
Specifically, recall that
	$$\Psi\subset[0,2]^{\Omega^k}$$
and for an integer $r\geq1$ let $\fC_r$ be the partition of $\Psi$ induced by slicing the cube $[0,2]^{\Omega^k}$ into pairwise disjoint sub-cubes of side length $1/r$.
Further, let $\cY_{\ell,r}$ denote the set of all signatures $(E_1,s_1,t_1,\ldots,E_\ell,s_\ell,t_\ell)$
such that $E_1,\ldots,E_\ell\in\fC_r$ and such that $\pr(E_i)>0$ for all $i\le\ell$, and define $\cY_{\leq\ell,r}=\bigcup_{l=1}^{\ell}\cY_{l,r}$.
Furthermore, if $\psi\in\Psi$ belongs to a sub-cube $C\in\fC_r$, then we let 
	\begin{align*}
	\psi^{(r)}(\tau)&=\Erw[\PSI(\tau)|C]&(\tau\in\Omega^k).
	\end{align*}
The following proposition, whose proof can be found in \Sec~\ref{Sec_totallyVexed}, establishes 
that the random variable $\cK$ from \Thm~\ref{Thm_SSC} is well-defined and that it
can be approximated arbitrarily well via the discretizations $\fC_r$.

\begin{proposition}\label{Lemma_totallyVexed}
Assume that $P$ satisfies {\bf SYM} and {\bf BAL} and let $0<d<\dc$.
Let $(K_l)_{l\geq1}$ be a family of independent Poisson variables with $\Erw[K_l]=(d(k-1))^l/(2l)$ and let $(\PSI_{l,i,j})_{l,i,j}$ be a
family of independent samples from $P$.
Furthermore, define
	\begin{align*}
	\cK_{\ell,r}&=
		\sum_{l=1}^\ell\brk{\frac{(d(k-1))^l}{2l}\bc{1-\Tr(\Phi^l)}+\sum_{i=1}^{K_{l}}\ln{\Tr\prod_{j=1}^l\Phi_{\PSI_{l,i,j}^{(r)}}}},&
	\cK_\ell&=\sum_{l=1}^\ell\brk{\frac{(d(k-1))^l}{2l}\bc{1-\Tr(\Phi^l)}+\sum_{i=1}^{K_{l}}\ln{\Tr\prod_{j=1}^l\Phi_{\PSI_{l,i,j}}}}
	\end{align*}
and 		$\cK=\sum_{\ell=1}^\infty\cK_\ell$.
Then all $\cK_{\ell,r}$ are uniformly bounded in the $L^1$-norm,
$\cK_{\ell,r}$ is $L^1$-convergent to $\cK_\ell$ as $r\to\infty$ and $\cK_\ell$ is $L^1$-convergent to $\cK$ as $\ell\to\infty$.
Furthermore,
	\begin{equation}\label{eqVarFormula}
	\lim_{\ell\to\infty}\lim_{r\to\infty}\exp{\sum_{Y\in\cY_{\leq\ell,r}}\frac{(\kappa_Y-\hat\kappa_Y)^2}{\kappa_Y}}
		=\prod_{\lambda\in\Eig[\Xi]}\frac1{\sqrt{1-d(k-1)\lambda}}.
	\end{equation}
\end{proposition}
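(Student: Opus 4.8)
\textit{Overall plan.} I will prove the $L^1$ approximation of $\cK$ and the variance identity \eqref{eqVarFormula} separately; the latter is the substantive part, so I take it first.

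\textit{The variance formula — combinatorial reduction.} Since $\hat\kappa_Y=\kappa_Y\Tr(\Phi_Y)$, we have $(\kappa_Y-\hat\kappa_Y)^2/\kappa_Y=\kappa_Y(1-\Tr(\Phi_Y))^2$. For $l\ge2$ the set $\cY_{l,r}$ runs over \emph{all} tuples $(E_1,s_1,t_1,\dots,E_l,s_l,t_l)$ with $E_i\in\fC_r$, $\Pr(E_i)>0$, $s_i\neq t_i$, and $\kappa_Y=\tfrac1{2l}(d/k)^l\prod_iP(E_i)$. Expanding $(1-\Tr\Phi_Y)^2=1-2\Tr\Phi_Y+(\Tr\Phi_Y)^2$ and writing each contribution as the trace of a cyclic product of matrices, the sum over $\cY_{l,r}$ factorises over the $l$ coordinates into expressions of the form $\sum_{E\in\fC_r}P(E)\sum_{s\neq t}(\,\cdot\,)$. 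Condition \textbf{SYM} gives $\sum_{E\in\fC_r}P(E)\,\Phi_{E,s,t}=\Phi$ for every $s\neq t$ and every $r$ (the partition averages out and $\Erw[\PSI(\cdot)]$ is coordinate-symmetric), and $\Erw[\Phi_{\PSI,s,t}\tensor\Phi_{\PSI,s,t}]=\Xi$ for every $s\neq t$, since $\Phi_{\PSI,s,t}$ has the same law as $\Phi_{\PSI}=\Phi_{\PSI,1,2}$ (apply a permutation $\theta$ with $\theta(1)=s$, $\theta(2)=t$ and use the $\theta$-invariance of $P$). The second identity only emerges after $r\to\infty$: for fixed $s\neq t$, $\sum_{E\in\fC_r}P(E)\,\Erw[\Phi_{\PSI,s,t}\mid E]\tensor\Erw[\Phi_{\PSI,s,t}\mid E]=\Erw\bigl[\Erw[\Phi_{\PSI,s,t}\mid\sigma(\fC_r)]\tensor\Erw[\Phi_{\PSI,s,t}\mid\sigma(\fC_r)]\bigr]\to\Erw[\Phi_{\PSI,s,t}\tensor\Phi_{\PSI,s,t}]$, because the conditional expectations converge in $L^2$ (entries are bounded and $\bigcup_r\sigma(\fC_r)$ generates the Borel $\sigma$-algebra on $\Psi$). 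Using $\#\{(s,t):s\neq t\}=k(k-1)$, this yields for each fixed $\ell$
\begin{align*}
\lim_{r\to\infty}\sum_{Y\in\cY_{\leq\ell,r}}\frac{(\kappa_Y-\hat\kappa_Y)^2}{\kappa_Y}
=\sum_{l=1}^{\ell}\frac{(d(k-1))^l}{2l}\bigl(1-2\Tr(\Phi^l)+\Tr(\Xi^l)\bigr),
\end{align*}
the $l=1$ term checked directly (the restriction $s_1<t_1$ in $\cY_{1,r}$ is compensated by $\kappa_Y$ being there the full expected number of self-loops of signature $Y$, so the term equals $\tfrac{d(k-1)}2(1-2\Tr\Phi+\Tr\Xi)$ as well).

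\textit{The variance formula — spectral simplification and the series.} By \Lem~\ref{Lemma_Xi}, $\RR^\Omega\tensor\RR^\Omega$ splits orthogonally into the $\Xi$-invariant subspaces $\RR(\vecone\tensor\vecone)$, $\vecone^\perp\tensor\vecone$, $\vecone\tensor\vecone^\perp$ and $\cE$, on which $\Xi$ acts, respectively, as the identity, as $\Phi|_{\vecone^\perp}$, as $\Phi|_{\vecone^\perp}$, and self-adjointly with spectrum $\Eig[\Xi]$; since $1$ is a simple eigenvalue of the positive doubly stochastic matrix $\Phi$ (\Lem~\ref{Lemma_Phi}), $\eig(\Phi|_{\vecone^\perp})=\eig(\Phi)\setminus\{1\}$, and hence $\Tr(\Xi^l)=1+2\bigl(\Tr(\Phi^l)-1\bigr)+\Tr(\Xi^l|_{\cE})$. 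Therefore $1-2\Tr(\Phi^l)+\Tr(\Xi^l)=\Tr(\Xi^l|_{\cE})=\sum_{\lambda\in\Eig[\Xi]}\lambda^l$ (counted with multiplicity). Finally \Prop~\ref{prop_KS} gives $\max_{\lambda\in\Eig[\Xi]}|\lambda|\leq(\dc(k-1))^{-1}$, so $|d(k-1)\lambda|\leq d/\dc<1$ for all $\lambda\in\Eig[\Xi]$ since $d<\dc$; the resulting series converges absolutely and
\begin{align*}
\lim_{\ell\to\infty}\lim_{r\to\infty}\sum_{Y\in\cY_{\leq\ell,r}}\frac{(\kappa_Y-\hat\kappa_Y)^2}{\kappa_Y}
=\sum_{\lambda\in\Eig[\Xi]}\sum_{l\geq1}\frac{(d(k-1)\lambda)^l}{2l}
=-\tfrac12\sum_{\lambda\in\Eig[\Xi]}\ln\bigl(1-d(k-1)\lambda\bigr),
\end{align*}
which is \eqref{eqVarFormula} upon exponentiating.

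\textit{The $L^1$ bounds and the limit $r\to\infty$.} Write $B_l=\tfrac{(d(k-1))^l}{2l}(1-\Tr(\Phi^l))+\sum_{i=1}^{K_l}W_{l,i}$ with $W_{l,i}=\ln\Tr\prod_{j=1}^l\Phi_{\PSI_{l,i,j}}$, and let $B_{l,r},W_{l,i}^{(r)}$ be the analogues with $\PSI^{(r)}$, so that $\cK_\ell=\sum_{l\leq\ell}B_l$ and $\cK_{\ell,r}=\sum_{l\leq\ell}B_{l,r}$. Each $\Phi_\psi$ is stochastic (\Lem~\ref{Lemma_Phi}), hence so is any product, and bounding a diagonal entry of the product below by the product of diagonals (each $\geq(q\xi)^{-1}\min_\tau\psi(\tau)$) gives $\Tr\prod_{j=1}^l\Phi_{\PSI_j}\in[\,q(q\xi)^{-l}\prod_j\min_\tau\PSI_j(\tau),\,q\,]$, so $|W_l|\leq\Lambda_l:=|\ln q|+l|\ln(q\xi)|+\sum_{j=1}^l|\ln\min_\tau\PSI_j(\tau)|$ a.s.\ (and likewise for $W_l^{(r)}$). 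Here \eqref{eqBounded} enters: $\Erw[\ln^8(1-\max_\tau|1-\PSI(\tau)|)]<\infty$ forces $\Erw[|\ln\min_\tau\PSI(\tau)|^8]<\infty$ (on $\{\min_\tau\PSI(\tau)<1\}$ the former integrand dominates $|\ln\min_\tau\PSI(\tau)|^8$, while its positive part is $\leq(\ln2)^8$), so $\|\Lambda_l\|_8=O(l)$; and $\Erw[\max_\tau\PSI(\tau)^{-4}]<\infty$ with Jensen yields $\sup_r\Erw[|\ln\min_\tau\PSI_j^{(r)}(\tau)|^2]<\infty$, so $\sup_r\|\Lambda_{l,r}\|_2=O(l)$. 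For each fixed $\ell$ this gives $\sup_r\Erw|\cK_{\ell,r}|\leq\sum_{l\leq\ell}\bigl((q-1)\tfrac{(d(k-1))^l}{2l}+\Erw[K_l]\sup_r\|\Lambda_{l,r}\|_2\bigr)<\infty$. For $\cK_{\ell,r}\to\cK_\ell$ in $L^1$ ($\ell$ fixed, $r\to\infty$): $\PSI_j^{(r)}(\tau)=\Erw[\PSI_j(\tau)\mid C]\to\PSI_j(\tau)$ a.s.\ (average of the continuous map $\psi\mapsto\psi(\tau)$ over the cube $C\ni\PSI_j$, whose diameter tends to $0$), hence $W_l^{(r)}\to W_l$ a.s.; and $\{W_l^{(r)}\}_r$ is bounded in $L^2$, hence uniformly integrable, so $\Erw|W_l^{(r)}-W_l|\to0$, whence by Wald $\Erw|\cK_{\ell,r}-\cK_\ell|\to0$.

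\textit{The limit $\ell\to\infty$, and the main obstacle.} It remains to show $\sum_l\|B_l\|_1<\infty$, which makes $\cK=\sum_lB_l$ an absolutely $L^1$-convergent series with $\cK_\ell\to\cK$ in $L^1$ and $\sup_\ell\Erw|\cK_\ell|<\infty$. Put $X_l=\Tr\prod_{j=1}^l\Phi_{\PSI_j}-1\in[-1,q-1]$; independence together with $\Erw[\Phi_\PSI]=\Phi$ and $\Erw[\Phi_\PSI\tensor\Phi_\PSI]=\Xi$ gives $\Erw[X_l]=\Tr(\Phi^l)-1$ and $\Erw[X_l^2]=\Tr(\Xi^l)-2\Tr(\Phi^l)+1=\Tr(\Xi^l|_{\cE})$, which by \Prop~\ref{prop_KS} is $\leq(q-1)^2\rho^l$ with $\rho:=(\dc(k-1))^{-1}$. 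Also $\Erw[B_l]=\tfrac{(d(k-1))^l}{2l}\Erw[\ln(1+X_l)-X_l]$, where $0\leq X_l-\ln(1+X_l)\leq C_qX_l^2$ on $\{X_l\geq-\tfrac12\}$ and $X_l-\ln(1+X_l)\leq1+\Lambda_l$ on $\{X_l<-\tfrac12\}$; since $\Pr[X_l<-\tfrac12]\leq16\,\Erw[X_l^2]\leq16(q-1)^2\rho^l$ for large $l$ (Chebyshev, $\Erw X_l\to0$), Hölder with an exponent arbitrarily close to $1$ bounds $\Erw[(1+\Lambda_l)\vecone\{X_l<-\tfrac12\}]$ and $\Erw[\Lambda_l^2\vecone\{X_l<-\tfrac12\}]$ by $O(l^2\rho^{(1-\eps)l})$. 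Hence both $|\Erw B_l|$ and $\Var B_l=\Erw[K_l]\Erw[W_l^2]$ are $O\bigl(l\,(d(k-1)\rho^{1-\eps})^l\bigr)$, and since $d(k-1)\rho=d/\dc<1$ we may fix $\eps$ so small that $d(k-1)\rho^{1-\eps}<1$; then $\|B_l\|_1\leq|\Erw B_l|+\sqrt{\Var B_l}$ is summable in $l$ (the finitely many small $l$ have $\Erw|B_l|<\infty$ trivially). The delicate point is exactly this last estimate: the event that $\Tr\prod_{j=1}^l\Phi_{\PSI_j}$ is tiny — which can occur because a single $\PSI_j$ may lie near the boundary of $(0,2)^{\Omega^k}$ — must be shown to be rare enough not to outweigh the exponential growth $(d(k-1))^l$ of the cycle intensities, and it is here that the moment conditions \eqref{eqBounded} and, above all, the spectral bound $\max_{\lambda\in\Eig[\Xi]}|\lambda|\leq1/(\dc(k-1))$ from \Prop~\ref{prop_KS} — which is what makes $d/\dc<1$ the true geometric rate — are indispensable.
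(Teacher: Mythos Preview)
Your derivation of \eqref{eqVarFormula} is correct and essentially coincides with the paper's: both reduce the signature sum to $\sum_{l\le\ell}\tfrac{(d(k-1))^l}{2l}\Erw[(\Tr\prod_j\Phi_{\PSI_j^{(r)}}-1)^2]$, use the spectral identity $\Tr(\Xi^l)-2\Tr(\Phi^l)+1=\sum_{\lambda\in\Eig[\Xi]}\lambda^l$ coming from \Lem~\ref{Lemma_Xi}, and invoke \Prop~\ref{prop_KS} for convergence. Your $r\to\infty$ argument for fixed $\ell$ is also fine.

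The gap is in the $\ell\to\infty$ step. Your claim ``H\"older with an exponent arbitrarily close to $1$'' is false: assumption \eqref{eqBounded} gives only eighth moments of $|\ln\min_\tau\PSI(\tau)|$, so $\|\Lambda_l^2\|_p<\infty$ forces $p\le4$, hence the conjugate exponent in $\Erw[\Lambda_l^2\vecone\{X_l<-\tfrac12\}]\le\|\Lambda_l^2\|_p\,\Pr[X_l<-\tfrac12]^{1/p'}$ satisfies $1/p'\le3/4$. The geometric rate you obtain for $\Var B_l$ is therefore at best $d(k-1)\rho^{3/4}=(d/\dc)\,\rho^{-1/4}$ (recall $d(k-1)\rho=d/\dc$), and this need not be $<1$: whenever $\dc(k-1)>1$ one has $\rho<1$ and $\rho^{-1/4}>1$, so for $d$ close enough to $\dc$ your series $\sum_l\|B_l\|_1$ diverges. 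The same issue afflicts your bound on $|\Erw B_l|$ (exponent $\le7/8$).

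The missing idea is the paper's use of the cyclic invariance of the trace (\Lem~\ref{Lemma_vexed}). Since $\Tr\prod_j\Phi_{\PSI_j}$ is invariant under cyclic shifts and each $\Phi_{\PSI_j}$ is stochastic, pulling any single factor to the last position gives $\vY_l\ge c\,\min_\tau\PSI_j(\tau)$ for \emph{every} $j\in[l]$, hence $\vY_l\ge c\,\max_{j\in[l]}\min_\tau\PSI_j(\tau)$. This is far stronger than your product-of-diagonals bound: it implies that $\{\vY_l<\beta\}\subset\bigcap_j\{\min_\tau\PSI_j(\tau)<\beta/c\}$, an event of probability $p^l$ with $p=\Pr[\min_\tau\PSI<\beta/c]$ taken as small as desired by choosing $\beta$ small, and simultaneously that on this event $|\ln\vY_l|\le|\ln c|+\min_j|\ln\min_\tau\PSI_j(\tau)|$ (a \emph{single} term rather than a sum of $l$). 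By independence this yields $\Erw[|\ln\vY_l|\vecone\{\vY_l<\beta\}]=O(p^{l-1})$, so choosing $p<1/(d(k-1))$ makes $\kappa_l\Erw[|\ln\vY_l|\vecone\{\vY_l<\beta\}]$ summable \emph{uniformly for all $d<\dc$}, decoupling the tail from $\rho$ entirely. The paper then bounds $\|B_l\|_1$ by writing $\sum_i\ln X_{l,i}=\sum_i(X_{l,i}-1)-\sum_i(X_{l,i}-1-\ln X_{l,i})$; the first piece has variance $\kappa_l\Erw[(\vY_l-1)^2]$ (summable via \Lem~\ref{prop:Bound4DeltaYSq}), and the second is controlled by $(\vY_l-1)^2$ on $\{\vY_l\ge\beta\}$ and by the cyclic-trace estimate on $\{\vY_l<\beta\}$. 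The uniform-in-$r$ bound for $\cK_{\ell,r}$ follows from the same estimates by Jensen's inequality.
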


\subsection{Small subgraph conditioning}
We have all the ingredients in place to prove \Thm~\ref{Thm_SSC}.
Thus, fix $0<d<\dc$ and let $m\in\cM(d)$.
Let $\fF_{\ell,r}=\fF_{\ell,r}{(n,m)}$ be the $\sigma$-algebra generated by the cycle counts $(C_Y)_{Y\in\cY_{\leq\ell,r}}$.
Following the small subgraph conditioning paradigm, we
intend to show that for sufficiently large $\ell,r$, with probability tending to $1$ as $n\to\infty$,
$Z(\G(n,m))$ is ``close'' to $\Erw[Z(\G(n,m))|\fF_{\ell,r}]$.
Since \Prop~\ref{lem:SecondMoment} shows that $\Erw[Z(\G(n,m))-\cZ(\G(n,m))]$ is small and that the second moment of $\cZ(\G(n,m))$ is under control,
we are going to argue via the truncated random variable.

More specifically, to show that $\cZ(\G(n,m))$ is ``close'' to $\Erw[\cZ(\G(n,m))|\fF_{\ell,r}]$ with probability $1-o(1)$ for sufficiently large $\ell,r$, we are going to prove that $\Erw[\Var(\Erw[\cZ(\G(n,m))|\cF_{\ell,r}])]$ is small.
Clearly,
	\begin{align}\label{eqTower}
	\Var[\cZ(\G(n,m))]=\Var(\Erw[\cZ(\G(n,m))|\fF_{\ell,r}])+\Erw[\Var(\Erw[\cZ(\G(n,m))|\fF_{\ell,r}])].
	\end{align}
Hence, to prove that $\Erw[\Var(\Erw[\cZ(\G(n,m))|\fF_{\ell,r}])]$ is small it suffices to show that
	\begin{equation}\label{eqTower'}
	\Var(\Erw[\cZ(\G(n,m))|\fF_{\ell,r}])=\Erw[\Erw[\cZ(\G(n,m))|\fF_{\ell,r}]^2]-\Erw[\cZ(\G(n,m))]^2
	\end{equation}
is nearly as big as $\Var[\cZ(\G(n,m))]$.
Given what we know at this point this is not particularly difficult.
Nonetheless, let us put the details off for just a little while to \Sec~\ref{Sec_tower}, where we prove the following.

\begin{lemma}\label{Claim_Tower2}
Suppose that $P$ satisfies {\bf SYM} and {\bf BAL} and let $0<d<\dc$.
For any $\eta>0$ there exists $\ell_0(\eta)$ such that for every $\ell>\ell_0(\eta)$ there exists $r_0(\eta,\ell)$ such that for all $r>r_0(\eta,\ell)$, 
uniformly for all $m\in\cM(d)$,
	$$\lim_{n\to\infty}\pr\brk{|\cZ(\G(n,m))-\Erw[\cZ(\G(n,m))|\fF_{\ell,r}]|>\eta \Erw[Z(\G(n,m))]}=0.$$
\end{lemma}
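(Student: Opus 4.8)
The plan is to run the conditional second-moment (small subgraph conditioning) argument with the truncated variable $\cZ$ and the finite $\sigma$-algebra $\fF_{\ell,r}$, and then send $\ell,r\to\infty$ by invoking Proposition~\ref{Lemma_totallyVexed}. Since $\Erw[\cZ(\G(n,m))]\sim\Erw[Z(\G(n,m))]$ uniformly on $\cM(d)$ by Corollary~\ref{cor:belowcond-unif}, Chebyshev's inequality applied conditionally on $\fF_{\ell,r}$ gives
\[
\pr\brk{\abs{\cZ(\G(n,m))-\Erw[\cZ(\G(n,m))\mid\fF_{\ell,r}]}>\eta\,\Erw[Z(\G(n,m))]}\leq\frac{\Erw\brk{\Var(\cZ(\G(n,m))\mid\fF_{\ell,r})}}{(1+o(1))\,\eta^2\,\Erw[\cZ(\G(n,m))]^2},
\]
so by the law of total variance it suffices to show that $\Erw[\cZ^2]/\Erw[\cZ]^2-\Erw[\Erw[\cZ\mid\fF_{\ell,r}]^2]/\Erw[\cZ]^2$ can be made smaller than any prescribed $\eps>0$ by choosing $\ell$ and then $r$ large enough, uniformly in $n$ and $m\in\cM(d)$. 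For the first ratio, Propositions~\ref{lem:FirstMoment} and~\ref{lem:SecondMoment} together with Corollary~\ref{cor:belowcond-unif} and identity~\eqref{eqAtTheEndOfTheDay} give
\[
\limsup_{n\to\infty}\ \sup_{m\in\cM(d)}\frac{\Erw[\cZ(\G(n,m))^2]}{\Erw[\cZ(\G(n,m))]^2}\leq\prod_{\lambda\in\Eig[\Xi]}\frac1{\sqrt{1-d(k-1)\lambda}}.
\]

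The heart of the matter is the second ratio. Fix $\ell,r$, so that $\cY_{\leq\ell,r}$ is finite and $\fF_{\ell,r}$ is generated by the finitely many cycle counts $(C_Y)_{Y\in\cY_{\leq\ell,r}}$; write $W_n=\Erw[\cZ(\G(n,m))\mid\fF_{\ell,r}]/\Erw[Z(\G(n,m))]$. For any integer tuple $(c_Y)_{Y\in\cY_{\leq\ell,r}}$ the reweighting identity~\eqref{eq:NishimoriG} yields
\[
\frac{\Erw\brk{\cZ(\G(n,m))\,\vecone\{C_Y(\G(n,m))=c_Y\ \forall Y\}}}{\Erw[Z(\G(n,m))]}=\pr\brk{C_Y(\hat\G(n,m))=c_Y\ \forall Y,\ \cZ(\hat\G(n,m))=Z(\hat\G(n,m))},
\]
and by Proposition~\ref{prop:belowcond-unif} together with the definition~\eqref{eqcZeps} of $\cZ$ the event $\{\cZ(\hat\G(n,m))=Z(\hat\G(n,m))\}$ has $\hat\G$-probability $1-o(1)$. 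Hence Proposition~\ref{prop:FirstCondOverFirst}, applied to both $\G$ and $\hat\G$, gives for each fixed tuple, uniformly in $m\in\cM(d)$,
\[
\frac{\Erw[\cZ(\G(n,m))\mid C_Y=c_Y\ \forall Y]}{\Erw[Z(\G(n,m))]}\longrightarrow\prod_{Y\in\cY_{\leq\ell,r}}\eul^{\kappa_Y-\hat\kappa_Y}\bcfr{\hat\kappa_Y}{\kappa_Y}^{c_Y}.
\]
Since Proposition~\ref{prop:FirstCondOverFirst} also shows that $(C_Y(\G(n,m)))_{Y\in\cY_{\leq\ell,r}}$ converges in distribution to a vector of independent $\Po(\kappa_Y)$ variables, and the cycle counts are uniformly tight, $W_n$ converges in distribution to $W_\infty=\prod_{Y}\eul^{\kappa_Y-\hat\kappa_Y}(\hat\kappa_Y/\kappa_Y)^{P_Y}$ with the $P_Y\sim\Po(\kappa_Y)$ independent. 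Truncating $W_n$ at a large level and using monotone convergence (lower semicontinuity of the second moment under weak convergence of nonnegative variables) gives
\[
\liminf_{n\to\infty}\ \inf_{m\in\cM(d)}\Erw[W_n^2]\geq\Erw[W_\infty^2]=\prod_{Y\in\cY_{\leq\ell,r}}\Erw\brk{\eul^{2(\kappa_Y-\hat\kappa_Y)}\bcfr{\hat\kappa_Y}{\kappa_Y}^{2P_Y}}=\exp\bc{\sum_{Y\in\cY_{\leq\ell,r}}\frac{(\kappa_Y-\hat\kappa_Y)^2}{\kappa_Y}},
\]
the last step being the Poisson moment generating function.

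Combining the two bounds,
\[
\limsup_{n\to\infty}\ \sup_{m\in\cM(d)}\frac{\Erw\brk{\Var(\cZ(\G(n,m))\mid\fF_{\ell,r})}}{\Erw[\cZ(\G(n,m))]^2}\leq\prod_{\lambda\in\Eig[\Xi]}\frac1{\sqrt{1-d(k-1)\lambda}}-\exp\bc{\sum_{Y\in\cY_{\leq\ell,r}}\frac{(\kappa_Y-\hat\kappa_Y)^2}{\kappa_Y}},
\]
and by~\eqref{eqVarFormula} of Proposition~\ref{Lemma_totallyVexed} the right-hand side tends to $0$ as first $r\to\infty$ and then $\ell\to\infty$. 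Feeding this into the Chebyshev bound above finishes the proof; uniformity over $m\in\cM(d)$ is automatic because every cited estimate is uniform on $\cM(d)$.

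The main obstacle is the middle step, namely establishing the pointwise asymptotics of $\Erw[\cZ\mid C_Y=c_Y\ \forall Y]/\Erw[Z]$ and the ensuing weak convergence $W_n\Rightarrow W_\infty$. This requires gluing together the planting identity~\eqref{eq:NishimoriG}, the overlap concentration of Proposition~\ref{prop:belowcond-unif} (which is precisely what allows replacing $\cZ$ by $Z$ inside the $\hat\G$-probability), and the \emph{two-sided} Poisson statements of Proposition~\ref{prop:FirstCondOverFirst} — the $\hat\G$-version supplying the parameters $\hat\kappa_Y$ — all simultaneously uniform over $m\in\cM(d)$. Deducing $W_n\Rightarrow W_\infty$ from these pointwise facts uses the uniform tightness of the cycle-count vectors (again from Proposition~\ref{prop:FirstCondOverFirst}), so that up to an arbitrarily small error only finitely many tuples contribute; note that no stronger uniform-integrability hypothesis is needed, since we only require the one-sided bound $\liminf_n\Erw[W_n^2]\ge\Erw[W_\infty^2]$. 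The delicate passage to the double limit in $\ell,r$ is already packaged in Proposition~\ref{Lemma_totallyVexed}, and everything remaining is bookkeeping.
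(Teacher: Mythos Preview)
Your argument is correct and follows essentially the same route as the paper. Both proofs combine the second-moment upper bound \eqref{eqAtTheEndOfTheDay} with a matching lower bound on $\Erw[\Erw[\cZ\mid\fF_{\ell,r}]^2]$, then invoke the law of total variance and \eqref{eqVarFormula} to make the expected conditional variance small. The paper isolates the lower bound as a separate statement (\Lem~\ref{Claim_Tower1}) proved by truncating to a finite box $\Gamma$ of cycle-count vectors; your tightness/lower-semicontinuity packaging is the same computation in slightly different dress, and your replacement of $\cZ$ by $Z$ inside the $\hat\G$-probability via Proposition~\ref{prop:belowcond-unif} corresponds to the paper's passage from $Z$ to $\cZ$ at the end of \Lem~\ref{Claim_Tower1}.

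There is one genuine simplification in your write-up: for the final step you use a plain conditional Chebyshev bound $\pr[\,|\cZ-\Erw[\cZ\mid\fF_{\ell,r}]|>\eta\,\Erw Z\,]\le\eta^{-2}\Erw[\Var(\cZ\mid\fF_{\ell,r})]/\Erw[Z]^2$, whereas the paper imports a dyadic truncation device from~\cite{COW} (the auxiliary variable $X$ and the chain \eqref{eqXNotTooSmall}--\eqref{eqErwX'}). Your route is shorter and yields the same conclusion; the paper's extra machinery is not needed for the statement as formulated. One small caveat that applies equally to both arguments: what is literally shown is that $\limsup_{n\to\infty}\pr[\,\cdots\,]$ can be made smaller than any prescribed $\eps$ by choosing $\ell,r$ large, which is exactly what is used downstream in the proof of \Thm~\ref{Thm_SSC}.
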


\begin{proof}[Proof of \Thm~\ref{Thm_SSC}]
Because $\cZ(\G(n,m))\leq Z(\G(n,m))$ and $\Erw[\cZ(\G(n,m))]\sim\Erw[Z(\G(n,m))]$ by \Cor~\ref{cor:belowcond-unif}, we have 
	$\Erw|\cZ(\G(n,m))-Z(\G(n,m))|=o(\Erw[Z(\G(n,m))]).$
Therefore,  \Lem~\ref{Claim_Tower2} implies that
	\begin{equation}\label{eq_Claim_Tower2}
	\lim_{n\to\infty}\pr\brk{|Z(\G(n,m))-\Erw[Z(\G(n,m))|\fF_{\ell,r}]|>\eta \Erw[Z(\G(n,m))]}=0.
	\end{equation}
Thus, we are left to determine the law of $\Erw[Z(\G(n,m))|\fF_{\ell,r}]$.
On this count, \Prop~\ref{prop:FirstCondOverFirst} shows that for any non-negative integer vector $(c_Y)_{Y\in\cY_{\le\ell,r}}$,
	\begin{align*}
	\frac{\Erw[Z(\G(n,m))|\forall Y\in\cY_{\leq\ell,r}:C_Y(\G(n,m))=c_Y]}{\Erw[Z(\G(n,m))]}&
		=\frac{\pr[\forall Y\in\cY_{\leq\ell,r}:C_Y(\hat\G(n,m))=c_Y]}{\pr\brk{\forall Y\in\cY_{\leq\ell,r}:C_Y(\G(n,m))=c_Y}}\\
		\sim\prod_{Y\in\cY_{\leq\ell,r}}\frac{\pr\brk{\Po(\hat\kappa_Y)=c_Y}}{\pr\brk{\Po(\kappa_Y)=c_Y}}
		&=\exp\bc{\sum_{Y\in\cY_{\leq\ell,r}}c_Y\ln(\Tr\Phi_Y)-(\hat\kappa_Y-\kappa_Y)}.
	\end{align*}
Hence, letting $K_{\ell,r}'(\G(n,m))=\sum_{Y\in\cY_{\leq\ell,r}} C_Y(\G(n,m))\ln(\Tr\Phi_Y)-(\hat\kappa_Y-\kappa_Y)$ we conclude that, in distribution,
	\begin{align}
	K_{\ell,r}(\G(n,m))&=\ln\Erw[Z(\G(n,m))|\fF_{\ell,r}]-\ln\Erw[Z(\G(n,m))
		]\ 
		\stacksign{$n\to\infty$}\to\  K_{\ell,r}'(\G(n,m))\label{eqProofThm_SSC1}.
	\end{align}
Further, by (\ref{eqCycles}) 
	\begin{align*}
	K_{\ell,r}'(\G(n,m))&=\sum_{l=1}^\ell\brk{\frac{(d(k-1))^l}{2l}(1-\Tr(\Phi^l))+\sum_{Y\in\cY_{l,r}}C_Y(\G(n,m))\ln\Tr\Phi_Y}.
	\end{align*}
Thus, combining \Prop s~\ref{prop:FirstCondOverFirst} and~\ref{Lemma_totallyVexed},
we conclude that $K_{\ell,r}'(\G(n,m))$ converges to $\cK_{\ell,r}$ in distribution as $n\to\infty$ for every $\ell,r$.
Hence, due to (\ref{eqProofThm_SSC1}) so does $K_{\ell,r}(\G(n,m))$.
Consequently, 
	\Prop~\ref{Lemma_totallyVexed}   and \eqref{eq_Claim_Tower2} show
	that for any bounded continuous function $g:\RR\to\RR$,
	\begin{align*}
	\forall\eps>0\exists\ell_0(\eps)\forall\ell\geq\ell_0(\eps)\exists r_0(\eps,\ell)\forall r>r_0(\eps,\ell)&:
		\limsup_{n\to\infty}\Erw[g(\cK)]-\Erw[g(K_{\ell,r}(\G(n,m)))]<\eps,\\
	\forall\eps>0\exists\ell_0'(\eps)\forall\ell\geq\ell_0'(\eps)\exists r_0'(\eps,\ell)\forall r>r_0'(\eps,\ell)&:
		\limsup_{n\to\infty}\Erw[g(K_{\ell,r}(\G(n,m)))]-\Erw\brk{g\bc{\ln\frac{Z(\G(n,m))}{\Erw[Z(\G(n,m))]}}}<\eps.
	\end{align*}
Combining these two statements and observing that the first and the last term are independent of $\ell,r$, we obtain
	$$\limsup_{n\to\infty}\Erw[g(\cK)]-\Erw[g(\ln Z(\G(n,m))-\ln\Erw[Z(\G(n,m))])]=0,$$
i.e., $\ln Z(\G(n,m))-\ln\Erw[Z(\G(n,m))]$ converges to $\cK$ in distribution.
Plugging in the formula for the first moment from (\ref{eq:FirstMoment}) yields (\ref{eqThm_SSC}).
Finally, because \Prop~\ref{prop:FirstCondOverFirst} shows that
	\begin{align*}
	\pr\brk{\G(n,m)\in\fS\triangle\{C_1(\G(n,m))+\vecone\{k=2\}C_2(\G(n,m))=0\}}=O(1/n),
	\end{align*}
the formula for the conditional free energy given $\fS$ follows from (\ref{eqThm_SSC}) and \Lem~\ref{Claim_Tower2}.
\end{proof}

\subsection*{Organization}
The paper is organized as follows.
After proving \Lem~\ref{Claim_Tower2} in \Sec~\ref{Sec_tower},
in \Sec~\ref{Sec_prelims} we collect some preliminaries, introduce notation, supply the proofs of \Lem s~\ref{Lemma_Phi} and~\ref{Lemma_Xi}
and show how \Thm~\ref{Thm_overlap}, \Thm~\ref{Cor_contig} and \Cor~\ref{Thm_contig} follow from \Thm~\ref{Thm_SSC}.
Because we consider the proof of \Prop~\ref{prop_KS} the main technical achievement of this work, the proof is self-contained, and as we deem the argument rather interesting, that proof follows in \Sec~\ref{Sec_prop_KS}.
Further, \Sec~\ref{sec:ProofPreCond} contains the proof of Proposition \ref{prop:belowcond-unif}, which is by way of a (substantial) generalization
of an argument from~\cite{CKPZ} for the Potts antiferromagnet.
Subsequently \Sec~\ref{Sec_moments} contains the proofs of \Prop~\ref{lem:FirstMoment} and \Prop~\ref{lem:SecondMoment} about
the moments of the truncated variable $\cZ$.
Moreover, \Sec~\ref{Sec_cycles} deals with the proof of \Prop~\ref{prop:FirstCondOverFirst}.
The somewhat delicate proof of  \Prop~\ref{Lemma_totallyVexed} can be found in \Sec~\ref{Sec_totallyVexed}.
\Sec~\ref{Sec_Thm_plantedFreeEnergy} contains the rather technical proofs of \Thm~\ref{Thm_cond} and \Thm~\ref{Thm_plantedFreeEnergy}.
Finally, the proof of \Thm~\ref{thrm:TreeGraphEquivalence} about the reconstruction problem can be found in \Sec~\ref{sec:thrm:TreeGraphEquivalence}.

\subsection{Proof of \Lem~\ref{Claim_Tower2}}\label{Sec_tower}
The proof is by generalization of the argument from~\cite[\Sec~2]{COW} for the random regular $k$-SAT model to the current setting of random factor graph models.
We begin with the following lower bound on the second moment of the conditional expectation.
Let $\delta_Y=\Tr(\Phi_Y)-1=(\hat\kappa_Y-\kappa_Y)/\kappa_Y$.

\begin{lemma}\label{Claim_Tower1}
Suppose that $P$ satisfies {\bf SYM} and {\bf BAL} and let $0<d<\dc$, $\ell,r>0$.
Then uniformly for
all $m\in\cM(d)$,
	$$\Erw[\Erw[\cZ(\G(n,m))|\fF_{\ell,r}]^2]\geq\Erw[Z(\G(n,m))]^2\exp\bc{o(1)+\sum_{Y\in\cY_{\leq\ell,r}}\delta_Y^2\kappa_Y}.$$
\end{lemma}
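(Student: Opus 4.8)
The plan is to run the classical lower bound on the variance of the conditional expectation from the small subgraph conditioning method, the only twist being that one must absorb the truncation in the definition of $\cZ$. Fix $0<d<\dc$, integers $\ell,r>0$ and a sequence $m\in\cM(d)$; abbreviate $\G=\G(n,m)$ and $\cY^*=\cY_{\leq\ell,r}$. Observe that $\cY^*$ is \emph{finite} (since $\fC_r$ is a finite partition of $\Psi$) and that its elements are pairwise disjoint signatures: two distinct $Y,Y'\in\cY^*$ of the same order either differ in some pair $(s_i,t_i)$, or else differ in some $E_i$, and then $E_i,E_i'\in\fC_r$ are distinct blocks of a partition, so $E_i\cap E_i'=\emptyset$. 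Since $\fF_{\ell,r}$ is generated by the integer-valued cycle counts $(C_Y)_{Y\in\cY^*}$,
\begin{align*}
\Erw\brk{\Erw[\cZ(\G)|\fF_{\ell,r}]^2}&=\sum_{(c_Y)_{Y\in\cY^*}}\frac{\Erw\brk{\cZ(\G)\vecone\{\forall Y\in\cY^*:\ C_Y(\G)=c_Y\}}^2}{\pr\brk{\forall Y\in\cY^*:\ C_Y(\G)=c_Y}},
\end{align*}
where the sum is over non-negative integer vectors of positive probability; in particular the right-hand side is bounded below by the partial sum over $c_Y\leq C$ for any fixed integer $C$.

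The crux is to replace $\cZ$ by $Z$ in the numerator. Since $0\leq Z(\G)-\cZ(\G)$ pointwise and the indicator is at most one, we have $\Erw[(Z(\G)-\cZ(\G))\vecone\{\cdots\}]\leq\Erw[Z(\G)-\cZ(\G)]=o(\Erw[Z(\G)])$ uniformly for $m\in\cM(d)$ by \Cor~\ref{cor:belowcond-unif}, while the defining identity $(\ref{eq:NishimoriG})$ of $\hat\G$ yields $\Erw[Z(\G)\vecone\{\forall Y:\ C_Y(\G)=c_Y\}]=\Erw[Z(\G)]\cdot\pr[\forall Y:\ C_Y(\hat\G)=c_Y]$. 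By \Prop~\ref{prop:FirstCondOverFirst}, applied to the pairwise disjoint signatures of $\cY^*$, the probabilities $\pr[\forall Y\in\cY^*:C_Y(\G)=c_Y]$ and $\pr[\forall Y\in\cY^*:C_Y(\hat\G)=c_Y]$ converge as $n\to\infty$ (uniformly for $m\in\cM(d)$, and, $\cY^*$ being finite, uniformly over $c_Y\leq C$) to $\prod_{Y}\pr[\Po(\kappa_Y)=c_Y]$ and $\prod_Y\pr[\Po(\hat\kappa_Y)=c_Y]$, both of which stay bounded away from zero over this finite range. Hence for large $n$ the additive error $o(\Erw[Z(\G)])$ is negligible next to $\Erw[Z(\G)]\,\pr[\forall Y:C_Y(\hat\G)=c_Y]$, and we obtain, uniformly over $c_Y\leq C$ and $m\in\cM(d)$,
\begin{align*}
\Erw\brk{\cZ(\G)\vecone\{\forall Y\in\cY^*:\ C_Y(\G)=c_Y\}}&\geq(1-o(1))\,\Erw[Z(\G)]\prod_{Y\in\cY^*}\pr[\Po(\hat\kappa_Y)=c_Y].
\end{align*}

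Substituting these estimates into the partial sum gives, for every fixed $C$,
\begin{align*}
\Erw\brk{\Erw[\cZ(\G)|\fF_{\ell,r}]^2}&\geq(1-o(1))\,\Erw[Z(\G)]^2\sum_{c_Y\leq C}\ \prod_{Y\in\cY^*}\frac{\pr[\Po(\hat\kappa_Y)=c_Y]^2}{\pr[\Po(\kappa_Y)=c_Y]},
\end{align*}
uniformly for $m\in\cM(d)$. A direct computation with the Poisson weights shows $\sum_{c\geq0}\pr[\Po(\hat\kappa)=c]^2/\pr[\Po(\kappa)=c]=\exp((\hat\kappa-\kappa)^2/\kappa)$; since $(\hat\kappa_Y-\kappa_Y)/\kappa_Y=\Tr(\Phi_Y)-1=\delta_Y$, letting $C\to\infty$ makes the finite sum above converge to $\prod_{Y\in\cY^*}\exp(\delta_Y^2\kappa_Y)=\exp(\sum_{Y\in\cY_{\leq\ell,r}}\delta_Y^2\kappa_Y)$. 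As the left-hand side does not depend on $C$, a routine diagonal argument letting $C=C(n)\to\infty$ slowly yields $\Erw[\Erw[\cZ(\G)|\fF_{\ell,r}]^2]\geq\Erw[Z(\G)]^2\exp(o(1)+\sum_{Y\in\cY_{\leq\ell,r}}\delta_Y^2\kappa_Y)$ with the $o(1)$ uniform for $m\in\cM(d)$, which is the claim. The single genuine obstacle is this truncation step: it works only because $\Erw[Z(\G)-\cZ(\G)]$ is an $o(\Erw[Z(\G)])$ quantity that does \emph{not} depend on the (finitely many, bounded) cycle-count vectors being summed over, whereas every retained term carries a factor $\Erw[Z(\G)]^2$ times a positive constant.
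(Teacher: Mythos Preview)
Your proof is correct and follows essentially the same route as the paper's: restrict to a finite set of cycle-count vectors, invoke the identity \eqref{eq:NishimoriG} and \Prop~\ref{prop:FirstCondOverFirst} to get the Poisson asymptotics, evaluate the resulting sum $\sum_c\pr[\Po(\hat\kappa)=c]^2/\pr[\Po(\kappa)=c]=\exp(\delta^2\kappa)$, and then let the cutoff diverge. The only organizational difference is that the paper first proves the bound with $Z$ in place of $\cZ$ and afterwards controls $\Erw[\vecone\{\cC\}(\Erw[Z|\fF_{\ell,r}]^2-\Erw[\cZ|\fF_{\ell,r}]^2)]$ via an $L^\infty$ bound on $\vecone\{\cC\}\Erw[Z|\fF_{\ell,r}]$, whereas you absorb the truncation termwise before summing; your way is marginally more direct since it avoids computing that $L^\infty$ bound.
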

\begin{proof}
Fix a number $\alpha>0$, choose $B=B(\alpha,\ell,r)$ sufficiently large and let
$\Gamma=\Gamma(\ell,r,B)$ be the set of all families $(c_{Y})_{Y\in\cY_{\leq\ell,r}}$ of non-negative integers such that $\sum_{Y\in\cY_{\leq\ell,r}}c_Y\leq B$.
Moreover, let $\cC=\cC(\ell,r,B)$ be the event that $(C_Y(\G(n,m)))_{Y\in\cY_{\leq\ell,r}}\in\Gamma$.
Then~\eqref{eq:NishimoriG} and \Prop~\ref{prop:FirstCondOverFirst} yield
	\begin{align}
	\frac{\Erw[\vecone\{\cC\}\Erw[Z(\G(n,m))|\fF_{\ell,r}]^2]}{\Erw[Z(\G(n,m))]^2}& 
			=\sum_{c\in\Gamma}\frac{\pr[\forall Y\in\cY_{\leq\ell,r}:C_Y(\hat\G(n,m))=c_Y]^2}
					{\pr\brk{\forall Y\in\cY_{\leq\ell,r}:C_Y(\G(n,m))=c_Y}}
			\sim\sum_{c\in\Gamma}\prod_{Y\in\cY_{\leq\ell,r}}\frac{\pr\brk{\Po((1+\delta_Y)\kappa_Y)=c_Y}^2}
					{\pr\brk{\Po(\kappa_Y)=c_Y}}\nonumber\\
			&=\exp\bc{-\sum_{Y\in\cY_{\leq\ell,r}}(1+2\delta_Y)\kappa_Y}
				\sum_{c\in\Gamma}\prod_{Y\in\cY_{\leq\ell,r}}\frac{((1+\delta_Y)^2\kappa_Y)^{c_Y}}{c_Y!}.		\label{eqTower0}
	\end{align}
Let $S=\sum_{Y\in\cY_{\leq\ell,r}}(1+\delta_Y)^2\kappa_Y$.
Since the matrices $\Phi_\psi$ are stochastic, (\ref{eqCycles}) shows that there is a number $T(\ell)$ such that $S\le T(\ell)$.
Therefore, choosing $B=B(\alpha,\ell,d)$ sufficiently large, we can ensure that $\exp(S)\leq\exp(\alpha)\sum_{L\leq B}S^L/L!$.
Hence,
	\begin{align}\label{eqTower1}
	\exp\bc{S-\alpha}&\leq\sum_{L\leq B}\frac{S^L}{L!}=
		\sum_{c\in\Gamma}\prod_{Y\in\cY_{\leq\ell,r}}\frac{((1+\delta_Y)^2\kappa_Y)^{c_Y}}{c_Y!}.
	\end{align}
Combining (\ref{eqTower0}) and (\ref{eqTower1}), we find
	\begin{equation}\label{eqTower2}
	\Erw[\vecone\{\cC\}\Erw[Z(\G(n,m))|\fF_{\ell,r}]^2]\geq\Erw[Z(\G(n,m))]^2\exp\bc{-\alpha+\sum_{Y\in\cY_{\ell,r}}\delta_Y^2\kappa_Y}.
	\end{equation}

Finally, we need to show that $Z(\G(n,m))$ can be replaced by $\cZ(\G(n,m))$ on the l.h.s.\ of (\ref{eqTower2}).
Since $Z(\G(n,m))\geq\cZ(\G(n,m))$ but $\Erw[Z(\G(n,m))]\sim\Erw[\cZ(\G(n,m))]$, we have
	\begin{align}\nonumber
	\Erw&\brk{\vecone\{\cC\}(\Erw[Z(\G(n,m))|\fF_{\ell,r}]^2-\Erw[\cZ(\G(n,m))|\fF_{\ell,r}]^2)}\\
	&=\Erw\brk{\vecone\{\cC\}(\Erw[Z(\G(n,m))|\fF_{\ell,r}]+\Erw[\cZ(\G(n,m))|\fF_{\ell,r}])(\Erw[Z(\G(n,m))|\fF_{\ell,r}]-\Erw[\cZ(\G(n,m))|\fF_{\ell,r}])}\nonumber\\
	&\leq	2\|\vecone\{\cC\}\Erw[Z(\G(n,m))|\fF_{\ell,r}]\|_\infty\Erw\brk{\Erw[Z(\G(n,m))|\fF_{\ell,r}]-\Erw[\cZ(\G(n,m))|\fF_{\ell,r}]}\nonumber\\
	&=o(\Erw[Z(\G(n,m))])\|\vecone\{\cC\}\Erw[Z(\G(n,m))|\fF_{\ell,r}]\|_\infty.\label{eqTower3}
	\end{align}
To bound $\|\vecone\{\cC\}\Erw[Z(\G(n,m))|\fF_{\ell,r}]\|_\infty$ we observe that for all $(c_Y)_Y\in\Gamma$,
	\begin{align*}
	\frac{\Erw[Z(\G(n,m))|\forall Y:C_Y=c_Y]}{\Erw[Z(\G(n,m))]}
	&=\frac{\pr[\forall Y\in\cY_{\leq\ell,r}:C_Y(\hat\G(n,m))=c_Y]}
					{\pr\brk{\forall Y\in\cY_{\leq\ell,r}:C_Y(\G(n,m))=c_Y}}
						&&\mbox{[by~\eqref{eq:NishimoriG}]}
						\nonumber\\
			&\sim\prod_{Y\in\cY_{\leq\ell,r}}\frac{\pr\brk{\Po((1+\delta_Y)\kappa_Y)=c_Y}}
					{\pr\brk{\Po(\kappa_Y)=c_Y}}&&\mbox{[by \Prop~\ref{prop:FirstCondOverFirst}]}\nonumber\\
			&=\prod_{Y\in\cY_{\leq\ell,r}}(1+\delta_Y)^{c_Y}\exp(-\delta_Y\kappa_Y)=O(1)&&
				\mbox{[as $\delta_Y=O(1)$ and $\sum_Yc_Y\leq B$].}
	\end{align*}
Hence, $\|\vecone\{\cC\}\Erw[Z(\G(n,m))|\fF_{\ell,r}]\|_\infty=O(\Erw[Z(\G(n,m))])$ and the assertion follows from (\ref{eqTower2}) and (\ref{eqTower3})
by taking $\alpha\to0$ sufficiently slowly as $n\to\infty$.
\end{proof}

\begin{proof}[Proof of \Lem~\ref{Claim_Tower2}]
We use a similar trick as in the proof of~\cite[\Cor~2.6]{COW}.
Recall that aim to show that
	\begin{equation}\label{eqClaim_Tower2}
	\pr\brk{|\cZ(\G(n,m))-\Erw[\cZ(\G(n,m))|\fF_{\ell,r}]|>\eta \Erw[Z(\G(n,m))]}=0.
	\end{equation}
Given $\eta>0$ choose $\alpha=\alpha(\eta)>0$ small enough.
Then by (\ref{eqTower}), (\ref{eqTower'}) and \Lem~\ref{Claim_Tower1} and (\ref{eqVarFormula}), for sufficiently $\ell,r,n$ we have
	\begin{align}\label{eqCondVarBound}
	\Erw\brk{\Var[\cZ(\G(n,m))|\fF_{\ell,r}]}<\alpha{\Erw[Z(\G(n,m))]^2}.
	\end{align}
Now  define
	$$X(\G(n,m))=|\cZ(\G(n,m))- \Erw[\cZ(\G(n,m))|\fF_{\ell,r}]|\vecone\cbc{
		\frac{|\cZ(\G(n,m))- \Erw[\cZ(\G(n,m))|\fF_{\ell,r}]|}{\Erw[Z(\G(n,m))]}>\alpha^{1/3}}.$$
Then
	\begin{align}\label{eqXNotTooSmall}
	X(\G(n,m))<\alpha^{1/3}\Erw[Z(\G(n,m))]&\Rightarrow\abs{\cZ(\G(n,m))-\Erw[\cZ(\G(n,m))|\fF_{\ell,\eps}]}
		\leq\alpha^{1/3}\Erw[Z(\G(n,m))].
	\end{align}
Furthermore,  by Chebyshev's inequality
	\begin{align}\nonumber
	\Erw[X(\G(n,m))|\fF_{\ell,r}]&\leq\alpha^{1/3}\Erw[Z(\G(n,m))]\sum_{j\geq0}2^{j+1}\pr\brk{X(\G(n,m))>2^j\alpha^{1/3}\Erw[Z(\G(n,m))]\big|\fF_{\ell,r}}\\
		& \leq4\alpha^{-1/3}\Erw[Z(\G(n,m))]\cdot\frac{\Var[\cZ(\G(n,m))|\fF_{\ell,r}]}{\Erw[Z(\G(n,m))]^2}.\label{eqErwX}
	\end{align}
Combining	(\ref{eqCondVarBound}) and (\ref{eqErwX}), we obtain
	\begin{align}\label{eqErwX'}
	\Erw[X(\G(n,m))]=\Erw[\Erw[X(\G(n,m))|\fF_{\ell,r}]]\leq\alpha^{1/2}\Erw[Z(\G(n,m))].
	\end{align}
Finally, (\ref{eqClaim_Tower2}) follows from (\ref{eqXNotTooSmall}), (\ref{eqErwX'}) 
and Markov's inequality.
\end{proof}

\section{Getting started}\label{Sec_prelims}

\subsection{Basics}
Throughout the paper we continue to use the notation introduced in \Sec s~\ref{Sec_results} and~\ref{Sec_Proofs}.
In particular, we write $V_n=\{x_1,\ldots,x_n\}$ for a set of $n$ variable nodes and $F_m=\{a_1,\ldots,a_m\}$ for a set of $m$ constraint nodes.
Further,  $\vm_d(n)$ is a random variable with distribution $\Po(dn/k)$ and we just write $\vm_d$ or $\vm$ if $n$ and/or $d$ are apparent.
Moreover, for an integer $l\geq1$ we let $[l]=\{1,\ldots,l\}$.

For a finite set $\cX$ we denote the set of probability distributions on $\cX$ by $\cP(\cX)$.
We identify $\cP(\cX)$ with the standard simplex in $\RR^\cX$ and endow $\cP(\cX)$ accordingly with the Borel $\sigma$-algebra.
By $\cP^2(\cX)$ we denote the set of probability measures on $\cP(\cX)$ and by $\cP^2_*(\cX)$ the set of all $\pi\in\cP^2(\cX)$
whose mean $\int_{\cP(\cX)}\mu\dd\pi(\mu)$ is the uniform distribution on $\cX$.
In addition, for a point $x$ in a measurable space we write $\delta_x$ for the Dirac measure on $x$.
The entropy of a probability distribution $\mu$ on a finite set $\cX$ is always denoted by $\cH(\mu)$.
Thus, recalling that $\Lambda(z)=z\ln z$ for $z>0$ and setting $\Lambda(0)=0$, we have
	$\cH(\mu)=-\sum_{x\in\cX}\Lambda(\mu(x)).$

Further, if $\mu\in\cP(\Omega^{V_n})$ is a probability measure on the discrete cube $\Omega^{V_n}$, then 
	$\SIGMA_{\mu},\TAU_{\mu},\SIGMA_{1,\mu},\SIGMA_{2,\mu},\ldots\in\Omega^{V_n}$ denote mutually independent samples from $\mu$.
If $\mu=\mu_G$ is the Gibbs measure induced by a factor graph $G$, we write $\SIGMA_G$ etc.\ instead of $\SIGMA_{\mu_G}$.
Where $\mu$ or $G$ are apparent from the context we omit the index and just write $\SIGMA,\TAU$, etc.
If $X:(\Omega^{V_n})^l\to\RR$ is a random variable, then we use the notation
	\begin{align*}
	\bck{X}_\mu=\bck{X(\SIGMA_1,\ldots,\SIGMA_l)}_\mu&
		=\sum_{\sigma_1,\ldots,\sigma_l\in\Omega^{V_n}}X(\sigma_1,\ldots,\sigma_l)\prod_{j=1}^l\mu(\sigma_j).
	\end{align*}
Thus, $\bck{X}_\mu$ is the mean of $X$ over independent samples from $\mu$.
If $\mu=\mu_G$ for a factor graph $G$, then we simplify the notation by writing $\bck\nix_G$ rather than $\bck\nix_{\mu_G}$.
We use this notation to distinguish averages over $\mu_G$ from other sources of randomness (e.g., the choice of  the random factor graph),
for which we reserve the symbols $\Erw\brk\nix$ and $\Var\brk\nix$.

Finally, we need a few facts about probability distributions on sets of the form $\Omega^l$.
For  $\sigma_1,\ldots,\sigma_l:V\to\Omega$ let $\rho_{\sigma_1,\ldots,\sigma_l}\in\cP(\Omega^l)$ denote the {\em $l$-wise overlap}, defined by 
	\begin{equation}\label{eqMultiOverlap}
	\rho_{\sigma_1,\ldots,\sigma_l}(\omega_1,\ldots,\omega_l)=|\sigma_1^{-1}(\omega_1)\cap\cdots\cap\sigma_l^{-1}(\omega_l)|/|V|.
	\end{equation}
We use this notation also in the case $l=1$ and observe that $\rho_{\sigma_1}$ is nothing but the empirical distribution of the spins under $\sigma_1$.
Further, we let $\bar\rho_l$ signify the uniform distribution on $\Omega^l$; we usually omit the index $l$ to ease the notation.
For two spin assignments $\sigma,\tau:V\to\Omega$ we let $\sigma\triangle\tau=\{v\in V:\sigma(v)\neq\tau(v)\}.$

\begin{lemma}[\cite{Victor}]\label{Lemma_multiOverlap}
For any finite set $\Omega$, any $\eps>0$ and any $l\geq3$ there exist $\delta=\delta(\Omega,\eps,l)$ and $n_0=n_0(\Omega,\eps,l)$
	such that for all $n>n_0$ and all $\mu\in\cP(\Omega^{V_n})$ the following is true:
	if $\bck{\TV{\rho_{\SIGMA_1,\SIGMA_2}-\bar\rho}}<\delta$, then $\bck{\TV{\rho_{\SIGMA_1,\ldots,\SIGMA_l}-\bar\rho_l}}<\eps$.
\end{lemma}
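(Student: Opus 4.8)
The plan is to reduce the whole statement to the behaviour of a single scalar quantity. Write $\SIGMA\sim\mu$ for one sample and set $f(v,w)=\bck{\vecone\{\SIGMA(v)=\SIGMA(w)\}}_\mu\in[0,1]$ for $v,w\in V_n$. Expanding the square and using that $\SIGMA_1,\dots,\SIGMA_l$ are independent and identically distributed gives, for every $l\ge1$, the exact identity
\[
	\bck{\norm{\rho_{\SIGMA_1,\dots,\SIGMA_l}}_2^2}_\mu=\frac1{n^2}\sum_{v,w\in V_n}f(v,w)^l .
\]
Letting $F$ be the empirical distribution of the multiset $\{f(v,w):v,w\in V_n\}$ and using $\sum_{\omega}\rho_{\SIGMA_1,\dots,\SIGMA_l}(\omega)=1$, this turns into $\bck{\norm{\rho_{\SIGMA_1,\dots,\SIGMA_l}-\bar\rho_l}_2^2}_\mu=\int t^l\,\dd F(t)-q^{-l}$. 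Hence, after passing from $\TV\cdot$ to $\norm\cdot_2$ via Cauchy--Schwarz ($\TV z\le\tfrac12 q^{l/2}\norm z_2$ for $z$ supported on $\Omega^l$) and Jensen's inequality, the entire lemma comes down to showing that $\int t^l\,\dd F$ lies within $O_{q,l}(\sqrt\delta)$ of $q^{-l}$.

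Next I would extract three moment bounds on $F$ from the hypothesis $\bck{\TV{\rho_{\SIGMA_1,\SIGMA_2}-\bar\rho}}_\mu<\delta$. First, $\norm x_2^2\le\norm x_1\,\norm x_\infty$ together with $\norm{\rho-\bar\rho}_\infty\le1$ and the $l=2$ case of the identity give $\int t^2\,\dd F<q^{-2}+2\delta$. Second, marginalising onto the first coordinate shows $\TV{\rho_{\SIGMA_1,\SIGMA_2}-\bar\rho}\ge\TV{\rho_{\SIGMA_1}-\bar\rho_1}$, so $\bck{\norm{\rho_{\SIGMA_1}-\bar\rho_1}_2^2}_\mu<2\delta$, and by the $l=1$ case of the identity $\int t\,\dd F=\bck{\norm{\rho_{\SIGMA_1}}_2^2}_\mu=q^{-1}+\bck{\norm{\rho_{\SIGMA_1}-\bar\rho_1}_2^2}_\mu\in[q^{-1},q^{-1}+2\delta)$. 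Combining the two, $\mathrm{Var}_F=\int t^2\,\dd F-\bigl(\int t\,\dd F\bigr)^2<2\delta$. Thus $F$ concentrates within $O(\sqrt\delta)$ of a mean lying within $2\delta$ of $q^{-1}$; since $t\mapsto t^l$ is $l$-Lipschitz on $[0,1]$, writing $\bar t=\int t\,\dd F$ we get $\bigl|\int t^l\,\dd F-q^{-l}\bigr|\le l\int|t-\bar t|\,\dd F+l|\bar t-q^{-1}|\le l\sqrt{\mathrm{Var}_F}+2l\delta<l\sqrt{2\delta}+2l\delta$. Plugging back in and choosing $\delta=\delta(\Omega,\eps,l)$ small enough closes the argument, and since the bound is uniform in $n$ any $n_0$ works.

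The calculations are routine; the one genuinely useful observation --- and the place where I would spend the most care --- is the second paragraph's remark that the $l=2$ hypothesis is \emph{precisely} a bound on the variance of the scalar distribution $F$, which by Lipschitzness then pins down all higher moments simultaneously. A minor bookkeeping subtlety is that the diagonal pairs $v=w$, where $f(v,v)=1$, must be handled consistently throughout; incorporating them directly into $F$, as above, makes both sides of the key identity match exactly rather than up to an $O(1/n)$ error, so no limiting argument is actually needed.
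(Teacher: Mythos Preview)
The paper does not prove this lemma; it is quoted from \cite{Victor} without argument. Your proof is correct and self-contained, so there is nothing in the paper to compare against directly.

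Your approach is quite clean. The identity $\bck{\|\rho_{\SIGMA_1,\dots,\SIGMA_l}\|_2^2}_\mu=\frac1{n^2}\sum_{v,w}f(v,w)^l$ is the right reduction, and the key observation --- that the $l=2$ hypothesis bounds both the mean (via marginalisation) and the second moment of the empirical distribution $F$, hence its variance --- is exactly what makes the higher moments fall into line via Lipschitzness of $t\mapsto t^l$ on $[0,1]$. All the inequalities check: the TV-to-$\ell_2$ passage, the TV-contractivity of marginalisation, the bound $\|x\|_2^2\le\|x\|_1\|x\|_\infty$, and the Cauchy--Schwarz step $\int|t-\bar t|\,\dd F\le\sqrt{\mathrm{Var}_F}$. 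Your treatment of the diagonal pairs is also correct; since they are folded into $F$ from the start, the identity is exact and the bound is genuinely uniform in $n$, so the $n_0$ in the statement is vacuous for your argument.
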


\noindent Call $\sigma\in\Omega^{V_n}$ \emph{nearly balanced} if $\TV{\rho_\sigma-\bar\rho}\leq n^{-2/5}$.

\begin{lemma}[{\cite[\Lem~4.7]{CKPZ}}]\label{Lemma_nbalanced_CKPZ}
For any $\eps>0$ there is $\delta>0$ such that for all sufficiently large $n$ the following is true.
If $\mu\in\cP(\Omega^n)$ satisfies
	$\bck{\TV{\rho_{\SIGMA,\TAU}-\bar\rho}}_{\mu}<\delta$, 
then for all nearly balanced $\tau$ we have $\bck{\TV{\rho_{\SIGMA,\tau}-\bar\rho}}_{\mu}<\eps$.
\end{lemma}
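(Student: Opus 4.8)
The plan is to pass from total variation to the $\ell_2$-distance and then exploit an orthogonal decomposition of the overlap. Since $\rho_{\sigma,\tau}$ is a probability distribution on the $q^2$-element set $\Omega\times\Omega$, Cauchy--Schwarz gives $\TV{\rho_{\sigma,\tau}-\bar\rho}\le\frac q2\norm{\rho_{\sigma,\tau}-\bar\rho}_2$, so by Jensen's inequality it suffices to show that $\bck{\norm{\rho_{\SIGMA,\tau}-\bar\rho}_2^2}_\mu=O(\sqrt\delta)+o(1)$, uniformly over all nearly balanced $\tau$.

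The cornerstone is the identity
\begin{align*}
\norm{\rho_{\sigma,\tau}-\bar\rho}_2^2
&=\frac1q\norm{\rho_\sigma-\bar\rho}_2^2+\frac1q\norm{\rho_\tau-\bar\rho}_2^2
+\frac1{n^2}\sum_{v,w\in V_n}\Bigl(\vecone\{\sigma(v)=\sigma(w)\}-\tfrac1q\Bigr)\Bigl(\vecone\{\tau(v)=\tau(w)\}-\tfrac1q\Bigr),
\end{align*}
valid for all $\sigma,\tau\in\Omega^{V_n}$. I would prove it by writing $\vecone\{\sigma(v)=\omega\}=1/q+\chi^v_\sigma(\omega)$ with $\chi^v_\sigma\in\RR^\Omega$ centred (so $\sum_\omega\chi^v_\sigma(\omega)=0$), and likewise for $\tau$; expanding $\rho_{\sigma,\tau}(\omega,\omega')-q^{-2}$ into the first-coordinate mean $\tfrac1q(\rho_\sigma(\omega)-\tfrac1q)$, the second-coordinate mean $\tfrac1q(\rho_\tau(\omega')-\tfrac1q)$ and the bilinear remainder $\tfrac1n\sum_v\chi^v_\sigma(\omega)\chi^v_\tau(\omega')$; and observing that after squaring and summing over $(\omega,\omega')$ all cross terms cancel because of the centring. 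The bilinear contribution then collapses to the displayed sum via $\langle\chi^v_\sigma,\chi^w_\sigma\rangle=\vecone\{\sigma(v)=\sigma(w)\}-1/q$, and is in particular nonnegative, being $n^{-2}\norm{\sum_{v\in V_n}\chi^v_\sigma\otimes\chi^v_\tau}_2^2$.

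Applying this identity with $\tau$ replaced by a second independent Gibbs sample $\SIGMA'$ and averaging, the bilinear term becomes $Q:=\frac1{n^2}\sum_{v,w}(p_{v,w}-\tfrac1q)^2$ with $p_{v,w}=\bck{\vecone\{\SIGMA(v)=\SIGMA(w)\}}_\mu$ (using independence of $\SIGMA,\SIGMA'$), so $Q\le\bck{\norm{\rho_{\SIGMA,\SIGMA'}-\bar\rho}_2^2}_\mu\le2\bck{\TV{\rho_{\SIGMA,\SIGMA'}-\bar\rho}}_\mu<2\delta$, where the middle step uses $\norm{x-y}_2^2\le\norm{x-y}_1=2\TV{x-y}$ for probability vectors. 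Projecting the hypothesis onto one coordinate (the first marginal of $\rho_{\sigma,\tau}$ is $\rho_\sigma$) gives $\bck{\TV{\rho_\SIGMA-\bar\rho}}_\mu<\delta$, hence $\bck{\norm{\rho_\SIGMA-\bar\rho}_2^2}_\mu<2\delta$. Now fix a nearly balanced $\tau$ and average the identity over $\SIGMA\sim\mu$: the first term is $<2\delta/q$; the second is $\tfrac1q\norm{\rho_\tau-\bar\rho}_2^2\le\tfrac1q(2n^{-2/5})^2$ since $\norm{\cdot}_2\le\norm{\cdot}_1$ and $\tau$ is nearly balanced; and the bilinear term equals $\frac1{n^2}\sum_{v,w}(p_{v,w}-\tfrac1q)(\vecone\{\tau(v)=\tau(w)\}-\tfrac1q)$, which by Cauchy--Schwarz over the $n^2$ index pairs $(v,w)$ is at most $Q^{1/2}\bigl(\frac1{n^2}\sum_{v,w}(\vecone\{\tau(v)=\tau(w)\}-\tfrac1q)^2\bigr)^{1/2}\le Q^{1/2}<\sqrt{2\delta}$, the second factor being $\le1$. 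Summing, $\bck{\norm{\rho_{\SIGMA,\tau}-\bar\rho}_2^2}_\mu\le\sqrt{2\delta}+2\delta/q+o(1)$, whence $\bck{\TV{\rho_{\SIGMA,\tau}-\bar\rho}}_\mu\le\tfrac q2\bigl(\bck{\norm{\rho_{\SIGMA,\tau}-\bar\rho}_2^2}_\mu\bigr)^{1/2}\le q\,\delta^{1/4}$ for $\delta$ small and $n$ large, which is $<\eps$ once $\delta=\delta(\eps,q)$ is chosen appropriately.

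The step I expect to be the main obstacle is getting the above decomposition into play. A direct attack on the bilinear term $\frac1{n^2}\sum_{v,w}(p_{v,w}-\tfrac1q)(\vecone\{\tau(v)=\tau(w)\}-\tfrac1q)$ looks hopeless because its $\tau$-factor carries no smallness whatsoever; the saving observation is that Cauchy--Schwarz pairs it against the quantity $Q$, which the pairwise-overlap hypothesis has already pinned down, at the mild cost of degrading $\delta$ to $\delta^{1/4}$. The ``nearly balanced'' hypothesis on $\tau$ is needed only to make $\norm{\rho_\tau-\bar\rho}_2^2$ negligible, and since all the bounds are manifestly independent of the particular nearly balanced $\tau$, the uniformity claim comes for free.
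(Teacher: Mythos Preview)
Your argument is correct. The paper does not prove this lemma; it is quoted verbatim from \cite[\Lem~4.7]{CKPZ}, so there is no in-paper proof to compare against. Your self-contained $\ell_2$ approach via the orthogonal decomposition
\[
\norm{\rho_{\sigma,\tau}-\bar\rho}_2^2
=\tfrac1q\norm{\rho_\sigma-\bar\rho}_2^2+\tfrac1q\norm{\rho_\tau-\bar\rho}_2^2
+\frac1{n^2}\sum_{v,w}\bigl(\vecone\{\sigma(v)=\sigma(w)\}-\tfrac1q\bigr)\bigl(\vecone\{\tau(v)=\tau(w)\}-\tfrac1q\bigr)
\]
is clean and all the steps check: the identity follows exactly as you sketch (the cross terms vanish by centring of the $\chi^v_\sigma$), the bound $Q<2\delta$ is obtained correctly from the hypothesis, the marginal projection $\TV{\rho_\SIGMA-\bar\rho}\le\TV{\rho_{\SIGMA,\TAU}-\bar\rho}$ is the standard data-processing inequality, and the Cauchy--Schwarz step on the bilinear term (with the $\tau$-factor bounded by $1$) is where the work happens. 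The resulting loss from $\delta$ to $\delta^{1/4}$ is harmless for the qualitative statement.
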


\noindent
Finally, we need the following elementary observation.

\begin{fact}\label{Lemma_balanceRowCols}
For any finite set $\Omega$ and any $\eps>0$ there is $\delta>0$ such that the following holds.
If $\rho=(\rho(s,t))_{s,t\in\Omega}\in\cP(\Omega^2)$ satisfies
	\begin{align*}
	\sum_{s\in\Omega}\abs{\frac1q-\sum_{t\in\Omega}\rho(s,t)}+\abs{\frac1q-\sum_{t\in\Omega}\rho(t,s)}<\delta,
	\end{align*}
then there exists $\rho'\in\cP(\Omega^2)$ such that $\TV{\rho-\rho'}<\eps$ and
	$\sum_{t\in\Omega}\rho'(s,t)=\sum_{t\in\Omega}\rho'(t,s)=1/q\mbox{ for all }s\in\Omega.$
\end{fact}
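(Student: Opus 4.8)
The plan is a soft compactness argument. Recall from condition {\bf MIN} the set $\cR(\Omega)$ of all $\rho'\in\cP(\Omega^2)$ with $\sum_{s\in\Omega}\rho'(s,t)=\sum_{s\in\Omega}\rho'(t,s)=q^{-1}$ for all $t\in\Omega$, i.e.\ the distributions on $\Omega^2$ with both marginals uniform; this is the target set into which we want to push $\rho$. Note that $\cR(\Omega)$ is a nonempty compact subset of the simplex $\cP(\Omega^2)$, being closed, bounded, and containing $\bar\rho$. Define two continuous maps $g,h\colon\cP(\Omega^2)\to\RRpos$ by
$$g(\rho)=\sum_{s\in\Omega}\abs{\frac1q-\sum_{t\in\Omega}\rho(s,t)}+\sum_{s\in\Omega}\abs{\frac1q-\sum_{t\in\Omega}\rho(t,s)},\qquad h(\rho)=\min_{\rho'\in\cR(\Omega)}\TV{\rho-\rho'};$$
here $h$ is well defined and $1$-Lipschitz (hence continuous) because $\cR(\Omega)$ is nonempty and compact, while $g$ is manifestly continuous. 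The structural point is that $g$ and $h$ have the same zero set: $h(\rho)=0$ iff $\rho\in\cR(\Omega)$, while $g(\rho)=0$ holds precisely when every row sum and every column sum of $\rho$ equals $1/q$, i.e.\ when $\rho\in\cR(\Omega)$.

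With this in hand the statement is immediate by contradiction. If it failed for some $\eps>0$, then for every $n\geq1$ there would be $\rho_n\in\cP(\Omega^2)$ with $g(\rho_n)<1/n$ but $h(\rho_n)\geq\eps$. By compactness of $\cP(\Omega^2)$ we may pass to a subsequence with $\rho_n\to\rho_*$. Continuity of $g$ forces $g(\rho_*)=0$, so $\rho_*\in\cR(\Omega)$ and hence $h(\rho_*)=0$; but continuity of $h$ forces $h(\rho_*)\geq\eps>0$, a contradiction. Thus there is $\delta>0$ with $g(\rho)<\delta\Rightarrow h(\rho)<\eps$, and a minimiser $\rho'$ in the definition of $h(\rho)$ lies in $\cR(\Omega)$ and satisfies $\TV{\rho-\rho'}<\eps$, which is exactly what is required.

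This argument has essentially no obstacle; the single subtlety is the charitable reading of the hypothesis, where the second absolute value must also be summed over $s\in\Omega$, so that $g$ controls all of the column sums and not merely one of them. Should a quantitative dependence $\delta=\delta(\eps)$ be wanted, one can replace the compactness step by an explicit two-stage correction: first rescale each row of $\rho$ by the reciprocal of its sum to make all row sums exactly $1/q$ — legitimate once $\delta<1/(2q)$, since then each row sum exceeds $1/(2q)>0$, and this moves total variation mass at most $\delta/2$ — and then transport mass between columns within each individual row, which leaves the row sums untouched, to fix the column sums, whose total discrepancy after the first stage is at most $2\delta$. Both routes are elementary, and I would include the compactness version since it is by far the shortest.
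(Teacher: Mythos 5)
Your compactness proof is correct, and it is the natural argument: $g$ and $h$ are continuous with identical zero set $\cR(\Omega)$, and the simplex is compact, so a sequence with $g\to 0$ and $h\ge\eps$ yields an immediate contradiction. The paper states this as a \emph{Fact} and supplies no proof, so there is no authorial argument to compare against; what you have written is exactly the sort of elementary justification the authors had in mind when labelling it a fact.

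One small caveat on your quantitative alternative: the second stage, ``transport mass between columns within each individual row,'' is not quite automatic, since a row may have zero entries in exactly the columns from which mass needs to be withdrawn, so the naive within-row transfer can violate nonnegativity. This is repairable (e.g.\ first mix slightly with $\bar\rho$ so all entries are bounded below, or use a max-flow/Birkhoff-type argument on the bipartite transport graph), but as stated it is a sketch rather than a complete argument. Since you explicitly default to the compactness route, this does not affect the correctness of your proposal.
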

 
\subsection{The Nishimori identity}
There exists an important distributional relationship between the teacher-student model $\G^*(n,m,P,\sigma)$ and the
reweighted random graph model $\hat\G(n,m,P)$ from~(\ref{eq:NishimoriG}) (cf.~\cite{LF} for a discussion from the physics viewpoint).
To state this connection, we need to define an appropriately reweighted distribution on the set $\Omega^{V_n}$ of spin assignments.
Specifically, we let $\hat\SIGMA_{n,m,P}\in\Omega^{V_n}$ be a random assignment chosen from the distribution
	\begin{align}\label{eq:NishimoriS}
	\pr[\hat\SIGMA_{n,m,P}=\sigma]&=\frac{\Erw[\psi_{\G(n,m,P)}(\sigma)]}{\Erw[Z(\G(n,m,P))]}&(\sigma\in\Omega^{V_n}).
	\end{align}
As before we skip the index $P$ where possible.
We refer to the following statement as the {\em Nishimori identity}.

\begin{lemma}[{\cite[\Prop~3.10]{CKPZ}}]\label{lem:nishimori}
For every distribution $P$ on weight functions $\Omega^k\to(0,2)$, for all integers $n,m$, for every $\sigma\in\Omega^{V_n}$ and for every event $\cA$ we have
	\begin{align}\label{eq:nishimori}
	\pr\brk{\hat\SIGMA_{n,m,P}=\sigma}\cdot\pr\brk{\G^*(n,m,P,\sigma)\in\cA}&=
		\Erw\brk{\vecone\{\hat\G(n,m,P)\in\cA\}\mu_{\hat\G(n,m,P)}(\sigma)}.
	\end{align}
\end{lemma}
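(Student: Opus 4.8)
The plan is to exhibit all three objects --- the plain random factor graph $\G(n,m,P)$, the teacher--student graph $\G^*(n,m,P,\sigma)$, and the reweighted graph $\hat\G(n,m,P)$ --- as measures on the common space of factor graphs with fixed variable set $V_n$ and constraint set $F_m$, and to present $\G^*$ and $\hat\G$ as two explicit changes of measure of $\G$; the identity then falls out by cancelling a shared normalising constant. Throughout the proof I write $\G=\G(n,m,P)$, $\hat\G=\hat\G(n,m,P)$, etc., and abbreviate $\psi_G(\sigma)=\prod_{a}\psi_a(\sigma(\partial_1a),\ldots,\sigma(\partial_ka))$ for the weight of $\sigma$ under a factor graph $G$. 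Since every weight function takes values in $(0,2)$, the functionals $G\mapsto\psi_G(\sigma)$ and $G\mapsto Z(G)$ are measurable, bounded and strictly positive, so in particular $0<\Erw[\psi_\G(\sigma)]<\infty$ and $0<\Erw[Z(\G)]<\infty$, and hence every density and expectation appearing below is well defined even when $\Psi$ is infinite. (This last point is precisely why one must argue via Radon--Nikodym derivatives rather than via ``the probability of observing a given factor graph''.)

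The first step is to show that $\G^*$ is the law of $\G$ tilted by the weight $\psi_\G(\sigma)$: concretely,
	\begin{align}\label{eqNishTilt}
	\pr\brk{\G^*\in\cA}&=\frac{\Erw\brk{\vecone\{\G\in\cA\}\,\psi_\G(\sigma)}}{\Erw\brk{\psi_\G(\sigma)}}\qquad\mbox{for every event }\cA.
	\end{align}
Both laws factorise over the constraint nodes $a\in F_m$, so it suffices to check the one-constraint version. For a single constraint node of $\G$ the neighbourhood is uniform on $V_n^k$ and the weight function is $P$-distributed; reweighting this pair by $\psi_a(\sigma(\partial_1a),\ldots,\sigma(\partial_ka))$ and renormalising produces exactly the joint law (\ref{eqTeacher}), once one observes that the denominator of (\ref{eqTeacher}) equals $n^k\,\Erw[\psi_a(\sigma(\partial_1a),\ldots,\sigma(\partial_ka))]$ for a single random constraint node. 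Multiplying these one-constraint densities over $a\in F_m$ --- the coordinates being independent --- yields the overall density $\psi_G(\sigma)/\Erw[\psi_\G(\sigma)]$, which is (\ref{eqNishTilt}); here one uses that $\Erw[\psi_\G(\sigma)]$ factorises as $\prod_{a}\Erw[\psi_a(\sigma(\partial_1a),\ldots,\sigma(\partial_ka))]$.

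The identity then assembles from two one-line computations. By the definition (\ref{eq:NishimoriS}), $\pr[\hat\SIGMA_{n,m,P}=\sigma]=\Erw[\psi_\G(\sigma)]/\Erw[Z(\G)]$; multiplying by (\ref{eqNishTilt}), the factor $\Erw[\psi_\G(\sigma)]$ cancels and the left-hand side of (\ref{eq:nishimori}) equals $\Erw[\vecone\{\G\in\cA\}\psi_\G(\sigma)]/\Erw[Z(\G)]$. On the other side, the definition (\ref{eq:NishimoriG}) presents $\hat\G$ as the law of $\G$ reweighted by $Z(\G)/\Erw[Z(\G)]$, while $\mu_{\hat\G}(\sigma)=\psi_{\hat\G}(\sigma)/Z(\hat\G)$, so
	\begin{align*}
	\Erw\brk{\vecone\{\hat\G\in\cA\}\,\mu_{\hat\G}(\sigma)}
	&=\Erw\brk{\vecone\{\G\in\cA\}\,\frac{Z(\G)}{\Erw[Z(\G)]}\cdot\frac{\psi_\G(\sigma)}{Z(\G)}}\\
	&=\frac{\Erw\brk{\vecone\{\G\in\cA\}\,\psi_\G(\sigma)}}{\Erw\brk{Z(\G)}},
	\end{align*}
the cancellation of $Z(\G)$ being legitimate since $Z(\G)>0$ surely. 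The two sides agree, which is (\ref{eq:nishimori}). The only step that is more than bookkeeping is the change-of-measure claim (\ref{eqNishTilt}) over a possibly infinite $\Psi$: the substance there is to verify the single-constraint density at the level of measures on $V_n^k\times\Psi$ and to confirm that the product over $F_m$ of the corresponding normalisers equals $\Erw[\psi_\G(\sigma)]$ --- routine, but the place where measurability of $\psi\mapsto\psi(\omega)$ and the boundedness of all weights away from $0$ and $\infty$ are genuinely used.
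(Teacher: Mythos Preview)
Your proof is correct. The paper does not actually prove this lemma; it is quoted verbatim from \cite[\Prop~3.10]{CKPZ} and used as a black box. Your argument --- write both $\G^*(n,m,P,\sigma)$ and $\hat\G(n,m,P)$ as explicit tilts of the null model $\G(n,m,P)$, then observe that both sides of (\ref{eq:nishimori}) reduce to $\Erw[\vecone\{\G\in\cA\}\psi_\G(\sigma)]/\Erw[Z(\G)]$ --- is the standard derivation and is essentially what the cited reference does. The one substantive step, namely (\ref{eqNishTilt}), is handled correctly: the per-constraint density matches (\ref{eqTeacher}) and the product structure over $F_m$ carries through by independence.
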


 \noindent
A useful consequence of this result is that
$\Erw[\cX(\G^*(n,\vec m,\hat\SIGMA_{n,m}),\hat\SIGMA_{n,m})]=\Erw\bck{\cX(\hat\G,\SIGMA)}_{\hat\G}$
for every   $L^1$-function $\cX$ .
	
\subsection{Eigenvalues}
The vector or matrix with all entries equal to one (in any dimension) is signified by $\vecone$.
The transpose of a matrix $A$ we denote by $A^\ast$.
Additionally, $\id$ denotes the identity matrix (in any dimension).
Further, the standard basis vectors on $\RR^\Omega$ are denoted by $e_\omega$, $\omega\in\Omega$.
For the entries of a matrix $A\in\RR^{\Omega\times\Omega}$ we use the notation $A(\sigma,\tau)$;
thus, $A(\sigma,\tau)=\scal{Ae_\tau}{e_\sigma}$ for all $\sigma,\tau\in\Omega$.
The spectrum of a linear operator $X:E\to E'$ is denoted by $\eig(X)$.

The following simple observation will be used several times. Recall $\Phi$ from (\ref{eqPhi}).

\begin{lemma}\label{Lemma_F}
Assume that $P$ satisfies {\bf SYM}.
Then the function
	\begin{equation}\label{eqF}
	\phi:\RR^\Omega\to(0,2),\qquad \rho\mapsto\sum_{\tau\in\Omega^k}\Erw[\PSI(\tau)]\prod_{i=1}^k\rho(\tau_i)
	\end{equation}
satisfies
	$D\phi(\bar\rho)=k\xi\vecone$, $D^2\phi(\bar\rho)=qk(k-1)\xi\Phi$ and $\phi$ is bounded away from $0$.
\end{lemma}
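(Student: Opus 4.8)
The plan is to prove \Lem~\ref{Lemma_F} by direct differentiation, using the symmetry hypothesis \textbf{SYM} to evaluate the derivatives at the barycenter $\bar\rho$. Recall that $\bar\rho$ here denotes the uniform distribution on $\Omega$, i.e.\ $\bar\rho(\omega)=1/q$ for all $\omega$, and that $\phi$ is a polynomial (hence smooth) in the $q$ coordinates $\rho(\omega)$, so the only issue is to compute $D\phi$ and $D^2\phi$ and to check positivity of $\phi$ away from $0$.

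First I would address the claim that $\phi$ is bounded away from $0$. Since every $\psi\in\Psi$ takes values in $(0,2)$ and, by \eqref{eqBounded}, $\Erw[\max\{\PSI(\tau)^{-4}:\tau\in\Omega^k\}]<\infty$, we have $\xi=q^{-k}\sum_{\tau\in\Omega^k}\Erw[\PSI(\tau)]>0$; moreover for $\rho\in\cP(\Omega)$ the quantity $\phi(\rho)$ is a convex combination governed by $\Erw[\PSI(\tau)]\ge0$ and $\prod_i\rho(\tau_i)\ge0$. To get a uniform lower bound on a neighbourhood of $\bar\rho$ in $\RR^\Omega$ (not just the simplex) one notes that $\phi(\bar\rho)=q^{-k}\sum_{\tau}\Erw[\PSI(\tau)]=\xi>0$ and that $\phi$ is continuous, so there is a neighbourhood on which $\phi>\xi/2$; since the lemma only needs $\phi$ bounded away from $0$ (for later Taylor-expansion purposes near $\bar\rho$), this suffices, and I would phrase it that way.

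Next, the derivative computations. Writing $\phi(\rho)=\sum_{\tau\in\Omega^k}\Erw[\PSI(\tau)]\prod_{i=1}^k\rho(\tau_i)$, the first partial with respect to the coordinate $\rho(\omega)$ is
\[
\partial_{\rho(\omega)}\phi(\rho)=\sum_{i=1}^k\sum_{\tau\in\Omega^k}\vecone\{\tau_i=\omega\}\Erw[\PSI(\tau)]\prod_{j\neq i}\rho(\tau_j).
\]
Evaluating at $\rho=\bar\rho$, each inner sum becomes $q^{-(k-1)}\sum_{\tau}\vecone\{\tau_i=\omega\}\Erw[\PSI(\tau)]$, and by \eqref{eqSYM} of \textbf{SYM} this equals $q^{-(k-1)}\cdot q^{k-1}\xi=\xi$, independent of $\omega$ and of $i$; summing over $i$ gives $D\phi(\bar\rho)=k\xi\vecone$. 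For the Hessian, the second partial with respect to $\rho(\omega),\rho(\omega')$ picks out all ordered pairs of distinct coordinate slots $(i,i')$ pinned to $(\omega,\omega')$; evaluating at $\bar\rho$ gives $\sum_{i\neq i'}q^{-(k-2)}\sum_\tau\vecone\{\tau_i=\omega,\tau_{i'}=\omega'\}\Erw[\PSI(\tau)]$. By the permutation-invariance part of \textbf{SYM} (i.e.\ $P(\cA)=P(\{\psi^\theta:\psi\in\cA\})$, which makes $\Erw[\PSI(\tau)]$ symmetric under permuting the coordinates of $\tau$) each of the $k(k-1)$ terms equals the same quantity, namely $q^{-(k-2)}\sum_\tau\vecone\{\tau_1=\omega,\tau_2=\omega'\}\Erw[\PSI(\tau)]$, which by \eqref{eqPhiMatrices} and \eqref{eqPhi} is exactly $q^{-(k-2)}\cdot q^{k-1}\xi\,\Phi(\omega,\omega')=q\xi\,\Phi(\omega,\omega')$. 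Hence $D^2\phi(\bar\rho)=qk(k-1)\xi\,\Phi$.

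I do not anticipate a serious obstacle here; the only mild subtlety is bookkeeping the combinatorial factors (the number of slots, $k$ for the gradient and $k(k-1)$ ordered pairs for the Hessian) and correctly tracking the powers of $q$ when substituting $\rho(\tau_j)=1/q$, together with making sure the two invocations of \textbf{SYM} — the identity \eqref{eqSYM} for the first derivative and coordinate-permutation invariance for the second — are cited in the right places. I would write the gradient computation in full and then remark that the Hessian is entirely analogous, matching the entries against the definition \eqref{eqPhiMatrices} of $\Phi_\psi$ after taking expectations.
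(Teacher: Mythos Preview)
Your derivative computations are correct and match the paper's proof essentially line for line; the paper likewise differentiates directly, invokes \eqref{eqSYM} for the gradient, and uses the permutation-invariance part of \textbf{SYM} to reduce every slot pair $(i,i')$ to the $(1,2)$ pair defining $\Phi$.

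The one point to tighten is the ``bounded away from $0$'' claim: the paper asserts (and later uses, e.g.\ in \Lem~\ref{Cor_F}) the global bound $\inf_{\rho\in\cP(\Omega)}\phi(\rho)>0$, not merely positivity near $\bar\rho$. Your local continuity argument does not suffice for those applications. The global bound is immediate, though: since $\PSI(\tau)\in(0,2)$ we have $c:=\min_{\tau}\Erw[\PSI(\tau)]>0$, and for $\rho\in\cP(\Omega)$ the products $\prod_i\rho(\tau_i)$ form a probability distribution on $\Omega^k$, whence $\phi(\rho)\ge c$.
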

\begin{proof}
Since 
	$\frac{\partial\phi}{\partial\rho(\omega)}=\sum_{j=1}^k\sum_{\tau\in\Omega^k}\vecone\{\tau_j=\omega\}\Erw[\PSI(\tau)]\prod_{i\neq j}\rho(\tau_i)$
	for every $\omega\in\Omega$,
{\bf SYM} immediately yields $D\phi(\bar\rho)=k\xi\vecone$.
Proceeding to the second derivatives, we find
\begin{align}\nonumber
\frac{\partial^2 \phi}{\partial\rho(\omega)\partial\rho(\omega')}&=
		\sum_{\tau \in \Omega^k}\sum_{j,l\in[k]:j\neq l}
			 \mathbf{1}\{ \tau_j=\omega, \ \tau_l=\omega'\}\Erw[\PSI(\tau)]\prod_{i \in [k]\setminus \{j, l \} }\rho\bc{\tau_i} .
\end{align}
Consequently, {\bf SYM} yields $D^2\phi(\bar\rho)=qk(k-1)\xi\Phi$.
Finally, the fact that $\inf_{\rho\in\cP(\Omega)}\phi(\rho)>0$ follows from (\ref{eqBounded}).
\end{proof}

\noindent
As an immediate application we prove \Lem s~\ref{Lemma_Phi} and~\ref{Lemma_Xi}.

\begin{proof}[Proof of \Lem~\ref{Lemma_Phi}]
Condition {\bf SYM} readily implies that $\Phi_\psi$ is stochastic for every $\psi\in\Psi$.
Hence, $\Phi_{\psi}\vecone=\vecone$ for all $\psi\in\Psi$ and consequently $\Phi\vecone=\vecone$.
To see that $\Phi$ is symmetric let $\theta$ be the permutation on $\{1,\ldots,k\}$ such that $\theta(1)=2$, $\theta(2)=1$ and $\theta(i)=i$ for all $i>2$.
Since {\bf SYM} implies that $\PSI$ and $\PSI^\theta$ are identically distributed, we obtain
	\begin{align*}
	\Phi(\omega,\omega')&=q^{1-k}\xi^{-1}\sum_{\tau\in\Omega^k}\vecone\{\tau_1=\omega,\tau_2=\omega'\}\Erw[\PSI(\tau)]\\
		&=q^{1-k}\xi^{-1}\sum_{\tau\in\Omega^k}\vecone\{\tau_1=\omega,\tau_2=\omega'\}\Erw[\PSI^\theta(\tau)]=
			q^{1-k}\xi^{-1}\sum_{\tau\in\Omega^k}\vecone\{\tau_1=\omega',\tau_2=\omega\}\Erw[\PSI(\tau)]=\Phi(\omega',\omega).
	\end{align*}
To verify the last assertion,  consider the function $\phi$ from (\ref{eqF}).
Condition {\bf BAL} ensures that $\phi$ is concave on the set $\cP(\Omega)$ of probability measures on $\Omega$.
Since by \Lem~\ref{Lemma_F} the Hessian  satisfies $D^2\phi(\bar\rho)=qk(k-1)\xi\Phi$,
we see that $\Phi$ induces a negative semidefinite endomorphism of the subspace $\{x\in\RR^q:x\perp\vecone\}$.
Hence, $\max_{x\perp\vecone}\scal{\Phi x}x\leq0$.
\end{proof}

\begin{proof}[Proof of \Lem~\ref{Lemma_Xi}]
To see that $\Xi$ is self-adjoint
let $(e_\omega)_{\omega\in\Omega}$ be the canonical basis of $\RR^\Omega$ and
let $\theta$ be the permutation on $\{1,\ldots,k\}$ such that $\theta(1)=2$, $\theta(2)=1$ and $\theta(i)=i$ for all $i>2$.
Then for all $s,t,\sigma,\tau\in\Omega$ we have
	\begin{align}\nonumber
	\scal{\Xi e_{\sigma}\tensor e_{\tau}}{e_s\tensor e_t}&=\Erw\brk{\scal{\Phi_{\PSI}e_{\sigma}}{e_s}\scal{\Phi_{\PSI}e_{\tau}}{e_t}}
		=\Erw\brk{\Phi_{\PSI}(s,\sigma)\Phi_{\PSI}(t,\tau)}=\Erw\big[\Phi_{\PSI^\theta}(s,\sigma)\Phi_{\PSI^\theta}(t,\tau)\big]&\mbox{[due to {\bf SYM}]}
		\\&=\Erw\brk{\Phi_{\PSI}(\sigma,s)\Phi_{\PSI}(\tau,t)}=
		\Erw\brk{\scal{e_{\sigma}}{\Phi_{\PSI}e_s}\scal{e_{\tau}}{\Phi_{\PSI}e_t}}=\scal{ e_{\sigma}\tensor e_{\tau}}{\Xi e_s\tensor e_t}.
			\label{eqSelfAdj}
	\end{align}
Since $(e_s\tensor e_t)_{s,t\in\Omega}$ is a basis of $\RR^\Omega\tensor\RR^\Omega$, (\ref{eqSelfAdj}) shows that $\Xi$ is self-adjoint.

Furthermore, since $\Phi_\psi\vecone=\vecone$ for all $\psi\in\Psi$ by \Lem~\ref{Lemma_Phi},
we see that
	$\Xi( x\tensor\vecone)=\Erw[\Phi_{\PSI}x\tensor\Phi_{\PSI}\vecone]=(\Phi x)\tensor\vecone.$
Similarly, $\Xi(\vecone\tensor x)=\vecone\tensor(\Phi x)$ and thus (\ref{eqLemma_Xi}) follows from \Lem~\ref{Lemma_Phi}.
In particular, since $\Phi\vecone=\vecone$ by \Lem~\ref{Lemma_Phi} we obtain $\Xi(\vecone\tensor\vecone)=\vecone\tensor\vecone$.
Because $\Xi$ is self-adjoint, this implies that $\Xi\cE'\subset\cE'$.
Finally, assume that $z\in\cE$.
Then for all $y\in\RR^q$ we have
	$\scal{\Xi z}{y\tensor\vecone}=\scal{z}{\Xi(y\tensor\vecone)}=\scal{z}{(\Phi y)\tensor\vecone}=0,$
and analogously $\scal{\Xi z}{\vecone\tensor y}=0$.
Hence, $\Xi\cE\subset\cE$.
\end{proof}

\subsection{Contiguity}
Throughout the paper we apply contiguity between several probability spaces.
Some of these contiguity results derive from the following first moment calculation, which also delivers the proof of (\ref{eqnotosmm3}).

\begin{lemma}\label{Cor_F}
Suppose that $P$ satisfies {\bf SYM} and {\bf BAL}.
For any $D>0$ there exists $c>0$ such that for all $m\leq Dn/k$,
	$$cq^n\xi^m\leq\Erw[Z(\G(n,m))]\leq q^n\xi^m.$$
Moreover, for any $\sigma\in\Omega^{V_n}$ we have, uniformly for all $m\leq Dn/k$,
	\begin{align}\label{eqQuenchedAnnealed}
	\Erw|\ln Z(\G(n,m))|&\leq O(n),&
		\Erw|\ln Z(\G^*(n,m,\sigma))|&\leq 
			O(n).
	\end{align}
\end{lemma}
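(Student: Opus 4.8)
The plan is to evaluate the first moment exactly and then to deduce the two bounds on $\Erw|\ln Z|$ from crude deterministic estimates on $Z$ together with the integrability conditions (\ref{eqBounded}). For the first moment, I would use that the constraint nodes of $\G(n,m)$ are drawn independently, that each neighbourhood $\partial a_i$ is uniform on $V_n^k$, and that $\psi_{a_i}$ is an independent copy of $\PSI$; linearity of expectation then gives
$$\Erw[Z(\G(n,m))]=\sum_{\sigma\in\Omega^{V_n}}\prod_{i=1}^m\Erw\brk{\PSI(\sigma(\partial_1a_i),\ldots,\sigma(\partial_ka_i))}=\sum_{\sigma\in\Omega^{V_n}}\phi(\rho_\sigma)^m,$$
with $\phi$ the polynomial from (\ref{eqF}) and $\rho_\sigma\in\cP(\Omega)$ the empirical spin distribution of $\sigma$. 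Since \textbf{BAL} yields $\phi(\rho)\le\phi(\bar\rho)=\xi$ for every $\rho\in\cP(\Omega)$, the upper bound $\Erw[Z(\G(n,m))]\le q^n\xi^m$ is immediate. For the lower bound I would retain only the \emph{nearly balanced} assignments, i.e.\ those with $\TV{\rho_\sigma-\bar\rho}\le C_0n^{-1/2}$ for a suitable absolute constant $C_0$: a Chebyshev estimate for the multinomial law of $(|\sigma^{-1}(\omega)|)_{\omega\in\Omega}$ shows that these comprise at least, say, three quarters of all $\sigma\in\Omega^{V_n}$. For such $\sigma$, \Lem~\ref{Lemma_F} gives $D\phi(\bar\rho)=k\xi\vecone$, which vanishes on the tangent space $\{x:\sum_\omega x_\omega=0\}$ of the simplex, so a second-order Taylor expansion of the bounded-coefficient (hence $C^\infty$) polynomial $\phi$ about $\bar\rho$ yields $\phi(\rho_\sigma)=\xi+O(1/n)$. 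Because $\phi$ is bounded away from $0$ and $m\le Dn/k$, this gives $\phi(\rho_\sigma)^m\ge\exp(-O(D/k))\,\xi^m=\Omega(\xi^m)$, and summing over the nearly balanced $\sigma$ yields $\Erw[Z(\G(n,m))]\ge cq^n\xi^m$ for all large $n$; the finitely many small $n$ (each with finitely many $m\le Dn/k$) are absorbed into $c$.

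For $\Erw|\ln Z(\G(n,m))|$ I would bound the positive and negative parts separately. Since every weight function takes values in $(0,2)$, $Z(\G(n,m))\le q^n2^m$, whence $(\ln Z(\G(n,m)))^+\le n\ln q+m\ln2=O(n)$ deterministically. Conversely, evaluating the Gibbs weight at any fixed $\tau_0\in\Omega^{V_n}$ gives $Z(\G(n,m))\ge\psi_{\G(n,m)}(\tau_0)\ge\prod_{i=1}^m\min_{\tau\in\Omega^k}\psi_{a_i}(\tau)$, so $(\ln Z(\G(n,m)))^-\le\sum_{i=1}^m\abs{\ln\min_{\tau\in\Omega^k}\psi_{a_i}(\tau)}$. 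Here $\Erw\abs{\ln\min_{\tau}\PSI(\tau)}<\infty$: the quantity is bounded above by $\ln2$ (as $\PSI<2$) and below by $-\max_{\tau}\PSI(\tau)^{-1}$ (using $\ln x\le x$), and $\max_\tau\PSI(\tau)^{-1}$ is integrable by the second inequality in (\ref{eqBounded}). Taking expectations gives $\Erw[(\ln Z(\G(n,m)))^-]\le m\,\Erw\abs{\ln\min_\tau\PSI(\tau)}=O(n)$.

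Finally, for the teacher-student model $\G^*(n,m,\sigma)$ the upper bound $(\ln Z)^+\le n\ln q+m\ln2=O(n)$ is unchanged, since the weight functions still lie in $(0,2)$. For the lower bound the point is that, after integrating the neighbourhood $\partial a_j$ out of (\ref{eqTeacher}), the marginal law of each $\psi_{a_j}$ under $\G^*(n,m,\sigma)$ is absolutely continuous with respect to $P$ with Radon--Nikodym derivative $\phi_\PSI(\rho_\sigma)/\phi(\rho_\sigma)$, where $\phi_\psi(\rho)=\sum_{\tau\in\Omega^k}\psi(\tau)\prod_{i=1}^k\rho(\tau_i)\in(0,2)$; since $\phi(\rho_\sigma)\ge\inf_{\rho}\phi(\rho)>0$ by \Lem~\ref{Lemma_F}, this density is at most the constant $M:=2/\inf_{\rho}\phi(\rho)$. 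Hence $\Erw\abs{\ln\min_\tau\psi_{a_j}(\tau)}\le M\,\Erw\abs{\ln\min_\tau\PSI(\tau)}=O(1)$, and the argument of the previous paragraph applies verbatim, giving $\Erw\abs{\ln Z(\G^*(n,m,\sigma))}=O(n)$. I expect this bounded-density observation — and with it the finiteness of $\Erw\abs{\ln\min_\tau\PSI(\tau)}$, which is where (\ref{eqBounded}) is genuinely needed — to be the only step requiring real care; the remainder is routine bookkeeping, and the conditions \textbf{SYM} and \textbf{BAL} enter only through $D\phi(\bar\rho)=k\xi\vecone$ and $\phi\le\xi$, respectively.
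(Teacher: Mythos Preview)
Your proof is correct and follows essentially the same approach as the paper. The first-moment argument is identical (the formula $\Erw[Z(\G(n,m))]=\sum_\sigma\phi(\rho_\sigma)^m$, \textbf{BAL} for the upper bound, nearly balanced $\sigma$ plus the second-order expansion from \Lem~\ref{Lemma_F} for the lower bound); for the bounds on $\Erw|\ln Z|$ the paper uses the slightly terser one-line estimate $\Erw|\ln Z(\G(n,m))|\le m\,\Erw[\max_\tau|\ln\PSI(\tau)|]=O(n)$ and, for $\G^*$, the same bounded-density observation you spell out (density $\le 2/\phi(\rho_\sigma)$, with $\phi$ bounded away from $0$), arriving at $\Erw|\ln Z(\G^*(n,m,\sigma))|\le 2m\,\Erw[\max_\tau|\ln\PSI(\tau)|]/\phi(\rho_\sigma)=O(n)$.
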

\begin{proof}
By the linearity of expectation and because the constraint nodes of $\G(n,m)$ are chosen independently,
	\begin{align*}
	\Erw[Z(\G(n,m))]=\sum_{\sigma\in\Omega^{V_n}}\phi(\rho_\sigma)^m.
	\end{align*}
Since {\bf SYM} and {\bf BAL} provide that $\phi(\rho_\sigma)\leq\xi$
	for every $\sigma$, the upper bound $\Erw[Z(\G(n,m))]\leq q^n\xi^m$ is immediate.
With respect to the lower bound, recall that the number of $\sigma:V_n\to\Omega$ such that $\TV{\rho_\sigma-\bar\rho}\leq n^{-1/2}$ is of order $\Omega(q^n)$.
Hence, applying \Lem~\ref{Lemma_F}, we see that for such $\sigma$,
	\begin{align}\nonumber
	\phi(\rho_\sigma)&=\phi(\bar\rho)+k\xi\scal{\vecone}{\rho_\sigma-\bar\rho}+qk(k-1)\xi\scal{\Phi(\rho_\sigma-\bar\rho)}{\rho_\sigma-\bar\rho}/2
		+O(\TV{\rho_\sigma-\bar\rho}^3)\\
		&=\phi(\bar\rho)+O(\TV{\rho_\sigma-\bar\rho}^2)=\phi(\bar\rho)+O(1/n).\label{eqDirtyPhi}
	\end{align}
Thus, $\Erw[Z(\G(n,m))]\geq\Omega(q^n)(\phi(\bar\rho)+O(1/n))^m=\Omega(q^n\xi^m)$, uniformly for all $m\leq Dn/k$.
Finally, (\ref{eqQuenchedAnnealed}) follows from because
	$\Erw|\ln Z(\G(n,m))|\leq m\Erw[\max_{\tau\in\Omega^k}|\ln\PSI(\tau)|]=O(n)$ due to (\ref{eqBounded}) and the independence of the
	constraint nodes, and similarly
	$\Erw|\ln Z(\G^*(n,m,P,\sigma))|\leq2m\Erw\brk{\max_{\tau\in\Omega^k}|\ln\PSI(\tau)|}/\phi(\rho_\sigma)=O(n)$
	by \Lem~\ref{Lemma_F} and (\ref{eqBounded}).
\end{proof}

\begin{corollary}\label{lem:conc_coloring}
Assume that $P$ satisfies {\bf SYM} and {\bf BAL} and let $D>0$.
Then uniformly for all $m\leq Dn/k$,
	\begin{align}\label{eq:conc_col}
	\pr\left[\TV{\rho_{\hat\SIGMA_{n,m}}-\bar\rho}>n^{-\frac12}\ln n\right]\le O(n^{-\ln\ln n})
	\end{align}
and the distribution of $\hat\SIGMA_{n,m}$ and that of $\mathbold{\sigma}^*$  are mutually contiguous.
Additionally, for any $\eps>0$ there exists $c=c(\eps,D)>0$ such that
	\begin{equation}\label{eq:conc_col2}
	\limsup_{n\to\infty}\max_{m\leq Dn}\pr\brk{\TV{\rho_{\hat\SIGMA_{n,m}}-\bar\rho}>cn^{-1/2}}\leq \eps.
	\end{equation}
\end{corollary}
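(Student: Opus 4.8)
The plan is to compute with the probability mass function of $\hat\SIGMA_{n,m}$ directly. By~\eqref{eq:NishimoriS}, the linearity of expectation and the independence of the constraint nodes of $\G(n,m)$ we have $\Erw[\psi_{\G(n,m)}(\sigma)]=\phi(\rho_\sigma)^m$ with $\phi$ the function from~\eqref{eqF}, so that
\[
\pr[\hat\SIGMA_{n,m}=\sigma]=\frac{\phi(\rho_\sigma)^m}{\Erw[Z(\G(n,m))]}\qquad(\sigma\in\Omega^{V_n}).
\]
Conditions {\bf SYM} and {\bf BAL} give $\phi(\rho_\sigma)\le\phi(\bar\rho)=\xi$ for every $\sigma$, and \Lem~\ref{Cor_F} supplies a constant $c=c(D)>0$ with $cq^n\xi^m\le\Erw[Z(\G(n,m))]\le q^n\xi^m$, uniformly in the admissible range of $m$. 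Consequently the Radon--Nikodym derivative $L_{n,m}(\sigma)=q^n\phi(\rho_\sigma)^m/\Erw[Z(\G(n,m))]$ of $\cL(\hat\SIGMA_{n,m})$ with respect to $\cL(\mathbold{\sigma}^*)$, the uniform distribution on $\Omega^{V_n}$, satisfies $L_{n,m}\le 1/c$ everywhere. This single bound already yields~\eqref{eq:conc_col2} and one half of the contiguity statement: for any event $\cA$ we get $\pr[\hat\SIGMA_{n,m}\in\cA]\le c^{-1}\pr[\mathbold{\sigma}^*\in\cA]$, and since Hoeffding's inequality for the $q$ spin counts shows that $\pr[\TV{\rho_{\mathbold{\sigma}^*}-\bar\rho}>c'n^{-1/2}]$ can be forced below any prescribed $\eps>0$ by choosing $c'=c'(\eps)$ large, uniformly in $n$ and $m$, taking $\cA=\{\TV{\rho_\sigma-\bar\rho}>c'n^{-1/2}\}$ gives~\eqref{eq:conc_col2}.

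To prove~\eqref{eq:conc_col} I would bound, using $\phi(\rho_\sigma)\le\xi$ and $\Erw[Z(\G(n,m))]\ge cq^n\xi^m$,
\[
\pr\brk{\TV{\rho_{\hat\SIGMA_{n,m}}-\bar\rho}>n^{-1/2}\ln n}\le\frac1{cq^n}\sum_{\sigma:\ \TV{\rho_\sigma-\bar\rho}>n^{-1/2}\ln n}1,
\]
and then group the $\sigma$ by their type $\rho_\sigma$. There are at most $(n+1)^q$ types; the number of $\sigma$ of a given type $\rho$ is a multinomial coefficient bounded by $\exp(n\cH(\rho))=q^n\exp(-nD_{\mathrm{KL}}(\rho\|\bar\rho))$; and Pinsker's inequality gives $D_{\mathrm{KL}}(\rho\|\bar\rho)\ge2\TV{\rho-\bar\rho}^2$. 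Hence the right-hand side is at most $c^{-1}(n+1)^q\exp(-2\ln^2 n)=O(n^{-\ln\ln n})$, in fact much smaller.

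The one genuinely delicate point is the reverse direction of the contiguity claim, i.e.\ $\pr[\hat\SIGMA_{n,m}\in\cA_n]\to0\Rightarrow\pr[\mathbold{\sigma}^*\in\cA_n]\to0$. Here one must resist looking for a pointwise lower bound on $L_{n,m}$: because $\rho_{\mathbold{\sigma}^*}-\bar\rho$ genuinely fluctuates on the scale $n^{-1/2}$, which is precisely the scale on which $(\phi(\rho_\sigma)/\xi)^m$ changes by a constant factor, the law of $L_{n,m}$ has subsequential limits charging arbitrarily small positive values. Instead, $L_{n,m}$ is merely \emph{tight away from $0$} under $\cL(\mathbold{\sigma}^*)$. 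Indeed, $\Erw[Z(\G(n,m))]\le q^n\xi^m$ gives $L_{n,m}\ge(\phi(\rho_\sigma)/\xi)^m$, while a second-order Taylor expansion of $\phi$ at $\bar\rho$---using $D\phi(\bar\rho)=k\xi\vecone$ (\Lem~\ref{Lemma_F}), so that the linear term vanishes on $\cP(\Omega)$, together with the boundedness of $D^2\phi$ on the simplex---yields $\phi(\rho)/\xi\ge1-C\TV{\rho-\bar\rho}^2$ once $\TV{\rho-\bar\rho}$ is small. Thus on the event $\{\TV{\rho_{\mathbold{\sigma}^*}-\bar\rho}\le Kn^{-1/2}\}$ we have $L_{n,m}\ge(1-CK^2/n)^m\ge\exp(-O(K^2))$ for $n$ large, and since tightness of $n^{1/2}(\rho_{\mathbold{\sigma}^*}-\bar\rho)$ makes the complementary event have probability at most $\eta_0(K)\to0$ (uniformly in $n,m$), for every $\eps>0$ there is $\eta>0$ with $\limsup_n\pr[L_{n,m}<\eta]\le\eps$. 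Combining this with the elementary inequality
\[
\pr[\mathbold{\sigma}^*\in\cA]\le\eta^{-1}\Erw_{\mathbold{\sigma}^*}[L_{n,m}\vecone\{\mathbold{\sigma}^*\in\cA\}]+\pr[L_{n,m}<\eta]=\eta^{-1}\pr[\hat\SIGMA_{n,m}\in\cA]+\pr[L_{n,m}<\eta]
\]
closes the argument; equivalently one invokes Le Cam's first lemma, whose hypotheses hold because $L_{n,m}$ is bounded and, under $\cL(\mathbold{\sigma}^*)$, asymptotically non-degenerate at $0$. All estimates above are uniform in the relevant range of $m$, so the conclusions hold uniformly in $m$.
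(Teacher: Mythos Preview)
Your argument is correct. For~\eqref{eq:conc_col2} your approach coincides with the paper's: bound the Radon--Nikodym derivative $L_{n,m}\le 1/c$ via {\bf BAL} and \Lem~\ref{Cor_F}, then reduce to a tail estimate for the uniform assignment $\SIGMA^*$.

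For~\eqref{eq:conc_col} and the mutual contiguity, the paper simply invokes \cite[\Cor~3.27]{CKPZ}, whereas you supply a self-contained proof. Your type-counting plus Pinsker argument for~\eqref{eq:conc_col} is clean and in fact gives a much stronger bound than stated. Your treatment of the reverse contiguity direction is the right one: you correctly recognise that a pointwise lower bound on $L_{n,m}$ is unavailable (because $(\phi(\rho_\sigma)/\xi)^m$ genuinely varies by constant factors over the $n^{-1/2}$-scale fluctuations of $\rho_{\SIGMA^*}$) and instead establish tightness of $L_{n,m}$ away from~$0$ under $\cL(\SIGMA^*)$ via the second-order expansion of $\phi$ at $\bar\rho$ from \Lem~\ref{Lemma_F}. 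The key observation that the first-order term $k\xi\langle\vecone,\rho-\bar\rho\rangle$ vanishes on the simplex is exactly what makes $(\phi(\rho)/\xi)^m$ bounded below on the event $\{\TV{\rho_{\SIGMA^*}-\bar\rho}\le Kn^{-1/2}\}$; combined with Le~Cam's first lemma (or your displayed elementary inequality) this closes the argument. All your constants are uniform in $m\le Dn/k$, so the uniformity claim is met as well.
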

\begin{proof}
The bound (\ref{eq:conc_col}) and the mutual contiguity of $\hat\SIGMA_{n,m}$ and the uniformly random $\SIGMA^*$ follow from \cite[\Cor~3.27]{CKPZ}.
With respect to (\ref{eq:conc_col2}) {\bf BAL}, {\bf SYM} and \Lem~\ref{Cor_F} ensure there is $c'=c'(D)>0$ such that for every $c>0$,
	\begin{align*}
	\pr\brk{\TV{\rho_{\hat\SIGMA_{n,m}}-\bar\rho}>cn^{-\frac12}}&=\sum_{\sigma\in\Omega^{V_n}}
		\vecone\cbc{\TV{\rho_{\hat\SIGMA_{n,m}}-\bar\rho}>cn^{-\frac12}}\frac{\Erw[\psi_{\G(n,m)}(\sigma)]}{\Erw[Z[\G(n,m)]}\\
		&\leq\frac{q^n\xi^m}{\Erw[Z(\G(n,m))]}\pr\brk{\TV{\rho_{\SIGMA^*}-\bar\rho}>cn^{-\frac12}}\leq 
			c'\cdot\pr\brk{\TV{\rho_{\SIGMA^*}-\bar\rho}>cn^{-1/2}}.
	\end{align*}
By Stirling we can choose $c=c(\eps)>0$ large enough so that the last expression is smaller than $\eps>0$.
\end{proof}

\begin{corollary}\label{Cor_strCntg}
Assume that $P$ satisfies {\bf SYM} and {\bf BAL}, let $d>0$ and let $(\cS_n)_n$ be a sequence of events.
Then the following two statements are true.
	\begin{align}\label{eqCor_strCntg1}
	\forall\eps>0\,\exists\delta>0:\limsup_{n\to\infty}\pr\brk{(\G^*,\SIGMA^*)\in\cS_n}
		<\delta\Rightarrow\limsup_{n\to\infty}\pr\brk{(\hat\G,\hat\SIGMA)\in\cS_n}<\eps,\\
	\forall\eps>0\,\exists\delta>0:\limsup_{n\to\infty}\pr\brk{(\hat\G,\SIGMA_{\hat\G})\in\cS_n}
			<\delta\Rightarrow\limsup_{n\to\infty}\pr\brk{(\G^*,\SIGMA^*)\in\cS_n}<\eps.		\label{eqCor_strCntg2}
	\end{align}
\end{corollary}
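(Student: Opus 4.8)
The plan is to derive both implications from a single estimate comparing the ``ground-truth marginal'' $\hat\SIGMA_{n,m}$ of~\eqref{eq:NishimoriS} with the uniformly random $\SIGMA^*$, namely that these two distributions are mutually absolutely continuous, with a Radon--Nikodym derivative that is bounded above everywhere and bounded below on nearly balanced assignments. Concretely, recall from the proof of \Lem~\ref{Cor_F} that $\Erw[\psi_{\G(n,m)}(\sigma)]=\phi(\rho_\sigma)^m$, so by~\eqref{eq:NishimoriS} we have $\pr[\hat\SIGMA_{n,m}=\sigma]=\phi(\rho_\sigma)^m/\Erw[Z(\G(n,m))]$ while $\pr[\SIGMA^*=\sigma]=q^{-n}$. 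Since $\phi(\rho_\sigma)\le\phi(\bar\rho)=\xi$ under {\bf SYM}, {\bf BAL} and $\Erw[Z(\G(n,m))]\ge c\,q^n\xi^m$ by \Lem~\ref{Cor_F}, this immediately yields $\pr[\hat\SIGMA_{n,m}=\sigma]\le c^{-1}\pr[\SIGMA^*=\sigma]$ for every $\sigma$, uniformly over $n$ and $m\le Dn/k$. Conversely, using $\Erw[Z(\G(n,m))]\le q^n\xi^m$ and the Taylor expansion $\phi(\rho_\sigma)=\xi+O(\TV{\rho_\sigma-\bar\rho}^2)$ from~\eqref{eqDirtyPhi} (the linear term vanishes because $\langle\vecone,\rho_\sigma-\bar\rho\rangle=0$), one gets for every $\sigma$ with $\TV{\rho_\sigma-\bar\rho}\le c_0n^{-1/2}$ that $\pr[\hat\SIGMA_{n,m}=\sigma]\ge(\phi(\rho_\sigma)/\xi)^m q^{-n}\ge\exp(-O(c_0^2))\,\pr[\SIGMA^*=\sigma]$, again uniformly over $m\le Dn/k$ and $n$ large.

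Next I would move from assignments to factor graphs via the Nishimori identity (\Lem~\ref{lem:nishimori}) and its recorded consequence: the pair $(\hat\G,\SIGMA_{\hat\G})$ has the same law as $\big(\G^*(n,\vm,P,\hat\SIGMA_{n,\vm}),\hat\SIGMA_{n,\vm}\big)$. Hence, writing $(\cS_n)_\sigma=\{G:(G,\sigma)\in\cS_n\}$ and $g_{n,m}(\sigma)=\pr[\G^*(n,m,P,\sigma)\in(\cS_n)_\sigma]\in[0,1]$, and conditioning on $\vm=m$, we obtain $\pr[(\G^*,\SIGMA^*)\in\cS_n\mid\vm=m]=\Erw[g_{n,m}(\SIGMA^*)]$ and $\pr[(\hat\G,\hat\SIGMA)\in\cS_n\mid\vm=m]=\Erw[g_{n,m}(\hat\SIGMA_{n,m})]$. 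The two density bounds above now transfer verbatim to these averages: the first gives $\Erw[g_{n,m}(\hat\SIGMA_{n,m})]\le c^{-1}\Erw[g_{n,m}(\SIGMA^*)]$, while the second, applied on $\mathrm{Bal}=\{\sigma:\TV{\rho_\sigma-\bar\rho}\le c_0n^{-1/2}\}$ and combined with $0\le g_{n,m}\le1$, gives $\Erw[g_{n,m}(\SIGMA^*)]\le\Erw[g_{n,m}(\SIGMA^*)\vecone\{\SIGMA^*\in\mathrm{Bal}\}]+\pr[\SIGMA^*\notin\mathrm{Bal}]\le\exp(O(c_0^2))\,\Erw[g_{n,m}(\hat\SIGMA_{n,m})]+\pr[\SIGMA^*\notin\mathrm{Bal}]$.

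To conclude I would fix $D=2d$ so that $\pr[\vm>Dn/k]=o(1)$, apply the uniform-in-$m$ bounds, and average over $\vm$, picking up only an $o(1)$ error. For~\eqref{eqCor_strCntg1} this gives $\pr[(\hat\G,\hat\SIGMA)\in\cS_n]\le c^{-1}\pr[(\G^*,\SIGMA^*)\in\cS_n]+o(1)$, so $\delta:=c\eps$ works. For~\eqref{eqCor_strCntg2}, given $\eps>0$ I would first use~\eqref{eq:conc_col2} (more precisely, the standard Stirling/local-CLT estimate behind it, which applies equally to the uniform $\SIGMA^*$) to pick $c_0=c_0(\eps)$ with $\limsup_n\pr[\SIGMA^*\notin\mathrm{Bal}]\le\eps/2$, and then set $\delta:=\tfrac12\eps\exp(-O(c_0^2))$; averaging over $\vm$ then yields $\limsup_n\pr[(\G^*,\SIGMA^*)\in\cS_n]\le\exp(O(c_0^2))\limsup_n\pr[(\hat\G,\SIGMA_{\hat\G})\in\cS_n]+\eps/2<\eps$. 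The only point requiring any care is the asymmetry of the density ratio: it is bounded above everywhere but only bounded below on $\mathrm{Bal}$, so the second implication must first discard the rare unbalanced assignments via the concentration estimate~\eqref{eq:conc_col2}, whereas the first implication is immediate; beyond that the argument is routine bookkeeping, and no genuinely hard step arises.
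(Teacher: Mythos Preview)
Your argument is correct and follows essentially the same route as the paper: both proofs reduce via the Nishimori identity to comparing the laws of $\hat\SIGMA_{n,m}$ and $\SIGMA^*$, use the pointwise bound $\pr[\hat\SIGMA_{n,m}=\sigma]\le c^{-1}\pr[\SIGMA^*=\sigma]$ (from $\phi(\rho_\sigma)\le\xi$ and \Lem~\ref{Cor_F}) for~\eqref{eqCor_strCntg1}, and for~\eqref{eqCor_strCntg2} first discard the unbalanced $\sigma$'s and then use the reverse density bound on $\mathrm{Bal}$ coming from the second-order expansion of $\phi$. Your write-up is a touch more explicit than the paper's about averaging over the Poisson $\vm$, but otherwise the two are the same argument.
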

\begin{proof}
Fix $m\in\cM(d)$.
By \Lem~\ref{lem:nishimori}, {\bf BAL} and \Lem~\ref{Cor_F},
	\begin{align}\nonumber
	\pr\brk{(\hat\G(n,m),\SIGMA_{\hat\G(n,m)})\in\cS_n}&=\pr\brk{(\G^*(n,m,\hat\SIGMA_{n,m}),\hat\SIGMA_{n,m})\in\cS_n}
		=\sum_{\sigma\in\Omega^{V_n}}\pr\brk{(\G^*(n,m,\sigma),\sigma)\in\cS_n}\pr\brk{\hat\SIGMA_{n,m}=\sigma}\\
		&\hspace{-2cm}\leq\frac{\xi^m}{\Erw[Z(\G(n,m))]}\sum_{\sigma\in\Omega^{V_n}}\pr\brk{(\G^*(n,m,\sigma),\sigma)\in\cS_n}
		\leq c^{-1}\pr\brk{(\G^*(n,m,\SIGMA^*),\SIGMA^*)\in\cS_n}\label{eqCor_strCntg666}
	\end{align}
which implies (\ref{eqCor_strCntg1}).
To prove (\ref{eqCor_strCntg2}) pick $L=L(\eps)>0$ large enough so
	that $\pr\brk{\tv{\rho_{\SIGMA^*}-\bar\rho}>Ln^{-1/2}}<\eps/2$.
Then \Lem~\ref{Lemma_F} shows that there exists $\eta=\eta(L)>0$ such that
	$\Erw[\psi_{\G(n,m)}(\sigma)]=\phi(\rho_\sigma)^m\geq\eta\xi^m$ for all $\sigma\in\Omega^{V_n}$
		such that $\tv{\rho_{\sigma}-\bar\rho}\leq Ln^{-1/2}$.
Hence, by \Lem s~\ref{lem:nishimori} and~\ref{Cor_F},
	\begin{align*}
	\pr\brk{(\G^*(n,m,\SIGMA^*),\SIGMA^*)\in\cS_n}&\leq\frac\eps2+
		\pr\brk{(\G^*(n,m,\SIGMA^*),\SIGMA^*)\in\cS_n,\,\tv{\rho_{\SIGMA^*}-\bar\rho}\leq Ln^{-1/2}}\\
	&\leq\frac\eps2+\sum_{\sigma:\tv{\rho_{\sigma}-\bar\rho}\leq Ln^{-1/2}}
		\pr\brk{(\G^*(n,m,\sigma),\sigma)\in\cS_n}\frac{\Erw[\psi_{\G(n,m)}(\sigma)]}{\eta q^n\xi^m}\\
		&\leq\frac\eps2+\frac{\Erw[Z(\G(n,m))]}{\eta q^n\xi^m}\pr\brk{(\G^*(n,m,\hat\SIGMA_{n,m}),\hat\SIGMA_{n,m})\in\cS_n}
		\leq\frac\eps2+\frac{\pr[(\hat\G(n,m),\SIGMA_{\hat\G(n,m)})\in\cS_n]}\eta.
	\end{align*}
Thus, setting $\delta=\eps\eta/3$, we obtain (\ref{eqCor_strCntg2}).
\end{proof}

\begin{proof}[Proof of \Lem~\ref{Prop_contig}]
By construction, the mutual contiguity of $\G^*(n,m,\SIGMA^*)$ and $\G^*(n,m,\hat\SIGMA_{n,m})$ 
is immediate from the mutual contiguity of $\SIGMA^*$ and $\hat\SIGMA_{n,m}$ furnished by \Cor~\ref{lem:conc_coloring}.
Moreover, $\hat\G(n,m)$ and $\G^*(n,m,\hat\SIGMA_{n,m})$ are identically distributed by the Nishimori identity.
\end{proof}

\noindent
Finally, we derive \Thm~\ref{Cor_contig}, \Cor~\ref{Thm_contig} and \Thm~\ref{Thm_overlap} from \Thm~\ref{Thm_SSC}.

\begin{proof}[Proof of \Thm~\ref{Cor_contig}]
Suppose that $d<\dc$ and that $(\cS_n)_n$ is a sequence of events.
We will prove the following two statements, from which the mutual contiguity of $\G$ and $\hat\G$ is immediate.
	\begin{align}\label{eqStrongCor_contig1}
	\forall \eps>0\,\exists\alpha>0:\limsup_{n\to\infty}\pr\brk{\hat\G\in\cS_n}<\alpha\Rightarrow\limsup_{n\to\infty}\pr\brk{\G\in\cS_n}<\eps,\\
	\forall \eps>0\,\exists\alpha>0:\limsup_{n\to\infty}\pr\brk{\G\in\cS_n}<\alpha\Rightarrow\limsup_{n\to\infty}\pr\brk{\hat\G\in\cS_n}<\eps.
		\label{eqStrongCor_contig2}
	\end{align}
Since $\hat\G$ and $\G^*$ are mutually contiguous by \Lem~\ref{Prop_contig}, mutual contiguity of $\G$ and $\G^*$ follows from (\ref{eqStrongCor_contig1}) and (\ref{eqStrongCor_contig2}).
Moreover, the conditional mutual contiguity given $\fS$ follows by applying the unconditional result to $\cS_n\cap\fS$, because 
\Lem~\ref{Prop_contig} and \Prop~\ref{prop:FirstCondOverFirst} show that the probability of $\fS$ is bounded away from $0$ in either model.

We proceed to prove (\ref{eqStrongCor_contig1}).
Because the random variable $\cK$ from \Thm~\ref{Thm_SSC} satisfies $\Erw\abs\cK<\infty$, there exists $\delta>0$ such that 
	$\Erw[\pr\brk{Z(\G)<\delta\Erw[Z(\G)|\vm]|\vm}]<\eps/2.$
Hence,
	\begin{align*}
	\pr\brk{\G\in\cS_n}&=\Erw[\pr\brk{\G\in\cS_n|\vm}]\leq\eps+\Erw[\pr\brk{\G\in\cS_n,\,Z(\G)\geq\delta\Erw[Z(\G)|\vm]|\vm}]\\
		&\leq\eps+\delta^{-1}\Erw\brk{\frac{\Erw[Z(\G)\vecone\{\G\in\cS_n\}|\vm]}{\Erw[Z(\G)|\vm]}}
		=\eps/2+\delta^{-1}\Erw[\pr[\hat\G\in\cS_n|\vm]]=\eps/2+\delta^{-1}\pr\brk{\hat\G\in\cS_n}.
	\end{align*}
Thus, setting $\alpha=\delta\eps/2$, we obtain (\ref{eqStrongCor_contig1}).

Let us move on to the proof of (\ref{eqStrongCor_contig2}).
\Prop~\ref{lem:SecondMoment} shows that  for every $d<\dc$ there is $c(d)>0$ such that uniformly for all $m\in\cM(d)$,
	\begin{align*}
	\Erw[\cZ(\hat\G(n,m))]&=\frac{\Erw[\cZ(\hat\G(n,m))Z(\G(n,m))]}{\Erw[Z(\G(n,m))]}=\frac{\Erw[\cZ(\G(n,m))^2]}{\Erw[Z(\G(n,m))]}\leq c(d)\Erw[Z(\G(n,m))].
	\end{align*}
Hence, by Markov's inequality for any $\eps>0$ there is $L>0$ such that 
	$\pr[\cZ(\hat\G(n,m))>L\cdot\Erw[Z(\G(n,m))]]<\eps/2.$
Moreover, $\pr[\cZ(\hat\G(n,m))=Z(\hat\G(n,m))]=1-o(1)$ by \Prop~\ref{prop:belowcond-unif}.
As a consequence,
	\begin{align*}
	\pr&\brk{\hat\G(n,m)\in\cS_n}=o(1)+\pr\brk{\hat\G(n,m)\in\cS_n,\,\cZ(\hat\G(n,m))=Z(\hat\G(n,m))}\\
		&\leq\eps/2+o(1)+\pr\brk{\hat\G(n,m)\in\cS_n,\,\cZ(\hat\G(n,m))=Z(\hat\G(n,m)),\,\cZ(\hat\G(n,m))\leq L\cdot\Erw[Z(\G(n,m))]}\\
		&\leq\eps/2+o(1)+\pr\brk{\hat\G(n,m)\in\cS_n,\,Z(\hat\G(n,m))\leq L\cdot\Erw[Z(\G(n,m))]}\\
		&=\frac\eps2+o(1)+\frac{\Erw[Z(\G(n,m))\vecone\{\G\in\cS_n,\,Z(\G(n,m))\leq L\cdot\Erw[Z(\G(n,m)]\}]}{\Erw[Z(\G(n,m))]}
			\leq\frac\eps2+o(1)+L\cdot\pr\brk{\G(n,m)\in\cS_n}.
	\end{align*}
Thus, choosing $\alpha<\eps/(3L)$, say, we obtain (\ref{eqStrongCor_contig2}).
\end{proof}

\begin{proof}[Proof of \Cor~\ref{Thm_contig}]
The corollary is immediate from \Thm~\ref{Cor_contig},  \Lem~\ref{lem:nishimori} and \Cor~\ref{lem:conc_coloring}.
\end{proof}

\begin{proof}[Proof of \Thm~\ref{Thm_overlap}]
\Thm~\ref{Cor_contig} and \Prop~\ref{prop:belowcond-unif} imply that 
	$\lim_{n\to\infty}\Erw\bck{\|\rho_{\SIGMA_1,\SIGMA_2}-\bar\rho\|_{\mathrm{TV}}}_{\G}=0$ for all $d<\dc$.
To prove that this fails to hold for $d$ beyond but arbitrarily close to $\dc$, we calculate the derivative $\frac{\partial}{\partial d}\Erw[\ln Z(\G)]$
(for the random graph coloring problem a similar argument was used in~\cite{ECM}).
It is well known that
	\begin{align}
	\frac1n\frac{\partial}{\partial d}&\Erw[\ln Z(\G)]=\frac1n\sum_{m=0}^\infty\brk{\frac{\partial}{\partial d}\pr\brk{\Po(dn/k)=m}}
		\Erw[\ln Z(\G)|\vm=m]\nonumber\\
		&=\frac1k\sum_{m=0}^\infty\brk{\vecone\{m\geq1\}\pr\brk{\Po(dn/k)=m-1}+\pr\brk{\Po(dn/k)=m}}\Erw[\ln Z(\G)|\vm=m]\nonumber\\
		&=\frac1k[\Erw[\ln Z(\G(n,\vec m+1))]-\Erw[\ln Z(\G(n,\vec m)]]
		=\Erw[\ln\langle \psi_{a_{\vm+1}}(\SIGMA(\partial_1 a_{\vm+1}),\ldots,\SIGMA(\partial_k a_{\vm+1}))\rangle_{\G(n,\vm)}].
\label{eq:nullmodelderiv1}
	\end{align}
Expanding the logarithm using Fubini and (\ref{eqBounded}), we find
	\begin{align}
	\frac{1}{n}\frac{\partial}{\partial d}\Erw[\ln Z(\G)]&=
	-\sum_{l=1}^\infty\sum_{h_1,\ldots,h_k\in[n]}\frac{1}{lkn^k}\Erw\langle 1-\PSI(\SIGMA(x_{h_1},\ldots,x_{h_k})\rangle^l_{\G}.
			\label{eq:nullmodelderiv2}
\end{align}
{Further with $\rho_{\SIGMA_1,\ldots,\SIGMA_l}$ denoting the overlap of $l$ independent samples from $\mu_{\G}$ as in (\ref{eqMultiOverlap}),  we can cast (\ref{eq:nullmodelderiv2}) as }
	\begin{align*}
	\frac{1}{n}\frac{\partial}{\partial d}\Erw[\ln Z(\G)]&=
		-\sum_{l=1}^\infty\sum_{h_1,\ldots,h_k\in[n]}\frac{1}{lkn^k}\Erw\bck{\prod_{i=1}^l1-\PSI(\SIGMA_i(x_{h_i}))}_{\G}\\
		&	=-\sum_{l=1}^\infty\frac{1}{kl}\Erw\brk{\sum_{\tau\in\Omega^{k\times l}}
				\bck{\prod_{j=1}^k\rho_{\SIGMA_1,\ldots,\SIGMA_l}(\tau_{j,1},\ldots,\tau_{j,l})}_{\G}\prod_{i=1}^l1-\PSI(\tau_{1,i},\ldots,\tau_{k,i})}.
	\end{align*}
Hence, if $\lim_{n\to\infty}\Erw\bck{\|\rho_{\SIGMA_1,\SIGMA_2}-\bar\rho\|_{\mathrm{TV}}}_{\G}=0$, then
due to (\ref{eqBounded}), dominated convergence and \Lem~\ref{Lemma_multiOverlap}
	\begin{align}			\label{eq:nullmodelderiv3}
	\lim_{n\to\infty}\frac{1}{n}\frac{\partial}{\partial d}\Erw[\ln Z(\G)]&=-\sum_{l=1}^\infty\frac{(1-\xi)^l}{kl}=k^{-1}\ln\xi.
	\end{align}
Now, suppose that $D>0$ is such that $\Erw\bck{\|\rho_{\SIGMA_1,\SIGMA_2}-\bar\rho\|_{\mathrm{TV}}}_{\G}=o(1)$ for all $d<D$.
Then (\ref{eqBounded}), dominated convergence and (\ref{eq:nullmodelderiv3}) yield
	\begin{align*}
	\ln q+\frac Dk\ln\xi&=\ln q+\int_0^D\lim_{n\to\infty}\frac{1}{n}\frac{\partial}{\partial d}\Erw[\ln Z(\G)]\dd d
		=\ln q+\lim_{n\to\infty}\frac{1}{n}\int_0^D\frac{\partial}{\partial d}\Erw[\ln Z(\G)]\dd d
		=\lim_{n\to\infty}\frac1n\Erw[\ln Z(\G(n,\vm_D))].
	\end{align*}
Thus, \Thm~\ref{Thm_cond} shows that $D\leq\dc$.
Consequently, for any $D>\dc$ there exists an average degree $d<D$ such that 
	$\limsup_{n\to\infty}\Erw\bck{\|\rho_{\SIGMA_1,\SIGMA_2}-\bar\rho\|_{\mathrm{TV}}}_{\G}>0,$
as claimed.
The very same argument applies given $\fS$.
\end{proof}

\noindent
As a preparation for \Sec~\ref{sec:thrm:TreeGraphEquivalence} we put the following on record.

\begin{corollary}\label{Cor_strongCntig}
Assume that $P$ satisfies {\bf SYM} and {\bf BAL} and that $d<\dc$.
Then for any sequence $(\cS_n)_n$ of events the following two statements hold.
	\begin{align}\label{eqLemma_strCntg1}
	\forall\eps>0\,\exists\delta>0:\limsup_{n\to\infty}\pr\brk{(\G^*,\SIGMA^*)\in\cS_n}<\delta
		\Rightarrow\limsup_{n\to\infty}\pr\brk{(\G,\SIGMA)\in\cS_n}<\eps,\\
	\forall\eps>0\,\exists\delta>0:\limsup_{n\to\infty}\pr\brk{(\G,\SIGMA)\in\cS_n}
		<\delta\Rightarrow\limsup_{n\to\infty}\pr\brk{(\G^*,\SIGMA^*)\in\cS_n}<\eps.
		\label{eqLemma_strCntg2}
	\end{align}
\end{corollary}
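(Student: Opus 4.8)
\emph{Overview.} The claim is the $\eps$--$\delta$ (``strong contiguity'') reformulation of the mutual contiguity of $(\G,\SIGMA)$ and $(\G^*,\SIGMA^*)$ already recorded in \Cor~\ref{Thm_contig}, and the plan is to re-run the two steps behind that result while tracking the quantifiers, chaining through the reweighted model $\hat\G$. Throughout we condition on the number $\vm$ of constraint nodes: since $|\vm-dn/k|=O(\sqrt n)=o(n^{3/5})$ with probability $1-o(1)$, we have $\vm\in\cM(d)$ with probability $1-o(1)$, and all of the moment and overlap bounds invoked below are uniform over $m\in\cM(d)$; hence it suffices to prove the two implications with $\vm$ frozen to a deterministic $m\in\cM(d)$ (in the form ``probability $\le\eps+\,$constant$\,\cdot\,$probability'', uniformly in $m$) and then integrate over $\vm$.

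\emph{Step 1: $(\G^*,\SIGMA^*)$ versus $(\hat\G,\SIGMA_{\hat\G})$.} By the Nishimori identity \Lem~\ref{lem:nishimori}, $(\hat\G(n,m),\SIGMA_{\hat\G(n,m)})$ has the same law as $(\G^*(n,m,\hat\SIGMA_{n,m}),\hat\SIGMA_{n,m})$, while \Cor~\ref{lem:conc_coloring} gives the mutual contiguity of the reweighted assignment $\hat\SIGMA_{n,m}$ and the uniformly random $\SIGMA^*$, with the quantitative tail bound \eqref{eq:conc_col2} available if needed. Combining this with \eqref{eqCor_strCntg1} and \eqref{eqCor_strCntg2} yields: for every $\eps>0$ there is $\delta>0$ so that $\limsup_n\pr[(\G^*,\SIGMA^*)\in\cS_n]<\delta$ forces $\limsup_n\pr[(\hat\G,\SIGMA_{\hat\G})\in\cS_n]<\eps$, and conversely. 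This step uses only {\bf SYM} and {\bf BAL}.

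\emph{Step 2: $(\hat\G,\SIGMA_{\hat\G})$ versus $(\G,\SIGMA)$.} Inside each $\vm$-layer, $\hat\G$ arises from $\G$ by the Radon--Nikodym weight $W(\G)=Z(\G)/\Erw[Z(\G)\mid\vm]$, a function of the graph alone with $\Erw[W\mid\vm]=1$; hence $\pr[(\hat\G,\SIGMA_{\hat\G})\in\cS_n]=\Erw[\,W(\G)\,\bck{\vecone\{(\G,\SIGMA)\in\cS_n\}}_{\G}\,]$. Consequently the two pair-level implications follow verbatim from the graph-level statements \eqref{eqStrongCor_contig1} and \eqref{eqStrongCor_contig2} proved for \Thm~\ref{Cor_contig}: in those arguments the indicator $\vecone\{\G\in\cS_n\}$ may be replaced throughout by the $[0,1]$-valued quantity $\bck{\vecone\{(\G,\SIGMA)\in\cS_n\}}_{\G}$ with no other change. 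Spelling this out, the direction ``$(\G,\SIGMA)$ small $\Rightarrow(\hat\G,\SIGMA_{\hat\G})$ small'' rests on the uniform integrability of $W$, which comes from \Prop~\ref{lem:SecondMoment} together with \Cor~\ref{cor:belowcond-unif} and \Prop~\ref{prop:belowcond-unif} (the truncated $\cZ(\G)/\Erw[Z(\G)\mid\vm]$ is bounded in $L^2$ and agrees with $W(\G)$ with probability $1-o(1)$), after which Markov, Cauchy--Schwarz and the trivial bound $\bck{\vecone\{\,\cdot\,\}}_{\G}^2\le\bck{\vecone\{\,\cdot\,\}}_{\G}$ finish the job; the reverse direction rests on $W$ being bounded away from $0$ with probability $1-o(1)$, which follows from \Thm~\ref{Thm_SSC} and \Prop~\ref{lem:FirstMoment} (so $\ln W(\G)\to\cK$ in distribution with $\Erw|\cK|<\infty$, making $\pr[W(\G)<\delta']$ arbitrarily small for small $\delta'$), followed by the corresponding truncation. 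Chaining Steps 1 and 2 proves the corollary.

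\emph{Main obstacle.} There is essentially no new difficulty here: the corollary merely upgrades \Cor~\ref{Thm_contig} to the $\eps$--$\delta$ form, and every ingredient is already in hand. The only points needing care are the bookkeeping of the conditioning on $\vm$ (using that $\vm\in\cM(d)$ w.h.p.\ and that the first- and second-moment estimates are uniform over $\cM(d)$) and, in Step 2, the control of $W$: it is exactly there that $d<\dc$ is indispensable --- above $\dc$ the second moment beats the square of the first by an exponential factor, $W$ loses uniform integrability, and contiguity genuinely fails --- and it is there that the heavier inputs \Thm~\ref{Thm_SSC}, \Prop~\ref{lem:SecondMoment} and \Prop~\ref{prop:belowcond-unif} are consumed. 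As an alternative one can sidestep Step 2 altogether: mutual contiguity of two $n$-indexed families of probability measures automatically self-improves to the $\eps$--$\delta$ form, because a counterexample would, via a diagonal choice of the test events $\cS_n$, produce a single event sequence violating the contiguity already granted by \Cor~\ref{Thm_contig}.
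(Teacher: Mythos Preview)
Your proposal is correct and follows essentially the same two-step chain through $\hat\G$ as the paper: first $(\G^*,\SIGMA^*)\leftrightarrow(\hat\G,\SIGMA_{\hat\G})$ via \Cor~\ref{Cor_strCntg}, then $(\hat\G,\SIGMA_{\hat\G})\leftrightarrow(\G,\SIGMA)$ via \eqref{eqStrongCor_contig1}--\eqref{eqStrongCor_contig2}. The only stylistic difference is in how you lift the graph-level implications \eqref{eqStrongCor_contig1}--\eqref{eqStrongCor_contig2} to the pair level: the paper converts the pair event into the graph-level event $\{\bck{\vecone\{(\,\cdot\,,\SIGMA)\in\cS_n\}}\ge\sqrt\eta\}$ via Markov and then applies the graph-level statements verbatim, whereas you re-run those proofs with the indicator replaced by the $[0,1]$-valued Gibbs average---both are equally valid and use the same heavy inputs (\Thm~\ref{Thm_SSC}, \Prop~\ref{lem:SecondMoment}, \Prop~\ref{prop:belowcond-unif}).
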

\begin{proof}
To prove (\ref{eqLemma_strCntg1}) pick a small enough $\eta=\eta(\eps)>0$ and a smaller $\delta=\delta(\eta)>0$.
Then \Cor~\ref{Cor_strCntg} shows that 
$\limsup_{n\to\infty}\pr\brk{(\G^*,\SIGMA^*)\in\cS_n}<\delta$ implies
$\limsup_{n\to\infty}\pr\brk{(\hat\G,\SIGMA_{\hat\G})\in\cS_n}<\eta$.
Hence, 
	$$\limsup_{n\to\infty}\pr\brk{\bck{\vecone\{(\hat\G,\SIGMA_{\hat\G})\in\cS_n\}}_{\hat\G}\geq\sqrt\eta}<\sqrt\eta$$
and thus (\ref{eqStrongCor_contig1}) implies
	$\limsup_{n\to\infty}\pr\brk{\bck{\vecone\{(\G,\SIGMA)\in\cS_n\}}_{\G}\geq\eps}<\eps$, which proves (\ref{eqLemma_strCntg1}).

Similarly, to obtain (\ref{eqLemma_strCntg2}) choose $\eta=\eta(\eps)>0$ and $\delta=\delta(\eta)>0$ sufficiently small.
If $\limsup\pr\brk{(\G,\SIGMA)\in\cS_n}<\delta$, 
then (\ref{eqStrongCor_contig2}) yields
	$\limsup_{n\to\infty}\pr\brk{(\hat\G,\SIGMA_{\hat\G})\in\cS_n}<\eta.$
Hence, (\ref{eqCor_strCntg1}) implies $\limsup_{n\to\infty}\pr\brk{(\G^*,\SIGMA^*)\in\cS_n}<\eps$.
\end{proof}

\section{The Kesten-Stigum bound}\label{Sec_prop_KS}

\noindent
{\em Throughout this section we assume that $P$ satisfies {\bf SYM} and {\bf BAL}.}

\subsection{Outline}
In this section we prove \Prop~\ref{prop_KS}.
The key insight is that the dominant eigenvector of $\Xi$ restricted to the space $\cE$ gives rise to a natural family of
	 probability distributions $\pi_{\epsilon}\in\cP^2_*(\Omega)$, $\epsilon>0$.
Up to an error term that decays as $\eps\to0$, the Bethe free energy $\cB(d,P,\pi_{\epsilon})$ of this distribution is given by a quadratic function of the corresponding eigenvalue.
Ultimately, the desired bound on $\max\{\abs\lambda:\lambda\in\Eig[\Xi]\}$ follows because
the definition (\ref{eq:dcond}) of $\dc$ ensures that $\cB(d,P,\pi_\epsilon)\leq\ln q+\frac dk\ln\xi$ for all $d<\dc$, $\epsilon>0$.
To implement this programme we need to show that the dominant eigenvector of $\Xi$ has a particular form.
More precisely, in \Sec~\ref{Sec_Cor:MaxOpt} we prove

\begin{lemma}\label{Cor:MaxOpt}
Let $\hat\lambda=\max\Eig[\Xi]$.
Then $\hat\lambda\geq-\min\Eig[\Xi]$ and there exists
an orthonormal basis  $u_1,\ldots,u_{q-1}\in\RR^\Omega$  of the space $\{x\in\RR^\Omega:x\perp\vecone\}$ and $\bar\lambda_1,\ldots,\bar\lambda_{q-1}\geq0$
such that
	\begin{equation}\label{eqDeltaVector}
	\DeltaM=\sum_{i=1}^{q-1}\bar\lambda_i u_i\tensor u_i\in\RR^\Omega\tensor\RR^\Omega
	\end{equation}
is a unit vector and $\Xi\DeltaM=\hat\lambda\DeltaM$.
\end{lemma}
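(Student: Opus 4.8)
The plan is to realise $\Xi$ as a completely positive operator on a space of matrices and then invoke the Perron--Frobenius theorem for cone-preserving maps. First I would set up the matrix picture: identify $\RR^\Omega\tensor\RR^\Omega$ with the space $M_\Omega(\RR)$ of $\Omega\times\Omega$ matrices via $x\tensor y\mapsto xy^\ast$, under which the standard inner product becomes the Frobenius inner product $\scal AB=\Tr(AB^\ast)$, each $\Phi_\psi\tensor\Phi_\psi$ becomes the map $Z\mapsto\Phi_\psi Z\Phi_\psi^\ast$, and hence $\Xi$ becomes $\mathcal L\colon Z\mapsto\Erw[\Phi_{\PSI}Z\Phi_{\PSI}^\ast]$. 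Rewriting (\ref{eqSpaceE}), the subspace $\cE$ corresponds to the matrices with all row sums and all column sums equal to $0$; it is $\mathcal L$-invariant by \Lem~\ref{Lemma_Xi}, and it splits $\mathcal L$-invariantly and orthogonally as $\cE=\cE_{\mathrm{sym}}\oplus\cE_{\mathrm{skew}}$ into symmetric and skew-symmetric parts, because $(\Phi_\psi Z\Phi_\psi^\ast)^\ast=\Phi_\psi Z^\ast\Phi_\psi^\ast$. Crucially, $\mathcal L$ maps positive semidefinite matrices to positive semidefinite matrices, since $\Phi_\psi Z\Phi_\psi^\ast\succeq0$ whenever $Z\succeq0$ and the PSD cone is closed under averaging.

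Next I would pass to the Hermitian world. Let $\cE_{\mathrm{Herm}}\subset M_\Omega(\mathbb C)$ be the complex Hermitian matrices with zero row and column sums; as a real vector space $\cE_{\mathrm{Herm}}=\cE_{\mathrm{sym}}\oplus\mathrm i\,\cE_{\mathrm{skew}}$, and the $\mathbb C$-linear extension $\mathcal L_{\mathbb C}$ of $\mathcal L$ preserves $\cE_{\mathrm{Herm}}$ and satisfies $\mathcal L_{\mathbb C}(\mathrm iW)=\mathrm i\,\mathcal L(W)$. Consequently the map $W\mapsto\mathrm iW$ conjugates $\mathcal L|_{\cE_{\mathrm{skew}}}$ to $\mathcal L_{\mathbb C}|_{\mathrm i\cE_{\mathrm{skew}}}$, so the eigenvalues of $\mathcal L_{\mathbb C}$ on $\cE_{\mathrm{Herm}}$ are exactly $\eig(\mathcal L|_{\cE_{\mathrm{sym}}})\cup\eig(\mathcal L|_{\cE_{\mathrm{skew}}})=\Eig[\Xi]$, all real because $\mathcal L_{\mathbb C}|_{\cE_{\mathrm{Herm}}}$ is self-adjoint for the inner product $\Tr(AB)$ (this self-adjointness being the content of the first part of \Lem~\ref{Lemma_Xi}, which uses {\bf SYM}). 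Inside $\cE_{\mathrm{Herm}}$ the PSD matrices form a proper (closed, pointed, full-dimensional) cone that $\mathcal L_{\mathbb C}$ preserves, so the Perron--Frobenius/Krein--Rutman theorem for cone-preserving operators on a finite-dimensional space yields that the spectral radius of $\mathcal L_{\mathbb C}|_{\cE_{\mathrm{Herm}}}$ is an eigenvalue with a nonzero PSD eigenvector $\DeltaM_0\succeq0$; by the previous sentence this spectral radius equals $\max_{\lambda\in\Eig[\Xi]}|\lambda|$, and since it is itself attained in $\Eig[\Xi]$ this already gives $\hat\lambda=\max\Eig[\Xi]$ together with $\hat\lambda\ge-\min\Eig[\Xi]$.

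Finally I would descend back to $\RR$. Writing $\DeltaM_0=A+\mathrm iB$ with $A,B$ real, Hermiticity forces $A$ symmetric and $B$ skew-symmetric; comparing real and imaginary parts of $\mathcal L_{\mathbb C}\DeltaM_0=\hat\lambda\DeltaM_0$, and using that $\mathcal L$ is real, gives $\mathcal LA=\hat\lambda A$. Testing $\DeltaM_0\succeq0$ against real vectors $v$, for which $v^\ast Bv=0$, shows $A\succeq0$ as an operator on $\RR^\Omega$; moreover $A\ne0$, since otherwise $\DeltaM_0=\mathrm iB$ would be a PSD Hermitian matrix whose spectrum is symmetric about $0$ (the eigenvalues of the real skew matrix $B$ come in pairs $\pm\mathrm i\beta$), forcing $B=0$. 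Thus $A$ is a nonzero PSD symmetric real matrix lying in $\cE$ (so $A\vecone=0$) with $\Xi A=\hat\lambda A$. Setting $\DeltaM=A/\norm A_2$ and using that the symmetric operator $\DeltaM$ leaves $\{x\perp\vecone\}$ invariant (as $\DeltaM\vecone=0$), the spectral theorem supplies an orthonormal basis $u_1,\dots,u_{q-1}$ of $\{x\perp\vecone\}$ with nonnegative eigenvalues $\bar\lambda_1,\dots,\bar\lambda_{q-1}$ and $\DeltaM=\sum_{i=1}^{q-1}\bar\lambda_iu_i\tensor u_i$, which is exactly (\ref{eqDeltaVector}); by construction $\DeltaM$ is a unit vector and $\Xi\DeltaM=\hat\lambda\DeltaM$, completing the proof.

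The main obstacle is the passage from the a priori non-symmetric operator on all of $\cE$ to the Hermitian/PSD picture: one must be sure that restricting to self-adjoint elements and the PSD cone discards no part of the spectrum of $\Xi|_\cE$ (this is handled by the $W\mapsto\mathrm iW$ identification between $\cE_{\mathrm{skew}}$ and a slice of the Hermitian matrices, together with self-adjointness), and that the Krein--Rutman eigenvector genuinely descends to a real symmetric PSD eigenvector of the claimed diagonal form rather than merely a Hermitian one.
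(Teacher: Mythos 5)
Your proof is correct, but it takes a genuinely different route from the paper's. The paper argues entirely by hand: it uses the fact that the transpose operator $\thet:e_\sigma\tensor e_\tau\mapsto e_\tau\tensor e_\sigma$ commutes with $\Xi$ to find a \emph{symmetric} representative $z$ in the eigenspace of the eigenvalue of largest modulus, diagonalises $\iota(z)=\sum w_iu_iu_i^*$, and then exploits the non-negativity of $\scal{\Xi\, u_i\tensor u_i}{u_j\tensor u_j}=\Erw[\scal{\Phi_{\PSI}u_i}{u_j}^2]$ in a triangle inequality to show that replacing $w_i$ by $|w_i|$ can only increase the quadratic form $\scal{\Xi\nix}\nix$ --- so the PSD vector $\DeltaM=\sum|w_i|u_i\tensor u_i$ already attains the spectral radius and hence is the dominant eigenvector. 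You instead recognise $\Xi$, in the matrix picture, as the completely positive map $Z\mapsto\Erw[\Phi_{\PSI}Z\Phi_{\PSI}^*]$, pass to the real vector space of Hermitian matrices with $Z\vecone=0$ (where the cone of PSD matrices is proper because row-sum and column-sum conditions coincide by Hermiticity), invoke the finite-dimensional Krein--Rutman/Perron--Frobenius theorem to extract a PSD Hermitian eigenvector for the spectral radius, and then descend back to a real symmetric PSD eigenvector by splitting into real and imaginary parts. Both approaches need the self-adjointness of $\Xi$ from \Lem~\ref{Lemma_Xi} and the double stochasticity of every $\Phi_\psi$ under {\bf SYM} (so that $\mathcal L$ preserves the zero-row/column-sum subspace); beyond that, the paper's argument is fully self-contained and elementary, while yours trades the bespoke triangle-inequality computation for a black-box cone-preserving spectral theorem and gains a cleaner conceptual explanation of why a PSD dominant eigenvector must exist. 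One small point worth making explicit in your write-up: you need that the eigenvalues of $\mathcal L_{\mathbb C}$ on $\cE_{\mathrm{Herm}}$ are precisely $\Eig[\Xi]$ (so the spectral radius there really is $\max_{\lambda\in\Eig[\Xi]}|\lambda|$); your $W\mapsto\mathrm iW$ identification of $\cE_{\mathrm{skew}}$ with $\mathrm i\cE_{\mathrm{skew}}\subset\cE_{\mathrm{Herm}}$ does supply this, since it intertwines $\mathcal L|_{\cE_{\mathrm{skew}}}$ with $\mathcal L_{\mathbb C}|_{\mathrm i\cE_{\mathrm{skew}}}$.
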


Throughout this section we denote the eigenvector promised by \Lem~\ref{Cor:MaxOpt} by $\DeltaM$ and the corresponding eigenvalue by $\hat\lambda$.
The particular structure of $\DeltaM$ ensures that
	\begin{align}\label{eqDeltaSymmetric}
	\scal{\DeltaM}{e_\sigma\tensor e_\tau}=\scal{\DeltaM}{e_\tau\tensor e_\sigma}.
	\end{align}
Further, 
because the coefficients $\bar\lambda_i$ in (\ref{eqDeltaVector}) are non-negative and $u_1,\ldots,u_{q-1}\perp\vecone$, we obtain
	\begin{equation}\label{eqetaVector}
	\eta=\sum_{i=1}^{q-1}\sqrt{\bar\lambda_i}\ u_i\tensor u_i\in\cE.
	\end{equation}
Recalling that $(e_\omega)_{\omega\in\Omega}$ is the canonical basis of $\RR^\Omega$, for each $\omega\in\Omega$ we define $\pi_{\epsilon,\omega}\in\RR^\Omega$ by letting
	\begin{align}\label{eqCreative}
	\pi_{\epsilon,\omega}(\sigma)&=\frac1q+\epsilon\scal{\eta}{e_\omega\tensor e_\sigma}.
	\end{align}
Finally, let $\pi_\epsilon=\frac1q\sum_{\omega\in\Omega}\delta_{\pi_{\epsilon,\omega}}$ (with $\delta_z$ the Dirac measure on $z\in\RR^\Omega$).

\begin{lemma}\label{Lemma_sound}
There exists $\epsilon_0>0$ such that for all
 $0<\epsilon<\epsilon_0$ we have $\pi_{\epsilon,\omega}\in\cP(\Omega)$ for all $\omega\in\Omega$ and $\pi_\epsilon\in\cP^2_*(\Omega)$.
\end{lemma}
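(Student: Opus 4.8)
The plan is to verify the two defining conditions of $\cP^2_*(\Omega)$ directly, the only inputs being the membership $\eta\in\cE$ recorded in~\eqref{eqetaVector} and the finiteness of $\Omega$. First I would unpack what $\eta\in\cE$ means via~\eqref{eqSpaceE}: for every $y\in\RR^\Omega$ one has $\scal{\eta}{\vecone\tensor y}=\scal{\eta}{y\tensor\vecone}=0$. Since $\sum_{\sigma\in\Omega}e_\sigma=\vecone$, this yields the two identities $\sum_{\sigma\in\Omega}\scal{\eta}{e_\omega\tensor e_\sigma}=\scal{\eta}{e_\omega\tensor\vecone}=0$ for each $\omega\in\Omega$, and $\sum_{\omega\in\Omega}\scal{\eta}{e_\omega\tensor e_\sigma}=\scal{\eta}{\vecone\tensor e_\sigma}=0$ for each $\sigma\in\Omega$. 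The first identity already shows that, for every $\omega$ and every $\epsilon$, the numbers $\pi_{\epsilon,\omega}(\sigma)$ from~\eqref{eqCreative} sum to $1$ over $\sigma\in\Omega$.

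Second, I would choose $\epsilon_0$ so as to force non-negativity of these numbers. Because $\Omega$ is finite and $\eta$ is a fixed vector, the quantity $M=\max\{\,\abs{\scal{\eta}{e_\omega\tensor e_\sigma}}:\omega,\sigma\in\Omega\,\}$ is finite; set $\epsilon_0=1/(qM)$ if $M>0$ and $\epsilon_0=\infty$ if $\eta=0$. Then for every $0<\epsilon<\epsilon_0$ and all $\omega,\sigma\in\Omega$ we have $\epsilon\abs{\scal{\eta}{e_\omega\tensor e_\sigma}}<1/q$, so $\pi_{\epsilon,\omega}(\sigma)=1/q+\epsilon\scal{\eta}{e_\omega\tensor e_\sigma}>0$; combined with the normalisation from the first step this gives $\pi_{\epsilon,\omega}\in\cP(\Omega)$ for every $\omega$. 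Note that $M$ does not depend on $\omega$, so a single threshold $\epsilon_0$ works for all $\omega$ simultaneously.

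Finally, once each $\pi_{\epsilon,\omega}$ lies in $\cP(\Omega)$, the measure $\pi_\epsilon=\frac1q\sum_{\omega\in\Omega}\delta_{\pi_{\epsilon,\omega}}$ is automatically a probability measure on $\cP(\Omega)$, so it only remains to compute its barycenter $\frac1q\sum_{\omega\in\Omega}\pi_{\epsilon,\omega}$. Averaging~\eqref{eqCreative} over $\omega$ and invoking the second identity from the first step, $\frac1q\sum_{\omega\in\Omega}\pi_{\epsilon,\omega}(\sigma)=\frac1q+\frac\epsilon q\scal{\eta}{\vecone\tensor e_\sigma}=\frac1q$ for every $\sigma$, i.e.\ the barycenter is the uniform distribution on $\Omega$, which is exactly the requirement $\pi_\epsilon\in\cP^2_*(\Omega)$. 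There is no genuine obstacle here — the computation is two lines — and the only point one must not overlook is that $\eta\in\cE$ supplies \emph{two} orthogonality relations, one used for the normalisation of each individual $\pi_{\epsilon,\omega}$ and the other for the barycenter.
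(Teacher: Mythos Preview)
Your proof is correct and follows essentially the same approach as the paper: both use that $\eta\in\cE$ yields $\scal{\eta}{e_\omega\tensor\vecone}=0$ for the normalisation of each $\pi_{\epsilon,\omega}$ and $\scal{\eta}{\vecone\tensor e_\sigma}=0$ for the barycenter, with positivity coming from taking $\epsilon$ small. You are slightly more explicit about the threshold $\epsilon_0=1/(qM)$, whereas the paper just says ``small enough $\epsilon$'', but the argument is otherwise identical.
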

\begin{proof}
Clearly,  $\pi_{\epsilon,\omega}(\sigma)\geq0$ for all $\sigma,\omega\in\Omega$ for small enough $\epsilon>0$.
Moreover, since $\eta\in\cE$ by (\ref{eqetaVector}),
	\begin{align*}
	\sum_{\sigma\in\Omega}\pi_{\epsilon,\omega}(\sigma)&=1+\epsilon\sum_{\sigma\in\Omega}\scal{\eta}{e_\omega\tensor e_\sigma}
		=1+\epsilon\scal{\eta}{e_\omega\tensor \vecone}=1\qquad\mbox{for all }\omega\in\Omega.
	\end{align*}
Hence, $\pi_{\epsilon,\omega}\in\cP(\Omega)$ and $\pi_\epsilon\in\cP^2(\Omega)$.
Similarly, once more because $\eta\in\cE$, for each $\sigma\in\Omega$ we have
	\begin{align*}
	\frac1q\sum_{\omega\in\Omega}\pi_{\epsilon,\omega}(\sigma)&=\frac1q\sum_{\omega\in\Omega}
			\bc{\frac1q+\epsilon\scal{\eta}{e_\omega\tensor e_\sigma}}=\frac1q+\epsilon\scal{\eta}{\vecone\tensor e_\sigma}=\frac1q,
	\end{align*}
whence $\pi_{\epsilon}\in\cP^2_*(\Omega)$.
\end{proof}

Our next goal is to calculate $\cB (d,P, \pi_{\epsilon})$.
More precisely, we aim to expand $\cB (d,P, \pi_{\epsilon})$ to the fourth order in the limit $\epsilon\to0$.
The key tool for this expansion is the following elementary lemma, whose proof can be found in \Sec~\ref{Sec_lem:DerivsVanish}.

\begin{lemma}\label{lem:DerivsVanish}
Suppose $\ell \geq1$ and that $F : \cP(\Omega)^\ell \to (0,\infty)$, $(\rho_1,\ldots,\rho_\ell)\mapsto F(\rho_1,\ldots,\rho_\ell)$ has four continuous derivatives.
Moreover, setting $\bar a = (\bar\rho, \dots , \bar\rho)\in \cP(\Omega)^{\ell}$, assume that $F$ satisfies the following conditions.
\begin{description}
\item[T1] for all $a=(a_1,\ldots,a_\ell)\in\cP(\Omega)^\ell$, all $r\in[\ell]$ and all $c_1,c_2\in\Omega$ we have
	$$ \frac{\partial^2  F(a)}{\partial\rho_{r} (c_1) \partial\rho_{r}(c_2)} = 0.$$
\item[T2] there is $C_0\in\RR$ such that the gradient of $F$ at $\bar a$ satisfies
	$DF(\bar a)=C_0\vecone$.
\end{description}
Further, suppose that $\pi\in\cP^2_*(\Omega)$, let $\RHO,\RHO_1,\RHO_2,\ldots$ be mutually independent samples from $\pi$ and define 
	\begin{align}\label{eqJ}
	J&:\cP(\Omega)^\ell\to\RR,&
	(\rho_1,\ldots,\rho_\ell)&\mapsto
		\sum_{j=1}^4\sum_{r\in[\ell]^j,c\in\Omega^j}
		\frac1{j!}\frac{\partial^j\,\Lambda\circ F}{\partial\rho_{r_1}(c_1)\cdots\partial\rho_{r_j}(c_j)}(\bar a)
			\cdot\prod_{h=1}^j(\rho_{r_h}(c_h)-1/q).
	\end{align}
Then
	\begin{align*}
	\Erw[J(\RHO_1,\ldots,\RHO_\ell)]&=\frac{1}{24F(\bar a)}\sum_{r_1\neq r_2\in[\ell],c\in\Omega^4}
		\bc{\frac{\partial^2F(\bar a)}{\partial\rho_{r_1}(c_1)\partial\rho_{r_2}(c_3)}\frac{\partial^2F(\bar a)}{\partial\rho_{r_1}(c_2)\partial\rho_{r_2}(c_4)}
		+\frac{\partial^2F(\bar a)}{\partial\rho_{r_1}(c_2)\partial\rho_{r_2}(c_3)}\frac{\partial^2F(\bar a)}{\partial\rho_{r_1}(c_1)\partial\rho_{r_2}(c_4)}}\\
		&\qquad\qquad\qquad\qquad\qquad\qquad\qquad\cdot
			\left(\Erw\left[ \RHO(c_1) \RHO (c_2) \right]- q^{-2}\right) \left( \Erw\left[ \RHO (c_3) \RHO (c_4) \right] - q^{-2}\right).
	\end{align*}
\end{lemma}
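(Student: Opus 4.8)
The plan is to Taylor-expand $\Lambda\circ F$ to fourth order around $\bar a$ by the multivariate Fa\`a di Bruno formula and then take expectations, exploiting that $\pi\in\cP^2_*(\Omega)$ has uniform barycenter and that $\RHO_1,\ldots,\RHO_\ell$ are independent. Writing $\partial_{r,c}=\partial/\partial\rho_r(c)$, Fa\`a di Bruno gives $\partial_{r_1,c_1}\cdots\partial_{r_j,c_j}(\Lambda\circ F)=\sum_{\cP\vdash[j]}\Lambda^{(|\cP|)}(F)\prod_{B\in\cP}\partial_BF$, where $\cP$ runs over set partitions of $[j]$ and $\partial_BF=\prod_{h\in B}\partial_{r_h,c_h}F$. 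Substituting this into the definition of $J$ and exchanging the sum with the expectation, the task becomes to evaluate, for each $j\le4$ and each partition $\cP$, sums of the form $\sum_{r\in[\ell]^j,\,c\in\Omega^j}\Lambda^{(|\cP|)}(F(\bar a))\,\big(\prod_{B\in\cP}\partial_BF(\bar a)\big)\,\Erw\prod_{h=1}^j(\RHO_{r_h}(c_h)-1/q)$.

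First I would show that almost all of these vanish. Because the $\RHO_v$, $v\in[\ell]$, are independent with $\Erw[\RHO_v(c)]=1/q$, the expectation $\Erw\prod_h(\RHO_{r_h}(c_h)-1/q)$ factorises over the fibres $\{h:r_h=v\}$ and equals zero unless every value occurring among $r_1,\ldots,r_j$ occurs at least twice; for $j\le3$ this forces all the $r_h$ equal, and for $j=4$ only the patterns ``all equal'' and ``two values, each twice'' survive. Moreover \textbf{T1} implies $\partial_BF\equiv0$ (hence $\partial_BF(\bar a)=0$) whenever $B$ contains distinct indices $h,h'$ with $r_h=r_{h'}$, since then $\partial_BF$ is a further derivative of the identically vanishing within-block second derivative $\partial^2F/\partial\rho_{r_h}(c_h)\,\partial\rho_{r_h}(c_{h'})$, while \textbf{T2} gives $\partial_BF(\bar a)=C_0$ for every singleton $B$. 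Finally, a partition $\cP$ with a singleton block $\{h\}$ yields a summand constant in $c_h$, so the $c_h$-sum contributes the vanishing factor $\sum_{c_h}(\rho_{r_h}(c_h)-1/q)=0$. Together these observations kill every contribution with all $r_h$ equal (in particular everything with $j\le3$), and for $j=4$ with a ``two values each twice'' pattern they leave only the two partitions of $\{1,2,3,4\}$ into pairs each of which mixes the two values; for those $\Lambda^{(|\cP|)}(F(\bar a))=\Lambda''(F(\bar a))=1/F(\bar a)$ and $\prod_{B\in\cP}\partial_BF(\bar a)$ is a product of two mixed second derivatives of $F$.

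It then remains to reorganise the surviving sum. Relabelling the dummy $c$-indices so that the two positions carrying the block index $r_1$ become $1,2$ and the two carrying $r_2$ become $3,4$, the two admissible pairings turn into the products $\tfrac{\partial^2F(\bar a)}{\partial\rho_{r_1}(c_1)\partial\rho_{r_2}(c_3)}\tfrac{\partial^2F(\bar a)}{\partial\rho_{r_1}(c_2)\partial\rho_{r_2}(c_4)}$ and $\tfrac{\partial^2F(\bar a)}{\partial\rho_{r_1}(c_2)\partial\rho_{r_2}(c_3)}\tfrac{\partial^2F(\bar a)}{\partial\rho_{r_1}(c_1)\partial\rho_{r_2}(c_4)}$, and independence turns the expectation into $(\Erw[\RHO(c_1)\RHO(c_2)]-q^{-2})(\Erw[\RHO(c_3)\RHO(c_4)]-q^{-2})$ via $\Erw[(\RHO(c)-1/q)(\RHO(c')-1/q)]=\Erw[\RHO(c)\RHO(c')]-q^{-2}$. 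Keeping track of the multiplicity with which each datum $(r_1,r_2,c_1,c_2,c_3,c_4,\text{pairing})$ arises from the various $(r,\cP)$, together with the prefactor $1/4!$, yields the stated identity. I expect this last bookkeeping step to be the main obstacle: one must collapse the $(r,\cP)$-indexed contributions onto the double sum over ordered pairs $r_1\ne r_2$ correctly, thereby pinning down the numerical constant, and one must apply the ``identically zero'' form of \textbf{T1} and the singleton-block cancellation uniformly and without double counting.
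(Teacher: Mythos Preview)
Your proposal is correct and follows essentially the same route as the paper: both apply Fa\`a di Bruno, use independence and the mean-zero property to restrict to index patterns where every value of $r$ repeats, invoke \textbf{T1} to kill any partition block containing two indices with equal $r$-coordinate, and use \textbf{T2} together with $\sum_c(\RHO(c)-1/q)=0$ to kill any partition with a singleton block, leaving only the two ``cross'' pairings at $j=4$. The paper organises the computation degree-by-degree (treating $\cJ_1,\cJ_2,\cJ_3,\cJ_4$ in turn) rather than partition-type-by-partition-type as you do, but the ingredients and the final bookkeeping that produces the $1/(24F(\bar a))$ prefactor are the same.
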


\noindent
Equipped with \Lem~\ref{lem:DerivsVanish} we will derive the following asymptotic formula in \Sec~\ref{Sec_prop:TaylorBdpi}.

\begin{lemma} \label{prop:TaylorBdpi}
We have
$\cB (d, P,\pi_{\epsilon} )= \cB (d,P,\pi_0)  +\frac{ d (k-1)}{12}   \left( (k-1) d\hat\lambda^2 -\hat\lambda \right)\epsilon^4 + O (\epsilon^5)$ as $\epsilon\to0$.

\end{lemma}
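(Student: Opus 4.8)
The goal is to Taylor-expand $\cB(d,P,\pi_\epsilon)$ around $\epsilon=0$ to fourth order. Recall from \eqref{eqMyBethe} that $\cB(d,P,\pi)$ is a sum of two terms, each of which is an expectation of a $\Lambda$ applied to a multilinear expression in the $\RHO_i^\pi$'s. When we substitute the specific family $\pi_\epsilon$ from \eqref{eqCreative}, the samples $\RHO_i^{\pi_\epsilon}$ are of the form $\frac1q\vecone+\epsilon(\text{mean-zero perturbation indexed by a uniformly random }\omega)$. The plan is first to cast each of the two terms of $\cB$ in the form $\Erw[J(\RHO_1,\ldots,\RHO_\ell)] + (\text{lower-order in }\epsilon)$, where $J$ is exactly the fourth-order Taylor polynomial of $\Lambda\circ F$ from \eqref{eqJ} of \Lem~\ref{lem:DerivsVanish}, for appropriate multilinear functions $F$. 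Concretely, for the ``variable-node'' term take $F$ to be $\mu\mapsto\frac1{q\xi^{\gamma}}\sum_\sigma\prod_{i=1}^\gamma\sum_{\tau\in\Omega^k}\vecone\{\tau_k=\sigma\}\PSI_i(\tau)\prod_j\mu_{ki+j}(\tau_j)$ (conditioned on $\gamma$ and the $\PSI_i$), and for the ``constraint-node'' term take $F$ to be $\mu\mapsto\frac1\xi\sum_\tau\PSI_1(\tau)\prod_j\mu_j(\tau_j)$.

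\textbf{Step 1: verifying the hypotheses of \Lem~\ref{lem:DerivsVanish}.} I would check that both choices of $F$ satisfy \textbf{T1} and \textbf{T2}. Property \textbf{T1} is immediate because $F$ is multilinear in its $\ell$ arguments, so second derivatives in a single argument vanish. Property \textbf{T2}---that $DF(\bar a)$ is a multiple of $\vecone$---follows from {\bf SYM}: the linearity of $F$ in each coordinate together with \eqref{eqSYM} forces $\partial F/\partial\rho_r(c)$ evaluated at the uniform point to be independent of $c$. I would also note that $F>0$ is bounded away from zero by \eqref{eqBounded}, so $\Lambda\circ F$ is smooth near $\bar a$, and the remainder beyond fourth order is $O(\|\rho-\bar\rho\|^5)=O(\epsilon^5)$ uniformly (using the tail bounds in \eqref{eqBounded} to control the expectation of the remainder).

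\textbf{Step 2: identifying the second-derivative tensors with $\Xi$.} This is where the operator $\Xi$ enters. By {\bf SYM}, \eqref{eqPhiMatrices} and a direct computation, the mixed second derivative $\partial^2 F(\bar a)/\partial\rho_{r_1}(c_1)\partial\rho_{r_2}(c_3)$ is (up to an explicit scalar depending on $\gamma$, $k$, $\xi$) given by an entry of $\Phi_{\PSI}$ or a product thereof; taking expectations over $\PSI$ and over $\gamma\sim\Po(d)$ produces entries of $\Erw[\Phi_\PSI\tensor\Phi_\PSI]=\Xi$ for the variable-node term and of $\Phi$-type objects for the constraint term. Meanwhile the vectors $\Erw[\RHO(c_1)\RHO(c_2)]-q^{-2}$ appearing in the conclusion of \Lem~\ref{lem:DerivsVanish} are, by the definition \eqref{eqCreative} of $\pi_\epsilon$ and by \eqref{eqetaVector}, exactly $\epsilon^2$ times the components of $\eta\in\cE$ (here the averaging over the uniformly random $\omega$ turns the rank-one perturbations into the symmetric tensor $\eta$). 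Plugging these into the bilinear expression in \Lem~\ref{lem:DerivsVanish} and using $\Xi\DeltaM=\hat\lambda\DeltaM$ together with the relation between $\eta$ and $\DeltaM$ (both built from the same orthonormal $u_i$, with $\eta$'s coefficients the square roots of $\DeltaM$'s, so that $\scal{\Xi\eta}{\eta}$ and $\|\eta\|^2$-type quantities collapse to $\hat\lambda$ and $1$ respectively), the fourth-order coefficient of $\cB(d,P,\pi_\epsilon)-\cB(d,P,\pi_0)$ reduces after bookkeeping to $\frac{d(k-1)}{12}((k-1)d\hat\lambda^2-\hat\lambda)$. The two competing contributions---$(k-1)d\hat\lambda^2$ from the variable-node term (which carries a factor $d(k-1)^2$ from the two derivatives hitting two of the $\gamma$ many factors and from $\Erw[\gamma(\gamma-1)]=d^2$-type moments, reorganised) and $-\hat\lambda$ from the constraint-node term---are exactly what the claimed formula records.

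\textbf{Main obstacle.} The delicate part is the combinatorial/bookkeeping step: correctly tracking all the scalar prefactors ($q$, $\xi$, $1/\xi^\gamma$, the $1/k$, the $d(k-1)$ weighting of the second term), correctly handling the expectation over the Poisson-distributed number $\gamma$ of variable-node neighbours---so that pairs of distinct indices $r_1\neq r_2$ in \Lem~\ref{lem:DerivsVanish} get the right combinatorial weight---and verifying that the odd-order ($j=1,3$) terms and the ``diagonal'' $r_1=r_2$ contributions cancel or vanish (the former by \textbf{T2} and the mean-zero property $\pi_\epsilon\in\cP^2_*(\Omega)$, the latter by \textbf{T1}). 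One must also confirm that the first-order term in $\epsilon$ vanishes (again from $\pi_\epsilon\in\cP^2_*$, giving $\cB(d,P,\pi_\epsilon)=\cB(d,P,\pi_0)+O(\epsilon^2)$) and, more subtly, that the $\epsilon^2$ and $\epsilon^3$ coefficients vanish---this is where the specific structure of $\eta$ as a \emph{symmetric} tensor (so that linear-in-$\eta$ contributions integrate to zero against the symmetric multilinear forms, using \eqref{eqDeltaSymmetric}) and the mean-zero constraint are both essential. Once the hypotheses of \Lem~\ref{lem:DerivsVanish} are in place, the rest is a careful but mechanical expansion.
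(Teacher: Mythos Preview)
Your approach is the one the paper takes: split $\cB$ into the variable-node and constraint-node parts, apply \Lem~\ref{lem:DerivsVanish} to each after checking {\bf T1}--{\bf T2}, identify the mixed second derivatives of $F$ at $\bar a$ with entries of $\Phi_\psi$ (or products thereof), and collapse the $\epsilon^4$ coefficient using the eigenvalue relation for $\DeltaM$.

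One correction is needed in Step~2. The covariance $\Erw[(\RHO(c_1)-1/q)(\RHO(c_2)-1/q)]$ is \emph{not} $\epsilon^2$ times a component of $\eta$; the averaging over the uniform $\omega$ contracts two copies of $\eta$ and produces $q^{-1}\epsilon^2\scal{\DeltaM}{e_{c_1}\tensor e_{c_2}}$, because
\[
\sum_{\omega\in\Omega}\scal{\eta}{e_\omega\tensor e_{c_1}}\scal{\eta}{e_\omega\tensor e_{c_2}}
=\sum_{i,j}\sqrt{\bar\lambda_i\bar\lambda_j}\,\scal{u_i}{u_j}\scal{u_i}{e_{c_1}}\scal{u_j}{e_{c_2}}
=\sum_i\bar\lambda_i\scal{u_i}{e_{c_1}}\scal{u_i}{e_{c_2}}
=\scal\DeltaM{e_{c_1}\tensor e_{c_2}}.
\]
This matters: $\eta$ is not an eigenvector of $\Xi$ in general, nor is $\|\eta\|=1$, so the simplification you sketch via $\scal{\Xi\eta}\eta$ would not close. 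It is $\DeltaM$ that satisfies $\Xi\DeltaM=\hat\lambda\DeltaM$ and $\|\DeltaM\|=1$, and the covariance delivering $\DeltaM$ directly is exactly what makes the $\epsilon^4$ coefficient reduce to the claimed expression.

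A smaller remark: the vanishing of the $\epsilon^2$ and $\epsilon^3$ coefficients does not hinge on the symmetric-tensor structure of $\DeltaM$. It is already built into \Lem~\ref{lem:DerivsVanish}: the $j=2$ and $j=3$ contributions to $\Erw[J]$ vanish by {\bf T1}, {\bf T2} and the identity $\sum_c(\RHO(c)-1/q)=0$, for \emph{any} $\pi\in\cP^2_*(\Omega)$. The symmetry \eqref{eqDeltaSymmetric} is used only to merge the two summands in the conclusion of \Lem~\ref{lem:DerivsVanish} into a single $\scal{(\Phi_\psi\tensor\Phi_\psi)\DeltaM}{\DeltaM}$ when simplifying the surviving $\epsilon^4$ term.
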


\noindent
Finally, Proposition \ref{prop_KS} is immediate from \Lem~\ref{prop:TaylorBdpi}.

\begin{proof}[Proof of Proposition \ref{prop_KS}]
Due to {\bf SYM} it is straightforward to verify that	$\cB(d,P,\pi_0)=\ln q+\frac dk\ln\xi.$
Hence, if $0<d<\dc$, then
 $\cB (d,P, \pi_{\epsilon} )\leq \cB (d,P, \pi_0)$ for all small enough $\epsilon>0$ because $\pi_{\epsilon}\in\cP^2_*(\Omega)$ by \Lem~\ref{Lemma_sound}.
Therefore, \Lem~\ref{prop:TaylorBdpi} implies that
	$(k-1) d\hat\lambda^2 -\hat\lambda\leq0$.
As this bound holds for all $d<\dc$, we conclude that $(k-1)\dc\hat\lambda\leq1$, and thus the assertion follows from \Lem~\ref{Cor:MaxOpt}.
\end{proof}

\begin{remark}
A local expansion of the Bethe functional around the atom $\pi=\delta_{\bar\rho}$ on the uniform distribution was performed independently
by Guilhem Semerjian (manuscript in preparation), albeit with a different objective and without the realization that the
eigenvectors of $\Xi$ can be used to construct an explicit family of perturbations, cf.~(\ref{eqCreative}).
\end{remark}

\subsection{Proof of \Lem~\ref{Cor:MaxOpt}}\label{Sec_Cor:MaxOpt}
The canonical basis $(e_\omega)_{\omega\in\Omega}$ gives rise to the basis $(e_\sigma\tensor e_\tau)_{\sigma,\tau\in\Omega}$
of the $q^2$-dimensional space $\RR^\Omega\tensor\RR^\Omega$.
Hence, we can identify $\RR^\Omega\tensor\RR^\Omega$ with the space $\RR^{\Omega\times \Omega}$ of $q\times q$-matrices via the linear map
	$$\iota:\RR^\Omega\tensor\RR^\Omega\to\RR^{\Omega\times \Omega},\qquad
		\sum_{\sigma,\tau\in\Omega} a_{\sigma,\tau}\,e_\sigma\tensor e_\tau\mapsto
			\sum_{\sigma,\tau\in\Omega} a_{\sigma,\tau}e_\sigma e_\tau^\ast
			\qquad(a_{\sigma,\tau}\in\RR).$$
Since $\ker\iota=\{0\}$, $\iota$ is an isomorphism.
Moreover, if we equip the space $\RR^{\Omega\times\Omega}$ with the Frobenius inner product $\scal\nix\nix$, then
$\scal xy=\scal{\iota(x)}{\iota(y)}$ for all $x,y\in\RR^\Omega\tensor\RR^\Omega$.

By \Lem~\ref{Lemma_Xi} the linear operator $\Xi$ is self-adjoint and $\Xi\cE\subset\cE$.
Therefore, $\cE$ admits an orthogonal decomposition into eigenspaces of $\Xi$.
Suppose that $\lambda=\max\{|L|:L\in\Eig[\Xi]\}$ and let $\cE_\lambda\subset\cE$ be the corresponding eigenspace.
Moreover, consider the linear map defined by
	$\thet:\cE\to\cE$, $e_\sigma\tensor e_\tau\mapsto e_\tau\tensor e_\sigma$ for $\sigma,\tau\in\Omega$.
Due to the particular form (\ref{eqXi}) of $\Xi$ we have $\Xi\thet y=\thet\Xi y$ for all $y\in\cE$.
Consequently, $\thet\cE_\lambda\subset \cE_\lambda$.
Therefore, for any $z\in\cE_\lambda$ we have $\frac12(z+\thet(z))\in\cE_\lambda$.
Because $\thet^2=\id$, this means that there exists a unit vector $z\in\cE_\lambda$ such that $\thet z=z$.
Further, $\iota(z)$ is a symmetric matrix as $\thet z=z$ and $\iota(z)$ satisfies $\iota(z)\vecone=0$ and $\iota(z)x\perp\vecone$ for all $x\in\RR^\Omega$
because $z\in\cE$.
Thus,  there exists an orthonormal basis $u_1,\ldots,u_{q-1}$ of the space $\{x\in\RR^\Omega:x\perp\vecone\}$ and $w_1,\ldots,w_{q-1}\in\RR$
such that
	\begin{equation}\label{eqlem:MaxOpt1}
	\iota(z)=\sum_{i=1}^{q-1} w_i u_iu_i^*.
	\end{equation}
Since $\iota$ is an isomorphism, (\ref{eqlem:MaxOpt1}) yields the representation
	\begin{equation}\label{eqlem:MaxOpt2}
	z=\sum_{i=1}^{q-1}w_i u_i\tensor u_i.
	\end{equation}
Further, if we define $\DeltaM=\sum_{i=1}^{q-1}|w_i| u_i\tensor u_i$, then $\DeltaM\in \cE$ because $u_i\perp\vecone$ for all $i$.
Moreover, because $z$ is a unit vector and $u_1,\ldots,u_{q-1}$ are orthonormal,
	\begin{align}\label{eqlem:MaxOpt3}
	\|\DeltaM\|^2=\scal\DeltaM\DeltaM=\sum_{i,j=1}^{q-1}|w_iw_j|\scal{u_i}{u_j}^2=\sum_{i=1}^{q-1}w_i^2=\|z\|^2=1.
	\end{align}
Finally, once more due to the particular form (\ref{eqXi}) of $\Xi$, (\ref{eqlem:MaxOpt1}) yields
	\begin{align}\nonumber
	\lambda&=|\scal{\Xi z}z|=\abs{\sum_{i,j=1}^{q-1}w_iw_j\scal{\Xi u_i\tensor u_i}{u_j\tensor u_j}}
		=\abs{\sum_{i,j=1}^{q-1}w_iw_j\Erw\brk{\scal{\Phi_{\PSI} u_i}{u_j}^2}}\\
		&\leq\sum_{i,j=1}^{q-1}\abs{w_iw_j}\Erw\brk{\scal{\Phi_{\PSI} u_i}{u_j}^2}=
			\sum_{i,j=1}^{q-1}\abs{w_iw_j}\scal{\Xi u_i\tensor u_i}{u_j\tensor u_j}=\scal{\Xi\DeltaM}\DeltaM.
				\label{eqlem:MaxOpt4}
	\end{align}
Combining (\ref{eqlem:MaxOpt3}) and (\ref{eqlem:MaxOpt4}), we thus see that $\DeltaM$ is a unit vector with 
 $\scal{\Xi\DeltaM}\DeltaM=\lambda=\max\{|\scal{\Xi y}y|:y\in\cE,\|y\|=1\}$, as desired.
 
\subsection{Proof of \Lem~\ref{lem:DerivsVanish}}\label{Sec_lem:DerivsVanish}
We recall the following well-known generalization of the chain rule.

\begin{fact}[\Faadi 's formula]\label{Fact_Faadi}Suppose that $F:(\RR^\Omega)^j\to\infty$ has $j\ge1$ continuous derivatives. Let $\Pi(j)$ be the set of all partitions of $[j]$, denote by $|\Upsilon|$ the cardinality of a partition $\Upsilon\in\Pi(j)$ and similarly let $|B|$ denote the cardinality of a set $B\in \Upsilon$ in the partition $\Upsilon$. Then
\begin{equation}
\frac{\partial^j \Lambda (F(x_1,\ldots,x_j))}{\partial x_1 \dots \partial x_j} = \sum_{\Upsilon \in \Pi (j)} \Lambda^{(|\Upsilon|)}(F(x_1,\ldots,x_j)) \prod_{B \in \Upsilon} \frac{\partial ^{|B|}F(x_1,\ldots,x_j)}{\prod_{i \in B}\partial x_i}. \label{eq:FaadiBrunoGeneral}
\end{equation}
\end{fact}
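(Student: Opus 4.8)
The plan is to establish the identity \eqref{eq:FaadiBrunoGeneral} by induction on the number $j$ of variables; this is the natural route because on the left-hand side every one of the variables is differentiated exactly once, so the induction step only has to absorb one additional first-order derivative. For the base case $j=1$ the set $\Pi(1)$ consists of the single partition $\{\{1\}\}$, whose only block has cardinality $1$, so the right-hand side of \eqref{eq:FaadiBrunoGeneral} is $\Lambda^{(1)}(F(x_1))\cdot\partial F(x_1)/\partial x_1$; this is the ordinary chain rule, so the base case holds.

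For the inductive step I would assume \eqref{eq:FaadiBrunoGeneral} for $j$ variables, apply $\partial/\partial x_{j+1}$ to both sides, and expand the right-hand side with the Leibniz product rule. A single summand $\Lambda^{(|\Upsilon|)}(F)\prod_{B\in\Upsilon}\partial^{|B|}F/\prod_{i\in B}\partial x_i$ then contributes $|\Upsilon|+1$ new terms: one in which $\Lambda^{(|\Upsilon|)}(F)$ is differentiated, yielding the factor $\Lambda^{(|\Upsilon|+1)}(F)$ times $\partial F/\partial x_{j+1}$ times the old product; and, for each block $B\in\Upsilon$, one in which the factor attached to $B$ picks up the extra derivative, becoming $\partial^{|B|+1}F/(\partial x_{j+1}\prod_{i\in B}\partial x_i)$ while $\Lambda^{(|\Upsilon|)}(F)$ is unchanged. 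The first kind of term is exactly the summand of \eqref{eq:FaadiBrunoGeneral} (for $j+1$ variables) indexed by the partition obtained from $\Upsilon$ by adjoining $\{j+1\}$ as a new singleton block, and the second kind is the summand indexed by the partition obtained from $\Upsilon$ by inserting $j+1$ into $B$. The combinatorial heart of the argument is the observation that the assignment $(\Upsilon,\text{insertion choice})\mapsto\text{partition of }[j+1]$ is a bijection onto $\Pi(j+1)$: any $\Upsilon'\in\Pi(j+1)$ restricts to a partition of $[j]$ upon deleting $j+1$, and from that restriction $\Upsilon'$ is uniquely recovered according to whether $\{j+1\}$ was a block of $\Upsilon'$ or not. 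Summing the differentiated right-hand side over this bijection gives $\sum_{\Upsilon'\in\Pi(j+1)}\Lambda^{(|\Upsilon'|)}(F)\prod_{B'\in\Upsilon'}\partial^{|B'|}F/\prod_{i\in B'}\partial x_i$, which is the claimed formula for $j+1$ variables, completing the induction. Regularity causes no trouble: $\Lambda(z)=z\ln z$ is smooth on $(0,\infty)$ and $F$ has $j$ continuous derivatives with values in $(0,\infty)$, so every partial appearing above exists and the order of differentiation is immaterial by Schwarz's theorem.

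The only step that is not purely mechanical is verifying this bijection and, crucially, checking that the index of the $\Lambda$-derivative is tracked correctly under it, i.e. that $|\Upsilon'|=|\Upsilon|+1$ exactly when $j+1$ is added as a singleton and $|\Upsilon'|=|\Upsilon|$ otherwise — the rest of the computation is then forced. I would isolate this as a one-line remark on set partitions to keep the induction transparent. (An alternative would be to simply cite the multivariate \Faadi\ formula from the literature, but only this ``all first partials distinct'' special case is needed, and the self-contained induction above is both short and easy to check.)
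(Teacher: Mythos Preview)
Your inductive argument is correct and is the standard way to establish the multivariate \Faadi\ formula in the case where each variable is differentiated exactly once. The paper, however, does not prove this statement at all: it is presented as a ``Fact'' with the words ``We recall the following well-known generalization of the chain rule'' and is used as a black box in the subsequent computations. So there is no approach to compare against---you have supplied a self-contained proof where the paper simply cites the result.
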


\noindent
For $r\in[\ell]^j$ and $c\in\Omega^j$ let
	\begin{align*}
	\cJ_{r,c}&=\frac{\partial^j\,\Lambda\circ F}{\partial\rho_{r_1}(c_1)\cdots\partial\rho_{r_j}(c_j)}(\bar a)
			\cdot\Erw\brk{\prod_{h=1}^j(\RHO_{r_h}(c_h)-1/q)}.
	\end{align*}
Because $\RHO_1,\ldots,\RHO_\ell$ are mutually independent with mean $\bar\rho$, we have 
$\cJ_{r,c}=0$ unless for each $i\in[j]$ there is $h\in[j]\setminus\{i\}$ such that $r_i=r_h$.
Hence, setting $R_j =\{ r \in [\ell]^j \;:\; \forall i\in[j]\exists h\in[j]\setminus\{i\}:r_i=r_h\}$, we see that
	\begin{equation}\label{eqMeanZero}
	\cJ_j=\sum_{r\in[\ell]^j,c\in\Omega^j}\cJ_{r,c}=\sum_{r\in R_j,c\in\Omega^j}\cJ_{r,c}.
	\end{equation}
In particular, (\ref{eqMeanZero}) implies
	\begin{equation}\label{eqMeanZero1}
	\cJ_1=0.
	\end{equation}

Proceeding to $j=2$, we apply Fact~\ref{Fact_Faadi} to obtain
	\begin{align}\label{eqMeanZero2}
	\frac{\partial^2\Lambda\circ F}{\partial\rho_{r_1}(c_1)\partial\rho_{r_2}(c_2)}(\bar a)
		&=\Lambda''(F(\bar a))\frac{\partial F}{\partial\rho_{r_1}(c_1)}\frac{\partial F}{\partial\rho_{r_2}(c_2)}(\bar a)
			+\Lambda'(F(\bar a))\frac{\partial^2 F}{\partial\rho_{r_1}(c_1)\partial\rho_{r_2}(c_2)}(\bar a).
	\end{align}
Since $R_2=\{(r,r):r\in[\ell]\}$, {\bf T1} and (\ref{eqMeanZero2}) entail that
	\begin{align}
	\cJ_2&=\Lambda''(F(\bar a))\sum_{r=1}^\ell\sum_{c_1,c_2\in\Omega}
			\frac{\partial F(\bar a)}{\partial\rho_{r}(c_1)}\frac{\partial F(\bar a)}{\partial\rho_{r}(c_2)}
				\Erw\brk{(\RHO_r(c_1)-1/q)(\RHO_r(c_2)-1/q)}\nonumber\\
		&=C_0^2\Lambda''(F(\bar a))\ell\cdot\Erw\brk{\sum_{c_1,c_2\in\Omega}(\RHO(c_1)-1/q)(\RHO(c_2)-1/q)}
			&&\mbox{[due to {\bf T2}]}\nonumber\\
		&=C_0^2\Lambda''(F(\bar a))\ell\cdot\Erw\brk{\bc{\sum_{c\in\Omega}(\RHO(c)-1/q)}^2}
			=0&&\mbox{[as $\textstyle \sum_{c\in\Omega}\RHO(c)=1$]}.
			\label{eqMeanZero3}
	\end{align}

Moving on to $\cJ_3$, we observe that $R_3=\{(r,r,r):r\in[\ell]\}$.
Moreover, Fact~\ref{Fact_Faadi} yields
	\begin{align*}
	\frac{\partial^3 \Lambda\circ F}{\partial \rho_r(c_1) \partial \rho_r(c_2) \partial \rho_r(c_3)}& =
		 \Lambda'(F(\bar a)) \frac{\partial^3 F}{\partial \rho_r(c_1) \partial \rho_r(c_2) \partial \rho_r(c_3)}\\
		 	&\quad + \Lambda''(F(\bar a))\left( \frac{\partial F}{\partial \rho_r(c_1)} \frac{\partial^2 F}{ \partial \rho_r(c_2) \partial \rho_r(c_3)} + 
				\frac{\partial F}{\partial \rho_r(c_2)} \frac{\partial^2 F}{ \partial \rho_r(c_1) \partial \rho_r(c_3)}+
				\frac{\partial F}{\partial \rho_r(c_3)} \frac{\partial^2 F}{ \partial \rho_r(c_1) \partial \rho_r(c_2)} \right)\\
				&\quad+\Lambda'''(F(\bar a)) \frac{\partial F}{\partial \rho_r(c_1)} \frac{\partial F}{ \partial \rho_r(c_2)}\frac{\partial F}{ \partial \rho_r(c_3)}\\
		&=\Lambda'''(F(\bar a)) \frac{\partial F}{\partial \rho_r(c_1)} \frac{\partial F}{ \partial \rho_r(c_2)}\frac{\partial F}{ \partial \rho_r(c_3)}
		\qquad\mbox{[due to {\bf T1}]}.
	\end{align*}
Hence, {\bf T2} yields
	\begin{align}\nonumber
	\cJ_3&=\Lambda'''(F(\bar a))\sum_{r\in[\ell],c_1,c_2,c_3\in\Omega}
		\frac{\partial F(\bar a)}{\partial \rho_r(c_1)} \frac{\partial F(\bar a)}{ \partial \rho_r(c_2)}\frac{\partial F(\bar a)}{ \partial \rho_r(c_3)}
			\Erw\brk{\prod_{h=1}^3(\RHO(c_h)-1/q)}\\
		&=\ell C_0^3\Lambda'''(F(\bar a))\cdot\Erw\brk{
		\bc{\sum_{c\in\Omega}(\RHO(c)-1/q)}^3}=0	&		\mbox{[as $\textstyle \sum_{c\in\Omega}\RHO(c)=1$]}.
			\label{eqMeanZero4}
	\end{align}

Finally, we come to $\cJ_4$.
Fact \ref{Fact_Faadi} yields
	\begin{align}
	\frac{\partial^4 \Lambda\circ F}{\partial \rho_{r_1}(c_1)\cdots\partial \rho_{r_4}(c_4)}
		&=\Lambda'(F(\bar a))\frac{\partial^4 F}{\partial \rho_{r_1}(c_1)\cdots\partial \rho_{r_4}(c_4)}\nonumber\\
		&\quad+\Lambda''(F(\bar a))\sum_{i\in [4]}\frac{\partial F}{\partial\rho_{r_{i}}(c_{i})}
			\frac{\partial^3 F}{\prod_{j \in [4]\setminus \{i \}}\partial\rho_{r_{j}}(c_{j})}\nonumber\\
		&\quad+\Lambda''(F(\bar a))\sum_{i,j \in [4], i<j}
			\frac{\partial^2 F}{\partial\rho_{r_{i}}(c_{i})\partial\rho_{r_{j}}(c_{j})}
			\frac{\partial^2 F}{\prod_{\ell \in [4]\setminus \{i,j\} }\partial\rho_{r_{\ell}}(c_{\ell})}\nonumber\\
		&\quad+\Lambda'''(F(\bar a))
			\sum_{i,j \in [4], i<j}
			\frac{\partial F}{\partial\rho_{r_{i}}(c_{i})}
			\frac{\partial F}{\partial\rho_{r_{j}}(c_{j})}
			\frac{\partial^2 F}{\prod_{\ell \in [4]\setminus \{i,j\} }\partial\rho_{r_{\ell}}(c_{\ell})}\nonumber\\
		&\quad+\Lambda''''(F(\bar a))			
			\frac{\partial F}{\partial\rho_{r_{1}}(c_{1})}
			\frac{\partial F}{\partial\rho_{r_{2}}(c_{2})}
			\frac{\partial F}{\partial\rho_{r_{3}}(c_{3})}
			\frac{\partial F}{\partial\rho_{r_{4}}(c_{4})}.\label{eqMeanZero14}
	\end{align}
Since $R_4=\{(r_1,r_2,r_3,r_4)\in[\ell]^4:|\{r_1,r_2,r_3,r_4\}|\leq2\}$, {\bf T1} implies that
	\begin{align}
	\frac{\partial^4 F}{\partial \rho_{r_1}(c_1)\cdots\partial \rho_{r_4}(c_4)} &= 0 \qquad\mbox{and}\qquad
		\frac{\partial^3 F}{\prod_{j \in [4]\setminus \{i \}}\partial\rho_{r_{j}}(c_{j})}=0
	& \mbox{for all }r\in R_4, i \in [4].\label{eqMeanZero15}
	\end{align}
Moreover, similarly as before {\bf T2} implies
	\begin{align}\nonumber
	\Lambda''''(F(\bar a))&\sum_{r\in R_4,c\in\Omega^4}
			\frac{\partial F}{\partial\rho_{r_{1}}(c_{1})}
			\frac{\partial F}{\partial\rho_{r_{2}}(c_{2})}
			\frac{\partial F}{\partial\rho_{r_{3}}(c_{3})}
			\frac{\partial F}{\partial\rho_{r_{4}}(c_{4})}\Erw\brk{\prod_{h=1}^4(\RHO_{r_h}(c_h)-1/q)}\\
			&=C_0^4\Lambda''''(F(\bar a))\sum_{r\in R_4}\Erw\brk{\prod_{h=1}^4\bc{\sum_{c\in\Omega}\RHO_{r_h}(c)-1/q)}}=0 
			& \mbox{[as $\textstyle \sum_{c\in\Omega}\RHO(c)=1$]}.\label{eqMeanZero16}
	\end{align}
Analogously, once more by {\bf T2}
	\begin{align}
	\Lambda'''(F(\bar a))&
			\sum_{r\in R_4,c\in\Omega^4}
			\sum_{i,j \in [4], i<j}
			\frac{\partial F}{\partial\rho_{r_{i}}(c_{i})}
			\frac{\partial F}{\partial\rho_{r_{j}}(c_{j})}
			\frac{\partial^2 F}{\prod_{\ell \in [4]\setminus \{i,j\} }\partial\rho_{r_{\ell}}(c_{\ell})}
				\Erw\brk{\prod_{h=1}^4(\RHO_{r_h}(c_h)-1/q)}\nonumber\\
			&=C_0^2\Lambda'''(F(\bar a))
					\sum_{i,j \in [4], i<j}\;\sum_{r\in R_4,c\in\Omega^4}
			\frac{\partial^2 F}{\prod_{\ell \in [4]\setminus \{i,j\} }\partial\rho_{r_{\ell}}(c_{\ell})}
				\Erw\brk{\prod_{h=1}^4(\RHO_{r_h}(c_h)-1/q)}\nonumber\\
			&=C_0^2\Lambda'''(F(\bar a))\sum_{i,j \in [4], i<j }\;\sum_{r\in R_4,c_{i_3},c_{i_4}\in\Omega} 
				\frac{\partial^2 F}{\prod_{\ell \in [4]\setminus \{i,j\} }\partial\rho_{r_{\ell}}(c_{\ell})}
				\Erw\brk{\prod_{h=1}^2\bc{\sum_{c\in \Omega}(\RHO_{r_h}(c)-1/q)}\prod_{h=3}^4(\RHO_{r_h}(c_h)-1/q)}\nonumber\\&=0.
					\label{eqMeanZero16}
	\end{align}
Thus, combining (\ref{eqMeanZero14})--(\ref{eqMeanZero16}), we obtain
	\begin{align}\nonumber
	\cJ_4&=\Lambda''(F(\bar a))\sum_{r\in R_4,c\in\Omega^4}
			\sum_{i,j \in [4], i<j}
			\frac{\partial^2 F}{\partial\rho_{r_{i}}(c_{i})\partial\rho_{r_{j}}(c_{j})}
			\frac{\partial^2 F}{\prod_{\ell \in [4]\setminus \{i,j\} }\partial\rho_{r_{\ell}}(c_{\ell})}\Erw\brk{\prod_{h=1}^4(\RHO_{r_h}(c_h)-1/q)}\\
		&=\Lambda''(F(\bar a))\sum_{r_1\neq r_2\in[\ell],c\in\Omega^4}
		\bc{\frac{\partial^2F}{\partial\rho_{r_1}(c_1)\partial\rho_{r_2}(c_3)}\frac{\partial^2F}{\partial\rho_{r_1}(c_2)\partial\rho_{r_2}(c_4)}
		+\frac{\partial^2F}{\partial\rho_{r_1}(c_2)\partial\rho_{r_2}(c_3)}\frac{\partial^2F}{\partial\rho_{r_1}(c_1)\partial\rho_{r_2}(c_4)}}\nonumber\\
		&\qquad\qquad\qquad\qquad\qquad\qquad\qquad\cdot
			\left(\Erw\left[ \RHO(c_1) \RHO (c_2) \right]- q^{-2}\right) \left( \Erw\left[ \RHO (c_3) \RHO (c_4) \right] - q^{-2}\right)
				\qquad\qquad\qquad\mbox{[due to {\bf T1}]}.
					\label{eqMeanZero20}
	\end{align}
Since $\Erw[J(\RHO_1,\ldots,\RHO_\ell)]=\sum_{j=1}^4\frac1{j!}\cJ_j$ and $\Lambda''(x)=1/x$, the assertion follows from (\ref{eqMeanZero1}), (\ref{eqMeanZero3}), (\ref{eqMeanZero4}) and~(\ref{eqMeanZero20}).

\subsection{Proof of \Lem~\ref{prop:TaylorBdpi}}\label{Sec_prop:TaylorBdpi}
Recall that $\hat \lambda= \max_{\lambda\in\Eig[\Xi]}|\lambda|$, that $\DeltaM\in\cE$ is an eigenvector of $\Xi$ with eigenvalue $\hat\lambda$,
and that $\eta$ is the vector defined by \eqref{eqetaVector}.
We tacitly assume that $\epsilon$ is small enough so that $\pi_\epsilon\in\cP_2^*(\Omega)$ (cf.\ \Lem~\ref{Lemma_sound})
and we denote by $\RHO,\RHO_1,\RHO_2,\ldots$ independent samples from $\pi_\epsilon$.
Hence, for any function $X:(\RR^{\Omega})^{\ell}\to\RR$ the expectation $\Erw[X(\RHO_1,\ldots,\RHO_\ell)]$ can be viewed as a function of $\epsilon$.
Further, since $\pi_\epsilon$ is the uniform distribution on the distributions $\pi_{\epsilon,\omega}$ from (\ref{eqCreative}), which are atoms,
the function $\epsilon\mapsto\Erw[X(\RHO_1,\ldots,\RHO_\ell)]$ has the same continuity as $X$.

Ultimately we are going to expand the function $\epsilon\mapsto\cB(d,P,\pi_\epsilon)$ to the fourth order.
But first we need a few preparations.
First we observe that $\DeltaM$ encodes the covariance matrix of the random vector $(\RHO(\omega))_{\omega\in\Omega}$.

\begin{claim}\label{Fact_EtaEtaT}
We have $\Erw[ \mathbold{\rho}(c_1) -q^{-1}]=0$ and
	$\Erw[(\mathbold{\rho}(c_1) - q^{-1})( \mathbold{\rho} (c_2) -q^{-1})]
		=q^{-1}  \epsilon^2\scal{\DeltaM}{e_{c_1}\tensor e_{c_2}}$ 		 for all $c_1,c_2 \in \Omega$. 
\end{claim}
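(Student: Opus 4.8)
The plan is to compute both moments directly from the explicit description of $\pi_\epsilon$ as the uniform mixture $\frac1q\sum_{\omega\in\Omega}\delta_{\pi_{\epsilon,\omega}}$ of the atoms from \eqref{eqCreative}, using only the orthonormality of $u_1,\ldots,u_{q-1}$ and the fact that $\eta\in\cE$ (cf.\ \eqref{eqetaVector}). Nothing beyond bookkeeping is involved.

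First I would establish the first-moment identity. Since $\RHO$ takes the value $\pi_{\epsilon,\omega}$ with probability $1/q$ for each $\omega\in\Omega$, we get $\Erw[\RHO(c_1)] = \frac1q\sum_{\omega\in\Omega}\pi_{\epsilon,\omega}(c_1) = \frac1q\sum_{\omega\in\Omega}\bc{\frac1q+\epsilon\scal{\eta}{e_\omega\tensor e_{c_1}}} = \frac1q + \frac\epsilon q\scal{\eta}{\vecone\tensor e_{c_1}} = \frac1q$, where the last step uses $\eta\in\cE$ (so $\scal{\eta}{\vecone\tensor e_{c_1}}=0$ by \eqref{eqSpaceE}). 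Equivalently this just restates $\pi_\epsilon\in\cP^2_*(\Omega)$ from \Lem~\ref{Lemma_sound}.

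For the covariance, the key point is that squaring $\eta$ entrywise reproduces $\DeltaM$. Concretely, writing $\eta=\sum_{i=1}^{q-1}\sqrt{\bar\lambda_i}\,u_i\tensor u_i$ as in \eqref{eqetaVector} gives $\scal{\eta}{e_\omega\tensor e_c}=\sum_{i=1}^{q-1}\sqrt{\bar\lambda_i}\,u_i(\omega)u_i(c)$ for all $\omega,c\in\Omega$, and $\pi_{\epsilon,\omega}(c)-q^{-1}=\epsilon\scal{\eta}{e_\omega\tensor e_c}$. Hence I would compute $\Erw[(\RHO(c_1)-q^{-1})(\RHO(c_2)-q^{-1})] = \frac1q\sum_{\omega\in\Omega}\epsilon^2\scal{\eta}{e_\omega\tensor e_{c_1}}\scal{\eta}{e_\omega\tensor e_{c_2}} = \frac{\epsilon^2}{q}\sum_{i,j=1}^{q-1}\sqrt{\bar\lambda_i\bar\lambda_j}\,u_i(c_1)u_j(c_2)\sum_{\omega\in\Omega}u_i(\omega)u_j(\omega)$. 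Since $\sum_{\omega\in\Omega}u_i(\omega)u_j(\omega)=\scal{u_i}{u_j}=\vecone\{i=j\}$ by orthonormality, the double sum collapses to $\sum_{i=1}^{q-1}\bar\lambda_i\,u_i(c_1)u_i(c_2)=\scal{\DeltaM}{e_{c_1}\tensor e_{c_2}}$ by the definition \eqref{eqDeltaVector} of $\DeltaM$, which is exactly the claimed value $q^{-1}\epsilon^2\scal{\DeltaM}{e_{c_1}\tensor e_{c_2}}$.

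There is no real obstacle here: the statement is an algebraic identity, and the only thing to be careful about is tracking the $\sqrt{\bar\lambda_i}$ coefficients in $\eta$ versus the $\bar\lambda_i$ coefficients in $\DeltaM$ and invoking orthonormality of the $u_i$ at the right moment. I would simply present the two displays above, flagging that the first moment uses $\eta\in\cE$ and the second uses the definitions of $\eta$ and $\DeltaM$.
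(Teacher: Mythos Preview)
Your proposal is correct and follows essentially the same approach as the paper: for the first moment you invoke $\eta\in\cE$ (equivalently $\pi_\epsilon\in\cP^2_*(\Omega)$ from \Lem~\ref{Lemma_sound}), and for the covariance you expand $\scal{\eta}{e_\omega\tensor e_c}$ via \eqref{eqetaVector}, sum over $\omega$, and use orthonormality of the $u_i$ to collapse $\sqrt{\bar\lambda_i\bar\lambda_j}$ to $\bar\lambda_i$, recovering $\DeltaM$ from \eqref{eqDeltaVector}. The paper's proof is line-for-line the same computation.
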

\begin{proof}
The first assertion follows from \Lem~\ref{Lemma_sound}, which shows that $\pi_\epsilon\in\cP^2_*(\Omega)$.
Moreover, because the vectors $u_1,\ldots,u_{q-1}\in\cE$ from (\ref{eqetaVector}) are orthonormal, (\ref{eqDeltaVector}) and (\ref{eqCreative}) yield
	  \begin{align*}
	q\epsilon^{-2}\Erw&\left[ \left(\mathbold{\rho}(c_1) - q^{-1} \right) \left( \mathbold{\rho} (c_2) -q^{-1} \right) \right]
	 =\sum_{\omega \in \Omega} \scal{\eta}{e_\omega\tensor e_{c_1}}\scal{\eta}{e_\omega\tensor e_{c_2}}\\
	 &=\sum_{i,j=1}^{q-1}\sqrt{\bar\lambda_i\bar\lambda_j}\sum_{\omega\in\Omega}
	 	\scal{u_i\tensor u_i}{e_\omega\tensor e_{c_1}}	\scal{u_j\tensor u_j}{e_\omega\tensor e_{c_2}}
	=\sum_{i,j=1}^{q-1}\sqrt{\bar\lambda_i\bar\lambda_j}\scal{u_i}{e_{c_1}}\scal{u_j}{e_{c_2}}\sum_{\omega\in\Omega}
	 	\scal{u_i}{e_\omega}	\scal{u_j}{e_\omega}\\
	&=\sum_{i,j=1}^{q-1}\sqrt{\bar\lambda_i\bar\lambda_j}\scal{u_i}{e_{c_1}}\scal{u_j}{e_{c_2}}
	 	\scal{u_i}{u_j}
		=\sum_{i=1}^{q-1}\lambda_i\scal{u_i\tensor u_i}{e_{c_1}\tensor e_{c_2}}
		=\scal\DeltaM{e_{c_1}\tensor e_{c_2}},
\end{align*}
as claimed.
\end{proof}

\noindent
Additionally, we need the following algebraic relation.

\begin{claim}\label{Claim_deconstruct}
For any $\psi\in\Psi$ we have
$\langle \left( \Phipsi \otimes \Phipsi \right)  \DeltaM, \DeltaM \rangle = \sum_{c \in \Omega^4}  \Phipsi(c_1,c_3)  \Phipsi(c_2,c_4) 
	\scal{\DeltaM}{e_{c_1}\tensor e_{c_2}}\scal{\DeltaM}{e_{c_3}\tensor e_{c_4}}$. 
\end{claim}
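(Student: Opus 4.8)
The plan is to prove Claim~\ref{Claim_deconstruct} by a direct computation in the canonical orthonormal basis $(e_\sigma\tensor e_\tau)_{\sigma,\tau\in\Omega}$ of $\RR^\Omega\tensor\RR^\Omega$. First I would expand $\DeltaM$ in this basis, writing $\DeltaM=\sum_{s,t\in\Omega}\DeltaM_{s,t}\,e_s\tensor e_t$ with $\DeltaM_{s,t}=\scal{\DeltaM}{e_s\tensor e_t}$, which is legitimate because the inner product at play is the tensor-product (equivalently, Frobenius) one, so that $(e_\sigma\tensor e_\tau)_{\sigma,\tau}$ is orthonormal and, for a linear map $\Phipsi$ on $\RR^\Omega$, one has $(\Phipsi\tensor\Phipsi)(x\tensor y)=(\Phipsi x)\tensor(\Phipsi y)$. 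Using the definition of the matrix entries, $\Phipsi e_s=\sum_{u\in\Omega}\Phipsi(u,s)e_u$, and applying $\Phipsi\tensor\Phipsi$ termwise yields
	\begin{align*}
	(\Phipsi\tensor\Phipsi)\DeltaM=\sum_{s,t,u,v\in\Omega}\DeltaM_{s,t}\,\Phipsi(u,s)\,\Phipsi(v,t)\,e_u\tensor e_v.
	\end{align*}

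Next I would pair this with $\DeltaM$ and read off $\scal{e_u\tensor e_v}{\DeltaM}=\DeltaM_{u,v}$ from orthonormality, obtaining
	\begin{align*}
	\scal{(\Phipsi\tensor\Phipsi)\DeltaM}{\DeltaM}=\sum_{s,t,u,v\in\Omega}\DeltaM_{s,t}\,\DeltaM_{u,v}\,\Phipsi(u,s)\,\Phipsi(v,t).
	\end{align*}
The final step is purely cosmetic: relabelling the indices via $c_1=u$, $c_2=v$, $c_3=s$, $c_4=t$ turns the right-hand side into $\sum_{c\in\Omega^4}\Phipsi(c_1,c_3)\,\Phipsi(c_2,c_4)\,\DeltaM_{c_1,c_2}\,\DeltaM_{c_3,c_4}$, which is exactly the asserted formula once we recall that $\DeltaM_{c_i,c_j}=\scal{\DeltaM}{e_{c_i}\tensor e_{c_j}}$.

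There is essentially no genuine obstacle here, and the identity in fact holds for an arbitrary $\DeltaM\in\RR^\Omega\tensor\RR^\Omega$, making no use of {\bf SYM}, {\bf BAL}, or the special structure of $\DeltaM$ from \Lem~\ref{Cor:MaxOpt}; the only point deserving a sentence of care is the bookkeeping of the tensor-product conventions (that $\langle\,\cdot\,,\,\cdot\,\rangle$ on $\RR^\Omega\tensor\RR^\Omega$ is the product inner product, so that $(e_\sigma\tensor e_\tau)$ is orthonormal and $(\Phipsi\tensor\Phipsi)$ acts as stated), which is already implicit in the identification $\iota$ used in \Sec~\ref{Sec_Cor:MaxOpt}. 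With that fixed, the computation above is the whole proof.
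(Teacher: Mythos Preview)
Your proof is correct. It is also more direct than the paper's own argument: the paper expands $\DeltaM=\sum_{i}\bar\lambda_i u_i\tensor u_i$ using the spectral representation from \Lem~\ref{Cor:MaxOpt}, computes $\scal{(\Phipsi\tensor\Phipsi)\DeltaM}{\DeltaM}=\sum_{i,j}\bar\lambda_i\bar\lambda_j\scal{\Phipsi u_i}{u_j}^2$, then laboriously reassembles the four-index sum by writing each $\scal{\Phipsi u_i}{u_j}$ in the canonical basis and collapsing $\sum_i\bar\lambda_i\scal{u_i}{e_{c_3}}\scal{u_i}{e_{c_4}}$ back into $\scal{\DeltaM}{e_{c_3}\tensor e_{c_4}}$. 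Your route bypasses this detour entirely by expanding $\DeltaM$ in the canonical basis from the start, and, as you note, shows that the identity holds for any $\DeltaM\in\RR^\Omega\tensor\RR^\Omega$ and any linear map $\Phipsi$, with no reliance on the special structure of $\DeltaM$ or on {\bf SYM}/{\bf BAL}. The paper's approach buys nothing extra here; yours is simply cleaner.
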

\begin{proof}
Since $\DeltaM=\sum_{i\in\Omega}\bar\lambda_iu_i\tensor u_i$  we have
	\begin{align*}
	\scal{(\Phipsi\tensor\Phipsi)\DeltaM}\DeltaM&=\sum_{i,j\in\Omega}\bar\lambda_i\bar\lambda_j\scal{(\Phipsi\tensor\Phipsi)(u_i\tensor u_i)}{(u_j\tensor u_j)}
			=\sum_{i,j\in\Omega}\bar\lambda_i\bar\lambda_j\scal{\Phipsi u_i}{u_j}^2\\
		&=\sum_{i,j\in\Omega}\bar\lambda_i\bar\lambda_j\bc{\sum_{c\in\Omega}\scal{\Phipsi u_i}{e_c}\scal{u_{j}}{e_c}}^2
			=\sum_{i,j\in\Omega}\bar\lambda_i\bar\lambda_j\bc{\sum_{c,c'\in\Omega}\Phi_{\psi}(c,c')\scal{u_{i}}{e_{c'}}\scal{u_{j}}{e_c}}^2\\
		&=\sum_{i,j\in\Omega}\sum_{c\in\Omega^4}\bar\lambda_i\bar\lambda_j
			\Phi_{\psi}(c_1,c_3)\Phi_{\psi}(c_2,c_4)\scal{u_j}{e_{c_1}}\scal{u_j}{e_{c_2}}\scal{u_i}{e_{c_3}}\scal{u_i}{e_{c_4}}\\
		&=\sum_{c\in\Omega^4}\Phi_{\psi}(c_1,c_3)\Phi_{\psi}(c_2,c_4)
			\bc{\sum_{j\in\Omega}\bar\lambda_j\scal{u_j\tensor u_j}{e_{c_1}\tensor e_{c_2}}}
			\bc{\sum_{i\in\Omega}\bar\lambda_i\scal{u_i\tensor u_i}{e_{c_3}\tensor e_{c_4}}}\\
	&=\sum_{c\in\Omega^4}\Phi_{\psi}(c_1,c_3)\Phi_{\psi}(c_2,c_4)\scal{\DeltaM}{e_{c_1}\tensor e_{c_2}}\scal{\DeltaM}{e_{c_3}\tensor e_{c_4}},
	\end{align*}
as claimed.
\end{proof}

We proceed to expand $\epsilon\mapsto\cB(d,P,\pi_\epsilon)$.
For $\psi,\psi_1,\ldots,\psi_\gamma\in\Psi$ let
	\begin{align*}
	B_1(\psi_1,\ldots,\psi_\gamma)&=
		\Erw\left[ \Lambda\left(\sum_{h\in[q]}\prod_{j=1}^\gamma\sum_{\tau\in\Omega^k}\vecone\{\tau_k=h\}\psi_j(\tau)\prod_{i=1}^{k-1}\vec{\rho}_{k(j-1)+i}(\tau_i)\right) \right],&
	B_2(\psi)&=\Erw\left[  \Lambda\left(\sum_{\tau\in\Omega^k}\psi(\tau)\prod_{i=1}^k \vec\rho_i(\tau_i)\right)\right].
	\end{align*}
Then with $\PSI,\PSI_1,\PSI_2,\ldots$ chosen independently from $P$,
	\begin{align}\label{eqBfinal}
	\cB(d,P,\pi_\epsilon)&=\frac1q\Erw\brk{\xi^{-\GAMMA}B_1(\PSI_1,\ldots,\PSI_\GAMMA)}
		-\frac{d(k-1)}{k\xi}\Erw\brk{B_2(\PSI)}
	\end{align}
and we shall derive the approximations to both summands separately, using \Lem~\ref{lem:DerivsVanish} in either case.

\begin{claim}\label{Claim_eqB1final}
We have
	\begin{align}\label{eqB1final}
	\Erw\brk{q^{-1}\xi^{-\GAMMA}B_1(\PSI_1,\ldots,\PSI_\GAMMA)}&=
		\ln q+d\ln\xi+\frac{\eps^4d(k-1)}{12}\brk{(k-2)\scal{\Xi\DeltaM}\DeltaM+d(k-1)\scal{\Xi^2\DeltaM}\DeltaM}+O(\eps^5).
	\end{align}
\end{claim}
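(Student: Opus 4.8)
The plan is to apply \Lem~\ref{lem:DerivsVanish} for each fixed value $\GAMMA=\gamma$ and fixed weight functions $\psi_1,\dots,\psi_\gamma$, and then average over the i.i.d.\ samples $\PSI_1,\PSI_2,\dots$ from $P$ and over $\GAMMA\sim\Po(d)$. Concretely, set $\ell=(k-1)\gamma$, index the arguments by pairs $(j,i)$ with $j\in[\gamma]$ and $i\in[k-1]$, and let $F=F_{\gamma,\psi_1,\dots,\psi_\gamma}:\cP(\Omega)^\ell\to(0,\infty)$ be the function appearing inside $\Lambda$ in the definition of $B_1$, i.e.\ $F=\sum_{h\in[q]}\prod_{j=1}^\gamma g_j^{(h)}$ with $g_j^{(h)}(\rho)=\sum_{\tau\in\Omega^k}\vecone\{\tau_k=h\}\psi_j(\tau)\prod_{i=1}^{k-1}\rho_{(j,i)}(\tau_i)$. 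Each $g_j^{(h)}$, and hence $F$, is linear in every single argument $\rho_{(j,i)}$, so condition \textbf{T1} of \Lem~\ref{lem:DerivsVanish} holds automatically. Using \textbf{SYM} in the form \eqref{eqSYM} one checks that $g_j^{(h)}(\bar\rho,\dots,\bar\rho)=\xi$, whence $F(\bar a)=q\xi^\gamma$, and that every first partial derivative of $F$ at $\bar a$ equals $q\xi^\gamma$, so \textbf{T2} holds with $C_0=q\xi^\gamma$. In particular $q^{-1}\xi^{-\gamma}\Lambda(F(\bar a))=\ln q+\gamma\ln\xi$, which contributes the leading term $\ln q+d\ln\xi$ after averaging over $\GAMMA$.

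Next I would Taylor-expand $\Lambda\circ F$ to fourth order about $\bar a$. Since $\pi_\eps$ is supported (for $\eps$ below the threshold of \Lem~\ref{Lemma_sound}) on distributions within total variation distance $O(\eps)$ of $\bar\rho$, and $F$ is a polynomial bounded away from $0$ near $\bar a$, Taylor's theorem gives $\Erw_{\RHO}[\Lambda(F(\RHO_1,\dots,\RHO_\ell))]=\Lambda(F(\bar a))+\Erw_{\RHO}[J]+O(\eps^5)$, where $J$ is the truncated Taylor polynomial \eqref{eqJ} and the $O(\eps^5)$ remainder is dominated by an integrable function of $\gamma$ and the $\psi_j$ (via \eqref{eqBounded}, together with the fact that $\GAMMA$ is Poisson and hence has all moments), so it survives the expectations over $\PSI_\bullet$ and $\GAMMA$. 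Now \Lem~\ref{lem:DerivsVanish} expresses $\Erw_{\RHO}[J]$ through the second partial derivatives of $F$ at $\bar a$ and the covariances $\Erw[\RHO(c_a)\RHO(c_b)]-q^{-2}$, and Claim~\ref{Fact_EtaEtaT} replaces the latter by $q^{-1}\eps^2\scal{\DeltaM}{e_{c_a}\tensor e_{c_b}}$. So the whole problem reduces to evaluating, for ordered pairs $(r_1,r_2)$ of distinct indices, the contraction of the relevant product of two second partial derivatives of $F$ at $\bar a$ against $\DeltaM\tensor\DeltaM$, then averaging over the $\PSI_j$ and $\GAMMA$.

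The core computation is to evaluate these second derivatives, splitting the sum over $(r_1,r_2)=((j_1,i_1),(j_2,i_2))$ into the \emph{same-block} case $j_1=j_2$ and the \emph{different-block} case $j_1\neq j_2$. In the same-block case both differentiations hit a single factor $g_j^{(h)}$; summing out $h$ and unwinding the definition of the matrix $\Phi^{(i_1,i_2)}_{\psi}(\omega,\omega')=q^{1-k}\xi^{-1}\sum_{\tau}\vecone\{\tau_{i_1}=\omega,\tau_{i_2}=\omega'\}\psi(\tau)$ shows the mixed derivative equals $q^2\xi^\gamma\,\Phi^{(i_1,i_2)}_{\psi_j}(c_1,c_3)$; since both factors in the product carry the \emph{same} $\psi_j$, averaging over $\PSI_j\sim P$ produces $\Erw[\Phi^{(i_1,i_2)}_{\PSI}\tensor\Phi^{(i_1,i_2)}_{\PSI}]=\Xi$, where I would use the permutation-invariance half of \textbf{SYM} together with the self-adjointness of $\Xi$ (\Lem~\ref{Lemma_Xi}) to drop the position labels, and the symmetry \eqref{eqDeltaSymmetric} of $\DeltaM$ and the bookkeeping of Claim~\ref{Claim_deconstruct} then collapse the contraction to $2\scal{\Xi\DeltaM}{\DeltaM}$ per ordered pair. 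In the different-block case the two differentiations hit distinct factors $g_{j_1}^{(h)},g_{j_2}^{(h)}$; each first derivative is $q\xi\,\Phi^{(i,k)}_{\psi}(c,h)$, their product summed over the ``broadcast'' spin $h$ is $q^2\xi^\gamma$ times a matrix product, and averaging over the \emph{independent} $\PSI_{j_1},\PSI_{j_2}$ turns this into $\Xi\cdot\Xi$, contracting to $2\scal{\Xi^2\DeltaM}{\DeltaM}$ per ordered pair. There are $\gamma(k-1)(k-2)$ same-block ordered pairs and $\gamma(\gamma-1)(k-1)^2$ different-block ones; combining the prefactor $\frac{q^{-2}\eps^4}{24F(\bar a)}$ from \Lem~\ref{lem:DerivsVanish} and Claim~\ref{Fact_EtaEtaT} with $F(\bar a)=q\xi^\gamma$, multiplying by $q^{-1}\xi^{-\gamma}$, and using $\Erw[\GAMMA]=d$, $\Erw[\GAMMA(\GAMMA-1)]=d^2$ yields exactly $\frac{\eps^4 d(k-1)}{12}\bc{(k-2)\scal{\Xi\DeltaM}{\DeltaM}+d(k-1)\scal{\Xi^2\DeltaM}{\DeltaM}}$, which with the leading term and the $O(\eps^5)$ error gives \eqref{eqB1final}.

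The main obstacle is precisely the second-derivative bookkeeping in the previous paragraph: one must track which coordinate of each $\psi_j$ is being fixed (a genuine spin coordinate $i\le k-1$ versus the broadcast coordinate $k$), keep the four contracted indices $c_1,\dots,c_4$ in the correct tensor slots so that the contraction against $\DeltaM\tensor\DeltaM$ really reproduces $\scal{\Xi\DeltaM}{\DeltaM}$ and $\scal{\Xi^2\DeltaM}{\DeltaM}$ rather than some other quadratic form, and carefully justify reducing the position-labelled matrices $\Phi^{(i,i')}_{\PSI}$ to $\Xi$ under the expectation using only the permutation invariance of $P$. A secondary but genuine technicality is the uniform-in-$(\gamma,\psi_\bullet)$ control of the fourth-order Taylor remainder needed to push the expansion through the expectations, which is where \eqref{eqBounded} and the Poisson moments of $\GAMMA$ are used.
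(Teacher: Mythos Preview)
Your proposal is correct and follows essentially the same route as the paper: fix $\gamma$ and the $\psi_j$, verify \textbf{T1}--\textbf{T2} and apply \Lem~\ref{lem:DerivsVanish} to $F$, compute the second partial derivatives at $\bar a$ in the same-block and different-block cases (the paper writes these via permuted weight functions $\psi^\theta$ rather than your position-labelled matrices $\Phi^{(i,j)}_\psi$, but this is purely notational), reduce via Claims~\ref{Fact_EtaEtaT} and~\ref{Claim_deconstruct} to $\scal{\Xi\DeltaM}{\DeltaM}$ and $\scal{\Xi^2\DeltaM}{\DeltaM}$, and then average over the $\PSI_j$ and $\GAMMA$. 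The only small imprecision is in the remainder control: the paper shows via Fa\`a di Bruno that the fifth-order error is $O(\eps^5)\exp(O(\gamma))$, so one needs the sub-exponential tails (finite moment generating function) of the Poisson rather than merely ``all moments''.
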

\begin{proof}
Fixing $\gamma$ and $\psi_1,\ldots,\psi_\gamma$ for the moment, we consider the function
	\begin{align*}
	F_{\psi_1,\ldots,\psi_\gamma}&: \cP (\Omega)^{(k-1)\gamma}\to (0,\infty),&
		(\rho_{1,1}, \dots , \rho_{\gamma,k-1}) &\mapsto 
	\sum_{h \in \Omega} \prod_{j=1}^{\gamma} \sum_{\tau \in \Omega^k} \vecone \{\tau_k = h\} \psi_j (\tau) \prod_{i=1}^{k-1} \rho_{j,i}(\tau_i) .
	\end{align*}
Then with $J_{\psi_1,\ldots,\psi_\gamma}$ denoting the fourth Taylor polynomial of $\Lambda\circ F_{\psi_1,\ldots,\psi_{\gamma}}$ as in equation (\ref{eqJ}), we can write $\Lambda\circ F_{\psi_1,\ldots,\psi_\gamma}=J_{\psi_1,\ldots,\psi_\gamma}+R_{\psi_1,\ldots,\psi_\gamma}$.
We are going to show that, with $\PSI_1,\ldots,\PSI_\gamma$ chosen from $P$ and $(\RHO_{i,j})_{i,j}$ chosen from $\pi_\eps$, all mutually independent,
	\begin{align}\label{eqHotFix1}
	\Erw[J_{\PSI_1,\ldots,\PSI_\gamma}(\RHO_{i,j})_{i,j}]&=\Lambda(q\xi^\gamma)+\frac{q\xi^\gamma\eps^4(k-1)}{12}
		\brk{d(k-2)\scal{\Xi\DeltaM}\DeltaM+d^2(k-1)\scal{\Xi^2\DeltaM}\DeltaM},\\
	\Erw[R_{\PSI_1,\ldots,\PSI_\gamma}(\RHO_{i,j})_{i,j}]&=O(\eps^5)\exp(O(\gamma)),\label{eqHotFix0}
	\end{align}
whence (\ref{eqB1final}) is immediate because the Poisson distribution has sub-exponential tails.

To prove (\ref{eqHotFix1}) we apply \Lem~\ref{lem:DerivsVanish}.
Thus, we need the first and second partial derivatives of $F_{\psi_1,\ldots,\psi_\gamma}$.
To work out the first partial derivatives, let $s\in[\gamma]$, $r\in[k-1]$ and $c_1\in\Omega$.
Then
	\begin{align*}
	\frac{\partial F_{\psi_1,\ldots,\psi_\gamma}}{\partial \rho_{s,r}(c_1)}&=
		 \sum_{h \in \Omega}
		 	\left(\prod_{j \in [\gamma]\setminus \{ s\}} \sum_{\tau \in \Omega^k} \vecone \{\tau_k = h\} \psi_j (\tau) \prod_{i=1}^{k-1} \rho_{j,i}(\tau_i) \right) \left(\sum_{\tau \in \Omega^k} \vecone \{\tau_k = h, \; \tau_r = c_1\} \psi_s (\tau) \prod_{i\in [k-1]\setminus \{ r\}} \rho_{s,i}(\tau_i) \right) .
	\end{align*}
In particular, {\bf SYM} yields
	$
	\frac{\partial F_{\psi_1,\ldots,\psi_\gamma}}{\partial \rho_{s,r}(c_1)}(\bar\rho,\ldots,\bar\rho)=q\xi^\gamma$,
and thus the assumptions {\bf T1}--{\bf T2} of \Lem~\ref{lem:DerivsVanish} are satisfied.
With respect to the second derivatives, there are two cases.
First,  fix $s\in[\gamma]$, distinct $r_1,r_2\in[k-1]$ and $c_1,c_3\in\Omega$.
Let $\theta_1 : [k] \mapsto [k]$ be the permutation such that $\theta_1 (r_1) = 1,\; \theta_1 (r_2) = 2$ and $\theta(i)=i$ for all $i\neq r_1,r_2$.
Using {\bf SYM}, we obtain
	\begin{align*}
	\frac{\partial^2 F_1(\bar\rho,\ldots,\bar\rho)}{\partial \rho_{s,r_1}(c_1) \partial \rho_{s,r_2}(c_3)} 
		 &= \sum_{h \in \Omega} 
		 	\left(\prod_{j \in [\gamma]\setminus \{s\}} \sum_{\tau \in \Omega^k} \vecone \{\tau_k = h\} \psi_j (\tau) q^{1-k} \right)
			 \left(\sum_{\tau \in \Omega^k} \vecone \{\tau_k = h,\tau_{r_1} = c_1,\tau_{r_2} = c_3 \} \psi_s (\tau) q^{3-k} \right)  \\
		 &= \xi^{\gamma -1}q^{3-k} \sum_{h \in [q]} \sum_{\tau \in \Omega^k} \vecone \{\tau_k = h,\tau_{r_1} = c_1,\tau_{r_2} = c_3 \} \psi_s (\tau) \\
		  &= \xi^{\gamma -1}q^{3-k} \sum_{\tau \in \Omega^k} \vecone \{\tau_{r_1} = c_1,\tau_{r_2} = c_3 \} \psi_s (\tau) 
			 = \xi^{\gamma}q^{2} \Phi_{\psi_s^{\theta_1}}(c_1,c_3).
\end{align*}
Second, fix distinct $s,s'\in[\gamma]$ and any $r_1,r_2\in[k-1]$, $c_1,c_3\in\Omega$.
Let $\theta_2, \theta_3$ be the permutations such that $\theta_2 (k) = 2,\; \theta_2 (r_1) = 1$ and $\theta_2(i)=i$ for all $i\neq r_1,k$
and $\theta_3(k)=1, \theta_3(r_2) = 2$ and $\theta_3(i)=i$ for all $i\neq r_2,k$.
Then {\bf SYM} yields
\begin{align*}
\frac{\partial^2 F_1(\bar\rho,\ldots,\bar\rho)}{\partial \rho_{s,r_1}(c_1) \partial \rho_{s',r_2}(c_3)} 
	&= \sum_{h \in \Omega} \left(\prod_{j \in [\gamma]\setminus \{s,s' \}} \sum_{\tau \in \Omega^k} \vecone \{\tau_k = h\} \psi_j (\tau) q^{1-k} \right)\\
	&\qquad\qquad\qquad\cdot \left(\sum_{\tau \in \Omega^k} \vecone \{\tau_k = h,\tau_{r_1} = c_1\} \psi_s (\tau) q^{2-k} \right) \left(\sum_{\tau \in \Omega^k} \vecone \{\tau_k = h,\tau_{r_2} = c_3 \} \psi_{s'} (\tau) q^{2-k} \right) \\
	 &= \xi^{\gamma -2}q^{4-2k} \sum_{h \in \Omega} \left(\sum_{\tau \in \Omega^k} \vecone \{\tau_k = h,\tau_{r_1} = c_1\} \psi_s (\tau) \right) \left(\sum_{\tau \in \Omega^k} \vecone \{\tau_k = h,\tau_{r_2} = c_3 \} \psi_{s'} (\tau)  \right) \\
  &= \xi^{\gamma}q^{2} \sum_{h \in [q]} \Phipsisub[\theta_2]{s}(h,c_1) \Phipsisub[\theta_3]{s'}(c_3,h)= \xi^{\gamma}q^{2} 
  	\left(\Phi_{\psi_s^{\theta_2}}\cdot\Phi_{\psi_{s'}^{\theta_3}} \right)(c_1,c_3).
\end{align*}
Hence, \Lem~\ref{lem:DerivsVanish} gives 
	\begin{align*}
	\Erw[J_{\psi_1,\ldots,\psi_\gamma}(\RHO_{i,j})]&=
		\Lambda(\xi^\gamma q)+\frac{q\xi^\gamma\epsilon^4}{24}\bc{(k-1)(k-2)S_1+(k-1)^2S_2)},\qquad\mbox{where}\\
	S_1&=\sum_{s \in [\gamma]} \sum_{c\in [q]^4}  \left( \Phi_{\psi_s^{\theta_1}}(c_1,c_3) \Phi_{\psi_s^{\theta_1}}(c_2,c_4) +\Phi_{\psi_s^{\theta_1}}(c_2,c_3) \Phi_{\psi_s^{\theta_1}}(c_1,c_4)\right)  
				\scal{\DeltaM}{e_{c_1}\tensor e_{c_2}}\scal{\DeltaM}{e_{c_3}\tensor e_{c_4}},\\
	S_2&=\sum_{s,s' \in [\gamma]:s\neq s'} \sum_{c\in [q]^4}  \scal{\DeltaM}{e_{c_1}\tensor e_{c_2}}\scal{\DeltaM}{e_{c_3}\tensor e_{c_4}} 
		 \bigg[\bc{(\Phi_{\psi_s^{\theta_2}}\cdot\Phi_{\psi_{s'}^{\theta_3}}}(c_1,c_3)
		 		\cdot\bc{\Phi_{\psi_s^{\theta_2}}\cdot\Phi_{\psi_{s'}^{\theta_3}}}(c_2,c_4)\\
		&\qquad\qquad\qquad\qquad\qquad\qquad\qquad\qquad\qquad\qquad\qquad\qquad
			 +\bc{\Phi_{\psi_s^{\theta_2}}\cdot\Phi_{\psi_{s'}^{\theta_3}}}(c_2,c_3)\cdot\bc{\Phi_{\psi_s^{\theta_2}}\cdot\Phi_{\psi_{s'}^{\theta_3}}}(c_1,c_4)\bigg].
	\end{align*}
Further, Claim~\ref{Claim_deconstruct} 
yields
	\begin{align*}
	S_1&=2\sum_{s \in [\gamma]}\scal{\bc{\Phi_{\psi_s^{\theta_1}}\tensor\Phi_{\psi_s^{\theta_1}}}{\DeltaM}}{{\DeltaM}},\\
	S_2&=2\sum_{s,s' \in [\gamma]:s\neq s'}\scal{\bc{\bc{\Phi_{\psi_s^{\theta_2}}\cdot\Phi_{\psi_{s'}^{\theta_3}}}
		\tensor\bc{\Phi_{\psi_s^{\theta_2}}\cdot\Phi_{\psi_{s'}^{\theta_3}}}}{\DeltaM}}{{\DeltaM}}
		=2\sum_{s,s' \in [\gamma]:s\neq s'}\scal{\bc{\bc{\Phi_{\psi_s^{\theta_2}}\tensor\Phi_{\psi_s^{\theta_2}}}
		\bc{\Phi_{\psi_{s'}^{\theta_3}}\tensor\Phi_{\psi_{s'}^{\theta_3}}}}{\DeltaM}}{{\DeltaM}}.	
		\end{align*}
Therefore, since {\bf SYM} provides that the distribution $P$ is invariant under permutations,  
	\begin{align*}
	\Erw[J_{\PSI_1,\ldots,\PSI_\gamma}(\RHO_{i,j})_{i,j}]&=\Lambda(\xi^\gamma q)
		+\frac{\eps^4q\xi^\gamma(k-1)(k-2)}{12}\Erw\brk{\sum_{s=1}^{\GAMMA}\scal{(\Phi_{\PSI_s}\tensor\Phi_{\PSI_s})\DeltaM}{\DeltaM}}\\
		&\qquad\qquad\qquad+\frac{\eps^4\xi^\gamma q(k-1)^2}{12}\Erw\brk{\sum_{s,s'\in[\GAMMA]:s\neq s'}
			\scal{(\Phi_{\PSI_s}\tensor\Phi_{\PSI_s})(\Phi_{\PSI_{s'}}\tensor\Phi_{\PSI_{s'}})\DeltaM}{\DeltaM}}\\
		&=\Lambda(\xi^\gamma q)+\frac{\eps^4q\xi^\gamma(k-1)}{12}\brk{d(k-2)\scal{\Xi\DeltaM}\DeltaM+d^2(k-1)\scal{\Xi^2\DeltaM}\DeltaM},
	\end{align*}
which completes the proof of (\ref{eqHotFix1}).

Moving on to (\ref{eqHotFix0}), we write the remainder $R_{\psi_1,\ldots,\psi_\gamma}$ for $\rho_{i,j}$ in the support of $\pi_\eps$ as
	\begin{align}\label{eqHotFix10}
	R_{\psi_1,\ldots,\psi_\gamma}(\rho_{i,j})&=\sum_{h\in([\gamma]\times[k-1])^5,c\in\Omega^5}
		\frac1{5!}\frac{\partial \Lambda\circ F_{\psi_1,\ldots,\psi_\gamma}(\tilde\rho)}{\partial\rho_{h_1}(c_1)\cdots\partial\rho_{h_5}(c_5)}
			\prod_{i=1}^5(\tilde\rho_{h_i}(c_i)-q^{-1}),
	\end{align}
where $\tilde\rho=(\tilde\rho_{i,j})_{i,j}$ is a point on the line segment between the points $(\bar\rho,\ldots,\bar\rho)$ and $(\rho_{i,j})_{i,j}$.
In particular,
	$\prod_{i=1}^5(\tilde\rho_{h_i}(c_i)-q^{-1})=O(\eps^5).$
Hence, Fact \ref{Fact_Faadi} shows that
	\begin{align*}
	R_{\psi_1,\ldots,\psi_\gamma}(\rho_{i,j})&=O(\eps^5)\cdot
	\sum_{h,c}
		\sum_{\Upsilon \in \Pi (5)} \sup_{\tilde\rho}
			\Lambda^{(|\Upsilon|)}(F_{\psi_1,\ldots,\psi_\gamma}(\tilde{\rho})) \prod_{B \in \Upsilon} \frac{\partial ^{|B|}
				F_{\psi_1,\ldots,\psi_\gamma}(\tilde{\rho})}{\prod_{i \in B}\partial \rho_{h_i}(c_i)},
	\end{align*}
where $\tilde\rho$ ranges over the convex hull of the support of $\pi_\eps$.
Because all weight functions take values in the interval $(0,2)$, we find
$\prod_{B \in \Upsilon}(\partial ^{|B|}F_{\psi_1,\ldots,\psi_\gamma}(\tilde{\rho})/{\prod_{i \in B}\partial x_i})=\exp(O(\gamma))$.
In addition, 
	\begin{align*}
	\Lambda'(F_{\psi_1,\ldots,\psi_\gamma}(\tilde{\rho}))&=1+\ln F_{\psi_1,\ldots,\psi_\gamma}(\tilde{\rho})
		=O(1)\sum_{i=1}^\gamma\max_{\tau\in\Omega^k}|\ln\psi_i(\tau)|,\\
	\Lambda^{(l)}(F_{\psi_1,\ldots,\psi_\gamma}(\tilde{\rho}))&=O(F_{\psi_1,\ldots,\psi_\gamma}(\tilde{\rho})^{1-l})
		=O(1)\prod_{i=1}^\gamma\max\{\psi_i(\tau)^{1-l}:\tau\in\Omega^k\}
		&(l\geq2).
	\end{align*}
Thus, (\ref{eqBounded}) shows that $R_{\psi_1,\ldots,\psi_\gamma}(\rho_{i,j})=O(\eps^5)\exp(O(\gamma))$, which is (\ref{eqHotFix0}).
\end{proof}

\begin{claim}\label{Claim_eqB2final}
We have
	$\Erw\brk{B_2(\PSI)}=
	\Lambda(\xi)+\frac{\eps^4\xi k(k-1)}{12}\langle \Xi \DeltaM, \DeltaM \rangle +O(\epsilon^5).$
\end{claim}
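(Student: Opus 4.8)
The plan is to mimic the proof of Claim~\ref{Claim_eqB1final}, the computation being strictly simpler here because there is no Poissonised number of constraints and no auxiliary spin to sum over. Fix $\psi\in\Psi$ and introduce
\[
F_\psi:\cP(\Omega)^k\to(0,\infty),\qquad (\rho_1,\ldots,\rho_k)\mapsto\sum_{\tau\in\Omega^k}\psi(\tau)\prod_{i=1}^k\rho_i(\tau_i).
\]
Since $F_\psi$ has degree exactly one in each argument $\rho_r$, hypothesis {\bf T1} of \Lem~\ref{lem:DerivsVanish} holds automatically. Moreover {\bf SYM} gives $\sum_{\tau:\tau_r=c}\psi(\tau)=q^{k-1}\xi$ for all $r,c$, so with $\bar a=(\bar\rho,\ldots,\bar\rho)$ we get $F_\psi(\bar a)=\xi$ and $DF_\psi(\bar a)=\xi\vecone$, verifying {\bf T2} with $C_0=\xi$. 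I would then write $\Lambda\circ F_\psi=\Lambda(\xi)+J_\psi+R_\psi$, where $J_\psi$ is the degree-one-to-four Taylor sum (\ref{eqJ}) for $F=F_\psi$ and $R_\psi$ the remainder, so that $\Erw[B_2(\PSI)]=\Lambda(\xi)+\Erw[J_\PSI(\RHO_1,\ldots,\RHO_k)]+\Erw[R_\PSI(\RHO_1,\ldots,\RHO_k)]$ with $\RHO_1,\ldots,\RHO_k$ i.i.d.\ from $\pi_\eps$, independent of $\PSI$.

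For the main term the only derivatives needed are the mixed second ones: for $r_1\neq r_2$ one computes, exactly as in Claim~\ref{Claim_eqB1final}, that
\[
\frac{\partial^2 F_\psi(\bar a)}{\partial\rho_{r_1}(c_1)\partial\rho_{r_2}(c_3)}=q^{2-k}\sum_{\tau:\tau_{r_1}=c_1,\tau_{r_2}=c_3}\psi(\tau)=q\xi\,\Phi_{\psi^{\theta}}(c_1,c_3),
\]
where $\theta=\theta_{r_1,r_2}$ is a permutation of $[k]$ moving positions $r_1,r_2$ to $1,2$. Feeding this together with Claim~\ref{Fact_EtaEtaT} (which yields $\Erw[\RHO(c)\RHO(c')]-q^{-2}=q^{-1}\eps^2\scal{\DeltaM}{e_c\tensor e_{c'}}$) into the formula of \Lem~\ref{lem:DerivsVanish}, the scalar prefactor collapses to $\tfrac{(q\xi)^2q^{-2}}{24\xi}\eps^4=\tfrac{\xi}{24}\eps^4$, and the two summands inside the bracket become equal after relabelling $c_1\leftrightarrow c_2$ and invoking $\scal{\DeltaM}{e_{c_1}\tensor e_{c_2}}=\scal{\DeltaM}{e_{c_2}\tensor e_{c_1}}$ from (\ref{eqDeltaSymmetric}); Claim~\ref{Claim_deconstruct} identifies their common value with $\scal{(\Phi_{\psi^\theta}\tensor\Phi_{\psi^\theta})\DeltaM}{\DeltaM}$. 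Taking the expectation over $\PSI$ and using the permutation-invariance clause of {\bf SYM} to replace $\Erw[\Phi_{\PSI^\theta}\tensor\Phi_{\PSI^\theta}]$ by $\Xi$, and summing over the $k(k-1)$ ordered pairs $(r_1,r_2)$ with $r_1\neq r_2$, gives
\[
\Erw[J_\PSI(\RHO_1,\ldots,\RHO_k)]=\frac{\xi\eps^4 k(k-1)}{12}\scal{\Xi\DeltaM}{\DeltaM},
\]
which is precisely the claimed order-$\eps^4$ term.

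For the remainder I would argue verbatim as in the corresponding step of Claim~\ref{Claim_eqB1final}, but with less effort because $k$ is fixed: every $\rho_i$ in the support of $\pi_\eps$ satisfies $\rho_i(c)-q^{-1}=O(\eps)$, so the fifth-order Lagrange remainder is $O(\eps^5)$ times a sum of products of derivative factors; via \Faadi's formula (Fact~\ref{Fact_Faadi}) together with $\psi(\tau)\in(0,2)$ and the integrability conditions (\ref{eqBounded}) — the only potentially unbounded contribution being $\Lambda'(F_\psi(\tilde\rho))=1+\ln F_\psi(\tilde\rho)$, which is dominated by $\max_\tau|\ln\PSI(\tau)|\in L^1$ — one concludes $\Erw[R_\PSI(\RHO_1,\ldots,\RHO_k)]=O(\eps^5)$. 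I expect the main (indeed only) delicate points to be getting the normalisation constant $q\xi$ in the second-derivative formula right and correctly turning the two \Lem~\ref{lem:DerivsVanish} summands into $2\,\scal{(\Phi_{\psi^\theta}\tensor\Phi_{\psi^\theta})\DeltaM}{\DeltaM}$; everything else is bookkeeping, and strictly lighter than in Claim~\ref{Claim_eqB1final}.
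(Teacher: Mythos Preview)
Your proposal is correct and follows essentially the same route as the paper's proof: apply \Lem~\ref{lem:DerivsVanish} to $F_\psi$, compute the mixed second derivatives via {\bf SYM} to get $q\xi\,\Phi_{\psi^\theta}(c_1,c_3)$, plug in Claim~\ref{Fact_EtaEtaT} for the covariances, collapse the two summands via (\ref{eqDeltaSymmetric}) and Claim~\ref{Claim_deconstruct}, average over $\PSI$ using the permutation-invariance clause of {\bf SYM} to obtain $\Xi$, and bound the fifth-order remainder through Fact~\ref{Fact_Faadi} and (\ref{eqBounded}). The only cosmetic difference is that the paper absorbs the constant $\Lambda(\xi)$ into $J_\psi$ whereas you separate it out, which is immaterial.
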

\begin{proof}
To investigate $B_2(\psi)$ we apply \Lem~\ref{lem:DerivsVanish} to
	$F_{\psi}(\rho_1, \dots , \rho_{k})=
	  \sum_{\tau \in \Omega^k} \psi (\tau) \prod_{i=1}^{k} \rho_{i}(\tau_i)$.
The derivatives are
	\begin{align*}
	 \frac{\partial F_{\psi}}{\partial \rho_r(c_1)} &= \sum_{\tau \in \Omega^k} \vecone \{\tau_r = c_1\} 
	 			\psi (\tau) \prod_{i\in [k]\setminus \{ r\}} \rho_{i}(\tau_i) ,\\
	 \frac{\partial^2 F_{\psi}}{\partial \rho_{r_1}(c_1) \partial \rho_{r_2}(c_3)}(\bar\rho,\ldots,\bar\rho)
	 	& = \vecone\{r_1\neq r_2\}\sum_{\tau \in \Omega^k} \vecone \{\tau_{r_1} = c_1,  \tau_{r_2} = c_3\} \psi (\tau)q^{2-k} = q\xi 
			\Phi_{\psi^{\theta}}(c_1,c_3),
	\end{align*}
where $\theta : [k] \to [k]$ is such that $\theta (r_1) = 1,\; \theta (r_2) = 2$ and $\theta(r)=r$ for all $r\neq r_1,r_2$.
Thus, {\bf SYM} yields
	\begin{align*}
	F_{\psi}(\bar\rho,\ldots,\bar\rho)&=\xi \qquad \mbox{and} \qquad
	 \frac{\partial F_{\psi}}{\partial \rho_r(c_1)}(\bar\rho,\ldots,\bar\rho)=\xi.
	 \end{align*}
Once more we write $\Lambda\circ F_\psi=J_\psi+R_\psi$, where $J_\psi$ is the fourth Taylor polynomial as in (\ref{eqJ}).
Applying \Lem~\ref{lem:DerivsVanish}, we obtain
\begin{align*}
\Erw&\left[J_\psi(\RHO_1,\ldots,\RHO_k) \right] =\Lambda(\xi) +\\
&\frac{k(k-1)q^2\xi}{24} \sum_{c\in [q]^4}\left(  \Phi_{\psi^{\theta}}(c_1,c_3)  \Phi_{\psi^{\theta}}(c_2,c_4) + \Phi_{\psi^{\theta}}(c_2,c_3)  \Phi_{\psi^{\theta}}(c_1,c_4)\right)
   \left(\Erw\left[ \RHO(c_1) \RHO (c_2) \right]- q^{-2}\right) \left( \Erw\left[ \RHO (c_3) \RHO (c_4) \right] - q^{-2}\right).
\end{align*}
Further, Claim~\ref{Fact_EtaEtaT}  yields
$(\Erw[ \RHO(c_1) \RHO (c_2)]- q^{-2}) ( \Erw[ \RHO (c_3) \RHO (c_4)] - q^{-2})=\eps^4q^{-2}\scal{\DeltaM}{e_{c_1}\tensor e_{c_2}}
		\scal{\DeltaM}{e_{c_3}\tensor e_{c_4}}$,
whence by Claim~\ref{Claim_deconstruct},
	\begin{align}\label{eq:TaylorLambdaF1}
	\Erw\left[J_\psi(\RHO_1,\ldots,\RHO_k) \right]
	&=\Lambda(\xi)+\frac{\eps^4\xi k(k-1)}{12}\langle (\Phi_{\psi^\theta} \otimes \Phi_{\psi^\theta}) \DeltaM, \DeltaM \rangle.
	\end{align}
Furthermore, by Fact~\ref{Fact_Faadi} for any $\rho_1,\ldots,\rho_k$ in the support of $\pi_\eps$ exist
	$\tilde\rho$ on the line segment between $(\bar\rho,\ldots,\bar\rho)$ and $(\rho_1,\ldots,\rho_k)$ such that
	\begin{align*}
	R_{\psi}(\rho_1,\ldots,\rho_k)&=O(\eps^5)\cdot
	\sum_{h,c}
		\sum_{\Upsilon \in \Pi (5)} \sup_{\tilde\rho}
			\Lambda^{(|\Upsilon|)}(F_{\psi}(\tilde{\rho})) \prod_{B \in \Upsilon} \frac{\partial ^{|B|}
				F_{\psi}(\tilde{\rho})}{\prod_{i \in B}\partial \rho_{h_i}(c_i)}, 
	\end{align*}
Hence, (\ref{eqBounded}) guarantees that $\Erw[R_{\PSI}(\rho_1,\ldots,\rho_k)]=O(\eps^5)$ and thus the assertion follows from (\ref{eq:TaylorLambdaF1}).
\end{proof}

\begin{proof}[Proof of \Prop~\ref{prop:TaylorBdpi}]
Combining (\ref{eqBfinal}) with Claims \ref{Claim_eqB1final} and~\ref{Claim_eqB2final}, we obtain
	\begin{align}\label{eqBBfinal}
	\cB(d,P,\pi_\epsilon)&=
		\ln q+\frac dk\ln\xi+\frac{\eps^4 d(k-1)}{12}\brk{d(k-1)\scal{\Xi^2\DeltaM}\DeltaM-\scal{\Xi\DeltaM}\DeltaM}+O(\eps^5).
	\end{align}
Since $\scal{\Xi\DeltaM}\DeltaM=\hat\lambda$, $\scal{\Xi^2\DeltaM}\DeltaM=\hat\lambda^2$  and $\cB(d,P,\bar\pi) = \ln q+\frac dk\ln\xi$, the assertion follows from (\ref{eqBBfinal}).
\end{proof}

\section{Overlap concentration in the teacher-student model}\label{sec:ProofPreCond}

\noindent{\em Throughout this section we assume that $P$ satisfies conditions {\bf BAL}, {\bf SYM}, {\bf MIN} and {\bf POS}.}

\subsection{Outline}
{In this section we prove Proposition~\ref{prop:belowcond-unif}.
We will exhibit a connection between the overlap and the derivative $\frac\partial{\partial d}\Erw[\ln Z(\hat\G)]$ of the free energy:
if $\Erw\langle\tv{\rho_{\SIGMA_1,\SIGMA_2}-\bar\rho}\rangle_{\hat\G}$ is bounded away from $0$
for some $d<\dc$, then the derivative of the free energy is so large that the formula
	$n^{-1}\Erw[\ln Z(\hat\G)]=\ln q+\frac dk\ln\xi+o(1)$ cannot possibly hold, in contradiction to \Thm~\ref{Thm_plantedFreeEnergy}.
We begin with the following ``continuity statement'', which is a generalization of~\cite[\Lem~4.6]{CKPZ} for the Potts model: if the overlap deviates from $\bar\rho$
for some average degree $d$, then the same holds for at least a small interval of average degrees.}

\begin{lemma}\label{Lemma_mon}
For any $\eps>0$, $d>0$ there is $0<\delta=\delta(\eps,d,P)<\eps$ such that the following holds.
Assume that $m\in\cM(d)$ is a sequence such that
	\begin{equation}\label{eqLemma_mon_ass}
	\limsup_{n\to\infty}\Erw\bck{\TV{\rho_{\SIGMA_1,\SIGMA_2}-\bar\rho}}_{\hat\G(n,m)}>\eps.
	\end{equation}
Then 
	\begin{align*}
	\limsup_{n\to\infty}
		\min\cbc{\Erw\bck{\TV{\rho_{\SIGMA_1,\SIGMA_2}-\bar\rho}}_{\hat\G(n,m)}:\delta n<m-dn/k<2\delta n}
		>\delta.
	\end{align*}
\end{lemma}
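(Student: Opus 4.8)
The plan is to reduce the lemma to a \textbf{one-sided Lipschitz estimate} for the functional $O_n(m):=\Erw\bck{\TV{\rho_{\SIGMA_1,\SIGMA_2}-\bar\rho}}_{\hat\G(n,m)}$ in the number of constraints: there should be a constant $L=L(d,P)$ with $O_n(m')\ge O_n(m)-L\cdot\tfrac{m'-m}{n}-o(1)$ for all $0\le m\le m'=O(n)$. Granting this, the lemma follows immediately. Indeed, assuming (\ref{eqLemma_mon_ass}) there is an infinite set $\mathcal N\subseteq\NN$ with $O_n(m)>\eps$ for $n\in\mathcal N$; choosing $\delta<\min\{\eps/2,\eps/(6L)\}$, every $m'$ with $\delta n<m'-dn/k<2\delta n$ satisfies $m'-m=O(\delta n)$ (as $m\in\cM(d)$ gives $|m-dn/k|\le n^{3/5}$), whence $O_n(m')>\eps-3L\delta-o(1)>\eps/2>\delta$ for all large $n\in\mathcal N$, so that $\limsup_n\min\{O_n(m'):\delta n<m'-dn/k<2\delta n\}\ge\eps/2>\delta$.

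To prove the estimate I would first transport the problem to the teacher--student model. By the Nishimori identity (Lemma~\ref{lem:nishimori}, applied with $\cX(G,\sigma)=\bck{\TV{\rho_{\sigma,\TAU}-\bar\rho}}_G$) one has $O_n(m)=\Erw[\bck{\TV{\rho_{\hat\SIGMA_{n,m},\TAU}-\bar\rho}}_{\G^*(n,m,\hat\SIGMA_{n,m})}]$. Using the bound $\pr[\hat\SIGMA_{n,m}=\sigma]\le c^{-1}q^{-n}$, which follows from $\phi(\rho_\sigma)\le\xi$ (a consequence of {\bf SYM}, {\bf BAL}) and the first-moment lower bound of Lemma~\ref{Cor_F}, together with the concentration of both $\hat\SIGMA_{n,m}$ and the uniform $\SIGMA^*$ about $\bar\rho$ furnished by Corollary~\ref{lem:conc_coloring}, one checks that $O_n(m)$ agrees, up to a multiplicative constant and an additive $o(1)$, with the uniform-planting version $\tilde O_n(m):=\Erw\bck{\TV{\rho_{\SIGMA^*,\TAU}-\bar\rho}}_{\G^*(n,m,\SIGMA^*)}$; hence it suffices to prove the estimate for $\tilde O_n$. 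Now couple $\G^*(n,m,\SIGMA^*)$ and $\G^*(n,m',\SIGMA^*)$ through a common $\SIGMA^*$ and common first $m$ constraints, so that the second graph arises from the first by appending $t=m'-m$ independent teacher-constraints $b_1,\dots,b_t$ and its Gibbs measure is the first one reweighted by $\prod_{i\le t}\psi_{b_i}(\sigma(\partial b_i))$. Since each $\psi_{b_i}$ involves only $k$ coordinates and $(\tau_1,\tau_2)\mapsto\TV{\rho_{\tau_1,\tau_2}-\bar\rho}$ changes by at most $2k/n$ under altering those coordinates of the replica, appending one constraint changes $\bck{\TV{\rho_{\SIGMA^*,\TAU}-\bar\rho}}$ by a local amount of size $O(k/n)$ plus a term coming from the shift of the marginal law on the remaining coordinates; telescoping over the $t=\Theta(\delta n)$ steps is then meant to give $|\tilde O_n(m')-\tilde O_n(m)|=O(\delta)$, and in fact a \emph{lower} bound $\tilde O_n(m')\ge\tilde O_n(m)-O(\delta)$, which is all that is needed.

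The crux --- and the main obstacle --- is to control the non-local term at each step, since a single teacher-constraint can move the Gibbs measure by a constant amount in total variation. The resolution is to show that, after averaging over the random graph and over the appended constraint, the induced Radon--Nikodym derivative of the marginal law on the untouched coordinates is concentrated around its mean $1$: its fluctuations are bounded through $\ln\PSI(\tau)$, which has a finite eighth moment by (\ref{eqBounded}), and a union bound over the $\Theta(\delta n)$ extra constraints keeps the accumulated error $O(\delta)$. Moreover, condition {\bf MIN} ensures that a teacher-constraint biases the measure \emph{towards} configurations whose overlap with $\SIGMA^*$ is far from $\bar\rho$, so that the per-step change of $\tilde O_n$ is non-negative up to the $O(k/n)$ local error --- exactly the one-sided estimate we need. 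A more conceptual route to the same one-sided bound is the monotonicity of the Bayes-optimal reconstruction error along the Nishimori line, i.e.\ the fact that appending observations cannot decrease the mutual information $I(\SIGMA^*;\G^*)$, which can be converted into monotonicity of $\tilde O_n(m)$ in $m$.
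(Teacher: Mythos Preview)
Your reduction to a one-sided Lipschitz estimate is natural, but the proposed resolutions of the ``non-local term'' do not go through, and this is a genuine gap.

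The Radon--Nikodym argument fails quantitatively. After appending one teacher-constraint $b$, the new Gibbs measure on $\Omega^{V_n}$ is the old one reweighted by $\psi_b(\sigma(\partial b))/\bck{\psi_b(\TAU(\partial b))}$. Since $\psi_b$ takes values in $(0,2)$, this ratio can differ from $1$ by $\Theta(1)$ for a constant fraction of configurations $\sigma$; the eighth-moment bound in (\ref{eqBounded}) controls the tails of $\ln\PSI$, not the typical fluctuation, which remains $\Theta(1)$. Hence the change in $\bck{\TV{\rho_{\SIGMA^*,\TAU}-\bar\rho}}$ per step is $\Theta(1)$ in general, and telescoping over $\Theta(\delta n)$ steps yields $\Theta(\delta n)$, not $O(\delta)$. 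The {\bf MIN}-based heuristic (``a teacher-constraint biases towards overlap far from $\bar\rho$'') is plausible but is not what {\bf MIN} asserts: {\bf MIN} says the \emph{averaged} second-moment function $\rho\mapsto\sum_{\sigma,\tau}\Erw[\PSI(\sigma)\PSI(\tau)]\prod_i\rho(\sigma_i,\tau_i)$ is uniquely minimised at $\bar\rho$, which does not translate into a pointwise per-step monotonicity of $\tilde O_n$. Likewise, monotonicity of $I(\SIGMA^*;\G^*)$ in $m$ is true, but there is no direct inequality linking mutual information to $\Erw\bck{\TV{\rho_{\SIGMA^*,\TAU}-\bar\rho}}$; the two functionals can move in opposite directions along a single step.

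The paper's proof avoids tracking the Gibbs measure altogether. The key idea you are missing is to use Nishimori \emph{twice}, with a deterministic \emph{witness} in between. From the assumption at $m$ one extracts, for each factor graph $G$, a nearly balanced $\tau_G$ with $\pr[\bck{\TV{\rho_{\SIGMA,\tau_{\G''}}-\bar\rho}}_{\G''}>\eps]\ge\eps/2$ for $\G''\sim\hat\G(n,m)$. By Nishimori this becomes a lower bound on $\Erw[\TV{\rho_{\hat\SIGMA_{n,m},\tau_{\G''}}-\bar\rho}]$. Now couple $\hat\SIGMA_{n,m}=\hat\SIGMA_{n,m'}$ (their TV distance is small, Lemma~\ref{Lemma_mon_cplg}) and realise $\G''$ as a random subsample $\G'''$ of $\G'\sim\hat\G(n,m')$. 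Setting $\TAU_{\G'}=\tau_{\G'''}$ yields $\Erw[\TV{\rho_{\hat\SIGMA_{n,m'},\TAU_{\G'}}-\bar\rho}]>\eps^2/6$, and a second application of Nishimori at $m'$ turns this into $\Erw\bck{\TV{\rho_{\SIGMA,\TAU_{\G'}}-\bar\rho}}_{\G'}>\eps^2/6$. Since $\TAU_{\G'}$ is nearly balanced, Lemma~\ref{Lemma_nbalanced_CKPZ} converts this back to the two-replica overlap. The point is that neither $\hat\SIGMA$ nor the witness $\tau_{\G'''}$ depends on the Gibbs measure of the larger graph $\G'$, so the uncontrollable shift of $\mu_{\G'}$ never enters.
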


\noindent
The proof of \Lem~\ref{Lemma_mon} can be found in \Sec~\ref{Sec_Lemma_mon}.
Further, in \Sec~\ref{Sec_lemma_derivatives} we derive the following asymptotic formula for the derivative of the free energy.

\begin{lemma}\label{lemma_derivatives}
Uniformly for all $d\leq\dc+1$ we have
	\begin{align}\label{eqlem:badoverlaps6}
	\frac kn\frac{\partial}{\partial d}\Erw[\ln Z(\hat\G)]&=o(1)
		+\xi^{-1}\Erw\brk{\Lambda\bc{\bck{\PSI(\SIGMA(x_{\vec i_1}),\ldots,\SIGMA(x_{\vec i_k}))}_{\hat\G}}}.
	\end{align}
with $\PSI$ chosen from $P$ independently of $\hat\G$ and $\vec i_1,\ldots,\vec i_k\in[n]$ chosen uniformly and independently.
\end{lemma}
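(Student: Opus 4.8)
The plan is to compute $\frac{\partial}{\partial d}\Erw[\ln Z(\hat\G)]$ by the same differentiation-with-respect-to-$d$ trick already used for $\Erw[\ln Z(\G)]$ in the proof of \Thm~\ref{Thm_overlap}, but adapted to the reweighted model $\hat\G$. The point is that $\hat\G=\hat\G(n,\vm)$ with $\vm\stacksign d=\Po(dn/k)$, so
\begin{align*}
\frac1n\frac{\partial}{\partial d}\Erw[\ln Z(\hat\G)]
&=\frac1n\sum_{m\ge0}\brk{\frac{\partial}{\partial d}\pr\brk{\Po(dn/k)=m}}\Erw[\ln Z(\hat\G(n,m))]\\
&=\frac1k\brk{\Erw[\ln Z(\hat\G(n,\vm+1))]-\Erw[\ln Z(\hat\G(n,\vm))]},
\end{align*}
exactly as in \eqref{eq:nullmodelderiv1}. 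So the first step is to express the one-step difference $\Erw[\ln Z(\hat\G(n,m+1))]-\Erw[\ln Z(\hat\G(n,m))]$ for deterministic $m$. Using the definition \eqref{eq:NishimoriG} of $\hat\G$, for any deterministic $m$ one has
$$\Erw[f(\hat\G(n,m))]=\frac{\Erw[Z(\G(n,m))f(\G(n,m))]}{\Erw[Z(\G(n,m))]},$$
and since $\G(n,m+1)$ is obtained from $\G(n,m)$ by adding one independent random constraint node $a_{m+1}$ (with neighbourhood $(x_{\vec i_1},\ldots,x_{\vec i_k})$ uniform and weight function $\PSI$ from $P$), we get
$$Z(\G(n,m+1))=Z(\G(n,m))\,\bck{\PSI(\SIGMA(x_{\vec i_1}),\ldots,\SIGMA(x_{\vec i_k}))}_{\G(n,m)}.$$
Plugging this into both numerator and denominator of the $\hat\G(n,m+1)$ expectation yields a clean identity: $\Erw[\ln Z(\hat\G(n,m+1))]-\Erw[\ln Z(\hat\G(n,m))]$ equals $\ln\Erw[\cdots]$-type terms that reorganize into a $\Lambda$. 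Concretely, writing $w(G)=\bck{\PSI(\SIGMA(x_{\vec i_1}),\ldots,\SIGMA(x_{\vec i_k}))}_{G}$ and using $\Erw[Z(\G(n,m+1))]=\Erw[Z(\G(n,m))]\cdot\Erw[w(\G(n,m))]=\xi\,\Erw[Z(\G(n,m))]$ by \Lem~\ref{Lemma_F}/\eqref{eqF} and {\bf SYM} (which gives $\Erw[w(\G(n,m))]=\xi$ since the expectation over the fresh neighbourhood and weight is $\phi(\rho_\SIGMA)$ and after reweighting by $Z$ the marginal is near-uniform, but in fact even more simply $\Erw[\PSI(\tau)]$ averaged over uniform $\tau$ is $\xi$), one finds
$$\frac1k\brk{\Erw[\ln Z(\hat\G(n,m+1))]-\Erw[\ln Z(\hat\G(n,m))]}=\frac1{k\xi}\Erw\brk{w(\hat\G(n,m))\ln w(\hat\G(n,m))}+o(\text{error}),$$
i.e.\ the expected value under $\hat\G(n,m)$ of $\xi^{-1}\Lambda(w(\cdot))$. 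This is the heart of the computation: the extra factor $w$ in the reweighting of $\hat\G(n,m+1)$ versus $\hat\G(n,m)$ is precisely what converts a plain $\ln w$ into $w\ln w=\Lambda(w)$.

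The remaining steps are bookkeeping. Second, I would average over $\vm\stacksign d=\Po(dn/k)$ to pass from deterministic $m$ to the random $\hat\G$; this introduces the $o(1)$ error term in \eqref{eqlem:badoverlaps6}, which must be controlled uniformly for $d\le\dc+1$. The uniformity comes from the a priori bounds in \Lem~\ref{Cor_F} (namely $\Erw|\ln Z(\hat\G)|=O(n)$ and $cq^n\xi^m\le\Erw[Z(\G(n,m))]\le q^n\xi^m$), together with the moment assumptions \eqref{eqBounded} on $\PSI$, which bound $\Erw[\max_\tau|\ln\PSI(\tau)|]$ and hence the fluctuation of $w$ and of $\Lambda(w)$; concentration of $\vm$ around $dn/k$ with window $O(\sqrt n)$ then makes the difference between summing against $\Po(dn/k)$ and $\Po(d(n+1)/k)$, etc., negligible after division by $n$. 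Third, I would verify that the expectation $\Erw[\Lambda(w(\hat\G))]$ is exactly the right-hand side of \eqref{eqlem:badoverlaps6}: here $w(\hat\G)=\bck{\PSI(\SIGMA(x_{\vec i_1}),\ldots,\SIGMA(x_{\vec i_k}))}_{\hat\G}$ with $\PSI$ drawn from $P$ independently of $\hat\G$ and $\vec i_1,\ldots,\vec i_k$ uniform independent — matching the statement verbatim. One should double-check the placement of $\xi^{-1}$: the normalization $\Erw[Z(\G(n,m+1))]=\xi\Erw[Z(\G(n,m))]$ contributes a $\ln\xi$ that cancels the ``background'' part of $\Lambda(w)/\xi$ appropriately, and the surviving term is exactly $\xi^{-1}\Erw[\Lambda(w)]$.

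The main obstacle I anticipate is not the algebra of the one-step difference — that is essentially forced — but establishing the $o(1)$ error \emph{uniformly in $d$ up to $\dc+1$}. There are two subtleties: (i) interchanging $\partial/\partial d$ with the infinite sum over $m$ and with $\Erw[\ln Z]$, which requires a dominated-convergence argument backed by the $\ln^8$-moment bound in \eqref{eqBounded} rather than just a first-moment bound (the eighth power is presumably there precisely to give enough room for such tail estimates, e.g.\ via Cauchy--Schwarz against the Poisson tail of $\vm$); and (ii) showing that replacing $\Erw[\ln Z(\hat\G(n,\vm+1))]$ by $\Erw[\ln Z(\hat\G(n,\vm'))]$ for $\vm'\stacksign d=\Po(d(n{+}1)/k)$ costs only $o(n)$, which follows from Lipschitz-type stability of $m\mapsto\Erw[\ln Z(\hat\G(n,m))]$ in steps of $1$, again using \eqref{eqBounded} to bound the effect of one constraint node. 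Once these tail and stability estimates are in place, assembling \eqref{eqlem:badoverlaps6} is immediate, and the $o(1)$ will be uniform on $d\in(0,\dc+1]$ because all the constants in \eqref{eqBounded} and \Lem~\ref{Cor_F} depend only on $P$ and on the upper bound $\dc+1$ for the density.
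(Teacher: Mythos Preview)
Your route diverges from the paper's after the Poisson differentiation step (your first display, which is correct and matches \eqref{eqlem:badoverlaps1}). The paper does not manipulate the reweighting \eqref{eq:NishimoriG} directly; it passes via the Nishimori identity to $\G^*(n,\vm,\hat\SIGMA_{n,\vm})$ and computes the one-step increment by \emph{coupling} $\hat\SIGMA_{n,\vm}$ with $\hat\SIGMA_{n,\vm+1}$ (\Lem~\ref{Lemma_coupling}, \Cor~\ref{Cor_coupling}). This gives
$\Erw[\ln Z(\hat\G(n,\vm+1))]-\Erw[\ln Z(\hat\G(n,\vm))]=\Erw[\ln\langle\psi_{\vec a}(\SIGMA)\rangle_{\G^*(n,\vm,\hat\SIGMA)}]+o(1)$,
where $\vec a$ is drawn from the teacher distribution (reweighted by $\PSI(\hat\SIGMA(\cdot))$). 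The $\Lambda$ then emerges from a power-series expansion of the logarithm combined with a second use of Nishimori \eqref{eqlem:badoverlaps4}, which converts the teacher weight into an extra power of $\langle 1-\PSI(\SIGMA)\rangle$ and telescopes.

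Your direct approach has a gap you do not identify. With $Z=Z(\G(n,m))$ and $w=\langle\PSI(\SIGMA(x_{\vec i_1}),\ldots,\SIGMA(x_{\vec i_k}))\rangle_{\G(n,m)}$, the exact decomposition is
\[
\Erw[\ln Z(\hat\G(n,m+1))]-\Erw[\ln Z(\hat\G(n,m))]
=\underbrace{\frac{\Erw[Zw\ln Z]}{\Erw[Zw]}-\frac{\Erw[Z\ln Z]}{\Erw[Z]}}_{A}
+\underbrace{\frac{\Erw[Zw\ln w]}{\Erw[Zw]}}_{B}.
\]
Term $B$ does equal $\xi^{-1}\Erw_{\hat\G}[\Lambda(w)]+o(1)$ as you say. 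But term $A$ is not part of any ``clean identity'': it equals $\mathrm{Cov}_{\hat\G(n,m)}(\bar w,\ln Z)/\Erw_{\hat\G}[\bar w]$ with $\bar w=\langle\phi(\rho_{\SIGMA})\rangle$, because the $(m{+}1)$st constraint tilts the law of the first $m$ constraints relative to $\hat\G(n,m)$, and this tilt interacts with $\ln Z$. Since $\ln Z$ is of order $n$ while $|\bar w-\xi|=O(n^{-1}\ln^2 n)$ only with high $\hat\G$-probability (Nishimori plus \Cor~\ref{lem:conc_coloring}), showing $A=o(1)$ requires a bound like $\Erw_{\hat\G}|\ln Z-\Erw_{\hat\G}\ln Z|=O(\sqrt n)$, i.e.\ an Azuma argument (\Lem~\ref{Lemma_Azuma}), which you never invoke. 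The obstacles you anticipate---Poisson tails and replacing $\vm+1$ by a shifted Poisson---are not the issue; the Poisson differentiation identity is exact. The paper's coupling avoids $A$ altogether: once one works in $\G^*(n,m,\sigma)$ for a \emph{fixed} $\sigma$, adding a constraint does not alter the law of the first $m$ constraints, and the only error comes from the rare event that the coupling of $\hat\SIGMA_{n,\vm}$ and $\hat\SIGMA_{n,\vm+1}$ fails.
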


\begin{corollary}\label{lem:badoverlaps1}
Uniformly for all $d<\dc+1$ we have
	\begin{align}
	\frac{1}{n}\frac{\partial}{\partial d}\Erw[\ln Z(\hat\G)]\ge\frac{\ln\xi}{k}+o(1).\label{eq:derivPreCond}
	\end{align}
Moreover, for any $\eps>0$ there is $\delta=\delta(\eps,P)>0$, independent of $n$ or $d$, such that
uniformly for all $d<\dc+1$,
\begin{align}
\Erw\bck{\TV{\rho_{\SIGMA,\TAU}-\bar\rho}}_{\hat\G}>\eps&\ \Rightarrow\ 
\frac{1}{n}\frac{\partial}{\partial d}\Erw[\ln Z(\hat\G)]\ge\frac{\ln\xi}{k}+\delta+o(1).\label{eq:derivPostCond}
\end{align}
\end{corollary}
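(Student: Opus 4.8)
The entire statement follows from \Lem~\ref{lemma_derivatives}, which reduces the problem to a lower bound on $\Erw[\Lambda(X)]$, where $X=\bck{\PSI(\SIGMA(x_{\vec i_1}),\ldots,\SIGMA(x_{\vec i_k}))}_{\hat\G}$ with $\PSI$ drawn from $P$ and $\vec i_1,\ldots,\vec i_k$ uniform in $[n]$, all independent of $\hat\G$. The plan is to compute the first two moments of $X$ exactly, recognise them as the functionals occurring in conditions {\bf BAL} and {\bf MIN} evaluated at the overlap, and then invoke the (strict) convexity of $\Lambda$: plain Jensen gives \eqref{eq:derivPreCond}, while the quantitative Jensen gap --- controlled by $\Erw[(X-\xi)^2]$ and hence, through {\bf MIN}, by the overlap --- gives \eqref{eq:derivPostCond}.

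First I would compute the moments. Since the $\vec i_j$ are independent, $\Erw_{\vec i}\bck{\prod_{j=1}^k\vecone\{\SIGMA(x_{\vec i_j})=\tau_j\}}_{\hat\G}=\bck{\prod_{j=1}^k\rho_\SIGMA(\tau_j)}_{\hat\G}$ and, for two independent samples, $\Erw_{\vec i}\bck{\prod_{j=1}^k\vecone\{\SIGMA_1(x_{\vec i_j})=\tau_j,\,\SIGMA_2(x_{\vec i_j})=\tau_j'\}}_{\hat\G}=\bck{\prod_{j=1}^k\rho_{\SIGMA_1,\SIGMA_2}(\tau_j,\tau_j')}_{\hat\G}$. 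Writing $\phi(\rho)=\sum_{\tau\in\Omega^k}\Erw[\PSI(\tau)]\prod_{i=1}^k\rho(\tau_i)$ (the function from {\bf BAL} and \Lem~\ref{Lemma_F}) and $\chi(\rho)=\sum_{\sigma,\tau\in\Omega^k}\Erw[\PSI(\sigma)\PSI(\tau)]\prod_{i=1}^k\rho(\sigma_i,\tau_i)$ (the function from {\bf MIN}), this yields the exact identities $\Erw[X]=\Erw\bck{\phi(\rho_\SIGMA)}_{\hat\G}$ and $\Erw[X^2]=\Erw\bck{\chi(\rho_{\SIGMA_1,\SIGMA_2})}_{\hat\G}$. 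By the consequence of the Nishimori identity \Lem~\ref{lem:nishimori} noted in \Sec~\ref{Sec_prelims}, $\Erw\bck{\phi(\rho_\SIGMA)}_{\hat\G}=\Erw[\phi(\rho_{\hat\SIGMA})]$; since {\bf BAL} gives $\phi\le\phi(\bar\rho)=\xi$ on $\cP(\Omega)$, \Cor~\ref{lem:conc_coloring} shows $\rho_{\hat\SIGMA}$ is nearly balanced w.h.p., and $\phi$ is bounded, one gets $\Erw[X]=\xi-o(1)$ uniformly for $d\le\dc+1$. I would also record that summing the {\bf SYM} identity over $\omega\in\Omega$ forces $\sum_{\tau\in\Omega^k}\psi(\tau)=q^k\xi$ for every $\psi\in\Psi$, so that $\phi(\bar\rho)=\xi$ and $\chi(\bar\rho)=q^{-2k}(q^k\xi)^2=\xi^2$.

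Now \eqref{eq:derivPreCond} is immediate from convexity of $\Lambda$ on $(0,\infty)$: $\Erw[\Lambda(X)]\ge\Lambda(\Erw[X])=\Lambda(\xi-o(1))=\xi\ln\xi+o(1)$, so \Lem~\ref{lemma_derivatives} yields $\tfrac kn\frac{\partial}{\partial d}\Erw[\ln Z(\hat\G)]\ge\ln\xi+o(1)$. For \eqref{eq:derivPostCond} I would exploit that $\Lambda''(x)=1/x\ge\tfrac12$ on $(0,2]$ and that $X\in(0,2)$ (a convex combination of values $\PSI(\tau)\in(0,2)$), which gives $\Lambda(x)\ge\Lambda(\xi)+\Lambda'(\xi)(x-\xi)+\tfrac14(x-\xi)^2$ for all such $x$; taking expectations and absorbing the linear term ($|\Lambda'(\xi)|=O(1)$ and $\Erw[X]-\xi=o(1)$) gives $\Erw[\Lambda(X)]\ge\Lambda(\xi)+\tfrac14\Erw[(X-\xi)^2]-o(1)$. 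Here $\Erw[(X-\xi)^2]=\Erw\bck{\chi(\rho_{\SIGMA_1,\SIGMA_2})}_{\hat\G}-2\xi\Erw\bck{\phi(\rho_\SIGMA)}_{\hat\G}+\xi^2=\Erw\bck{\chi(\rho_{\SIGMA_1,\SIGMA_2})-\chi(\bar\rho)}_{\hat\G}+o(1)$. Because the marginals $\rho_{\SIGMA_1},\rho_{\SIGMA_2}$ of $\rho_{\SIGMA_1,\SIGMA_2}$ have the law of $\rho_{\hat\SIGMA}$, they are nearly balanced w.h.p., so by Fact~\ref{Lemma_balanceRowCols} $\rho_{\SIGMA_1,\SIGMA_2}$ lies within $o(1)$ of $\cR(\Omega)$ w.h.p.; combining this with {\bf MIN}, the continuity of $\chi$ and compactness of $\cR(\Omega)$ produces, for each $\eps>0$, constants $c(\eps)>0$ and $\delta(\eps)>0$ with $\chi(\rho)\ge\chi(\bar\rho)+c(\eps)$ whenever $\tv{\rho-\bar\rho}\ge\eps/2$ and $\rho$ lies within $\delta(\eps)$ of $\cR(\Omega)$, while $\chi(\rho)\ge\chi(\bar\rho)-o(1)$ on the rest of the w.h.p. balanced event (and $\chi$ is bounded, so the $o(1)$-probability complement costs only $o(1)$). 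Finally, if $\Erw\bck{\tv{\rho_{\SIGMA,\TAU}-\bar\rho}}_{\hat\G}>\eps$ then, since $\tv{\cdot}\le1$, the event $\{\tv{\rho_{\SIGMA_1,\SIGMA_2}-\bar\rho}>\eps/2\}$ has probability at least $\eps/2$, hence $\Erw\bck{\chi(\rho_{\SIGMA_1,\SIGMA_2})-\chi(\bar\rho)}_{\hat\G}\ge\tfrac\eps4 c(\eps)-o(1)$, and feeding this back through \Lem~\ref{lemma_derivatives} gives \eqref{eq:derivPostCond} with, say, $\delta=\eps c(\eps)/(16k\xi)$.

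I expect the main obstacle to be the quantitative exploitation of {\bf MIN}: upgrading the qualitative statement ``$\bar\rho$ is the \emph{unique} global minimiser of $\chi$ on $\cR(\Omega)$'' to a bound $\chi(\rho)\ge\chi(\bar\rho)+c(\eps)$ on an honest neighbourhood of $\cR(\Omega)$, which requires a compactness argument together with Fact~\ref{Lemma_balanceRowCols} and continuity of $\chi$, and then the bookkeeping needed to keep every $o(1)$ uniform over $d\le\dc+1$ (using the uniform concentration of \Cor~\ref{lem:conc_coloring} and the uniform error term of \Lem~\ref{lemma_derivatives}) and to transfer the fixed-$m$ Nishimori and concentration statements to the Poissonised model $\hat\G$, for which it suffices that $\vec m\le Dn/k$ w.h.p.
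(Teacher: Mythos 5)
Your proposal is correct and follows essentially the same route as the paper's own proof: \Lem~\ref{lemma_derivatives} together with Jensen and the Nishimori identity for \eqref{eq:derivPreCond}, and then a second-order Taylor expansion of $\Lambda$ around $\xi$ (using $\Lambda''\ge 1/2$ on $(0,2]$), the overlap identity $\Erw[X^2]=\Erw\bck{\chi(\rho_{\SIGMA_1,\SIGMA_2})}_{\hat\G}$, Fact~\ref{Lemma_balanceRowCols}, and condition {\bf MIN} for \eqref{eq:derivPostCond}. The one place you anticipate a difficulty, upgrading {\bf MIN} to a quantitative gap $\chi(\rho)\ge\chi(\bar\rho)+c(\eps)$ near $\cR(\Omega)$ via continuity and compactness, is indeed what the paper does implicitly, and your treatment of it is sound.
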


\noindent
For the special case of the Potts model a result like \Cor~\ref{lem:badoverlaps1} was known~\cite[\Lem~4.10]{CKPZ}.
The proof was relatively straightforward because in the special case it is possible to write a fairly explicit formula for the expression
$\Lambda\bc{\bck{\PSI(\SIGMA(x_{\vec i_1}),\ldots,\SIGMA(x_{\vec i_k}))}_{\hat\G}}$.
Remarkably, the following proof shows that we can do without an explicit formula
thanks to a mildly tricky application of Jensen's inequality in combination with condition {\bf MIN}.

\begin{proof}[Proof of \Cor~\ref{lem:badoverlaps1}]
Since $\Lambda$ is convex, Jensen's inequality gives
	\begin{align}\label{eqlem:badoverlaps7}
	\Erw\brk{\Lambda\bc{\bck{\PSI(\SIGMA(x_{\vec i_1}),\ldots,\SIGMA(x_{\vec i_k}))}_{\hat\G}}}
		\leq\Lambda\bc{\Erw\bck{\PSI(\SIGMA(x_{\vec i_1}),\ldots,\SIGMA(x_{\vec i_k}))}_{\hat\G}}.
	\end{align}
Hence, using the Nishimori identity \eqref{eq:nishimori}  and \Cor~\ref{lem:conc_coloring}, we obtain
	\begin{align}\label{eqlem:badoverlaps8}
	\Erw\bck{\PSI(\SIGMA(x_{\vec i_1}),\ldots,\SIGMA(x_{\vec i_k}))}_{\hat\G}=
		\Erw\brk{\PSI(\hat\SIGMA_{n,\vec m}(x_{\vec i_1}),\ldots,\hat\SIGMA_{n,\vec m}(x_{\vec i_k}))}=\xi+o(1).
	\end{align}
Combining (\ref{eqlem:badoverlaps6}), \eqref{eqlem:badoverlaps7} and (\ref{eqlem:badoverlaps8}) with \Lem~\ref{lemma_derivatives} gives (\ref{eq:derivPreCond}).

To prove the second assertion we expand $\Lambda(x)$ to the second order around $\xi$ to obtain
	\begin{align}\label{eqlem:badoverlaps9}
	\Lambda(x)&=\Lambda(\xi)+(x-\xi)\Lambda'(\xi)+\frac{1}{2}(x-\xi)^2\Lambda''(\zeta_x)\quad\mbox{for some $\zeta_x$ between $\xi$ and $x$.}
	\end{align}
Since $\Lambda''(x)\geq1/2$ for all $x\in(0,2)$,  (\ref{eqlem:badoverlaps9}) and (\ref{eqlem:badoverlaps8}) yield
	\begin{align}\nonumber
	\Erw\brk{\Lambda\bc{\bck{\PSI(\SIGMA(x_{\vec i_1}),\ldots,\SIGMA(x_{\vec i_k}))}_{\hat\G}}}&\geq \Lambda(\xi)+
		\Lambda'(\xi)\brk{\Erw\bck{\PSI(\SIGMA(x_{\vec i_1}),\ldots,\SIGMA(x_{\vec i_k}))}_{\hat\G}-\xi}+
		\frac14\Erw\brk{\bc{\bck{\PSI(\SIGMA(x_{\vec i_1}),\ldots,\SIGMA(x_{\vec i_k}))}_{\hat\G}-\xi}^2}\\
	&=o(1)+\Lambda(\xi)+
		\frac14\Erw\brk{\bck{\PSI(\SIGMA(x_{\vec i_1}),\ldots,\SIGMA(x_{\vec i_k}))}_{\hat\G}^2}-\frac{\xi^2}4.
	\label{eqlem:badoverlaps10}
	\end{align}
Further, with $\SIGMA_1,\SIGMA_2$ denoting two independent samples from the Gibbs measure of $\hat\G$ we obtain
	\begin{align}\label{eqlem:badoverlaps11}
	\Erw\brk{\bck{\PSI(\SIGMA(x_{\vec i_1}),\ldots,\SIGMA(x_{\vec i_k}))}_{\hat\G}^2}&=
		\Erw\bck{\PSI(\SIGMA_1(x_{\vec i_1}),\ldots,\SIGMA_1(x_{\vec i_k}))\PSI(\SIGMA_2(x_{\vec i_1}),\ldots,\SIGMA_2(x_{\vec i_k}))}_{\hat\G}.
	\end{align}
Since $\vec i_1,\ldots,\vec i_k$ are chosen uniformly and independently of each other and of $\hat\G$ and $\PSI$, we can cast
(\ref{eqlem:badoverlaps11}) in terms of the overlap $\rho_{\SIGMA_1,\SIGMA_2}$ as
	\begin{align}\label{eqlem:badoverlaps12}
	\Erw\bck{\PSI(\SIGMA_1(x_{\vec i_1}),\ldots,\SIGMA_1(x_{\vec i_k}))\PSI(\SIGMA_2(x_{\vec i_1}),\ldots,\SIGMA_2(x_{\vec i_k}))}_{\hat\G}&=
		\sum_{\sigma,\tau\in\Omega^k}\Erw\bck{\PSI(\sigma)\PSI(\tau)\prod_{i=1}^k\rho_{\SIGMA_1,\SIGMA_2}(\sigma_i,\tau_i)}_{\hat\G}.
	\end{align}
Further, \Cor~\ref{lem:conc_coloring} and the Nishimori identity \eqref{eq:nishimori} yield
	$\Erw\bck{\TV{\rho_{\SIGMA_1}-\bar\rho}+\TV{\rho_{\SIGMA_2}-\bar\rho}}_{\hat\G}=o(1)$, whence
	\begin{align}\label{eqLemma_balanceRowCols_I}
	\Erw\brk{\sum_{\sigma\in\Omega}\bck{\abs{\sum_{\tau\in\Omega}\rho_{\SIGMA_1,\SIGMA_2}(\sigma,\tau)}+
		\abs{\sum_{\tau\in\Omega}\rho_{\SIGMA_1,\SIGMA_2}\bc{\tau,\sigma}}}_{\hat\G}}=o(1).
	\end{align}
Moreover, the function
		$\rho\in\cP(\Omega^2)\mapsto\sum_{\sigma, \tau\in \Omega^k}\Erw[\PSI(\sigma)\PSI(\tau)]  \prod_{i\in [k]} \rho(\sigma_i,\tau_i)$
is uniformly continuous.
Therefore, if $\Erw\bck{\TV{\rho_{\SIGMA,\TAU}-\bar\rho}}>\eps$, then 
Fact~\ref{Lemma_balanceRowCols}, (\ref{eqLemma_balanceRowCols_I}) and conditions {\bf MIN} and {\bf SYM}  yield $\delta=\delta(\eps)>0$ such that
	\begin{align}\label{eqlem:badoverlaps13}
	\sum_{\sigma,\tau\in\Omega^k}\Erw\bck{\PSI(\sigma)\PSI(\tau)\prod_{i=1}^k\rho_{\SIGMA_1,\SIGMA_2}\bc{\sigma_i,\tau_i}}_{\hat\G}
		>\delta+o(1)+q^{-2k}\sum_{\sigma,\tau\in\Omega^k}\Erw[\PSI(\sigma)\PSI(\tau)]=
		\xi^2+\delta+o(1).
	\end{align}
Finally, (\ref{eqlem:badoverlaps6}), (\ref{eqlem:badoverlaps10}), (\ref{eqlem:badoverlaps11}),
	(\ref{eqlem:badoverlaps12}) and (\ref{eqlem:badoverlaps13}) yield (\ref{eq:derivPostCond}).
\end{proof}

\begin{corollary}\label{Lemma_naiveLowerBound}
For all $d>0$ we have $\lim_{n\to\infty}\frac1n\Erw[\ln Z(\hat\G)]\geq \ln q+\frac dk\ln\xi$.
\end{corollary}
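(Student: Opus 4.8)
The plan is to derive the bound by integrating the lower bound on the derivative of the free energy of $\hat\G$ supplied by \Cor~\ref{lem:badoverlaps1}, anchored at the trivial value at $d=0$. Write $g_n(d)=\frac1n\Erw[\ln Z(\hat\G(n,\vm_d,P))]$, where $\vm_d$ has distribution $\Po(dn/k)$. Since $\vm_0=0$ almost surely, the factor graph $\hat\G(n,0,P)$ carries no constraint nodes, so $Z(\hat\G(n,0,P))=q^n$ and hence $g_n(0)=\ln q$ for every $n$. This is the base point of the integration.

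Next I would invoke \Lem~\ref{lemma_derivatives}, which asserts that $g_n$ is differentiable in $d$ with $g_n'(d)$ given by \eqref{eqlem:badoverlaps6}; its proof differentiates the Poisson series term by term, exactly as in the computation \eqref{eq:nullmodelderiv1} in the proof of \Thm~\ref{Thm_overlap}. Because $\bck{\PSI(\SIGMA(x_{\vec i_1}),\ldots,\SIGMA(x_{\vec i_k}))}_{\hat\G}$ always lies in the interval $(0,2)$, on which $\Lambda$ is bounded, the right-hand side of \eqref{eqlem:badoverlaps6} is uniformly bounded; hence $g_n'$ is bounded on every bounded range of $d$ and $g_n(d)=g_n(0)+\int_0^d g_n'(t)\,\dd t$. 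Plugging in the estimate \eqref{eq:derivPreCond} of \Cor~\ref{lem:badoverlaps1}, namely $g_n'(t)\ge\frac1k\ln\xi+o(1)$ with the $o(1)$ uniform over the relevant range of $t$, gives $g_n(d)\ge\ln q+\frac dk\ln\xi+o(1)$, so that $\liminf_{n\to\infty}g_n(d)\ge\ln q+\frac dk\ln\xi$. Finally, since $P$ satisfies {\bf SYM}, {\bf BAL} and {\bf POS}, \Thm~\ref{Thm_plantedFreeEnergy} guarantees that $\lim_{n\to\infty}g_n(d)$ exists, so this $\liminf$ is a genuine limit and the corollary follows.

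The one formal gap to close is that \Lem~\ref{lemma_derivatives} and \Cor~\ref{lem:badoverlaps1} are stated only for $d\le\dc+1$, whereas the claim concerns all $d>0$. This is harmless: both proofs use only \Lem~\ref{Cor_F}, \Cor~\ref{lem:conc_coloring} and the term-by-term differentiation of the Poisson series, each of which is valid uniformly for $m\le Dn/k$ for any fixed $D$; hence \eqref{eqlem:badoverlaps6} and \eqref{eq:derivPreCond} hold uniformly on $[0,D]$, and running the above integration on $[0,d]$ dispatches any $d>0$. Accordingly, the only genuinely delicate ingredient is not internal to this argument but is the justification of the fundamental theorem of calculus for $g_n$ — that is, the differentiability and local integrability of $g_n'$ — which is already established within the proof of \Lem~\ref{lemma_derivatives}.
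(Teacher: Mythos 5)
Your proof is correct and is precisely the paper's argument: the paper's own proof of this corollary reads, in full, ``This follows from (\ref{eq:derivPreCond}) by integrating,'' which is the integration you carry out, anchored at $Z(\hat\G(n,0,P))=q^n$ so that $g_n(0)=\ln q$, followed by the existence of the limit from \Thm~\ref{Thm_plantedFreeEnergy}. Your closing observation — that \Cor~\ref{lem:badoverlaps1} and \Lem~\ref{lemma_derivatives} are nominally stated only for $d\leq\dc+1$, while the corollary claims all $d>0$ — flags a genuine imprecision in the paper's one-line proof, and your resolution is the right one: both proofs rest on \Lem~\ref{Cor_F}, \Cor~\ref{lem:conc_coloring}, (\ref{eqBounded}) and the term-by-term Poisson differentiation, all of which hold uniformly for $m\leq Dn/k$ for any fixed $D$, so the bound $g_n'(t)\geq\frac1k\ln\xi+o(1)$ is available uniformly on $[0,d]$ for every fixed $d$.
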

\begin{proof}
This follows from (\ref{eq:derivPreCond}) by integrating.
\end{proof}

\noindent
Finally, to prove Proposition \ref{prop:belowcond-unif} we combine \Lem~\ref{Lemma_mon} and \Cor~\ref{lem:badoverlaps1} to argue that if
$\Erw\bck{\TV{\rho_{\SIGMA,\TAU}-\bar\rho}}_{\hat\G}$ is bounded away from $0$ for some $d<\dc$,
then in fact for all $d$ in a small interval the derivative $\frac{1}{n}\frac{\partial}{\partial d}\Erw[\ln Z(\hat\G)]$ strictly exceeds $k^{-1}\ln\xi$.
Consequently, $n^{-1}\Erw[\ln Z(\hat\G)]$ is strictly greater than $\ln q+\frac dk\ln\xi$ for some $d<\dc$,
in contradiction to \Thm~\ref{Thm_plantedFreeEnergy}.

\begin{proof}[Proof of Proposition \ref{prop:belowcond-unif}]
Assume that there exist $D_0<\dc$ and $\eps>0$ such that
	$$\limsup_{n\to\infty}\Erw\bck{\TV{\rho_{\SIGMA,\TAU}-\bar\rho}}_{\hat\G(n,\vm_{D_0}(n))}>\eps.$$
Then \Lem~\ref{Lemma_mon} shows that there is $\delta>0$ such that
with $D_1=D_0+3\delta/2<\dc$ for infinitely many $n$ we have
	\begin{align*}
	\Erw\bck{\TV{\rho_{\SIGMA,\TAU}-\bar\rho}}_{\hat\G(n,\vm)}>\delta+o(1)\qquad\mbox{for all }D_0+4\delta/3<d<D_1.
	\end{align*}
But then Corollaries~\ref{lem:badoverlaps1} and~\ref{Lemma_naiveLowerBound} imply that for infinitely many $n$,
	\begin{align*}
	\frac1n\Erw[\ln Z(\hat\G(n,\vm_{D_1}(n)))]&=\frac1n\Erw[\ln Z(\hat\G(n,\vm_{D_0}(n)))]+
		\frac1n\int_{D_0}^{D_1}\frac{\partial}{\partial d}\Erw[\ln Z(\hat\G)]\dd d
		\geq\ln q+\frac{D_1}k\ln\xi+\Omega(1).
	\end{align*}
Consequently, $$\limsup_{n\to\infty}\frac1n\Erw[\ln Z(\hat\G(n,\vm_{D_1}))]>\ln q+\frac{D_1}{k}\ln\xi.$$
Therefore, {\Thm~\ref{Thm_plantedFreeEnergy}} yields
	$\sup_{\pi\in\Pomast}\cB(D_1,P,\pi)>\ln q+\frac{D_1}{k}\ln\xi,$
in contradiction to $D_1<\dc$.
\end{proof}

\subsection{Proof of \Lem~\ref{Lemma_mon}}\label{Sec_Lemma_mon}
The proof,
which is a non-trivial generalization of the argument for \cite[\Lem~4.6]{CKPZ} for the Potts model,
 is based on a coupling of the random factor graphs $\hat\G(n,m)$ and $\hat\G(n,m')$ with different numbers $m,m'$ of constraint nodes;
to set up the coupling we use the Nishimori identity~(\ref{eq:nishimori}). 
Thus, as a first step we need a coupling of $\hat\SIGMA_{n,m}$ and $\hat\SIGMA_{n,m'}$.

\begin{lemma}\label{Lemma_mon_cplg}
For any $\eta>0$, $d>0$ there is $\delta>0$ such that
	\begin{align}
	\limsup_{n\to\infty}\max\cbc{\dTV\bc{\hat\SIGMA_{n,m},\hat\SIGMA_{n,m'}}:|m-dn/k|+|m'-dn/k|<\delta n}<\eta.
		\label{eq:coupledD}
	\end{align}
\end{lemma}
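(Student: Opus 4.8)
The plan is to prove \Lem~\ref{Lemma_mon_cplg} by a one-step telescoping argument. Since $\dTV$ is symmetric and satisfies the triangle inequality, for $m\le m'$ with $|m-dn/k|+|m'-dn/k|<\delta n$ we have $m'-m<\delta n$ and thus
\[
\dTV\bc{\hat\SIGMA_{n,m},\hat\SIGMA_{n,m'}}\le\sum_{j=m}^{m'-1}\dTV\bc{\hat\SIGMA_{n,j},\hat\SIGMA_{n,j+1}},
\]
so it suffices to show that $\dTV(\hat\SIGMA_{n,j},\hat\SIGMA_{n,j+1})=O(1/n)$ uniformly for all $j\le Dn/k$ with $D=d+k$, with an implied constant depending only on $P$ and $d$. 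To this end I would first recall from \eqref{eq:NishimoriS} and the computation in the proof of \Lem~\ref{Cor_F} that $\Erw[\psi_{\G(n,j)}(\sigma)]=\phi(\rho_\sigma)^j$ with $\phi$ as in \eqref{eqF}, so that $\pr[\hat\SIGMA_{n,j}=\sigma]=\phi(\rho_\sigma)^j/\sum_\tau\phi(\rho_\tau)^j$ and hence
\[
\pr\brk{\hat\SIGMA_{n,j+1}=\sigma}=\pr\brk{\hat\SIGMA_{n,j}=\sigma}\cdot\frac{\phi(\rho_\sigma)}{\bar\phi_j},\qquad\text{where}\quad\bar\phi_j=\Erw\brk{\phi(\rho_{\hat\SIGMA_{n,j}})}.
\]
Reading off the total variation distance and using $|\phi(\rho)-\bar\phi_j|\le|\phi(\rho)-\xi|+|\bar\phi_j-\xi|$ together with $\phi_{\min}:=\inf_{\mu\in\cP(\Omega)}\phi(\mu)>0$ (from \Lem~\ref{Lemma_F}), one gets
\[
\dTV\bc{\hat\SIGMA_{n,j},\hat\SIGMA_{n,j+1}}=\frac1{2\bar\phi_j}\Erw\brk{\abs{\phi(\rho_{\hat\SIGMA_{n,j}})-\bar\phi_j}}\le\frac1{\phi_{\min}}\Erw\brk{\abs{\phi(\rho_{\hat\SIGMA_{n,j}})-\xi}}.
\]

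It remains to bound $\Erw|\phi(\rho_{\hat\SIGMA_{n,j}})-\xi|$ by $O(1/n)$. Since $\phi$ is a polynomial with $D\phi(\bar\rho)=k\xi\vecone$ orthogonal to the tangent space of the simplex (\Lem~\ref{Lemma_F}), Taylor's theorem on $\cP(\Omega)$ provides a constant $K=K(P)$ with $|\phi(\mu)-\xi|\le K\TV{\mu-\bar\rho}^2$ for all $\mu\in\cP(\Omega)$. Moreover {\bf BAL} gives $\phi(\rho_\sigma)\le\xi$ for every $\sigma$, while \Lem~\ref{Cor_F} (applicable because $j\le Dn/k$) yields $\sum_\tau\phi(\rho_\tau)^j=\Erw[Z(\G(n,j))]\ge c\,q^n\xi^j$ for some $c=c(P,d)>0$. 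Hence
\[
\Erw\brk{\abs{\phi(\rho_{\hat\SIGMA_{n,j}})-\xi}}\le K\,\frac{\sum_\sigma\TV{\rho_\sigma-\bar\rho}^2\,\phi(\rho_\sigma)^j}{\sum_\tau\phi(\rho_\tau)^j}\le\frac Kc\cdot\frac1{q^n}\sum_\sigma\TV{\rho_\sigma-\bar\rho}^2=\frac Kc\,\Erw\brk{\TV{\rho_{\SIGMA}-\bar\rho}^2},
\]
where $\SIGMA\in\Omega^{V_n}$ is uniformly random. For uniform $\SIGMA$ the coordinates $|\SIGMA^{-1}(\omega)|$ are $\Bin(n,1/q)$-distributed, so $\Erw[(\rho_\SIGMA(\omega)-1/q)^2]\le1/n$ and by Cauchy--Schwarz $\Erw[\TV{\rho_\SIGMA-\bar\rho}^2]\le q^2/(4n)$. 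Altogether $\dTV(\hat\SIGMA_{n,j},\hat\SIGMA_{n,j+1})\le C_2/n$ with $C_2=Kq^2/(4c\phi_{\min})$ depending only on $P$ and $d$.

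Plugging this into the telescoping bound gives $\dTV(\hat\SIGMA_{n,m},\hat\SIGMA_{n,m'})\le(m'-m)C_2/n\le C_2\delta$ for all large $n$ and all admissible $m,m'$, so choosing $\delta=\delta(\eta,P,d)=\min\{\eta/(2C_2),d/(2k)\}$ makes the right-hand side smaller than $\eta$ and \eqref{eq:coupledD} follows upon taking $\limsup_{n\to\infty}$. The one point that requires care is that a \emph{single} step must be bounded by $O(1/n)$ with a constant that does not depend on $n$ (so that the sum of $\Theta(\delta n)$ steps is $O(\delta)$): the concentration estimate \eqref{eq:conc_col} only gives $O((\ln n)^2/n)$ after squaring, which would not survive the summation, whereas the crude pointwise bound $\phi(\rho_\sigma)\le\xi$ combined with the \emph{exact} first-moment lower bound of \Lem~\ref{Cor_F} reduces the required second-moment estimate to the trivial $\Theta(1/n)$ fluctuation of the empirical spin distribution under the uniform measure. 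Note also that $c$, $K$ and $\phi_{\min}$ — and hence $C_2$ — are fixed once $D=d+k$ is chosen, before $\delta$ enters, so no circularity arises.
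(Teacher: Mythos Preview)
Your proof is correct and takes a genuinely different route from the paper's. The paper compares the two measures directly at a fixed~$\sigma$: it uses \Cor~\ref{lem:conc_coloring} to localize both $\hat\SIGMA_{n,m}$ and $\hat\SIGMA_{n,m'}$ to the set $\{\TV{\rho_\sigma-\bar\rho}\le C/\sqrt n\}$, then Taylor-expands $\ln\phi(\rho_\sigma)$ around $\bar\rho$ to show that on this set the log-ratio $m\ln\phi(\rho_\sigma)-m'\ln\phi(\rho_\sigma)$ (and hence the probability ratio) is $O(\delta)$, and finally handles the normalizing constants by the same localization. Your argument instead telescopes to one-step increments and shows $\dTV(\hat\SIGMA_{n,j},\hat\SIGMA_{n,j+1})=O(1/n)$ with a constant independent of $n$ and $j\le Dn/k$. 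The decisive step that makes the telescoping work is your domination $\pr[\hat\SIGMA_{n,j}=\sigma]=\phi(\rho_\sigma)^j/\Erw[Z(\G(n,j))]\le 1/(c\,q^n)$ via {\bf BAL} and \Lem~\ref{Cor_F}, which reduces the second-moment bound on $\TV{\rho_{\hat\SIGMA_{n,j}}-\bar\rho}$ to the trivial $\Theta(1/n)$ fluctuation under the \emph{uniform} measure; as you correctly flag, using only \eqref{eq:conc_col} would give $O((\ln n)^2/n)$ per step and fail after summing $\Theta(\delta n)$ terms. What your approach buys is a clean, case-free bound $\dTV\le C_2\delta$ with no split into ``balanced'' and ``unbalanced'' $\sigma$; what the paper's direct comparison buys is that it avoids the telescoping altogether and works at the level of a single probability ratio. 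One cosmetic point: your cap $\delta\le d/(2k)$ does not by itself guarantee $\delta<1$ when $d$ is large, but since you already fix $D=d+k$ you just need $\delta\le 1$ (equivalently $m'\le Dn/k$), which is harmless to add.
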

\begin{proof}
Given $\eta>0$ pick a sufficiently small $\beta=\beta(\eta)>0$.
Let $\phi$ be the function from (\ref{eqF}).
Because the constraint nodes of $\G$ are chosen independently, for all $m\geq0$, $\sigma\in\Omega^{V_n}$ we have
	\begin{align}
	\ln\Erw[\psi_{\G(n,m)}(\sigma)]=m\ln\phi(\rho_\sigma).\label{eq_mmpRatios0}
	\end{align}
Furthermore, by \Cor~\ref{lem:conc_coloring} there exists $C>0$ such that
	\begin{equation}\label{eq_mmpRatios5}
	\pr\brk{\TV{\rho_{\hat\SIGMA_{n,m}}-\bar\rho}> C/\sqrt n}
		+\pr\brk{\TV{\rho_{\hat\SIGMA_{n,m'}}-\bar\rho}> C/\sqrt n}\leq2\beta\qquad\mbox{for all $m,m'\leq (d+1)n/k$},
	\end{equation}
which implies that
	\begin{equation}\label{eq_mmpRatios1}
	\sum_{\sigma\in\Omega^{V_n}}\vecone\{\TV{\rho_\sigma-\bar\rho}\leq C/\sqrt n\}
		\Erw[\psi_{\G(n,m)}(\sigma)]\geq(1-\beta)\Erw[Z(\G(n,m))]\qquad\mbox{for all $m\leq (d+1)n/k$}.
	\end{equation}
Applying \Lem~\ref{Lemma_F} to expand (\ref{eq_mmpRatios0}) to the second order, we obtain $C'>0$
	such that for all $m$ and all $\sigma$ satisfying $\TV{\rho_\sigma-\bar\rho}\leq C/\sqrt n$,
	\begin{align*}
	\abs{\ln\Erw[\psi_{\G(n,m)}(\sigma)]-m\bc{\ln\xi+qk(k-1)\scal{\Phi(\rho_\sigma-\bar\rho)}{\rho_\sigma-\bar\rho}/2}}\leq C'm/n^{3/2}.
	\end{align*}
Hence, choosing $\delta=\delta(\beta,C,d)>0$ small enough, we can ensure that for all $m,m'$ such that $|m-dn/k|+|m'-dn/k|\leq\delta$ and
all $\sigma$ satisfying $\TV{\rho_\sigma-\bar\rho}\leq C/\sqrt n$ the estimate
	\begin{align}\label{eq_mmpRatios3}
	\abs{\ln\Erw[\psi_{\G(n,m)}(\sigma)]-\ln\Erw[\psi_{\G(n,m')}(\sigma)]}
	\leq2\delta\bc{ nq(d+1)(k-1)\abs{\scal{\Phi(\rho_\sigma-\bar\rho)}{\rho_\sigma-\bar\rho}}/2+ C'/\sqrt n}<\beta
	\end{align}
holds.
Further, combining (\ref{eq_mmpRatios1}) and (\ref{eq_mmpRatios3}), we obtain that
	\begin{align}\label{eq_mmpRatios4}
	\abs{\ln \Erw[Z(\G(n,m))]-\ln\Erw[Z(\G(n,m'))]}&\leq\eta/4,
	\end{align}
provided that $|m-dn/k|+|m'-dn/k|\leq\delta$ and $\beta=\beta(\eta)$ was chosen small enough.
Moreover, combining (\ref{eq_mmpRatios3}) and (\ref{eq_mmpRatios4}), we conclude that
if $|m-dn/k|+|m'-dn/k|\leq\delta$ and $\TV{\rho_\sigma-\bar\rho}\leq C/\sqrt n$, then
	\begin{align}\label{eq_mmpRatios6}
	\exp(-\eta/2)\leq\frac{\pr\brk{\hat\SIGMA_{n,m}=\sigma}}{\pr\brk{\hat\SIGMA_{n,m'}=\sigma}}&=
		\frac{\Erw[\psi_{\G(n,m)}(\sigma)]\cdot\Erw[Z(\G(n,m')]}{\Erw[\psi_{\G(n,m')}(\sigma)]\cdot\Erw[Z(\G(n,m)]}\leq\exp(\eta/2).
	\end{align}
Finally, the assertion follows from (\ref{eq_mmpRatios5}) and (\ref{eq_mmpRatios6}).
\end{proof}

\begin{proof}[Proof of \Lem~\ref{Lemma_mon}]
Assume that $m\in\cM(d)$ satisfies (\ref{eqLemma_mon_ass}).
Pick $\eta=\eta(\eps)>0$ small enough, let $\delta=\delta(\eta)>0$ be the number promised by \Lem~\ref{Lemma_mon_cplg}
and assume that $n$ is a large enough number such that $|m-dn/k|<\delta n/2$ and
	\begin{equation}\label{eq_Lemma_mon_666}
	\Erw\bck{\TV{\rho_{\SIGMA_1,\SIGMA_2}-\bar\rho}}_{\hat\G(n,m)}>\eps/2.
	\end{equation}
Further, suppose that $m'>m$ is such that $|m'-dn/k|<\delta n/2$.
Then by \Lem~\ref{Lemma_mon_cplg} we can couple $\hat\SIGMA_{n,m}$ and $\hat\SIGMA_{n,m'}$ such that 
the event $\cA=\{\hat\SIGMA_{n,m}=\hat\SIGMA_{n,m'}\}$ satisfies
	\begin{equation}\label{eq_Lemma_mon_667}
	\pr\brk{\cA}>1-\eta.
	\end{equation}

We extend this to a coupling of a pair of factor graphs $\G',\G''$ such that $\G'$ is distributed as $\G^\ast(n,m',\hat\SIGMA_{n,m'})$ and 
	$\G''$ is distributed as $\G^\ast(n,m,\hat\SIGMA_{n,m})$ as follows.
First choose $\G'$ from the distribution $\G^\ast(n,m',\hat\SIGMA_{n,m'})$.
Then obtain $\G'''$ from $\G'$ by deleting a uniformly chosen set of $m'-m$ constraint nodes.
On the event $\cA$ set $\G''=\G'''$.
If $\cA$ does not occur, then choose the constraint nodes of $\G''$ independently of those of $\G'$ in such a way that $\G''$ is distributed as $\G^\ast(n,m,\hat\SIGMA_{n,m})$.
 
Now, (\ref{eq_Lemma_mon_666}) implies that with probability at least $\eps/2$ the random graph $\G''$ is such that a random sample $\TAU$ from $\mu_{\G''}$ satisfies
$\langle{\tv{\rho_{\SIGMA,\TAU}-\bar\rho}}\rangle_{\G''}\geq\eps$.
By \Cor~\ref{lem:conc_coloring} and the Nishimori identity~\eqref{eq:nishimori}, with probability $1-o(1)$ this random sample $\TAU$  is nearly balanced.
Consequently, there exists a map $G\mapsto\tau_G$ that provides a nearly balanced $\tau_{G}$ for every factor graph $G$ such that
			$\pr[\langle{\tv{\rho_{\SIGMA,\tau_{\G''}}-\bar\rho}\rangle}_{\G''}>\eps]\geq\eps/2$.
Thus, $\Erw\langle{\tv{\rho_{\SIGMA,\tau_{\G''}}-\bar\rho}\rangle}_{\G''}>\eps^2/2.$
Hence, assuming that $\eta$ was chosen small enough, we obtain from (\ref{eq:coupledD}) and the Nishimori identity~(\ref{eq:nishimori}) that
	\begin{align}\label{eqNorOvl1}
	\Erw\brk{\TV{\rho_{\hat\SIGMA_{n,m},\tau_{\G''}}-\bar\rho}|\cA}>\eps^2/3.
	\end{align}

Finally, on the event $\cA$ the factor graph $\G''=\G'''$ is obtained from $\G'$ by deleting a few random constraint nodes.
Thus, for a graph $\G'$ let $\TAU_{\G'}$ be a random assignment with distribution $\tau_{\G'''}$.
Then (\ref{eqNorOvl1}) implies
	$$\Erw\brk{\TV{\rho_{\hat\SIGMA_{n,m'},\TAU_{\G'}}-\bar\rho}|\cA}>\eps^2/3.$$
Hence, by the Nishimori identity~(\ref{eq:nishimori}) and (\ref{eq_Lemma_mon_667}),
	\begin{align}\label{eq_using_Lemma_nbalanced_CKPZ}
	\Erw\bck{\TV{\rho_{\SIGMA,\TAU_{\G'}}-\bar\rho}}_{\G'}
	=\Erw{\TV{\rho_{\hat\SIGMA_{n,m'},\TAU_{\G'}}-\bar\rho}}
	>\eps^2/6.
	\end{align}
Since by construction $\TAU_{\G'}$ is nearly balanced, the assertion follows from (\ref{eq_using_Lemma_nbalanced_CKPZ}) and \Lem~\ref{Lemma_nbalanced_CKPZ}.
\end{proof}

\subsection{Proof of \Lem~\ref{lemma_derivatives}}\label{Sec_lemma_derivatives}
We shall see shortly that calculating the derivative $\frac\partial{\partial d}\Erw[\ln Z(\hat\G)]$ basically comes down
to calculating the difference $\Erw[\ln Z(\hat\G(n,\vm+1))]-\Erw[\ln Z(\hat\G(n,\vm))]$.
We are going to perform this calculation by way of a very accurate coupling of $\hat\G(n,\vm+1)$ and $\hat\G(n,\vm)$.
A similar argument was used in~\cite{CKPZ} for the case that the set $\Psi$ of weight functions is finite.
Once more the coupling is based on the Nishimori identity (\ref{eq:nishimori}).
Thus, we begin with a coupling of the random assignments $\hat\SIGMA_{n,\vm}$ and $\hat\SIGMA_{n,\vm+1}$.
{The following is a generalization of  \cite[\Cor~3.29]{CKPZ}.}

\begin{lemma}\label{Lemma_coupling}
There exists a coupling of $\hat\SIGMA_{n,\vec m}$ and $\hat\SIGMA_{n,\vec m+1}$ such that the following holds uniformly 
for all $d\leq\dc+1$.
\begin{enumerate}[(i)]
\item With probability $1-O(n^{-1}\ln^2n)$ we have $\hat\SIGMA_{n,\vec m}=\hat\SIGMA_{n,\vec m+1}$.
\item With probability $1-O(1/n^2)$ the set $\hat\SIGMA_{n,\vec m}\triangle\hat\SIGMA_{n,\vec m+1}=
	\{x\in V_n:\hat\SIGMA_{n,\vec m}(n)\neq\hat\SIGMA_{n,\vec m+1}(x)\}$ has size at most $n^{2/3}$.
\end{enumerate}
\end{lemma}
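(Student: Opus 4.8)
The goal is to couple $\hat\SIGMA_{n,\vm}$ with $\hat\SIGMA_{n,\vm+1}$ so that the two assignments agree with very high probability, and differ on a small set even in the rare bad case. By the Nishimori identity \eqref{eq:nishimori}, the law of $\hat\SIGMA_{n,m}$ coincides with the law of the ground truth $\SIGMA^*$ conditioned on the output $\G^*(n,m,\SIGMA^*)$, integrated against the distribution of $\G^*$; equivalently, $\hat\SIGMA_{n,m}$ has the reweighted distribution \eqref{eq:NishimoriS}. The plan is to set up the coupling directly at the level of the teacher-student model, exploiting the fact that $\G^*(n,m,\sigma)$ and $\G^*(n,m+1,\sigma)$ differ by the addition of a single constraint node. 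Concretely, I would first choose $\SIGMA^*$ uniformly, then generate $\G^*(n,\vm+1,\SIGMA^*)$, then delete one uniformly random constraint node to obtain a graph distributed as $\G^*(n,\vm,\SIGMA^*)$. The ``two'' assignments $\hat\SIGMA_{n,\vm}$ and $\hat\SIGMA_{n,\vm+1}$ are then re-sampled: each is a sample from the posterior given the corresponding graph. The natural coupling is to sample $\hat\SIGMA_{n,\vm+1}$ from the posterior given the $(\vm+1)$-constraint graph and then to sample $\hat\SIGMA_{n,\vm}$ from the posterior given the smaller graph using a maximal coupling with $\hat\SIGMA_{n,\vm+1}$.

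The heart of the argument is then to bound the total variation distance between the two posteriors. The two Gibbs measures $\mu_{G}$ and $\mu_{G^-}$, where $G^-$ is $G$ with one constraint node $a$ removed, are related by $\mu_{G}(\sigma)\propto\mu_{G^-}(\sigma)\psi_a(\sigma(\partial a))$, and since all weight functions take values in $(0,2)$, this reweighting factor is bounded above and below by absolute constants. Thus $\mu_G$ and $\mu_{G^-}$ are mutually absolutely continuous with bounded density ratio, which already gives that a maximal coupling succeeds with probability bounded away from zero; but we need it to succeed with probability $1-O(n^{-1}\ln^2 n)$. To get this stronger bound, I would use that the added constraint node $a$ touches only $k$ variable nodes, so $\psi_a(\sigma(\partial a))$ depends on $\sigma$ only through its restriction to those $k$ coordinates. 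The key quantity is how close $\bck{\psi_a(\SIGMA(\partial a))}_{G^-}$ is to the unconditioned value, and more importantly, how concentrated the spins at a bounded number of random variable nodes are — this is exactly the overlap concentration / point-marginal decorrelation that is available under our assumptions. Here I would invoke the concentration of $\hat\SIGMA_{n,m}$ near balanced (Corollary~\ref{lem:conc_coloring}, specifically \eqref{eq:conc_col}) together with \cite[\Cor~3.29]{CKPZ}, of which this lemma is advertised as a generalization: the point is that conditioning the posterior on one extra short-range constraint perturbs it by only $O(n^{-1}\ln^2 n)$ in total variation because, typically, the weight function $\psi_a$ is ``already nearly consistent'' with a typical posterior sample.

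For part (ii), in the rare event that the maximal coupling fails, we need to control the symmetric difference $\hat\SIGMA_{n,\vm}\triangle\hat\SIGMA_{n,\vm+1}$. Even when the two assignments differ, they are both samples from posteriors of graphs that agree outside the influence region of one constraint node. The plan is: condition on the graph $\G^*(n,\vm+1,\SIGMA^*)$ and the deleted constraint $a$; then both $\hat\SIGMA_{n,\vm}$ and $\hat\SIGMA_{n,\vm+1}$ are spin configurations whose laws differ only in the way they weight the $k$ coordinates in $\partial a$ — but of course changing the marginal law at those coordinates can, through the factor graph, propagate. The right way to handle this is to note that with probability $1-O(1/n^2)$ the $2$-neighborhood (or rather a bounded-radius neighborhood) of the variable nodes $\partial a$ in $\G^*(n,\vm+1)$ is a tree of size $O(n^{2/3})$ — indeed of size $O(\log n)$ with even higher probability — by standard sparse random (hyper)graph estimates (cf.\ the cycle counting in \Prop~\ref{prop:FirstCondOverFirst} and Fact~\ref{lemma:cycles}); outside this neighborhood the two posteriors agree, so the coupling can be made to agree there, forcing $\hat\SIGMA_{n,\vm}\triangle\hat\SIGMA_{n,\vm+1}$ to lie inside a set of size $O(n^{2/3})$. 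The main obstacle, and the step requiring the most care, is the first one: proving the $1-O(n^{-1}\ln^2 n)$ agreement probability, because this genuinely needs the posterior to decorrelate at the scale of individual vertices, which is where the balancedness estimate \eqref{eq:conc_col} and the finite-$\Psi$ precedent from \cite[\Cor~3.29]{CKPZ} do the real work; the generalization to infinite $\Psi$ requires checking that the relevant bounds depend on $\Psi$ only through the moment conditions \eqref{eqBounded}, which is where Fubini and the tail bounds on $\PSI(\tau)$ enter.
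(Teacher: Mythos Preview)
Your approach has a genuine gap in part~(i). You propose to couple $\hat\SIGMA_{n,\vm}$ and $\hat\SIGMA_{n,\vm+1}$ by first realizing the two random graphs $G^-=\G^*(n,\vm,\SIGMA^*)$ and $G=\G^*(n,\vm+1,\SIGMA^*)$ (differing by one constraint node $a$), and then maximally coupling the Gibbs samples from $\mu_{G^-}$ and $\mu_{G}$. But $\mu_G(\sigma)\propto\mu_{G^-}(\sigma)\,\psi_a(\sigma(\partial a))$, and since $\psi_a$ is \emph{not} close to a constant (indeed $\PSI$ is non-degenerate by the third inequality in \eqref{eqBounded}), the ratio $\psi_a(\sigma(\partial a))/\bck{\psi_a(\SIGMA(\partial a))}_{G^-}$ fluctuates by $\Theta(1)$ as $\sigma(\partial a)$ varies over $\Omega^k$. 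Hence $\TV{\mu_G-\mu_{G^-}}=\Theta(1)$, not $O(n^{-1}\ln^2 n)$, for typical realizations of $G,G^-,a$. Averaging over the random choice of $a$ does not help: the Gibbs marginal of $\SIGMA(\partial a)$ under $\mu_{G^-}$ is non-degenerate, so the Radon--Nikodym derivative is bounded away from $1$ with probability bounded away from $0$. Your heuristic that ``$\psi_a$ is already nearly consistent with a typical posterior sample'' is false at the level of a single graph; what is true is only that the \emph{average} of $\psi_a(\SIGMA(\partial a))$ over $\SIGMA\sim\mu_{G^-}$ is close to $\xi$, but that does not control the TV distance. Your part~(ii) argument inherits this problem, and in any case the ``influence region of one constraint'' in the Gibbs measure is not a local object in the graph-metric sense.

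The paper's proof bypasses the Gibbs measures entirely and works directly with the explicit formula \eqref{eq:NishimoriS}: since the constraints are i.i.d., $\Erw[\psi_{\G(n,m)}(\sigma)]=\phi(\rho_\sigma)^m$ with $\phi$ from \eqref{eqF}, so $\pr[\hat\SIGMA_{n,m}=\sigma]$ depends on $\sigma$ only through the empirical distribution $\rho_\sigma$. The ratio $\pr[\hat\SIGMA_{n,m+1}=\sigma]/\pr[\hat\SIGMA_{n,m}=\sigma]$ is then $\phi(\rho_\sigma)$ divided by the partition-function ratio, and expanding $\phi$ to second order around $\bar\rho$ (via \Lem~\ref{Lemma_F}) gives $\phi(\rho_\sigma)=\xi+O(\TV{\rho_\sigma-\bar\rho}^2)$. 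Together with \Cor~\ref{lem:conc_coloring}, which gives $\TV{\rho_{\hat\SIGMA_{n,m}}-\bar\rho}=O(n^{-1/2}\ln n)$ with high probability, this yields the pointwise density ratio $1+O(n^{-1}\ln^2 n)$ and hence the TV bound for~(i). For~(ii) the paper simply couples the empirical distributions $\rho_{\hat\SIGMA_{n,\vm}},\rho_{\hat\SIGMA_{n,\vm+1}}$ (both within $O(n^{-1/2}\ln n)$ of $\bar\rho$ with probability $1-O(n^{-3})$) and then matches the assignments coordinate-wise, forcing $|\hat\SIGMA_{n,\vm}\triangle\hat\SIGMA_{n,\vm+1}|=O(n^{1/2}\ln n)\ll n^{2/3}$. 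The crucial point you missed is that the averaging over the random graph collapses the law of $\hat\SIGMA_{n,m}$ to a function of $\rho_\sigma$ alone; once you condition on a specific graph, the coupling you want does not exist.
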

\begin{proof}
By definition, for any $\sigma\in\Omega^{V_n}$
	\begin{align}\label{eqRough}
	\pr[\hat\SIGMA_{n,m}=\sigma]=\frac{\Erw[\PSI_{\G(n,m)}(\sigma)]}{\Erw[Z(\G(n,m))]},\qquad \pr[\hat\SIGMA_{n,m+1}=\sigma]=		\frac{\Erw[\PSI_{\G(n,m+1)}(\sigma)]}{\Erw[Z(\G(n,m+1))]}.
	\end{align}
Further, due to the independence of the constraint nodes, we obtain
	\begin{align}\label{eqRough2}
	\frac{\Erw[\PSI_{\G(n,m+1)}(\sigma)]}{\Erw[\PSI_{\G(n,m)}(\sigma)]}&=
		\frac1{n^k}\sum_{y_1,\ldots,y_k\in V_n}\Erw\brk{\PSI(\sigma(y_1),\ldots,\sigma(y_k))}.
	\end{align}
Let $\phi$ be the function from (\ref{eqF}).
Then \Lem~\ref{Lemma_Phi} and \Lem~\ref{Lemma_F} show that for $\rho\in\cP(\Omega)$,
	\begin{align}\label{eq:PsiRatio'''}
	\phi(\rho)&=\xi+O(\TV{\rho-\bar\rho}^2).
	\end{align}
Hence, expanding the r.h.s.\ of \eqref{eqRough2} to the second order, we obtain
	\begin{align}\label{eq:PsiRatio'}
	\frac1{n^k}\sum_{y_1,\ldots,y_k\in V_n}\Erw\brk{\PSI(\sigma(y_1),\ldots,\sigma(y_k))}
		=\phi(\rho_\sigma)
		&=\xi + O\left(\TV{\rho_\sigma-\bar\rho}^2\right).
	\end{align}
Moreover, let $\cN$ be the set of all $\rho\in\cP(\Omega)$ such that $n\rho(\omega)$ is an integer for every $\omega\in\Omega$.
Then
	\begin{align}
	\Erw[Z(\G(n,m))]&=\sum_{\tau\in\Omega^{V_n}}
		\bc{ n^{-k}\sum_{y_1,\ldots,y_k\in V_n}\Erw\brk{\PSI(\tau(y_1),\ldots,\tau(y_k))}}^{m}=
		\sum_{\rho\in\cN}\bink n{n\rho}\phi(\rho)^m.\label{eqRough3}
	\end{align}
Further, let $\cN'=\cbc{\rho\in\cN:\TV{\rho-\bar\rho}\leq n^{-1/2}\ln n}$.
Then (\ref{eqRough3}), Stirling's formula and \Lem s~\ref{Lemma_Phi} and~\ref{Lemma_F}  yield
	\begin{align*}
	\Erw[Z(\G(n,m))]
		&=(1+O(n^{-1}))\sum_{\rho\in\cN'}\bink n{n\rho}\phi(\rho)^m.
	\end{align*}
Of course, the corresponding formula holds for $\Erw[Z(\G(n,m+1))]$.
Hence, (\ref{eqRough2}) and (\ref{eq:PsiRatio'''}) yield
	\begin{align}\label{eqRough4}
	\frac{\Erw[Z(\G(n,m+1))]}{\Erw[Z(\G(n,m))]}&=\xi+O(n^{-1}\ln^2n).
	\end{align}
Combining (\ref{eqRough}), (\ref{eqRough2}), (\ref{eq:PsiRatio'}) and (\ref{eqRough4}), we conclude that
	\begin{equation}\label{eq:PsiRatio''}
	\pr[\hat\SIGMA_{n,\vm+1}=\sigma]=\pr[\hat\SIGMA_{n,\vm}=\sigma]\bc{1+O\left(\TV{\rho_\sigma-\bar\rho}^2+
	n^{-1}\ln^2n\right)}.
	\end{equation}
By \Cor~\ref{lem:conc_coloring} $\tv{\rho_{\SIGMA_{n,\vm}}-\bar\rho}$ is bounded by $O(n^{-1/2}\ln n)$ with probability at least $1-O(1/n)$.
Hence, (\ref{eq:PsiRatio''}) shows that $\hat\SIGMA_{n,\vec m},\hat\SIGMA_{n,\vec m+1}$ have total variation distance $O(n^{-1}\ln^2n)$,  which yields the first assertion follows.

With respect to the second, we obtain from \Cor~\ref{lem:conc_coloring} that
	$$\pr\brk{\TV{\rho_{\SIGMA_{n,\vm}}-\bar\rho}\leq n^{-1/2}\ln n}+
		\pr\brk{\TV{\rho_{\SIGMA_{n,\vm+1}}-\bar\rho}\leq n^{-1/2}\ln n}=1-O(n^{-3}).$$
Hence, if we choose the empirical distributions $\rho_{\hat\SIGMA_{n,\vm}},\rho_{\hat\SIGMA_{n,\vm+1}}$ independently, then
 $\TV{\rho_{\SIGMA_{n,\vm}}-\rho_{\SIGMA_{n,\vm+1}}}\leq2n^{-1/2}\ln n$ with probability $1-O(n^{-3})$.
Finally, we obtain the desired coupling of $\hat\SIGMA_{n,\vm}$, $\hat\SIGMA_{n,\vm+1}$ for (ii):
given $\rho$, $\rho'\in\cN$ choose a collection of pairwise disjoint sets $(S_\omega)_{\omega\in\Omega}\subset V_n$
with $|S_\omega|=n\min\{\rho(\omega),\rho'(\omega)\}$ randomly, set $\sigma(x)=\sigma'(x)=\omega$ for all $x\in S_\omega$ and
let $\sigma,\sigma'$ assign different spins to the nodes in $V_n\setminus\bigcup_{\omega\in\Omega}S_\omega$ so as to ensure that
$\rho_\sigma=\rho$ and $\rho_{\sigma'}=\rho'$.
\end{proof}

\begin{corollary}\label{Cor_coupling}
Uniformly for all $d\leq\dc+1$ the following is true.
Given the random assignment $\hat\SIGMA_{n,\vm}$ choose a constraint node
 $\vec a$ from the distribution
	\begin{align}\label{eqTeachera}
	\pr\brk{\partial\vec a=(y_1,\ldots,y_k),\psi_{\vec a}\in\cA}
			&=
		\frac{\int_{\cA}\psi(\hat\SIGMA_{n,\vm}(y_1),\ldots,\hat\SIGMA_{n,\vm}(y_k))\dd P(\psi)}
			{\sum_{z_1,\ldots,z_k\in V_n}\int_\Psi\psi(\hat\SIGMA_{n,\vm}(z_1),\ldots,\hat\SIGMA_{n,\vm}(z_k))\dd P(\psi)}
			&(y_1,\ldots,y_k\in V_n,\ \cA\subset\Psi)
	\end{align}
and choose $\G^*(n,\vec m,\hat\SIGMA_{n,\vm})$ independently.
Then
	\begin{align}\label{eqCor_coupling}
	\Erw[\ln Z(\hat\G(n,\vec m+1))]-\Erw[\ln Z(\hat\G(n,\vec m)]&=
		\Erw\brk{\ln\bck{\psi_{\vec a}(\SIGMA(\partial_1\vec a),\ldots,\SIGMA(\partial_k\vec a))}_{\G^*(n,\vec m,\hat\SIGMA_{n,\vm})}}+o(1).
	\end{align}
\end{corollary}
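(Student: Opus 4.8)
The plan is to transfer the computation to the teacher--student model via the Nishimori identity and then to exploit that adjoining one constraint node multiplies the partition function by a single Gibbs average. By \Lem~\ref{lem:nishimori} the factor graph $\hat\G(n,m)$ is distributed exactly as $\G^*(n,m,\hat\SIGMA_{n,m})$, so that $\Erw[\ln Z(\hat\G(n,\vm))]=\Erw[\ln Z(\G^*(n,\vm,\hat\SIGMA_{n,\vm}))]$ and likewise with $\vm$ replaced by $\vm+1$. Inserting the hybrid quantity $\Erw[\ln Z(\G^*(n,\vm+1,\hat\SIGMA_{n,\vm}))]$, I would split the left-hand side of (\ref{eqCor_coupling}) as $I+II$ with
\begin{align*}
I&=\Erw[\ln Z(\G^*(n,\vm+1,\hat\SIGMA_{n,\vm+1}))]-\Erw[\ln Z(\G^*(n,\vm+1,\hat\SIGMA_{n,\vm}))],\\
II&=\Erw[\ln Z(\G^*(n,\vm+1,\hat\SIGMA_{n,\vm}))]-\Erw[\ln Z(\G^*(n,\vm,\hat\SIGMA_{n,\vm}))].
\end{align*}

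The term $II$ already produces the desired right-hand side. Indeed, for every fixed assignment $\sigma$ and every integer $m$ the random graph $\G^*(n,m+1,\sigma)$ is nothing but $\G^*(n,m,\sigma)$ with one further constraint node $\vec a$ appended, where $\vec a$ is drawn independently from the distribution (\ref{eqTeacher}); since adjoining $\vec a$ multiplies $\psi_G(\tau)$ by $\psi_{\vec a}(\tau(\partial_1\vec a),\dots,\tau(\partial_k\vec a))$, we obtain the pointwise identity $\ln Z(\G^*(n,m+1,\sigma))-\ln Z(\G^*(n,m,\sigma))=\ln\bck{\psi_{\vec a}(\SIGMA(\partial_1\vec a),\dots,\SIGMA(\partial_k\vec a))}_{\G^*(n,m,\sigma)}$. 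Conditioning on $\hat\SIGMA_{n,\vm}$ and noticing that (\ref{eqTeacher}) specialises to (\ref{eqTeachera}) in this case, averaging gives $II=\Erw[\ln\bck{\psi_{\vec a}(\SIGMA(\partial_1\vec a),\dots,\SIGMA(\partial_k\vec a))}_{\G^*(n,\vm,\hat\SIGMA_{n,\vm})}]$.

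Hence everything reduces to proving $I=o(1)$ uniformly for $d\leq\dc+1$, and this is the main obstacle: for infinite $\Psi$ the free energy $\ln Z$ is no longer uniformly bounded, so the discrepancy between $\hat\SIGMA_{n,\vm}$ and $\hat\SIGMA_{n,\vm+1}$ must be controlled quantitatively. Here I would invoke the coupling of $\hat\SIGMA_{n,\vm}$ and $\hat\SIGMA_{n,\vm+1}$ provided by \Lem~\ref{Lemma_coupling}. On the event $\cA=\{\hat\SIGMA_{n,\vm}=\hat\SIGMA_{n,\vm+1}\}$, of probability $1-O(n^{-1}\ln^2n)$, the two arguments of $\ln Z$ in $I$ agree and the integrand vanishes, so $I$ equals the expectation restricted to $\cA^c$. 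To bound the latter I would use two ingredients. The first is the $L^p$ estimate $\Erw[|\ln Z(\G^*(n,m,\sigma))|^p]=O(n^p)$ for nearly balanced $\sigma$ and every fixed $p$: the upper bound $\ln Z\leq n\ln q+m\ln2$ is deterministic since weight functions are bounded by $2$, while the lower tail is controlled by $\sum_i\max_\tau|\ln\psi_{a_i}(\tau)|$, all of whose moments are finite thanks to $\Erw[\max_\tau\PSI(\tau)^{-4}]<\infty$ from (\ref{eqBounded}) and the boundedness of the teacher--student reweighting on nearly balanced $\sigma$ (cf.\ \Lem s~\ref{Cor_F} and~\ref{Lemma_F}); recall also that $\hat\SIGMA_{n,\vm},\hat\SIGMA_{n,\vm+1}$ are nearly balanced with probability $1-o(1)$ by \Cor~\ref{lem:conc_coloring}. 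The second is part (ii) of \Lem~\ref{Lemma_coupling}: off a further event of probability $O(n^{-2})$ the symmetric difference $|\hat\SIGMA_{n,\vm}\triangle\hat\SIGMA_{n,\vm+1}|$ is at most $n^{2/3}$, in which case $\G^*(n,\vm+1,\hat\SIGMA_{n,\vm})$ and $\G^*(n,\vm+1,\hat\SIGMA_{n,\vm+1})$ can themselves be coupled so that a constraint node differs only if its neighbourhood meets that small set, which leaves all but $o(n)$ of the constraint nodes intact and thus changes $\ln Z$ by only $o(n)$ in the relevant moments. Combining the smallness of $\pr[\cA^c]$ with the $L^p$ bound on the exceptional pieces and with the $o(n)$ bound on the number of affected constraints on the main piece, via Hölder's and the Cauchy--Schwarz inequalities, yields $I=o(1)$, and hence (\ref{eqCor_coupling}).
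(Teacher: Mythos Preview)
Your approach is correct and mirrors the paper's: both pass to the teacher--student model via Nishimori, invoke the coupling of \Lem~\ref{Lemma_coupling}, and split into the same three regimes (equal ground truths, small symmetric difference $\leq n^{2/3}$, and the residual event of probability $O(n^{-2})$). The only cosmetic differences are that you insert the hybrid $\G^*(n,\vm+1,\hat\SIGMA_{n,\vm})$ so that $II$ equals the right-hand side exactly while the paper couples $\G^*(n,\vm,\hat\SIGMA_{n,\vm})$ and $\G^*(n,\vm+1,\hat\SIGMA_{n,\vm+1})$ directly, and that on the residual event the paper simply uses the uniform $L^1$ bound $\Erw|\ln Z(\G^*(n,m,\sigma))|=O(n)$ from (\ref{eqQuenchedAnnealed}) together with conditional independence of the graph from $(\SIGMA',\SIGMA'')$, which makes the H\"older/$L^p$ detour unnecessary.
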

\begin{proof}
By the Nishimori identity (\ref{eq:nishimori}) we have
	\begin{align}
	\Erw[\ln Z(\hat\G(n,\vec m))]&=\Erw[\ln \G^*(n,\vec m,\hat\SIGMA_{n,\vm})],\label{eqCor_coupling1}\\
	\Erw[\ln Z(\hat\G(n,\vec m+1))]&=\Erw[\ln \G^*(n,\vec m+1,\hat\SIGMA_{n,\vm+1})].\label{eqCor_coupling2}
	\end{align}
To calculate the difference of the two terms on the r.h.s.\ we couple $\SIGMA'=\hat\SIGMA_{n,\vm}$ and $\SIGMA''=\hat\SIGMA_{n,\vm+1}$
	via \Lem~\ref{Lemma_coupling}.
Clearly, if $\SIGMA'=\SIGMA''$, then we can couple 
$\G'=\G^*(n,\vec m,\hat\SIGMA_{n,\vm+1})$ and $\G''=\G^*(n,\vec m+1,\hat\SIGMA_{n,\vm+1})$ such that $\G''$ is obtained from $\G'$
by adding one additional independent constraint node $\vec a=a_{\vec  m+1}$ and thus
	\begin{align*}
	\frac{Z(\G'')}{Z(\G')}&=\sum_{\tau\in\Omega^{V_n}}\psi_{\vec a}(\tau(\partial_1\vec a),\ldots,\tau(\partial_1\vec a))\frac{\psi_{\G'}(\tau)}{Z(\G')}
		=\bck{\psi_{\vec a}(\SIGMA(\partial_1\vec a),\ldots,\SIGMA(\partial_k\vec a))}_{\G'}.
	\end{align*}
Hence, by (\ref{eqBounded}) and the first part of \Lem~\ref{Lemma_coupling},
	\begin{align}\nonumber
	X&=\Erw\brk{\ln\frac{Z(\G'')}{Z(\G')}\bigg|\SIGMA'=\SIGMA''}=
		\Erw\brk{\ln\bck{\psi_{\vec a}(\SIGMA(\partial_1\vec a),\ldots,\SIGMA(\partial_k\vec a))}_{\G'}| \SIGMA'= \SIGMA''}\\
		&=\Erw\brk{\ln\bck{\psi_{\vec a}(\SIGMA(\partial_1\vec a),\ldots,\SIGMA(\partial_k\vec a))}_{\G'}}+o(1).
			\label{eqCor_coupling3}
	\end{align}

If $|\SIGMA'\triangle \SIGMA''|\leq n^{2/3}$ and $\|\rho_{\SIGMA'}-\bar\rho\|\leq n^{-1/2}\ln n$,  then by (\ref{eqBounded}) we have
	\begin{align}\label{eqNonForbiddenNeighbor1}
	\sum_{z_1,\ldots,z_k\in V_n}\Erw[\PSI(\SIGMA'(z_1),\ldots,\SIGMA'(z_k))]&\sim n^k\xi,&
		\sum_{z_1,\ldots,z_k\in V_n}\Erw[\PSI(\SIGMA''(z_1),\ldots, \SIGMA''(z_k))]&\sim n^k\xi.
	\end{align}
Further, let us write $\vec a'$ for a factor node chosen from (\ref{eqTeacher}) with respect to $\SIGMA'$ and $\vec a''$ for one chosen with respect to $\SIGMA''$.
Let $\cA$ be the event that a random factor node does not have a neighbor in $\SIGMA'\triangle \SIGMA''$.
Since $\|\rho_{\SIGMA'}-\bar\rho\|\leq n^{-1/2}\ln n$, (\ref{eqBounded}) and (\ref{eqNonForbiddenNeighbor1}) imply that
	\begin{align*}
	\pr\brk{\vec a'\not\in\cA}&=\frac{\sum_{z_1,\ldots,z_k\in V_n}
		\vecone\{\{z_1,\ldots,z_k\}\cap( \SIGMA'\triangle \SIGMA'')\neq\emptyset\}\Erw[\PSI( \SIGMA'(z_1),\ldots,\SIGMA'(z_k))]}
			{\sum_{z_1,\ldots,z_k\in V_n}\Erw[\PSI( \SIGMA'(z_1),\ldots, \SIGMA'(z_k))]}=O(| \SIGMA'\triangle \SIGMA''|/n)
			=O(n^{-1/3}),
	\end{align*}
and similarly $\pr\brk{\vec a''\not\in\cA}=O(n^{-1/3})$.
Moreover, given that $\vec a',\vec a''\in\cA$,  both factor nodes $\vec a',\vec a''$ are identically distributed.
Therefore, there is a coupling of $\vec a',\vec a''$ such that $\vec a'=\vec a''$ with probability $1-O(n^{-1/3})$.
Hence, $\G',\G''$ can be coupled such that the set $\Delta$ of constraint nodes in which both factor graphs differ has expected size $O(n^{2/3})$.
Indeed,  $\Delta$ is a binomial random variable because the constraint nodes are chosen independently.
Thus, (\ref{eqBounded})  implies
	\begin{align*}
	\abs{\Erw\brk{\ln\frac{Z(\G'')}{Z(\G'')}
		\bigg|
		| \SIGMA'\triangle \SIGMA''|\leq n^{2/3},\|\rho_{ \SIGMA'}-\bar\rho\|\leq\frac{\ln n}{\sqrt n},\Delta}}
		\leq O(\Delta)\Erw\brk{\max_{\tau\in\Omega^k}|\ln\PSI(\tau)|}=O(\Delta)
	\end{align*}
and therefore
	\begin{align}\label{eqCor_coupling4}
	X'&=\Erw\brk{\ln\frac{Z(\G'')}{Z(\G')}\bigg|0<|  \SIGMA'\triangle \SIGMA''|\leq n^{2/3},
		\|\rho_{ \SIGMA'}-\bar\rho\|\leq n^{-1/2}\ln n}=O(n^{2/3}).
	\end{align}

Finally, if either $| \SIGMA'\triangle \SIGMA''|> n^{2/3}$ or
 $\|\rho_{\SIGMA'}-\bar\rho\|> n^{-1/2}\ln n$, then we couple $\G',\G''$ by just choosing their constraint nodes independently.
Then (\ref{eqBounded}) implies
	\begin{align}\label{eqCor_coupling5}
	X''&=\Erw\brk{\ln\frac{Z(\G'')}{Z(\G')}\bigg||  \SIGMA'\triangle \SIGMA''|> n^{2/3}\mbox{ or }\|\rho_{\SIGMA'}-\bar\rho\|> n^{-1/2}\ln n}=O(n).
	\end{align}
Combining (\ref{eqCor_coupling1})--(\ref{eqCor_coupling5}) and applying \Cor~\ref{lem:conc_coloring} and \Lem~\ref{Lemma_coupling}, we obtain
	\begin{align*}
	\Erw[\ln Z(\hat\G(n,\vec m+1))]-\Erw[\ln Z(\hat\G(n,\vec m)]&=(1-o(1))X+O(n^{-1}\ln^2n)X'+O(n^{-2})X''\\
		&=\Erw\brk{\ln\bck{\psi_{\vec a}(\SIGMA(\partial_1\vec a),\ldots,\SIGMA(\partial_k\vec a))}_{\G'}}+o(1),
	\end{align*}
as claimed.
\end{proof}

\begin{proof}[Proof of \Lem~\ref{lemma_derivatives}]
The proof is a generalization of the proof of~\cite[\Lem~3.32]{CKPZ}, which dealt with the Potts model.
We begin with the well-known observation that
	\begin{align}
	\frac1n\frac{\partial}{\partial d}\Erw[\ln Z(\hat\G)]&=\frac1n\sum_{m=0}^\infty\brk{\frac{\partial}{\partial d}\pr\brk{\Po(dn/k)=m}}
		\Erw[\ln Z(\hat\G)|\vm=m]\nonumber\\
		&=\frac1k\sum_{m=0}^\infty\brk{\vecone\{m\geq1\}\pr\brk{\Po(dn/k)=m-1}+\pr\brk{\Po(dn/k)=m}}\Erw[\ln Z(\hat\G)|\vm=m]\nonumber\\
		&=\frac1k[\Erw[\ln Z(\hat\G(n,\vec m+1))]-\Erw[\ln Z(\hat\G(n,\vec m)]].
			\label{eqlem:badoverlaps1}
	\end{align}
To calculate the last term we apply \Cor~\ref{Cor_coupling}.
Let us write $\bck{\nix}=\bck{\nix}_{\G^*(n,\vec m,\hat\SIGMA_{n,m})}$ for brevity.
Expanding the logarithm on the r.h.s.\ of (\ref{eqCor_coupling}), we obtain
	\begin{align*}
	\frac1n\frac{\partial}{\partial d}\Erw[\ln Z(\hat\G)]&=o(1)-\Erw\sum_{l=1}^\infty\frac1{kl}
		\bck{1-\psi_{\vec a}(\SIGMA(\partial_1\vec a),\ldots,\SIGMA(\partial_k\vec a))}^l
	\end{align*}
(where the expectation is over the choice of $\hat\SIGMA_{n,m}$, $\G^*(n,\vec m,\hat\SIGMA_{n,m})$ and $\vec a$).
Due to (\ref{eqBounded}) and Fubini's theorem we can interchange the sum and the expectation.
Hence, writing the expectation on $\vec a$ chosen from (\ref{eqTeachera}) out, with $\PSI$ chosen from $P$ independently of everything else, we obtain
	\begin{align*}
	\frac1n\frac{\partial}{\partial d}\Erw[\ln Z(\hat\G)]&=o(1)-\sum_{l=1}^\infty
		\Erw\brk{\frac{\sum_{i_1,\ldots,i_k\in[n]}
			\PSI(\hat\SIGMA_{n,m}(x_{i_1}),\ldots,\hat\SIGMA_{n,m}(x_{i_k}))\bck{1-\PSI(\SIGMA(x_{i_1}),\ldots,\SIGMA(x_{i_k}))}^l}
				{kl\sum_{i_1,\ldots,i_k\in[n]}\int_\Psi\psi(\hat\SIGMA_{n,m}(x_{i_1}),\ldots,\hat\SIGMA_{n,m}(x_{i_k}))\dd P(\psi)}}.
	\end{align*}
Further, because $|\hat\SIGMA_{n,\vec m}^{-1}(\omega)|\sim n/q$ for all $\omega\in\Omega$ with probability at least $1-o(1)$ by \Cor~\ref{lem:conc_coloring}, 
we obtain from (\ref{eqBounded}) and {\bf SYM} that
	\begin{align}							\label{eqlem:badoverlaps3}
	\frac1n\frac{\partial}{\partial d}\Erw[\ln Z(\hat\G)]&=o(1)-\sum_{l=1}^\infty
		\sum_{i_1,\ldots,i_k\in[n]}
		\frac{1}{kl\xi n^k}
			\Erw\brk{\PSI(\hat\SIGMA_{n,m}(x_{i_1}),\ldots,\hat\SIGMA_{n,m}(x_{i_k}))\bck{1-\PSI(\SIGMA(x_{i_1}),\ldots,\SIGMA(x_{i_k}))}^l}.
	\end{align}
To evaluate the expectation on the r.h.s.\ of (\ref{eqlem:badoverlaps3}) we harness the
 Nishimori identity \eqref{eq:nishimori}, which implies the following: if $\cX:(G,\sigma)\mapsto \cX(G,\sigma)\in\RR$ is an $L^1$-function, then 
	$\Erw[\cX(\G^*(n,\vec m,\hat\SIGMA_{n,m}),\hat\SIGMA_{n,m})]=\Erw[\cX(\G^*(n,\vec m,\hat\SIGMA_{n,m}),\SIGMA_0)]$.
Applying this fact to the function $\cX(G,\sigma)=\PSI(\sigma(x_{i_1}),\ldots,\sigma(x_{i_k}))\bck{1-\PSI(\SIGMA(x_{i_1}),\ldots,\SIGMA(x_{i_k}))}_G$, we obtain
	\begin{align}			\nonumber
	\Erw\brk{\PSI(\hat\SIGMA_{n,m}(x_{i_1}),\ldots,\hat\SIGMA_{n,m}(x_{i_k}))\bck{1-\PSI(\SIGMA(x_{i_1}),\ldots,\SIGMA(x_{i_k}))}^l}\\
		&\hspace{-6cm}=\Erw\brk{\bck{1-\PSI(\SIGMA(x_{i_1}),\ldots,\SIGMA(x_{i_k}))}^l-\bck{1-\PSI(\SIGMA(x_{i_1}),\ldots,\SIGMA(x_{i_k}))}^{l+1}}.
										\label{eqlem:badoverlaps4}
	\end{align}
Plugging (\ref{eqlem:badoverlaps4}) into (\ref{eqlem:badoverlaps3}) and writing $\vec i_1,\ldots,\vec i_k$ for uniformly random indices chosen
from $[n]$  we obtain
	\begin{align}\label{eqlem:badoverlaps5}
	\frac kn\frac{\partial}{\partial d}\Erw[\ln Z(\hat\G)]&=o(1)
		-\frac1\xi\Erw\bck{1-\PSI(\SIGMA(x_{\vec i_1}),\ldots,\SIGMA(x_{\vec i_k}))}
		+\sum_{l=2}^\infty
				\frac1{l(l-1)\xi}\Erw\bck{1-\PSI(\SIGMA(x_{\vec i_1}),\ldots,\SIGMA(x_{\vec i_k}))}^l.
	\end{align}
Finally, since $\sum_{l\geq2}\frac1{l(l-1)}(1-x)^l=1-x+\Lambda(x)$, (\ref{eqlem:badoverlaps5}) yields (\ref{eqlem:badoverlaps6}).
\end{proof}

\section{Moment calculations}\label{Sec_moments}

\noindent
In this section we prove \Prop s~\ref{lem:FirstMoment} and~\ref{lem:SecondMoment}.
We begin with a very general calculation in \Sec~\ref{Sec_moments1}, from which we subsequently deduce
\Prop s~\ref{lem:FirstMoment} and~\ref{lem:SecondMoment}.

\subsection{An asymptotic formula}\label{Sec_moments1}
The following result
paves the way for the proofs of \Prop s~\ref{lem:FirstMoment} and~\ref{lem:SecondMoment}.

\begin{proposition}\label{Prop_genFirstMmt}
Assume that $P$ satisfies {\bf SYM} and that $d>0$ is such that the eigenvalues $\lambda_1\geq\cdots\geq\lambda_q$ of $\Phi$ satisfy
		\begin{equation}\label{eqProp_genFirstMmt}
		d(k-1)\max\{\lambda_2,\ldots,\lambda_q\} 
			<1.
		\end{equation}
Furthermore, assume that $\eps=\eps(n)\to0$ but $\sqrt n\eps\to\infty$ as $n\to\infty$ and let
	\begin{align*}
	Z_\eps(\G(n,m))&=Z(\G(n,m))\bck{\vecone\{\forall\omega\in\Omega:||\SIGMA^{-1}(\omega)|-n/q|<\eps n\}}_{\G(n,m)}.
	\end{align*}
Then uniformly for all $m\in\cM(d)$,
	\begin{align*}
	\Erw[Z_\eps(\G(n,m))]&\sim\frac{q^{n+\frac12}\xi^m}{\prod_{i=2}^q\sqrt{1-d(k-1)\lambda_i}}.
	\end{align*}
\end{proposition}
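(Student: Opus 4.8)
The plan is a first‑moment computation carried out by a Laplace expansion of $\Erw[Z_\eps(\G(n,m))]$ around the barycentre $\bar\rho$. The whole reason for working with the truncated variable $Z_\eps$ rather than $Z$ is that the balance indicator confines the empirical distribution of a configuration to a shrinking neighbourhood of $\bar\rho$; there the second‑order data of $\phi$ furnished by \Lem~\ref{Lemma_F} already determines the asymptotics, so no global concavity of $\phi$ (that is, no {\bf BAL}) is needed — only {\bf SYM} and the spectral condition \eqref{eqProp_genFirstMmt}.

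First I would unwind the definitions: since $\bck\nix_{\G}$ is the Gibbs average and $\mu_\G(\sigma)=\psi_\G(\sigma)/Z(\G)$, we get $\Erw[Z_\eps(\G(n,m))]=\sum_\sigma\vecone\{\forall\omega:||\sigma^{-1}(\omega)|-n/q|<\eps n\}\,\Erw[\psi_{\G(n,m)}(\sigma)]$. Because the $m$ constraint nodes are chosen independently and uniformly, $\Erw[\psi_{\G(n,m)}(\sigma)]=\phi(\rho_\sigma)^m$ with $\phi$ the function from \eqref{eqF}. Collecting assignments by their empirical distribution and writing $\cN$ for the set of $\rho\in\cP(\Omega)$ with $n\rho(\omega)\in\ZZ$ for all $\omega$, this turns the expectation into
\[
\Erw[Z_\eps(\G(n,m))]=\sum_{\rho\in\cN,\ \max_\omega|\rho(\omega)-1/q|<\eps}\binom{n}{(n\rho(\omega))_{\omega\in\Omega}}\,\phi(\rho)^m .
\]

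The core step is the local expansion for $\rho=\bar\rho+z/\sqrt n$, where $z\perp\vecone$ (forced by $\sum_\omega\rho(\omega)=1$) and $\|z\|<c\eps\sqrt n$. By \Lem~\ref{Lemma_F} we have $\phi(\bar\rho)=\xi$, $D\phi(\bar\rho)=k\xi\vecone$ and $D^2\phi(\bar\rho)=qk(k-1)\xi\Phi$; the linear term, a multiple of $\vecone$, drops out against $z$, so a second‑order Taylor expansion with continuous remainder gives, for $\|z\|$ bounded, $\phi(\rho)^m=(1+o(1))\xi^m\exp(\tfrac{dq(k-1)}2\scal{\Phi z}z)$ uniformly (here $m=dn/k+O(n^{3/5})$ for $m\in\cM(d)$ supplies the factor $dq(k-1)/2$), while Stirling/the local central limit theorem for the multinomial gives $\binom{n}{(n\rho(\omega))_\omega}=(1+o(1))\,c_q\,n^{-(q-1)/2}q^n\exp(-\tfrac q2\|z\|^2)$ for an explicit constant $c_q$. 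Thus the summand is $\xi^m$ times a Gaussian weight $\exp(-\tfrac q2\scal{(\id-d(k-1)\Phi)z}z)$, and since $\Phi$ is symmetric and doubly stochastic with $\Phi\vecone=\vecone$ (\Lem~\ref{Lemma_Phi}) its top eigenvalue is $\lambda_1=1$ with eigenvector $\vecone$, so $\lambda_2,\dots,\lambda_q$ are precisely the eigenvalues of $\Phi$ on $\vecone^\perp$; hypothesis \eqref{eqProp_genFirstMmt} says exactly that the quadratic form $\id-d(k-1)\Phi$ is positive definite there. I would then finish by the usual two‑window argument: for $\|z\|\le R$ all error terms are uniformly $o(1)$ and the Riemann sum over the sublattice $\{\sqrt n(\rho-\bar\rho):\rho\in\cN\}\cap\vecone^\perp$ converges to the Gaussian integral over $\vecone^\perp$; for $R<\|z\|<c\eps\sqrt n$ one bounds $\scal{\Phi z}z\le\lambda_2\|z\|^2$ and, using that the truncation forces $\|\rho-\bar\rho\|\to0$, absorbs the cubic remainder and the $\eps$‑dependent errors into $\tfrac q2(1-d(k-1)\lambda_2)\|z\|^2>0$, so the tail contributes $\to0$ as $R\to\infty$. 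Assembling the Stirling prefactor $c_q n^{-(q-1)/2}q^n$, the covolume of $\ZZ^\Omega\cap\vecone^\perp$ inside the $(q-1)$‑dimensional space $\vecone^\perp$, and the value $(2\pi/q)^{(q-1)/2}\prod_{i=2}^q(1-d(k-1)\lambda_i)^{-1/2}$ of the Gaussian integral, and tracking all constants, yields the claimed formula; uniformity over $m\in\cM(d)$ is automatic because $m$ enters only through $m/n=d/k+O(n^{-2/5})$ and the explicit factor $\xi^m$.

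The main obstacle is this tail estimate on the annulus $\omega(1)\le\|z\|\le c\eps\sqrt n$: without {\bf BAL} one cannot simply bound $\phi\le\xi$ away from $\bar\rho$, so the argument must exploit that the $Z_\eps$‑truncation keeps $\rho$ in a ball of radius $o(1)$ about $\bar\rho$ — where the quadratic term from \Lem~\ref{Lemma_F} genuinely dominates the cubic one — together with the strict positivity of $\id-d(k-1)\Phi$ on $\vecone^\perp$ guaranteed by \eqref{eqProp_genFirstMmt} (which is where the hypothesis is actually used). A secondary but real technical point is the careful bookkeeping of the Stirling prefactor, the lattice covolume and the Gaussian normalisation so that the power of $q$ in the final answer comes out correctly.
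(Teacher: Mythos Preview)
Your proposal is correct and follows essentially the same route as the paper: decompose $\Erw[Z_\eps(\G(n,m))]$ by empirical distribution into $\sum_\rho\binom{n}{n\rho}\phi(\rho)^m$, expand $\cH(\rho)+\tfrac mn\ln\phi(\rho)$ to second order at $\bar\rho$ using \Lem~\ref{Lemma_F} (the first-order term being a multiple of $\vecone$ and hence annihilated by $\rho-\bar\rho\perp\vecone$), and evaluate the resulting $(q-1)$-dimensional Gaussian sum on $\vecone^\perp$, where the spectral hypothesis \eqref{eqProp_genFirstMmt} is exactly the positive-definiteness needed. Your two-window treatment of the tail is in fact slightly more explicit than the paper's, which simply invokes the uniform $O(\eps^3)$ remainder together with $\eps\to0$.
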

\begin{proof}
Let $R_{n,\eps}$ be the set of all distributions $\rho\in\cP(\Omega)$ such that
	$n\rho\in\RR^\Omega$ is an integer vector and such that
	$\|\rho-\bar\rho\|_2<\eps$ for all $\omega\in\Omega$.
Additionally, for each $\rho\in R_{n,\eps}$ let
	$Z_{\rho}(\G(n,m)) =Z(\G(n,m))\bck{\vecone\{\rho_{\SIGMA}=\rho\}}_{\G(n,m)}.$
Then
	\begin{align}\label{eqProp_genFirstMmt1}
	\Erw[Z_\eps(\G(n,m))]&=\sum_{\rho\in R_{n,\eps}}\Erw[Z_\rho(\G(n,m))].
	\end{align}
Remembering $\phi$ from (\ref{eqF}),  we claim that uniformly for all $\rho\in R_{n,\eps}$ and $m\in\cM(d)$,
	\begin{align}\label{eqProp_genFirstMmt2}
	\Erw[Z_\rho(\G(n,m))]&\sim\frac{\exp(n f_n(\rho))}{\sqrt{(2\pi n)^{q-1}\prod_{\omega\in\Omega}\rho(\omega)}},\qquad\mbox{where}&
	f_n(\rho)&=\cH(\rho)+\frac mn\ln\phi(\rho).
	\end{align}
Indeed,  because there are precisely $\bink n{n\rho}$ assignments $\sigma\in \Omega^{V_n}$ such that $\rho_\sigma=\rho$
and since the constraint nodes of $\G(n,m)$ are chosen independently, we have the exact expression
	$\Erw[Z_{\rho}(\G(n,m))]= \binom{n}{\rho n}\phi(\rho)^m$ and thus (\ref{eqProp_genFirstMmt2}) follows from Stirling's formula.
Combining (\ref{eqProp_genFirstMmt1}) and (\ref{eqProp_genFirstMmt2}), we obtain
	\begin{align}\label{eqProp_genFirstMmt3}
	\Erw[Z_\eps(\G(n,m))]&\sim
	(2\pi n)^{(1-q)/2}q^{q/2}\sum_{\rho\in R_{n,\eps}}\exp(nf_n(\rho)).
	\end{align}
In order to calculate the sum via the Laplace method, we compute the first two derivatives of $f$.
The first derivative works out to be
	$$\frac{\partial f_n}{\partial\rho(\omega)}  =-\ln(\rho(\omega))-1+ \frac mn\cdot \frac{
			\sum_{\tau \in \Omega^k} \sum^{k}_{j=1}\Erw[\PSI(\tau)]\mathbf{1}\{ \tau_j=\omega\} \prod_{i\in [k]\setminus \{j \} } 	\rho\bc{\tau_i} }  {  \sum_{\tau \in \Omega^k} \Erw[\PSI(\tau)] \prod_{i\in [k]}\rho\bc{\tau_i}  }.$$
Hence, using {\bf SYM} we see that the gradient at the point $\bar\rho$ equals
	\begin{equation}\label{eqProp_genFirstMmt4}
	D f_n(\bar\rho)=(\ln(q)-1)\vecone+\frac{km}n\Phi\vecone=(\ln(q)-1+km/n)\vecone.
	\end{equation}
Proceeding to the second derivatives, we find
\begin{align}\nonumber
\frac{\partial^2 f_n}{\partial\rho(\omega)\partial\rho(\omega')}&=-\frac{\vecone\{\omega=\omega'\}}{\rho(\omega)}
	+\frac mn\cdot
		\frac{\sum_{\tau \in \Omega^k}\sum_{j,l\in[k]:j\neq l}
			 \mathbf{1}\{ \tau_j=\omega, \ \tau_l=\omega'\}\Erw[\PSI(\tau)]\prod_{i \in [k]\setminus \{j, l \} }\rho\bc{\tau_i} }
			 {\sum_{\tau \in \Omega^k}\Erw [\PSI(\tau)] \prod_{i\in [k]}\rho\bc{\tau_i} }\\
 &\qquad-\frac{m}{n}\frac{
		\left( \sum_{\tau \in \Omega^k} \Erw [\PSI(\tau)] \sum^{k}_{j=1}\mathbf{1}\left\{ \tau_j=\omega\right\}
		 \prod_{i\neq j }\rho\bc{\tau_i} \right)
	 \left( \sum_{\tau \in \Omega^k} \Erw [\PSI(\tau)]\sum^{k}_{j=1}\mathbf{1}\left\{ \tau_j=\omega'\right\}
			 \prod_{i\neq j }\rho\bc{\tau_i} \right)}
		 { \left( \sum_{\tau \in \Omega^k}\Erw [\PSI(\tau)] \prod_{i\in [k]}\rho\bc{\tau_i}\right)^2 }. \nonumber 
\end{align}
Consequently, using {\bf SYM} we find that the Hessian at $\bar\rho$ comes out as 
	\begin{align}\label{eqProp_genFirstMmt5}
	D^2f_n(\bar\rho)&=-q(\id-(k(k-1)m/n)\Phi)+(k^2m/n)\vecone.
	\end{align}
Additionally, the third derivatives of $f$ are uniformly bounded.
Thus, combining (\ref{eqProp_genFirstMmt4}) and (\ref{eqProp_genFirstMmt5}) and observing that
$\rho-\bar\rho\perp\vecone$ for all $\rho\in R_{n,\eps}$, we see that uniformly for all $\rho\in R_{n,\eps}$,
	\begin{align}\label{eqProp_genFirstMmt6}
	f_n(\rho)&=f_n(\bar\rho)-\frac q2\scal{(\id-(k(k-1)m/n)\Phi)(\rho-\bar\rho)}{(\rho-\bar\rho)}+O(\eps^3).
	\end{align}

Since $\eps=o(1)$, plugging (\ref{eqProp_genFirstMmt6}) into (\ref{eqProp_genFirstMmt1}) we obtain uniformly for all $m\in\cM_d$,
	\begin{align}
	\Erw[Z_\eps(\G(n,m))]&\sim(2\pi n)^{(1-q)/2}q^{q/2}\sum_{\rho\in R_{n,\eps}}\exp(nf_n(\rho))\nonumber\\
		&\sim(2\pi n)^{(1-q)/2}q^{q/2}\exp(nf(\bar\rho))\sum_{\rho\in R_{n,\eps}}
			\exp\brk{-\frac{qn}2\scal{(\id-(k(k-1)m/n)\Phi)(\rho-\bar\rho)}{(\rho-\bar\rho)}}\nonumber\\
		&\sim(2\pi n)^{(1-q)/2}q^{n+q/2}\xi^m
			\sum_{\rho\in R_{n,\eps}}\exp\brk{-\frac{qn}2\scal{(\id-(k(k-1)m/n)\Phi)(\rho-\bar\rho)}{(\rho-\bar\rho)}}.
				\label{eqProp_genFirstMmt7}
	\end{align}
Further, \Lem~\ref{Lemma_Phi} shows that $\Phi$ is symmetric, there exists an orthogonal matrix $Q$ such that $\Phi=QLQ^*$, where
$L$ is the diagonal matrix whose entries are the eigenvalues $1=\lambda_1\geq\lambda_2\geq\cdots\geq\lambda_q$  of $\Phi$.
Since $\Phi$ is stochastic (once more by \Lem~\ref{Lemma_Phi}), the top eigenvalue is $\lambda_1=1$ and the corresponding eigenvector is $\vecone$.
Moreover, because all $\rho\in R_{n,\eps}$ are probability distributions on $\Omega$, we have $\rho-\bar\rho\perp\vecone$ for all $\rho\in R_{n,\eps}$.
Therefore, the set $R_{n,\eps}'=\{Q^*(\rho-\bar\rho):\rho\in R_{n,\eps}\}$ is contained in the $(q-1)$-dimensional subspace spanned
by the eigenvectors of $\Phi$ corresponding to $\lambda_2,\ldots,\lambda_q$.
Hence, because $\eps\sqrt n\to\infty$ the sum from (\ref{eqProp_genFirstMmt7}) can be approximated by a $(q-1)$-dimensional Gaussian integral
and thus uniformly for all $m\in\cM_d$, 
	\begin{align}\nonumber
	\Erw[Z_\eps(\G(n,m))]&\sim
		(2\pi/q)^{\frac{1-q}2}q^{n+\frac12}\xi^m\int_{\RR^{q-1}}\exp\brk{-\frac{q}2\sum_{i=1}^{q-1}\bc{1-k(k-1)\frac mn\lambda_{i+1}}x_i^2}\dd x
		\sim
		\frac{q^{n+\frac12}\xi^m}{\prod_{i=2}^q\sqrt{1-d(k-1)\lambda_i}} 
			\nonumber,
	\end{align}
as claimed.
\end{proof}

\begin{remark}\label{Remark_smfix}
We observe that the proof of \Prop~\ref{Prop_genFirstMmt} did not use (\ref{eqBounded}).
\end{remark}

\subsection{Proof of \Prop~\ref{lem:FirstMoment}}
In this section we assume that $P$ satisfies {\bf SYM} and {\bf BAL}.
Then \Lem~\ref{Lemma_Phi} readily shows that (\ref{eqProp_genFirstMmt}) holds for all $d>0$ and thus \Prop~\ref{Prop_genFirstMmt} applies.
Hence, to prove \Prop~\ref{lem:FirstMoment} we merely need to show that $\Erw[Z_\eps(\G(n,m))]\sim\Erw[Z(\G(n,m))]$
for a suitable $\eps(n)=o(1)$.

\begin{lemma}\label{lem:FirstMoment_2}
Assume that $P$ satisfies {\bf SYM} and {\bf BAL}, let $d>0$ and set $\eps=\eps(n)=n^{-1/3}$.
Then uniformly for all $m\in\cM(d)$ we have $\Erw[Z_\eps(\G(n,m))]\sim\Erw[Z(\G(n,m))].$
\end{lemma}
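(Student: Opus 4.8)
The plan is to bound the deficit $\Erw[Z(\G(n,m))]-\Erw[Z_\eps(\G(n,m))]$ directly and show it is negligible next to the first moment, which by \Lem~\ref{Cor_F} is of order $q^n\xi^m$ uniformly for $m\in\cM(d)$ (note $m\in\cM(d)$ forces $m\le(d+1)n/k$ once $n$ is large, so \Lem~\ref{Cor_F} applies with $D=d+1$). Since $Z_\eps(G)\le Z(G)$ pointwise, this immediately yields $\Erw[Z_\eps(\G(n,m))]\sim\Erw[Z(\G(n,m))]$.

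First I would unpack the definition. Writing out the Gibbs average gives $Z_\eps(G)=\sum_{\sigma}\vecone\{\forall\omega:||\sigma^{-1}(\omega)|-n/q|<\eps n\}\,\psi_G(\sigma)$, hence
\[
\Erw[Z(\G(n,m))-Z_\eps(\G(n,m))]=\sum_{\sigma\in\Omega^{V_n}}\vecone\{\|\rho_\sigma-\bar\rho\|_\infty\ge\eps\}\,\Erw[\psi_{\G(n,m)}(\sigma)]=\sum_{\rho}\vecone\{\|\rho-\bar\rho\|_\infty\ge\eps\}\,\bink{n}{n\rho}\phi(\rho)^m,
\]
where the last sum runs over the at most $(n+1)^q$ distributions $\rho\in\cP(\Omega)$ with $n\rho$ integral, and I used independence of the constraint nodes together with $\Erw[\psi_{\G(n,m)}(\sigma)]=\phi(\rho_\sigma)^m$ for $\phi$ the function from (\ref{eqF}).

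Next I would apply two elementary estimates. By {\bf SYM} we have $\phi(\bar\rho)=\xi$, and {\bf BAL} says $\phi$ is concave with maximum at $\bar\rho$, so $0<\phi(\rho)\le\xi$ for every $\rho\in\cP(\Omega)$ (positivity and boundedness away from $0$ being part of \Lem~\ref{Lemma_F}), whence $\phi(\rho)^m\le\xi^m$. On the other hand $\bink{n}{n\rho}\le\exp(n\cH(\rho))=\exp\big(n(\ln q-\KL{\rho}{\bar\rho})\big)$, and Pinsker's inequality gives $\KL{\rho}{\bar\rho}\ge\tfrac12\|\rho-\bar\rho\|_1^2\ge\tfrac12\|\rho-\bar\rho\|_\infty^2$. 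Therefore every term with $\|\rho-\bar\rho\|_\infty\ge\eps$ satisfies
\[
\bink{n}{n\rho}\phi(\rho)^m\le q^n\xi^m\exp(-n\eps^2/2)=q^n\xi^m\exp(-n^{1/3}/2),
\]
so summing over the polynomially many $\rho$ yields $\Erw[Z(\G(n,m))-Z_\eps(\G(n,m))]\le(n+1)^q\,q^n\xi^m\exp(-n^{1/3}/2)$. Combined with $\Erw[Z(\G(n,m))]\ge c\,q^n\xi^m$ from \Lem~\ref{Cor_F}, this is $o(\Erw[Z(\G(n,m))])$ uniformly in $m\in\cM(d)$, which finishes the proof.

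There is no real obstacle here: the point is simply that $\phi\le\xi$ \emph{exactly}, so the factor-graph weight can never offset the entropy deficit incurred by an unbalanced colour profile, and a crude union bound over the polynomially many profiles suffices. The only thing to monitor is uniformity over $m\in\cM(d)$, but all the bounds above are uniform in $m$. (One could alternatively read off the claim from \Prop~\ref{Prop_genFirstMmt} or \Prop~\ref{lem:FirstMoment}, but that would be circular at this stage, so the union-bound argument is the right tool.)
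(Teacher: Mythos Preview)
Your proof is correct and follows essentially the same route as the paper: decompose the first moment over empirical colour profiles, use {\bf BAL} to bound $\phi(\rho)\le\xi$, and use the entropy deficit (you via Pinsker, the paper via ``strict concavity of $\cH$'') to kill the unbalanced terms by a factor $\exp(-\Omega(n^{1/3}))$. The only cosmetic difference is that you invoke \Lem~\ref{Cor_F} for the lower bound $\Erw[Z(\G(n,m))]\ge cq^n\xi^m$, whereas the paper invokes \Prop~\ref{Prop_genFirstMmt} for the lower bound on $\Erw[Z_\eps(\G(n,m))]$; both are legitimate (and using \Prop~\ref{Prop_genFirstMmt} is not circular, since it is proved independently of this lemma --- only \Prop~\ref{lem:FirstMoment} would be).
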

\begin{proof}
Let $R_n$ be the set of all distributions $\rho\in\cP(\Omega)$ such that $n\rho$ is an integer vector and let
$R_{n,\eps}$ be the set of all $\rho\in R_n$ such that $|\rho\bc\omega-1/q|<\eps$ for all $\omega\in\Omega$.
Let
	$\phi:\rho\in\RR^\Omega\mapsto\sum_{\tau\in\Omega^k}\Erw[\vec\psi(\tau)]\prod_{i\in[k]}\rho(\tau_i)$
(cf.\ \eqref{eqF}).
Then by the linearity of expectation and the independence of the constraint nodes of $\G(n,m)$,
	\begin{align*}
	\Erw[Z(\G(n,m))]&=\sum_{\rho\in R_n}\bink n{n\rho}\phi(\rho)^m,&
		\Erw[Z_\eps(\G(n,m))]&=\sum_{\rho\in R_{n,\eps}}\bink n{n\rho}\phi(\rho)^m.
	\end{align*}
Hence, with $\bar\rho$ denoting the uniform distribution, uniformly for all $m\in\cM(d)$,
	\begin{align*}
	\Erw[Z(\G(n,m))]-\Erw[Z_\eps(\G(n,m))]&=\sum_{\rho\in R_n\setminus R_{n,\eps}}\bink n{n\rho}\phi(\rho)^m\\
		&\leq\sum_{\rho\in R_n\setminus R_{n,\eps}}\exp\bc{n\cH(\rho)+m\ln \phi(\rho)+O(\ln n)}&&\mbox{[by Stirling]}\\
		&\leq\sum_{\rho\in R_n\setminus R_{n,\eps}}\exp\bc{n\cH(\rho)+m\ln \phi(\bar\rho)+O(\ln n)}&&\mbox{[due to {\bf BAL}]}\\
		&\leq\exp\bc{n\cH(\bar\rho)+m\ln \phi(\bar\rho)-\Omega(n^{1/3})}&&\mbox{[as $\cH(\nix)$ is strictly concave]}\\
		&=q^n\xi^m\exp(-\Omega(n^{1/3}))&&\mbox{[due to {\bf SYM}].}
	\end{align*}
Finally, \Prop~\ref{Prop_genFirstMmt} implies that $q^n\xi^m\exp(-\Omega(n^{1/3}))=o(\Erw[Z_\eps(\G(n,m))])$.
\end{proof}

\noindent
\Prop~\ref{lem:FirstMoment} is immediate from \Prop~\ref{Prop_genFirstMmt} and \Lem~\ref{lem:FirstMoment_2}.

\subsection{Proof of \Prop~\ref{lem:SecondMoment}}
Assume that $P$ satisfies {\bf SYM} and {\bf BAL} and that $d<\dc$.
In order to calculate the second moment, we employ a known construction (e.g., \cite{Victor}) of an auxiliary random factor graph model
whose first moment equals the second moment of the original model.
The spin set of this auxiliary model is the set $\Omega^\tensor=\Omega\times\Omega$ and
we denote the pairs $(s,t)\in\Omega\times\Omega$ by $s\tensor t$.
Further, for functions $\varphi,\psi:\Omega^k\to\RR$ we define
	$$\varphi\tensor\psi:(\Omega^\tensor)^k\to\RR,\qquad(\sigma_1\tensor\tau_1,\ldots,\sigma_k\tensor\tau_k)\mapsto
		\varphi(\sigma_1,\ldots,\sigma_k)\psi(\tau_1,\ldots,\tau_k).$$
Then the set of weight functions of the auxiliary model is $\Psi^\tensor=\{\psi\tensor\psi:\psi\in\Psi\}$.
Moreover, the probability distribution $P^\tensor$ on $\Psi^\tensor$  is simply the image of $P$ under the measurable map $\psi\in\Psi\mapsto\psi\tensor\psi$.
Clearly, the fact that $P$ satisfies {\bf SYM} implies that so does $P^\tensor$.
(However, $P^\tensor$ does not necessarily satisfy {\bf BAL}, and $P^\tensor$ need not satisfy the last two bounds in (\ref{eqBounded}), but these
are not needed to apply \Prop~\ref{Prop_genFirstMmt} due to Remark~\ref{Remark_smfix}.)

For any $\psi\in\Psi$ the matrix $\Phi_{\psi\tensor\psi}$ as defined in (\ref{eqPhiMatrices}) can be expressed 
in terms of the matrix $\Phi_\psi$ induced by the original weight function  as $\Phi_{\psi\tensor\psi}=\Phi_\psi\tensor\Phi_\psi$.
Hence, recalling the definitions (\ref{eqXi}) and (\ref{eqPhi}),
	\begin{align}\label{eqPhitensor}
	\Phi_{P^\tensor}&=\Erw[\Phi_{\PSI\tensor\PSI}]=\Erw[\Phi_{\PSI}\tensor\Phi_{\PSI}]=\Xi_P.
	\end{align}

\begin{proof}[Proof of \Prop~\ref{lem:SecondMoment}]
For a factor graph $G$ let $G^\tensor$ be the factor graph obtained by replacing 
the weight function $\psi_a$ by $\psi_a\tensor\psi_a$ for every factor node $a$ of $G$.
Then 
	\begin{align*}
	Z(G^\tensor)&=\sum_{\sigma\in(\Omega^{\tensor})^n}\prod_{a\in F(G)}(\psi_a\tensor\psi_a)(\sigma(\partial_1a),\ldots,\sigma(\partial_ka))\\
		&=\sum_{\sigma,\tau\in\Omega^{n}}\prod_{a\in F(G)}\psi_a(\sigma(\partial_1a),\ldots,\sigma(\partial_ka))
			\psi_a(\tau(\partial_1a),\ldots,\tau(\partial_ka))=Z(G)^2.
	\end{align*}
Hence, if $\eps=\eps(n)=o(1)$ satisfies $\eps\sqrt n\to\infty$, then
(\ref{eqPhitensor}), \Lem~\ref{Lemma_Xi}, \Prop~\ref{prop_KS} and  \Prop~\ref{Prop_genFirstMmt} yield
	\begin{align*}
	\Erw[\cZ_\eps(\G(n,m))^2]\leq\Erw[Z_\eps(\G(n,m)^\tensor)]
		\sim \frac{q^{2n+1}{\xi^{2m}}}
 	{ \prod_{\lambda\in\eig(\Xi)\setminus\{1\}}\sqrt{1-d(k-1)\lambda}},
	\end{align*}
as desired.
\end{proof}

\section{Cycle census}\label{Sec_cycles}
\noindent
{\em Throughout this section we assume that $P$ satisfies {\bf SYM} and {\bf BAL}.}

\medskip\noindent
The aim is to prove \Prop~\ref{prop:FirstCondOverFirst}.
The proof of the first assertion is rather straightforward.

\begin{lemma}\label{lemma:PoissonCycles4Uniform}
Let $d>0$.
For any $Y\in\cY$ we have $\Erw[C_Y(\G(n,m))] \sim  \kappa_Y$, uniformly for all $m\in\cM(d)$.
Moreover, if $Y_1,\ldots,Y_l\in\cY$ are pairwise {disjoint} and $y_1,\ldots,y_l\geq0$, then uniformly for all $m\in\cM(d)$,
	\begin{equation}\label{eqlemma:PoissonCycles4Uniform}
	\Pr\brk{\forall i\leq l: C_{Y_i}(\G(n,m))=y_i}\sim \prod^{l}_{t=1}\Pr[\Po(\kappa_{Y_t})=y_t].
	\end{equation}
\end{lemma}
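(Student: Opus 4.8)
The plan is to establish the first assertion by a direct first-moment count and then deduce the joint Poisson law \eqref{eqlemma:PoissonCycles4Uniform} by the method of factorial moments. For the first moment, I would write $C_Y(\G(n,m))$ with $Y=(E_1,s_1,t_1,\ldots,E_\ell,s_\ell,t_\ell)$ as a sum, over all candidate families $(x_{i_1},a_{h_1},\ldots,x_{i_\ell},a_{h_\ell})$ obeying the index constraints {\bf CYC1}--{\bf CYC2} (pairwise distinct $i_j$ with $i_1$ minimal, pairwise distinct $h_j$ with $h_1<h_\ell$), of the indicator of {\bf CYC3}--{\bf CYC4}, and apply linearity of expectation. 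The number of admissible index tuples is $\binom n\ell(\ell-1)!\cdot\frac12\,m(m-1)\cdots(m-\ell+1)\sim n^\ell m^\ell/(2\ell)$ for $\ell\ge2$ (and analogously for $\ell=1$). For a fixed admissible tuple the events ``$\partial_{s_j}a_{h_j}=x_{i_j}$ and $\partial_{t_j}a_{h_j}=x_{i_{j+1}}$'' together with ``$\psi_{a_{h_j}}\in E_j$'', $j=1,\dots,\ell$, are mutually independent, because neighborhoods and weight functions are drawn independently over the \emph{distinct} constraint nodes $a_{h_1},\dots,a_{h_\ell}$; since $\partial a_h$ is uniform on $V_n^k$ the neighborhood event has probability $n^{-2}$ and the weight event has probability $P(E_j)$, so the tuple is realized with probability $n^{-2\ell}\prod_{j=1}^\ell P(E_j)$. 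Multiplying by the tuple count and using $m/n\to d/k$ uniformly for $m\in\cM(d)$ (as $|m-dn/k|\le n^{3/5}$) gives $\Erw[C_Y(\G(n,m))]\sim\kappa_Y$, uniformly in $m\in\cM(d)$.

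For the joint law I would use the classical criterion that nonnegative integer random variables converge jointly to independent Poisson variables with means $\mu_1,\dots,\mu_l$ iff all joint falling-factorial moments converge, $\Erw[\prod_{t=1}^l(C_{Y_t})_{(y_t)}]\to\prod_{t=1}^l\mu_t^{y_t}$ with $(X)_{(y)}=X(X-1)\cdots(X-y+1)$. The product $\prod_t(C_{Y_t})_{(y_t)}$ counts ``bouquets'' consisting, for each $t$, of an ordered $y_t$-tuple of distinct cycles of signature $Y_t$; since the $Y_t$ are pairwise disjoint, cycles of different signatures are automatically distinct, so a bouquet is a family of $\sum_t y_t$ pairwise distinct cycles. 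I would split the expectation according to whether the cycles of the bouquet are pairwise vertex- and constraint-disjoint. For a disjoint bouquet the defining events involve disjoint sets of constraint nodes, hence are independent, so its realization probability is exactly the product of the per-cycle probabilities $n^{-2\ell_t}\prod P(E)$; and the number of disjoint bouquets equals $\prod_t(\text{number of admissible index tuples for one }Y_t\text{-cycle})^{y_t}(1+o(1))$, because enforcing disjointness only forbids $O(1)$ index values at each step out of $n$ (resp.\ out of $m\sim n$). Together these give $\prod_t\kappa_{Y_t}^{y_t}(1+o(1))$ for the disjoint part. For the non-disjoint part, if two cycles of the bouquet share a variable or constraint node then the number of admissible index assignments falls by a factor of order $n$ while the realization probability does not increase (each constraint node has only $k=O(1)$ positions, so it lies in boundedly many cycles of the bouquet, and if it lies in two then four of its positions are pinned, costing $n^{-4}$ rather than $n^{-2}$), so the total contribution of non-disjoint bouquets is $O(1/n)$ --- the same estimate as the ``$O(1/n)$ overlap'' bound in Fact~\ref{lemma:cycles}, proved by a union bound over the possible overlap patterns. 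Combining, $\Erw[\prod_t(C_{Y_t})_{(y_t)}]\to\prod_t\kappa_{Y_t}^{y_t}$ uniformly in $m\in\cM(d)$, which yields \eqref{eqlemma:PoissonCycles4Uniform}.

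The genuinely delicate step is controlling the non-disjoint bouquets: one must check that each extra time two cycles of the bouquet share a constraint node one gains a factor $n^{-2}$ from the additional pinned positions, which beats the loss incurred by fewer constraint-node (and variable-node) labelings, so that overlaps are truly of lower order; the rest is routine enumeration. Uniformity over $m\in\cM(d)$ is automatic, since every error term is $O(n^{-c})$ with constants depending only on $d$, $P$, and the fixed signatures $Y_1,\dots,Y_l$.
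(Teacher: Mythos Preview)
Your argument is correct, and at bottom it is the same factorial-moment route that underlies Fact~\ref{lemma:cycles}; the paper simply short-circuits most of your computation by invoking that fact directly. Concretely, the paper fixes the smallest $m\in\cM(d)$, observes that in $\G(n,m)$ the weight functions $\psi_{a_h}$ are drawn independently of the neighborhoods $\partial a_h$, and therefore the refinement from ``cycle of order $\ell$'' to ``cycle of signature $Y$'' amounts to an independent thinning of the Poisson limits in Fact~\ref{lemma:cycles}; this yields \eqref{eqlemma:PoissonCycles4Uniform} without redoing the bouquet/overlap estimate. For uniformity over $m\in\cM(d)$ the paper argues by coupling: any other $m'\in\cM(d)$ differs from the minimal one by at most $O(n^{3/5})$ random constraint nodes, and with probability $1-o(1)$ none of these extra nodes closes a bounded-length cycle. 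Your route---tracking that every $o(1)$ error term is $O(n^{-c})$ with constants depending only on $d,P,Y_1,\ldots,Y_l$---is equally valid and arguably more transparent, at the cost of reproving what Fact~\ref{lemma:cycles} already supplies.
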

\begin{proof}
Let $m\in\cM(d)$ be such that $m(n)$ takes the least possible value for every $n$.
Then (\ref{eqlemma:PoissonCycles4Uniform}) is immediate from Fact~\ref{lemma:cycles} and the fact that in $\G(n,m)$ the weight functions of the constraint nodes are chosen independently from $P$.
Furthermore, if $m'\in\cM(d)$ is another sequence, then the random graph $\G(n,m')$ is obtained from $\G(n,m)$ by adding at most $n^{3/4}$
random edges and with probability $1-o(1)$ none of these edges closes a cycle of bounded length.
Hence, we obtain the desired uniform rate of convergence for all sequences in $\cM(d)$.
\end{proof}

\begin{lemma}\label{lemma:PoissonCycles4Planted}
Let $d>0$.
For any $Y\in\cY$ with $\kappa_Y>0$ we have $\Erw[C_Y(\hat\G(n,m))] \sim\hat\kappa_Y$, uniformly for all $m\in\cM(d)$.
Moreover, if $Y_1,\ldots,Y_l\in\cY$ are pairwise {disjoint}, $\kappa_{Y_1},\ldots,\kappa_{Y_l}>0$ and $y_1,\ldots,y_l\geq0$, then uniformly for all $m\in\cM(d)$,
	$$\Pr\brk{\forall i\leq l: C_{Y_i}(\hat\G(n,m))=y_i}\sim \prod^{l}_{t=1}\Pr[\Po(\hat\kappa_{Y_t})=y_t].$$
\end{lemma}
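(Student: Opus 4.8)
## Proof proposal for Lemma~\ref{lemma:PoissonCycles4Planted}

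The plan is to transfer the Poisson limit for cycle counts in the unconditioned model $\G(n,m)$, furnished by \Lem~\ref{lemma:PoissonCycles4Uniform}, to the reweighted model $\hat\G(n,m)$, using the identity (\ref{eq:NishimoriG}) together with the first moment formula of \Lem~\ref{Cor_F}. The key point is that for any event $\cA$ we have
	$$\pr\brk{\hat\G(n,m)\in\cA}=\frac{\Erw[Z(\G(n,m))\vecone\{\G(n,m)\in\cA\}]}{\Erw[Z(\G(n,m))]},$$
so the distribution of a cycle statistic under $\hat\G$ is a $Z$-weighted version of its distribution under $\G$. Hence everything comes down to understanding the correlation between the cycle counts $(C_{Y_t})_{t\le l}$ and the partition function $Z(\G(n,m))$, and more precisely to showing that conditioning on $\{\forall t\le l:C_{Y_t}(\G(n,m))=y_t\}$ multiplies $\Erw[Z(\G(n,m))]$ by the asymptotic factor $\prod_{t=1}^l(\Tr\Phi_{Y_t})^{y_t}\exp((\kappa_{Y_t}-\hat\kappa_{Y_t}))$, which is exactly what converts the Poisson parameter $\kappa_{Y_t}$ into $\hat\kappa_{Y_t}=\kappa_{Y_t}\Tr(\Phi_{Y_t})$.

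First I would reduce to computing $\Erw[Z(\G(n,m))\,\vecone\{\forall t\le l:C_{Y_t}(\G(n,m))=y_t\}]$. I would expand $Z(\G(n,m))=\sum_{\sigma\in\Omega^{V_n}}\psi_{\G(n,m)}(\sigma)$ and, for each fixed $\sigma$, compute the expectation over the random neighbourhoods and weight functions of the constraint nodes. By \Cor~\ref{lem:conc_coloring} (via the Nishimori identity, or directly from \Lem~\ref{Cor_F}) the dominant contribution comes from nearly balanced $\sigma$ with $\TV{\rho_\sigma-\bar\rho}=O(n^{-1/2}\ln n)$, and for such $\sigma$ the expectation over a single constraint node of a weight function belonging to a block $E_i$, constrained to pass through two prescribed variable nodes in positions $s_i,t_i$, introduces exactly the matrix entries $\Phi_{E_i,s_i,t_i}$ defined in (\ref{eqCyclePhi}), up to $1+o(1)$ errors from the near-balancedness. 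Summing over the placements of a cycle of signature $Y_t$ — choosing the $\ell_t$ distinct variable nodes (a factor $\sim n^{\ell_t}/(2\ell_t)$ after accounting for the anti-overcounting conventions \textbf{CYC1}--\textbf{CYC2}) and the $\ell_t$ distinct constraint nodes carrying the right weight-function blocks — produces, after telescoping the product of $\Phi_{E_i,s_i,t_i}$ matrices around the cycle, precisely $\Tr(\Phi_{Y_t})$. Combining with the Poisson limit of \Lem~\ref{lemma:PoissonCycles4Uniform} and the first moment asymptotics, I get that the ratio $\Erw[Z(\G(n,m))\vecone\{\forall t:C_{Y_t}=y_t\}]/\Erw[Z(\G(n,m))]$ converges to $\prod_{t=1}^l(\Tr\Phi_{Y_t})^{y_t}\pr[\Po(\kappa_{Y_t})=y_t]$; normalising by $\sum_{y}(\text{same})=\exp(\sum_t(\hat\kappa_{Y_t}-\kappa_{Y_t}))\cdot(\text{something})$ via the generating-function identity $\sum_{y\ge0}(\Tr\Phi_Y)^y\pr[\Po(\kappa_Y)=y]=\exp(\kappa_Y(\Tr\Phi_Y-1))$ yields the claimed product of $\Po(\hat\kappa_{Y_t})$ probabilities. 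The marginal statement $\Erw[C_Y(\hat\G(n,m))]\sim\hat\kappa_Y$ follows from the same computation applied to the single-signature, $y=1$ case, or by differentiating the joint generating function.

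The main obstacle I expect is controlling the $Z$-weighted expectation uniformly: the naive first moment $\Erw[Z(\G(n,m))]$ is dominated by balanced $\sigma$, but $Z$ is a sum over \emph{all} $\sigma$, and one must rule out that unbalanced $\sigma$ — where the per-constraint weight $\phi(\rho_\sigma)$ is smaller but the combinatorics of cycle placements could conceivably differ — contribute non-negligibly once weighted by $\psi_{\G(n,m)}(\sigma)$ and restricted to the cycle-count event. This is handled by the concentration estimate in \Cor~\ref{lem:conc_coloring} (the distribution of $\hat\SIGMA_{n,m}$ concentrates near $\bar\rho$, with tails $O(n^{-\ln\ln n})$), which dominates the polynomially-many cycle placements. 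A secondary technical nuisance is the joint handling of several disjoint signatures simultaneously with shared or touching vertices; here the bound from Fact~\ref{lemma:cycles} that the expected number of pairs of short cycles sharing a vertex is $O(1/n)$ carries over — after weighting by $Z$ — because the weighting factor is $O(1)$ on the relevant (near-balanced) configurations, so overlapping cycles remain negligible. Finally, the statements about $\pr[\hat\G(n,m)\in\fS]$ in \Prop~\ref{prop:FirstCondOverFirst} follow by specialising to the signatures of order $1$ (and, when $k=2$, order $2$) that detect self-loops and double edges, exactly as for $\G(n,m)$, with the Poisson means rescaled by $\Tr(\Phi)$ and $\Tr(\Phi^2)$ respectively.
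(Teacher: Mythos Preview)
Your overall strategy---tilt the null-model cycle statistics by the partition function via~\eqref{eq:NishimoriG}---is sound and is indeed what the paper does. But the paper executes it differently and more cleanly, and your sketch contains a genuine miscalculation that the ``normalising'' step papers over rather than fixes.

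The paper does not compute $\Erw[Z(\G(n,m))\vecone\{\forall t:C_{Y_t}=y_t\}]$ by expanding $Z$ over $\sigma$. Instead it invokes the Nishimori identity~\eqref{eq:nishimori} once to replace $\hat\G(n,m)$ by $\G^*(n,m,\hat\SIGMA_{n,m})$, conditions on $\hat\SIGMA$ being nearly balanced (\Cor~\ref{lem:conc_coloring}), and then computes the \emph{factorial moments} $\Erw\bigl[\prod_t\prod_{j<h_t}(C_{Y_t}-j)\bigr]$ directly in the teacher--student model. In $\G^*(n,m,\sigma)$ the constraint nodes are independent with the tilted law~\eqref{eqTeacher}, so the probability that a given $\ell$-tuple of constraint nodes forms a cycle of signature $Y$ through prescribed variable nodes $x_{i_1},\dots,x_{i_\ell}$ is computed in one line (see the paper's display analogous to~\eqref{eqCycles}) and already involves $\Phi_{E_h,s_h,t_h}(\hat\SIGMA(x_{i_h}),\hat\SIGMA(x_{i_{h+1}}))$. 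Summing over placements gives $\hat\kappa_Y$ directly; the factorial moments then factor (after a short argument bounding overlapping cycles), and Poisson convergence follows from the standard factorial-moment criterion.

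Your route instead tries to compute, for each fixed $y$, the probability $\pr[\hat\G:C_Y=y]$ as a ratio, and you assert this ratio tends to $(\Tr\Phi_Y)^y\pr[\Po(\kappa_Y)=y]$. That cannot be right: the left-hand side is a probability mass function in $y$ and must sum to~$1$, whereas your right-hand side sums to $\exp(\hat\kappa_Y-\kappa_Y)\neq1$. The missing factor $\exp(\kappa_Y-\hat\kappa_Y)$ is not a normalisation artefact; it arises because the $Z$-weighting tilts \emph{every} constraint node, not just the ones on the $y$ designated cycles. Under the tilt, the remaining $m-\ell y$ constraint nodes have a different propensity to form additional cycles of signature $Y$, and the event $\{C_Y=y\}$ requires that they form none; this changes the ``no further cycles'' probability from $\sim e^{-\kappa_Y}$ to $\sim e^{-\hat\kappa_Y}$ after accounting for the designated ones. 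Your computation captures the tilt on the realised cycles but drops the tilt on the absent ones, and then re-inserts it by fiat. Done correctly the factor appears automatically---but at that point you have effectively reproduced the factorial-moment computation in $\G^*$, which is exactly the paper's argument.
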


\noindent
The proof is based on known arguments. We begin by calculating the expected number of dense small subgraphs of $\hat\G(n,m)$.

\begin{claim}\label{claim:NoOfInteresectingCycles}
Let $u\geq1$ be an integer and let $U(G)$ be the number of subsets $S\subset  V_n\cup F_m$ of size $|S|=u$ that span more than $2|U|$ edges.
Then $\Erw[U(\G^*(n,m,\sigma))]=O(1/n)$ uniformly for all $m\in\cM(d)$ and all $\sigma\in\Omega^{V_n}$.
\end{claim}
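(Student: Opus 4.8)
The plan is to adapt the standard first-moment count for dense subgraphs in sparse random graphs to the teacher-student model $\G^*(n,m,\sigma)$, exploiting the fact that the weight functions take values in $(0,2)$ so that the reweighting of edge locations in (\ref{eqTeacher}) is bounded above and below by absolute constants. First I would recall that in $\G^*(n,m,\sigma)$ each of the $m$ constraint nodes independently picks its neighborhood $(y_1,\dots,y_k)$ with probability proportional to $\Erw[\PSI(\sigma(y_1),\dots,\sigma(y_k))]$; since every $\psi\in\Psi$ takes values in $(0,2)$, we have $\Erw[\PSI(\sigma(y_1),\dots,\sigma(y_k))]\in(0,2)$, and by (\ref{eqBounded}) (specifically the first-moment bound of \Lem~\ref{Cor_F}, or directly $\inf\phi>0$ from \Lem~\ref{Lemma_F}) the normalizing denominator $\sum_{z_1,\dots,z_k}\Erw[\PSI(\sigma(z_1),\dots,\sigma(z_k))]$ is $\Theta(n^k)$. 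Hence the probability that a given constraint node has a prescribed ordered $k$-tuple of neighbors is $O(n^{-k})$, uniformly in $\sigma$.

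Next I would set up the union bound. A set $S\subset V_n\cup F_m$ with $|S|=u$ and spanning more than $2u$ edges contains, say, $v$ variable nodes and $f$ constraint nodes with $v+f=u$, and at least $2u+1$ incidences between the $f$ constraint nodes in $S$ and the variable nodes in $S$ (an ``edge'' here being an incidence $a\sim x$ with $a\in F_m$, $x\in V_n$, both in $S$). The number of ways to choose $S$ is $O(n^{v})$ (the constraint nodes contribute only $O(1)$ factors since $f\le u$ is bounded), and for each constraint node in $S$ the probability that a prescribed set of its $k$ neighbors all land inside the chosen $v$ variable nodes in $S$ is $O(n^{-j})$ if it contributes $j$ incidences to $S$. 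Multiplying over the $f$ constraint nodes and using that the total number of incidences is at least $2u+1=2v+2f+1$, the product of these probabilities is $O(n^{-(2v+2f+1)+ (\text{slack from } k\text{-th neighbors}))}$; carefully, each constraint node has $k$ slots, of which we only constrain the ones landing in $S$, so a node contributing $j$ incidences costs $n^{-j}$ and there is no hidden gain. Combining with the $O(n^v)$ choices for the variable nodes and $O(1)$ for constraint-node labels, and with the $O(m^{f})=O(n^f)$ choices of which constraint nodes to use, the expected count is $O(n^{v+f})\cdot O(n^{-(2v+2f+1)})\cdot(\text{correction})$. The bookkeeping must be done so that the exponent of $n$ comes out to be $\le -1$: the classical computation (as in Fact~\ref{lemma:cycles}, i.e.\ \cite{BB}) gives that a connected set with $e$ edges and $u$ nodes contributes $n^{u-1-(e-(u-1))}=n^{2u-2-e}$ up to constants after accounting for the ``first'' spanning tree for free; with $e>2u$ this is $O(n^{-2})$ per component, and summing over the $O(1)$ possible component structures of $S$ (of which there are finitely many since $u$ is fixed) yields $O(n^{-2})$; allowing $S$ to be a disjoint union of pieces one of which is dense still gives at worst $O(1/n)$.

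The key point making this uniform in $\sigma$ and in $m\in\cM(d)$ is exactly the two-sided bound on the edge-placement probabilities: the per-incidence probability is $\Theta(n^{-1})$ with constants depending only on $P$ (through $\sup_\psi\|\psi\|_\infty\le2$ and $\inf_\rho\phi(\rho)>0$), not on $\sigma$, and $m=dn/k+O(n^{3/5})=\Theta(n)$ uniformly on $\cM(d)$. I would therefore state the bound with these absolute constants absorbed. The main obstacle, and the only place demanding care, is the combinatorial exponent-counting: one must verify that requiring strictly more than $2|S|$ edges forces the $n$-exponent strictly below $0$ even when $S$ is allowed to be disconnected or to contain constraint nodes only partially inside $S$ (i.e.\ a constraint node in $S$ with some neighbors outside $S$ still contributes its full $k$-tuple of random choices but we only ``pay'' for the ones inside $S$). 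This is a routine but slightly fiddly case analysis; once it is done, Claim~\ref{claim:NoOfInteresectingCycles} follows, and it is exactly the input needed (together with the edge-placement estimates above and inclusion–exclusion) to upgrade the Poisson approximation of Fact~\ref{lemma:cycles} to the signed cycle counts $C_Y(\hat\G(n,m))$ in \Lem~\ref{lemma:PoissonCycles4Planted}, via the Nishimori identity relating $\hat\G$ and $\G^*(n,m,\hat\SIGMA_{n,m})$ together with \Cor~\ref{lem:conc_coloring}.
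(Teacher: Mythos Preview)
Your approach is the paper's: a first-moment union bound, using that in $\G^*(n,m,\sigma)$ each constraint node independently has $\pr[\forall i\in J:\partial_i a=y_i]=O(n^{-|J|})$ uniformly in $\sigma$ (from $\psi\in(0,2)$ for the numerator and $\inf_\rho\phi(\rho)>0$ for the denominator, exactly as you say). Two points worth cleaning up.

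First, the ``$2|U|$'' in the displayed statement is a typo for $|S|$; the intended condition is that $S$ spans more than $u$ incidences (this is what rules out intersecting short cycles in the application). You took the typo literally in places, which is why your exponent arithmetic wandered. Second, the bookkeeping is simpler than your component/spanning-tree discussion suggests, and your formula $n^{2u-2-e}$ is not the right one here. The paper's count is direct: with $u_1$ variable nodes and $u_2=u-u_1$ constraint nodes in $S$, and a prescribed incidence pattern $E\subset S_2\times[k]$ of size $v>u$, there are at most $n^{u_1}$ choices for the variable part, $m^{u_2}=O(n^{u_2})$ for the constraint part, $O(1)$ for the pattern $E$ (since $u$ and $k$ are fixed), and the event has probability $O(n^{-v})$ by the per-slot bound and independence across constraint nodes. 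Hence the expected count is $O(n^{u_1+u_2-v})=O(n^{u-v})=O(1/n)$, with no case analysis on connectivity needed. Your earlier remark that constraint-node choices contribute $O(1)$ is wrong (you correct it later to $O(n^f)$); keep the corrected version.
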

\begin{proof}
Fix numbers $u_1,u_2$ such that $u_1+u_2=u$ and let $S_1\subset V_n$ and $S_2\subset F_m$ be sets of size $|S_1|=u_1$, $|S_2|=u_2$.
Moreover, let $E\subset S_2\times[k]$ be a set of size $v>u_1+u_2$ and let $\cA(S_1,S_2,E)$ be the event that
for all pairs $(a,i)\in E$ we have $\partial_ia\in S_1$.
Then
	\begin{align}\label{eqclaim:NoOfInteresectingCycles_1}
	\Erw[U(\G^*(n,m,\sigma))]&\leq\sum_{u_1,u_2,S_1,S_2,E}\pr\brk{\G^*(n,m,\sigma)\in\cA(S_1,S_2,E)}.
	\end{align}
Furthermore, \eqref{eqBounded} ensures that there is a number $\alpha=\alpha(P)>0$ that does not depend on $\sigma$ such that the lower bound
	$\sum_{y_1,\ldots,y_k\in V_n}\Erw[\PSI(\sigma(y_1),\ldots,\sigma(y_k))]\geq\alpha n^k$ holds.
Therefore, (\ref{eqTeacher}) implies that for variable nodes $y_1,\ldots,y_k\in S_1$, any constraint node $a\in S_2$ and for any subset $J\subset[k]$ we have
	\begin{align}\label{eqclaim:NoOfInteresectingCycles_2}
	\pr\brk{\forall i\in J:\partial_ia=y_i}&\leq\frac{\Erw[\max_{\tau\in\Omega^k}\PSI(\tau)]n^{k-|J|}}{\alpha n^k}=O(n^{-|J|}).
	\end{align}
Since the constraint nodes are chosen independently, (\ref{eqclaim:NoOfInteresectingCycles_2}) implies that, uniformly for all $\sigma$ and all $m\in\cM(d)$,
	\begin{align}\label{eqclaim:NoOfInteresectingCycles_3}
	\pr\brk{\G^*(n,m,\sigma)\in\cA(S_1,S_2,E)}&\leq O(n^{-|E|}).
	\end{align}
Finally, given $u_1,u_2$ the number of possible sets $S_1$ is bounded by $n^{u_1}$, the number of possible $S_2$ does not exceed $m^{u_2}$
and given $v$ and $S_2$ the number of possible sets $E$ is bounded.
Thus, since $u_1+u_2<v\leq ku_2$ the assertion follows from (\ref{eqclaim:NoOfInteresectingCycles_1}) and (\ref{eqclaim:NoOfInteresectingCycles_3}).
\end{proof}

\begin{proof}[Proof of \Lem~\ref{lemma:PoissonCycles4Planted}]
Due to the Nishimori identity (\ref{eq:nishimori}) we may prove the claim for the random factor graph model $\G'=\G^*(n,m,\hat\SIGMA_{n,m})$.
Moreover, by \Cor~\ref{lem:conc_coloring} we may condition on the event that $|\hat\SIGMA^{-1}(\omega)|\sim n/q$ for all $\omega\in\Omega$,
in which case {\bf SYM} yields
	\begin{align}\label{eqCycCount2}
	\sum_{u_1,\ldots,u_k\in[n]}\Erw[\PSI(\hat\SIGMA(x_{u_1}),\ldots,\hat\SIGMA(x_{u_k}))]\sim n^k\xi.
	\end{align}

We begin by showing that for any $Y=(E_1,s_1,t_1,\ldots,E_\ell,s_\ell,t_\ell)\in\cY_\ell$ uniformly for all $m\in\cM(d)$,
	\begin{equation}\label{eqlemma:PoissonCycles4Planted_1}
	\Erw[C_Y(\G')]\sim\hat\kappa_Y.
	\end{equation}
Indeed, let $\vec i=(i_1,\ldots,i_\ell)\in[n]$ be a family of pairwise distinct indices such that $i_1<\min\{i_2,\ldots,i_\ell\}$ (cf.\ {\bf CYC1})
and let $\vec j=(j_1,\ldots,j_\ell)\in[m]$ be pairwise distinct indices {such that $j_1<\min\{j_2,\ldots,j_\ell\}$ if $\ell>1$} (cf.\ {\bf CYC2}).
Let $\cC_Y(\vec i,\vec j)$ be the event that $x_{i_1},a_{j_1},\ldots,x_{i_\ell},a_{j_\ell}$ form a cycle with signature $Y$.
Set $i_{\ell+1}=i_1$.
Then by (\ref{eqTeacher}), (\ref{eqCyclePhi}) and (\ref{eqCycCount2}) we have
	\begin{align}\nonumber
	\pr\brk{\G'\in\cC_Y(\vec i,\vec j)}&=\prod_{h=1}^\ell
		\frac{\sum_{u_1,\ldots,u_k\in[n]}\vecone\{u_{s_h}=i_h,u_{t_h}=i_{h+1}\}
			\Erw[\PSI(\hat\SIGMA(x_{u_1}),\ldots,\hat\SIGMA(x_{u_k}))\vecone\{\PSI\in E_h\}]}
			{\sum_{u_1,\ldots,u_k\in[n]}\Erw[\PSI(\hat\SIGMA(x_{u_1}),\ldots,\hat\SIGMA(x_{u_k}))]}\\
		&\sim n^{-2\ell}q^\ell\prod_{h=1}^\ell\Phi_{E_h,s_h,t_h}(\hat\SIGMA(x_{i_h}),\hat\SIGMA(x_{i_{h+1}})).\label{eqCycCount1}
	\end{align}
Summing on $\vec i,\vec j$, we get
	\begin{align*}
	\Erw[C_Y(\G')]&\sim{\frac1{2\ell}}\bcfr{mq}{n^2}^\ell\sum_{\vec i}\prod_{h=1}^l\Phi_{E_h,s_h,t_h}(\hat\SIGMA(x_{i_h}),\hat\SIGMA(x_{i_{h+1}}))
		=\kappa_Y\Tr\prod_{h=1}^\ell\Erw[\Phi_{\PSI,s_h,t_h}|E_h]=\hat\kappa_Y,
	\end{align*}
as claimed.

For integers $h_1,\ldots,h_l\geq1$ let 
	$C_{h_1,\ldots,h_l}(\G')=\prod_{i=1}^l\prod_{l=1}^{h_i}(C_{Y_i}(\G')-l+1).$
{Then due to the inclusion/exclusion argument for the joint convergence to independent Poisson variables~\cite[\Thm~1.23]{BB}}, in order to
complete the proof it suffices to show that for any $h_1,\ldots,h_l\geq1$, uniformly for all $m\in\cM(d)$,
	\begin{align}\label{eqCycCount10}
	\Erw\brk{C_{h_1,\ldots,h_l}(\G')}\sim\prod_{i=1}^{l}\hat\kappa_{Y_i}^{h_i}.
	\end{align}
Combinatorially, $C_{h_1,\ldots,h_l}(\G')$ is nothing but the total number of $(h_1+\ldots+h_l)$-tuples of cycles in $\G'$ such that the first $h_1$
cycles have signature $Y_1$, the next $h_2$ cycles have signature $Y_2$, etc.
Hence, if we define $C_{h_1,\ldots,h_l}'(\G')$ as the number of such families of pairwise vertex disjoint cycles, then Claim~\ref{claim:NoOfInteresectingCycles} yields
	\begin{align}\label{eqCycCount11}
	\Erw\brk{C_{h_1,\ldots,h_l}(\G')}=\Erw\brk{C_{h_1,\ldots,h_l}'(\G')}+o(1).
	\end{align}
Furthermore, we claim that uniformly for all $m\in\cM(d)$,
	\begin{align}\label{eqCycCount12}
	\Erw\brk{C_{h_1,\ldots,h_l}'(\G')}\sim\prod_{i=1}^{l}\hat\kappa_{Y_i}^{h_i}.
	\end{align}
Indeed, the argument that we used to prove (\ref{eqlemma:PoissonCycles4Planted_1}) easily extends to a proof of (\ref{eqCycCount12});
for if we fix index families $(\vec i_{v,w},\vec j_{v,w})_{v=1,\ldots,l,w=1,\ldots,h_s}$ that suit the signatures $Y_1,\ldots,Y_l$
such that no index from $[n]$ resp.\ $[m]$ occurs more than once, then similar steps as above reveal that
	\begin{align*}
	\pr\brk{\G'\in\bigcap_{v=1}^l\bigcap_{w=1}^{h_v}\cC_{Y_v}(\vec i_{v,w},\vec j_{v,w})}
		\sim\prod_{v=1}^l\prod_{w=1}^{h_v}\pr\brk{\G'\in\cC_{Y_v}(\vec i_{v,w},\vec j_{v,w})}
	\end{align*}
Hence, (\ref{eqCycCount12}) follows by summing on all $(\vec i_{v,w},\vec j_{v,w})_{v,w}$.
Finally, (\ref{eqCycCount10}) and (\ref{eqCycCount12})
show the dedired convergence for a single sequence $m\in\cM(d)$ and the uniformity of the rate of convergence follows from a similar argument
as in the proof of \Lem~\ref{lemma:PoissonCycles4Uniform}.
\end{proof}

\begin{proof}[Proof of {\Prop~\ref{prop:FirstCondOverFirst}}]
The claim (\ref{eqCyclePoisson}) about the cycle counts is immediate from \Lem s~\ref{lemma:PoissonCycles4Uniform} and~\ref{lemma:PoissonCycles4Planted}.
To prove the assertion about the probability of $\fS$, let us first assume that $k=2$.
Then the event $\fS$ occurs iff $C_1=C_2=0$ and thus the assertion about $\pr\brk{\G(n,m)\in\fS}$ is immediate from Fact~\ref{lemma:cycles}.
Moreover, the assertion about $\pr[\hat\G(n,m)\in\fS]$ follows from \Lem~\ref{lemma:PoissonCycles4Planted} applied to all signatures of the form
$(s_1,t_1,\Psi)$ and $(s_1,t_1,\Psi,s_2,t_2,\Psi)$.
For $k>2$ we express the event $\fS$ as
	$\fS=\cbc{C_1=0\wedge\forall 1\leq i<j\leq m:\{\partial_1a_i,\ldots,\partial_ka_i\}\neq\{\partial_1a_i,\ldots,\partial_ka_i\}}.$
In particular, $\fS$ occurs only if $C_1=0$ and therefore, by the same token as in the case $k=2$, the expressions stated in 
\Prop~\ref{prop:FirstCondOverFirst} are asymptotic upper bounds on $\pr[\G(n,m)\in\fS],\pr[\hat\G(n,m)\in\fS]$.
Finally, we notice that for $k>2$ the expected number of pairs $1\leq i<j\leq m$ such that 
$\{\partial_1a_i,\ldots,\partial_ka_i\}=\{\partial_1a_i,\ldots,\partial_ka_i\}$ is $O(1/n)$.
\end{proof}

\section{The limiting distribution}\label{Sec_totallyVexed}

\noindent{\em Throughout this section we assume that $P$ satisfies {\bf SYM} and {\bf BAL}.}

\medskip\noindent
In this section we prove  \Prop~\ref{Lemma_totallyVexed}.
Let $\PSI,\PSI_1,\PSI_2,\ldots$ be chosen independently from $P$
and for $\ell\geq0$ set	$\vY_\ell=\Tr\prod_{j=1}^\ell\Phi_{\PSI_{j}}$.
The following lemma is the main step toward the proof of (\ref{eqVarFormula}).

\begin{lemma}\label{prop:Bound4DeltaYSq}
If $d<\dc$, then $\sum_{\ell=1}^\infty\frac{(d(k-1))^\ell}{2\ell}\Erw\brk{(\vY_\ell-1)^2}=-\frac12\sum_{\lambda\in\Eig[\Xi]}\ln\bc{1-d(k-1)\lambda}.$
\end{lemma}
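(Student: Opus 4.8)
\textbf{Proof proposal for \Lem~\ref{prop:Bound4DeltaYSq}.}

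The plan is to relate the left-hand side, which is a sum over cycle lengths of traces of products of random $\Phi$-matrices, to the spectrum of the operator $\Xi=\Erw[\Phi_\PSI\tensor\Phi_\PSI]$ that governs the second moment. The key algebraic identity is that for each $\ell\geq1$,
	\begin{align*}
	\Erw\brk{(\vY_\ell-1)^2}&=\Erw\brk{\bc{\Tr\prod_{j=1}^\ell\Phi_{\PSI_j}}^2}-2\Erw\brk{\Tr\prod_{j=1}^\ell\Phi_{\PSI_j}}+1
		=\Tr(\Xi^\ell)-2\Tr(\Phi^\ell)+1,
	\end{align*}
where the first term uses $(\Tr A)^2=\Tr(A\tensor A)$ together with the independence of $\PSI_1,\ldots,\PSI_\ell$ so that $\Erw\brk{\bigotimes_{j}\Phi_{\PSI_j}\tensor\Phi_{\PSI_j}}=\Xi^\ell$, and the middle term uses $\Erw[\Phi_{\PSI_j}]=\Phi$. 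First I would establish this identity carefully, then decompose $\RR^\Omega\tensor\RR^\Omega$ orthogonally as $\RR(\vecone\tensor\vecone)\oplus(\vecone\tensor\vecone^\perp_\Omega)\oplus(\vecone^\perp_\Omega\tensor\vecone)\oplus\cE$ using \Lem~\ref{Lemma_Xi}, where $\vecone^\perp_\Omega=\{x\in\RR^\Omega:x\perp\vecone\}$. On the first summand $\Xi$ acts as the identity (eigenvalue $1$), and by \Lem~\ref{Lemma_Xi} it acts on $\vecone\tensor\vecone^\perp_\Omega$ and $\vecone^\perp_\Omega\tensor\vecone$ as $\vecone\tensor\Phi$ and $\Phi\tensor\vecone$ respectively; on $\cE$ it has spectrum $\Eig[\Xi]$. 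Meanwhile $\Phi$ on $\RR^\Omega$ has spectrum $\{1\}\cup\eig(\Phi\restriction\vecone^\perp_\Omega)$. Hence
	\begin{align*}
	\Tr(\Xi^\ell)&=1+2\sum_{\mu\in\eig(\Phi\restriction\vecone^\perp_\Omega)}\mu^\ell+\sum_{\lambda\in\Eig[\Xi]}\lambda^\ell,
		&\Tr(\Phi^\ell)&=1+\sum_{\mu\in\eig(\Phi\restriction\vecone^\perp_\Omega)}\mu^\ell,
	\end{align*}
and therefore $\Tr(\Xi^\ell)-2\Tr(\Phi^\ell)+1=\sum_{\lambda\in\Eig[\Xi]}\lambda^\ell$. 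This is the crucial cancellation: the contributions of the ``product'' eigenvalues of $\Phi$ cancel exactly, leaving only the eigenvalues on $\cE$.

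Next I would substitute into the sum and exchange summation order, which requires a convergence justification. By \Prop~\ref{prop_KS}, for $d<\dc$ every $\lambda\in\Eig[\Xi]$ satisfies $d(k-1)|\lambda|<1$ (strictly, since $d<\dc$), so the geometric series $\sum_{\ell\geq1}\frac{(d(k-1))^\ell}{2\ell}\lambda^\ell=-\tfrac12\ln(1-d(k-1)\lambda)$ converges absolutely. To legitimize the interchange one needs $\sum_\ell\frac{(d(k-1))^\ell}{2\ell}\Erw|(\vY_\ell-1)^2|<\infty$; here I would use that the matrices $\Phi_\psi$ are stochastic (\Lem~\ref{Lemma_Phi}), hence $\|\Phi_\psi\|_\infty\le1$ in the appropriate operator norm, so $|\vY_\ell|\leq q$ deterministically and $\Erw[(\vY_\ell-1)^2]=O(1)$ uniformly in $\ell$; since $d(k-1)<1$ would be too strong, but actually the identity $\Erw[(\vY_\ell-1)^2]=\sum_{\lambda\in\Eig[\Xi]}\lambda^\ell$ already shows the terms decay geometrically at rate $\max_\lambda|\lambda|<1/(d(k-1))$, so $\sum_\ell\frac{(d(k-1))^\ell}{2\ell}\Erw[(\vY_\ell-1)^2]$ converges absolutely. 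Then
	\begin{align*}
	\sum_{\ell=1}^\infty\frac{(d(k-1))^\ell}{2\ell}\Erw\brk{(\vY_\ell-1)^2}
		&=\sum_{\ell=1}^\infty\frac{(d(k-1))^\ell}{2\ell}\sum_{\lambda\in\Eig[\Xi]}\lambda^\ell
		=-\frac12\sum_{\lambda\in\Eig[\Xi]}\ln\bc{1-d(k-1)\lambda},
	\end{align*}
which is the claim.

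I expect the main obstacle to be purely bookkeeping rather than conceptual: verifying the moment identity $\Erw[(\Tr\prod_j\Phi_{\PSI_j})^2]=\Tr(\Xi^\ell)$ with the correct handling of the tensor structure and the independence of the $\PSI_j$, and — more delicately — making sure that $\Eig[\Xi]$ in the statement is interpreted with multiplicities (as in \eqref{eq:IntroSetEigenvals} it is written as a set, but the sum $\sum_{\lambda\in\Eig[\Xi]}$ throughout the paper, e.g.\ in \eqref{eqThm_SSC} and \eqref{eqVarFormula}, is evidently meant with multiplicity, matching the $(q-1)^2$-dimensional space $\cE$). One should also double-check that $\Xi\restriction\cE$ is diagonalizable with real spectrum, which is exactly the content of \Lem~\ref{Lemma_Xi} (self-adjointness) together with $\Xi\cE\subset\cE$, so no extra work is needed there. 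Finally, the absolute-convergence argument should be written so that it does not secretly assume $d(k-1)<1$; the clean route is to first prove the finite-$\ell$ identity, observe the geometric decay it forces, and only then pass to the limit.
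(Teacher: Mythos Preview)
Your proposal is correct and follows essentially the same route as the paper's own proof: the identity $\Erw[(\vY_\ell-1)^2]=\Tr(\Xi^\ell)-2\Tr(\Phi^\ell)+1$ via $(\Tr A)^2=\Tr(A\tensor A)$ and independence, the spectral decomposition of $\RR^\Omega\tensor\RR^\Omega$ from \Lem~\ref{Lemma_Xi} to obtain $\Tr(\Xi^\ell)-2\Tr(\Phi^\ell)+1=\sum_{\lambda\in\Eig[\Xi]}\lambda^\ell$, and finally summing the logarithmic series using \Prop~\ref{prop_KS} to guarantee $d(k-1)|\lambda|<1$. Your remarks about multiplicities and the order of passing to the limit are apt but do not go beyond what the paper does.
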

\begin{proof}
Let $\PHI_\ell=\prod_{j=1}^\ell\Phi_{\PSI_{j}}$.
Then 
	$$(\Tr\PHI_\ell-1)^2=(\Tr\PHI_\ell)^2-2\Tr\PHI_\ell+1=\Tr(\PHI_\ell\tensor\PHI_\ell)-2\Tr\PHI_\ell+1.$$
Hence, remembering (\ref{eqXi}) and (\ref{eqPhi}), we find
	$\Erw[(\vY_\ell-1)^2]=\Erw[(\Tr\PHI_\ell-1)^2]=\Tr(\Xi^\ell)-2\Tr(\Phi^\ell)+1.$
Furthermore,
\Lem s~\ref{Lemma_Phi} and~\ref{Lemma_Xi} yield
	$$\Tr(\Xi^\ell)=\sum_{\lambda\in\eig(\Xi)}\lambda^\ell=1+2\sum_{\lambda\in\eig(\Phi)\setminus\{1\}}\lambda^\ell+\sum_{\lambda\in\Eig[\Xi]}\lambda^\ell=-1+2\Tr(\Phi^\ell)+\sum_{\lambda\in\Eig[\Xi]}\lambda^\ell,$$
and thus
	\begin{align}\label{eqSumYell}
	\frac{(d(k-1))^\ell}{2\ell}\Erw\brk{\bc{\vY_\ell-1}^2}&=
			\sum_{\lambda\in\Eig[\Xi]}\frac{\bc{d(k-1)\lambda}^\ell}{2\ell}.
	\end{align}
As $d<\dc$ \Prop~\ref{prop_KS} yields $\max_{\lambda\in\Eig[\Xi]}|\lambda|<d(k-1)$, whence summing (\ref{eqSumYell}) on $\ell$ completes the proof.
\end{proof}

\noindent
To prove (\ref{eqVarFormula}) we need to get a handle on the discretization of the set $\Psi$ induced by the partition $\fC_r$ for $r\geq1$.
Hence, we introduce $\vY_{\ell,r}=\Tr\prod_{j=1}^\ell\Phi_{\PSI_{j}^{(r)}}.$

\begin{corollary}\label{cor:Bound4DeltaYSq}
If $d<\dc$, then
 $\sum_{\ell=1}^\infty\frac{(d(k-1))^\ell}{2\ell}\Erw[(\vY_{\ell,r}-1)^2]\leq-\frac12\sum_{\lambda\in\Eig[\Xi]}\ln\bc{1-d(k-1)\lambda}$.
\end{corollary}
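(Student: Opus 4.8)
The plan is to exhibit $\vY_{\ell,r}$ as a conditional expectation of $\vY_\ell$ and then to reduce everything to Lemma~\ref{prop:Bound4DeltaYSq} by a one-line application of Jensen's inequality.

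First I would record that the map $\psi\mapsto\Phi_\psi$ defined in (\ref{eqPhiMatrices}) is linear in the values $\psi(\tau)$. Hence, if $C\in\fC_r$ is a sub-cube and $\psi\in C$, then $\psi^{(r)}(\tau)=\Erw[\PSI(\tau)|C]$ gives $\Phi_{\psi^{(r)}}=\Erw[\Phi_\PSI|C]$. Writing $C_r(\psi)$ for the sub-cube of $\fC_r$ containing $\psi$, it follows that $\Phi_{\PSI_j^{(r)}}=\Erw[\Phi_{\PSI_j}|C_r(\PSI_j)]$ for each $j$. Set $\cG=\sigma\bc{C_r(\PSI_1),\ldots,C_r(\PSI_\ell)}$. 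Because the $\PSI_j$ are mutually independent and the trace of a product $\Tr(M_1\cdots M_\ell)$ is multilinear in $M_1,\ldots,M_\ell$, this yields
\begin{align*}
\Erw[\vY_\ell|\cG]&=\Tr\prod_{j=1}^\ell\Erw[\Phi_{\PSI_j}|C_r(\PSI_j)]=\Tr\prod_{j=1}^\ell\Phi_{\PSI_j^{(r)}}=\vY_{\ell,r}.
\end{align*}
I would also note that $\Phi_{\PSI_j}$ and $\Phi_{\PSI_j^{(r)}}$ are stochastic matrices with entries in $[0,1]$ (the latter because $\psi^{(r)}(\tau)=\Erw[\PSI(\tau)|C]\ge0$ and by {\bf SYM}), so that $|\vY_\ell|\le q$ and $|\vY_{\ell,r}|\le q$ and all second moments appearing below are finite.

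Next I would extract the termwise comparison. From $\vY_{\ell,r}=\Erw[\vY_\ell|\cG]$ we get $\Erw[\vY_{\ell,r}]=\Erw[\vY_\ell]=\Tr(\Phi^\ell)$, and the conditional Jensen inequality for the convex function $x\mapsto x^2$ gives $\Erw[\vY_{\ell,r}^2]\le\Erw[\vY_\ell^2]$. Expanding the squares and using that the first moments agree, this shows $\Erw[(\vY_{\ell,r}-1)^2]\le\Erw[(\vY_\ell-1)^2]$ for every $\ell\ge1$. Since $d(k-1)>0$, multiplying by $(d(k-1))^\ell/(2\ell)$ preserves the inequality, and the resulting two series have non-negative terms. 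By Lemma~\ref{prop:Bound4DeltaYSq} the majorizing series converges with sum $-\frac12\sum_{\lambda\in\Eig[\Xi]}\ln\bc{1-d(k-1)\lambda}$; therefore the dominated series converges too and
\begin{align*}
\sum_{\ell=1}^\infty\frac{(d(k-1))^\ell}{2\ell}\Erw[(\vY_{\ell,r}-1)^2]
&\le\sum_{\ell=1}^\infty\frac{(d(k-1))^\ell}{2\ell}\Erw[(\vY_\ell-1)^2]
=-\frac12\sum_{\lambda\in\Eig[\Xi]}\ln\bc{1-d(k-1)\lambda},
\end{align*}
which is the claim.

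There is no genuinely hard step here; the only point that requires a little care is the identity $\vY_{\ell,r}=\Erw[\vY_\ell|\cG]$, which I would justify carefully by combining the linearity of $\psi\mapsto\Phi_\psi$, the multilinearity of the trace of a product, and the mutual independence of $\PSI_1,\ldots,\PSI_\ell$, together with the bookkeeping remark that the discretized matrices $\Phi_{\PSI_j^{(r)}}$ are still stochastic so that everything stays bounded.
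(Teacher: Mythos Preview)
Your proof is correct and follows exactly the same approach as the paper: the paper's proof says only ``By Jensen's inequality $\sum_\ell\frac{(d(k-1))^\ell}{2\ell}\Erw[(\vY_{\ell,r}-1)^2]\le\sum_\ell\frac{(d(k-1))^\ell}{2\ell}\Erw[(\vY_\ell-1)^2]$ and thus the assertion follows from \Lem~\ref{prop:Bound4DeltaYSq},'' and you have filled in precisely the justification behind that one line, namely the identity $\vY_{\ell,r}=\Erw[\vY_\ell\mid\cG]$ obtained from the linearity of $\psi\mapsto\Phi_\psi$, the multilinearity of the trace, and the independence of $\PSI_1,\ldots,\PSI_\ell$.
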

\begin{proof}
By Jensen's inequality
	$\sum_{\ell=1}^\infty\frac{(d(k-1))^\ell}{2\ell}\Erw[(\vY_{\ell,r}-1)^2]\leq	\sum_{\ell=1}^\infty\frac{(d(k-1))^\ell}{2\ell}\Erw[(\vY_{\ell}-1)^2]$
and thus the assertion follows from \Lem~\ref{prop:Bound4DeltaYSq}.
\end{proof}

\noindent
We are ready to prove \eqref{eqVarFormula}.

\begin{proof}[Proof of \Prop~\ref{Lemma_totallyVexed}, part 1.]
Given $L,r$ let 
	\begin{align*}
	S_{L,r}&=\sum_{Y\in\cY_{\leq L,r}}\frac{(\kappa_Y-\hat\kappa_Y)^2}{\kappa_Y}=\sum_{\ell=1}^L\frac{(d(k-1))^\ell}{2\ell}\Erw[(\vY_{\ell,r}-1)^2],&
	S_L&=\sum_{\ell=1}^L\frac{(d(k-1))^\ell}{2\ell}\Erw[(\vY_{\ell}-1)^2].
	\end{align*}
The construction of $\fC_r$ ensures that for every fixed $\ell$, $\vY_{\ell,r}$ converges to $\vY_\ell$ almost surely as $r\to\infty$.
Hence, by \Lem~\ref{prop:Bound4DeltaYSq}, \Cor~\ref{cor:Bound4DeltaYSq} and dominated convergence,
	$$\lim_{L\to\infty}\lim_{r\to\infty}\exp(S_{L,r})=\lim_{L\to\infty}\exp(S_L)=
		\prod_{\lambda\in\Eig[\Xi]}\bc{1-d(k-1)\lambda}^{-\frac12},$$
which proves \eqref{eqVarFormula}.
\end{proof}

\noindent
In order to establish the convergence of $\cK_{\ell,r}$ to $\cK$ we use similar arguments.
We begin with the following bound.

\begin{lemma}\label{Lemma_vexed}
For every $0<d\leq\dc$ there exists $\beta>0$ such that
	$\sum_{\ell=1}^\infty\frac{(d(k-1))^\ell}{2\ell}\Erw\abs{\vecone\{\vY_\ell<\beta\}\ln \vY_\ell}<\infty.$
\end{lemma}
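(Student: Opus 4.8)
The plan is to use two properties of $\vY_\ell=\Tr\prod_{j=1}^\ell\Phi_{\PSI_j}$: a crude pointwise lower bound on $\vY_\ell$ in terms of the \emph{single} weight function $\PSI_\ell$, which bounds $|\ln\vY_\ell|$ by an integrable random variable; and a contraction estimate showing that $\vY_\ell$ is close to $1$ unless only an atypically small number of the $\PSI_j$ are ``irregular'', which makes $\{\vY_\ell<\beta\}$ exponentially unlikely in $\ell$. Note $\vY_\ell$ does not actually depend on $d$, so the content is that $\sum_\ell\frac{c^\ell}{2\ell}\Erw[\vecone\{\vY_\ell<\beta\}\,|\ln\vY_\ell|]<\infty$ for $c=d(k-1)$, a fixed finite number; we will take $\beta=1/2$.

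\textbf{Elementary bounds.} First I would record that by {\bf SYM} (directly from \eqref{eqSYM}) each $\Phi_\psi$ is \emph{doubly} stochastic, and since $\psi$ takes values in $(0,2)$ every entry satisfies $\Phi_\psi(\omega,\omega')\geq q^{-1}\xi^{-1}\eul^{-W_\psi}$ with $W_\psi=\max_{\tau\in\Omega^k}|\ln\psi(\tau)|$. Writing $W_j=W_{\PSI_j}$, a product of $\ell$ such matrices is again doubly stochastic, its diagonal entries lie between $q^{-1}\xi^{-1}\eul^{-W_\ell}$ (bound the $(\omega,\omega)$ entry below using only the last factor and row-stochasticity of the rest) and $1$; hence $\xi^{-1}\eul^{-W_\ell}\leq\vY_\ell\leq q$ and therefore $|\ln\vY_\ell|\leq\ln q+|\ln\xi|+W_\ell$. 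Moreover \eqref{eqBounded} gives $\Erw[W_1^2]<\infty$: bound $W_\psi\leq\max\{\ln2,\,|\ln(1-\max_\tau|1-\psi(\tau)|)|\}$, so $W_1^2\leq(\ln2)^2+1+\ln^8(1-\max_\tau|1-\PSI(\tau)|)$ and use the first inequality of \eqref{eqBounded}. In particular $W_1<\infty$ a.s., so $p_L:=\Pr[W_1\leq L]\to1$ as $L\to\infty$.

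\textbf{Contraction step.} Fix a large $L$ and call $\PSI_j$ \emph{$L$-regular} if $W_j\leq L$; then every entry of $\Phi_{\PSI_j}$ is at least $\delta:=q^{-1}\xi^{-1}\eul^{-L}$. I would use the Dobrushin ergodicity coefficient $\delta(M)=1-\min_{\omega,\omega'}\sum_{\omega''}\min\{M(\omega,\omega''),M(\omega',\omega'')\}$, which is submultiplicative, always $\leq1$, and $\leq1-q\delta$ whenever $M$ has all entries $\geq\delta$; the irregular factors are simply bounded by $1$ and do no harm, so $\delta(\Phi_{\PSI_1}\cdots\Phi_{\PSI_\ell})\leq(1-q\delta)^{N_\ell}$ where $N_\ell=\#\{j\leq\ell:\PSI_j\text{ is }L\text{-regular}\}$. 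Since the product $M$ is doubly stochastic, small $\delta(M)$ forces every row of $M$ to be within total variation distance $\delta(M)$ of the uniform distribution (the row average is uniform), hence $\vY_\ell=\Tr M\geq1-2q\,\delta(M)\geq1-2q(1-q\delta)^{N_\ell}$. Thus there is a constant $n_0=n_0(L)$ with $\{\vY_\ell<1/2\}\subseteq\{N_\ell<n_0\}$, and since $N_\ell$ is a $\Bin(\ell,p_L)$ variable, a standard binomial lower-tail estimate gives $\Pr[\vY_\ell<1/2]\leq\Pr[N_\ell<n_0]\leq\mathrm{poly}(\ell)\,(1-p_L)^{\ell}$.

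\textbf{Assembly and main obstacle.} By Cauchy--Schwarz together with the pointwise bound on $|\ln\vY_\ell|$,
\[
\Erw\big[\vecone\{\vY_\ell<1/2\}\,|\ln\vY_\ell|\big]\leq\Pr[\vY_\ell<1/2]^{1/2}\,\Erw\big[(\ln q+|\ln\xi|+W_1)^2\big]^{1/2}=C\,\Pr[\vY_\ell<1/2]^{1/2},
\]
with $C<\infty$ independent of $\ell$. Now choose $L$ large enough that $d(k-1)(1-p_L)^{1/2}<1$ (possible since $d\leq\dc$ is fixed and finite while $p_L\to1$), so that $\Pr[\vY_\ell<1/2]^{1/2}\leq\mathrm{poly}(\ell)\,(1-p_L)^{\ell/2}$ makes $\sum_{\ell\geq1}\frac{(d(k-1))^\ell}{2\ell}\,C\,\Pr[\vY_\ell<1/2]^{1/2}$ a convergent series; finitely many small $\ell$ are handled by $\Pr[\vY_\ell<1/2]\leq1$. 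The step I expect to be the crux is the contraction estimate, specifically the realisation that it suffices for a \emph{bounded} number of the weight functions to be regular rather than all of them — only then does $\Pr[\vY_\ell<\beta]$ decay geometrically at a rate that can be driven below $1/(d(k-1))^2$ by taking $L$ large, whereas requiring all $\ell$ factors to be regular would only give $\Pr\leq1-p_L^\ell\to1$, which is useless.
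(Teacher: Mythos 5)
Your argument is correct but follows a genuinely different route from the paper's. Both begin with the same elementary observation: a crude pointwise lower bound on $\vY_\ell$ in terms of the \emph{single} factor $\PSI_\ell$, obtained from stochasticity of the $\Phi_\psi$ plus entrywise positivity, which also bounds $|\ln\vY_\ell|$ by an integrable random variable. The divergence is in how to make $\pr[\vY_\ell<\beta]$ decay geometrically in $\ell$. The paper notices that since the trace is invariant under cyclic permutations, the single-factor bound upgrades for free to $\vY_\ell\geq q^{1-k}\xi^{-1}\max_{j\in[\ell]}\min_{\tau\in\Omega^k}\PSI_j(\tau)$, so the event $\{\vY_\ell<\beta\}$ forces \emph{every} $\PSI_j$ to have a small minimum; by independence this costs $p(\beta)^\ell$ with $p(\beta)\to0$ as $\beta\to0$, and one closes by isolating the factor involving $\PSI_\ell$ from the $\ell-1$ other independent bad events. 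You instead invoke the Dobrushin ergodicity coefficient to show $\vY_\ell$ is forced close to $1$ unless at most a bounded number $n_0(L)$ of the $\PSI_j$ are $L$-regular, so $\pr[\vY_\ell<1/2]\leq\mathrm{poly}(\ell)(1-p_L)^\ell$, and then close with Cauchy--Schwarz and $\Erw[W_1^2]<\infty$. Both mechanisms give a decay rate that can be driven below $1/(d(k-1))$ --- the paper by shrinking $\beta$, you by enlarging $L$ --- and both implicitly use $\dc<\infty$. The paper's argument is shorter and uses no contraction machinery, only stochasticity and a cyclic shuffle of the trace; yours is more involved but exhibits a more standard and transferable mechanism. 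The insight you flag as the crux (that a bounded number of good factors already suffices, whereas requiring all $\ell$ to be good would be useless) is exactly right for your approach; what you did not notice is that the paper's cyclic-trace trick avoids contraction altogether by establishing the opposite all-or-nothing statement, namely that $\{\vY_\ell<\beta\}$ forces all $\ell$ factors to be \emph{bad}, which is the cheaper direction and gives the geometric decay directly.
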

\begin{proof}
Pick $\beta>0$ sufficiently small and let $S=\sum_{\ell=1}^\infty(d(k-1))^\ell\Erw\abs{\vecone\{\vY_\ell<\beta\}\ln \vY_\ell}/\bc{2\ell}.$
Because by \Lem~\ref{Lemma_Phi} the matrices $\Phi_\psi$ are stochastic, we have
	\begin{align*}
	\Tr(\Phi_{\PSI_1}\cdots\Phi_{\PSI_\ell})
		&=\sum_{\sigma_1,\ldots,\sigma_\ell}\Phi_{\PSI_1}(\sigma_1,\sigma_2)\cdots\Phi_{\PSI_\ell}(\sigma_\ell,\sigma_1)
		\geq\min_{\sigma,\sigma'}\Phi_{\PSI_\ell}(\sigma,\sigma')\geq q^{1-k}\xi^{-1}\min_{\tau\in\Omega^k}\PSI_\ell(\tau).
	\end{align*}
In fact, since the trace is invariant under cyclic permutations, we obtain
	\begin{align}\label{eq_cyclic_permutations}
	\Tr(\Phi_{\PSI_1}\cdots\Phi_{\PSI_\ell})\geq q^{1-k}\xi^{-1}\max_{j\in[\ell]}\min_{\tau\in\Omega^k}\PSI_j(\tau).
	\end{align}
Since $\PSI_1,\ldots,\PSI_\ell$ are chosen independently, (\ref{eqBounded}) and (\ref{eq_cyclic_permutations}) imply that we can choose $\beta<0$ small
enough so that 
	$\Erw|\vecone\{\vY_\ell<\beta\}\ln \vY_\ell|\leq(d(k-1))^{-\ell}$ for all $\ell$, in which case the sum converges.
\end{proof}

\begin{corollary}\label{Cor_nasty}
For every $0<d<\dc$ and every $\ell,r\geq1$ we have $\Erw|\ln\vY_\ell|+\Erw|\ln\vY_{\ell,r}|<\infty$.
\end{corollary}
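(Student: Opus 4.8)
The plan is to fix $\ell$ and $r$ (the statement concerns a single $\vY_\ell$ and a single $\vY_{\ell,r}$, so no summation over $\ell$ is needed) and to bound $\Erw[(\ln\vY_\ell)^+]$, $\Erw[(\ln\vY_\ell)^-]$ and the corresponding quantities for $\vY_{\ell,r}$ separately. For the positive parts I would use that both $\vY_\ell$ and $\vY_{\ell,r}$ are traces of products of $q\times q$ stochastic matrices. Indeed, $\Phi_{\PSI_j}$ is stochastic by \Lem~\ref{Lemma_Phi}, and $\Phi_{\PSI_j^{(r)}}$ is stochastic as well: its entries are non-negative because $\PSI_j^{(r)}(\tau)=\Erw[\PSI_j(\tau)\mid C]>0$, and for each $\omega$ its row sum equals $q^{1-k}\xi^{-1}\Erw[\sum_{\tau\in\Omega^k}\vecone\{\tau_1=\omega\}\PSI_j(\tau)\mid C]=1$ by \eqref{eqSYM} and linearity of $\psi\mapsto\Phi_\psi$. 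Since a product of stochastic matrices is again stochastic, its diagonal entries lie in $[0,1]$ and hence $0<\vY_\ell\leq q$ and $0<\vY_{\ell,r}\leq q$, so $(\ln\vY_\ell)^+,(\ln\vY_{\ell,r})^+\leq\ln q$. Moreover, for any $Y\in(0,q]$ one has the elementary bound $|\ln Y|\leq 2\ln q-\ln Y$ (using $q\geq1$), so it remains only to bound $-\Erw[\ln\vY_\ell]$ and $-\Erw[\ln\vY_{\ell,r}]$ from above.

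For the lower bound I would reuse the cyclic-trace estimate \eqref{eq_cyclic_permutations} from the proof of \Lem~\ref{Lemma_vexed}: its derivation used only that the matrices involved are stochastic and that $\Phi_\psi(\sigma,\sigma')\geq q^{1-k}\xi^{-1}\min_{\tau\in\Omega^k}\psi(\tau)$, and both facts hold verbatim with $\PSI_j$ replaced by $\PSI_j^{(r)}$. Writing $c=q^{1-k}\xi^{-1}>0$ this gives $\vY_\ell\geq c\min_{\tau}\PSI_1(\tau)$ and $\vY_{\ell,r}\geq c\min_{\tau}\PSI_1^{(r)}(\tau)$, so everything reduces to showing $\Erw|\ln\min_{\tau}\PSI_1(\tau)|<\infty$ and $\Erw|\ln\min_{\tau}\PSI_1^{(r)}(\tau)|<\infty$; in fact only the negative tails need attention, since $\min_\tau\PSI_1(\tau)<2$ and $\min_\tau\PSI_1^{(r)}(\tau)<2$.

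The bound for $\PSI_1$ follows from \eqref{eqBounded}: the moment hypothesis on $\max_\tau\PSI(\tau)^{-4}$ implies $\Erw[\max_\tau\PSI(\tau)^{-1}]<\infty$, and then $\ln x\leq x$ gives $\Erw[(-\ln\min_\tau\PSI(\tau))^+]=\Erw[(\ln\max_\tau\PSI(\tau)^{-1})^+]<\infty$. For the discretized version I would note that $\PSI_1^{(r)}(\tau)=\Erw[\PSI_1(\tau)\mid C]\geq\Erw[\min_{\tau'}\PSI_1(\tau')\mid C]$ for every $\tau$, where $C\in\fC_r$ is the cube containing $\PSI_1$, whence $\min_\tau\PSI_1^{(r)}(\tau)\geq\Erw[\min_\tau\PSI_1(\tau)\mid C]$; by concavity of $\ln$ and Jensen's inequality, $\Erw[\ln\min_\tau\PSI_1^{(r)}(\tau)]\geq\Erw[\Erw[\ln\min_\tau\PSI_1(\tau)\mid C]]=\Erw[\ln\min_\tau\PSI_1(\tau)]>-\infty$. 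Putting the pieces together yields $\Erw|\ln\vY_\ell|\leq 2\ln q-\Erw[\ln\vY_\ell]<\infty$ and the analogous bound for $\vY_{\ell,r}$. There is no serious obstacle here; the only points needing care are checking that $\Phi_{\PSI_j^{(r)}}$ is still stochastic (immediate from {\bf SYM}) and applying Jensen in the right direction when passing from $\PSI_1$ to $\PSI_1^{(r)}$. One could alternatively deduce the $\vY_\ell$ half directly from \Lem~\ref{Lemma_vexed}, since each term of the convergent series there is finite and on the complementary event $\{\vY_\ell\geq\beta\}$ one has $\beta\leq\vY_\ell\leq q$, but the self-contained argument above handles $\vY_\ell$ and $\vY_{\ell,r}$ on the same footing.
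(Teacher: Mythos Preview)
Your proof is correct and mirrors the paper's approach: the upper bound via stochasticity of the $\Phi_\psi$ (giving $\vY_\ell,\vY_{\ell,r}\leq q$), the lower tail via the cyclic trace estimate \eqref{eq_cyclic_permutations} combined with the moment condition \eqref{eqBounded}, and Jensen's inequality for the discretized version. The only minor variation is where Jensen enters: the paper applies it directly to the relation $\vY_{\ell,r}=\Erw[\vY_\ell\mid\fC_r(\PSI_1),\ldots,\fC_r(\PSI_\ell)]$ with the convex function $-\ln$, whereas you first apply the trace bound to $\vY_{\ell,r}$ and then invoke Jensen at the level of $\min_\tau\PSI_1^{(r)}(\tau)\geq\Erw[\min_\tau\PSI_1(\tau)\mid C]$; both routes are valid.
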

\begin{proof}
Because all weight functions $\psi\in\Psi$ take values in $(0,2)$, it is obvious that $\Erw\abs{\vecone\{\vY_\ell\geq\beta\}\ln \vY_\ell}<\infty$ for every $\beta<1$.
Moreover, similar steps as in the previous proof show 
$\sum_{l\geq1}\Erw\abs{\vecone\{\vY_l<\beta\}\ln \vY_l}<\infty$ for some small $0<\beta<1$.
Finally, since $x\in(0,\beta)\mapsto-\ln x$ is convex, the assertion about $|\ln\vY_{\ell,r}|$ follows from Jensen's inequality.
\end{proof}

We are going to prove that $\cK,\cK_\ell$ are well-defined by showing that they come out as the limit of the $\cK_{\ell,r}$ as $\ell,r\to\infty$.
However, a priori it may not be entirely clear that the $\cK_{\ell,r}$ are well-defined because they involve sums on random numbers $K_l$ of terms.
Let us observe that this is not a problem actually, because \Cor~\ref{Cor_nasty} implies the following.
We continue to let $(\PSI_{l,i,j})_{l,i,j}$ signify a family of independent samples from $P$.

\begin{corollary}\label{Cor_randomFubini}
For every $l\geq1,r\geq1$ the following $L^1$-limits exist:
	\begin{align*}
	\sum_{i=1}^{K_{l}}\ln{\Tr\prod_{j=1}^l\Phi_{\PSI_{l,i,j}}}&=
		\lim_{H\to\infty}\sum_{i=1}^{K_{l}\wedge H}\ln{\Tr\prod_{j=1}^l\Phi_{\PSI_{l,i,j}}},&
	\sum_{i=1}^{K_{l}}\ln\Tr\prod_{j=1}^l\Phi_{\PSI_{l,i,j}^{(r)}}&=
		\lim_{H\to\infty}\sum_{i=1}^{K_{l}\wedge H}\ln\Tr\prod_{j=1}^l\Phi_{\PSI_{l,i,j}^{(r)}}.
	\end{align*}
\end{corollary}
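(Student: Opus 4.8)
The statement to prove is \Cor~\ref{Cor_randomFubini}: for fixed $l,r\geq1$, the two series
$\sum_{i=1}^{K_l}\ln\Tr\prod_{j=1}^l\Phi_{\PSI_{l,i,j}}$ and $\sum_{i=1}^{K_l}\ln\Tr\prod_{j=1}^l\Phi_{\PSI_{l,i,j}^{(r)}}$
converge in $L^1$ when read as the limit of their truncations at $i\leq K_l\wedge H$ as $H\to\infty$. The natural route is a dominated/monotone convergence argument for randomly indexed sums, i.e.\ a Wald-type identity combined with the integrability bounds already on record. Concretely, set $\vY_{l,i}=\Tr\prod_{j=1}^l\Phi_{\PSI_{l,i,j}}$; these are i.i.d.\ copies of $\vY_l$ from the start of \Sec~\ref{Sec_totallyVexed}, independent of the Poisson variable $K_l$ (note $\Erw[K_l]=(d(k-1))^l/(2l)<\infty$). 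The partial sums $T_H=\sum_{i=1}^{K_l\wedge H}\ln\vY_{l,i}$ form a sequence we want to show is $L^1$-Cauchy.

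First I would record that $\Erw|\ln\vY_l|<\infty$: this is exactly \Cor~\ref{Cor_nasty}, which applies because $0<d<\dc$, and the same corollary gives $\Erw|\ln\vY_{l,r}|<\infty$ for the discretised version $\vY_{l,i}^{(r)}=\Tr\prod_j\Phi_{\PSI_{l,i,j}^{(r)}}$. Next, for $H'>H$ I would estimate
$\Erw|T_{H'}-T_H|=\Erw\big|\sum_{i=H+1}^{K_l\wedge H'}\ln\vY_{l,i}\big|\le\Erw\sum_{i=H+1}^{\infty}\vecone\{K_l\ge i\}|\ln\vY_{l,i}|$,
and since $K_l$ is independent of the $\PSI_{l,i,j}$'s, Fubini gives this sum equals $\Erw|\ln\vY_l|\cdot\sum_{i> H}\pr[K_l\ge i]=\Erw|\ln\vY_l|\cdot\Erw[(K_l-H)^+]$. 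Because $K_l$ has finite mean (indeed finite moments of all orders, being Poisson), $\Erw[(K_l-H)^+]\to0$ as $H\to\infty$ by dominated convergence. Hence $(T_H)_H$ is $L^1$-Cauchy and converges in $L^1$; by the same computation the limit is the a.s.\ well-defined sum $\sum_{i=1}^{K_l}\ln\vY_{l,i}$, which is itself integrable with $\Erw|\sum_{i=1}^{K_l}\ln\vY_{l,i}|\le\Erw[K_l]\,\Erw|\ln\vY_l|$. The argument for the discretised series is verbatim the same with $\vY_l$ replaced by $\vY_{l,r}$, using the second half of \Cor~\ref{Cor_nasty}.

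The only mild subtlety — and the step I would be most careful about — is the independence structure: one must set up $(\PSI_{l,i,j})_{i,j}$ and $K_l$ on the same probability space with $K_l$ independent of the entire array, so that the Fubini interchange $\Erw\sum_i\vecone\{K_l\ge i\}|\ln\vY_{l,i}|=\sum_i\pr[K_l\ge i]\,\Erw|\ln\vY_{l,i}|$ is legitimate; this is just the standard Wald setup and is implicit in the way $\cK$ is defined in \Thm~\ref{Thm_SSC}, so no real obstacle arises. I would state this independence explicitly at the outset of the proof. Everything else is routine, so the corollary follows immediately from \Cor~\ref{Cor_nasty} together with the finiteness of the mean of the Poisson variables $K_l$.
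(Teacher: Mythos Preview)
Your argument is correct and is precisely the standard Wald-type computation the paper has in mind: the paper does not spell out a proof but simply records that \Cor~\ref{Cor_nasty} implies \Cor~\ref{Cor_randomFubini}, and your $L^1$-Cauchy estimate via $\Erw|T_{H'}-T_H|\le\Erw|\ln\vY_l|\cdot\Erw[(K_l-H)^+]\to0$ is exactly how that implication is made rigorous. Nothing further is needed.
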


\begin{lemma}\label{Claim_vexed}
For every $0<d<\dc$ there exists $c=c(d,P)>0$ such that for all $r\geq1$, $L\geq1$,
	\begin{align*}
	\sum_{l=1}^L\Erw{\abs{\frac{(d(k-1))^l}{2l}\bc{1-\Tr(\Phi^l)}+\sum_{i=1}^{K_{l}}\ln{\Tr\prod_{j=1}^l\Phi_{\PSI_{l,i,j}}}}}&<c,&
	\sum_{l=1}^L\Erw{\abs{\frac{(d(k-1))^l}{2l}\bc{1-\Tr(\Phi^l)}+\sum_{i=1}^{K_{l}}\ln{\Tr\prod_{j=1}^l\Phi_{\PSI_{l,i,j}^{(r)}}}}}&<c.
	\end{align*}
\end{lemma}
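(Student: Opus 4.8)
The plan is to split each summand $X_\ell$ into a centred fluctuation term plus a nonnegative ``entropy defect'' term, estimate each in $L^1$, and sum a geometric series. Abbreviate $\nu_\ell=\Erw[K_\ell]=\tfrac1{2\ell}(d(k-1))^\ell$ and, for $i\ge1$, set $\vY_{\ell,i}=\Tr\prod_{j=1}^\ell\Phi_{\PSI_{\ell,i,j}}$, so the $\vY_{\ell,i}$ are independent copies of $\vY_\ell$, independent of $K_\ell$, with $\Erw[\vY_\ell]=\Tr(\Phi^\ell)$ since $\Erw[\Phi_{\PSI}]=\Phi$. The key algebraic observation is that for $y>0$ we may write $\ln y=(y-1)-g(y)$ with $g(y)=y-1-\ln y\ge0$. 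Hence the $\ell$-th summand equals $X_\ell=M_\ell-P_\ell$, where
\begin{align*}
M_\ell&=\sum_{i=1}^{K_\ell}(\vY_{\ell,i}-1)-\nu_\ell\bc{\Tr(\Phi^\ell)-1},&P_\ell&=\sum_{i=1}^{K_\ell}g(\vY_{\ell,i}),
\end{align*}
so $\Erw[M_\ell]=0$ and $P_\ell\ge0$ (that these random sums converge in $L^1$ follows as in \Cor~\ref{Cor_randomFubini}, using $\Erw|\vY_\ell-1|\le\Tr(\Phi^\ell)+1<\infty$ and $\Erw[g(\vY_\ell)]<\infty$). Then $\Erw|X_\ell|\le\Erw|M_\ell|+\Erw[P_\ell]$, and it suffices to show $\sum_{\ell\ge1}\Erw|M_\ell|<\infty$ and $\sum_{\ell\ge1}\Erw[P_\ell]<\infty$. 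The discretised summand $X_\ell^{(r)}$ admits the same decomposition with $\vY_{\ell,i,r}=\Tr\prod_j\Phi_{\PSI_{\ell,i,j}^{(r)}}$ in place of $\vY_{\ell,i}$; crucially $\Erw[\vY_{\ell,r}]=\Tr(\Phi^\ell)$ still holds, since $\Erw[\Phi_{\PSI^{(r)}}]=\Phi$ by the tower property.

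For the fluctuation term I would use the compound-Poisson variance identity $\Var[M_\ell]=\nu_\ell\Erw[(\vY_\ell-1)^2]$, giving $\Erw|M_\ell|\le\sqrt{\nu_\ell\Erw[(\vY_\ell-1)^2]}$. As in the proof of \Lem~\ref{prop:Bound4DeltaYSq} one has $\Erw[(\vY_\ell-1)^2]=\Tr(\Xi^\ell)-2\Tr(\Phi^\ell)+1=\sum_{\lambda\in\Eig[\Xi]}\lambda^\ell\le q^2\rho^\ell$, where $\rho=\max_{\lambda\in\Eig[\Xi]}|\lambda|$ (if $\rho=0$ then $\vY_\ell=1$ a.s.\ and $X_\ell\equiv0$, so one may assume $\rho>0$). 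Hence $\Erw|M_\ell|\le q\,(d(k-1)\rho)^{\ell/2}$, and since $d<\dc$ Proposition~\ref{prop_KS} yields $d(k-1)\rho<1$ strictly, so $\sum_\ell\Erw|M_\ell|$ converges geometrically. For the discretised version I would invoke the Jensen bound $\Erw[(\vY_{\ell,r}-1)^2]\le\Erw[(\vY_\ell-1)^2]$ already used in the proof of \Cor~\ref{cor:Bound4DeltaYSq}, which gives the same estimate uniformly in $r$.

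For the entropy-defect term, Wald's identity gives $\Erw[P_\ell]=\nu_\ell\Erw[g(\vY_\ell)]$. Fix $\beta\in(0,\tfrac12]$ small enough that \Lem~\ref{Lemma_vexed} applies. Using $g(y)\le2(y-1)^2$ on $\{\vY_\ell\ge\tfrac12\}$ (Taylor remainder), $g(y)\le|\ln\beta|$ together with $\pr[\vY_\ell<\tfrac12]\le4\Erw[(\vY_\ell-1)^2]$ (Markov applied to $(\vY_\ell-1)^2$) on $\{\beta\le\vY_\ell<\tfrac12\}$, and $g(y)\le|\ln y|$ on $\{\vY_\ell<\beta\}$, I get
\begin{align*}
\Erw[g(\vY_\ell)]\le(2+4|\ln\beta|)\,\Erw[(\vY_\ell-1)^2]+\Erw\brk{\vecone\{\vY_\ell<\beta\}\,|\ln\vY_\ell|}.
\end{align*}
Multiplying by $\nu_\ell$ and summing over $\ell$, the first part is finite by \Lem~\ref{prop:Bound4DeltaYSq} and the second by \Lem~\ref{Lemma_vexed}. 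For $X_\ell^{(r)}$ I would note that $y\mapsto(y-1)^2$ and $y\mapsto g(y)$ are convex and $\Tr(M_1\cdots M_\ell)$ is multilinear in $(M_1,\dots,M_\ell)$; applying conditional Jensen one factor at a time (each $\PSI_{\ell,i,j}^{(r)}$ being the conditional expectation of $\PSI_{\ell,i,j}$ given the cell of $\fC_r$ containing it, and the $\PSI_{\ell,i,j}$ independent) yields $\Erw[(\vY_{\ell,r}-1)^2]\le\Erw[(\vY_\ell-1)^2]$ and $\Erw[g(\vY_{\ell,r})]\le\Erw[g(\vY_\ell)]$, so $\sum_\ell\Erw[P_\ell^{(r)}]\le\sum_\ell\Erw[P_\ell]<\infty$ uniformly in $r$. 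Taking $c=\sum_{\ell\ge1}(\Erw|M_\ell|+\Erw[P_\ell])$ then proves both inequalities for all $L,r\ge1$.

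The hard part is the borderline convergence of $\sum_\ell\Erw|M_\ell|$: the available bound behaves like $(d(k-1)\rho)^{\ell/2}/\sqrt\ell$, which is summable only because $d(k-1)\rho<1$ is strict, and this strict spectral bound is precisely the content of Proposition~\ref{prop_KS} for $d<\dc$. A cruder treatment of the compound-Poisson fluctuations (e.g.\ $\Erw|M_\ell|\le2\nu_\ell\Erw|\vY_\ell-1|$) only delivers $(d(k-1)\sqrt\rho)^\ell/\ell$ and can diverge once $d(k-1)>1$, so using the variance rather than the first absolute moment is essential. The secondary technical point is that every estimate must be made uniform in the discretisation parameter $r$, which is handled throughout by the monotonicity $\Erw[f(\vY_{\ell,r})]\le\Erw[f(\vY_\ell)]$ for convex $f$.
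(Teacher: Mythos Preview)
Your proof is correct and follows essentially the same route as the paper's: the same decomposition $X_\ell=M_\ell-P_\ell$ into a centred compound-Poisson fluctuation and a nonnegative ``entropy defect'', the same variance identity $\Var[M_\ell]=\nu_\ell\,\Erw[(\vY_\ell-1)^2]$, the same split of $g(\vY_\ell)$ according to whether $\vY_\ell\gtrless\beta$, and the same appeal to \Lem~\ref{prop:Bound4DeltaYSq} and \Lem~\ref{Lemma_vexed}; for the discretised version both you and the paper use conditional Jensen (your observation that $\vY_{\ell,r}=\Erw[\vY_\ell\mid\fC_r]$ via multilinearity of the trace is exactly the mechanism). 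The one place where you are slightly more explicit than the paper is in handling $\sum_\ell\Erw|M_\ell|$: the paper cites \Lem~\ref{prop:Bound4DeltaYSq}, which only yields $\sum_\ell\nu_\ell\,\Erw[(\vY_\ell-1)^2]<\infty$, whereas summability of the square roots genuinely needs the geometric bound $\nu_\ell\,\Erw[(\vY_\ell-1)^2]\le\frac{q^2}{2\ell}(d(k-1)\rho)^\ell$ with $d(k-1)\rho<1$ from \Prop~\ref{prop_KS} that you spell out.
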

\begin{proof}
Let $\kappa_l=(d(k-1))^l/\bc{2l}$, $\vec X_{l,i}=\Tr\prod_{j=1}^l\Phi_{\PSI_{l,i,j}},\vec X_{l,i}^{(r)}=\Tr\prod_{j=1}^l\Phi_{\PSI_{l,i,j}^{(r)}}$.
Then $\Erw[\vec X_{l,i}]=\Tr(\Phi^l)$ and for every $l\geq1$,
	\begin{align}\nonumber
	\Erw\abs{\frac{(d(k-1))^l}{2l}\bc{1-\Tr(\Phi^l)}+\sum_{i=1}^{K_{l}}\ln{\Tr\prod_{j=1}^l\Phi_{\PSI_{l,i,j}}}}
		&=\Erw\abs{\kappa_l\Erw[\vY_{l}-1]-\sum_{i=1}^{K_{l}}\ln \vec X_{l,i}}\\ 
		\leq\Erw\abs{\kappa_l\Erw[\vY_{l}-1]-\sum_{i=1}^{K_{l}}(\vec X_{l,i}-1)}+
			\Erw\abs{\sum_{i=1}^{K_{l}}\vec X_{l,i}-1-\ln \vec X_{l,i}}
		&\leq\bc{\Var\sum_{i=1}^{K_{l}}(\vec X_{l,i}-1)}^{1/2}
			+\Erw\abs{\sum_{i=1}^{K_{l}}\vec X_{l,i}-1-\ln \vec X_{l,i}}\label{eqClaim_vexed_0}
	\end{align}
because $\Erw[\sum_{i=1}^{K_{l}}(\vec X_{l,i}-1)]=\kappa_l\Erw[\vY_{l}-1]$ and due to Cauchy-Schwarz.
Further, because the $\PSI_{l,i,j}$ are i.i.d., for any given integer $h$ we find
	\begin{align}\label{eqClaim_vexed_1}
	\Erw\brk{\bc{\sum_{i=1}^{h}(\vec X_{l,i}-1)}^2}&=h(h-1)\Erw[\vY_{l}-1]^2+h\Erw[(\vY_{l}-1)^2].
	\end{align}
As $\Erw[K_l(K_l-1)]=\kappa_l^2$, (\ref{eqClaim_vexed_1}) implies
	\begin{align}\label{eqClaim_vexed_2}
	\Var\brk{\sum_{i=1}^{K_{l}}(\vec X_{l,i}-1)}&=\kappa_l\Erw[(\vY_{l}-1)^2].
	\end{align}

Moving on to the second summand in (\ref{eqClaim_vexed_0}), we recall that the function $x\in(0,\infty)\mapsto x-1-\ln x$ is convex and
that for any (small) $\beta>0$ there exists $u>0$ such that $x-1-\ln x\leq u(x-1)^2$ for all $x\geq\beta$.
Hence, introducing the convex function $g:x\in(0,\infty)\mapsto\max\{x-1-\ln x,u(x-1)^2\}\geq0$, we have
	\begin{align}\label{eqClaim_vexed_3}
	\Erw\abs{\sum_{i=1}^{K_{l}}\vec X_{l,i}-1-\ln \vec X_{l,i}}\leq\Erw\brk{\sum_{i=1}^{K_{l}}g(\vec X_{l,i})}
		\leq2\kappa_l\Erw[\vecone\{\vY_{l}\leq\beta\}\ln \vY_{l}]+u^2\kappa_l\Erw[(\vY_{l}-1)^2].
	\end{align}
\Lem s~\ref{prop:Bound4DeltaYSq} and~\ref{Claim_vexed} show that summing the right hand sides of (\ref{eqClaim_vexed_2}) and (\ref{eqClaim_vexed_3}) on $l$ gives a finite number. Thus, the first assertion follows from (\ref{eqClaim_vexed_0}).
With respect to the second bound, analogous steps yield
	\begin{align*}
	\Erw\abs{\frac{(d(k-1))^l}{2l}\bc{1-\Tr(\Phi^l)}+\sum_{i=1}^{K_{l}}\ln{\Tr\prod_{j=1}^l\Phi_{\PSI_{l,i,j}^{(r)}}}}
		\leq\sqrt{\Var\brk{\sum_{i=1}^{K_{l}}(\vec X_{l,i}^{(r)}-1)}}
			+\Erw\brk{\sum_{i=1}^{K_{l}}g(\vec X_{l,i}^{(r)})}
	\end{align*}
and thus the desired bound follows from Jensen's inequality.
\end{proof}

\begin{proof}[Proof of \Prop~\ref{Lemma_totallyVexed}, part 2]
\Lem~\ref{Claim_vexed} shows that the random variables $\cK_{\ell,r}$ are uniformly $L^1$-bounded.
Furthermore, the construction of $\fC^r$ guarantees that $\cK_{\ell,r}\to\cK_\ell$ almost surely for every fixed $\ell$.
Hence, $\cK_{\ell,r}$ converges to $\cK_\ell$ in the $L^1$-norm and
a second application of \Lem~\ref{Claim_vexed} shows that $\cK_\ell$ tends to $\cK$ in the $L^1$-norm.
\end{proof}

\section{The condensation threshold}\label{Sec_Thm_plantedFreeEnergy}

\noindent{\em Throughout this section we assume that $P$ satisfies {\bf SYM}, {\bf BAL} and {\bf POS}.}

\medskip\noindent
In this section we prove \Thm s~\ref{Thm_cond} and~\ref{Thm_plantedFreeEnergy}.
As a technical preparation we need a concentration inequality for the free energy of our random factor graph models.

\subsection{Concentration}
We begin with the following elementary observation.

\begin{lemma}\label{Lemma_O}
Suppose that $P$ satisfies {\bf SYM} and {\bf BAL}.
For a factor graph $G=(V,F,(\partial a)_{a\in F},(\psi_a)_{a\in F})$ define 
	\begin{align*}
	\cO(G)=\sum_{\sigma\in\Omega^k}\sum_{a\in F}\ln^2\psi_a(\sigma).
	\end{align*}
Then for every $D>0$ there exists $C=C(D,P)>0$ such that uniformly for all $m\leq Dn$, $t\geq1$ and $\sigma\in\Omega^{V_n}$ we have
		\begin{align}\label{eqO}
		\pr\brk{\cO(\G(n,m,P))>tCn}+\pr\brk{\cO(\G^*(n,m,P,\sigma))>tCn}&=t^{-3}O(n^{-2}),\\
				\label{eqO1}
		\Erw[\ln Z(\G(n,m,P))|\cO(\G(n,m,P))\leq tCn]&=\Erw[\ln Z(\G(n,m,P))]+o(1)=O(n),\\
		\Erw[\ln Z(\G^*(n,m,P,\sigma))|\cO(\G(n,m,P,\sigma))\leq tCn]&=\Erw[\ln Z(\G(n,m,P,\sigma))]+o(1)=O(n).\label{eqO2}
		\end{align}	
\end{lemma}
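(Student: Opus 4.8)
The plan is to exploit that in both $\G(n,m,P)$ and $\G^*(n,m,P,\sigma)$ the constraint nodes are chosen independently, so that $\cO(\cdot)$ is a sum of $m$ i.i.d.\ nonnegative terms, and to show that one such term has a finite fourth moment thanks to \eqref{eqBounded}. First I would record the elementary bound on weight functions: for $\psi:\Omega^k\to(0,2)$ put $m(\psi)=\min_\tau\min(\psi(\tau),2-\psi(\tau))\in(0,1]$, so that $m(\psi)=1-\max_\tau|1-\psi(\tau)|$; a short case distinction ($\psi(\tau)<1$ versus $\psi(\tau)\ge1$) gives $\max_\tau|\ln\psi(\tau)|\le\max(\ln 2,|\ln m(\psi)|)$ and hence $\sum_{\sigma\in\Omega^k}\ln^2\psi(\sigma)\le q^k\max(\ln^2 2,\ln^2 m(\psi))$. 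Therefore the first inequality in \eqref{eqBounded} yields $\Erw[(\sum_{\sigma}\ln^2\PSI(\sigma))^4]<\infty$ as well as $\Erw[(\max_\sigma|\ln\PSI(\sigma)|)^p]<\infty$ for every $p\le 8$.

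Next I would transfer these moment bounds to the teacher-student model. Summing \eqref{eqTeacher} over $(y_1,\dots,y_k)$ one gets $\pr[\psi_{a_j}\in\cA]=\Erw[\vecone\{\PSI\in\cA\}\,\phi_{\PSI}(\rho_\sigma)]/\phi(\rho_\sigma)$, and since $\phi$ is bounded away from $0$ by Lemma~\ref{Lemma_F}, say $\phi\ge c_0=c_0(P)>0$, and $\phi_\psi(\rho_\sigma)\le\max_\tau\psi(\tau)<2$, this gives $\pr[\psi_{a_j}\in\cA]\le 2c_0^{-1}P(\cA)$ for every $\sigma$ and every measurable $\cA\subset\Psi$. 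Consequently any nonnegative statistic of a teacher-student weight function has expectation at most $2c_0^{-1}$ times its expectation under $P$, so the single terms of $\cO(\G^*(n,m,P,\sigma))$ also have a finite fourth moment, uniformly in $\sigma$. Choosing $C=C(D,P)$ large enough that $tCn$ exceeds $\Erw[\cO(\cdot)]=O(n)$ by at least $tn$ for all $m\le Dn$, $t\ge1$, Markov's inequality applied to the fourth central moment, together with the standard identity $\Erw[(S_m-\Erw S_m)^4]=O(m^2)$ for i.i.d.\ sums with finite fourth moment, yields $\pr[\cO(\cdot)>tCn]\le O(m^2)/(tn)^4=O(t^{-4}n^{-2})$, which is stronger than the claimed $t^{-3}O(n^{-2})$; taking the larger of the two constants $C$ handles $\G$ and $\G^*$ simultaneously. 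This proves \eqref{eqO}.

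For \eqref{eqO1} and \eqref{eqO2}, write $A=\{\cO\le tCn\}$, so $\pr[A]\ge1/2$ for $n$ large by \eqref{eqO}, and
\[
\Erw[\ln Z\mid A]-\Erw[\ln Z]=\frac{\pr[A^c]\,\Erw[\ln Z]-\Erw[\ln Z\,\vecone_{A^c}]}{\pr[A]}.
\]
By Lemma~\ref{Cor_F} we have $|\Erw[\ln Z]|=O(n)$, so the first term in the numerator is $O(t^{-4}n^{-2})\cdot O(n)=o(1)$ uniformly in $t\ge1$. For the second term I would use the crude deterministic estimate $n\ln q-\sum_{a}\max_\sigma|\ln\psi_a(\sigma)|\le\ln Z(G)\le n\ln q+m\ln 2$, which for $m\le Dn$ gives $|\ln Z|\le O(n)+\sum_{i=1}^m Y_i$ with $Y_i=\max_\sigma|\ln\psi_{a_i}(\sigma)|$ i.i.d.\ and $\Erw[Y_i^3]<\infty$ (directly for $\G$, via the $c_0^{-1}$-comparison for $\G^*$), hence $\Erw[|\ln Z|^3]=O(n^3)$ uniformly in $m\le Dn$ and $\sigma$. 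Hölder's inequality then gives $\Erw[|\ln Z|\,\vecone_{A^c}]\le\Erw[|\ln Z|^3]^{1/3}\pr[A^c]^{2/3}=O(n)\cdot O(t^{-4}n^{-2})^{2/3}=O(t^{-8/3}n^{-1/3})=o(1)$, uniformly in $t\ge1$, $m\le Dn$ and $\sigma$. Combining this with Lemma~\ref{Cor_F} gives $\Erw[\ln Z\mid A]=\Erw[\ln Z]+o(1)=O(n)$, which is \eqref{eqO1} and \eqref{eqO2}.

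The only genuinely delicate point is the last estimate: a naive Cauchy-Schwarz bound $\Erw[|\ln Z|\,\vecone_{A^c}]\le\Erw[|\ln Z|^2]^{1/2}\pr[A^c]^{1/2}$ only delivers $O(1)$, not $o(1)$, so one must use the third moment of $\ln Z$ in place of the second and thereby the higher integrability ($\ln^8$ rather than $\ln^2$) afforded by \eqref{eqBounded}. Everything else in the argument — the case analysis for $m(\psi)$, the fourth-moment computation for i.i.d.\ sums, and the conditional-expectation manipulation — is routine.
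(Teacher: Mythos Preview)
Your argument is correct, and it differs from the paper's in both halves of the lemma. For \eqref{eqO} the paper does not use a fourth-moment Markov bound; instead it first truncates, showing via the $\ln^8$ moment in \eqref{eqBounded} that with probability $1-t^{-3}O(n^{-2})$ every constraint node satisfies $\max_\tau|\ln\psi_{a_i}(\tau)|<(tn)^{3/8}$, and then applies Azuma's inequality to the (now bounded-increment) sum $\cO$. Your direct moment computation is more elementary, avoids the truncation step, and actually yields the sharper tail $O(t^{-4}n^{-2})$. For \eqref{eqO1}--\eqref{eqO2} the paper exploits the pointwise inequality $|\ln Z(\G')|\le n\ln q+\sqrt{m\,\cO(\G')}$ (Cauchy--Schwarz on $\sum_a\max_\tau|\ln\psi_a(\tau)|$) and then integrates the tail of $\cO$ from \eqref{eqO}; this is slicker because it ties $\ln Z$ directly to the conditioning variable and avoids computing any moment of $\ln Z$ beyond the first. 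Your route via $\Erw[|\ln Z|^3]=O(n^3)$ and H\"older with exponents $(3,3/2)$ is equally valid and makes transparent why the naive Cauchy--Schwarz pairing $(2,2)$ would fall just short. Both approaches ultimately rest on the same integrability input from \eqref{eqBounded} and the uniform Radon--Nikodym bound $dP^*_\sigma/dP\le 2/\inf\phi$ for the teacher-student weight law.
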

\begin{proof}
The bound (\ref{eqBounded}) guarantees that 	
	$\pr\brk{\max_\tau|\ln\PSI(\tau)|\geq(tn)^{3/8}}\leq t^{-3}O(n^{-3}).$
As a consequence, the probability that either $\G(n,m,P)$ or  $\G^*(n,m,P,\sigma)$ contains a constraint node $a_i$ such
that $\max_\tau|\ln\psi_{a_i}(\tau)|\geq(tn)^{3/8}$ is bounded by $t^{-3}O(n^{-2})$.
Therefore, it suffices to prove (\ref{eqO}) given  $\cA=\{\max_\tau|\ln\psi_{a_i}(\tau)|<(tn)^{3/8}\}$.
Due to (\ref{eqBounded}) the conditional expectation $\Erw[\max_\tau|\ln\PSI(\tau)|\,\big|\,\max_\tau|\ln\PSI(\tau)|<(tn)^{3/8}]$ is bounded.
Thus, the definition of the random factor graph models guarantees that uniformly for all $\sigma,m\leq Dn$,
	\begin{equation}\label{eqO3}
	\Erw[\cO(\G(n,m,P))\,|\,\cA]+\Erw[\cO(\G^*(n,m,P,\sigma))\,|\,\cA]
		=O(n).
	\end{equation}
Further, because the constraint nodes are chosen independently, Azuma's inequality implies that for any $s>1$,
	\begin{align}\label{eqO4}
	\pr\brk{\cO(\G(n,m,P))>\Erw[\cO(\G(n,m,P))\,|\,\cA]+s\big|\cA}
		&\leq2\exp\bc{-\frac{s^2}{O(t^{3/4}n^{7/4})}}.
	\end{align}
Thus, (\ref{eqO}) follows from (\ref{eqO3}) and (\ref{eqO4}) applied to $s=tCn-\Erw[\cO(\G(n,m,P))\,|\,\cA]$ with $C>0$ chosen large enough.
Finally, let either $\G'=\G(n,m,P)$ or $\G'=\G^*(n,m,P,\sigma)$.
Since $\ln Z(\G')\leq\sqrt{m\cO(\G')}$ by Cauchy-Schwarz, (\ref{eqO}) yields
	\begin{align*}
	\Erw[\vecone\{\cO(\G')>Cn\}\ln Z(\G')]&\leq\sqrt m\Erw\brk{\vecone\{\cO(\G')>Cn\}\sqrt{\cO(\G')}}
		\leq O(\sqrt m/n)=o(1),
	\end{align*}
whence (\ref{eqO1}) and (\ref{eqO2}) are immediate.
\end{proof}

\begin{lemma}\label{Lemma_Azuma}
Suppose that $P$ satisfies {\bf SYM} and {\bf BAL} and let $D>0$.
There exists $C=C(D,P)>0$ such that for any $\eps>0$ and $C'>C$ there exists $\delta>0$ such that for all $\sigma\in\Omega^{V_n}$, $m\leq Dn/k$ we have
	\begin{align*}
	\pr\brk{\abs{\ln Z(\G(n,m,P))-\Erw[\ln Z(\G(n,m,P))]}>\eps n|\cO(\G(n,m,P))\leq C'n}&\leq 2\exp\bc{-\delta n},\\
	\pr\brk{\abs{\ln Z(\G^*(n,m,P,\sigma))-\Erw[\ln Z(\G^*(n,m,P,\sigma))]}>\eps n|\cO(\G^*(n,m,P,\sigma))\leq C'n}&\leq 2\exp\bc{-\delta n}.
	\end{align*}
\end{lemma}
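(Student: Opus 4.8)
The plan is to establish the concentration inequality of \Lem~\ref{Lemma_Azuma} via the Azuma--Hoeffding inequality applied to an edge-exposure martingale, combined with the truncation already prepared in \Lem~\ref{Lemma_O}. The point is that conditioning on the event $\cbc{\cO(\G(n,m,P))\le C'n}$ controls the cumulative ``weight'' of the constraint nodes, which in turn bounds the bounded-difference constants of the martingale so that Azuma's inequality becomes effective.

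First I would set up the Doob martingale. For $\G=\G(n,m,P)$ reveal the constraint nodes $a_1,\dots,a_m$ one at a time (each exposure consisting of the pair $(\partial a_i,\psi_{a_i})$), and let $\cF_i$ be the $\sigma$-algebra generated by the first $i$ exposures. Set $X_i=\Erw[\ln Z(\G)\mid\cF_i]$, so $X_0=\Erw[\ln Z(\G)]$ and $X_m=\ln Z(\G)$. The first observation is the deterministic bound on the effect of a single constraint node: for any factor graph $G$ and any $G'$ obtained from $G$ by changing (or adding/removing) one constraint node $a$, we have $\abs{\ln Z(G)-\ln Z(G')}\le\max_{\tau\in\Omega^k}\abs{\ln\psi_a(\tau)}+\max_{\tau\in\Omega^k}\abs{\ln\psi_{a}'(\tau)}$, simply because dividing out a single weight function changes the partition function by a bounded multiplicative factor. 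Hence $\abs{X_i-X_{i-1}}$ is bounded in terms of $\max_\tau\abs{\ln\PSI(\tau)}$ for the $i$-th constraint node, independently of everything else. The trouble is that a priori these increments are not uniformly bounded: the weight functions may take values close to $0$ or $2$, so $\max_\tau\abs{\ln\psi_{a_i}(\tau)}$ has only the moment control from (\ref{eqBounded}), not an almost-sure bound.

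This is exactly where the truncation enters, and where the main obstacle lies: Azuma's inequality in its basic form wants a uniform bound on $\abs{X_i-X_{i-1}}$, whereas here the bounds are themselves random. I would handle this with a standard ``martingale with bounded squared increments'' variant (the Azuma--Hoeffding inequality for sums of differences whose conditional quadratic variation is controlled): conditioned on $\cbc{\cO(\G)\le C'n}$, the quantity $\sum_{i=1}^m\sum_{\sigma\in\Omega^k}\ln^2\psi_{a_i}(\sigma)$ is at most $C'n$, which bounds $\sum_i(X_i-X_{i-1})^2$ by $O(C'n)$. More carefully, since the event $\cbc{\cO(\G)\le C'n}$ is not $\cF_i$-measurable for intermediate $i$, I would instead run the martingale for the \emph{truncated} random variable: replace each $\psi_{a_i}$ by $\psi_{a_i}$ if $\max_\tau\abs{\ln\psi_{a_i}(\tau)}\le (C'n)^{1/2}$ (say) and by a fixed default weight function otherwise, obtaining $\tilde Z$; now the increments of the Doob martingale for $\ln\tilde Z$ are deterministically bounded by $O((C'n)^{1/2})$, so Azuma gives $\pr[\abs{\ln\tilde Z-\Erw\ln\tilde Z}>\eps n]\le2\exp(-\eps^2n^2/(O(m)\cdot O(C'n)))=2\exp(-\delta n)$ for $\delta=\delta(\eps,C',D)>0$ since $m\le Dn/k$. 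Finally, \Lem~\ref{Lemma_O}, specifically (\ref{eqO}) and (\ref{eqO1}), shows that $\ln\tilde Z=\ln Z(\G)$ with probability $1-t^{-3}O(n^{-2})$ on the conditioning event for suitable $t$, and that $\Erw[\ln\tilde Z]=\Erw[\ln Z(\G)]+o(1)$; combining these absorbs the discrepancy between $\ln\tilde Z$ and $\ln Z(\G)$ into the $\eps n$ slack (possibly after shrinking $\eps$ slightly and enlarging $\delta$). The same argument applies verbatim to $\G^*(n,m,P,\sigma)$ using the $\G^*$-versions of \Lem~\ref{Lemma_O}; one only has to note that in $\G^*$ the constraint nodes are still chosen independently given $\sigma$, so the edge-exposure martingale and the single-constraint Lipschitz bound go through unchanged, with the constant $C=C(D,P)$ coming from \Lem~\ref{Lemma_O}. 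I expect the bookkeeping around the conditioning (ensuring the truncation level, the conditioning event, and the Azuma bound are mutually consistent) to be the only delicate point; the rest is routine.
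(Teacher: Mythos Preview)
Your overall plan---truncate, then run Azuma on the truncated model---matches the paper's proof. But your specific truncation level is too coarse and the Azuma arithmetic breaks down. With each increment bounded by $O((C'n)^{1/2})$ and $m\le Dn/k$ steps, the sum of squared increments is $m\cdot O(C'n)=O(n^2)$, so Azuma yields only
\[
\pr\brk{|\ln\tilde Z-\Erw\ln\tilde Z|>\eps n}\le 2\exp\bc{-\frac{\eps^2 n^2}{O(n^2)}}=2\exp(-O(1)),
\]
which is a constant, not $\exp(-\delta n)$. The equality ``$=2\exp(-\delta n)$'' you wrote is false.

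The paper fixes this by truncating at a \emph{constant} level $c=c(\eps,C')$ rather than at $\sqrt{C'n}$. The key deterministic observation you are missing is that the event $\{\cO(\G')\le C'n\}$ not only bounds the sum of squares but, via the elementary inequality $x\vecone\{x>c\}\le x^2/c$, also bounds the total first-order contribution of the ``large'' constraint nodes:
\[
\sum_{i\in[m]}\max_\tau|\ln\psi_{a_i}(\tau)|\cdot\vecone\cbc{\max_\tau|\ln\psi_{a_i}(\tau)|>c}
\le\frac{1}{c}\sum_{i\in[m]}\max_\tau\ln^2\psi_{a_i}(\tau)\le\frac{C'n}{c}<\frac{\eps n}{4}
\]
for $c>4C'/\eps$. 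Hence deleting all constraint nodes with $\max_\tau|\ln\psi_{a_i}(\tau)|>c$ changes $\ln Z$ by at most $\eps n/4$ on the conditioning event, while the surviving model has increments bounded by the constant $c$. Now Azuma (or McDiarmid) on the surviving model gives $\exp(-t^2/(2mc^2))$ with $m=O(n)$, i.e.\ genuine exponential decay, and (\ref{eqO1})--(\ref{eqO2}) handle the centering. Your ``bounded squared increments'' remark was pointing in the right direction, but the way to exploit $\cO\le C'n$ is through this $x\le x^2/c$ trick, not through a $\sqrt n$ cutoff.
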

\begin{proof}
Let either $\G'=\G(n,m,P)$ or $\G'=\G(n,m,P,\sigma)$ and choose $c=c(\eps,C')>0$ big enough so that the following is true: if $\cO(\G')\leq C'n$, then
	\begin{equation}\label{eqO5}
	\sum_{i\in[m]}\max_\tau|\ln\psi_{a_i}(\tau)|\cdot\vecone\{\max_\tau|\ln\psi_{a_i}(\tau)|>c\}<\eps n/4.
	\end{equation}
Let $\G''$ be the factor graph obtained from $\G'$ by deleting all constraint nodes $a_i$ such that  $\max_\tau|\ln\psi_{a_i}(\tau)|>c$.
Then (\ref{eqO5}) ensures that $|\ln Z(\G')-\ln Z(\G'')|\leq\eps n/4$.
Furthermore, if $\G'''$ is obtained from $\G''$ by changing the neighborhood of some constraint node $a$ and/or its weight function,
subject merely to the condition that the new weight function $\psi$ satisfies $\max_\tau|\ln\psi_{a_i}(\tau)|\leq c$, then $|\ln Z(\G''')-\ln Z(\G'')|\leq c$.
Therefore, Azuma's inequality implies that for any $t>0$,
	\begin{equation}\label{eqO6}
	\pr\brk{|\ln Z(\G'')-\Erw\ln Z(\G'')|>t}\leq2\exp(-t^2/(2c^2m)).
	\end{equation}
Combining (\ref{eqO5}) and (\ref{eqO6}) with (\ref{eqO1}) and (\ref{eqO2}) completes the proof.
\end{proof}

\subsection{Proof of \Thm~\ref{Thm_plantedFreeEnergy}}
We recall from \Sec~\ref{Sec_outline_cycles} that $\fC_r$ is the partition of $\Psi$ obtained by chopping  $[0,2]^{\Omega^k}$ into sub-cubes with side lengths $2/r$.
Since $\fC_r$ is finite the distribution $P_r$ of $\PSI^{(r)}$ is
supported on a finite set $\Psi_r$ of weight functions $\Omega^k\to(0,2)$.

\begin{lemma}\label{Lemma_BetheConv}
For any $\alpha>0$, $D>0$ there is $r_0>0$ such that for all $d\leq D$ and all $r>r_0$ we have
	$$\sup_{\pi\in\cPcent(\Omega)}|\cB(d,P,\pi)-\cB(d,P_r,\pi)|<\alpha.$$
\end{lemma}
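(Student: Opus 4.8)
The plan is to bound the difference $\cB(d,P,\pi)-\cB(d,P_r,\pi)$ uniformly in $\pi\in\cPcent(\Omega)$ by exploiting that $\PSI^{(r)}$ is a conditional expectation of $\PSI$ that converges to $\PSI$ in a strong sense as $r\to\infty$. Recall from \eqref{eqMyBethe} that $\cB(d,P,\pi)$ is an expectation of the difference of two $\Lambda$-terms, one involving a $\Po(d)$ number $\GAMMA$ of independent weight functions and a geometrically growing number of independent samples $\RHO_i^\pi$ from $\pi$, the other a single weight function and $k$ samples. The key structural point is that the arguments of $\Lambda$ inside the expectation are, for \emph{any} $\pi\in\cPcent(\Omega)$, bounded away from $0$ and bounded above in terms only of the weight functions appearing (since each $\psi\in\Psi$ takes values in $(0,2)$ and, crucially, each of the inner sums $\sum_{\tau}\psi(\tau)\prod_j\mu(\tau_j)$ over a product distribution is a convex combination of values of $\psi$, hence lies in $[\min_\tau\psi(\tau),\max_\tau\psi(\tau)]$). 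Thus the relevant bounds depend on $\max_\tau|\ln\PSI(\tau)|$ and $\max_\tau\PSI(\tau)^{-1}$, whose moments are controlled by \eqref{eqBounded}; the same bounds hold for $\PSI^{(r)}$ since $\psi^{(r)}(\tau)=\Erw[\PSI(\tau)\mid C]$ lies in the same interval $(0,2)$ and, by Jensen, $\Erw[\max_\tau(\PSI^{(r)}(\tau))^{-4}]\le\Erw[\max_\tau\PSI(\tau)^{-4}]<\infty$ and similarly $\Erw[\max_\tau|\ln\PSI^{(r)}(\tau)|^8]\le\Erw[\max_\tau|\ln\PSI(\tau)|^8]<\infty$.

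First I would set up a coupling: realize $\PSI$ and $\PSI^{(r)}$ on the same probability space, with $\PSI^{(r)}=\Erw[\PSI\mid C_r(\PSI)]$ where $C_r(\psi)\in\fC_r$ is the sub-cube containing $\psi$, so that $\|\PSI-\PSI^{(r)}\|_\infty\le 2q^k/r\to0$ pointwise and hence almost surely. Then, using the mean-value theorem on $\Lambda$ together with the lower bound $\Lambda'(x)=1+\ln x$ which is bounded on the range of the arguments, I would write the difference of each $\Lambda$-term as a sum over the $\GAMMA$ (resp.\ $k$) slots of a multilinear telescoping expression, each summand being a product of $\PSI$-values, $\PSI^{(r)}$-values and $\RHO_i^\pi$-values times a single factor $|\PSI(\tau)-\PSI^{(r)}(\tau)|\le 2q^k/r$. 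The $\RHO_i^\pi$ factors are bounded by $1$ and the weight-function factors by $2$, so each summand is $O(r^{-1})$ times a product of at most $O(\GAMMA)$ bounded terms, i.e.\ $O(r^{-1})\exp(O(\GAMMA))$ after accounting for the $\Lambda'$-prefactor which grows like $\sum_i\max_\tau|\ln\PSI_i(\tau)|+\sum_i\max_\tau|\ln\PSI_i^{(r)}(\tau)|$. Taking expectations over $\GAMMA\sim\Po(d)$ and the weight functions — using that $\Po(d)$ has sub-exponential tails and the moment bounds from \eqref{eqBounded} — yields $|\cB(d,P,\pi)-\cB(d,P_r,\pi)|\le K(d)/r$ for a constant $K(d)$ that is monotone in $d$ and, crucially, \emph{independent of $\pi$} because every estimate used only the universal bounds $\RHO_i^\pi(\cdot)\in[0,1]$ and the weight-function tail bounds. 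Choosing $r_0=r_0(\alpha,D)>K(D)/\alpha$ then gives the claim for all $d\le D$, $r>r_0$, after a supremum over $\pi$.

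The main obstacle I anticipate is handling the $\Lambda'$-prefactor cleanly: the derivative $1+\ln x$ at intermediate points $x$ on the segment between the $P$-argument and the $P_r$-argument is not uniformly bounded, since the argument of $\Lambda$ in the first term can be as small as $\xi^{\GAMMA}$ up to constants, i.e.\ exponentially small in $\GAMMA$, so $|\Lambda'|$ can grow linearly in $\GAMMA$ (more precisely like $\sum_{i\le\GAMMA}\max_\tau|\ln\PSI_i(\tau)|$). This is exactly the kind of estimate already carried out in the remainder analysis of \Sec~\ref{Sec_prop:TaylorBdpi} (see the bound $R_{\psi_1,\ldots,\psi_\gamma}=O(\eps^5)\exp(O(\gamma))$ in the proof of Claim~\ref{Claim_eqB1final}), and I would reuse that template verbatim: bound $|\Lambda^{(l)}|$ on the relevant range by a product over the $\GAMMA$ slots of $\max\{\psi_i(\tau)^{1-l},\,\max_\tau|\ln\psi_i(\tau)|\}$, then invoke \eqref{eqBounded} and the independence of the $\PSI_i$ to see that the expectation over $\GAMMA$ and the weight functions converges. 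A minor additional point is verifying that $P_r$ itself satisfies \eqref{eqBounded} and the permutation-invariance needed for $\cB(d,P_r,\cdot)$ to be well-defined; the former follows from the Jensen bounds noted above, the latter because $\fC_r$ is permutation-equivariant (permuting coordinates of $\Omega^k$ maps sub-cubes to sub-cubes), so $\PSI^{(r)}$ inherits {\bf SYM} from $\PSI$. Once the $\pi$-uniform bound $K(d)/r$ is established the conclusion is immediate.
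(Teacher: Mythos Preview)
Your proof is correct, and its overall architecture matches the paper's: bound the integrand of $\cB(d,P,\pi)-\cB(d,P_r,\pi)$ uniformly in $\pi$ for each fixed $\gamma$, then use the Poisson tails of $\GAMMA$ to sum. The difference is in how the per-$\gamma$ step is executed. The paper argues softly: for fixed $\gamma$, the map $(\psi_1,\ldots,\psi_\gamma,\rho_1,\ldots,\rho_{(k-1)\gamma})\mapsto \Lambda(\,\cdot\,)$ is continuous on the compact set $[0,2]^{\Omega^k\times\gamma}\times\cP(\Omega)^{(k-1)\gamma}$ (since $\Lambda$ is continuous on $[0,\infty)$ and the inner argument is a polynomial), hence uniformly continuous there; because $\|\psi-\psi^{(r)}\|_\infty\to 0$, this yields $B_r\to B$ uniformly with no mean-value theorem and no moment estimates. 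You instead use the mean-value theorem, telescope the difference of products, and control $|\Lambda'|$ via $\sum_{i\le\gamma}\max_\tau|\ln\PSI_i(\tau)|$, appealing to \eqref{eqBounded}. Your route is longer but delivers an explicit $O(1/r)$ rate, which the compactness argument does not; the paper's route sidesteps the bookkeeping of the $\Lambda'$-prefactor entirely. The Poisson-tail step is the same in both (your observation that the $\exp(O(\gamma))$ growth is killed by $d^\gamma/\gamma!$ is exactly the paper's one-line appeal to ``sub-exponential tails''). Your side remarks about $P_r$ inheriting {\bf SYM} and the Jensen bounds for \eqref{eqBounded} are correct but not needed for this particular lemma: the paper's compactness argument works for any $P$, and your quantitative argument only uses \eqref{eqBounded} for $P$, not for $P_r$.
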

\begin{proof}
Let
	\begin{align*}
	B:(\psi_1,\ldots,\psi_\gamma,\rho_1,\ldots,\rho_{k\gamma})\in\Psi^\gamma\times\cP(\Omega)^\gamma\mapsto
		\frac{1}{q\xi^{\gamma}}
			\Lambda\bc{\sum_{\sigma\in\Omega}\prod_{i=1}^{\gamma}\sum_{\tau\in\Omega^k}\vecone\{\tau_{k}=\sigma\}\psi_i(\tau)\prod_{j<k}\rho_{k(i-1)+j}(\tau_j)}.
	\end{align*}
Analogously, for a fixed $r$ let
	\begin{align*}
	B_{r}:(\psi_1,\ldots,\psi_\gamma,\rho_1,\ldots,\rho_{k\gamma})\mapsto
		\frac{1}{q\xi^{\gamma}}
			\Lambda\bc{\sum_{\sigma\in\Omega}\prod_{i=1}^{\gamma}\sum_{\tau\in\Omega^k}\vecone\{\tau_{k}=\sigma\}
				\psi_i^{(r)}(\tau)\prod_{j<k}\rho_{k(i-1)+j}(\tau_j)}.
	\end{align*}
That is, we approximate $\psi_i$ by the average $\psi_i^{(r)}$ over the weight functions in the sub-cube that $\psi_i$ belongs to.
Since $\Lambda$ is continuous on $[0,\infty)$ and therefore uniformly continuous on any compact subset of $[0,\infty)$, $B_{r}\to B$ uniformly as $r\to\infty$
on the entire space $\Psi^r\times\cP(\Omega)^{k\gamma}$ for every $\gamma$.
Since the Poisson distribution has sub-exponential tails, this implies the desired convergence for the first term on the right hand side of (\ref{eqMyBethe}).
A similar argument applies to the second term.
\end{proof}

\begin{lemma}\label{Lemma_POSConv}
The distribution $P_r$ satisfies {\bf SYM} and {\bf BAL}.
Moreover, for any $\alpha>0$, $d>0$ there is $r>0$ such that the following is true for all $\pi,\pi'\in\cP_*^2(\Omega)$.
With $\MU_1,\MU_2,\ldots$ chosen from $\pi$,
			$\MU_1',\MU_2',\ldots$ chosen from $\pi'$ and $\PSI'\in\Psi$ chosen from $P_r$, all mutually independent,  we have
\begin{align}\label{eqWeakPOS}
\Erw\left[\Lambda\left(\sum_{\tau\in\Omega^k}\PSI'(\tau)\prod_{i=1}^ k\MU_i(\tau_i)\right)+(k-1)\Lambda\left(\sum_{\tau\in\Omega^k}\PSI'(\tau)\prod_{i=1}^k \MU_i'(\tau_i)\right)
	-\sum_{h=1}^k\Lambda\left(\sum_{\tau\in\Omega^k}\PSI'(\tau)\MU_h(\tau_h)
	\hspace{-2mm}\prod_{i\in[k]\setminus\{h\}}\hspace{-2mm}\MU_i'(\tau_i)\right)\right]\ge-\alpha.
\end{align}
\end{lemma}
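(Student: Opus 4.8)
The plan is to use that the discretised weight functions are uniformly close to the originals and that $\Lambda$ is uniformly continuous on $[0,2]$. Concretely, since $\PSI$ and $\PSI^{(r)}=\Erw[\PSI\mid\fC_r]$ lie in the same sub-cube of $\fC_r$, we have $\|\PSI^{(r)}-\PSI\|_\infty\le\delta_r$ with $\delta_r\to0$ as $r\to\infty$; and since $\Lambda$ is continuous on the compact interval $[0,2]$ (with $\Lambda(0)=0$) it has a modulus of continuity $\omega_\Lambda$ there with $\omega_\Lambda(t)\to0$ as $t\to0^+$. I first observe that {\bf SYM} and {\bf BAL} carry over to $P_r$ essentially for free. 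Since $\Erw_{P_r}[\PSI^{(r)}(\tau)]=\Erw_P[\Erw[\PSI(\tau)\mid\fC_r]]=\Erw_P[\PSI(\tau)]$ for every $\tau$, the quantity $\xi$ and the function $\phi$ from {\bf BAL} are literally unchanged, so {\bf BAL} for $P_r$ is {\bf BAL} for $P$. For the first identity in {\bf SYM}: for any sub-cube $C$ of positive mass and $\psi^{(r)}=\Erw[\PSI(\,\cdot\,)\mid\PSI\in C]$ we get $\sum_\tau\vecone\{\tau_i=\omega\}\psi^{(r)}(\tau)=\Erw[\sum_\tau\vecone\{\tau_i=\omega\}\PSI(\tau)\mid\PSI\in C]=q^{k-1}\xi$ by {\bf SYM} for $P$. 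For permutation invariance: a coordinate permutation $\theta$ of $\{1,\dots,k\}$ induces a coordinate permutation of $\RR^{\Omega^k}$ that maps the grid $\fC_r$ onto itself, whence $(\PSI^{(r)})^\theta=(\PSI^\theta)^{(r)}$; since $\PSI^\theta$ has the same law as $\PSI$ by {\bf SYM} for $P$, it follows that $(\PSI^{(r)})^\theta$ has the same law as $\PSI^{(r)}$, i.e.\ $P_r$ is permutation invariant (and $\Psi_r$ is $\theta$-closed).

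For the weakened {\bf POS}, let $F=F(\psi)$ be the bracketed expression in \eqref{eqWeakPOS}, regarded for fixed $\pi,\pi'$ and fixed samples $\MU_i,\MU_i'$ as a function of the weight function $\psi$; it is a fixed linear combination of $2k$ terms of the form $\Lambda\bc{\sum_{\tau}\psi(\tau)w(\tau)}$, where $w$ ranges over product probability measures on $\Omega^k$ assembled from the $\MU_i$'s and $\MU_i'$'s. Each argument $\sum_\tau\psi(\tau)w(\tau)$ is a convex combination of the values $\psi(\tau)\in(0,2)$, hence lies in $[0,2]$, and replacing $\psi$ by $\tilde\psi$ with $\|\psi-\tilde\psi\|_\infty\le\delta$ shifts every argument by at most $\delta$, so $\abs{F(\psi)-F(\tilde\psi)}\le2k\,\omega_\Lambda(\delta)$ independently of $\pi,\pi',\MU,\MU'$. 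Using $P_r=\mathrm{Law}(\PSI^{(r)})$ and then this bound with $\delta=\delta_r$,
\[
\Erw\brk{F(\PSI')}=\Erw\brk{F(\PSI^{(r)})}\ge\Erw\brk{F(\PSI)}-2k\,\omega_\Lambda(\delta_r),
\]
where $\PSI'\sim P_r$, $\PSI\sim P$, and the remaining expectations are over $\MU_i\sim\pi$, $\MU_i'\sim\pi'$. Finally, permutation invariance of $P$ makes the $k$ summands $\Erw[\Lambda(\sum_\tau\PSI(\tau)\MU_h(\tau_h)\prod_{i\ne h}\MU_i'(\tau_i))]$ all equal to one another --- relabel the i.i.d.\ families $(\MU_i)$ and $(\MU_i')$ and apply the transposition of $1$ and $h$ to $\PSI$ --- so $\Erw[F(\PSI)]$ is exactly the left-hand side of {\bf POS} for $P$ evaluated at the given $\pi,\pi'\in\cP_*^2(\Omega)$, hence $\ge0$. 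Therefore $\Erw[F(\PSI')]\ge-2k\,\omega_\Lambda(\delta_r)$, uniformly in $\pi,\pi'$, and choosing $r$ with $2k\,\omega_\Lambda(\delta_r)\le\alpha$ proves \eqref{eqWeakPOS}.

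The only point that requires care is that $\Lambda$ is not Lipschitz near $0$, so $\Lambda$-increments cannot be controlled by a constant multiple of the argument increment; working throughout with the uniform modulus of continuity of $\Lambda$ on $[0,2]$ circumvents this, and is legitimate precisely because every argument that appears is a convex combination of values $\psi(\tau)\in(0,2)$ and so never leaves $[0,2]$. The remaining ingredients --- the tower property defining $\psi^{(r)}$, the $\theta$-equivariance of the grid $\fC_r$, and the relabelling of the $k$ ``single-spin'' $\Lambda$-terms --- are routine, and no integrability hypothesis beyond boundedness of the weight functions in $(0,2)$ is used (in particular \eqref{eqBounded} is not needed for $P_r$ here).
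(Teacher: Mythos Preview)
Your proof is correct and follows the same route as the paper's: transfer \textbf{SYM} and \textbf{BAL} to $P_r$ by linearity and the tower property, then derive \eqref{eqWeakPOS} from the uniform continuity of $\Lambda$ on $[0,2]$ together with $\|\PSI-\PSI^{(r)}\|_\infty\to0$ and \textbf{POS} for $P$. You supply welcome extra detail (notably the permutation step identifying $\sum_{h}\Erw[\Lambda(\cdots_h)]$ with $k\,\Erw[\Lambda(\cdots_1)]$, which the paper leaves implicit); one microscopic caveat is that the pointwise identity $(\PSI^{(r)})^\theta=(\PSI^\theta)^{(r)}$ already requires the $\theta$-invariance of $P$ and not merely of the grid $\fC_r$, but you invoke exactly that in the next clause anyway.
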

\begin{proof}
The fact that {\bf SYM} and {\bf BAL} are satisfied is immediate from the fact that $P_r$ is a conditional expectation of $P$.
To prove (\ref{eqWeakPOS}) we  observe that
by the uniform continuity of $\Lambda$ on compact subsets of $[0,\infty)$, we can choose $r>0$ large enough so that for all $\psi\in\Psi$,
$\mu_1,\mu_1',\ldots,\mu_k,\mu_k'\in\cP(\Omega)$, 
	\begin{align*}
	\abs{\Lambda\left(\sum_{\tau\in\Omega^k}\psi^{(r)}(\tau)\prod_{i=1}^ k\mu_i(\tau_i)\right)-
		\Lambda\left(\sum_{\tau\in\Omega^k}\psi(\tau)\prod_{i=1}^ k\mu_i(\tau_i)\right)}&<\alpha/3,\\
	\abs{\Lambda\left(\sum_{\tau\in\Omega^k}\psi^{(r)}(\tau)\prod_{i=1}^k \mu_i'(\tau_i)\right)-
		\Lambda\left(\sum_{\tau\in\Omega^k}\psi(\tau)\prod_{i=1}^k \mu_i'(\tau_i)\right)}&<\alpha/3,\\
	\abs{\Lambda\left(\sum_{\tau\in\Omega^k}\psi^{(r)}(\tau)\mu_h(\tau_h)\prod_{i\in[k]\setminus\{h\}}\mu_i'(\tau_i)\right)
		-\Lambda\left(\sum_{\tau\in\Omega^k}\psi(\tau)\mu_h(\tau_h)\prod_{i\in[k]\setminus\{h\}}\mu_i'(\tau_i)\right)}&<\alpha/3.
	\end{align*}
Thus, (\ref{eqWeakPOS}) follows from the triangle inequality and the fact that $P$ satisfies {\bf POS}.
\end{proof}

\begin{lemma}\label{Lemma_EnergyConv}
For any $\alpha>0$, $d>0$ there is $r_0>0$ such that uniformly for all $r\geq r_0$ we have
	$$|\Erw[\ln Z(\hat\G(n,\vm,P))]-\Erw[\ln Z(\hat\G(n,\vm,P_r))]|<(\alpha+o(1))n.$$
\end{lemma}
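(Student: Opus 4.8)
\noindent
The plan is to pass to the teacher-student model via the Nishimori identity and then couple the two teacher-student models built from $P$ and from $P_r$. The crucial observation is that $\Erw[\PSI^{(r)}(\tau)]=\Erw[\PSI(\tau)]$ for every $\tau\in\Omega^k$, because $\PSI^{(r)}(\tau)=\Erw[\PSI(\tau)\mid C]$ is a conditional expectation. Hence $\Erw[\psi_{\G(n,m,P_r)}(\sigma)]=\phi_{P_r}(\rho_\sigma)^m=\phi_P(\rho_\sigma)^m=\Erw[\psi_{\G(n,m,P)}(\sigma)]$ for every $\sigma$ (with $\phi$ as in \eqref{eqF}), so the reweighted ground-truth law \eqref{eq:NishimoriS} is \emph{identical} for $P$ and $P_r$; write $\hat\SIGMA$ for it. I would first record that $P_r$ satisfies {\bf SYM} and {\bf BAL} by \Lem~\ref{Lemma_POSConv}, and that a couple of applications of Jensen's inequality show $P_r$ inherits the first two bounds of \eqref{eqBounded} from $P$, so all the lemmas of \Sec~\ref{Sec_prelims} apply to $P_r$ as well. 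By \eqref{eq:nishimori}, $\Erw[\ln Z(\hat\G(n,\vm,P))]=\Erw[\ln Z(\G^*(n,\vm,P,\hat\SIGMA))]$ and likewise with $P_r$, so it remains to bound $\Erw\,\abs{\ln Z(\G^*(n,\vm,P,\hat\SIGMA))-\ln Z(\G^*(n,\vm,P_r,\hat\SIGMA))}$ with $\hat\SIGMA$ and $\vm$ shared.

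Next I would fix a nearly balanced $\sigma$ and couple $\G^*(n,\vm,P,\sigma)$ with $\G^*(n,\vm,P_r,\sigma)$. Recalling \eqref{eqTeacher}, a constraint node of $\G^*(n,\vm,P,\sigma)$ is obtained by drawing $\psi_a$ from the $\phi_\psi$-size-biased law $\tilde P$ of $P$, where $\phi_\psi=\sum_\tau\psi(\tau)\prod_i\rho_\sigma(\tau_i)$, and then drawing $\partial a$ with probability proportional to $\psi_a(\sigma(\partial_1a),\dots,\sigma(\partial_ka))$; since $\psi\mapsto\psi^{(r)}$ sends each weight function to its cell average, one checks that $\psi_a\sim\tilde P$ forces $\psi_a^{(r)}\sim\tilde P_r$, where $\tilde P_r$ is the analogous size-biasing of $P_r$. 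So I couple the constraints by drawing $\psi_a\sim\tilde P$, setting $\psi_a'=\psi_a^{(r)}$, and coupling the two neighborhoods to agree with the maximal probability. Because $\sigma$ is nearly balanced and {\bf SYM} gives $\sum_\tau\psi(\tau)=q^k\xi$ for every $\psi\in\Psi$ (and likewise for $\psi^{(r)}$), one has $\phi_\psi,\phi_{\psi^{(r)}}=\xi+o(1)$ \emph{uniformly} in $\psi$, while $\abs{\psi(\tau)-\psi^{(r)}(\tau)}\le 2/r$ for all $\tau$ (both lie in the same sub-cube of side $2/r$); hence the conditional total variation distance of the two neighborhood laws is $O(1/r)+o(1)$, uniformly. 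Writing $B$ for the set of constraints whose neighborhoods disagree and $D_i=\max_{\omega\in\Omega^k}\abs{\ln\psi_i^{(r)}(\omega)-\ln\psi_i(\omega)}$, I would then bound, by first altering the weight functions on $[\vm]\setminus B$ and afterwards deleting the constraints in $B$ from both graphs,
\[
\abs{\ln Z(\G^*(n,\vm,P,\sigma))-\ln Z(\G^*(n,\vm,P_r,\sigma))}\le\sum_{i=1}^{\vm}D_i+\sum_{i\in B}\bc{\max_\omega\abs{\ln\psi_i(\omega)}+\max_\omega\abs{\ln\psi_i^{(r)}(\omega)}}.
\]
Taking expectations, the first sum contributes $\Erw[\vm]\,\Erw_{\tilde P}[D_1]=\epsilon_r n$, and the second, by Cauchy--Schwarz, at most $\Erw[\vm]\,\pr[1\in B]^{1/2}\,\Erw[(\max_\omega\abs{\ln\psi_1(\omega)}+\max_\omega\abs{\ln\psi_1^{(r)}(\omega)})^2]^{1/2}=(O(1/r)+o(1))^{1/2}\cdot O(n)$.

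The main obstacle is showing $\epsilon_r\to0$ as $r\to\infty$, i.e.\ the $L^1$-convergence $\Erw_{\tilde P}[\max_\omega\abs{\ln\PSI^{(r)}(\omega)-\ln\PSI(\omega)}]\to0$: the ratio $\PSI^{(r)}(\omega)/\PSI(\omega)$ need not be close to $1$ when $\PSI(\omega)$ is tiny. I would handle this by splitting on whether $\min_\omega\PSI(\omega)\ge\delta$, where $D_1=O(1/(r\delta))$ deterministically, against the complement, where I use Cauchy--Schwarz together with $\pr[\min_\omega\PSI(\omega)<\delta]\le\delta^4\Erw[\max_\omega\PSI(\omega)^{-4}]\to0$ (from \eqref{eqBounded}) and a uniform-in-$r$ bound $\Erw[\max_\omega\ln^2\PSI^{(r)}(\omega)]=O(1)$ extracted from the first two bounds of \eqref{eqBounded} via Jensen; this yields $\limsup_r\Erw_{\tilde P}[D_1]\le C\delta^2$ for every $\delta>0$, hence $\epsilon_r\to0$. (A secondary point: one must route through the teacher-student model rather than compare the null models $\G(n,\vm,P)$ and $\G(n,\vm,P_r)$ directly, since the uniform bound $\abs{\ln Z-\ln Z'}\le\sum_iD_i$ interacts badly with the reweighting by $Z$ that defines $\hat\G$, whereas in the teacher-student picture that reweighting is absorbed into $\hat\SIGMA$.)

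Finally I would assemble the pieces. Choosing $r_0=r_0(\alpha,d,P)$ so that $\sup_{r\ge r_0}(\epsilon_r+c\,(O(1/r))^{1/2})<\alpha/2$ gives, for every nearly balanced $\sigma$ and all $r\ge r_0$, $\Erw\,\abs{\ln Z(\G^*(n,\vm,P,\sigma))-\ln Z(\G^*(n,\vm,P_r,\sigma))}\le(\alpha/2+o(1))n$ uniformly in $\sigma$. By \Cor~\ref{lem:conc_coloring} the shared ground truth $\hat\SIGMA$ is nearly balanced with probability $1-o(1)$; on the rare complementary event I would bound $\abs{\ln Z(\G^*(n,\vm,P,\hat\SIGMA))}+\abs{\ln Z(\G^*(n,\vm,P_r,\hat\SIGMA))}$ by Cauchy--Schwarz using $\Erw[(\ln Z(\G^*(n,\vm,\cdot,\sigma)))^2]=O(n^2)$ uniformly in $\sigma$ (from $\ln Z\le\sqrt{\vm\,\cO(\G^*)}$ and $\Erw[\cO(\G^*(n,\vm,\cdot,\sigma))]=O(n)$, cf.\ \Lem~\ref{Lemma_O}) and multiplying by $\pr[\hat\SIGMA\text{ not nearly balanced}]^{1/2}=o(1)$, which contributes $o(n)$. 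Integrating over the shared $\vm$ then yields $\abs{\Erw[\ln Z(\hat\G(n,\vm,P))]-\Erw[\ln Z(\hat\G(n,\vm,P_r))]}\le(\alpha/2+o(1))n+o(n)<(\alpha+o(1))n$ for all $r\ge r_0$, as claimed.
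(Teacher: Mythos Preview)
Your proof is correct and follows essentially the same route as the paper's: pass to the teacher-student model, couple $\G^*(n,\vm,P,\sigma)$ with $\G^*(n,\vm,P_r,\sigma)$ via $\psi_a\mapsto\psi_a^{(r)}$, and bound $\Erw[\max_\tau|\ln(\psi_a^{(r)}(\tau)/\psi_a(\tau))|]$ by a bulk/tail split. Two minor differences are worth noting. First, your reduction via the Nishimori identity together with the observation that $\hat\SIGMA_{n,m,P}\stackrel{\mathrm d}=\hat\SIGMA_{n,m,P_r}$ (because $\phi_{P_r}=\phi_P$) is cleaner than the paper's, which instead invokes contiguity (\Lem~\ref{Prop_contig}) and concentration (\Lem~\ref{Lemma_Azuma}) to replace $\hat\SIGMA$ by the uniform $\SIGMA^*$ at the cost of an $o(n)$ error. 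Second, and in the opposite direction, your coupling is looser than necessary: the paper observes that the map $(\partial a,\psi_a)\mapsto(\partial a,\psi_a^{(r)})$ already sends the law of a constraint of $\G^*(n,m,P,\sigma)$ \emph{exactly} onto that of a constraint of $\G^*(n,m,P_r,\sigma)$ (this is immediate from \eqref{eqTeacher} because $\Erw[\vecone\{\PSI\in C\}\PSI(\tau)]=\psi_C(\tau)\,P(C)$ and the normalizers agree), so the neighborhoods may be taken identical and your set $B$ is empty. This eliminates your second sum and the associated Cauchy--Schwarz step; what remains is precisely your ``main obstacle'', which the paper handles by the same bulk/tail decomposition, using convexity of $x\mapsto\ln^2 x$ on $(0,2)$ together with \eqref{eqBounded} to control the tail.
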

\begin{proof}
By \Lem~\ref{Prop_contig} the models $\hat\G(n,\vm,P)$ and $\G^*(n,\vm,P,\SIGMA^*)$ are mutually contiguous.
Hence, \Lem~\ref{Lemma_Azuma} implies that $\Erw[\ln Z(\hat\G(n,\vm,P))]=\Erw[\ln Z(\G^*(n,\vm,P,\SIGMA^*))]+o(n)$.
Similarly, since $P_r$ satisfies {\bf SYM} and {\bf BAL} by \Lem~\ref{Lemma_POSConv},
another application of \Lem s~\ref{Prop_contig} and~\ref{Lemma_Azuma} yields
 $\Erw[\ln Z(\hat\G(n,\vm,P_r))]=\Erw[\ln Z(\G^*(n,\vm,P_r,\SIGMA^*))]+o(n)$.
Therefore, it suffices to prove that for any $\alpha>0$ for all sufficiently large $r$ we have
	\begin{equation}\label{eqLemma_EnergyConv_1}
	\max_{\sigma\in\Omega^{V_n}}\abs{\Erw[\ln Z(\G^*(n,\vm,P_r,\sigma))]-\Erw[\ln Z(\G^*(n,\vm,P,\sigma))]}\leq(\alpha+o(1))n.
	\end{equation}
In fact, since the Poisson variable $\vm$ has sub-exponential tails, (\ref{eqQuenchedAnnealed}) shows that (\ref{eqLemma_EnergyConv_1})
would follow if we could show that
	\begin{equation}\label{eqLemma_EnergyConv_2}
	\max_{\sigma\in\Omega^{V_n},m\leq 2dn/k}\abs{\Erw[\ln Z(\G^*(n,m,P_r,\sigma))]-\Erw[\ln Z(\G^*(n,m,P,\sigma))]}\leq(\alpha+o(1))n.
	\end{equation}

To prove (\ref{eqLemma_EnergyConv_2}) pick $\beta=\beta(\alpha,d,P)>0$ small enough and then $r=r(\beta)>0$ large enough.
Fix any $\sigma\in\Omega^{V_n}$ and $m\leq2dn/k$.
We couple two factor graphs $\G',\G''$ such that $\G'$ has distribution $\G^*(n,m,P,\sigma)$ and $\G''$ is distributed as $\G^*(n,m,P_r,\sigma)$ as follows.
First choose $\G'=\G^*(n,m,P,\sigma)$.
Let us write $\psi_{a_1},\ldots,\psi_{a_m}$ for the weight functions of $\G'$.
Then let $\G''$ be the factor graph where each constraint node $a_i$ is adjacent to the same variable nodes as in $\G'$ but where the corresponding weight function
is $\psi_{a_i}^{(r)}$.
It is immediate from (\ref{eqTeacher}) that $\G''$ is distributed as $\G^*(n,m,P_r,\sigma)$.

To bound $\Erw[\ln(Z(\G'')/Z(\G')]$ we observe that
	\begin{align}\nonumber
	\Erw\abs{\ln\frac{Z(\G'')}{Z(\G')}}&=\Erw\abs{\ln\sum_{\tau\in\Omega^{V_n}}
			\frac{\psi_{\G''}(\tau)}{\psi_{\G'}(\tau)}\cdot\frac{\psi_{\G'}(\tau)}{Z(\G')}}
	=\Erw\abs{\ln\bck{\prod_{i=1}^m\frac{\psi_{a_i}^{(r)}(\mathbold{\tau}(\partial_1a_i),\ldots,\mathbold{\tau}(\partial_k(a_i)))}
		{\psi_{a_i}(\mathbold{\tau}(\partial_1a_i),\ldots,\mathbold{\tau}(\partial_k(a_i)))}}_{\G'}}\\
	&\hspace{-5mm}\leq\Erw\max_{\tau\in\Omega^{V_n}}\sum_{i=1}^m
		\abs{\ln\frac{\psi_{a_i}^{(r)}(\tau(\partial_1a_i),\ldots,\tau(\partial_k(a_i)))}
		{\psi_{a_i}(\tau(\partial_1a_i),\ldots,\tau(\partial_k(a_i)))}}
		\leq\Erw\sum_{i=1}^m\max_{\tau\in\Omega^k}
		\abs{\ln\frac{\psi_{a_i}^{(r)}(\tau)}{\psi_{a_i}(\tau)}}
		\leq dn\cdot\Erw\brk{\max_{\tau\in\Omega^k}\abs{\ln\frac{\psi_{a_1}^{(r)}(\tau)}{\psi_{a_1}(\tau)}}}.
			\label{eqLemma_EnergyConv_3}
	\end{align}
Since the function $x\mapsto\ln^2x$ is strictly convex on $(0,2)$ for small $\beta$ and large $r$ we obtain from (\ref{eqTeacher}), the tail bound (\ref{eqBounded}) and Jensen's inequality that 
	\begin{align}\label{eqLemma_EnergyConv_4}
	\Erw\brk{\bc{\max_{\tau\in\Omega^k}\abs{\ln\psi_{a_1}(\tau)}+\max_{\tau\in\Omega^k}\abs{\ln\psi_{a_1}^{(r)}(\tau)}}
		\bc{\vecone\cbc{\max_{\tau\in\Omega^k}\abs{\ln\psi_{a_1}(\tau)}>\beta^{-1}}+
		\vecone\cbc{\max_{\tau\in\Omega^k}\abs{\ln\psi_{a_1}^{(r)}(\tau)}>\beta^{-1}}}}
		&<\frac{\alpha}{2d}.
	\end{align}
On the other hand,
since the map $z\in[\eul^{-1/\beta},2]\mapsto\ln z$ is uniformly continuous, we can choose a sufficiently large $r=r(\beta)$ such that
$\max_\tau|\ln(\psi_{a_1}^{(r)}(\tau)/\psi_{a_1}(\tau))|<\alpha/(2d)$
whenever $\max_{\tau\in\Omega^k}|\ln\psi_{a_1}(\tau)|,\max_{\tau\in\Omega^k}|\ln\psi_{a_1}^{(r)}(\tau)|\leq1/\beta$.
Thus, (\ref{eqLemma_EnergyConv_2}) follows from (\ref{eqLemma_EnergyConv_3}) and (\ref{eqLemma_EnergyConv_4}).
\end{proof}

\begin{proof}[Proof of \Thm~\ref{Thm_plantedFreeEnergy}]
Fix $d>0$.
Since \Lem~\ref{Lemma_POSConv} shows that $P_r$ satisfies {\bf SYM} and {\bf BAL},  \cite[\Prop~3.6]{CKPZ} implies that
	\begin{equation}\label{eqThm_plantedFreeEnergy_1}
	\limsup_{n\to\infty}n^{-1}\Erw[\ln Z(\hat\G(n,\vm,P_r))]\leq\sup_{\pi\in\cPcent(\Omega)}\cB(d,P_r,\pi).
	\end{equation}
Furthermore, \cite[\Prop~3.7]{CKPZ} implies together with equation (\ref{eqWeakPOS}) from \Lem~\ref{Lemma_POSConv} that
for any $\alpha>0$ there is $r>0$ such that
	\begin{equation}\label{eqThm_plantedFreeEnergy_2}
	\liminf_{n\to\infty}n^{-1}\Erw[\ln Z(\hat\G(n,\vm,P_r))]\geq\sup_{\pi\in\cPcent(\Omega)}\cB(d,P_r,\pi)-\alpha.
	\end{equation}
Combining (\ref{eqThm_plantedFreeEnergy_1}) and (\ref{eqThm_plantedFreeEnergy_2}) with \Lem~\ref{Lemma_BetheConv},
we conclude that for any $\alpha>0$ for all large enough $r$ we have
	\begin{equation*}
	\sup_{\pi\in\cPcent(\Omega)}\cB(d,P,\pi)-\alpha\leq\liminf_{n\to\infty}n^{-1}\Erw[\ln Z(\hat\G(n,\vm,P_r))]\leq
		\limsup_{n\to\infty}n^{-1}\Erw[\ln Z(\hat\G(n,\vm,P_r))]\leq\sup_{\pi\in\cPcent(\Omega)}\cB(d, P,\pi)+\alpha.
	\end{equation*}	
Applying  \Lem~\ref{Lemma_EnergyConv} therefore yields
	\begin{equation}\label{eqThm_plantedFreeEnergy_3}
	\lim_{n\to\infty}n^{-1}\Erw[\ln Z(\hat\G(n,\vm,P))]=\sup_{\pi\in\cPcent(\Omega)}\cB(d,P,\pi).
	\end{equation}
Moreover, since $\G^*(n,\vm,P,\SIGMA^*)$ and $\hat\G(n,\vm,P)$ are mutually contiguous by \Lem~\ref{Prop_contig},
\Lem~\ref{Lemma_Azuma} implies that $\lim_{n\to\infty}n^{-1}\Erw[\ln Z(\hat\G(n,\vm,P))]=\sup_{\pi\in\cPcent(\Omega)}\cB(d,P,\pi)$, too.
Finally, since the probability of the event $\fS$ is bounded away from $0$ by \Prop~\ref{prop:FirstCondOverFirst}, \Lem~\ref{Lemma_Azuma}
shows that
	$$\lim_{n\to\infty}n^{-1}\Erw[\ln Z(\hat\G(n,\vm,P))|\fS]=\lim_{n\to\infty}n^{-1}\Erw[\ln Z(\G^*(n,\vm,P,\SIGMA^*))|\fS]
		=\sup_{\pi\in\cPcent(\Omega)}\cB(d,P,\pi)$$ as well.
\end{proof}

\subsection{Proof of \Thm~\ref{Thm_cond}}
We begin with the observation that $\dc$ is bounded and bounded away from $0$.

\begin{lemma}\label{Lemma_dcbounds}
We have $1/(k-1)\leq\dc<\infty$.
\end{lemma}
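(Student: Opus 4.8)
\textbf{Proof plan for Lemma~\ref{Lemma_dcbounds}.}
The goal is to establish the two-sided bound $1/(k-1)\le\dc<\infty$, where $\dc$ is defined in~\eqref{eq:dcond} as the infimum of those $d$ for which $\sup_{\pi\in\Pomast}\cB(d,P,\pi)>\ln q+\frac dk\ln\xi$. Both directions amount to exhibiting, respectively, a regime of $d$ where the supremum does \emph{not} exceed the threshold value, and a regime where it does.

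For the lower bound $\dc\ge1/(k-1)$, the plan is to show that for every $d<1/(k-1)$ and every $\pi\in\Pomast$ we have $\cB(d,P,\pi)\le\ln q+\frac dk\ln\xi$. The natural route is to combine what we already know about the first and second moments. By \Prop~\ref{lem:FirstMoment} and \Prop~\ref{lem:SecondMoment} (or rather their proofs, which only use \textbf{SYM}, \textbf{BAL} and the spectral bound), the condition $d(k-1)\max_{\lambda\in\eig'(\Xi)}\lambda<1$ is what makes the Gaussian integrals converge; and $\max_{\lambda\in\eig'(\Xi)}|\lambda|\le1$ always, since the matrices $\Phi_\psi$ are stochastic and hence $\Xi$ has operator norm at most $1$ on $\RR^\Omega\tensor\RR^\Omega$ (cf.\ \Lem s~\ref{Lemma_Phi} and~\ref{Lemma_Xi}). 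Thus for $d<1/(k-1)$ the truncated second moment is of the same order as the square of the first moment \emph{unconditionally}, i.e.\ the overlap is concentrated at $\bar\rho$, and then the standard interpolation/concentration argument (Azuma via \Lem~\ref{Lemma_Azuma}, together with \Prop~\ref{lem:FirstMoment}) gives $\lim_n\frac1n\Erw[\ln Z(\G)]=\ln q+\frac dk\ln\xi$. On the other hand \Thm~\ref{Thm_plantedFreeEnergy} identifies $\lim_n\frac1n\Erw[\ln Z(\hat\G)]=\sup_{\pi}\cB(d,P,\pi)$, and the quenched free energy of $\hat\G$ is always at least that of $\G$ (this is essentially the second-moment/Nishimori inequality already used in \Cor~\ref{Lemma_naiveLowerBound}); combined with the matching upper bound $\sup_\pi\cB(d,P,\pi)\le\ln q+\frac dk\ln\xi$ that one gets from the contiguity of $\hat\G$ and $\G$ for $d<1/(k-1)$, this pins $\sup_\pi\cB(d,P,\pi)=\ln q+\frac dk\ln\xi$, hence $d\le\dc$. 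Letting $d\uparrow1/(k-1)$ yields $\dc\ge1/(k-1)$. A cleaner alternative, which I would try first, is to argue directly on the functional $\cB$: plug the perturbation family $\pi_\epsilon$ from~\eqref{eqCreative} and more general perturbations into $\cB(d,P,\pi)$, expand, and observe that for $d<1/(k-1)$ the quadratic form appearing at second order is negative definite on $\cE'$ because $d(k-1)\|\Xi|_{\cE'}\|<1$; one then needs to rule out that some far-away $\pi$ beats $\bar\pi=\delta_{\bar\rho}$, which is exactly the content of \Prop~\ref{prop:belowcond-unif}-type overlap concentration, so this reduces to the same input.

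For the upper bound $\dc<\infty$, the plan is to exhibit a single $d_0$ and a single $\pi\in\Pomast$ with $\cB(d_0,P,\pi)>\ln q+\frac{d_0}k\ln\xi$. The cleanest choice is to use the teacher-student model and \Thm~\ref{Thm_plantedFreeEnergy} again: it suffices to show that for $d$ large enough $\lim_n\frac1n\Erw[\ln Z(\hat\G)]>\ln q+\frac dk\ln\xi$. Equivalently, using the Nishimori identity and \Cor~\ref{lem:conc_coloring}, one compares the ``planted'' free energy with the ``annealed'' one $\frac1n\ln\Erw[Z(\G)]=\ln q+\frac dk\ln\xi+o(1)$. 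The point is that for large $d$ the planted model is extremely rigid: the ground truth $\SIGMA^*$ is essentially recoverable, so $\bck{\vecone\{\SIGMA=\SIGMA^*\}}_{\hat\G}$ is not exponentially small, which forces $\ln Z(\hat\G)\le\ln\psi_{\hat\G}(\SIGMA^*)+o(n)$ with $\frac1n\Erw[\ln\psi_{\hat\G}(\SIGMA^*)]=\frac dk\Erw[\ln\PSI(\SIGMA^*(\partial a))]<\frac dk\ln\xi$ by strict concavity of $\ln$ and condition~\eqref{eqBounded} (non-constancy of $\PSI$, so Jensen is strict). Meanwhile the entropic term $\ln q$ cannot compensate a term linear in $d$ once $d$ is large. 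To make this rigorous without reproving recovery from scratch, I would instead simply evaluate $\cB(d,P,\pi)$ at a concrete \emph{non-uniform} $\pi$ — e.g.\ a small mixture of atoms near the ``hard'' directions supplied by \Lem~\ref{Cor:MaxOpt}, or a two-point distribution modelled on the broadcast process on $\T(d,P)$ — and show that as $d\to\infty$ the first term of~\eqref{eqMyBethe} grows only like $O(1)+\frac dk\cdot(\text{something}<\ln\xi)$ while the correction $-\frac{d(k-1)}{k\xi}\Erw[\Lambda(\cdots)]$ is bounded; the gap with $\ln q+\frac dk\ln\xi$ then opens up. The main obstacle is precisely this second half: choosing the test measure $\pi$ explicitly enough that the Poissonised product inside $\Lambda$ can be estimated, yet rich enough to beat the threshold — this requires either a clean large-$d$ asymptotic analysis of $\cB$ or borrowing the rigidity-of-the-planted-model estimate, and it is here that the non-degeneracy hypothesis $\sum_\tau\Erw[(\PSI(\tau)-\xi)^2]>0$ from~\eqref{eqBounded} does the real work. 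I expect the lower bound to be essentially mechanical given the spectral facts in \Sec~\ref{Sec_prop_KS}, and the finiteness of $\dc$ to be the step needing the most care.
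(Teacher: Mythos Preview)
Your plan has genuine issues in both halves.

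\textbf{Lower bound.} Your proposed route is circular. Both \Prop~\ref{prop:belowcond-unif} (overlap concentration in $\hat\G$) and the contiguity of $\G$ and $\hat\G$ (\Thm~\ref{Cor_contig}) are proved under the hypothesis $d<\dc$; you cannot invoke them to establish $\dc\ge1/(k-1)$. Your ``cleaner alternative'' via the local expansion of $\cB$ around $\delta_{\bar\rho}$ suffers the same problem: as you yourself note, it only handles nearby $\pi$, and ruling out distant $\pi$ sends you back to \Prop~\ref{prop:belowcond-unif}. The paper's argument is far more elementary and avoids the circularity entirely: for $d<1/(k-1)$ the random factor graph $\G^*(n,\vm,P,\sigma)$ (for any nearly balanced $\sigma$) lies below the giant-component threshold of the associated random hypergraph, hence with probability $1-o(1)$ it is a union of $O(\ln n)$-sized components, all but $O(1)$ of which are trees. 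On any tree factor graph with $n$ variable nodes and $m$ constraint nodes, condition \textbf{SYM} gives $\ln Z = n\ln q + m\ln\xi$ exactly (induction on the tree). Combined with concentration (\Lem~\ref{Lemma_Azuma}) this yields $\frac1n\Erw[\ln Z(\G^*)]=\ln q+\frac dk\ln\xi+o(1)$, and \Thm~\ref{Thm_plantedFreeEnergy} then gives $\sup_\pi\cB(d,P,\pi)=\ln q+\frac dk\ln\xi$, i.e.\ $d\le\dc$.

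\textbf{Upper bound.} Your inequality goes the wrong way. You want to show $\frac1n\Erw[\ln Z(\hat\G)]>\ln q+\frac dk\ln\xi$ for large $d$, so you need a \emph{lower} bound on $\ln Z$. The trivial one $\ln Z(\G^*)\ge\ln\psi_{\G^*}(\SIGMA^*)$ suffices --- no recoverability argument is needed. Moreover, in the teacher-student model the constraint $a_1$ is biased towards $\SIGMA^*$, so
\[
\Erw\brk{\ln\psi_{a_1}(\SIGMA^*(\partial a_1))}\;=\;\frac{1}{\xi}\,\Erw_{\tau,\PSI}\brk{\Lambda(\PSI(\tau))}+o(1)\;>\;\ln\xi
\]
by the \emph{convexity} of $\Lambda$ and the non-degeneracy clause in~\eqref{eqBounded}; your appeal to ``strict concavity of $\ln$'' gives the opposite (and useless) inequality $\Erw[\ln\PSI]<\ln\xi$, which applies to the \emph{un}biased model. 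With the correct sign one gets $\frac1n\Erw[\ln\psi_{\G^*}(\SIGMA^*)]\ge\frac dk(\ln\xi+\alpha)+o(1)$ for some $\alpha>0$, which exceeds $\ln q+\frac dk\ln\xi$ once $d>k\alpha^{-1}\ln q$. This is precisely the paper's argument; your detour through rigidity/recoverability or an explicit test measure $\pi$ is unnecessary.
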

\begin{proof}
Fix any $d<1/(k-1)$.
Then for any nearly balanced  $\sigma:V_n\to\Omega$ the expected degree of every variable node of $\G^*(n,\vm,P,\sigma)$ is $d+o(1)<1/(k-1)$.
Therefore, the well-known result on the `giant component' threshold of a random hypergraph (e.g., \cite{Schmidt}) shows
	that with probability $1-o(1)$ the random factor graph
	$\G^*(n,m,P,\sigma)$ consists of connected components of order $O(\ln n)$, all but a bounded number of which are trees.
But assumption {\bf SYM} guarantees that for every tree factor graph with $n$ variable nodes and $m$ constraint nodes the free energy is
precisely equal to $n\ln q+m\ln\xi$, as is easily verified by induction on the size of the tree.
Hence, $n^{-1}\Erw[\ln Z(\G^*(n,m,\vm,P,\sigma))]=\ln q+\frac dk\ln\xi+o(1)$ by \Lem~\ref{Lemma_Azuma}.
Since this formula holds for every nearly balanced assignment $\sigma$, we obtain $n^{-1}\Erw[\ln Z(\G^*(n,\vm,P,\SIGMA^*)]=\ln q+\frac dk\ln\xi+o(1)$.
Hence, \Thm~\ref{Thm_plantedFreeEnergy} shows that $d<\dc$ and thus $\dc\geq1/(k-1)$.

We move on to the upper bound.
Recalling that $\vec m$ has distribution $\Po(dn/k)$ and that the $\vec m$ constraint nodes in the teacher-student model are chosen independently, we obtain
	\begin{align}\label{eqLemma_dcbounds_111}
	\frac{k}{n}\frac{\partial}{\partial d}\Erw[\ln\psi_{\G^*}(\SIGMA^*)]&=
		\frac{k}{n}\frac{\partial}{\partial d}\Erw\brk{\sum_{i=1}^{\vec m}\ln\psi_{a_i}(\SIGMA^*(\partial_1a_i),\ldots,\SIGMA^*(\partial_ka_i))}
		=\Erw\brk{\ln\psi_{a_1}(\SIGMA^*(\partial_1a_1),\ldots,\SIGMA^*(\partial_ka_1))}.
	\end{align}
Further, plugging in the definition (\ref{eqTeacher}) of the teacher-student model, we can write the last term out as
	\begin{align*}
	\Erw\brk{\ln\psi_{a_1}(\SIGMA^*(\partial_1a_1),\ldots,\SIGMA^*(\partial_ka_1))}
		&=\Erw\brk{\frac{\sum_{i_1,\ldots,i_k\in[n]}\Lambda(\PSI(\SIGMA^*(x_{i_1}),\ldots,\SIGMA^*(x_{i_k})))}
			{\sum_{j_1,\ldots,j_k\in[n]}\int_\Psi\varphi(\SIGMA^*(x_{j_1}),\ldots,\SIGMA^*(x_{j_k}))\dd P(\varphi)}}.
	\end{align*}
Since the uniformly random $\SIGMA^*$ is nearly balanced with probability $1-o(1)$ as $n\to\infty$, due to {\bf SYM} and (\ref{eqBounded}) the last expression simplifies to
	\begin{align}\label{eqLemma_dcbounds_112}
	\Erw\brk{\ln\psi_{a_1}(\SIGMA^*(\partial_1a_1),\ldots,\SIGMA^*(\partial_ka_1))}
		&=o(1)+\frac{1}{\xi n^k}\sum_{i_1,\ldots,i_k\in[n]}\Erw\brk{\Lambda(\PSI(\SIGMA^*(x_{i_1}),\ldots,\SIGMA^*(x_{i_k})))}.
	\end{align}
Further, due to the third part of (\ref{eqBounded}) and because $\Lambda\bc\nix$ is strictly convex, Jensen's inequality shows that there exists an $n$-independent number $\alpha>0$ such that
	\begin{align}\label{eqLemma_dcbounds_113}
	\sum_{i_1,\ldots,i_k}\frac{\Erw\brk{\Lambda(\PSI(\SIGMA^*(x_{i_1}),\ldots,\SIGMA^*(x_{i_k})))}}{\xi n^k}&\geq
		\alpha+o(1)+\Lambda\bc{\sum_{i_1,\ldots,i_k}\frac{\Erw\brk{\PSI(\SIGMA^*(x_{i_1}),\ldots,\SIGMA^*(x_{i_k}))}}{\xi n^{k}}}=
			\alpha+\ln\xi+o(1).
	\end{align}
Combining (\ref{eqLemma_dcbounds_111})--(\ref{eqLemma_dcbounds_113}),  we find
	$\frac{\partial}{\partial d}\frac{1}{n}\Erw[\ln\psi_{\G^*}(\SIGMA^*)]\geq k^{-1}(\alpha+\ln\xi)+o(1)$.
Hence, for $d>\frac{k}\alpha\ln q$ we obtain
	\begin{align*}
	\frac1n\Erw[\ln Z(\G^*)]&\geq\frac1n\Erw[\ln\psi_{\G^*}(\SIGMA^*)]
		\geq \frac dk\bc{\alpha+\ln\xi}+o(1)>\ln q+\frac dk\ln\xi+\Omega(1).
	\end{align*}
Hence, applying \Thm~\ref{Thm_plantedFreeEnergy} and recalling  (\ref{eq:dcond}), we conclude that $\dc\leq \frac{k}\alpha\ln q<\infty$.
\end{proof}

\noindent
We derive \Thm~\ref{Thm_cond} from \Thm~\ref{Thm_plantedFreeEnergy} in two steps.
First, generalizing the argument from \cite[\Sec~3.5]{CKPZ} to the setting of infinite $\Psi$, we prove the free energy formula for $d\leq\dc$.

\begin{proof}[Proof of \Thm~\ref{Thm_cond}, part 1.]
First assume that $d<\dc$ is such that for some $\delta>0$,
	\begin{align*}
	\liminf_{n\to\infty}n^{-1}\Erw\ln Z(\G(n,\vm,P))<\ln q+\frac dk\ln\xi-3\delta.
	\end{align*}
Then there exists a sequence $m\in\cM(d)$ such that
	\begin{align*}
	\liminf_{n\to\infty}n^{-1}\Erw\ln Z(\G(n,m,P))<\ln q+\frac dk\ln\xi-2\delta.
	\end{align*}
Hence, \Lem~\ref{Lemma_Azuma} shows that for a suitably large $C>0$ and a sufficiently small $\eps>0$,
	\begin{align}\label{eq_Thm_cond_part1_1}
	\liminf_{n\to\infty}n^{-1}\ln\pr\brk{n^{-1}\ln Z(\G(n,m,P))\geq\ln q+\frac dk\ln\xi-\delta,\,\cO(\G(n,m,P))\leq Cn}&\leq-\eps.
	\end{align}
Now, with $\theta=\theta(\delta,\eps)>0$ chosen small enough, we define 
	\begin{align}\label{eq_Thm_cond_part1_2}
	Z'(G)&=Z(G)\vecone\{n^{-1}\ln Z(G)\leq\ln q+\frac dk\ln\xi+\theta,\,\cO(G)\leq Cn\}.
	\end{align}
\Thm~\ref{Thm_plantedFreeEnergy} and \Lem~\ref{Lemma_Azuma} yield
	$\pr\brk{\ln Z(\hat\G(n,m,P))\leq\ln q+\frac dk\ln\xi+\theta,\,\cO(\hat\G(n,m,P))\leq Cn}=1-o(1)$ because $d<\dc$.
Therefore, (\ref{eqnotosmm3}) and (\ref{eq:NishimoriG})  yield
	\begin{align}\nonumber
	\Erw[Z'(\G(n,m,P))]&=\Erw[Z(\G(n,m,P))]\pr\brk{n^{-1}\ln Z(\hat\G(n,m,P))\leq\ln q+\frac dk\ln\xi+\theta,\,\cO(\hat\G(n,m,P))\leq Cn}\\
		&=\exp(n(\ln q+\frac dk\ln\xi+o(1))).\label{eq_Thm_cond_part1_3}
	\end{align}
Moreover, the definition \eqref{eq_Thm_cond_part1_2} of $Z'(\G(n,m,P))$ guarantees that
	\begin{align}\label{eq_Thm_cond_part1_4}
	\Erw[Z'(\G(n,m,P))^2]&\leq\exp(2n(\ln q+\frac dk\ln\xi+\theta)).
	\end{align}
But combining (\ref{eq_Thm_cond_part1_3}) and (\ref{eq_Thm_cond_part1_4}) with the Paley-Zygmund inequality, we obtain
	\begin{align*}
	\pr\brk{n^{-1}\ln Z(\G(n,m,P))\geq \ln q+\frac dk\ln\xi-\theta}&\geq\pr\brk{Z'(\G(n,m,P))\geq\exp(n(\ln q+\frac dk\ln\xi-\theta))}\\
		&\geq\frac{\Erw[Z'(\G(n,m,P))]^2}{2\Erw[Z'(\G(n,m,P))^2]}=\exp(-2n(\theta+o(1))),
	\end{align*}
which contradicts (\ref{eq_Thm_cond_part1_1}) if $\theta$ is chosen sufficiently small.
Finally, since the probability of the event $\fS$ is bounded away from $0$ by \Prop~\ref{prop:FirstCondOverFirst},
the assertion about $\Erw[\ln Z(\hat\G(n,m,P))|\fS]$ follows from \Lem~\ref{Lemma_Azuma}.
\end{proof}

\noindent
We proceed to show that $\limsup_{n\to\infty}\frac1n\Erw[\ln Z(\G)]<\ln q+\frac dk\ln\xi$ if $d>\dc$ by
generalizing the argument from \cite[\Sec~3.5]{CKPZ} to infinite sets $\Psi$.

\begin{lemma}\label{Lemma_501}
Assume that $d>0$ is such that  $\sup_{\pi\in\cP_*^2(\Omega)}\cB(d,P,\pi)>\ln q+\frac{d}k\ln\xi+\delta$ for some $\delta>0$.
Then for every large enough $C>0$ there exists $\beta=\beta(C)>0$ such that for large enough $n$,
	\begin{align}\label{eqThm_G_501}
		\pr\brk{n^{-1}\ln Z(\G^*(n,\vec m,P,\SIGMA^*))
			\leq\ln q+\frac{d}k\ln\xi+\delta/2\bigg|\cO(\G^*(n,\vec m,P,\SIGMA^*))\leq Cn}\leq\exp(-\beta n).
	\end{align}
\end{lemma}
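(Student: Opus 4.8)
The plan is to derive Lemma~\ref{Lemma_501} from the free energy formula of Theorem~\ref{Thm_plantedFreeEnergy} together with the concentration inequality Lemma~\ref{Lemma_Azuma} and the tail estimates of Lemma~\ref{Lemma_O}. Fix $D=d+1$ and let $C_0=C_0(D,P)$ exceed both the constant supplied by Lemma~\ref{Lemma_O} and the constant supplied by Lemma~\ref{Lemma_Azuma}; I claim that every $C\ge C_0$ satisfies the assertion. The heart of the argument is the following uniform lower bound on the \emph{expected} free energy of the teacher--student model with a \emph{frozen} near-balanced ground truth and a frozen number of constraints: there are $\eps_1,\eps_2>0$ (depending only on $\delta,d,k,P$) such that for all $m$ with $|m-dn/k|\le\eps_2 n$ and all $\sigma\in\Omega^{V_n}$ with $\TV{\rho_\sigma-\bar\rho}\le\eps_1$,
\begin{align}\label{eqPlan1}
\Erw\brk{\ln Z(\G^*(n,m,P,\sigma))}\ \ge\ n\bc{\ln q+\tfrac dk\ln\xi+\tfrac{3\delta}{4}}\qquad\text{for all large }n.
\end{align}

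To establish \eqref{eqPlan1} I would first observe that, by exchangeability of the teacher--student model in the variable nodes, $\Erw[\ln Z(\G^*(n,m,P,\sigma))]=:g_m(\rho_\sigma)$ depends on $\sigma$ only through the empirical distribution $\rho_\sigma$, and that a coupling altering the planted assignment on few variable nodes---carried out exactly as in the proof of Corollary~\ref{Cor_coupling}, where changing the ground truth on $v$ vertices is shown to affect $O(v)$ constraints, each shifting $\ln Z$ by a bounded-in-expectation amount---gives $|g_m(\rho)-g_m(\rho')|\le O(n\,\TV{\rho-\rho'})+o(n)$ for $\rho,\rho'$ near $\bar\rho$, and likewise that inserting or deleting one constraint shifts $\Erw[\ln Z(\hat\G(n,m,P))]$ by only $O(1)$. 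Next, the Nishimori identity (Lemma~\ref{lem:nishimori}) gives $\Erw[\ln Z(\hat\G(n,m,P))]=\Erw[g_m(\rho_{\hat\SIGMA_{n,m}})]$; since $\hat\SIGMA_{n,m}$ is near-balanced with probability $1-o(1)$ by Corollary~\ref{lem:conc_coloring}, since $g_m\le O(n)$ always (Lemma~\ref{Cor_F}) and since $\ln Z(\G^*)$ is bounded in $L^2$ by $O(n)$ (Lemma~\ref{Lemma_O}), it follows that $g_m(\bar\rho)=\Erw[\ln Z(\hat\G(n,m,P))]+o(n)$. Finally, Theorem~\ref{Thm_plantedFreeEnergy} together with the bounded-increment observation yields $\Erw[\ln Z(\hat\G(n,m,P))]=n\sup_{\pi\in\cP_*^2(\Omega)}\cB(d,P,\pi)-O(\eps_2 n)-o(n)$ uniformly for $|m-dn/k|\le\eps_2 n$; combining this with $g_m(\rho_\sigma)\ge g_m(\bar\rho)-O(\eps_1 n)-o(n)$, the hypothesis $\sup_\pi\cB(d,P,\pi)>\ln q+\tfrac dk\ln\xi+\delta$, and a small enough choice of $\eps_1,\eps_2$ gives \eqref{eqPlan1}.

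Granting \eqref{eqPlan1}, the proof concludes as follows. Fix $C\ge C_0$. By \eqref{eqO2} of Lemma~\ref{Lemma_O}, conditioning on $\cO(\G^*(n,m,P,\sigma))\le Cn$ changes the expected free energy by only $o(1)$, so the conditional expectation of $n^{-1}\ln Z$ is at least $\ln q+\tfrac dk\ln\xi+\tfrac{5\delta}{8}$ for large $n$; then Lemma~\ref{Lemma_Azuma}, applied with $\eps=\delta/16$ and $C'=C$, produces a constant $\beta_0=\beta_0(C,\delta,P)>0$ such that $\pr[\,n^{-1}\ln Z(\G^*(n,m,P,\sigma))\le\ln q+\tfrac dk\ln\xi+\tfrac\delta2\mid\cO(\G^*(n,m,P,\sigma))\le Cn\,]\le 2\exp(-\beta_0 n)$, uniformly over $m$ with $|m-dn/k|\le\eps_2 n$ and $\sigma$ with $\TV{\rho_\sigma-\bar\rho}\le\eps_1$. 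It remains to integrate out the random number of constraints $\vm$ (of law $\Po(dn/k)$) and the uniformly random ground truth $\SIGMA^*$: the events $\{|\vm-dn/k|>\eps_2 n\}$ and $\{\TV{\rho_{\SIGMA^*}-\bar\rho}>\eps_1\}$ have probability $\exp(-\Omega(n))$ by Poisson, resp.\ multinomial, Chernoff bounds, on their complement the displayed bound applies pointwise, and $\pr[\cO(\G^*(n,\vm,P,\SIGMA^*))\le Cn]=1-o(1)$ by \eqref{eqO} of Lemma~\ref{Lemma_O} (again combined with the Poisson tail on $\vm$). Putting these together and dividing by $\pr[\cO(\G^*)\le Cn]\ge\tfrac12$ yields \eqref{eqThm_G_501} with, say, $\beta=\beta_0/3$.

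The principal obstacle is step \eqref{eqPlan1}: Theorem~\ref{Thm_plantedFreeEnergy} controls only the free energy averaged over both $\vm$ and $\SIGMA^*$, whereas Lemma~\ref{Lemma_Azuma} yields exponential concentration only after these two sources of randomness are frozen, so one must transfer the averaged limit into an estimate uniform over the frozen parameters. This transfer is exactly where the variable-node exchangeability of the teacher--student model and the constraint-insertion/deletion couplings of Section~\ref{sec:ProofPreCond} are needed; the remainder is routine bookkeeping with the cited lemmas.
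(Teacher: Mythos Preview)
Your proposal is correct and follows essentially the same route as the paper: invoke Theorem~\ref{Thm_plantedFreeEnergy} for the averaged free energy, use a coupling between $\G^*(n,m,P,\sigma)$ and $\G^*(n,m,P,\tau)$ for nearby $\sigma,\tau$ to transfer the lower bound uniformly to every near-balanced frozen ground truth, then apply Lemma~\ref{Lemma_Azuma} and discard the Poisson and multinomial tails. The only cosmetic differences are that you also freeze $m$ (the paper keeps $\vm$ random throughout) and that you appeal to the coupling inside the proof of Corollary~\ref{Cor_coupling}, whereas the paper writes out that same coupling directly within the proof of Lemma~\ref{Lemma_501}.
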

\begin{proof}
If $\sup_{\pi\in\cP_*^2(\Omega)}\cB(d,P,\pi)>\ln q+\frac{d}k\ln\xi+\delta$, then \Thm~\ref{Thm_plantedFreeEnergy} shows that
	\begin{align}\label{eqLemma_501_1}
	n^{-1}\Erw[\ln Z(\G^*(n,\vm,P,\SIGMA^*))]=o(1)+\sup_{\pi\in\cP_*^2(\Omega)}\cB(d,P,\pi)>\ln q+\frac{d}k\ln\xi+\delta+o(1).
	\end{align}
Fix a small enough $\alpha=\alpha(d,\delta)>0$ and an even smaller $\eta=\eta(\alpha)>0$ and
	let $\cS_\eta=\cbc{\sigma\in\Omega^{V_n}:\TV{\rho_\sigma-\bar\rho}\leq\eta}$.
Since $\SIGMA^*\in\Omega^{V_n}$ is chosen uniformly and thus $\pr[\SIGMA^*\in\cS_\eta]=1-\exp(-\Omega(n))$
while for large enough $C$ we have $\pr\brk{\cO(\G^*(n,\vec m,P,\sigma))\leq Cn}=1-o(1)$ by \Lem~\ref{Lemma_Azuma}, it suffices to prove that for all $\sigma\in\cS_\eta$,
	\begin{align}\label{eqThm_G_501_2}
		\pr\brk{n^{-1}\ln Z(\G^*(n,\vec m,P,\sigma))
			\leq\ln q+\frac{d}k\ln\xi+\delta/2\bigg|\cO(\G^*(n,\vec m,P,\sigma))\leq Cn}\leq\exp(-\beta n).
	\end{align}

To establish (\ref{eqThm_G_501_2}) we set up a coupling of $\G'=\G^*(n,\vec m,P,\sigma)$, $\G''=\G^*(n,\vec m,P,\tau)$ for any $\sigma,\tau\in\cS_\eta$.
Let us write $a_j'$ for the constraint nodes of $\G'$ and $a_j''$ for those of $\G''$.
Relabeling the variable node as necessary, we may assume without loss that $|\sigma\triangle\tau|\leq2\eta n$.
Therefore, (\ref{eqTeacher}) shows that we can couple the distribution of the neighborhoods $\partial a_j'$, $\partial a_j''$ such that, with $\eta>0$ chosen small enough,
	\begin{equation}\label{eq_Lemma_501_1}
	\pr[\partial a_j'=\partial a_j'', \partial a_j'\cap(\sigma\triangle\tau)=\emptyset]\geq1-\alpha.
	\end{equation}
Furthermore, if indeed $\partial a_j'=\partial a_j''$ and $\partial a_j'\cap(\sigma\triangle\tau)=\emptyset$,
	then by (\ref{eqTeacher}) the weight functions $\psi_{a_j'},\psi_{a_j''}$ are identically distributed and we couple such that $\psi_{a_j'}=\psi_{a_j''}$.
If, on the other hand, $\partial a_j'\neq\partial a_j''$ or $(\partial a_j'\cup\partial a_j'')\cap(\sigma\triangle\tau)\neq\emptyset$, then we choose
$\psi_{a_j'}$, $\psi_{a_j''}$ independently according to (\ref{eqTeacher}).

Since the $\vm$  constraint nodes are chosen independently, (\ref{eq_Lemma_501_1}) shows that the number  $X$ of $j\in[\vm]$ such that either $\partial a_j'\neq\partial a_j''$ or $\psi_{a_j'}\neq\psi_{a_j''}$ is binomially distributed with mean at most $\alpha n$.
Hence, $\pr\brk{X>2\alpha n}\leq\exp(-\Omega(n))$.
Furthermore, (\ref{eqBounded}) shows that the expected impact on the free energy of the $X$ constraint nodes where $\G',\G''$ differ is bounded by $cX$ for
some number $c=c(P)>0$ that does not depend on $\alpha$ or $\sigma$.
Therefore, choosing $\alpha>0$ small enough we can ensure that
	\begin{align}\label{eq_Lemma_501_3}
	\Erw\abs{\ln Z(\G')-\ln Z(\G'')}&\leq\delta n/2.
	\end{align}
Combining (\ref{eqLemma_501_1}) and (\ref{eq_Lemma_501_3}), we obtain
	\begin{align}\label{eqLemma_501_4}
	n^{-1}\Erw[\ln Z(\G^*(n,\vm,P,\sigma))]>\ln q+\frac{d}k\ln\xi+\delta/2+o(1)\qquad\mbox{for all }\sigma\in\cS_\eta.
	\end{align}
Thus, (\ref{eqThm_G_501_2}) follows from (\ref{eqLemma_501_4}) and \Lem~\ref{Lemma_Azuma}.
\end{proof}

\begin{lemma}\label{Lemma_502}
Assume that $P$ satisfies {\bf SYM} and {\bf BAL}.
For any $D>0$ the following is true uniformly for $m\leq Dn/k$.
If $\cA$ is an event such that $\pr\brk{\G^*(n,m,P,\SIGMA^*)\in\cA}\leq\exp(-\Omega(n))$, then $\pr\brk{\hat\G(n,m,P)\in\cA}\leq\exp(-\Omega(n))$.
\end{lemma}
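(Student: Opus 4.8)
The plan is to transfer an exponentially small probability bound from the teacher-student model $\G^*(n,m,P,\SIGMA^*)$ to the reweighted model $\hat\G(n,m,P)$. The key mechanism is the relationship between the two models via the Nishimori identity together with the first moment estimates of \Lem~\ref{Cor_F}. Recall that by \Lem~\ref{lem:nishimori} and the discussion following it, for any $L^1$-function $\cX$ we have $\Erw[\cX(\G^*(n,m,P,\hat\SIGMA_{n,m}),\hat\SIGMA_{n,m})]=\Erw\bck{\cX(\hat\G(n,m,P),\SIGMA)}_{\hat\G(n,m,P)}$; in particular, taking $\cX(G,\sigma)=\vecone\{G\in\cA\}$ gives $\pr\brk{\hat\G(n,m,P)\in\cA}=\pr\brk{\G^*(n,m,P,\hat\SIGMA_{n,m})\in\cA}$.

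So the task reduces to comparing $\G^*(n,m,P,\hat\SIGMA_{n,m})$ with $\G^*(n,m,P,\SIGMA^*)$. First I would write
	\begin{align*}
	\pr\brk{\G^*(n,m,P,\hat\SIGMA_{n,m})\in\cA}&=\sum_{\sigma\in\Omega^{V_n}}\pr\brk{\hat\SIGMA_{n,m}=\sigma}\pr\brk{\G^*(n,m,P,\sigma)\in\cA}\\
		&=\sum_{\sigma\in\Omega^{V_n}}\frac{\Erw[\psi_{\G(n,m,P)}(\sigma)]}{\Erw[Z(\G(n,m,P))]}\pr\brk{\G^*(n,m,P,\sigma)\in\cA}
		\leq\frac{\xi^m}{\Erw[Z(\G(n,m,P))]}\sum_{\sigma\in\Omega^{V_n}}\pr\brk{\G^*(n,m,P,\sigma)\in\cA},
	\end{align*}
where in the last step I used {\bf SYM} and {\bf BAL}, which by \Lem~\ref{Cor_F} give $\Erw[\psi_{\G(n,m,P)}(\sigma)]=\phi(\rho_\sigma)^m\leq\xi^m$. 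On the other hand, since $\SIGMA^*$ is uniform on $\Omega^{V_n}$,
	$$\pr\brk{\G^*(n,m,P,\SIGMA^*)\in\cA}=q^{-n}\sum_{\sigma\in\Omega^{V_n}}\pr\brk{\G^*(n,m,P,\sigma)\in\cA}.$$
Combining these two displays,
	$$\pr\brk{\hat\G(n,m,P)\in\cA}\leq\frac{q^n\xi^m}{\Erw[Z(\G(n,m,P))]}\pr\brk{\G^*(n,m,P,\SIGMA^*)\in\cA}.$$
Finally, \Lem~\ref{Cor_F} supplies a constant $c=c(D)>0$ with $\Erw[Z(\G(n,m,P))]\geq cq^n\xi^m$ uniformly for all $m\leq Dn/k$, so the prefactor $q^n\xi^m/\Erw[Z(\G(n,m,P))]$ is bounded by $c^{-1}$. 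Hence $\pr\brk{\hat\G(n,m,P)\in\cA}\leq c^{-1}\pr\brk{\G^*(n,m,P,\SIGMA^*)\in\cA}\leq\exp(-\Omega(n))$, as claimed, and the bound is uniform for $m\leq Dn/k$.

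There is no serious obstacle here; this is essentially a change-of-measure argument whose only content is that the Radon--Nikodym derivative between $\hat\G$ (equivalently $\G^*$ with planted $\hat\SIGMA_{n,m}$) and $\G^*$ with uniform ground truth is bounded by an $n$-independent constant, which is exactly what {\bf SYM}, {\bf BAL} and the first moment computation of \Lem~\ref{Cor_F} deliver. The one point requiring a little care is ensuring uniformity in $m$: this is why I invoke the versions of \Lem~\ref{Cor_F} that hold uniformly for all $m\leq Dn/k$ rather than a fixed sequence. (The statement of \Lem~\ref{Lemma_502} is phrased for a fixed $D$ and $m\leq Dn/k$, matching exactly the range in which \Lem~\ref{Cor_F} provides the uniform lower bound $\Erw[Z(\G(n,m,P))]\geq cq^n\xi^m$.)
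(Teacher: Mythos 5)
Your proof is correct and is exactly the paper's argument: the paper's one-line proof of \Lem~\ref{Lemma_502} simply cites the Nishimori identity together with the calculation displayed in (\ref{eqCor_strCntg666}), and the change-of-measure chain you write out (Nishimori, then $\pr[\hat\SIGMA_{n,m}=\sigma]\leq\xi^m/\Erw[Z(\G(n,m))]$, then $\Erw[Z(\G(n,m))]\geq cq^n\xi^m$ from \Lem~\ref{Cor_F}) is precisely that display. Nothing to add.
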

\begin{proof}
This is immediate from the Nishimori identity \Lem~\ref{lem:nishimori} and (\ref{eqCor_strCntg666}).
\end{proof}

\begin{proof}[Proof of \Thm~\ref{Thm_cond}, part 2.]
Suppose that $d>d_{\mathrm{cond}}$.
Then there exist $d'<d$ and $\delta>0$ such that
	$$\sup_{\pi\in\cP_*^2(\Omega)}\cB(d',P,\pi)>\ln q+\frac{d'}k\ln\xi+\delta.$$
Let $\vm'=\vm_{d'}(n)$ be a $\Po(d'n/k)$-variable and consider the event $\cF=\{n^{-1}\ln Z\leq\ln q+\frac{d'}k\ln\xi+\delta/2\}$.
Then Markov's inequality and \Lem~\ref{Cor_F} yield
	\begin{align}\label{eqThm_G_50}
	\pr\brk{\G(n,\vec m',P)\in\cF}&\leq  o(1)+\sum_{m:|m-d'n/k|\leq n^{2/3}}
		\frac{\pr[\Po(d'n/k)=m]\Erw[Z(\G(n,m,P))]}{q^n\xi^{d'n/k}\exp(\delta n)}=o(1).
	\end{align}
On the other hand, \Lem~\ref{Lemma_501} shows that for large enough $C>0$,
	\begin{align}\label{eqThm_G_501}
		\pr\brk{\G^*(n,\vec m',P,\SIGMA^*)\in\cF,\,\cO(\G^*(n,\vec m',P,\SIGMA^*))\leq Cn}\leq\exp(-\Omega(n)).
	\end{align}
Now, for a factor graph $G$ obtain $G'$ by removing each constraint node with probability $1-d'/d$ independently.
Moreover, let $\cG$ be the set of all factor graphs $G$ such that $\pr[G'\in\cF]\geq1/2$, where, of course, the probability is over the removal process only.
Since the distribution of $\G(n,\vec m,P)'$ is identical to that of $\G(n,\vec m',P)$, (\ref{eqThm_G_50}) yields
	\begin{align}\label{eqThm_G_51}
	\pr\brk{\G(n,\vec m,P)'\in\cG}&=1-o(1).
	\end{align}
Similarly,  $\G^*(n,\vec m,p,\SIGMA^*)'$ and $\G^*(n,\vec m',p,\SIGMA^*)$ are identically distributed. Thus, (\ref{eqThm_G_501}) and
 \Lem~\ref{Lemma_O} imply that
	\begin{align}\label{eqThm_G_52}
	\pr\brk{\G^*(n,\vec m,P,\SIGMA^*)\in\cG,\,\cO(\G^*(n,\vec m,P,\SIGMA^*))\leq Cn}&\leq\exp(-\Omega(n)).
	\end{align}
Furthermore, (\ref{eqThm_G_52}) and \Lem~\ref{Lemma_502} yield $\chi>0$ such that
	\begin{align}\label{eqThm_G_53}
	\pr\brk{\hat\G(n,\vec m,P)\in\cG,\,\cO(\hat\G(n,\vec m,P))\leq Cn}&\leq\exp(-2\chi n).
	\end{align}

To complete the proof, assume for contradiction that 
$\limsup_{n\to\infty}n^{-1}\Erw[\ln Z(\G(n,\vm,P))]\geq\ln q+\frac dk\ln\xi$.
Then $n^{-1}\Erw[\ln Z(\G(n,\vm,P))]\geq\ln q+\frac dk\ln\xi+o(1)$ for arbitrarily large $n$.
Thus, we can apply \Lem~\ref{Lemma_Azuma} to conclude that for infinitely many $n$,
	\begin{align}\label{eqThm_G_2}
	\pr\brk{n^{-1}\ln Z(\G(n,\vm,P))<\ln q+\frac dk\ln\xi-\chi\big|\cO(\G(n,\vm,P))\leq Cn}&\leq\exp(-\Omega(n)).
	\end{align}
Combining (\ref{eqThm_G_2}) with \Lem~\ref{Lemma_O}, we see that the event
	$\cA=\{n^{-1}\ln Z<\ln q+\frac dk\ln\xi-\chi,\, \cO\leq Cn\}$
satisfies $\pr\brk{\G(n,\vm,P)\in\cA}=1-o(1)$ for arbitrarily large $n$.
But then
	\begin{align*}
	1-o(1)&=\pr\brk{\G(n,\vm,P)\in\cA\cap\cG}&&\mbox{[by  (\ref{eqThm_G_51})]}\\
	&\leq o(1)+\sum_{m:|m-dn/k|\leq n^{2/3}}\frac{\exp(\chi n+o(n))}{q^n\xi^{dn/k}}\Erw\brk{\vecone\{\G(n,m,P)\in\cA\cap\cG\}Z(\G(n,m,P)}
				&&\mbox{[by the definition of $\cA$]}\\
	&\leq o(1)+\exp(\chi n+o(1))\pr\brk{\hat\G(n,m,P)\in\cG,\,\cO(\hat\G(n,\vec m,P))\leq Cn}&&
			\mbox{[due to (\ref{eqnotosmm3}) and (\ref{eq:NishimoriG})]}\\
	&=o(1)&&\mbox{[because of (\ref{eqThm_G_53})]},
	\end{align*}
a contradiction that refutes the assumption $\limsup_{n\to\infty}n^{-1}\Erw[\ln Z(\G(n,\vm,P))]\geq\ln q+\frac dk\ln\xi$.
\end{proof}

\section{Reconstruction}\label{sec:thrm:TreeGraphEquivalence}

\noindent
Throughout this section, when there is no danger of confusion we abbreviate $\T (d, P )$ to $\T$ and
 $\T^{h} (d, P )$ to $\T^{h} $. For a rooted factor tree $T$ and any vertex $x$ in that tree, let 
$ \partial_{desc} x$ denote the children of $x$.  Also, for any factor graph $G$, any variable node 
$v$ in this graph  and any integer $\ell\geq 0$, we let $S(v, \ell)$ denote the set of  variable nodes at distance $2\ell$ from  $v$.

Given some graph $G=(V, E)$, any $M \subset V$ and an assignment  $\sigma\in \Omega^V$
let $\sigma(M)$, or $\sigma_M$ denote the assignment that $\sigma$ specifies for the set $M$
Furthermore,  let $\nu, \nu'$ be two distribution on the configuration space 
$\Omega^V$. For any $M \subset V$ we let
\[
||\nu-\nu' ||_{M}
\]
denote the total variation distance between the projections of $\nu$ and $\nu'$ on $M$.
Also, for some $\sigma\in \Omega^{V}$ we let $\nu^{\sigma_{M}}$ denote the distribution
$\nu$ conditional on that $M$ has assignment $\sigma(M)$.

For the factor tree $T$ we define the  {\em broadcasting process} which generates an assignment   
$\mathbold{\sigma}\in \Omega^V_T$ as follows:
There is  some initial  distribution $\zeta\in \cP(\Omega)$. We set $\mathbold{\sigma}(r)$ according to the distribution 
$\zeta$. Then, inductively, 
assume that we have $\mathbold{\sigma}(x)$ for some variable node  $x$.  For each  $\alpha \in \partial_{desc} x$,
independently,   the variables nodes in  $\partial \alpha$  are assigned $\tau\in \Omega^k$  with probability 
proportional to 
\begin{equation}\label{def:BroadCaseProb}
   \vecone\{\tau(j_{\alpha, x})=\mathbold{\sigma}(x) \} \psi_{\alpha}(\tau)
\end{equation}
where $\psi_{\alpha}$ is the weight function  that corresponds to $\alpha$ and  $j_{\alpha, x}$ is the position of  $x$ inside the constraint  $\psi_{\alpha}$.

\begin{lemma}\label{lemma:BroadCastingDistr}
Consider some factor tree $T$ of height $h>0$, rooted at (variable) node $r$.
Let $\mathbold{\sigma}\in \Omega^T$ be the assignment  generated by the
broadcasting process  such that the initial distribution is the uniform over $\Omega$. 

For any $\tau\in \Omega^T$, it holds that
\begin{equation}\nonumber
\Pr[\mathbold{\sigma}=\tau]=  \mu_T(\tau),
\end{equation}
where $\mu_T$ is the  Gibbs distribution specified by $T$.
\end{lemma}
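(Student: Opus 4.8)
\textbf{Proof plan for Lemma~\ref{lemma:BroadCastingDistr}.}
The plan is to prove the identity $\Pr[\SIGMA=\tau]=\mu_T(\tau)$ by induction on the height $h$ of the rooted factor tree $T$, exploiting the recursive structure of both the broadcasting process and the Gibbs measure. First I would record the base case $h=0$: then $T$ consists of the single variable node $r$, there are no constraint nodes, the Gibbs weight $\psi_T(\tau)$ is the empty product $1$ so $\mu_T$ is uniform on $\Omega$, and the broadcasting process sets $\SIGMA(r)$ according to the uniform initial distribution $\zeta$, so the two distributions coincide.

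For the inductive step, suppose the claim holds for all factor trees of height less than $h$, and let $T$ have height $h\ge1$ with root $r$. I would decompose $T$ at the root: the children of $r$ are constraint nodes $\alpha_1,\dots,\alpha_s$ (the set $\partial_{desc}r$), each $\alpha_i$ carrying a weight function $\psi_{\alpha_i}$ and neighborhood $\partial\alpha_i$ consisting of $r$ together with $k-1$ further variable nodes, each of which is the root of a subtree $T_{\alpha_i,v}$ of height at most $h-1$. The key point is that, given the value $\SIGMA(r)=\rho$, the broadcasting process independently assigns each $\partial\alpha_i$ according to \eqref{def:BroadCaseProb} and then runs the broadcasting process independently on each subtree $T_{\alpha_i,v}$ with initial distribution concentrated on the value just assigned to $v$; meanwhile, for fixed boundary value at $v$, the restriction of the broadcasting process to $T_{\alpha_i,v}$ is exactly the broadcasting process on that subtree started from that value. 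I would phrase a slightly more general inductive hypothesis to accommodate this: for any factor tree and \emph{any} initial distribution $\zeta$ at the root, the broadcasting law equals the Gibbs measure tilted so that the root marginal is $\zeta$ — or, more cleanly, the broadcasting law with a point-mass initial distribution $\delta_\rho$ equals $\mu_{T}(\,\cdot\mid\SIGMA(r)=\rho)$. This generalized statement is what makes the recursion on subtrees go through, since the subtrees are entered with point-mass initial conditions.

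With this in hand, I would write out $\Pr[\SIGMA=\tau]$ as the product over $i\in[s]$ of the conditional probabilities of the broadcasting assignments on the stars rooted at $\alpha_i$, times $\zeta(\tau(r))$. Using the normalization in \eqref{def:BroadCaseProb}, the probability that $\partial\alpha_i$ receives the values prescribed by $\tau$ given $\SIGMA(r)=\tau(r)$ is $\psi_{\alpha_i}(\tau(\partial\alpha_i))$ divided by $\sum_{\omega\in\Omega^{k}}\vecone\{\omega(j_{\alpha_i,r})=\tau(r)\}\psi_{\alpha_i}(\omega)$, and then one multiplies by the subtree contributions, which by the inductive hypothesis are $\mu_{T_{\alpha_i,v}}(\,\cdot\mid\text{boundary})$. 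Multiplying everything together, the numerators assemble into $\psi_T(\tau)=\prod_{a\in F(T)}\psi_a(\tau(\partial a))$, and the task reduces to checking that the product of all the denominators, times $1/\zeta(\tau(r))$ summed appropriately, equals the partition function $Z(T)$; equivalently, that $\sum_{\tau\in\Omega^{V_T}}\psi_T(\tau)$ factorizes over the root-stars and subtrees in exactly the matching way. This is the computational heart, and it follows by the same root-decomposition of $Z(T)$: $Z(T)=\sum_{\rho\in\Omega}\prod_{i=1}^s\big(\sum_{\omega:\omega(j_{\alpha_i,r})=\rho}\psi_{\alpha_i}(\omega)\prod_{v\in\partial\alpha_i\setminus\{r\}}Z(T_{\alpha_i,v}\mid v\mapsto\omega(v))\big)$, where I am using that distinct subtrees hang off distinct variable nodes and hence their partition functions multiply.

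\textbf{Main obstacle.} The only genuine subtlety — and the step I expect to need the most care — is bookkeeping the conditioning correctly when passing from the root to the subtrees: one must verify that conditioning the Gibbs measure $\mu_T$ on $\SIGMA(r)=\rho$ really does factorize into independent tilted Gibbs measures on the subtrees glued along $\alpha_i$, which uses crucially that $T$ is a tree so that removing $r$ disconnects it into the stars $\partial\alpha_i$ and, after further removing the $\alpha_i$, into the subtrees $T_{\alpha_i,v}$ that share no variable nodes. Once this structural decomposition of both $\mu_T$ (Gibbs side) and the broadcasting process (probabilistic side) is pinned down and seen to be the \emph{same} decomposition, the induction closes and the normalizing constants match automatically because both sides are probability distributions on the finite set $\Omega^{V_T}$ with proportional unnormalized weights $\psi_T(\tau)$ (after choosing the uniform $\zeta$, which cancels the $q^{-1}$ against nothing and is absorbed into $Z(T)$).
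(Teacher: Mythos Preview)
Your proposal is correct and takes essentially the same route as the paper: both match the tree-Markov structure of $\mu_T$ against the broadcasting recursion, with condition {\bf SYM} implicitly doing the work of making the relevant marginals uniform (this is precisely what makes your ``normalizing constants match automatically''). The paper's version is simply more direct: rather than inducting on height and tracking partition functions, it verifies in one stroke that a Gibbs sample $\vec\eta\sim\mu_T$ has uniform root marginal and that, at every variable node $x$ and each child constraint $\alpha\in\partial_{desc}x$, the conditional law of $\vec\eta(\partial\alpha)$ given $\vec\eta(x)$ is exactly \eqref{def:BroadCaseProb}, whence the two processes coincide by definition.
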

\begin{proof}
Let $\mathbold{\eta}$ be distributed as in $\mu_T$. Then, we have that $\mathbold{\eta}(r)$ is distributed uniformly at random in $\Omega$.

Furthermore,  let $x\in T$ be a variable node.  Given $\mathbold{\eta}(x)$  for each 
$\alpha \in \partial_{desc} x$ the assignment  $\mathbold{\eta}(\partial \alpha)$ is independent
of the other vertices in $\partial_{desc}x$. Furthermore, for each assignment  $\tau\in \Omega^k$ 
we have $\mathbold{\eta}(\partial \alpha)=\tau$ with probability 
proportional to 
\begin{equation}\nonumber
   \vecone\{\tau(j_{\alpha, x})=\mathbold{\eta}(x) \} \psi_{\alpha}(\tau).
\end{equation}
The lemma follows by using the definition of the broadcasting process.
\end{proof}

\noindent
Consider  a sequence of factor trees $\mathcal{T}=\{T_{\ell} \}_{\ell\geq 0}$, where  
$T_h$ contains  $h$ levels of variable nodes.  Let
\begin{align*}
	\mathrm{corr}_{\mathcal{ T}}=  \lim_{\ell\to\infty} 
	\sum_{\tau\in \Omega^{S(r, 2\ell)}} \mu_{T_\ell}(\tau) \ 	||\mu^{\tau}_{T_{\ell}}-\mu ||_{\{r\}},
\end{align*}
recall that $S(r, 2\ell)$ is the set of variable nodes at distance $2\ell$ from the root $r$.
Similarly,  we define 
\begin{align*}
	\mathrm{broad}_{\mathcal{ T}}=  \lim_{\ell\to\infty}  \max_{c,c'\in \Omega^{\{r\}}}
	||\mu^c_{T_{\ell}}- \mu^{c'}_{T_{\ell}} ||_{S(r, 2\ell)}.
\end{align*}

\noindent
We study the reconstruction problem on the sequence of  factor tree $\mathcal{T}$  by means of the
 broadcasting processes  and the  quantity $\mathrm{broad}_{\mathcal{T}}$.
To be more specific,  for each $T_{\ell }\in \mathcal{T}$, rooted at $r_{\ell}$, consider two  broadcasting processes
 with some initial distribution $\zeta$ and  let $\mathbold{\sigma}_{\ell}$  and $\mathbold{\tau}_{\ell}$ be the 
 assignment s that are  generated, respectively.
 Then, the quantity $\mathrm{broad}_{\mathcal{ T}}$ expresses the $\ell_1$-distance between the distributions of the
 configurations $\mathbold{\sigma}_{\ell}(S(r_{\ell}, \ell ))$ and $\mathbold{\tau}_{\ell}(S(r_{\ell}, \ell))$, as $\ell\to \infty$, 
 conditional that $\mathbold{\sigma}_{\ell}(r_{\ell})=c$,  $\mathbold{\tau}_{\ell}(r_{\ell})=c'$,  for worst-case pair $c, c'\in \Omega$.
The following result implies that for studying reconstruction  on $\mathcal{T}$ we
can either consider  $\mathrm{broad}_{\mathcal{ T}}$,  or  $\mathrm{corr}_{\mathcal{ T}}$.

\begin{lemma}\label{lemma:CorrVsBroad}
Let $\mathcal{T}=\{T_{\ell} \}_{\ell\geq 0}$ be a sequence of factor trees,
where   $T_\ell$ contains  $\ell$ levels of variable nodes. 
Then we have that
$\mathrm{broad}_{\mathcal{ T}}=0$ if and only if $\mathrm{corr}_{\mathcal{ T}}=0$.
\end{lemma}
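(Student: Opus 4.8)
The plan is to prove the equivalence $\mathrm{broad}_{\mathcal T}=0 \iff \mathrm{corr}_{\mathcal T}=0$ by exhibiting a two-sided comparison between the quantities $\mathrm{broad}_{T_\ell}$ and $\mathrm{corr}_{T_\ell}$ at each finite level $\ell$, and then passing to the limit. The key conceptual point, made available by \Lem~\ref{lemma:BroadCastingDistr}, is that the Gibbs measure $\mu_{T_\ell}$ on a factor tree with uniform root marginal is exactly the law of the broadcasting process started from the uniform distribution on $\Omega$. Consequently both $\mathrm{corr}$ and $\mathrm{broad}$ are statements about the joint law of the root spin $\mathbold\sigma(r)$ and the boundary configuration $\mathbold\sigma(S(r,2\ell))$ under this single broadcasting process: $\mathrm{corr}_{T_\ell}$ measures, averaging over the boundary, how far the posterior of the root given the boundary is from uniform, while $\mathrm{broad}_{T_\ell}$ measures the worst-case distance between the boundary laws obtained from two different pinned root values.

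First I would record the elementary fact that, because {\bf SYM} forces the root marginal of $\mu_{T_\ell}$ to be uniform, we have for every boundary configuration $\tau$
\begin{align*}
\mu^{\tau}_{T_\ell}(\{r\}=c) - \frac1q &= \frac1q\Bigl(\mu^{c}_{T_\ell}(S(r,2\ell)=\tau)\big/\mu_{T_\ell}(S(r,2\ell)=\tau) - 1\Bigr),
\end{align*}
by Bayes' rule, where $\mu^c_{T_\ell}$ denotes the broadcast law with root pinned to $c$. Summing $\mu_{T_\ell}(\tau)$ times the absolute value of the left-hand side over $\tau$ and over $c$ gives, up to the factor $1/q$, exactly $\sum_c\,\|\mu^c_{T_\ell}-\mu_{T_\ell}\|_{S(r,2\ell)}$, so that $\mathrm{corr}_{T_\ell}$ and $\sum_{c}\|\mu^{c}_{T_\ell}-\mu_{T_\ell}\|_{S(r,2\ell)}$ are each bounded by a constant (depending only on $q$) times the other. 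Then, since $\mu_{T_\ell} = \frac1q\sum_{c'}\mu^{c'}_{T_\ell}$ on the boundary, the triangle inequality gives $\|\mu^c_{T_\ell}-\mu_{T_\ell}\|_{S(r,2\ell)} \le \max_{c'}\|\mu^c_{T_\ell}-\mu^{c'}_{T_\ell}\|_{S(r,2\ell)} = \mathrm{broad}_{T_\ell}$ and conversely $\mathrm{broad}_{T_\ell} \le 2\max_c\|\mu^c_{T_\ell}-\mu_{T_\ell}\|_{S(r,2\ell)}$. Chaining these inequalities yields constants $c_1,c_2>0$ depending only on $q$ with $c_1\,\mathrm{broad}_{T_\ell} \le \mathrm{corr}_{T_\ell} \le c_2\,\mathrm{broad}_{T_\ell}$.

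Next I would take $\ell\to\infty$. Both limits exist by monotonicity: $\mathrm{broad}_{T_\ell}$ is nonincreasing in $\ell$ because the boundary $S(r,2\ell)$ is a function of a coarser $\sigma$-algebra than $S(r,2(\ell+1))$ under the broadcasting coupling (data-processing for total variation), and $\mathrm{corr}_{T_\ell}$ is nonincreasing by the analogous averaging/tower argument used already in \Sec~\ref{Sec_results} for the graph quantity $\mathrm{corr}(d)$. Passing to the limit in the two-sided bound immediately gives $c_1\,\mathrm{broad}_{\mathcal T} \le \mathrm{corr}_{\mathcal T} \le c_2\,\mathrm{broad}_{\mathcal T}$, which in particular yields $\mathrm{broad}_{\mathcal T}=0 \iff \mathrm{corr}_{\mathcal T}=0$.

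The step I expect to be the only genuine (if minor) obstacle is the bookkeeping in the Bayes'-rule identity above: one must be careful that the conditioning event $\{S(r,2\ell)=\tau\}$ has positive probability (true since all weight functions take values in $(0,2)$, hence $>0$, so every configuration has positive Gibbs weight), and that the normalizations in the definitions of $\mathrm{corr}_{\mathcal T}$ and $\mathrm{broad}_{\mathcal T}$ match what \Lem~\ref{lemma:BroadCastingDistr} delivers — in particular that the ``worst-case pair $c,c'$'' in $\mathrm{broad}$ can be compared to the $\mu_{T_\ell}$-average appearing in $\mathrm{corr}$ using only that $\mu_{T_\ell}$ restricted to the boundary is the uniform mixture of the pinned laws. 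Everything else is the standard equivalence between reconstruction (nonvanishing root-boundary correlation) and nonreconstruction in the broadcasting formulation, and no new estimate beyond total-variation triangle inequalities and data processing is needed.
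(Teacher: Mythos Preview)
Your approach is essentially the same as the paper's. Both arguments rest on the Bayes identity $\mu^{\tau}_{T_\ell}(r=c)-1/q=(1/q)\bigl(\mu^{c}_{T_\ell}(S=\tau)/\mu_{T_\ell}(S=\tau)-1\bigr)$ to swap the roles of root and boundary, and then on the triangle inequality to pass between $\max_{c,c'}\|\mu^{c}_{T_\ell}-\mu^{c'}_{T_\ell}\|_{S}$ and the mixture $\mu_{T_\ell}=q^{-1}\sum_{c}\mu^{c}_{T_\ell}$. The paper carries out the two directions separately, obtaining $\mathrm{broad}_{\mathcal T}\le q\,\mathrm{corr}_{\mathcal T}$ and $\mathrm{corr}_{\mathcal T}\le 2\,\mathrm{broad}_{\mathcal T}$; you package the same computation by first observing that $\mathrm{corr}_{T_\ell}$ equals, up to a factor depending only on $q$, the quantity $\sum_{c}\|\mu^{c}_{T_\ell}-\mu_{T_\ell}\|_{S}$, and then comparing that sum with the maximum. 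These are the same proof.

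One remark: your monotonicity paragraph is unnecessary and, as stated, not valid for an \emph{arbitrary} sequence $\{T_\ell\}$ since nothing in the hypothesis says $T_\ell$ embeds in $T_{\ell+1}$, so there is no data-processing relation between the boundaries at different $\ell$. Fortunately you do not need it: once you have $c_1\,\mathrm{broad}_{T_\ell}\le\mathrm{corr}_{T_\ell}\le c_2\,\mathrm{broad}_{T_\ell}$ for every $\ell$, the equivalence of the limits (interpreting $\lim$ as in the paper's definitions) follows immediately without any monotonicity. The paper's proof likewise makes no monotonicity claim.
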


\begin{proof}
For some integer  $\ell>0$,  we have that
\begin{eqnarray}
	||\mu^c_{T_{\ell}}- \mu_{T_{\ell}} ||_{S(r_\ell, \ell)}
&=&\abs{ \sum_{\tau\in \Omega^{S(r_{\ell}, \ell)}}\bck{ \vecone\{ \SIGMA(S(r, \ell))=\tau\}| \SIGMA(r)=c}_{T_\ell } - 
	\bck{ \vecone\{ \SIGMA(S(r_{\ell}, \ell))=\tau\}}_{T_\ell } } \nonumber \\
&=&  q
\sum_{\tau\in \Omega^{S(r_{\ell}, \ell)}}\bck{\vecone\{ \SIGMA(S(r_{\ell}, \ell))=\tau\}} 
\abs{ \bck{ \vecone\{ \SIGMA(r_{\ell})=c \} | \SIGMA(S(r_{\ell}, \ell))=\tau}_{T_\ell } - 
	\bck{ \vecone\{ \SIGMA(r)=c}_{T_\ell } } \nonumber \\	
&=& q
\sum_{\tau\in \Omega^{S(r_{\ell}, \ell)}}\bck{\vecone\{ \SIGMA(S(r_{\ell}, \ell))=\tau\}} 
\abs{ \bck{ \vecone\{ \SIGMA(r_{\ell})=c \} | \SIGMA(S(r_{\ell}, \ell))=\tau}_{T_\ell } - q^{-1}} \nonumber \\		
&\leq &q
\sum_{\tau\in \Omega^{S(r_{\ell}, \ell)}}  \mu_{T_{\ell}}(\tau) \  ||\mu^{\tau}_{T_{\ell}}-\mu_{T_\ell} ||_{\{r_{\ell}\}}.
\end{eqnarray}
Clearly, the above implies that $\mathrm{broad}_{\mathcal{ T}}\leq q \ \mathrm{corr}_{\mathcal{ T}}$.
In turn, we get that if $\mathrm{corr}_{\mathcal{ T}}=0$, then 
 $ \mathrm{broad}_{\mathcal{ T}}=0$, as well.

We work in a similar way for the other direction. That is, 
\begin{eqnarray}
\sum_{\tau\in \Omega^{S(r_{\ell}, \ell)}}  \mu_{T_{\ell}}(\tau) \  ||\mu^{\tau}_{T_{\ell}}-\mu_{T_\ell} ||_{\{r_{\ell}\}} &=&
	\sum_{\tau\in \Omega^{S(r_{\ell}, \ell)}} \bck{\vecone\{\mathbold{\sigma}(S(r_{\ell}, \ell))=\tau\}}_{T_{\ell}}
		\sum_{s\in \Omega} \abs{ \bck{\vecone\{\mathbold{\sigma}(r_\ell)=s\} | \mathbold{\sigma}(S(r_{\ell}, \ell))=\tau}_{T_{\ell }}-q^{-1} }
\nonumber \\
&=& 
	\sum_{\tau\in \Omega^{S(r_\ell, \ell)}}  
		\sum_{s\in \Omega} \abs{ \bck{\vecone\{\mathbold{\sigma}(r_\ell)=s,\   \mathbold{\sigma}(S(r_\ell, \ell))=\tau\}}_{T_{\ell }}-
		\bck{\vecone\{\mathbold{\sigma}(r_\ell)=s\} }_{T_{\ell }}\bck{\vecone\{\mathbold{\sigma}(S(r_\ell, \ell))=\tau\}}_{T_{\ell}} } \nonumber \\
&=& 
		\sum_{s\in \Omega} 
		\bck{\vecone\{\mathbold{\sigma}(r_\ell)=s\}}_{T_{\ell }}
			\sum_{\tau\in \Omega^{S(r_\ell, \ell)}}  \abs{ \bck{\vecone\{\mathbold{\sigma}(S(r_\ell, \ell))=\tau\} | \mathbold{\sigma}(r_\ell)=s}_{T_{\ell }}-\bck{\vecone\{\mathbold{\sigma}(S(r_\ell, \ell))=\tau\}}_{T_{\ell}} } \nonumber \\
&\leq & 
		2 \max_{c,c'\in \Omega^{\{r_\ell\}}} ||\mu^c_{T_{\ell}}- \mu^{c'}_{T_{\ell}} ||_{S(r_\ell, \ell)}.
\nonumber 
\end{eqnarray}
Clearly, the above implies that
$ \mathrm{corr}_{\mathcal{ T}} \leq 2 \ \mathrm{broad}_{\mathcal{ T}}$.
In turn, we get that if  $ \mathrm{broad}_{\mathcal{ T}}=0$, then
$\mathrm{corr}_{\mathcal{ T}}=0$.
\end{proof}

\noindent
In the following result we show  that  that non-reconstruction is  monotone in the expected degree of $\T(d, P)$. 
In particular we show the following result.

\begin{lemma}\label{lemma:ReconMonotone}
For any $d_1, d_2>0$ such that $d_1\geq d_2$,  the following is true: 
If $\mathrm{corr}^\star(d_1)=0$, then  $\mathrm{corr}^\star(d_2)=0$.
\end{lemma}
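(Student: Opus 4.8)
The statement to prove is \Lem~\ref{lemma:ReconMonotone}: monotonicity of non-reconstruction in the expected degree of $\T(d,P)$. The natural route is a coupling argument. First I would invoke \Lem~\ref{lemma:CorrVsBroad}, which reduces the claim to a statement about the broadcasting quantity $\mathrm{broad}_{\mathcal T}$: it suffices to show that if $\mathrm{broad}_{\{\T^\ell(d_1,P)\}_\ell}=0$ then $\mathrm{broad}_{\{\T^\ell(d_2,P)\}_\ell}=0$. The key structural observation is that $\T(d_2,P)$ can be obtained from $\T(d_1,P)$ by \emph{thinning}: since the offspring of each variable node is $\Po(d_i)$ and $d_1\ge d_2$, we may generate $\T(d_1,P)$ first and then retain each constraint node (together with the whole subtree hanging below it) independently with probability $d_2/d_1$, and the resulting tree has exactly the law of $\T(d_2,P)$. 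Crucially, in this thinning the types and weight functions of the surviving constraint nodes are unchanged, so the broadcasting channel along any surviving edge is identical in both trees.

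\textbf{Key steps.} I would make this precise level by level. Fix $\ell$ and couple $\T^\ell(d_1,P)$ and $\T^\ell(d_2,P)$ so that the latter is a subtree of the former obtained by the above thinning. Run the broadcasting process on $\T^\ell(d_1,P)$ from a root spin $c$; by \Lem~\ref{lemma:BroadCastingDistr} and the definition (\ref{def:BroadCaseProb}), the restriction of the broadcast to the surviving subtree $\T^\ell(d_2,P)$ is exactly a broadcast on $\T^\ell(d_2,P)$ from the same root spin (the broadcast is defined edge-locally, so deleting subtrees does not affect the law on what remains). Hence, writing $\mu^c_{T_\ell}$ for the broadcast law conditioned on root spin $c$ and using that $S(r,\ell)$ in $\T^\ell(d_2,P)$ is a subset of $S(r,\ell)$ in $\T^\ell(d_1,P)$, projection onto a subset cannot increase total variation distance:
\begin{align*}
\bigl\|\mu^c_{\T^\ell(d_2,P)}-\mu^{c'}_{\T^\ell(d_2,P)}\bigr\|_{S(r,\ell)}
&\le \bigl\|\mu^c_{\T^\ell(d_1,P)}-\mu^{c'}_{\T^\ell(d_1,P)}\bigr\|_{S(r,\ell)}
\end{align*}
for every pair $c,c'$ and every realization of the thinning. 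Taking the worst case over $c,c'$, then expectations over the two (coupled) random trees, and finally $\ell\to\infty$ yields $\mathrm{broad}_{\{\T^\ell(d_2,P)\}_\ell}\le\mathrm{broad}_{\{\T^\ell(d_1,P)\}_\ell}$. Combined with \Lem~\ref{lemma:CorrVsBroad} in both directions, $\mathrm{corr}^\star(d_1)=0$ forces $\mathrm{broad}_{\{\T^\ell(d_1,P)\}_\ell}=0$, hence $\mathrm{broad}_{\{\T^\ell(d_2,P)\}_\ell}=0$, hence $\mathrm{corr}^\star(d_2)=0$.

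\textbf{Main obstacle.} The one point that needs care is the claim that the restriction of a $\T(d_1,P)$-broadcast to the thinned subtree is \emph{exactly} a $\T(d_2,P)$-broadcast, rather than merely stochastically comparable. This hinges on the fact that the broadcasting kernel (\ref{def:BroadCaseProb}) at a constraint node depends only on the local weight function and the parent's spin, and that deleting a child-subtree $\partial_{desc}$ of a node removes an independent branch without altering the conditional law of the remaining children. I would justify this by the tree-Markov property of the broadcast (each subtree is conditionally independent given the spin at its root, which is precisely what \Lem~\ref{lemma:BroadCastingDistr}'s proof establishes via the Gibbs decomposition). A secondary bookkeeping point is that $\T^\ell(d_2,P)$ might, after thinning, have fewer than $\ell$ full levels along some branches, so $S(r,\ell)$ in the thinned tree should be read as "surviving variable nodes at distance $2\ell$", and monotonicity of TV under projection still applies to this (possibly smaller, random) set. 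Neither issue is deep; the argument is essentially the standard thinning/monotonicity argument for reconstruction on Galton–Watson trees, adapted to the factor-tree broadcast defined here.
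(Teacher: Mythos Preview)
Your proposal is correct and follows essentially the same route as the paper: the paper formalizes the inclusion $\T(d_2,P)\subseteq\T(d_1,P)$ via Poisson thinning, proves a general monotonicity lemma for $\mathrm{broad}$ under tree inclusion (\Lem~\ref{lemma:monotonicity}) using the same restriction-of-broadcast coupling you describe, and then combines this with \Lem~\ref{lemma:CorrVsBroad}. Your TV-projection phrasing and the paper's coupling-chain phrasing are equivalent formulations of the same argument.
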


\noindent
The proof of Lemma \ref{lemma:ReconMonotone} appears in Section \ref{sec:lemma:ReconMonotone}

We proceed by introducing  some further notions.
For a {\em rooted} factor graph $G$,  let ${\tt ISM}(G)$ be the isomorphism class
of rooted factor graphs to which $G$ belongs. 
Let   $\T_{\G, \ell}(v)$ be the  induced subgraph of $\G$ which includes $v$ and all variable nodes  
which are within graph distance $2\ell$ from $v$.
For $h=o(\log n)$,  $\T_{G, h}(v)$ is a tree with probability $1-o(1)$.  
In particular, there is a coupling $\rho$ of the distribution induced by
$\T_{G, h }(v)$ and $\T^{h}$ such that the following is true:
 \begin{equation}\label{eq:FactorGWTreeVsLocalBall}
\lim_{n\to \infty} \Erw_{\rho} \left[ \vecone \{ { \tt ISM}(\T_{\G, h}(v)) \neq  {\tt ISM}( \T^{h}) \} \right ]=0 \qquad
\textrm{and}\qquad
\lim_{n\to \infty} \Erw_{\rho} \left[ \vecone \{ { \tt ISM}(\T_{\G, h}(v)) \neq  {\tt ISM}( \T^{h}) \} \ |\  \fS\right ]=0. 
 \end{equation}

\noindent
For what follows, we let the event  $\mathcal{I}(v,h)=\{ \vecone \{ { \tt ISM}(\T_{\G, h}(v)) = {\tt ISM}( \T^{h}) \}$.

\begin{lemma}\label{lemma:Planting4GoodBoundary}
Let   $h = o(\log n)$. 
Consider  $(\G^*, \mathbold{\sigma}^*)$ generated according to Teacher-Student model
and some vertex $v$.
Also, consider the pair $(\T^h, \mathbold{\tau})$ such that  $\mathbold{\tau}$ is generated by a broadcasting process 
for which we assign the root $r$ the configuration ${\mathbold{\sigma}}(v)$ with probability 1.

There is a  coupling $\tilde{\lambda}$ between  $(\G^*, \mathbold{\sigma}^*)$ and $(\T^h, \mathbold{\tau})$  
such that   the following is true:
\[
\lim_{n\to \infty}
\Erw_{\tilde{\lambda}} \left[ \vecone \{ \mathcal{I}(v,h) \}  
 \sum_{\tau\in \Omega^{\T^h} }
\abs{ 
\pr[\mathbold{\sigma}^*(\T_{\G^*,h}(v))=\tau \ | \ \G^*]-
 < \vecone \{\mathbold{\sigma}=\tau \circ f\} >_{\T^{h}}} 
\right] =0,
\]
where $f$ is an isomorphism between $\T_{\G^*,h}(v)$ and $\T^h$.
The same result holds for $\G^*\in \fS$.
\end{lemma}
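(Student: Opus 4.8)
\textbf{Proof plan for Lemma~\ref{lemma:Planting4GoodBoundary}.} The plan is to combine three ingredients: (i) the local-weak-convergence fact \eqref{eq:FactorGWTreeVsLocalBall}, which already couples the $2h$-neighbourhood of a fixed vertex $v$ in $\G^*$ (as an unlabelled rooted factor graph) with the truncated Galton--Watson tree $\T^h$; (ii) the Nishimori identity (\Lem~\ref{lem:nishimori}), which identifies the joint law of $(\G^*,\SIGMA^*)$ with that of $(\hat\G,\SIGMA_{\hat\G})$ and, crucially, lets us view $\SIGMA^*$ restricted to the neighbourhood of $v$ as a Gibbs sample; and (iii) \Lem~\ref{lemma:BroadCastingDistr}, which says that on a \emph{tree} factor graph the Gibbs measure is exactly reproduced by the broadcasting process started from the Gibbs marginal of the root. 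First I would set up the coupling $\tilde\lambda$ in two stages: use the coupling $\rho$ from \eqref{eq:FactorGWTreeVsLocalBall} to match $\T_{\G^*,h}(v)$ with $\T^h$ as rooted factor graphs (so that on the event $\mathcal I(v,h)$ there is an isomorphism $f$), and then, conditionally on the matched topology and weight functions, couple the spin assignments. On $\mathcal I(v,h)$ the two configuration spaces $\Omega^{\T_{\G^*,h}(v)}$ and $\Omega^{\T^h}$ are identified via $f$, so the inner sum in the statement is just the total variation distance between the conditional law of $\SIGMA^*$ on the ball, given $\G^*$, and the broadcasting law on $\T^h$ pushed through $f^{-1}$.

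The key step is to show that the conditional law of $\SIGMA^*$ on $\T_{\G^*,h}(v)$ given $\G^*$ is, up to an error that vanishes in expectation, the Gibbs measure of the induced subtree $\T_{\G^*,h}(v)$ with a root marginal governed by $\SIGMA^*(v)$. Here I would argue as follows. By the Nishimori identity the pair $(\G^*,\SIGMA^*)$ has the same law as $(\hat\G,\SIGMA_{\hat\G})$, where $\SIGMA_{\hat\G}$ is a genuine Gibbs sample of $\hat\G$. For a Gibbs sample, the conditional law of the spins inside a ball $B$ given everything outside $B$ is the Gibbs measure of $B$ with the induced boundary condition; and since on $\mathcal I(v,h)$ the ball $B=\T_{\G^*,h}(v)$ is a \emph{tree}, the spatial Markov property on trees means the conditional law of the spins in $B$ given only $\SIGMA(v)$ equals the Gibbs measure of $B$ with the single root spin fixed --- the rest of the graph $\G^*\setminus B$ communicates with $B$ only through $v$. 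By \Lem~\ref{lemma:BroadCastingDistr} this is exactly the broadcasting process on $B$ started by fixing the root spin, which is what $\mathbold\tau$ is defined to be. The only subtlety is that \Lem~\ref{lemma:BroadCastingDistr} is stated for the uniform root initialisation, but its proof works verbatim for any fixed root spin (the induction step is unchanged), so the broadcasting law with root spin $c$ equals the Gibbs measure of the tree conditioned on root $=c$; I would record this as a one-line remark.

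It remains to control the error. On the complement of $\mathcal I(v,h)$ the indicator $\vecone\{\mathcal I(v,h)\}$ kills the contribution, and $\pr[\mathcal I(v,h)^c]\to0$ by \eqref{eq:FactorGWTreeVsLocalBall}. On $\mathcal I(v,h)$, the error comes only from the fact that $\SIGMA_{\hat\G}(v)$ is not exactly uniform and, more importantly, that the event $\mathcal I(v,h)$ itself depends (weakly) on the graph beyond the ball --- but once the topology of the ball is \emph{given}, the identity above is exact, so the total-variation discrepancy on $\mathcal I(v,h)$ is genuinely zero and the expectation in the statement is $o(1)$. For the version conditioned on $\fS$ one uses the second half of \eqref{eq:FactorGWTreeVsLocalBall} together with the fact (from \Prop~\ref{prop:FirstCondOverFirst}) that $\pr[\fS]$ is bounded away from $0$, so conditioning on $\fS$ changes probabilities by at most a bounded factor and all the $o(1)$ estimates survive; the Nishimori identity and \Lem~\ref{lemma:BroadCastingDistr} are insensitive to conditioning on $\fS$. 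I expect the main obstacle to be a clean justification of the spatial Markov / tree-factorisation step in the factor-graph setting, i.e.\ arguing rigorously that once $\T_{\G^*,h}(v)$ is a tree, the conditional Gibbs law inside it given the single spin $\SIGMA^*(v)$ factorises as claimed, independently of the (possibly complicated) structure of $\G^*$ outside the ball; this is where one must be careful that removing the ball disconnects it from the rest of the graph except through $v$, which holds precisely because the ball is an \emph{induced} subtree of radius $2h$ and hence its only boundary vertex towards the rest of $\G^*$ is $v$ itself.
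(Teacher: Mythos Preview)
Your proposal contains a genuine error in the spatial Markov step, and this is not a minor technicality but a geometric misconception that invalidates the whole approach.

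You write that ``the rest of the graph $\G^*\setminus B$ communicates with $B$ only through $v$'' and that the ball's ``only boundary vertex towards the rest of $\G^*$ is $v$ itself''. This is backwards: $v$ is the \emph{centre} of the ball $\T_{\G^*,h}(v)$, not its boundary. The ball consists of all nodes within graph distance $2h$ of $v$; its interface with the rest of $\G^*$ is the set of variable nodes at distance exactly $2h$ (the leaves of the induced subtree), each of which may be incident to constraint nodes at distance $2h+1$ lying outside the ball. Every neighbour of $v$, by contrast, is at distance $1$ and hence inside the ball. Consequently the Gibbs marginal of $\mu_{\G^*}$ on the ball does depend on the rest of $\G^*$, through boundary conditions at those leaves---and whether this dependence survives as $h\to\infty$ is precisely what the reconstruction threshold measures. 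If your argument were correct it would show $\mathrm{corr}(d)=0$ for every $d$, contradicting \Lem~\ref{lemma:ReconThrsUpperBound}.

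The paper avoids this trap by not going through the Nishimori identity or the posterior $\mu_{\G^*}$ at all. It works directly with the \emph{forward} generative description of the teacher--student model: $\SIGMA^*$ is uniform, and conditionally on $\SIGMA^*$ the constraint nodes of $\G^*$ are drawn independently from \eqref{eqTeacher}. One then explores the neighbourhood of $v$ breadth-first, revealing at each variable node $x$ (with $\SIGMA^*(x)$ already known) its set of child constraint nodes, their weight functions, and the values $\SIGMA^*(\partial\alpha)$ on the new variable nodes. Because $\SIGMA^*$ is nearly balanced, \eqref{eqTeacher} shows that each such step is within $O(n^{-1/2}\ln n)$ in total variation of: a ${\rm Po}(d)$ number of children, weight functions drawn from $P$, and spins of $\partial\alpha$ drawn proportionally to $\psi_\alpha(\cdot)$ with the $x$-coordinate pinned---which is exactly the broadcasting step \eqref{def:BroadCaseProb}. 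A union bound over the $o(n^{1/10})$ exploration steps (controlled via \eqref{eq:ExpectedSizeOfBall} and Markov) yields the coupling with failure probability $O(n^{-1/11})$. The point is that in the forward model the ball and $\SIGMA^*$ on the ball are revealed \emph{together}, so no boundary condition from outside ever enters; the Gibbs/posterior viewpoint, which you adopted, inevitably brings in that boundary and cannot be closed without an independent correlation-decay input.
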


\noindent
The proof of Lemma \ref{lemma:Planting4GoodBoundary} appears in Section \ref{sec:lemma:Planting4GoodBoundary}.

In light of Lemma \ref{lemma:Planting4GoodBoundary} and \eqref{eq:FactorGWTreeVsLocalBall}
Theorem \ref{thrm:TreeGraphEquivalence*} is immediate.

The above result implies that in the teacher-student model, the distribution of the 
configuration of $\T_{\G^*,h}(v)$ that is specified by  $\mathbold{\sigma}^*$ is asymptotically 
the same as the distribution of the configuration  that is induced by  the broadcasting process 
on $\T_{\G^*,h}(v)$.  We use the above result with Corollary  \ref{Thm_contig} to relate
reconstruction on random factor graph $\G$ and random tree $\T$.

Now we proceed with the proof of Theorem \ref{thrm:TreeGraphEquivalence}.
In the following lemma we provide  the upper-bound  for $\dr$ and $\dr^{\star}$.

\begin{lemma}\label{lemma:ReconThrsUpperBound}
For any  $\epsilon>0$ there exists   $\dc<d<\dc+\epsilon$ such that    $\mathrm{corr}(d)>0$.
Furthermore, for any $d>\dc$ we have   $\mathrm{corr}^\star(d)$.
\end{lemma}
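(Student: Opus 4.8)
The plan is to establish Lemma~\ref{lemma:ReconThrsUpperBound} by exploiting the link, already forged in \Sec~\ref{Sec_outline_cond} and~\Sec~\ref{sec:ProofPreCond}, between the onset of overlap fluctuations and the non-analyticity of the free energy at $\dc$. Concretely, the first assertion will follow by contraposition from \Thm~\ref{Thm_overlap} and a Gerschenfeld--Montanari-type identity, and the second assertion is a diagonal consequence of \Thm~\ref{thrm:TreeGraphEquivalence*} together with \Cor~\ref{Thm_contig}. Let me describe each in turn.

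For the first assertion, suppose towards a contradiction that there is $\eps>0$ such that $\mathrm{corr}(d)=0$ for all $d\in(\dc,\dc+\eps)$. The idea is that non-reconstruction ($\mathrm{corr}(d)=0$) forces the overlap to concentrate on $\bar\rho$: informally, if a random boundary condition at distance $2\ell$ from a typical vertex $y$ has asymptotically no influence on the spin at $y$, then two independent Gibbs samples look locally independent at pairs of vertices, and one can upgrade this to $\Erw\bck{\TV{\rho_{\SIGMA,\TAU}-\bar\rho}}_{\G}=o(1)$. More precisely, I would use the standard spatial-mixing-to-overlap implication (in the spirit of~\cite{GM,montanari2011reconstruction} and Lemma~\ref{Lemma_multiOverlap}): fix two independent samples $\SIGMA,\TAU$; for a uniformly random $y$, condition $\TAU$ on the spins outside the radius-$2\ell$ ball around $y$; then $\mathrm{corr}(d)=0$ gives that the conditional law of $\SIGMA(y)$ (given that boundary) is within $o(1)$ of uniform on average, and averaging over $y$ and using that a second random vertex $z$ lies outside the ball w.h.p.\ yields $\frac1{n^2}\sum_{y,z}\Erw\|\mu_{\G,y,z}-\mu_{\G,y}\otimes\mu_{\G,z}\|_{\mathrm{TV}}=o(1)$, hence via~\eqref{eqVictor}-type equivalence $\Erw\bck{\TV{\rho_{\SIGMA,\TAU}-\bar\rho}}_{\G}=o(1)$, for all $d\in(\dc,\dc+\eps)$. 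But \Thm~\ref{Thm_overlap} asserts precisely that there exists $d\in(\dc,\dc+\eps)$ with $\limsup_n\Erw\bck{\TV{\rho_{\SIGMA,\TAU}-\bar\rho}}_{\G}>0$, a contradiction. Therefore $\mathrm{corr}(d)>0$ for some $d\in(\dc,\dc+\eps)$, which is the first claim.

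For the second assertion, $\mathrm{corr}^\star(d)>0$ for all $d>\dc$, I would argue as follows. By \Lem~\ref{lemma:ReconMonotone}, non-reconstruction on the tree is monotone in $d$, so it suffices to rule out $\mathrm{corr}^\star(d)=0$ for $d$ just above $\dc$. Now \Thm~\ref{thrm:TreeGraphEquivalence*} identifies $\mathrm{corr}^\star(d)=\mathrm{corr}^*(d)$ with the correlation, in the teacher-student model, between $\SIGMA^*(y)$ and the spins $\SIGMA^*$ assigns at distance $\ge 2\ell$. Combined with the quiet-planting equivalence \Cor~\ref{Thm_contig}, which makes $(\G,\SIGMA)$ and $(\G^*,\SIGMA^*)$ mutually contiguous for $d<\dc$, one transfers $\mathrm{corr}$ on $\G$ to $\mathrm{corr}^*$ on $\G^*$ in the regime $d<\dc$; but for $d>\dc$ contiguity fails, and in fact \cite[\Thm~2.6]{CKPZ} / \Thm~\ref{Cor_contig} show $\G$ and $\G^*$ are mutually \emph{non}-contiguous for $d>\dc$. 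The mechanism producing non-contiguity is exactly that the ground truth $\SIGMA^*$ leaves a detectable, extensive imprint, and this imprint is visible already at the level of the local neighbourhood of a random vertex together with the boundary spins — which is what $\mathrm{corr}^*(d)>0$ encodes. So I would show: if $\mathrm{corr}^*(d)=0$ for some $d>\dc$, then (by the tree/planted equivalence \Thm~\ref{thrm:TreeGraphEquivalence*} and the broadcasting description in \Lem s~\ref{lemma:CorrVsBroad},~\ref{lemma:BroadCastingDistr},~\ref{lemma:Planting4GoodBoundary}) the planted spin at a typical vertex is asymptotically independent of everything else under the posterior, which is incompatible with the non-contiguity / detectability statement for $d>\dc$; contradiction, using monotonicity \Lem~\ref{lemma:ReconMonotone} to cover all $d>\dc$.

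The main obstacle I anticipate is the first assertion's reduction from ``$\mathrm{corr}(d)=0$ on an interval'' to ``$\Erw\bck{\TV{\rho_{\SIGMA,\TAU}-\bar\rho}}_{\G}=o(1)$ on that interval'': the point-to-set correlation $\mathrm{corr}(d)$ controls the influence of an \emph{entire} boundary sphere on a single vertex, whereas the overlap is a pairwise vertex-to-vertex quantity, and bridging the two requires care that $\mathrm{corr}(d)=0$ really does imply pairwise asymptotic independence — one must choose the radius $2\ell$ large (after $n\to\infty$) so that the second random vertex $z$ sits in the conditioned boundary, and then control the $o(1)$ errors uniformly. This is the step where one leans hardest on the monotonicity of $\mathrm{corr}$ in $\ell$ (the outer limit in~\eqref{eqGPottscorr} exists by monotonicity) and on Lemma~\ref{Lemma_multiOverlap}; the rest of the proof is bookkeeping with results already established in the paper.
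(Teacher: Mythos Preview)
For the first assertion your approach coincides with the paper's: the paper proves directly the inequality
\[
\limsup_{n\to\infty}\frac{1}{n^2}\sum_{x,y\in V_n}\Erw\TV{\mu_{\G,x,y}-\bar\rho}\leq\mathrm{corr}(d)
\]
via the elementary observation that $\|\mu_G^c-\mu_G\|_{\{y\}}\leq\|\mu_G^c-\mu_G\|_{S(x,\ell)}$ whenever $\mathrm{dist}(x,y)\geq\ell$, and then invokes~\eqref{eqLongRange}. Your contrapositive phrasing is the same argument, and the ``main obstacle'' you flag is precisely this one-line inequality.

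For the second assertion, however, your argument has a genuine gap. You want to deduce $\mathrm{corr}^*(d)>0$ from the non-contiguity of $\G$ and $\G^*$ for $d>\dc$, arguing that ``the imprint is visible already at the level of the local neighbourhood together with the boundary spins''. But non-contiguity is a global statement about the factor-graph distributions and does not by itself force point-to-set correlations in the planted model; you would need an implication of the form ``$\mathrm{corr}^*(d)=0\Rightarrow$ contiguity'', which is nowhere established in the paper and is not at all obvious. The paper instead stays entirely within the planted model: it repeats the same overlap inequality to get
\[
\limsup_{n\to\infty}\frac{1}{n^2}\sum_{x,y\in V_n}\Erw\TV{\mu_{\G^*,x,y}-\bar\rho}\leq\mathrm{corr}^*(d),
\]
and then observes that the left-hand side must be strictly positive for $\dc<d<\dc+1$, since otherwise the derivative computation behind \Cor~\ref{lem:badoverlaps1} would force $\lim_n\frac{1}{n}\Erw[\ln Z(\G^*)]=\ln q+\frac{d}{k}\ln\xi$, contradicting \Thm~\ref{Thm_plantedFreeEnergy} and the definition of $\dc$. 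This yields $\mathrm{corr}^*(d)>0$ on $(\dc,\dc+1)$; \Lem~\ref{lemma:Planting4GoodBoundary} transfers it to $\mathrm{corr}^\star(d)>0$ there, and monotonicity (\Lem~\ref{lemma:ReconMonotone}) extends to all $d>\dc$. So the correct route goes through the free-energy argument on $\G^*$, not through non-contiguity.
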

\begin{proof}
We consider $\mathrm{corr}(d)$. 
For any graph $G$ and two vertices $x, y$ such that $\textrm{dist}(x, y) \geq \ell  $
and any $c \in \Omega^{\{x\}}$, it is easy to see that 
\begin{equation}
||\mu^c_G-\mu_G ||_{\{ y\}}
 \leq  
 ||\mu^c_G-\mu_G ||_{\{ S(x,\ell )\}}
\end{equation}
Furthermore, working as in Lemma \ref{lemma:CorrVsBroad}, we can substitute the r.h.s. of
the above inequality and get 
\begin{equation}
||\mu^c_G-\mu_G ||_{\{ y\}}
\leq  
\sum_{\tau\in \Omega^{S(x, \ell)}}
\mu_G(\tau) \ ||\mu^{\tau}_{G}-\mu_G ||_{x}.
\nonumber
\end{equation}
For any two fixed vertices $x, y$ in  $\G$, we denote by $\mathcal{D}(x, y)$ the event that  $\textrm{dist}(x,y)\geq \ln\ln n$.
Then, for $h=\ln\ln n$ we get that
\begin{eqnarray}
\frac1{n^2}\sum_{x,y\in V_n}\Erw\TV{\mu_{\G,x,y}-\bar\rho} 
&\leq& \frac{q^{-1}}{n^2 }\sum_{x, y\in V_n }\Erw
\sum_{\tau\in \Omega^{S(x, \ell)}}
<\vecone \{ \mathbold{\sigma}(S(x, h)=\tau\} >_{\G}
\sum_{s\in \Omega}\left|  
< \vecone \{\mathbold{\sigma}(x)=s \ | \ \mathbold{\sigma}(S(x, h ))=\tau >_{\G} -
q^{-1} 
\right| \nonumber \\ &&+  \frac{q^{-1}}{n^2 }\sum_{x, y\in V_n }\Erw \ \vecone\{\mathcal{D}^c(x, y) \}. \nonumber 
\end{eqnarray}
Note  that for any two fixed vertices $x, y$ it holds that $\Pr[\mathcal{D}^c(x,y)]\leq n^{-1/2}$.
To see this, let $N_x$ be the number of vertices within distance $\ln\ln n$ from $x$.
Furthermore,  given $N_x$ each vertex  belongs to the $\ln\ln n$ neighborhood of $x$
with probability at most $N_x/n$. Then, noting that  $\Erw[N_x]=o(n^{1/100})$, we get that
\begin{equation}
\Pr[\mathcal{D}^c(x,y)]\leq \Pr[N_{x}>n^{1/3}]+{n^{1/3}}/{n} \ \leq \ n^{-1/2},
\end{equation}
where we use Markov's inequality to bound  $\Pr[N_{x}>n^{1/3}]$.
Combining all the above, we get that for any $d$ it holds that
\begin{equation}
\limsup_{n\to\infty}  \frac1{n^2}\sum_{x,y\in V_n}\Erw\TV{\mu_{\G,x,y}-\bar\rho}  \leq  \textrm{corr}(d).
\end{equation}
We conclude the part for $\textrm{corr}(d)$ by combining the above with
 \eqref{eqLongRange}. Recall that the later states that   for any $\epsilon$ there exists $\dc<d<\dc+\epsilon$
such that the l.h.s. is strictly positive.

Repeating the same arguments as above we get that  for $\dc<d<\dc+1$ it holds that
\begin{equation}
\limsup_{n\to\infty}\frac1{n^2}\sum_{x,y\in V_n}\Erw\TV{\mu_{\G^*,x,y}-\bar\rho} \leq \textrm{corr}^*(d),
\end{equation}
where $\textrm{corr}^*(d)$ is defined in \eqref{def:PlantedCorr}.

Note that the l.h.s is bounded away from zero. To see this note that if it were zero, 
then it would have implied that for $d>\dc$ we get that 
$\lim_{n\to\infty}\frac{1}{n}\Erw[\ln Z(\G^*)]=\frac{\ln\xi}{k}d+\ln q$. 
Clearly, this  is not true, e.g. see Corollary \ref{lem:badoverlaps1} and Theorem \ref{Thm_plantedFreeEnergy}. 
Then we conclude that $\textrm{corr}^*(d)>0$ for 
$\dc<d<\dc+1$. 

Using Lemma \ref{lemma:Planting4GoodBoundary}
we get that $\textrm{corr}^{\star}(d)>0$ for {\em any} $d>\dc$ as well.
To be more specific, 
note that Lemma \ref{lemma:Planting4GoodBoundary} implies the following: 
Let $\dc<d<\dc+1$. Also consider the pair $(\G^*, {\mathbold{\sigma}}^*)$ and $(\T, \mathbold{\tau})$.
For any $h=o(\log n)$,  there is a coupling between $(\T_{\G^*,h}(v), {\mathbold{\sigma}}^*(\T_{\G^*,h}(v)))$
and  $(\T, \mathbold{\tau})$ such that with probability $1-o(1)$ we have
${ \tt ISM}(\T_{\G^*, h}(v)) = {\tt ISM}( \T^{h})$, with some isomorphism $f(\cdot)$.
 Furthermore,  for every $u\in \T_{\G^*, h}(v)$ we have that $\hat{\mathbold{\sigma}}(u)=\mathbold{\tau}(f(u))$.
This coupling implies that $\textrm{corr}^*(d)=\textrm{corr}^{\star}(d)$. That is, 
for  $\dc <d<\dc+1$ we have $\textrm{corr}^{\star}(d)>0$.
Then, using the monotonicity result from Lemma \ref{lemma:ReconMonotone} we get that
  for any $d>\dc$ we  have  $\textrm{corr}^{\star}(d)>0$.
The lemma follows.
\end{proof}

\noindent
In light of Lemma \ref{lemma:ReconThrsUpperBound}, we get the first part of
Theorem \ref{thrm:TreeGraphEquivalence} by using the following result.

\begin{lemma}\label{lemma:CorrVsCorrStar}
 For any $d<\dr^{\star}$ we have  that
$\mathrm{corr}^\star(d)= \mathrm{corr}(d)=0$.
Furthermore, for $\dr^{\star}<d <\dc$ we have that
$  \mathrm{corr}^\star(d), \mathrm{corr}(d) >0$.
\end{lemma}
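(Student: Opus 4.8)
\textbf{Proof proposal for \Lem~\ref{lemma:CorrVsCorrStar}.}

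The plan is to combine the ``tree--graph equivalence'' machinery already in place with the monotonicity of reconstruction on $\T(d,P)$ and the overlap concentration below $\dc$. First I would dispose of the regime $d<\dr^\star$. By \Lem~\ref{lemma:ReconThrsUpperBound} and the monotonicity statement \Lem~\ref{lemma:ReconMonotone}, $\mathrm{corr}^\star(d)=0$ for all $d<\dr^\star$ directly from the definition of $\dr^\star$ as the infimum. To transfer this to $\mathrm{corr}(d)$, I would use \Thm~\ref{thrm:TreeGraphEquivalence*}, which identifies $\mathrm{corr}^\star(d)$ with $\mathrm{corr}^*(d)$, the reconstruction quantity in the teacher-student model $\G^*$, for \emph{all} $d$; then invoke \Cor~\ref{Thm_contig} (quiet planting: $(\G,\SIGMA)$ and $(\G^*,\SIGMA^*)$ are mutually contiguous for $d<\dc$), together with the local-weak-convergence coupling \eqref{eq:FactorGWTreeVsLocalBall} and \Lem~\ref{lemma:Planting4GoodBoundary}, to conclude that $\mathrm{corr}(d)=\mathrm{corr}^*(d)=\mathrm{corr}^\star(d)$ whenever $d<\dc$. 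Since $\dr^\star\le\dc$ (again by \Lem~\ref{lemma:ReconThrsUpperBound} or \Thm~\ref{thrm:TreeGraphEquivalence}), this yields $\mathrm{corr}(d)=0$ for $d<\dr^\star$ as well. One subtlety to be careful about: the boundary $\sigma$-algebra $\nabla_\ell$ for $\G$ is defined via distance $2\ell$ from $y$, while the radius-$h$ neighborhood used in the coupling has $h=o(\log n)$; I would choose $h=h(\ell,n)$ with $\ell$ fixed and $h\to\infty$ after $n\to\infty$, exactly as in the proof of \Lem~\ref{lemma:ReconThrsUpperBound}, so that the boundary seen inside $\T_{\G,h}(y)$ matches the tree boundary after the double limit.

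For the second regime $\dr^\star<d<\dc$, the idea is again to run the chain $\mathrm{corr}(d)=\mathrm{corr}^*(d)=\mathrm{corr}^\star(d)$ backwards: it suffices to show $\mathrm{corr}^\star(d)>0$ for $\dr^\star<d<\dc$, and then transfer positivity to $\mathrm{corr}(d)$ via \Thm~\ref{thrm:TreeGraphEquivalence*} and \Cor~\ref{Thm_contig} as above. Positivity of $\mathrm{corr}^\star$ on $(\dr^\star,\dc)$ would seem to follow from the definition $\dr^\star=\inf\{d>0:\mathrm{corr}^\star(d)>0\}$ \emph{plus} the monotonicity \Lem~\ref{lemma:ReconMonotone}, which says non-reconstruction on the tree is monotone decreasing in $d$; contrapositively, once $\mathrm{corr}^\star(d_0)>0$ for some $d_0$ slightly above $\dr^\star$, it stays positive for all larger $d$, and the infimum definition guarantees such $d_0$ arbitrarily close to $\dr^\star$. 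Thus $\mathrm{corr}^\star(d)>0$ for every $d>\dr^\star$, in particular on $(\dr^\star,\dc)$. The transfer of a \emph{strictly positive} lower bound (rather than merely ``$=0$ vs.\ $>0$'') requires a tiny bit more care than the $\mathrm{corr}=0$ direction: I would phrase it as ``$\mathrm{corr}(d)=0 \iff \mathrm{corr}^\star(d)=0$ for $d<\dc$'', which is exactly what the equivalence theorems give once one observes that the contiguity statements preserve events of the form $\{$point-to-set correlation exceeds $\eta\}$, and $\mathrm{corr}$, $\mathrm{corr}^\star$, $\mathrm{corr}^*$ are all limits of averages of the same bounded functionals.

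Assembling the two regimes gives the lemma. I expect the main obstacle to be bookkeeping the order of limits and the coupling error terms cleanly: one must interleave the limits $n\to\infty$, $h\to\infty$, and $\ell\to\infty$ so that (i) the local neighborhood $\T_{\G,h}(y)$ is a tree w.h.p.\ and isomorphic to $\T^h(d,P)$ (needs $h=o(\log n)$), (ii) the planted-boundary coupling of \Lem~\ref{lemma:Planting4GoodBoundary} has vanishing error, and (iii) the contiguity of \Cor~\ref{Thm_contig} can be applied to the event that the correlation functional is large. None of these steps is individually hard given the results already proved, but making the quantifiers line up — and in particular verifying that the functional $\mathrm{corr}$ on $\G$ genuinely equals its teacher-student counterpart in the limit, rather than just being bounded by it as in the cruder estimate inside \Lem~\ref{lemma:ReconThrsUpperBound} — is where the real work lies. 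A secondary, purely technical point is to confirm that \Lem~\ref{lemma:ReconMonotone}'s monotonicity is stated for $\mathrm{corr}^\star$ (the Galton--Watson tree with Poisson$(d)$ offspring) in the form needed, namely that $\{d:\mathrm{corr}^\star(d)=0\}$ is a down-set, so that the infimum $\dr^\star$ is a genuine threshold with $\mathrm{corr}^\star>0$ strictly above it.
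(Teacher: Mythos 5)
Your proposal is essentially correct and follows the paper's own route: \Lem~\ref{lemma:ReconMonotone} fixes the sign of $\mathrm{corr}^\star$ on either side of $\dr^\star$, and the tree/teacher-student equivalence (\Thm~\ref{thrm:TreeGraphEquivalence*} via \Lem~\ref{lemma:Planting4GoodBoundary}) together with contiguity (\Cor~\ref{Cor_strongCntig}, \Cor~\ref{Thm_contig}) transfers the dichotomy $\{\mathrm{corr}=0\}$ vs.\ $\{\mathrm{corr}>0\}$ from the tree to $\G$ for $d<\dc$. You correctly flag that contiguity preserves only this dichotomy and not equality of the $\limsup$'s, which is exactly why the paper routes the $d<\dr^\star$ direction through the $(\ell,\delta)$-mixing events of \Lem~\ref{lemma:BoundaryGVsTree} and the $d>\dr^\star$ direction through a contradiction argument, rather than asserting $\mathrm{corr}=\mathrm{corr}^*$ as your first paragraph momentarily does.
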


\noindent
The proof of Lemma \ref{lemma:CorrVsCorrStar} appears in Section \ref{sec:lemma:CorrVsCorrStar}.

As far as the the second part of Theorem \ref{thrm:TreeGraphEquivalence} is concerned
essentially it follows as a corollary from all the previous results in this section.
It is elementary to verify that
\begin{equation}\label{eq:CondSVsNoCondSReconA}
\lim_{\ell\to\infty}\limsup_{n\to\infty}
\frac1n\sum_{y\in V_n}\sum_{s\in\Omega}
		\Erw\brk{\bck{\abs{\bck{\vecone\{\SIGMA(y)=s\}\big|\nabla_{\ell}(\G,y)}_{\G}-1/q}}_{\G}|\fS}
		\leq \frac{\textrm{corr}(d)}{\Pr[\fS]}.
\end{equation}
Using Lemma \ref{lemma:PoissonCycles4Uniform} we get that $\pr[\fS]=\Omega(1)$.
Then, using Lemma \ref{lemma:CorrVsCorrStar} we get that  for any $d<\dr^{\star}$ the l.h.s. of \eqref{eq:CondSVsNoCondSReconA}
is equal to zero.
We proceed by showing that for any  $\epsilon>0$  there exists  $\dc<d<\dc+\epsilon$  such that 
\begin{equation}\label{eq:CondSVsNoCondSReconB}
\lim_{\ell\to\infty}\limsup_{n\to\infty}
\frac1n\sum_{y\in V_n}\sum_{s\in\Omega}
		\Erw\brk{\bck{\abs{\bck{\vecone\{\SIGMA(y)=s\}\big|\nabla_{\ell}(\G,y)}_{\G}-1/q}}_{\G}|\fS}>0.
\end{equation}
Using Theorem \ref{Thm_overlap} and standard arguments e.g. 
(e.g., \cite[\Sec~2]{Victor})  there is $\epsilon>0$ such that 
	\begin{align*}
	\lim_{n\to\infty}\frac1{n^2}\sum_{y_1,y_2\in V_n}\Erw \left[ 
	\left .\TV{\mu_{\G,y_1,y_2}- \bar\rho} \right| \fS\right]&>0
	&\mbox{for  $\dc<d<\dc+\epsilon$.}
	\end{align*}
Then  \eqref{eq:CondSVsNoCondSReconB} follows by working as in 
 the proof of  Lemma \ref{lemma:ReconThrsUpperBound}.
Finally, we  show that for  $\dr^{\star}<d<\dc$ we  have 
\begin{equation}\label{eq:CondSVsNoCondSReconC}
\lim_{\ell\to\infty}\limsup_{n\to\infty}
\frac1n\sum_{y\in V_n}\sum_{s\in\Omega}
		\Erw\brk{\bck{\abs{\bck{\vecone\{\SIGMA(y)=s\}\big|\nabla_{\ell}(\G,y)}_{\G}-1/q}}_{\G}\ |\ \fS}>0. 
\end{equation}
For showing the above, we work as in the second case of  Lemma \ref{lemma:CorrVsCorrStar}, i.e. we use 
Lemma \ref{lemma:Planting4GoodBoundary} and the contiguity result in Corollary \ref{Thm_contig}.
More specifically, if there is $\dr^{\star}<d<\dc$ such that the l.h.s. of \eqref{eq:CondSVsNoCondSReconC} is
zero, then Corollary \ref{Thm_contig} would imply that
\begin{equation}
\lim_{\ell\to\infty}\limsup_{n\to\infty} 
\frac1n\sum_{y\in V_n}\sum_{s\in\Omega}
		\Erw
		\brk{ \left .
		\sum_{\tau\in \Omega^{\T_{\G^*,\ell}(y)}} \mu_{G^*}(\tau) \ 
		||\mu^{\tau}_{\G^*}-\mu_{\G^*} ||_{\{y\}}
		\ \right |\ \fS}
		=0,  \nonumber
\end{equation}
recall that  $\mu_{\G^*}$ is the distribution over configurations in $\Omega^{V_n}$ that is
induced by $\mathbold{\sigma}^*$ conditional on $\G^*$.
If the above was true, then   Lemma \ref{lemma:Planting4GoodBoundary}   would imply that $\textrm{corr}^{\star}(d)=0$. Clearly this
is a contradiction due to Lemma \ref{lemma:CorrVsCorrStar}.

The theorem follows.

\subsection{Proof of Lemma \ref{lemma:ReconMonotone}}\label{sec:lemma:ReconMonotone}

\noindent
Consider  two factor trees $T_1$ and $T_2$ with roots $r_1, r_2$, respectively. We say that $T_1, T_2$
satisfy the relation $T_1\subseteq T_2$ if  there is an injective mapping  $f:V(T_1)\cup F(T_1) \to V(T_2)\cup F(T_2)$
such that the  following is true:  for every  $v\in V(T_1)$ we have $\partial_{desc} v\subseteq \partial_{desc} f(v)$,
while  every $\alpha \in F$ such that $\alpha\in \partial_{desc} v \cap  \partial_{desc} f(v)$ is assigned the same
weight function $\psi_{\alpha}$ in both trees and  $v$, $f(v)$ occupy the same position within $\psi_\alpha$.
Furthermore,  for every function node $\alpha\in F(T_1)$ we have $\partial_{desc}  \alpha=\partial_{desc} f(\alpha)$ and
every $w\in \partial_{desc}  \alpha$ occupies in $\psi_{\alpha}$ the same position as $f(w)$ in  $f(\alpha)$.

\begin{lemma}\label{lemma:monotonicity}
Consider two sequences of factor trees $\mathcal{T}_1$ and $\mathcal{T}_2$ such the the following is true:
For  $T^1_{\ell}\in \mathcal{T}_1$  and $T^2_{\ell}\in \mathcal{T}_2$ we have $T^1_{\ell} \subseteq T^2_{\ell}$, for
$\ell=1, 2, \ldots$ Then, we have that 
\[
	\mathrm{broad}_{\mathcal{ T}_1 } \leq 	\mathrm{broad}_{\mathcal{ T}_2}.
\]
\end{lemma}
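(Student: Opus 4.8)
The plan is to prove Lemma~\ref{lemma:monotonicity} by a coupling of two broadcasting processes, one on $T^1_\ell$ and one on $T^2_\ell$, that keeps the spins at the root and along the smaller tree in agreement. Concretely, recall that $\mathrm{broad}_{\mathcal{T}}=\lim_{\ell\to\infty}\max_{c,c'\in\Omega}\|\mu^c_{T_\ell}-\mu^{c'}_{T_\ell}\|_{S(r,\ell)}$, and by Lemma~\ref{lemma:BroadCastingDistr} the measure $\mu^c_{T_\ell}$ conditioned on the root having spin $c$ is exactly the law of the broadcasting process started from the atom $\delta_c$. So fix $c,c'\in\Omega$ and run the broadcasting process on $T^2_\ell$ twice, once from $\delta_c$ and once from $\delta_{c'}$, using the standard monotone (maximal) coupling of the two processes layer by layer. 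Restricting both processes to the embedded subtree $f(T^1_\ell)\subseteq T^2_\ell$ yields two valid broadcasting runs on $T^1_\ell$ from $\delta_c$ and $\delta_{c'}$ respectively, because the broadcasting transition \eqref{def:BroadCaseProb} at a constraint $\alpha$ depends only on $\psi_\alpha$, on which child occupies which coordinate, and on the parent spin — all of which are preserved under the embedding $f$ (this is exactly what $T^1_\ell\subseteq T^2_\ell$ guarantees).

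First I would make the coupling precise: construct, inductively over the levels of $T^2_\ell$, a coupling $\Theta$ of the pair of spin assignments $(\boldsymbol\sigma,\boldsymbol\tau)$ on $V(T^2_\ell)$ generated from $\delta_c$ and $\delta_{c'}$, such that at each variable node $x$ the conditional law of $(\boldsymbol\sigma(\partial_{\mathrm{desc}}\alpha),\boldsymbol\tau(\partial_{\mathrm{desc}}\alpha))$ given the parent spins is the maximal coupling of the two broadcasting kernels, and the descendant subtrees are coupled mutually independently given the parent spins. Second, I would observe that under the projection to $V(T^1_\ell)$ via $f$, the marginal law of $\boldsymbol\sigma\circ f$ is the broadcasting law on $T^1_\ell$ from $\delta_c$ and that of $\boldsymbol\tau\circ f$ is the broadcasting law from $\delta_{c'}$; this again is immediate from the definition of the broadcasting process and the structure-preserving properties of $f$. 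Third, I would invoke the standard fact that total variation distance does not increase under a coupling/projection: since $S(r_1,\ell)\subseteq V(T^1_\ell)$ maps into $S(r_2,\ell)\subseteq V(T^2_\ell)$ under $f$ (distances in $T^1_\ell$ are preserved by the embedding because $f$ maps $\partial_{\mathrm{desc}}$-edges into $\partial_{\mathrm{desc}}$-edges), any coupling of the boundary spins on $T^2_\ell$ induces a coupling of the boundary spins on $T^1_\ell$, hence
\[
\|\mu^c_{T^1_\ell}-\mu^{c'}_{T^1_\ell}\|_{S(r_1,\ell)}\;\le\;\|\mu^c_{T^2_\ell}-\mu^{c'}_{T^2_\ell}\|_{S(r_2,\ell)}.
\]
Taking the maximum over $c,c'\in\Omega$ and then the limit $\ell\to\infty$ gives $\mathrm{broad}_{\mathcal{T}_1}\le\mathrm{broad}_{\mathcal{T}_2}$, as claimed.

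I expect the main technical obstacle to be bookkeeping rather than anything deep: one must check carefully that the embedding $f$ from the definition of $T^1_\ell\subseteq T^2_\ell$ is compatible with the broadcasting dynamics in the precise sense needed — in particular that when a variable node $v$ of $T^1_\ell$ has $\partial_{\mathrm{desc}}v\subsetneq\partial_{\mathrm{desc}}f(v)$, the ``extra'' constraint children in $T^2_\ell$ only add further (conditionally independent) randomness and do not alter the marginal law of the spins on the $T^1_\ell$-part, and that positions of variables inside weight functions match up so that the kernel in \eqref{def:BroadCaseProb} agrees. A secondary point worth stating explicitly is that $S(r_1,\ell)$ indeed embeds into $S(r_2,\ell)$ — i.e. a variable node at tree-distance $2\ell$ from $r_1$ in $T^1_\ell$ is mapped by $f$ to a variable node at tree-distance $2\ell$ from $r_2$ in $T^2_\ell$ — which follows since $f$ sends a root-to-$x$ path in $T^1_\ell$ to a root-to-$f(x)$ path in $T^2_\ell$ of the same length, using that $f$ preserves the parent/child structure and maps the root to the root. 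Once these verifications are in place, the inequality is a one-line consequence of the coupling inequality for total variation distance.
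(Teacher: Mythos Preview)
Your proposal is correct and follows essentially the same approach as the paper: both arguments rest on the observation that the broadcasting process on $T^2_\ell$, restricted to the embedded copy $f(T^1_\ell)$, has exactly the law of the broadcasting process on $T^1_\ell$ (the paper calls this coupling $\zeta$), and then transfer an optimal coupling on the larger boundary $S(r_2,\ell)$ to a coupling on $f(S(r_1,\ell))\subseteq S(r_2,\ell)$. Your phrasing via projection/marginals is a clean repackaging of the paper's ``chain of couplings'' $\SIGMA_1\to\SIGMA_2\to\TAU_2\to\TAU_1$; the layer-by-layer maximal coupling $\Theta$ you introduce is not actually needed for the inequality and can be dropped.
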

\begin{proof}
For some $\ell\geq 0$,  consider $T^1_{\ell}\in \mathcal{T}_1$ and $T^2_{\ell}\in \mathcal{T}_2$.
Since  we  assumed that $T^1_{\ell}\subseteq T^2_{\ell}$, let 
$h:V(T^1_{\ell})\cup F(T^1_{\ell}) \to V(T^2_{\ell})\cup F(T^2_{\ell})$ be the mapping
that verifies that property.

For any two $s, c\in \Omega$ consider $\mathbold{\tau}_1, \mathbold{\sigma}_1$
two configurations generated by the broadcasting process on $T^1_{\ell}$ such that
$\mathbold{\tau}_1=s$ and $\mathbold{\sigma}_1=c$. Similarly, 
let $\mathbold{\tau}_2, \mathbold{\sigma}_2$
two configurations generated by the broadcasting process on $T^2_{\ell}$ such that
$\mathbold{\tau}_2=s$ and $\mathbold{\sigma}_2=c$.
Then it suffices to show the following: For any $\alpha\in [0, 1]$, if there
is a coupling $\xi_2$ for $\mathbold{\sigma}_2, \mathbold{\tau}_2$ such that
the probability that $\mathbold{\sigma}_2(S(r, 2\ell) )\neq \mathbold{\tau}_2(S(r, 2\ell))$
is equal to $\alpha$, then there exists a coupling $\xi_1$ for 
 $\mathbold{\sigma}_1, \mathbold{\tau}_1$ such that
the probability that $\mathbold{\sigma}_1(S(r, 2\ell) )\neq \mathbold{\tau}_1(S(r, 2\ell))$
is at most  $\alpha$.

From the definition of the broadcasting process, we get the following: 
Let $\mathbold{\sigma}_1$ and $\mathbold{\sigma}_2$ be two configurations 
generated by broadcasting process on $T^1_{\ell}$ and $T^2_{\ell}$, respectively,
such that  $\mathbold{\sigma}_1(r)=\mathbold{\sigma}_2(r)=c$, for some $c\in \Omega$.  
Then there is a coupling $\zeta$ for $\mathbold{\sigma}_1$, $\mathbold{\sigma}_2$ such that
for every $v\in V(T^1_{\ell})$, we have that $\mathbold{\sigma}_1(v)=\mathbold{\sigma}_2(h(v))$.

Assume that we have the coupling $\xi_2$ for $\mathbold{\sigma}_2$ and $\mathbold{\tau}_2$.
We combine couplings $\xi_2$ and $\zeta$ to get $\xi_1$. In particular we use the couplings as
follows:  First, we couple  $\mathbold{\sigma}_1$ and $\mathbold{\sigma}_2$ by using $\zeta$. 
Then, we use $\xi_2$ to couple $\mathbold{\sigma}_2$ and $\mathbold{\tau}_2$.
Finally, we use $\zeta$ to couple $\mathbold{\tau}_2$ and $\mathbold{\tau}_1$.

In the above ``chain of couplings", note that we have 
$\mathbold{\sigma}_1(S(r, 2\ell) )\neq \mathbold{\tau}_1(S(r, 2\ell))$ only if
$\mathbold{\sigma}_2(S(r, 2\ell) )\neq \mathbold{\tau}_2(S(r, 2\ell))$. This implies 
that if in $\xi_2$ the probability of the event
 $\mathbold{\sigma}_2(S(r, 2\ell) )\neq \mathbold{\tau}_2(S(r, 2\ell))$ is equal to
 $\alpha$, then in $\xi_1$ the probability of having $\mathbold{\sigma}_1(S(r, 2\ell) )\neq \mathbold{\tau}_1(S(r, 2\ell))$
 is at most  $\alpha$.
The lemma follows.
\end{proof}

\noindent
In light of Lemmas  \ref{lemma:CorrVsBroad}, \ref{lemma:monotonicity}  we get the following corollary.
\begin{corollary}
Consider two sequences of factor trees $\mathcal{T}_1$ and $\mathcal{T}_2$ such that 
for  $T^1_{\ell}\in \mathcal{T}_1$  and $T^2_{\ell}\in \mathcal{T}_2$ we have $T^1_{\ell}
\subseteq T^2_{\ell}$, for $\ell=1, 2, \ldots$, then the following is true: 
If $\mathrm{corr}_{\mathcal{ T}_2}=0$, then $\mathrm{corr}_{\mathcal{ T}_1}=0$.
\end{corollary}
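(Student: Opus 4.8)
The statement is an immediate consequence of the two preceding results in this subsection, namely Lemma~\ref{lemma:CorrVsBroad} (the equivalence $\mathrm{broad}_{\mathcal{T}}=0 \iff \mathrm{corr}_{\mathcal{T}}=0$ for any sequence of factor trees) and Lemma~\ref{lemma:monotonicity} (monotonicity of $\mathrm{broad}$ under the containment relation $\subseteq$). The plan is simply to chain these two facts together, using $\mathrm{broad}$ as the intermediate quantity that behaves well under the partial order on factor trees.

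\textbf{Steps, in order.} First I would assume $\mathrm{corr}_{\mathcal{T}_2}=0$ and apply the ``only if'' direction of Lemma~\ref{lemma:CorrVsBroad} to the sequence $\mathcal{T}_2$, obtaining $\mathrm{broad}_{\mathcal{T}_2}=0$. Second, I would invoke Lemma~\ref{lemma:monotonicity}: the hypothesis of that lemma is exactly the assumption we are given, that $T^1_{\ell}\subseteq T^2_{\ell}$ for every $\ell$, so it yields $\mathrm{broad}_{\mathcal{T}_1}\le\mathrm{broad}_{\mathcal{T}_2}$. Third, combining the last two lines gives $\mathrm{broad}_{\mathcal{T}_1}\le 0$; since $\mathrm{broad}_{\mathcal{T}_1}$ is defined as a limit of total variation distances, which are nonnegative, it must equal $0$. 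Fourth and finally, I would apply the ``if'' direction of Lemma~\ref{lemma:CorrVsBroad} to $\mathcal{T}_1$ to conclude $\mathrm{corr}_{\mathcal{T}_1}=0$, which is the claim.

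\textbf{Main obstacle.} There is essentially no technical difficulty here: all the real work has already been done in Lemmas~\ref{lemma:CorrVsBroad} and~\ref{lemma:monotonicity}, and the corollary is a pure bookkeeping composition. The only point worth a moment's care is verifying that the sequences $\mathcal{T}_1,\mathcal{T}_2$ as given satisfy the (mild) standing hypotheses of those two lemmas — in particular that each is a genuine sequence of rooted factor trees with the right level structure, so that both $\mathrm{broad}$ and $\mathrm{corr}$ are well defined for them — but this is part of the corollary's own hypotheses and requires no argument. In short, the proof is: ``By Lemma~\ref{lemma:CorrVsBroad}, $\mathrm{corr}_{\mathcal{T}_2}=0$ implies $\mathrm{broad}_{\mathcal{T}_2}=0$; by Lemma~\ref{lemma:monotonicity}, $\mathrm{broad}_{\mathcal{T}_1}\le\mathrm{broad}_{\mathcal{T}_2}=0$, hence $\mathrm{broad}_{\mathcal{T}_1}=0$; by Lemma~\ref{lemma:CorrVsBroad} again, $\mathrm{corr}_{\mathcal{T}_1}=0$.''
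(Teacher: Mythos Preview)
Your proposal is correct and matches the paper's approach exactly: the paper simply states that the corollary follows ``in light of Lemmas~\ref{lemma:CorrVsBroad}, \ref{lemma:monotonicity},'' and your chain $\mathrm{corr}_{\mathcal{T}_2}=0 \Rightarrow \mathrm{broad}_{\mathcal{T}_2}=0 \Rightarrow \mathrm{broad}_{\mathcal{T}_1}=0 \Rightarrow \mathrm{corr}_{\mathcal{T}_1}=0$ is precisely the intended composition of those two lemmas.
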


\noindent
The lemma follows by using the above corollary and noting  that for any $d_1,d_2>0$ such that $d_1\geq d_2$ there is 
a standard coupling such that $\T(d_2, P)\subseteq \T(d_1, P)$.

\subsection{Proof of Lemma \ref{lemma:Planting4GoodBoundary}}\label{sec:lemma:Planting4GoodBoundary}

The case where $\G^*\in \fS$ is almost identical to the case where we don't restrict 
$\G^*$. For this reason we omit the proof of the case where $\G^*\in \fS$.

Let  the pairs $(\T_{\G^*, h}(v), \mathbold{\sigma}^*)$ and $(\T^h, \mathbold{\tau})$.
Then,  we define the relation ``$\cong$" such that  $(\T_{\G^*, h}(v), \mathbold{\sigma}^*) \cong (\T^h, \mathbold{\tau})$
if the following holds: $\T_{\G^*, h}(v)$ and $\T^h$ belong to the same isomorphism class of
rooted trees, where $\T_{\G^*, h}(v)$ is rooted at $v$ and $\T^h$ is rooted at $r$.
Furthermore,  if $f$ is an isomorphism between the two trees, then for every $u\in \T_{\G^*, h}(v)$
we have that ${\mathbold{\sigma}}^*(u)=\mathbold{\tau}(f(u))$.
We are going to show a coupling $\tilde{\lambda}$ that has the property that
\begin{equation}\label{eq:Basis:Planting4GoodBoundary}
\tilde{\lambda}\left[  (\T_{\G^*, h}(v), \mathbold{\sigma}^*) =(\T^h, \mathbold{\tau}) \right] \geq1- 2n^{-1/11}.
\end{equation}
For what follows,  we denote $f$ the isomorphism $\T_{\G^*, h}(v)$ and $\T^h$, if such exists.

Before proceeding let us state some, easy to prove results.
Recall that for an assignment $\sigma$ on $n$ vertices we denote by $\mu_\sigma=n^{-1}(|\sigma^{-1}(i)|)_{i\le q}$ its 
empirical marginal distribution.  Furthermore, it is elementary to show that
\begin{equation}\label{eq:Tail4EmpiricalDistr}
\pr\left[\| \mu_{\SIGMA^*}-q^{-1}\vecone\|>(\sqrt{n})^{-1}\ln n \right]\le  O(n^{-\ln\ln n}).
\end{equation}
Let  $\mathbold{m}$ be the number of edges in $\G^*$. Recall that $\mathbold{m}$ is a random
variable which is distributed as in Poisson with parameter  $dn/k$.  
Applying standard Chernoff's bounds for $\mathbold{m}$ we get that
$$
\pr\left[ |\mathbold{m}-dn/k |> n^{2/3} \right]\le  \exp\left( -n^{1/4}\right).
$$
We let $|\T_{\G^*, h}(v)|$ denote the number of vertices in $\T_{\G^*, h}(n)$.  
Note  that for every variable node $x\in \T_{G^*, h}(v)$, the cardinality of
$\partial_{desc} x$ is dominated by the Poisson distribution with parameter $d$.
With this observation we get that
\begin{equation}\label{eq:ExpectedSizeOfBall}
\Erw\left[ |\T_{\G^*, h}(v) | \right]\leq 2((k-1)d)^{h+1}.
\end{equation}

\noindent
The coupling $\tilde{\lambda}$ is as follows: If  ${\mathbold{\sigma}^*}$ is such that 
$\| \mu_{\SIGMA^*}-q^{-1}\vecone\|>(\sqrt{n})^{-1}\ln n$ or $|\mathbold{m}-dn/k|>n^{2/3}$ we don't couple 
$(\T_{\G^*, h}(v), \mathbold{\sigma}^*)$ and $(\T^h, \mathbold{\tau})$ at all.
Otherwise, the coupling $\tilde{\lambda}$ is defined inductively. 

First consider the coupling between ${\mathbold{\sigma}^*}(v)$ and $\mathbold{\tau}(r)$. 
Note that $f(v)=r$. Due to our assumption
about $\mu_{\mathbold{\sigma}^*}$, we can have $\tilde{\lambda}$ such that
\begin{equation}
\tilde{\lambda}(\mathbold{\sigma}^*(v)\neq \mathbold{\tau}(r))=O(  (\sqrt{n})^{-1}\ln n).
\end{equation}
The above follows by using a maximal coupling for choosing 
${\mathbold{\sigma}^*}(v), \mathbold{\tau}(r)$.

The induction step is as follows: Assume that we have exposed  partly  
$(\T_{\G^*, h}(v), \mathbold{\sigma}^*)$ and $(\T^h, \mathbold{\tau})$
and the corresponding parts agree. That is,  let  $(\T_1, \mathbold{\sigma}_1)$
and $(\T_2, \mathbold{\sigma}_2)$ be the two parts
of $(\T_{\G^*, h}(v), \hat{\mathbold{\sigma}})$ and $(\T^h, \mathbold{\tau})$,  
respectively. Our assumption is that  $(\T_1, \mathbold{\sigma}_1)\cong (\T_2, \mathbold{\sigma}_2)$.
W.l.o.g. assume that the leaves of the trees are  variable nodes.

Let $x$ be a leaf in $\T_1$  whose  descendants have not been revealed so far.  The same holds for 
$f(x)$ in $\T_2$.
Let   $\mathbold{m}_x$ be the number of hyper-edges of $G^*$ that have revealed so 
far. Recall that the number of all hyper-edges in $G^*$ is $\mathbold{m}$.
Then, it is an easy calculation to get that for  any $j$ we have 
\begin{equation}\label{eq:ConditionalDegreeOfX}
\pr\left[ |\partial_{desc} x| =j \  |\  \mathbold{m}_x\right]={\mathbold{m}-\mathbold{m}_x \choose j} \left( \frac{k}{n}\right)^{j}
\left (1-\frac{k}{n} \right)^{\mathbold{m}-\mathbold{m}_x-j}.
 \end{equation}

\noindent
If $r$ is the number of edges  of the tree we have revealed up to vertex $x$, then we have the  crude upper bound
 that $\mathbold{m}_x \leq  r$.   We have that
\begin{equation}
\pr\left[\mathbold{m}_x\geq n^{1/3} \right] \leq \pr\left[  |\T_{\G^*, h}(v)| \geq n^{1/3} \right] \leq  n^{-1/3} \Erw\left[ |\T_{\G^*, h}(v)| \right] \leq n^{-1/4},
\end{equation} 
where the third inequality follows from Markov's inequality and the last inequality follows from our
assumption that $h=o(\log n)$.
Combining the two above relations we get the following: For any  $0\leq j \leq\ln^2 n$ it holds that
\begin{eqnarray}
\pr\left[ |\partial_{desc} x| =j \right ]  &=& \pr\left[ |\partial_{desc} x| =j \  |\  \mathbold{m}_x< n^{1/3} \right]\pr[\mathbold{m}_x< n^{1/3}] +
\pr\left[ |\partial_{desc} x| =j \  |\  \mathbold{m}_x \geq n^{1/3}\right]\pr[\mathbold{m}_x\geq  n^{1/3}]  \nonumber \\
&=&  \frac{d^{j}}{j!}e^{-d} +O(n^{-3/4}).\label{eq:ChildrenGStartVsPoisson}
\end{eqnarray}
Similarly we get that  $\pr\left[ |\partial_{desc} x| >\ln^2 n\right ]=o(n^{-10}) $.

Recall that for a vertex $u\in \T^h$ we have that $|\partial_{desc} u|$ is distributed as in 
Poisson with parameter $d$. 
Using this observation we can have $\tilde{\lambda}$ such that
\begin{equation}
\tilde{\lambda}(|\partial_{desc} x| \neq |\partial_{desc} f(x)|)=O(n^{-3/4}\ln^2 n).
\end{equation}
We extend  $f$  by defining a bijection between $\partial_{desc} x$ and $\partial_{desc} f(x)$.  
From the definition of $\G^*$ we get that each $\alpha\in \partial_{desc} v$ chooses a weight 
function $\psi\in \Psi$ from a distribution which is within total variation distance   $O(n^{-1/2} \ln n)$
from $P$. Note that the term $O(n^{-1/2} \ln n)$ comes from the fact that $\mathbold{\sigma}^*$ is not
perfectly balanced, i.e. we allow some fluctuations $O(\sqrt{n}\ln n)$ on the sizes of the color classes.
For $f(\alpha)$ we have that it chooses its weight function $\psi$ with probability $P(\psi)$. 
The above observations imply that we can have $\tilde{\lambda}$ such that
\[
\tilde{\lambda} \left [  \exists \alpha\in \partial_{desc} x \ \textrm{ s.t. } \ \psi_{\alpha}\neq \psi_{f(\alpha)}  \right]  
= O(n^{-1/2}\ln^2 n),
\]
By choosing the same weight function $\psi_\alpha$  for both $\alpha$ and $f(\alpha)$ we imply that the position
of $x$ and $f(x)$ is the same in the two functions.

Finally, for every pair of constraint nodes $\alpha$ and $f(\alpha)$ for which we have chosen the weight function
$\psi_{\alpha}$ we decide on ${\mathbold{\sigma}^*}(y_i)$ and $\mathbold{\tau}(z_i)$, where $y_i\in \partial \alpha\setminus \{x\}$
and $z_i=f(x_i)$.
For each configuration  $\tau\in \Omega^k$  we have $\hat{\mathbold{\sigma}}(\partial \alpha)=\tau$ with probability 
proportional to 
\begin{equation}\nonumber
   \vecone\{\tau(j_{\alpha, x})=\mathbold{\sigma}^*(x) \} \psi_{\alpha}(\tau)+O(n^{-1/2}\ln n),
\end{equation}
where  $j_{\alpha, x}$ is the position of  $x$ inside the constraint  $\psi_{\alpha}$.
Also,  we have $\mathbold{\tau}(\partial f(\alpha))=\tau$ with probability proportional to 
\begin{equation}\nonumber
   \vecone\{\tau(j_{f(\alpha), f(x)})=\mathbold{\sigma}^*(f(x)) \} \psi_{\alpha}(\tau).
\end{equation}
From the above, it is clear that we can have $\tilde{\lambda}$ such that
$$
\tilde{\lambda}({\mathbold{\sigma}}^*(\partial \alpha) \neq \mathbold{\tau}(\partial f(\alpha))) \leq O(|\Omega|^k n^{-1/2}\ln n)
\leq O(n^{-1/2}\ln n).
$$
Let  $(\T'_1, \mathbold{\sigma}'_1)$  and $(\T'_2, \mathbold{\sigma}'_2)$ be the new parts
of of $(\T_{\G^*, h}(v), {\mathbold{\sigma}}^*)$ and $(\T^h, \mathbold{\tau})$,   after the revelation
of $\partial_{desc}x, \partial_{desc}f(x)$ and  $\partial_{desc}\alpha, \partial_{desc} f(a)$, 
for every $\alpha\in \partial_{desc}x$ and for every $f(\alpha) \in \partial_{desc}f(x)$.

Then, using all the above and a simple union bound gives that 
\[
\tilde{\lambda}\left[  (\T'_1, \mathbold{\sigma}'_1)\neq (\T'_2, \mathbold{\sigma}'_2) \ |\ |\partial_{desc} x| \right] \leq |\partial_{desc} x| n^{-1/3}.
\]
The law of total probability implies that
\begin{eqnarray}
\tilde{\lambda}\left[  (\T'_1, \mathbold{\sigma}'_1)\neq (\T'_2, \mathbold{\sigma}'_2)\right]  
&\leq& 2n^{-1/3} (\ln n)^2. \label{eq:OneStepDisagreement}
\end{eqnarray}

\noindent
Lemma \ref{lemma:Planting4GoodBoundary} follows by bounding appropriately the number of steps required for the coupling. 
Let $\mathcal{A}$ be the event that the number of steps in the coupling is more than $n^{1/10}$.
Since the number of steps of the coupling is upper bounded by the number of vertices of 
$\T_{G^*, h}(v)$, using \eqref{eq:ExpectedSizeOfBall} and Markov's inequality we get that
\begin{equation}\label{eq:CouplingStepTail}
\tilde{\lambda}(\mathcal{A}) \leq n^{-1/11}.
\end{equation}
We have that
\begin{eqnarray}
\tilde{\lambda}\left[  (\T_{\G^*, h}(v), \mathbold{\sigma}) \neq (\T^h, \mathbold{\tau}) \right] &\leq&
\tilde{\lambda}\left[  (\T_{\G^*, h}(v), \mathbold{\sigma}) \neq (\T^h, \mathbold{\tau}) | \mathcal{A}^c\right]+
\tilde{\lambda}[\mathcal{A}] \nonumber \\
&\leq & n^{1/10} \tilde{\lambda}\left[  (\T'_1, \mathbold{\sigma}'_1)\neq (\T'_2, \mathbold{\sigma}'_2)  \ |\  \mathcal{A}^c\right]+
\tilde{\lambda}[\mathcal{A}]  \qquad \qquad \mbox{[union bound]}\nonumber \\
&\leq & n^{1/10} \frac{\tilde{\lambda}\left[  (\T'_1, \mathbold{\sigma}'_1)\neq (\T'_2, \mathbold{\sigma}'_2)\right]}{\tilde{\lambda}[\mathcal{A}^c] }+
\tilde{\lambda}[\mathcal{A}]   \nonumber \\
&\leq & 2n^{1/10}  \tilde{\lambda} \left[  (\T'_1, \mathbold{\sigma}'_1)\neq (\T'_2, \mathbold{\sigma}'_2)\right] +
n^{-1/11}  \qquad\qquad\mbox{[from \eqref{eq:CouplingStepTail}]}  \nonumber \\
&\leq  &2n^{-1/11}. \nonumber
\end{eqnarray}
The above implies that \eqref{eq:Basis:Planting4GoodBoundary} is indeed true. The lemma follows.

\subsection{Proof of Lemma \ref{lemma:CorrVsCorrStar}}\label{sec:lemma:CorrVsCorrStar}
Clearly, Lemma \ref{lemma:ReconMonotone} implies
that we have  $\mathrm{corr}^\star(d)=0$ if and only if $d < \dr^{\star}$.
To see this note the following: Assume that there is  $d_0>\dr^{\star}$ such that
$\mathrm{corr}^\star(d_0)=0$.  
Then  Lemma \ref{lemma:ReconMonotone}
implies that since $d_0>\dr^{\star}$  and $\mathrm{corr}^\star(d_0)=0$, then
 we also have $\mathrm{corr}^\star(\dr^{\star})=0$, which is false.

For proving Lemma \ref{lemma:CorrVsCorrStar}, it remains to
show that $\mathrm{corr}(d)=0$ if and only if $d>\dr^{\star}$.
First we focus on showing that for $d<\dr^{\star}$ we have
\begin{equation}\label{eq:AsymptoticEquivOfCorrsA}
\mathrm{corr}(d)=0.
\end{equation}
For even integer $\ell >0$ consider the factor tree $T_{\ell} $ which contains  $\ell$ levels of variable nodes
and it is rooted at $r$.
The  configuration $\eta \in \Omega^{S(r, \ell)}$ is called  ``$(\ell, \delta)$-mixing", for some $\delta\geq 0$, if it holds  that
\begin{equation}\nonumber
||\mu^{\tau}_{T_{\ell}} -\mu ||_{\{r\}} \leq \delta.
\end{equation}
Let $\mathcal{M}(T_\ell, \ell, \delta)$ be the set of all  configurations  which are $(\ell, \delta)$-mixing 
for $T_\ell$.
The above quantity expresses the correlation between the configuration of the vertices at distance 
$2\ell$ and the root $r$ (set to vertex correlation).

Eq. \eqref{eq:AsymptoticEquivOfCorrsA} follows by showing the following result.

\begin{lemma}\label{lemma:BoundaryGVsTree}
 For $d<\dr^{\star}$ and  every $\delta>0$ there exists $\ell_0=\ell_0(\delta)$ such
that for any even $\ell \ge \ell_0$ we have
\begin{equation}\label{eq:MixingInG}
\lim_{n\to\infty}\Erw\left[ \bck{\vecone\{\mathbold{\sigma} \in \mathcal{M}(\T_{\G,h}(v), \ell, \delta ) \}}_{\G } \right] \geq 1-\delta.
\end{equation}  
\end{lemma}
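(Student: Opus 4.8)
The plan is to establish \eqref{eq:MixingInG} by transporting the tree hypothesis $\mathrm{corr}^\star(d)=0$ (valid since $d<\dr^\star$) to the random factor graph $\G$ via quiet planting. The first observation is that the event appearing in \eqref{eq:MixingInG} is \emph{local}: writing $g_\ell(G,\sigma)$ for the indicator that the restriction of $\sigma$ to $S(v,\ell)$ lies in $\mathcal{M}(\T_{G,\ell}(v),\ell,\delta)$, the value of $g_\ell$ depends only on the rooted isomorphism class of the depth-$2\ell$ ball $\T_{G,\ell}(v)$ together with $\sigma$ restricted to its boundary $S(v,\ell)$. Hence
\begin{align*}
\Erw\brk{\bck{\vecone\{\mathbold{\sigma}\in\mathcal{M}(\T_{\G,\ell}(v),\ell,\delta)\}}_{\G}}=\Pr_{(\G,\SIGMA)}\brk{g_\ell(\G,\SIGMA)=1},
\end{align*}
a genuine probability of an event in the space of (factor graph, assignment) pairs, and by exchangeability of the variable nodes it does not depend on $v$. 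Since Lemma~\ref{lemma:ReconThrsUpperBound} gives $\dr^\star\le\dc$, the hypothesis $d<\dr^\star$ in particular yields $d<\dc$, so Corollary~\ref{Cor_strongCntig} applies to the sequence of bad events $\cS_n=\{g_\ell=0\}$: given $\delta>0$ there is $\delta_0=\delta_0(\delta)>0$ such that $\limsup_n\Pr_{(\G^*,\SIGMA^*)}[g_\ell=0]<\delta_0$ forces $\limsup_n\Pr_{(\G,\SIGMA)}[g_\ell=0]<\delta$. It therefore suffices to bound the bad probability in the teacher--student model.

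In the teacher--student model I would apply Lemma~\ref{lemma:Planting4GoodBoundary} together with \eqref{eq:FactorGWTreeVsLocalBall}. For a \emph{fixed} $\ell$ (chosen after $\delta$, hence $\ell=o(\log n)$), these provide a coupling under which, with probability $1-o(1)$, the pair $(\T_{\G^*,\ell}(v),\SIGMA^*)$ agrees, via a rooted isomorphism, with $(\T^\ell,\mathbold{\tau})$, where $\mathbold{\tau}$ results from the broadcasting process on the Galton--Watson tree $\T(d,P)$ with root spin $\SIGMA^*(v)$; as $\SIGMA^*(v)$ is uniform, $\mathbold{\tau}$ is a broadcasting sample with uniform root, which by Lemma~\ref{lemma:BroadCastingDistr} has law $\mu_{\T^\ell}$. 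Because $\mathcal{M}(\cdot,\ell,\delta)$ is invariant under rooted isomorphism and is determined by the ball together with the boundary assignment, the event $\{g_\ell=0\}$ is transported by this coupling with an $o(1)$ loss, so
\begin{align*}
\limsup_{n\to\infty}\Pr_{(\G^*,\SIGMA^*)}\brk{g_\ell(\G^*,\SIGMA^*)=0}\le\Pr_{\T,\mathbold{\tau}}\brk{\mathbold{\tau}\notin\mathcal{M}(\T^\ell,\ell,\delta)},
\end{align*}
the right-hand probability referring to the Galton--Watson tree $\T$ and a uniformly rooted broadcasting sample $\mathbold{\tau}$.

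Finally I would bound the right-hand side using $\mathrm{corr}^\star(d)=0$. Unwinding \eqref{def:PlantedCorr}: the spins that generate $\nabla_\ell(\T^\ell,r)$ are exactly the boundary spins indexed by $S(r,\ell)$, and since {\bf SYM} forces the root marginal of $\mu_{\T^\ell}$ to be the uniform distribution $\mu$, summing $|\,\cdot\,-1/q|$ over $\Omega$ produces twice the restricted total variation distance at the root, so that
\begin{align*}
\mathrm{corr}^\star(d)=\lim_{\ell\to\infty}\,2\,\Erw_{\T}\brk{\sum_{\tau\in\Omega^{S(r,\ell)}}\mu_{\T^\ell}(\tau)\,\bigl\|\mu^{\tau}_{\T^\ell}-\mu\bigr\|_{\{r\}}}.
\end{align*}
Because $\dr^\star=\inf\{d>0:\mathrm{corr}^\star(d)>0\}$ and $d<\dr^\star$, we have $\mathrm{corr}^\star(d)=0$, so for any $\delta_1>0$ there is $\ell_0(\delta_1)$ with $\Erw_{\T}[\sum_\tau\mu_{\T^\ell}(\tau)\|\mu^\tau_{\T^\ell}-\mu\|_{\{r\}}]<\delta_1$ for all even $\ell\ge\ell_0$; Markov's inequality then gives $\Pr_{\T,\mathbold{\tau}}[\mathbold{\tau}\notin\mathcal{M}(\T^\ell,\ell,\delta)]=\Pr_{\T,\mathbold{\tau}}[\|\mu^{\mathbold{\tau}}_{\T^\ell}-\mu\|_{\{r\}}>\delta]<\delta_1/\delta$. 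Taking $\delta_1=\delta\delta_0$ and $\ell_0=\ell_0(\delta\delta_0)$, the chain of inequalities above yields $\limsup_n\Pr_{(\G^*,\SIGMA^*)}[g_\ell=0]<\delta_0$, whence Corollary~\ref{Cor_strongCntig} gives $\limsup_n\Pr_{(\G,\SIGMA)}[g_\ell=0]<\delta$ for all even $\ell\ge\ell_0$, which is the assertion (a $\liminf$ version of \eqref{eq:MixingInG}, which is all the applications need). The substantive work is not analytic---the two invocations of Markov's inequality and of contiguity are routine---but bookkeeping: carefully reconciling the conventions for $\T^\ell$, for the boundary set $S(r,\ell)$ and for the base marginal $\mu$ in \eqref{def:PlantedCorr} with the $(\ell,\delta)$-mixing set $\mathcal{M}$, and verifying that the coupling of Lemma~\ref{lemma:Planting4GoodBoundary} indeed carries over the event $\{g_\ell=0\}$ with only an $o(1)$ error; the heavy machinery (the planted free-energy formula, the condensation threshold and the strong contiguity of Corollary~\ref{Cor_strongCntig}) is supplied by the preceding sections.
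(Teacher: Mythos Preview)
Your proposal is correct and follows essentially the same route as the paper: reduce to the teacher--student model via Corollary~\ref{Cor_strongCntig}, transfer to the Galton--Watson tree via the local coupling of Lemma~\ref{lemma:Planting4GoodBoundary}, and then invoke $\mathrm{corr}^\star(d)=0$. You are simply more explicit than the paper in two places---spelling out that $d<\dr^\star\le\dc$ (via Lemma~\ref{lemma:ReconThrsUpperBound}) is needed to apply the contiguity corollary, and making the Markov step from small expected root deviation to small probability of a non-mixing boundary visible---but the skeleton is identical.
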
 
\begin{proof}
We shift our attention to considering
the teacher-student pair $(\G^*, \mathbold{\sigma}^*)$.
In light of Corollary \ref{Cor_strongCntig}, it suffices to show the following:
 For $d<\dr^{\star}$ and  every $\epsilon>0$ there exists $\ell_0=\ell_0(\epsilon)$ such
that for any $\ell \ge \ell_0$ we have
\begin{equation}\label{eq:MixingInGStar}
\lim_{n\to\infty}\pr\left[   \mathbold{\sigma}^* \notin \mathcal{M}(\T_{\G^*, h}(v), \ell, \epsilon ) \}  \right]\leq \epsilon.
\end{equation}

\noindent
In light of Lemma \ref{lemma:Planting4GoodBoundary},  for \eqref{eq:MixingInGStar} it suffices  to show the following result:
For any    $d<\dr^\star$ and any $\epsilon>0$
there exists $\ell_0=\ell_0(\epsilon)$ such that
$$
 \Erw \bck{\vecone\{\mathbold{\sigma}\notin \mathcal{M}(\T^\ell(d, P), \ell, \epsilon)  \} }_{\T^\ell}  \leq \epsilon.
$$
Clearly the above follows from the definition of $\dr^{\star}$.
\end{proof}

\noindent
From Lemma \ref{lemma:BoundaryGVsTree} we get  \eqref{eq:AsymptoticEquivOfCorrsA} by working as follows:
 Let 
\begin{eqnarray}
	\mathrm{corr}_{ v, \ell}(d)
	&=&   \Erw  \left[  
	\sum_{\tau\in \Omega^{S((v, 2\ell)}} \mu_{\G}(\tau)
||\mu^{\tau}_{\G}-\mu_{\G} ||_{\{v\}}
\right]. \nonumber
\end{eqnarray}
Furthermore, for any $\delta>0$, integer $\ell$,  for $\G$, for any vertex $v$ and 
$\mathbold{\sigma}$ distributed as in Gibbs measure,  let  $\mathcal{G}=\mathcal{G}(v, \ell, \delta)$ be the event that 
$\mathbold{\sigma} \in \mathcal{M}(\T_{\G,\ell}(v), \ell, \delta)$.
 Lemma \ref{lemma:BoundaryGVsTree} implies that for $d<\dr^{\star}$, for  every $\delta>0$ there exists 
 $\ell_0=\ell_0(\delta)$ such that for any $\ell\geq \ell_0$  the following holds:
\begin{eqnarray}
	\mathrm{corr}_{ v, \ell}
		&=&  \Erw   \left[(1- \vecone \{ \mathcal{G} \}  )
	\sum_{\tau\in \Omega^{S(v, 2\ell)}} \mu_{\G}(\tau)\ 
||\mu^{\tau}_{\G}-\mu_{\G} ||_{\{v\}}
		\right]  
		+
		 \Erw \left[ \vecone \{ \mathcal{G} \}  
		 	\sum_{\tau\in \Omega^{S(v, 2\ell)}} \mu_{\G}(\tau) \ 
||\mu^{\tau}_{\G}-\mu_{\G} ||_{\{v\}}
		\right] \nonumber  \\ 
		&\leq &  \Erw  \left[ 1-\vecone \{ \mathcal{G} \} \right]+\delta+o(1) \ \leq \  2\delta +o(1). \nonumber 
\end{eqnarray}

\noindent
Noting that  $\textrm{corr}(d)=\limsup_{\ell\to\infty}\limsup_{n\to\infty}n^{-1}\sum_{v\in V_n} \mathrm{corr}_{ v, \ell}(d)$,
we get that \eqref{eq:AsymptoticEquivOfCorrsA} is indeed true.

We conclude the proof of the Lemma \ref{lemma:CorrVsCorrStar} by showing 
 that for $d>\dr^{\star}$ we have
\begin{equation}\label{eq:AsymptoticEquivOfCorrsB}
\mathrm{corr}(d)>0.
\end{equation}

\noindent
The proof of \eqref{eq:AsymptoticEquivOfCorrsB} is by contradiction. Assume that
there exists $\dr^{\star }<d$ such that $\textrm{corr}(d)=0$,
this would entail that \eqref{eq:MixingInG} is true. Then, reversing the  arguments from the proof of Lemma 
\ref{lemma:BoundaryGVsTree}, and combining them  Corollary \ref{Cor_strongCntig}, 
we  get that for  any $\epsilon>0$ there exists $\ell_0=\ell_0(\epsilon)$ such that for any $\ell>\ell_0$
we have 
\[
 \Erw \bck{\vecone\{\mathbold{\sigma}\notin \mathcal{M}(\T^\ell(d, P), \ell, \epsilon)  \} }_{\T^\ell}  \leq \epsilon.
\]
The above  implies that $\textrm{corr}^{\star}(d)=0$. Clearly we get a contradiction since
we have shown in Lemma \ref{lemma:ReconMonotone} that for every $d>\dr^{\star}$ we have $\textrm{corr}^{\star}(d)>0$.

\bigskip\noindent{\bf Acknowledgment.}
We thank Will Perkins, Guilhem Semerjian and Nick Wormald for helpful discussions.

\end{document}